\documentclass[11pt]{article}
\usepackage[authoryear]{natbib}
\usepackage{amssymb}
\usepackage{amsfonts}
\usepackage{amsmath}
\usepackage{dsfont}
\usepackage{soul}
\usepackage[nohead]{geometry}
\usepackage{graphicx}
\usepackage{amsthm}
\usepackage{color}
\usepackage{comment}
\usepackage{setspace}
\usepackage{framed}
\usepackage{enumitem}
\usepackage{todonotes}
\usepackage{tikz}
\usepackage{hyperref}
\usepackage{rotating}
\usepackage{subfigure}
\usepackage[flushleft]{threeparttable}
\usepackage{multirow}
\newcommand{\cref}[2][1]{{\textup{(\hyperref[#2]{\ref*{#2}$_{#1}$})}}}
\usepackage{xr}
\usepackage{bm}
\usepackage[toc,page,header]{appendix}
\usepackage{minitoc}
\usepackage[linesnumbered,ruled]{algorithm2e}

\SetKwInput{KwInput}{Input}                
\SetKwInput{KwOutput}{Output}              


\setcounter{MaxMatrixCols}7

\newcommand{\eps}[0]{\ensuremath{\varepsilon}}
\newcommand{\dt}{\delta}

\newcommand{\ap}{\alpha}
\newcommand{\bt}{\beta}
\newcommand{\ld}{\lambda}
\newcommand{\Ld}{\Lambda}
\newcommand{\gm}{\gamma}
\newcommand{\sgn}{\mathrm{sgn}}

\newcommand{\diag}{\mathrm{diag}}

\newcommand{\var}{\mathrm{var}}
\newcommand{\beq}{\begin{eqnarray*}}
\newcommand{\eeq}{\end{eqnarray*}}
\newtheorem{thm}{Theorem}[section]

\newtheorem{lem}{Lemma}[section]

\newtheorem{assum}{Assumption}
\newtheorem{pro}{Proposition}[section]
\numberwithin{equation}{section}
\theoremstyle{definition}

\newtheorem{remark}{Remark}[section]
\makeatletter
\def\@biblabel#1{\hspace*{-\labelsep}}
\makeatother
\geometry{left=1.2in,right=1.2in,top=1.2in,bottom=1.2in}

\DeclareMathOperator*{\argmin}{argmin}
\DeclareMathOperator*{\plim}{plim}

\typeout{TCILATEX Macros for Scientific Word 3.0 <19 May 1997>.}
\typeout{NOTICE:  This macro file is NOT proprietary and may be 
freely copied and distributed.}
\makeatletter
%

\providecommand{\BOXEDSPECIAL}[4]{\hbox to #2{\raise #3\hbox to #2{\null #1\hfil}}}

\newcount\@hour\newcount\@minute\chardef\@x10\chardef\@xv60
\def\tcitime{
\def\@time{%
  \@minute\time\@hour\@minute\divide\@hour\@xv
  \ifnum\@hour<\@x 0\fi\the\@hour:%
  \multiply\@hour\@xv\advance\@minute-\@hour
  \ifnum\@minute<\@x 0\fi\the\@minute
  }}%

\@ifundefined{hyperref}{}{}

\@ifundefined{qExtProgCall}{\def\qExtProgCall#1#2#3#4#5#6{\relax}}{}
%
%
%
%
\def\QCTOpt[#1]#2{%
  \def\QCTOptB{#1}
  \def\QCTOptA{#2}
}
\def\QCTNOpt#1{%
  \def\QCTOptA{#1}
  \let\QCTOptB\empty
}
\def\Qct{%
  \@ifnextchar[{%
    \QCTOpt}{\QCTNOpt}
}
\def\QCBOpt[#1]#2{%
  \def\QCBOptB{#1}
  \def\QCBOptA{#2}
}
\def\QCBNOpt#1{%
  \def\QCBOptA{#1}
  \let\QCBOptB\empty
}
\def\Qcb{%
  \@ifnextchar[{%
    \QCBOpt}{\QCBNOpt}
}
\def\PrepCapArgs{%
  \ifx\QCBOptA\empty
    \ifx\QCTOptA\empty
      {}%
    \else
      \ifx\QCTOptB\empty
        {\QCTOptA}%
      \else
        [\QCTOptB]{\QCTOptA}%
      \fi
    \fi
  \else
    \ifx\QCBOptA\empty
      {}%
    \else
      \ifx\QCBOptB\empty
        {\QCBOptA}%
      \else
        [\QCBOptB]{\QCBOptA}%
      \fi
    \fi
  \fi
}
\newcount\GRAPHICSTYPE
\GRAPHICSTYPE=\z@
\def\GRAPHICSPS#1{%
 \ifcase\GRAPHICSTYPE
   \special{ps: #1}%
 \or
   \special{language "PS", include "#1"}%
 \fi
}%
%
%
%
\def\graffile#1#2#3#4#5{%
    \bgroup
    \leavevmode
    \@ifundefined{bbl@deactivate}{\def~{\string~}}{\activesoff}
    \raise -#4 \BOXTHEFRAME{%
       \BOXEDSPECIAL{#1}{#2}{#3}{#5}}%
    \egroup
}%
%
\def\draftbox#1#2#3#4{%
 \leavevmode\raise -#4 \hbox{%
  \frame{\rlap{\protect\tiny #1}\hbox to #2%
   {\vrule height#3 width\z@ depth\z@\hfil}%
  }%
 }%
}%
\newcount\draft
\draft=\z@

\newif\ifwasdraft
\wasdraftfalse

\def\GRAPHIC#1#2#3#4#5{%
 \ifnum\draft=\@ne\draftbox{#2}{#3}{#4}{#5}%
  \else\graffile{#1}{#3}{#4}{#5}{#2}%
  \fi
 }%
\def\addtoLaTeXparams#1{%
    \edef\LaTeXparams{\LaTeXparams #1}}%
%

\newif\ifBoxFrame \BoxFramefalse
\newif\ifOverFrame \OverFramefalse
\newif\ifUnderFrame \UnderFramefalse

\def\BOXTHEFRAME#1{%
   \hbox{%
      \ifBoxFrame
         \frame{#1}%
      \else
         {#1}%
      \fi
   }%
}

\def\doFRAMEparams#1{\BoxFramefalse\OverFramefalse\UnderFramefalse\readFRAMEparams#1\end}%
\def\readFRAMEparams#1{%
 \ifx#1\end%
  \let\next=\relax
  \else
  \ifx#1i\dispkind=\z@\fi
  \ifx#1d\dispkind=\@ne\fi
  \ifx#1f\dispkind=\tw@\fi
  \ifx#1t\addtoLaTeXparams{t}\fi
  \ifx#1b\addtoLaTeXparams{b}\fi
  \ifx#1p\addtoLaTeXparams{p}\fi
  \ifx#1h\addtoLaTeXparams{h}\fi
  \ifx#1X\BoxFrametrue\fi
  \ifx#1O\OverFrametrue\fi
  \ifx#1U\UnderFrametrue\fi
  \ifx#1w
    \ifnum\draft=1\wasdrafttrue\else\wasdraftfalse\fi
    \draft=\@ne
  \fi
  \let\next=\readFRAMEparams
  \fi
 \next
 }%
%

\def\IFRAME#1#2#3#4#5#6{%
      \bgroup
      \let\QCTOptA\empty
      \let\QCTOptB\empty
      \let\QCBOptA\empty
      \let\QCBOptB\empty
      #6%
      \parindent=0pt%
      \leftskip=0pt
      \rightskip=0pt
      \setbox0 = \hbox{\QCBOptA}%
      \@tempdima = #1\relax
      \ifOverFrame
          \typeout{This is not implemented yet}%
          \show\HELP
      \else
         \ifdim\wd0>\@tempdima
            \advance\@tempdima by \@tempdima
            \ifdim\wd0 >\@tempdima
               \textwidth=\@tempdima
               \setbox1 =\vbox{%
                  \noindent\hbox to \@tempdima{\hfill\GRAPHIC{#5}{#4}{#1}{#2}{#3}\hfill}\\%
                  \noindent\hbox to \@tempdima{\parbox[b]{\@tempdima}{\QCBOptA}}%
               }%
               \wd1=\@tempdima
            \else
               \textwidth=\wd0
               \setbox1 =\vbox{%
                 \noindent\hbox to \wd0{\hfill\GRAPHIC{#5}{#4}{#1}{#2}{#3}\hfill}\\%
                 \noindent\hbox{\QCBOptA}%
               }%
               \wd1=\wd0
            \fi
         \else
            \ifdim\wd0>0pt
              \hsize=\@tempdima
              \setbox1 =\vbox{%
                \unskip\GRAPHIC{#5}{#4}{#1}{#2}{0pt}%
                \break
                \unskip\hbox to \@tempdima{\hfill \QCBOptA\hfill}%
              }%
              \wd1=\@tempdima
           \else
              \hsize=\@tempdima
              \setbox1 =\vbox{%
                \unskip\GRAPHIC{#5}{#4}{#1}{#2}{0pt}%
              }%
              \wd1=\@tempdima
           \fi
         \fi
         \@tempdimb=\ht1
         \advance\@tempdimb by \dp1
         \advance\@tempdimb by -#2%
         \advance\@tempdimb by #3%
         \leavevmode
         \raise -\@tempdimb \hbox{\box1}%
      \fi
      \egroup%
}%
%
\def\DFRAME#1#2#3#4#5{%
 \begin{center}
     \let\QCTOptA\empty
     \let\QCTOptB\empty
     \let\QCBOptA\empty
     \let\QCBOptB\empty
     \ifOverFrame 
        #5\QCTOptA\par
     \fi
     \GRAPHIC{#4}{#3}{#1}{#2}{\z@}
     \ifUnderFrame 
        \nobreak\par\nobreak#5\QCBOptA
     \fi
 \end{center}%
 }%
%
\def\FFRAME#1#2#3#4#5#6#7{%
 \begin{figure}[#1]%
  \let\QCTOptA\empty
  \let\QCTOptB\empty
  \let\QCBOptA\empty
  \let\QCBOptB\empty
  \ifOverFrame
    #4
    \ifx\QCTOptA\empty
    \else
      \ifx\QCTOptB\empty
        \caption{\QCTOptA}%
      \else
        \caption[\QCTOptB]{\QCTOptA}%
      \fi
    \fi
    \ifUnderFrame\else
      \label{#5}%
    \fi
  \else
    \UnderFrametrue%
  \fi
  \begin{center}\GRAPHIC{#7}{#6}{#2}{#3}{\z@}\end{center}%
  \ifUnderFrame
    #4
    \ifx\QCBOptA\empty
      \caption{}%
    \else
      \ifx\QCBOptB\empty
        \caption{\QCBOptA}%
      \else
        \caption[\QCBOptB]{\QCBOptA}%
      \fi
    \fi
    \label{#5}%
  \fi
  \end{figure}%
 }%
%
%
%
%
%
\newcount\dispkind%

\def\makeactives{
  \catcode`\"=\active
  \catcode`\;=\active
  \catcode`\:=\active
  \catcode`\'=\active
  \catcode`\~=\active
}
\bgroup
   \makeactives
   \gdef\activesoff{%
      \def"{\string"}
      \def;{\string;}
      \def:{\string:}
      \def'{\string'}
    }
\egroup

\def\FRAME#1#2#3#4#5#6#7#8{%
 \bgroup
 \ifnum\draft=\@ne
   \wasdrafttrue
 \else
   \wasdraftfalse%
 \fi
 \def\LaTeXparams{}%
 \dispkind=\z@
 \def\LaTeXparams{}%
 \doFRAMEparams{#1}%
 \ifnum\dispkind=\z@\IFRAME{#2}{#3}{#4}{#7}{#8}{#5}\else
  \ifnum\dispkind=\@ne\DFRAME{#2}{#3}{#7}{#8}{#5}\else
   \ifnum\dispkind=\tw@
    \edef\@tempa{\noexpand\FFRAME{\LaTeXparams}}%
    \@tempa{#2}{#3}{#5}{#6}{#7}{#8}%
    \fi
   \fi
  \fi
  \ifwasdraft\draft=1\else\draft=0\fi{}%
  \egroup
 }%
%

\def\TEXUX#1{"texux"}

%
%
%
%
%
%
%
\def\limfunc#1{\mathop{\rm #1}}%
\def\func#1{\mathop{\rm #1}\nolimits}%
%

%
\long\def\QQQ#1#2{%
     \long\expandafter\def\csname#1\endcsname{#2}}%
\@ifundefined{QTP}{\def\QTP#1{}}{}
\@ifundefined{QEXCLUDE}{\def\QEXCLUDE#1{}}{}
\@ifundefined{Qlb}{}{}
\@ifundefined{Qlt}{}{}
\long\def\QQA#1#2{}%
\newcommand{\QTR}[2]{\csname text#1\endcsname{#2}}
\def\EXPAND#1[#2]#3{}%
\def\NOEXPAND#1[#2]#3{}%
\def\LaTeXparent#1{}%
\def\ChildStyles#1{}%
\def\ChildDefaults#1{}%
\def\QTagDef#1#2#3{}%

\@ifundefined{correctchoice}{}{}
\@ifundefined{HTML}{\def\HTML#1{\relax}}{}
\@ifundefined{TCIIcon}{\def\TCIIcon#1#2#3#4{\relax}}{}
\if@compatibility
  \typeout{Not defining UNICODE or CustomNote commands for LaTeX 2.09.}
\else
  \providecommand{\UNICODE}[2][]{}
  
\fi

%
\@ifundefined{StyleEditBeginDoc}{}{}
%
\def\QQfnmark#1{\footnotemark}

%
%
\@ifundefined{TCIMAKEINDEX}{}{\makeindex}%
%
\@ifundefined{abstract}{%
 \def\abstract{%
  \if@twocolumn
   \section*{Abstract (Not appropriate in this style!)}%
   \else \small 
   \begin{center}{\bf Abstract\vspace{-.5em}\vspace{\z@}}\end{center}%
   \quotation 
   \fi
  }%
 }{%
 }%
\@ifundefined{endabstract}{\def\endabstract
  {\if@twocolumn\else\endquotation\fi}}{}%
\@ifundefined{maketitle}{\def\maketitle#1{}}{}%
\@ifundefined{affiliation}{\def\affiliation#1{}}{}%
\@ifundefined{proof}{\def\proof{\noindent{\bfseries Proof. }}}{}%
\@ifundefined{endproof}{}{}%
\@ifundefined{newfield}{\def\newfield#1#2{}}{}%
\@ifundefined{chapter}{\def\chapter#1{\par(Chapter head:)#1\par }%
 \newcount\c@chapter}{}%
\@ifundefined{part}{\def\part#1{\par(Part head:)#1\par }}{}%
\@ifundefined{section}{\def\section#1{\par(Section head:)#1\par }}{}%
\@ifundefined{subsection}{\def\subsection#1%
 {\par(Subsection head:)#1\par }}{}%
\@ifundefined{subsubsection}{\def\subsubsection#1%
 {\par(Subsubsection head:)#1\par }}{}%
\@ifundefined{paragraph}{\def\paragraph#1%
 {\par(Subsubsubsection head:)#1\par }}{}%
\@ifundefined{subparagraph}{\def\subparagraph#1%
 {\par(Subsubsubsubsection head:)#1\par }}{}%
\@ifundefined{therefore}{}{}%
\@ifundefined{backepsilon}{}{}%
\@ifundefined{yen}{}{}%
\@ifundefined{registered}{%
   \def\registered{\relax\ifmmode{}\r@gistered
                    \else$\m@th\r@gistered$\fi}%
 \def\r@gistered{^{\ooalign
  {\hfil\raise.07ex\hbox{$\scriptstyle\rm\text{R}$}\hfil\crcr
  \mathhexbox20D}}}}{}%
\@ifundefined{Eth}{}{}%
\@ifundefined{eth}{}{}%
\@ifundefined{Thorn}{}{}%
\@ifundefined{thorn}{}{}%
%
\@ifundefined{degree}{}{}%
%
\newdimen\theight
\def\Column{%
 \vadjust{\setbox\z@=\hbox{\scriptsize\quad\quad tcol}%
  \theight=\ht\z@\advance\theight by \dp\z@\advance\theight by \lineskip
  \kern -\theight \vbox to \theight{%
   \rightline{\rlap{\box\z@}}%
   \vss
   }%
  }%
 }%
\def\qed{%
 \ifhmode\unskip\nobreak\fi\ifmmode\ifinner\else\hskip5\p@\fi\fi
 \hbox{\hskip5\p@\vrule width4\p@ height6\p@ depth1.5\p@\hskip\p@}%
 }%
\def\miss{\hbox{\vrule height2\p@ width 2\p@ depth\z@}}%
%
%
\def\tcol#1{{\baselineskip=6\p@ \vcenter{#1}} \Column}  %
%
%
\@ifundefined{note}{}{}%

\def\newfmtname{LaTeX2e}

\ifx\fmtname\newfmtname
  \DeclareOldFontCommand{\rm}{\normalfont\rmfamily}{\mathrm}
  \DeclareOldFontCommand{\sf}{\normalfont\sffamily}{\mathsf}
  \DeclareOldFontCommand{\tt}{\normalfont\ttfamily}{\mathtt}
  \DeclareOldFontCommand{\bf}{\normalfont\bfseries}{\mathbf}
  \DeclareOldFontCommand{\it}{\normalfont\itshape}{\mathit}
  \DeclareOldFontCommand{\sl}{\normalfont\slshape}{\@nomath\sl}
  \DeclareOldFontCommand{\sc}{\normalfont\scshape}{\@nomath\sc}
\fi

%


\@ifundefined{theorem}{}{}
\@ifundefined{lemma}{}{}
\@ifundefined{corollary}{}{}
\@ifundefined{conjecture}{}{}
\@ifundefined{proposition}{}{}
\@ifundefined{axiom}{}{}
\@ifundefined{remark}{\newtheorem{remark}{Remark}}{}
\@ifundefined{example}{}{}
\@ifundefined{exercise}{}{}
\@ifundefined{definition}{}{}

\@ifundefined{mathletters}{%
  \newcounter{equationnumber}  
  \def\mathletters{%
     \addtocounter{equation}{1}
     \edef\@currentlabel{\theequation}%
     \setcounter{equationnumber}{\c@equation}
     \setcounter{equation}{0}%
     \edef\theequation{\@currentlabel\noexpand\alph{equation}}%
  }
  
}{}

\@ifundefined{BibTeX}{%
    \def\BibTeX{{\rm B\kern-.05em{\sc i\kern-.025em b}\kern-.08em
                 T\kern-.1667em\lower.7ex\hbox{E}\kern-.125emX}}}{}%
\@ifundefined{AmS}%
    {\def\AmS{{\protect\usefont{OMS}{cmsy}{m}{n}%
                A\kern-.1667em\lower.5ex\hbox{M}\kern-.125emS}}}{}%
\@ifundefined{AmSTeX}{}{}%
%

\def\@@eqncr{\let\@tempa\relax
    \ifcase\@eqcnt \def\@tempa{& & &}\or \def\@tempa{& &}%
      \else \def\@tempa{&}\fi
     \@tempa
     \if@eqnsw
        \iftag@
           \@taggnum
        \else
           \@eqnnum\stepcounter{equation}%
        \fi
     \fi
     \global\tag@false
     \global\@eqnswtrue
     \global\@eqcnt\z@\cr}

\def\TCItag{\@ifnextchar*{\@TCItagstar}{\@TCItag}}
\def\@TCItag#1{%
    \global\tag@true
    \global\def\@taggnum{(#1)}}
\def\@TCItagstar*#1{%
    \global\tag@true
    \global\def\@taggnum{#1}}
%
%
%
%
%
%
%
%
%
%
%
%
%
%
%
%
%
%
%
%
%
%
%
%
%
%
%
%
%
%
%
%
%
%
%
%
%
%
%
%
%
%
%
%
%
%
%
%
%
%
%
%
%
%
%
%
%
%
%
%
%
%
%

%
%

\makeatother

\begin{document}
\doparttoc 
\faketableofcontents 


\title{Factor-Driven Two-Regime Regression
\thanks{We would like to thank Don Andrews, Mehmet Caner, Greg   Cox, Bruce Hansen, Zhongjun Qu  and the seminar participants at BU, Emory, Michigan State, NYU, Wisconsin-Madison, Northwestern, Yale, and 2018 ASSA Winter Meeting for helpful comments. We would like to thank 
the Ministry of Education of the Republic of Korea and the National Research Foundation of Korea (NRF-0405-20180026), 
the Social Sciences and Humanities Research Council of Canada (SSHRC-435-2018-0275), 
the European Research Council for financial support (ERC-2014-CoG-646917-ROMIA) and 
the UK Economic and Social Research Council for research grant (ES/P008909/1) to the CeMMAP.}
}
\date{August 13, 2020}

\author{
Sokbae  Lee\thanks{%
Address: 420 West 118th Street,  New York, NY 10027, USA. E-mail: \texttt{sl3841@columbia.edu}.} \\ \footnotesize Columbia University  \and
Yuan Liao\thanks{Address: 75 Hamilton St., New Brunswick, NJ 08901, USA. Email:
\texttt{yuan.liao@rutgers.edu}.}\\  \footnotesize   Rutgers University 
\and Myung Hwan Seo\thanks{%
Address: 1 Gwanak-ro, Gwanak-gu, Seoul 08826, Korea. E-mail: 
\texttt{myunghseo@snu.ac.kr}.} \\  \footnotesize  Seoul National University\and
Youngki  Shin\thanks{Address: 1280 Main St.\ W.,\ Hamiloton, ON L8S 4L8, Canada. Email:
\texttt{shiny11@mcmaster.ca}.}\\  \footnotesize   McMaster University
}

\maketitle

\begin{abstract}
 We propose a novel two-regime regression model where regime switching  is driven by a vector of possibly unobservable factors. When the factors are latent, we estimate them by the principal component analysis of a   panel data set.
 We show that the optimization problem can be reformulated as  mixed integer optimization, and we present two alternative computational algorithms. We derive the asymptotic distribution of the resulting estimator under the scheme that the threshold effect shrinks to zero.  In particular,  we establish a phase transition that describes the effect of first-stage factor estimation as the cross-sectional dimension of panel data increases relative to the time-series dimension.
 Moreover, we develop bootstrap inference and illustrate our methods via numerical studies.
\\
\\
Keywords: threshold regression,  principal component analysis, mixed integer optimization,  phase transition, oracle properties
\end{abstract}

\thispagestyle{empty}



\onehalfspacing

\newpage
\setcounter{page}{1}
\pagenumbering{arabic}

\section{Introduction}

Suppose that  $y_t$ is generated from
 \begin{align}
 	y_{t}                                             & = x_{t}^{\prime }\beta _{0}+x_{t}^{\prime }\delta_{0}1\{f_t'\gamma_0>0\}+\varepsilon _{t},  \label{model1} \\
 	\mathbb{E}\left( \varepsilon _{t}|\mathcal{F}_{t-1}\right) & = 0,  \; t =1, \ldots, T, \label{model2}
 \end{align}
where $x_{t}\ $and $f_{t}$ are adapted to the filtration $\mathcal{F}_{t-1}$,
$(\beta_0, \delta_0, \gamma_0)$ is a vector of unknown parameters,
and the unobserved random variable $\varepsilon _{t}$ satisfies the conditional mean restriction
in \eqref{model2}. We interpret $f_t$ to be a vector of  {factors} determining regime switching.   When  $f_t'\gamma_0 > 0$, the regression function becomes
$x_{t}^{\prime }(\beta _{0}+\delta_0)$;  if $f_t'\gamma_0 \leq 0$, it reduces to $x_{t}^{\prime }\beta _{0}$.
 We allow for either observable or unobservable factors.
For the latter, we assume that they can be recovered from a   panel data set.
In light of this feature, we call the  model in \eqref{model1} and \eqref{model2}
a \emph{factor-driven two-regime regression model}.

Our paper is closely related to the  literature on threshold models with unknown change points
(see, e.g., \citep{chan1993consistency}, \citep{hansen2000sample}, \citep{ling_1999},
\citep{Seijo:Sen:11a}, \citep{Seo-Linton}, and \citep{tong1990non},  among many others).
In the conventional threshold regression model,
an intercept term and a scalar observed random variable constitute  $f_t$.
For instance,  Chan \citep{chan1993consistency} and Hansen \citep{hansen2000sample} studied the model in which $1\{f_t'\gamma_0>0\}$ in (\ref{model1})  is replaced by $1\{q_t> \widetilde{\gamma }_0 \}$ for some observable scalar variable $q_t$ with a scalar unknown parameter $\widetilde{\gamma}_0$.
In practice,
it might be controversial to choose which observed variable plays the role of $q_t$. For example, if the two different regimes represent
the status of two   environments of the population,  arguably it is difficult to assume that the change of the environment is governed by just a single variable.
On the contrary, our   proposed  model introduces a regime change due to a single index of factors that can be ``learned" from a potentially much larger dataset. Specifically, we consider the   framework of latent approximate factor models in order  to model a regime switch based on a potentially  large number of covariates.

In view of  the conditional mean restriction
in \eqref{model2}, a natural strategy to estimate $(\beta_0, \delta_0, \gamma_0)$ is to rely on least squares.
A least-squares estimator for our model brings new challenges  in terms of both computation and asymptotic theory.
First of all, when the dimension of $f_t$ is larger than 2, it is computationally demanding to estimate $(\beta_0, \delta_0, \gamma_0)$. We overcome this difficulty by developing new computational algorithms based on the method of mixed integer optimization (MIO). 
See, for example,  section~2.1 in Bertsimas et al.\ \citep{bertsimas2016} for a discussion on computational advances in solving the MIO problems.


Second, we establish asymptotic properties of our proposed estimator  by adopting  a diminishing thresholding effect.
That is, we assume that  $\delta_0=T^{-\varphi}d_0$ for some unknown $\varphi \in (0, 1/2)$ and unknown non-diminishing vector $d_0$.
The  diminishing threshold has been one of the standard frameworks in the change point literature (e.g.,  \citep{bai1994least,hawkins1986simple,horvath1997effect}).
The unknown parameter $ \varphi $ reflects the difficulty of estimating $ \gamma_0 $ and affects the identification and estimation of the change-point $ \gamma_0 $.
Both the rate of convergence and the asymptotic distribution depend  on $\varphi$.
This is a widely employed tool to allow for flexible signal strengths of the parameters in the nonlinear model.
For instance, McKeague and Sen \citep{mckeague2010fractals} studied a  ``{point impact}" linear model, where the  identification and estimation of $\gamma_0$ are affected by an unknown slope  $\delta_0$.
While specifically assuming $\delta_0\neq0$, they encountered a similar parameter $\varphi$, reflecting the difficulty of estimating $\gamma_0$.
The asymptotic theory for the estimated $\delta_0$ under the diminishing jump setting is fundamentally different from the fixed jump setting:  the former is determined by a Gaussian process (e.g., \citep{hansen2000sample}), and the latter  by a compound poison process (e.g., \citep{chan1993consistency}). While both settings lead to important asymptotic implications, we focus on the diminishing setting  because when the factors are estimated, there is a new and interesting \emph{phase transition} phenomenon  that smoothly  appears in  the ``bias'' term of the Gaussian process. The phase transition characterizes the continuous change of the asymptotic distribution as the precision of the estimated factors increases relative to the size of the jump, which we shall detail below.

When the factor $f_t$ is  latent, we   estimate it using  principal component analysis (PCA) from a potentially much larger dataset, whose dimension is $N$.    It turns out that the asymptotic distribution for the estimator of  $\alpha_0 \equiv (\beta_0',\delta_0')'$ is identical to that when $\gamma_0$ were known, regardless of  whether factors are  directly observable or not; therefore, the estimator of $\alpha_0$ enjoys  an oracle property.

 The issue is more sophisticated for the distribution of the estimator of $\gamma_0$. When factors are directly observable, we prove that
  \begin{align*}
 &  T^{1-2\varphi }
 \left( \widehat{\gamma }-\gamma _{0}\right)\overset{d}{\longrightarrow }%
 \argmin_{g\in \mathcal{G}}B(g) +2W\left( g\right) ,
 \end{align*}%
 where $B(g)$  represents a ``drift function" of the criterion function, which is linear with  a kink at zero,
  $W(g)$ is a mean-zero Gaussian process
  and $\mathcal{G}$ is a  rescaled parameter space.  However, when factors are not directly observable,  the estimation error from the PCA plays an essential role and may slow down the rates of convergence, depending on the relation between $N$ and $T$. Specifically,  we show that
  \begin{align*}
 & \left( \left( NT^{1-2\varphi }\right) ^{1/3}\wedge T^{1-2\varphi }\right)
 \left( \widehat{\gamma }-\gamma _{0}\right)\overset{d}{\longrightarrow }%
 \argmin_{g\in \mathcal{G}}A\left( \omega, g\right) +2W\left( g\right),
 \end{align*}%
 with a new drift function $A\left( \omega, g\right)$ that depends on  $\omega=\lim \sqrt{N} T^{-(1-2\varphi)}\in[0,\infty]$. On one hand, when $\omega=\infty$,  we find that $A(\omega, g)= B(g)$, so the limiting distribution becomes the same as if the factors were observable. This case
corresponds  to
 the \emph{super-consistency rate} (e.g., \citep{hansen2000sample}).
 On the other hand,  when $\omega=0$, it turns out that $A\left( \omega, g\right)$ is  quadratic in $g$,  corresponding to  a  \emph{cube root rate} similar to the maximum score estimator (e.g., \citep{kim1990, seo2018local}).
Furthermore, both the drift function and the resulting rates of convergence have continuous transitions  as $\omega$ changes between
$0$ and $\infty$.
Therefore, one of our key  findings for  the estimator of $\gamma_0$  is the occurrence of a {phase transition} from a \emph{weak-oracle} limiting distribution to a \emph{semi-strong} oracle one, and then to a \emph{strong} oracle one as $\omega$ increases.

As the asymptotic distribution of $\widehat\gamma$ is non-pivotal, we propose a wild bootstrap  for inference of $\gamma_0$.  Importantly,  we construct  bootstrap confidence intervals for $\gamma_0$ that  do not require knowledge of $\varphi$.
  This facilitates applications in which the  jump  diminishing speed  is not known in advance.



The remainder of the paper is organized as follows.
In Section \ref{sec:opt}, we propose the least-squares estimator  and algorithms to compute the proposed estimator.
In Section \ref{sec:known factors}, we establish asymptotic theory when $f_t$ is directly observed.
In Section \ref{model:est:factors}, we consider estimation when $f_t$ is a vector of latent factors, we propose a two-step estimator via PCA, and we analyze asymptotic properties of our proposed estimator. In Section \ref{sec:inference}, we develop  bootstrap inference, and in Section \ref{sec:MC} we give the results of Monte Carlo experiments. In Section \ref{sec:real-data-app}, we illustrate our methods by applying them to threshold autoregressive models of unemployment. We conclude in Section \ref{sec:conclusions}. The online appendices provides details that are omitted from the main text.

%
The notation used in the paper is as follows. The sample size is denoted by $T$ and the transpose of a matrix is denoted by  a prime.
The true parameter is denoted by the subscript $0$, whereas a generic element has no subscript. 
The Euclidean norm is denoted by $| \cdot |_2$,
the Frobenius norm of  a matrix  is $| \cdot |_F$, the spectral norm of a matrix is  $|\cdot|_2$,
and the $\ell_0$-norm is $|\cdot |_0$.
For a generic random variable or vector $z_{t}$, let its density function be
denoted by $p_{z_{t}}$. Similarly, let
$p_{y_t|x_t}(y)$ denote the conditional density of $y_t$  given $x_t$ for the random vectors $y_t  $ and $x_t$.
The abbreviation \emph{a.s.} means almost surely.



\section{Least-Squares Estimator via Mixed Integer Optimization}\label{sec:opt}

\subsection{Identifiability}
 We use the convention that the constant $ 1 $ is the first element of $ x_t $ and $ -1 $ is the last element of $ f_t $. Define $\alpha :=(\beta ^{\prime },\delta ^{\prime
})^{\prime }$ and $Z_{t}(\gamma ):=(x_{t}^{\prime },x_{t}^{\prime }1\{f_{t}^{\prime
}\gamma >0\})^{\prime }$.   Then, we can rewrite the model as
\begin{equation*}
y_{t}=Z_{t}\left( \gamma _{0}\right) ^{\prime }\alpha _{0}+\varepsilon _{t}.
\end{equation*}
Because only the sign of the index $f_t'\gamma_0$ determines the regime switching,
the scale of $\gamma_0$ is not identifiable.
We  assume that the first element of $\gamma_0$ equals 1.
Let $d_x$ and $d_f$ denote  the dimensions of $x_t$ and  $f_t$, respectively.

\begin{assum}
\label{scale-normalization}
 $\alpha_0 \in \mathbb{R}^{2d_x}$  and
 $\gamma_0
\in \Gamma :=
\{ (1, \gamma_2')':  \gamma_2 \in \Gamma_2  \}$, where
$\Gamma_2 \subset \mathbb{R}^{d_f-1}$ is a compact set.
\end{assum}

We decompose $f_{t}$ into a scalar random variable $f_{1t}$ and other variables $f_{2t}$, so that $f_{t}^{\prime}\gamma  \equiv f_{1t} + f_{2t}^{\prime} \gamma_2$.
In view of the conditional mean zero restriction in \eqref{model2}, it is natural to impose
conditions under which  both $\alpha _{0}$ and $\gamma _{0}$ are
identified by the $L_{2}$-loss. Introduce the excess loss%
\begin{align}\label{eq:excess}
R(\alpha ,\gamma ) := \mathbb E(y_{t}-x_{t}^{\prime }\beta -x_{t}^{\prime }\delta
1\{f_{t}^{\prime }\gamma >0\})^{2}- \mathbb E( \varepsilon _{t}^{2} ).
\end{align}%
In order to establish that
$R\left( \alpha,\gamma \right) >
R\left( \alpha _{0},\gamma _{0}\right) =0$ whenever
$(\alpha,\gamma) \neq (\alpha_0,\gamma_0)$,
we make the following regularity conditions.

\begin{assum}
\label{iden-assump}
For any $\varepsilon >0$,
$\left( \alpha_{0},\gamma _{0}\right) $ satisfies
\begin{equation*}
\inf_{\{(\alpha', \gamma')' \in \mathbb{R}^{2d_x} \times \Gamma: |(\alpha', \gamma') - (\alpha_0', \gamma_0')|_2 > \varepsilon \} }
R\left( \alpha,\gamma \right) >0.
\end{equation*}%
\end{assum}

Online Appendix \ref{sec:identification} provides sufficient conditions for Assumption \ref{iden-assump}.

\subsection{Estimator}
We now propose the least-squares estimator and  two alternative algorithms to compute the proposed estimator.
For  computational purposes, we assume that $\alpha \in \mathcal{A} \subset \mathbb{R}^{2d_x}$ for some known compact set $\mathcal{A}$.
In practice, we can take a large $2 d_x$-dimensional hyper-rectangle so that the resulting estimator is not on the boundary of  $\mathcal{A}$.
The unknown parameters can be estimated by  least squares:
 $\left( \widehat{\alpha},\widehat{\gamma}\right)$ solves
 \begin{align}
\min_{(\alpha', \gamma')' \in \mathcal{A} \times \Gamma }%
&\mathbb{S}_{T}\left( \alpha ,\gamma \right) \equiv \frac{1}{T}%
\sum_{t=1}^{T}(y_{t}-x_{t}^{\prime }\beta -x_{t}^{\prime }\delta
1\{f_{t}^{\prime }\gamma >0\})^{2} \label{original-prob}
\\
&\text{subject to:}\quad \tau_1 \leq \frac{1}{T} \sum_{t=1}^T 1\{f_{t}^{\prime }\gamma >0\} \leq \tau_2.
\label{restrction:para:original-form}
\end{align}
We assume that  the   restriction (\ref{restrction:para:original-form}) is satisfied when $\gamma=\gamma_0$ \emph{a.s.}
 Here, $0 < \tau_1 < \tau_2 < 1$ for some predetermined $\tau_1$ and $\tau_2$ (e.g., $\tau_1 = 0.05$ and $\tau_2 = 0.95$).
 In the special case that  $1\{f_{t}^{\prime }\gamma _{0}>0\} = 1\{q_t > \widetilde{\gamma}_0\}$ with a scalar variable $q_t$ and a parameter $\widetilde{\gamma}_0$, it is standard to assume that the parameter space for $\widetilde{\gamma}_0$ is between the $\tau$ and $(1-\tau)$ quantiles of $q_t$ for some known $0 < \tau < 1$.  We can interpret \eqref{restrction:para:original-form} as a natural generalization of this  restriction so that the proportion of one regime is never too close to 0 or 1.


When
$\gamma $ is of high dimension, the naive  grid search  would not work well.
Dynamic programming (e.g., \citep{Bai:Perron:2003})
or
smooth global optimization  (e.g., \citep{Qu:Tkachenko:17})
might be considered but are
not readily available. 
We overcome this computational difficulty by replacing the naive grid search with MIO.
We present two alternative algorithms based on MIO below.

\subsection{Mixed Integer Quadratic Programming}\label{sec:joint:estimation}

Our first algorithm is based on mixed integer quadratic programming (MIQP), which jointly estimates $(\alpha,\gamma )$. It is guaranteed to obtain a global solution once it is found. To write the original least-squares problem in MIQP,  we introduce
 $d_t:=1\{f_t'\gamma>0\} $ and $\ell_{t} := \delta d_t$ for $ t = 1,\ldots,T$.
 Then, rewrite the objective function as
\begin{equation} \label{prob2}
\frac{1}{T}\sum_{t=1}^{T}
(y_{t}-x_{t}'{\beta}-  x_t'\ell_{t})^{2},
\end{equation}
 which is a quadratic function of $\beta$ and $\ell_t$.
 The goal is to introduce only linear constraints with respect to variables of optimization, and to construct an MIQP that is equivalent to the original least-squares problem. Then, we can apply  modern MIO packages (e.g., Gurobi) to solve MIQP.
The assumption
 $\alpha \in \mathcal{A}$ implies that
there exist  known upper and lower bounds for $\delta_{j}$:  $L_j \leq \delta_{j} \leq  U_j$,
where $\delta_{j}$ denotes the $j$th element of $\delta$ for $j =1,\ldots,d_x$.
In addition, to make sure that $\ell_{j,t} = \delta_{j} d_t$ for each $j$ and $t$, we impose two additional restrictions:
\begin{align}\label{eq:bilinear:d:L}
d_t L_j \leq \ell_{j,t} \leq d_t U_j
\ \ \text{ and } \ \
L_{j} (1-d_t) \leq \delta_{j} - \ell_{j,t} \leq U_{j} (1-d_t).
\end{align}
  It is then straightforward to check that these constraints imply $\ell_{j, t} =\delta_j d_t$.  To introduce another key constraint,  we define
$
M_t \equiv \max_{\gamma \in \Gamma} | f_t' \gamma |
$
for each $t=1,\ldots,T$, where $\Gamma$ is the parameter space for $\gamma_0$.
We can compute $M_t$ easily for each $t$ using linear programming.
 We store them as inputs to our algorithm.  The following new constraints along with
 \eqref{restrction:para:original-form} and
 \eqref{eq:bilinear:d:L} ensure that
 the reformulated problem  \eqref{prob2} is the same as  the original problem:
$$
 (d_t - 1) (M_t + \epsilon) < f_t' \gamma \leq d_t M_t,
 $$
where $\epsilon > 0$ is a small predetermined constant (e.g., $\epsilon = 10^{-6}$).
The following defines an algorithm for the MIQP algorithm.

\begin{algorithm}[h]
 \KwInput{$\{(y_t, x_t, f_t, M_t): t=1,\ldots,T \}$}
\KwOutput{$\left( \widehat{\alpha},\widehat{\gamma}\right)$}
Let $\bm{d} =  (d_1, \ldots, d_T)'$ and $\bm{\ell} = \{\ell_{j,t}: j =1,\ldots,d_x, t = 1,\ldots,T \}$, where
 $\ell_{j,t}$ is a real-valued variable. Solve the following problem:
\begin{align}\label{prob3}
\min_{\beta, \delta, \gamma, \bm{d}, \bm{\ell} }
\mathbb{Q}_{T}\left( \beta ,\boldsymbol{\ell }\right) \equiv
\frac{1}{T}\sum_{t=1}^{T}
(y_{t}-x_{t}'{\beta}- \sum_{j=1}^{d_x} x_{j,t} \ell_{j,t} )^{2}
\end{align}
subject to
\begin{align}\label{main-constraints}
\begin{split}
& (\beta, \delta) \in \mathcal{A}, \; \gamma \in \Gamma,  \; d_t \in \{0, 1\}, \; L_j \leq \delta_{j} \leq  U_j, \\
& (d_t - 1) (M_t + \epsilon) < f_t' \gamma \leq d_t M_t, \\
& d_t L_j \leq \ell_{j,t} \leq d_t U_j, \\
& L_{j} (1-d_t) \leq \delta_{j} - \ell_{j,t} \leq U_{j} (1-d_t), \\
& \tau_1 \leq T^{-1} \sum_{t=1}^T d_t \leq \tau_2
\end{split}
\end{align}
for each $t=1,\ldots,T$ and each $j=1,\ldots,d_x$, where  $0 < \tau_1 < \tau_2 < 1$.
\caption{Mixed Integer Quadratic Programming (MIQP)}
\end{algorithm}

Our proposed algorithm  is mathematically equivalent to the original least-squares problem  \eqref{original-prob} subject to \eqref{restrction:para:original-form} in terms of values of objective functions. Formally, we state it as the following theorem.

\begin{thm}\label{thm:computation:joint}
Let $(\bar{\alpha},\bar{\gamma})$ denote a solution using MIQP as described above.
Then, $\mathbb{S}_{T}\left( \widehat{\alpha},\widehat{\gamma}\right) =\mathbb{S}_{T}\left(
\bar{\alpha},\bar{\gamma}\right) $, where
$(\widehat{\alpha},\widehat{\gamma})$ is defined in \eqref{original-prob}.
\end{thm}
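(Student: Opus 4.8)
The plan is to show that the feasible sets of the two problems are in bijective correspondence in a way that preserves the value of the objective function, so that a minimizer of one yields a minimizer of the other at the same value. The key observation is that the auxiliary variables $\boldsymbol d$ and $\boldsymbol\ell$ introduced in the MIQP are not free: the constraints force them to be deterministic functions of $(\alpha,\gamma)$. First I would verify the forward direction: given any $(\beta,\delta,\gamma)$ feasible for \eqref{original-prob}--\eqref{restrction:para:original-form}, set $d_t := 1\{f_t'\gamma>0\}$ and $\ell_{j,t} := \delta_j d_t$. I would then check that this choice satisfies every constraint in \eqref{main-constraints}. The constraint $(d_t-1)(M_t+\epsilon) < f_t'\gamma \le d_t M_t$ is the crucial one: when $d_t=1$ it reads $0 < f_t'\gamma \le M_t$, which holds because $f_t'\gamma>0$ by construction and $|f_t'\gamma|\le M_t$ by definition of $M_t$; when $d_t=0$ it reads $-(M_t+\epsilon) < f_t'\gamma \le 0$, which holds because $f_t'\gamma\le 0$ and again $|f_t'\gamma|\le M_t < M_t+\epsilon$. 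The box constraints $d_tL_j\le\ell_{j,t}\le d_tU_j$ and $L_j(1-d_t)\le\delta_j-\ell_{j,t}\le U_j(1-d_t)$ hold trivially for $\ell_{j,t}=\delta_j d_t$ since $L_j\le\delta_j\le U_j$, and the $\tau_1,\tau_2$ constraint is exactly \eqref{restrction:para:original-form}. Finally, with this substitution $x_t'\beta+\sum_j x_{j,t}\ell_{j,t} = x_t'\beta + x_t'\delta\,1\{f_t'\gamma>0\}$, so $\mathbb Q_T(\beta,\boldsymbol\ell)=\mathbb S_T(\alpha,\gamma)$; hence the MIQP optimal value is $\le\mathbb S_T(\widehat\alpha,\widehat\gamma)$.

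For the reverse direction I would take any $(\beta,\delta,\gamma,\boldsymbol d,\boldsymbol\ell)$ feasible for \eqref{main-constraints} and argue that it necessarily has $d_t = 1\{f_t'\gamma>0\}$ and $\ell_{j,t}=\delta_j d_t$. For the first identity: if $f_t'\gamma>0$, then $d_t M_t \ge f_t'\gamma > 0$ forces $d_t\neq 0$, hence $d_t=1$; if $f_t'\gamma\le 0$, then $(d_t-1)(M_t+\epsilon) < f_t'\gamma \le 0$, and since $M_t+\epsilon>0$ this forces $d_t-1<0$, hence $d_t=0$. (A harmless edge case $f_t'\gamma=0$ is handled by the strict inequality on the left: $d_t=1$ would give $0 < f_t'\gamma$, a contradiction, so $d_t=0$, consistent with the convention $1\{f_t'\gamma>0\}=0$.) For the second identity: when $d_t=1$ the pair of constraints in \eqref{eq:bilinear:d:L} becomes $L_j\le\ell_{j,t}\le U_j$ and $0\le\delta_j-\ell_{j,t}\le 0$, i.e. $\ell_{j,t}=\delta_j$; when $d_t=0$ they become $0\le\ell_{j,t}\le 0$ and $L_j\le\delta_j-\ell_{j,t}\le U_j$, i.e. $\ell_{j,t}=0$. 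In both cases $\ell_{j,t}=\delta_j d_t$. Consequently $(\beta,\delta,\gamma)$ is feasible for \eqref{original-prob}--\eqref{restrction:para:original-form} (the $\tau$ constraint transfers directly since $d_t=1\{f_t'\gamma>0\}$), and again $\mathbb Q_T(\beta,\boldsymbol\ell)=\mathbb S_T(\alpha,\gamma)$, so $\mathbb S_T(\widehat\alpha,\widehat\gamma)\le$ the MIQP optimal value.

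Combining the two inequalities gives $\mathbb S_T(\widehat\alpha,\widehat\gamma)=\mathbb S_T(\bar\alpha,\bar\gamma)$, which is the claim. I do not anticipate a genuine obstacle here — the argument is a verification that the big-$M$ style linearization is exact — but the one point requiring care is the treatment of the boundary event $\{f_t'\gamma=0\}$ and making sure the strict versus non-strict inequalities in the constraint $(d_t-1)(M_t+\epsilon) < f_t'\gamma \le d_t M_t$ line up with the convention $d_t = 1\{f_t'\gamma>0\}$; the introduction of the slack $\epsilon>0$ is exactly what makes the $d_t=0$ branch a closed, non-degenerate condition. One should also note that this is an equivalence of optimal values (and of the induced $(\alpha,\gamma)$), not a claim that the MIQP returns a unique optimizer; uniqueness of $(\widehat\alpha,\widehat\gamma)$ is not asserted and is not needed.
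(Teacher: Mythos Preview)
Your proposal is correct and takes essentially the same approach as the paper: establish a value-preserving correspondence between feasible points of the two problems by showing the constraints force $d_t=1\{f_t'\gamma>0\}$ and $\ell_{j,t}=\delta_j d_t$, then feed the original optimizer into the MIQP and the MIQP optimizer into the original problem to get matching inequalities. Your treatment of the boundary case $f_t'\gamma=0$ and the role of $\epsilon$ is actually a bit more explicit than the paper's; otherwise the arguments are the same.
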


The proposed algorithm in Section \ref{sec:joint:estimation} may run slowly when the dimension $d_x$ of $x_t$ is large. To mitigate this problem, we
 reformulate MIQP in Appendix  \ref{sec:joint:estimation:appendix} and use  the alternative  formulation in our numerical work; however, we present a simpler form here to help readers follow our basic ideas more easily.

\subsection{Block Coordinate Descent}

\begin{algorithm}[h!tb]
 \KwInput{$\{(y_t, x_t, f_t, M_t): t=1,\ldots,T \}$, \texttt{MaxTime\_1}, \texttt{MaxTime\_2}}
\KwOutput{$\left( \widehat{\alpha},\widehat{\gamma}\right)$}
Set $k = 1$\;
Step 1. Obtain an initial estimate
$\left( \widehat{\alpha}^{0},\widehat{\gamma}^{0}\right)$
using MIQP with the pre-specified  time limit \texttt{MaxTime\_1}\;

\If{a solution is found before reaching  \texttt{MaxTime\_1},}{
      set the initial estimate as the final estimate and terminate\;
   }

\While{elapsed time is no greater than  \texttt{MaxTime\_2}}{

	 Step 2. For the given $\widehat{\alpha}^{k-1},$ obtain an estimate $\widehat{\gamma}%
	^{k}$ via MILP:%
\begin{align}\label{prob3-local}
\min_{\gamma \in \Gamma, d_1,\ldots,d_T}\frac{1}{T}\sum_{t=1}^{T}
\left\{  (x_{t}'\widehat{\delta}^{k-1} )^2
- 2(y_{t}- x_{t}'\widehat{\beta}^{k-1}) x_t'\widehat{\delta}^{k-1} \right\} d_t
\end{align}
		subject to
	\begin{align}\label{main-constraints-iterative}
	\begin{split}
	& (d_{t}-1)(M_{t}+\epsilon )<f_{t}^{\prime }\gamma \leq d_{t}M_{t}, \\
	& d_{t}\in \{0,1\} \ \ \text{for each $t =1,\ldots,T$}, \\
	& \tau_1 \leq \frac{1}{T} \sum_{t=1}^T d_t \leq \tau_2\text{\;}
	\end{split}
	\end{align}

\If{$\mathbb{S}_{T}\left( \widehat{\alpha}^{k-1} ,\widehat{\gamma}^{k} \right)
 \geq \mathbb{S}_{T}\left( \widehat{\alpha}^{k-1} ,\widehat{\gamma}^{k-1} \right)$,
}
{terminate\;}

Step 3. For the given $\widehat{\gamma}^{k},$ obtain
		\begin{equation*}
	\widehat{\alpha}^{k}= \left[\frac{1}{T}\sum_{t=1}^T Z_{t}\left( \widehat{\gamma}^{k} \right)Z_{t}\left( \widehat{\gamma}^{k} \right)' \right]^{-1} \frac{1}{T}\sum_{t=1}^T Z_{t}\left( \widehat{\gamma}^{k} \right) y_t\text{\;}
	\end{equation*}

 Let $k=k+1$\;

}

 \caption{Block Coordinate Descent (BCD)}\label{algo:BCD}
\end{algorithm}


While the MIQP jointly estimates $(\alpha,\gamma )$ and aims at obtaining a  global solution,  it might not compute  as fast as necessary in large-scale problems.   To mitigate the issue of scalability, we introduce a faster alternative approach based on mixed integer linear programming (MILP), whose objective function is linear in $d_t$. The algorithm  solves for $\alpha$ and $\gamma$ iteratively, which we call a block coordinate descent (BCD) algorithm, starting with an initial value that can be obtained through  MIQP with an early stopping rule. At step $k$,  given  $\widehat\alpha^{k-1}$, which is obtained in the previous step, we estimate $\gamma$ by solving
  \begin{equation}\label{eq3.6}
	\min_{\gamma \in \Gamma, d_{1},\ldots ,d_{T}}\frac{1}{T}\sum_{t=1}^{T}\left(
	y_{t}-x_{t}^{\prime }\widehat{\beta}^{k-1}-x_{t}^{\prime }\widehat{\delta}%
	^{k-1}d_{t}\right) ^{2}
	\end{equation}
subject to  similar  constraints as in MIQP. Note that   the least-squares problem (\ref{eq3.6})  is linear in $d_t$ as  $d_t^2=d_t$. The BCD  algorithm is defined in Algorithm \ref{algo:BCD}. Intuitively speaking, it runs  the MIQP algorithm for the amount of time \texttt{MaxTime\_1}, then switches to the MILP for the amount of time \texttt{MaxTime\_2}.
 The BCD approach is a descent algorithm in the sense that
the least-squares objective function is a non-increasing function of $k$.
In other words, BCD in Steps 2 and 3 can provide a higher-quality solution than MIQP with an early stopping rule \texttt{MaxTime\_1}.
The time limit \texttt{MaxTime\_2} in Step 2 can be smaller than \texttt{MaxTime\_1} as
it is easier to solve an MILP problem than to solve an MIQP problem.
Furthermore, the alternative minimization approach efficiently solves for $\widehat\alpha^k$ because it has
 an explicit solution.

\begin{figure}[h]
\centering
\caption{Computation Example of MIQP and BCD}\label{fig-comp_ex}
\includegraphics[width=7cm, height=7cm]{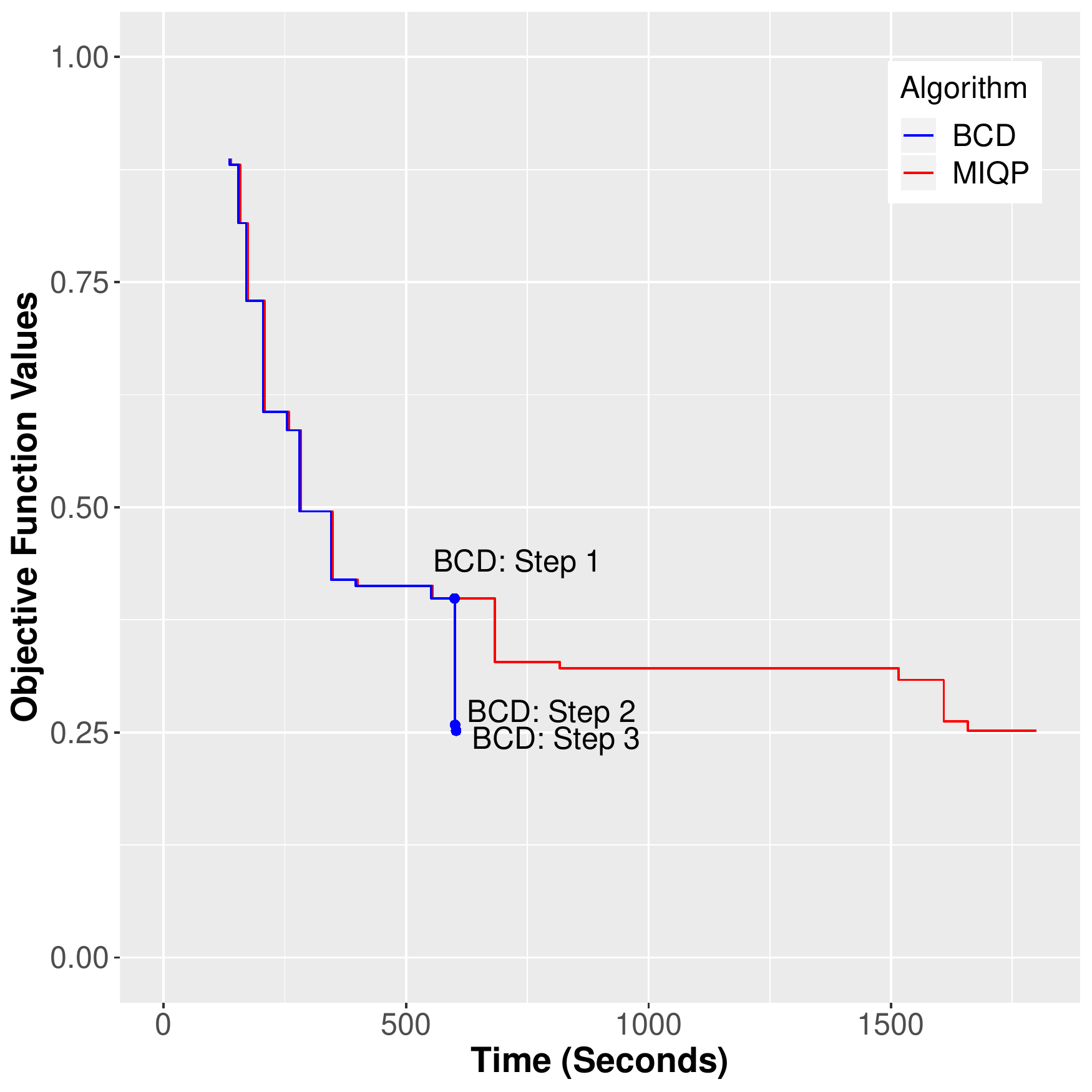}
\end{figure}

Figure \ref{fig-comp_ex} illustrates the performance of MIQP and BCD in one simulation draw. After spending \texttt{MaxTime\_1} (600 seconds) in Step 1, BCD switches into Step 2 and it converges to the solution quickly just in one iteration. Meanwhile, MIQP achieves a similar objective function value after spending the whole time budget of 1800 seconds.
In Monte Carlo experiments, we compare MIQP with BCD more thoroughly, subject to the same total computing time restrictions, and we demonstrate the efficiency of BCD.

\section{Asymptotic Properties with Known Factors}\label{sec:known factors}


We split the asymptotic properties of the estimator into two cases:  known and unknown factors.
In this section, we consider the former.

\begin{assum}
	\label{A-mixing}
	\begin{enumerate}[label=(\roman*)]
		\item\label{A-mixing:itm1}
		$\left\{ x_{t},f_{t},\varepsilon
		_{t}\right\} $ is a sequence of strictly stationary, ergodic, and $\rho $%
		-mixing random vectors with $\sum_{m=1}^{\infty }\rho _{m}^{1/2}<\infty $, $ \mathbb E\left\vert
		x_{t}\right\vert_2 ^{4}<\infty $, and  there exists a constant $C < \infty$ such that  $ \mathbb  E (
		\left\vert x_{t}\right\vert_2 ^{8} \big| f_{t}'\gamma=0 ) < C$ and  $ \mathbb E (
		\varepsilon_{t}^{8} \big| f_{t}'\gamma=0 ) <C$  for all $\gamma \in \Gamma$.


		\item\label{A-mixing:itm2}
		$\left\{ \varepsilon _{t}\right\}$ is a martingale
		difference sequence, that is, $\mathbb{E}\left( \varepsilon _{t}|\mathcal{F}_{t-1}\right)  = 0$,
where $x_{t}\ $and $f_{t}$ are adapted to the filtration $\mathcal{F}_{t-1}$.

		\item\label{A-mixing:itm3}
		The smallest eigenvalue of $ \mathbb E [  Z_{t}\left( \gamma \right)  Z_{t}\left( \gamma \right)^{\prime } ]$ is bounded away from zero for all $\gamma \in \Gamma$.

	\end{enumerate}
\end{assum}

We decompose $f_{t}$ into a scalar random variable $f_{1t}$ and the other variables $f_{2t}$, so that $f_{t}^{\prime}\gamma  \equiv f_{1t} + f_{2t}^{\prime} \gamma_2$.
Define $u_t := f_t'\gamma_{0}$.

\begin{assum}
	\label{A:diminishing dt}
	\begin{enumerate}[label=(\roman*)]
		\item \label{A:dim:itm1} For some $0<\varphi <1/2\ $and $d_{0}\neq 0,\ $%
		$\delta _{0}=d_{0}T^{-\varphi }$.

		\item \label{A:dim:itm2}	$p_{u_t|f_{2t}}(u) $,
		$\mathbb  E [ ( x_{t}^{\prime }d_{0})^{2}|f_{2t},u_t=u ]$ and
		$ \mathbb E [ ( \varepsilon_{t}x_{t}^{\prime }d_{0} ) ^{2}|f_{2t},u_t=u ]$ are continuous and bounded away from zero
		at $ u=0 $ a.s.

		\item \label{A:dim:itm3} For some $M<\infty $,
		$
		\inf_{\left\vert r\right\vert_2 =1}\mathbb{E}\left( \left\vert f_{2t}^{\prime
		}r\right\vert 1\left\{ \left\vert f_{2t}\right\vert_2 \leq M\right\}
		\right) >0. $
	\end{enumerate}

\end{assum}

Most of the conditions in Assumptions \ref{A-mixing} and \ref{A:diminishing dt} are a natural extension of the scalar case   in the literature, when  $ f_t = (q_t,-1)' $ for a  scalar random variable (e.g., \citep{hansen2000sample}).
 Assumption \ref{A:diminishing dt}\ref{A:dim:itm3} is a rank condition on $ f_{2t} $ due to the vector of threshold parameter to be estimated and it is in terms of the first moment because of the asymptotic linear approximation of criterion function near $ \gamma_0 $. It also allows for discrete variables in $ f_{2t} $.
Assumption \ref{A:diminishing dt}\ref{A:dim:itm2}
 ensures the presence of a jump, not just a kink at the change point.

\begin{thm}\label{asdist-alpha-gamma}
Let
$\mathcal{G} := \{ g \in \mathbb{R}^{d_f}: g_1 = 0 \}$.
Let Assumptions  \ref{scale-normalization}, \ref{iden-assump}, \ref{A-mixing}, and \ref{A:diminishing dt} hold.
Assume further that
$\alpha_0$ is in the interior of  $\mathcal{A}$ and    that
$\gamma_0$ is  in the interior of $\Gamma$.
In addition, let  $W$ denote a mean-zero Gaussian process whose covariance kernel is given by \begin{equation}\label{H-Gaussian-process-covariance-kernel}
H\left( s,g\right) :=\frac{1}{2}  \mathbb E \left[ \left(
\varepsilon _{t}x_{t}^{\prime }d_{0}\right) ^{2}\left( \left\vert
f_{t}^{\prime }g\right\vert +\left\vert f_{t}^{\prime }s\right\vert
-\left\vert f_{t}^{\prime }\left( g-s\right) \right\vert \right) p_{u_t|f_{2t}}(0) \right].
\end{equation}
Then, as $T \rightarrow \infty$, we have
\begin{align*}
\sqrt{T}(\widehat\alpha-\alpha_0) &\overset{d}{\longrightarrow } \mathcal N(0, (  \mathbb EZ_t(\gamma_0)Z_t(\gamma_0)')^{-1}\var(Z_t(\gamma_0)\varepsilon_t) (   \mathbb EZ_t(\gamma_0)Z_t(\gamma_0)')^{-1}  ), \\
T^{1-2\varphi }\left( \widehat{\gamma}-\gamma _{0}\right) &\overset{d}{%
\longrightarrow }\argmin_{g \in \mathcal{G}}
\left\{
 \mathbb E\left[ \left( x_{t}^{\prime }d_{0}\right) ^{2}\left\vert
f_{t}^{\prime }g\right\vert p_{u_t|f_{2t}}(0) \right] +2W\left( g\right)
\right\},
\end{align*}
where $\sqrt{T}(\widehat\alpha-\alpha_0)$ and $T^{1-2\varphi }\left( \widehat{\gamma}-\gamma _{0}\right)$ are asymptotically independent.
\end{thm}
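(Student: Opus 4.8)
The plan is to establish the two limits separately and then argue joint asymptotic independence. For $\widehat\alpha$, I would first show consistency of $(\widehat\alpha,\widehat\gamma)$ via Assumption \ref{iden-assump} together with a uniform law of large numbers for $\mathbb{S}_T(\alpha,\gamma)$ over the compact set $\mathcal{A}\times\Gamma$ (the $\rho$-mixing and moment conditions in Assumption \ref{A-mixing} give this). Next, I would show a faster rate for $\widehat\gamma$: because $\delta_0 = d_0 T^{-\varphi}$ diminishes, the relevant reparametrization is $g = T^{1-2\varphi}(\gamma-\gamma_0)$, and a standard argument (bounding the expected increment of the empirical criterion, comparing the "drift" of order $|g|$ against the "noise" of order $|g|^{1/2}$) yields $T^{1-2\varphi}(\widehat\gamma-\gamma_0) = O_p(1)$. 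Granting this rate, the key observation is that $1\{f_t'\widehat\gamma>0\}$ differs from $1\{f_t'\gamma_0>0\}$ only on a set of $t$ of frequency $O_p(T^{-(1-2\varphi)})$ near the boundary $u_t=0$, and on that set the contribution to the score for $\alpha$ is $O_p(\delta_0) = O_p(T^{-\varphi})$, so the overall effect on $\sqrt{T}(\widehat\alpha-\alpha_0)$ is $O_p(T^{1/2 - (1-2\varphi) - \varphi}) = O_p(T^{\varphi - 1/2}) = o_p(1)$. Hence $\sqrt{T}(\widehat\alpha-\alpha_0)$ behaves as if $\gamma_0$ were known: it equals $(\frac1T\sum Z_t(\gamma_0)Z_t(\gamma_0)')^{-1}\frac{1}{\sqrt T}\sum Z_t(\gamma_0)\varepsilon_t + o_p(1)$, and the martingale-difference CLT (Assumption \ref{A-mixing}\ref{A-mixing:itm2}) delivers the stated sandwich variance.

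For $\widehat\gamma$, I would work with the profiled/concentrated criterion in $\gamma$ alone, or directly expand $\mathbb{S}_T(\widehat\alpha,\gamma) - \mathbb{S}_T(\widehat\alpha,\gamma_0)$. Write $\gamma = \gamma_0 + T^{-(1-2\varphi)}g$ with $g\in\mathcal{G}$ (the first coordinate of $g$ vanishes by the scale normalization in Assumption \ref{scale-normalization}). I would decompose the rescaled centered criterion, call it $\mathbb{Q}_T^*(g)$, into a deterministic drift term and a stochastic term. The drift, after taking expectations and using a change of variables near $u_t=0$ together with the continuity/boundedness of $p_{u_t|f_{2t}}$ and $\mathbb{E}[(x_t'd_0)^2\mid f_{2t},u_t=u]$ at $u=0$ (Assumption \ref{A:diminishing dt}\ref{A:dim:itm2}), converges to $\mathbb{E}[(x_t'd_0)^2|f_t'g|\, p_{u_t|f_{2t}}(0)]$; Assumption \ref{A:diminishing dt}\ref{A:dim:itm3} ensures this drift is coercive in $g$ so the argmin is well-defined and tight. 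The stochastic term is $\frac1T\sum (\text{stuff}) (1\{f_t'\gamma>0\}-1\{f_t'\gamma_0>0\})$, a sum over a shrinking neighborhood of the threshold; I would show it converges (in the sense of finite-dimensional distributions plus tightness, as a process in $g$ on compacts) to the Gaussian process $2W(g)$ with covariance kernel \eqref{H-Gaussian-process-covariance-kernel}, using a CLT for the martingale-difference-type array and verifying the stated covariance via the identity $|a|+|b|-|a-b| = 2\min(|a|,|b|)1\{ab>0\}\cdot\mathrm{sgn}$ (more precisely the standard algebraic identity that produces $H(s,g)$). Then the argmax/argmin continuous-mapping theorem (à la Kim–Pollard) gives $T^{1-2\varphi}(\widehat\gamma-\gamma_0)\overset{d}{\to}\argmin_{g\in\mathcal{G}}\{\text{drift}(g) + 2W(g)\}$.

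Finally, asymptotic independence: I would show that the joint limit of $(\frac{1}{\sqrt T}\sum Z_t(\gamma_0)\varepsilon_t,\ \mathbb{Q}_T^*(\cdot))$ is the product of a Gaussian vector and the Gaussian process $W$. The two pieces are driven by $\varepsilon_t$ "away from the threshold" versus "within the $O(T^{-(1-2\varphi)})$-neighborhood of $u_t=0$"; because the latter has vanishing frequency, the cross-covariance of the relevant (martingale-difference) summands vanishes in the limit, which—jointly with a CLT for the stacked array—yields independence of the limiting Gaussian objects, hence of $\argmin_g\{\cdots\}$ and $\mathcal N(0,\cdot)$.

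\textbf{Main obstacle.} The hardest step is the weak convergence of the stochastic term to $W(g)$ as a process in $g$ with the precise kernel \eqref{H-Gaussian-process-covariance-kernel}: one must handle a sum of triangular-array martingale-difference terms supported on a shrinking boundary layer $\{|u_t| \lesssim T^{-(1-2\varphi)}\}$, establish finite-dimensional CLTs with the conditional-density rescaling producing the $p_{u_t|f_{2t}}(0)$ factor, and prove stochastic equicontinuity/tightness on compact $g$-sets under only $\rho$-mixing. Verifying that the limiting covariance is exactly $H(s,g)$ — in particular getting the combination $|f_t'g|+|f_t's|-|f_t'(g-s)|$ from the increments of the indicator process — is the delicate bookkeeping at the center of the argument.
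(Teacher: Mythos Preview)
Your proposal is correct and follows essentially the same route as the paper. The paper packages the argument slightly differently—rather than first establishing the rate for $\widehat\gamma$ and then arguing the oracle property of $\widehat\alpha$ via the back-of-envelope $O_p(T^{\varphi-1/2})$ bound you give, it directly proves the separability $\mathbb{K}_T(a,g)=\mathbb{K}_{1T}(a)+\mathbb{K}_{2T}(g)-2\mathbb{K}_{3T}(g)+o_P(1)$ of the rescaled joint criterion (your $\mathbb{Q}_T^*$) over compacts, from which both the $\alpha$-limit and the $\gamma$-limit (and their independence) fall out simultaneously; but the underlying estimates (maximal inequalities for the indicator process, the change-of-variables yielding $p_{u_t|f_{2t}}(0)$, the MDS CLT for fidi, and the $L(s,g)$ computation producing $|f_t'g|+|f_t's|-|f_t'(g-s)|$) are exactly what you describe, and your identification of the tightness/covariance-kernel step as the technical crux matches the paper's emphasis on its Lemma~\ref{Lem:modul1}.
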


The normalization scheme is embedded in the asymptotic distribution.
Because $\gamma _{1}=1$, the minimum in the limit is taken after
fixing the first element of $g$ at zero (recall that $\mathcal{G} = \{ g \in \mathbb{R}^{d_f}: g_1 = 0 \}$).  Also note that, in the scalar  threshold case,  $f_{t}=\left( q_{t},-1\right) ^{\prime }$ and  $\gamma_0 = (1, \widetilde{\gamma}_0)'$,
$$
H(s,g)= \frac{1}{2}  \mathbb E \left[ \left(
\varepsilon _{t}x_{t}^{\prime }d_{0}\right) ^{2}\left(  2\min\left(\left|g_{2}\right|,\left|s_{2}\right|\right)1\left\{ \sgn\left(g_{2}\right)=\sgn\left(s_{2}\right)\right\} \right) p_{u_t|f_{2t}}(0) \right],
$$
which  becomes the two-sided
Brownian motion,  as in   Hansen \citep{hansen2000sample}.



\section{Estimation with Unobserved Factors}\label{model:est:factors}

In this section, we consider the case in which the factors are estimated.

\subsection{The Model}


Consider the following factor model,
\begin{align}\label{factor-model-reg}
\mathcal Y_t =\Lambda g_{1t}+  e_t, \; t=1,\ldots, T,
\end{align}
where $\mathcal Y_t$ is an $N \times 1$ vector of time series,
$\Lambda$ is an $N\times K$ matrix of factor loadings, $g_{1t}$ is a $K \times 1$ vector of common factors, and $e_t$ is an $N \times 1$ vector of idiosyncratic components.
Throughout this section,
we make it explicit that there is a constant term in  the factors, and  we
replace the regression model in \eqref{model1} with
\begin{align}\label{model1-est-factor}
y_t=x_t'\beta_0+x_t'\delta_01\{g_{t}'\phi_{0}  > 0 \} + \varepsilon_t,
\end{align}
where $g_t = (g_{1t}', -1)'$ is a vector of unknown factors in \eqref{factor-model-reg} plus a constant term ($-1$), and $\phi_0$ is a vector of unknown parameters.  In addition, we allow $g_{1t}$ to contain lagged (dynamic) factors, but we treat them as static factors and estimate them using the PCA without losing the validity of the estimated factors. 

It is well known that $g_t$ is identifiable and estimable by the PCA up to an invertible
 matrix transformation (i.e., $H_T'g_t$),   whose exact form will be given in Section \ref{asymp:est:factors}.   Therefore,
 it is customary in the literature (see, e.g., \citep{bai03, BN06}) to treat $H_T' g_t$ as a centering object in the limiting distribution of estimated factors. Following  this convention,
in this section, let
\begin{align}\label{rotation-convention}
f_t:=H_T'g_t\quad \text{and} \quad \gamma_0:=H_T^{-1}\phi_0.
\end{align}
Using  the fact that $ g_t'\phi_0=f_t'\gamma_0$, we can  rewrite (\ref{model1-est-factor}) as  the original formulation in \eqref{model1}:
$$
y_t=x_t'\beta_0+x_t'\delta_01\{f_{t}'\gamma_{0}  > 0 \} + \varepsilon_t.
$$
Hence, $\gamma_0$ depends on the sample in this section but we suppress dependence on $T$ for the sake of notational simplicity. 



Our estimation procedure now consists of two steps. In the first step, a $(K+1) \times 1$ vector of estimated factors and the constant term (i.e.,  $\widetilde f_t := (\widetilde f_{1t}', 1)'$) are obtained  by the method of principal components.
To describe estimated factors, let $\mathcal Y  $ be the $T \times N$ matrix whose $t$-th row is  $\mathcal Y_t' $.  Let $ (\widetilde f_{11}, \ldots, \widetilde f_{1T})$ be the    $  K\times T$ matrix, whose rows are $K$ eigenvectors (multiplied by $\sqrt{T}$) associated with the  largest $K$ eigenvalues of $ \mathcal Y  \mathcal Y  '/{NT}$
 in decreasing order.
In the second step, unknown parameters $(\alpha_0, \gamma_0)$ are estimated by  the same algorithm in Section  \ref{sec:opt} with  $\widetilde f_t$ as inputs.

\subsection{Regularity Conditions}\label{subsection:asymp:est:factors}

We introduce assumptions needed for asymptotic results with estimated factors.
We first replace  Assumptions \ref{scale-normalization}--\ref{A:diminishing dt} with the following assumption.
Define
\begin{align}\label{def-Phi-T}
\Phi_T := \{ \phi: \phi = H_T \gamma \; \text{ for some $\gamma \in \Gamma_\epsilon$} \},
\end{align}
where $\Gamma_\epsilon$ is an $\epsilon$-enlargement of $\Gamma$. Note that $ \phi $ cannot be a vector whose first $ K $ elements are zeros due to the normalization on $ \gamma $ and the block diagonal structure of $ H_T$ that will be defined in \eqref{Bai-type-expansion}.
The space $\Phi_T$ for $\phi$ is defined through $H_T$ and excludes the case that $g_t'\phi$ is degenerate.
The $\epsilon$-enlargement of $\Gamma$ is needed because the factors are latent.

\begin{assum} \label{as9}
\begin{enumerate}[label=(\roman*)]
\item\label{as9:itm1}
Assumptions  \ref{scale-normalization},  \ref{iden-assump}, and
 \ref{A:diminishing dt}\ref{A:dim:itm1} hold after replacing $f_t$ and $\gamma_0$ with $g_t$ and $\phi_0$, respectively.

\item\label{as9:itm11}
		$\left\{ x_{t},g_{t}, e_{t}, \varepsilon
		_{t}\right\} $ is a sequence of strictly stationary, ergodic, and $\rho $	-mixing random vectors with $\sum_{m=1}^{\infty }\rho _{m}^{1/2}<\infty $,
		and there exists a constant $C < \infty$ such that
		  $\mathbb E(\left| x_t\right|_2^8 |g_t, e_t)<C$, $\mathbb E(\varepsilon_{t}^8 |g_t, e_t)<C$ a.s.,  and   $ g_t'\phi$ has a density that is continuous and bounded by  $C$ for all $\phi \in \Phi_T$.

\end{enumerate}
\end{assum}


Recall that by  the normalization in Assumption \ref{scale-normalization}, the first element of $ \gamma $ is fixed at 1. One caveat of this normalization scheme is that the sign of the first element of $f_t$ might not be the same as that of the first element of $g_t$ due to random rotation $H_T$; however, if we assume that $\delta_0 \neq 0$ and we also know the sign of one of the non-zero coefficients of $\delta_0$, then  we can determine the sign of the first element of $f_t$ after estimating the model. This is a ``labeling'' problem that is common in models with hidden regimes. For simplicity, we assume that the first element of $\gamma_0$ is 1.

The following assumption is  standard in the literature.    In particular, we allow  weak serial correlation among  $e_t$.

 \begin{assum}\label{assmp:factor}
 \begin{enumerate}[label=(\roman*)]
\item\label{assmp:factor:itm1}
$\lim_{N\to\infty}\frac{1}{N}\Lambda'\Lambda =\Sigma_{\Lambda}$ for some $K\times K$ matrix $\Sigma_{\Lambda}$, whose eigenvalues are bounded away from both zero and infinity.
\item\label{assmp:factor:itm2}
 The eigenvalues of  $\Sigma_{\Lambda}^{1/2} \mathbb E (g_{1t}g_{1t}') \Sigma_{\Lambda}^{1/2}$ are distinct.
\item\label{assmp:factor:itm3}
All the eigenvalues of the $N\times N$ covariance $\var(e_t)$ are bounded away from both zero and infinity.
\item\label{assmp:factor:itm4}  For any $ t $, $\frac{1}{N}\sum_{s=1}^{T}\sum_{i=1}^{N}|\mathbb Ee_{it}e_{is}|<C$   for some $C>0.$
\end{enumerate}
\end{assum}

Define  $\lambda_i'$ to be the $i$th row of $\Lambda$, so that $\Lambda=(\lambda_1, \ldots,\lambda_N)'$.
Further, let
\begin{align*}
\xi_{s,t}&:= N^{-1/2}\sum_{i=1}^N(e_{is}e_{it}- \mathbb Ee_{is}e_{it}),\quad \psi := (TN)^{-1/2}\sum_{t=1}^T\sum_{i=1}^N     g_t e_{it}   \lambda_i' ,\cr
\eta_t&:= (TN)^{-1/2}\sum_{s=1}^T \sum_{i=1}^Ng_{1s}  (e_{is}e_{it}- \mathbb Ee_{is}e_{it}),\quad \zeta_t:=  N^{-1/2}\sum_{i=1}^N \lambda_{it} e_{it}.
\end{align*}

 We require the following additional exponential-tail conditions.

\begin{assum}
   There exist finite, positive constants  $C, C_1$ and $c_1$ such that   for any $x>0$ and
    for any $\varpi\in\Xi:=\{e_{it}, g_{1t}, \xi_{s,t}, \zeta_t, vec(\psi),\eta_{t}\}$,
   \[\mathbb P(|\varpi|_2>x) \leq C \exp(-C_1x^{c_1}).\]

\end{assum}


These conditions impose exponential tail conditions on various terms. First, it requires weak cross-sectional correlations among $e_{it}.$  This assumption can be verified under some low-level conditions such as the $ \alpha$-mixing condition of the type of Merlev{\`e}de et al. \citep{MPR-2011} across both $(i,t)$ and individual  exponential-tailed distributions on $\{e_{it}, g_t\}$.
 While the  quantities in $\Xi$ are often assumed to have finite moments in the high-dimensional factor model literature, these moment bounds would no longer be sufficient in the current context. Instead, exponential-type probability bounds are more useful for us to characterize the  effect of the estimated factors. To see the point, note that we have the following asymptotic expansion:
\begin{align}\label{Bai-expansion}
\widetilde f_t = \widehat f_t + r_t,\quad \widehat f_t:=H_T'(g_t+N^{-1/2}h_t).
\end{align}
Here, $r_t$ is a remainder term,
\begin{align}\label{e:6.7:h}
H_T':=  \begin{pmatrix}
 \widetilde H_T'&0\\
 0&1
 \end{pmatrix},\quad
 h_t:= \begin{pmatrix}
h_{1t} \\
 0
\end{pmatrix}, \quad
 h_{1t}:=
(\frac{1}{N}\Lambda'\Lambda)^{-1}\frac{1}{\sqrt{N}}\Lambda' e_t,
\end{align}
and the exact form of $\widetilde H_T$ is given  in \eqref{Bai-type-expansion}.
 The diagonality in $H_T$ and the zero element in $h_t$ reflect the inclusion of the constant in $g_t$.
	  We  establish the following uniform approximation result:
uniformly for $\gamma$ over a compact set,
	$$
	\max_{t\leq T}\left| \mathbb P(\widetilde f_t'\gamma>0)-  \mathbb P(\widehat f_t'\gamma>0)\right|\leq O \left(\frac{(\log T)^c}{T} \right) + \max_{t\leq T}\mathbb P\left(   |r_t|>C\frac{(\log T)^c}{T} \right)
	$$
	for some constants $C, c>0$.
	The above exponential-tail assumption then enables us to derive a sharp bound so that
	$
	\max_{t\leq T}\mathbb P(   | r_t |>C(\log T)^c T^{-1})
	$
	is asymptotically negligible.


Next, we state important technical conditions to facilitate the local asymptotic expansion of the least-squares criterion function.
A technical challenge in the analysis is that even the expected criterion function  is non-smooth with respect to the factors.  As such, we  introduce some conditional density conditions to  study the effect of estimating factors $ H_T'h_t= \sqrt{N} (\widehat f_t- f_t) $.

 \begin{assum}\label{as5}
 \begin{enumerate}[label=(\roman*)]
\item\label{as5:itm1}  	
$
\sup_{x_t, g_t} \left| \mathbb P(h_t'\phi_0<0|x_t, g_t)- ({1}/{2}) \right|=O(N^{-1/2}).
 $
 \item\label{as5:itm2}
Let $\sigma^2_{h, x_t, g_t}:=\plim_{N\to\infty} \mathbb E[(h_t'\phi_0)^2|x_t, g_t]$ and let $\mathcal Z_t$ be a sequence of Gaussian random variables  whose conditional distribution,  given $x_t$ and $g_t$, is $\mathcal N(0,\sigma^2_{h, x_t, g_t})$.
 Then, there are positive constants  $c$, $c_0$, and $C$ such that $
\sigma^2_{h,x_t, g_t}  >c_0  $ a.s., $ \sup_{x_t, g_t} \sup_{|z|<c} p_{h_t^{\prime }\phi_0|		g_t,x_t}(z) < C$, and
\begin{align*}
\sup_{x_t, g_t} \sup_{|z|<c} |p_{h_t'\phi_0| g_t,x_t}( z) -p_{\mathcal Z_t|g_t,x_t}(z)  |  =o(1).
\end{align*}
\end{enumerate}
\end{assum}

Assumption \ref{as5} is concerned with the asymptotic behavior of  the distribution of $h_t$ as $N\to\infty.$
  The rate $ N^{-1/2} $ in Assumption \ref{as5}\ref{as5:itm1} is a reminiscent of the Berry--Essen theorem.
 The Edgeworth expansion of the sample means at zero implies that the approximation
error is $C N^{-1/2}$, where the universal constant $C$ depends on the moments of
the summand up to the third order \citep{hall1992bootstrap}. Thus, condition \ref{as5:itm1}   holds
for a broad range of setups including heteroskedastic errors $e_{it}$. For
instance, if the idiosyncratic error has the form $e_{it}=\sigma \left( g_{t}\right) \xi
_{it}$, where $g_{t}$ and $\xi _{it}$ are two independent sequences and $%
\left\{ \xi _{it}\right\} $ is an independent and identically distributed (i.i.d.) sequence across $i$, then the
condition is satisfied as long as both $\sigma \left( g_{t}\right) ^{3}$ and
$\mathbb{E} \left\vert \xi _{it}\right\vert ^{3}$ are bounded. Furthermore, it holds trivially if the conditional distribution of $h_t'\phi_0$ given $x_t$ and $g_t$ is symmetric around zero or more generally if its median is zero.
Assumption \ref{as5} ensures, among other things, that
for some function $\Psi(\cdot)$ such that $\mathbb E|\Psi(x_t, g_t)|<\infty$,
	$$
	\mathbb E\left[\Psi(x_t, g_t)\left(1\{h_t'\phi_0\leq 0\}-1\{\mathcal Z_t\leq 0\}\right)\bigg{|}x_t,g_t\right]=O(N^{-1/2}).
	$$
Above all, because $h_t$ is a cross-sectional average multiplied by $\sqrt{N}$, this assumption can be verified by a   cross-sectional central limit theorem (CLT), if $\{e_{it}: i\leq N\}$ satisfies some cross-sectional mixing condition.

In the next assumption, recall that, by the identification condition, we can write $\gamma=(1, \gamma_2)$, where $1$ is the first element of $\gamma$. Correspondingly, let $f_{2t}$ and $\widehat f_{2t}$ be the subvectors of $f_t$ and $\widehat f_t$, excluding their first elements.  Also, let
$ u_t:=g_t'\phi_0 = f_t'\gamma_0 $ and $ \breve{g}_t:=g_t + h_t/\sqrt{N} $.


 \begin{assum}\label{as8}
 	There exist positive constants $c$, $c_0$, $M_0$, and $M$ such that  the following hold  \emph{a.s.}.
 	\begin{enumerate}[label=(\roman*)]
 		\item \label{as8:itm1}  $\inf_{|u|<c}p_{\widehat f_t'\gamma_0| \widehat f_{2t}, x_t } (u) \geq c_0$  and $ \sup_{|f|_2<M_0}p_{f_{2t}|h_t}(f)<M$.

 		\item \label{as8:itm2} $\inf_{|u|<c}p_{  u_t|  f_{2t}, h_t, x_t } (u) \geq c_0 $.
 		For all $|u_1| < c, |u_2| < c$,
 		$$ |p_{u_t|h_t'\phi_0,  f_{2t},x_{t} }(u_1)- p_{u_t| h_t'\phi_0,   f_{2t},x_{t} }(u_2)|\leq M|u_1-u_2|.$$

 		\item\label{as8:itm3}
 		$\inf_{|r|_2=1} \mathbb E \left[ |f_{2t}'r|^k  1\{|f_{2t}|_2< M_0\} \right] \geq c_0$ for $ k=1,2. $

 		\item\label{as8:itm4}
 		$\sup_{|r|_2=1}\sup_{|u|<c}p_{g_t'r|h_t}(u) \leq M$.

 		\item\label{as8:itm6}
 		Each of $ \inf_{\phi \in \Phi_T} |
 		g_t'\phi|$, $\inf_{\phi \in \Phi_T}|\breve{g}_t'\phi|$,      $ \sup_{\phi \in \Phi_T} |
 		h_t'\phi|$, and $ \breve{g}_t'\phi_0$ has a density function bounded and continuous at zero, with
 		$\Phi_T$ given in \eqref{def-Phi-T}.

 		\item \label{as8:itm7}
 		$   \mathbb E [ (  x_{t}^{\prime }d_{0} ) ^{2}|g_{t}, h_t ] $ is bounded above by $ M_0 $ and below by $ c_0 $.

 		\item \label{as8:itm_AD} For any $ s$ and $ w $ that are linearly independent of $ \phi_{0} $, $ p_{\breve{g}_{t}'\phi_0|\breve{g}_{t}'s, \breve{g}_t'w}(u) $
		and $\mathbb E( (\varepsilon_t x_t'd_0)^{2}|\breve{g}_{t}'\phi_0=u,\breve{g}_{t}'s, \breve{g}_t'w ) $
		are continuously differentiable at $u=0$ with bounded
 		derivatives. Furthermore, $\mathbb E( (\varepsilon_t x_t'd_0)^{4}\left\vert
 		\breve{g}_{t}\right\vert _{2}^{2}|\breve{g}_{t}'\phi_0 ) \leq M $.

 	\end{enumerate}
 \end{assum}


These conditions control the local characteristics of the centered least-squares criterion function near the true parameter value. As the model is perturbed by the error in the estimated factors,  the centered criterion is a drifting sequence $ \widehat{f}_t $.
	Its leading term changes depending on whether $N=O(T^{2-4\varphi})$ or not. The lower bounds in the above assumption are part of rank conditions that ensure that the leading terms are well defined. As a result, it entails a phase transition on the distribution of $\widehat\gamma$. Because they are rather technical, we provide a more detailed discussion on Assumption \ref{as8} in Online Appendix \ref{sec:e:discuss}.

\subsection{Rates of Convergence}\label{rates:est:factors}

The following theorem presents the rates of convergence for the estimators.

\begin{thm} \label{thm:rate}
Let Assumptions  \ref{as9}--\ref{as8} hold.  Suppose $T=O(N)$. Then
\begin{align*}
|\widehat\alpha-\alpha_0|_2=O_P\left(\frac{1}{\sqrt{T}}\right)
\; \text{ and } \;
  |\widehat\gamma-\gamma_0|_2 =O_P\left(\frac{1}{T^{1-2\varphi}}+\frac{1}{\left( NT^{1-2\varphi }\right) ^{1/3}}\right).
\end{align*}
 \end{thm}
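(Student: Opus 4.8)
The plan is to establish the two rates separately, leveraging the consistency of $(\widehat\alpha, \widehat\gamma)$ (which follows from Assumption~\ref{as9}\ref{as9:itm1} and the argument underlying Theorem~\ref{asdist-alpha-gamma}, applied with the estimated factors $\widetilde f_t$ in place of $f_t$) and the key decomposition $\widetilde f_t = \widehat f_t + r_t$ with $\widehat f_t = H_T'(g_t + N^{-1/2}h_t)$ in \eqref{Bai-expansion}. First I would argue that the remainder term $r_t$ is uniformly negligible in the relevant sense: using the exponential-tail Assumption (the one on $\Xi$) and the uniform approximation bound quoted after \eqref{e:6.7:h}, I would show $\max_{t\le T}\mathbb P(|r_t|_2 > C(\log T)^c/T)$ is $o(T^{-1})$ or at least negligible relative to the rates claimed, so that in the criterion function $\mathbb S_T$ one may replace $1\{\widetilde f_t'\gamma>0\}$ by $1\{\widehat f_t'\gamma>0\}$ up to terms that do not affect the rate. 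This reduces the problem to analyzing the least-squares objective built from $\widehat f_t$, which differs from the ``known factor'' objective only through the $N^{-1/2}h_t$ perturbation of the factors (and the rotation $H_T$, which is harmless because $\gamma_0 = H_T^{-1}\phi_0$ absorbs it).

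For the $\widehat\alpha$ rate, the plan is standard: condition on $\widehat\gamma$, write the profiled estimator $\widehat\alpha = (\frac1T\sum Z_t(\widehat\gamma)Z_t(\widehat\gamma)')^{-1}\frac1T\sum Z_t(\widehat\gamma)y_t$ with $Z_t$ now formed from $\widetilde f_t$, and expand around $\alpha_0$ using $y_t = Z_t(\gamma_0)'\alpha_0 + \varepsilon_t$. The cross terms involve $\frac1T\sum [Z_t(\widehat\gamma) - Z_t(\gamma_0)]\alpha_0$ and $\frac1T\sum Z_t(\widehat\gamma)\varepsilon_t$; the first is controlled because $\gamma\mapsto \mathbb E|Z_t(\gamma)-Z_t(\gamma_0)|$ is Lipschitz-like in a neighborhood (density boundedness from Assumptions~\ref{as9}\ref{as9:itm11} and \ref{as8}) times $|\widehat\gamma - \gamma_0|_2$, which combined with $\delta_0 = O(T^{-\varphi})$ and any crude consistency rate for $\widehat\gamma$ already gives a remainder $o_P(T^{-1/2})$; the martingale-difference structure (Assumption~\ref{A-mixing}\ref{A-mixing:itm2}, inherited through Assumption~\ref{as9}) plus the eigenvalue lower bound give $\frac1T\sum Z_t(\gamma_0)\varepsilon_t = O_P(T^{-1/2})$. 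One must also check that the perturbation $N^{-1/2}h_t$ entering $Z_t$ contributes only $O_P(N^{-1/2}) = O_P(T^{-1/2})$ under $T=O(N)$ via Assumption~\ref{as5}. Hence $|\widehat\alpha - \alpha_0|_2 = O_P(T^{-1/2})$.

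For the $\widehat\gamma$ rate, the plan is a rate theorem of the van der Vaart--Wellner type: show that the centered criterion $\mathbb S_T(\widehat\alpha, \gamma) - \mathbb S_T(\widehat\alpha, \gamma_0)$ has, for $|\gamma - \gamma_0|_2 \le \eta$ and $\eta$ small, a deterministic lower bound of order $T^{-2\varphi}|\gamma-\gamma_0|_2 \cdot (|\gamma-\gamma_0|_2 \wedge \text{something}) $ — more precisely, combining the ``known-factor'' drift $\mathbb E[(x_t'd_0)^2 |f_t'g| p_{u|f_2}(0)] \asymp T^{-2\varphi}|\gamma-\gamma_0|_2$ (linear part, from Assumptions~\ref{A:diminishing dt}, \ref{as8}\ref{as8:itm3} with $k=1$) with the extra quadratic curvature $\asymp T^{-2\varphi} \cdot \sqrt N |\gamma-\gamma_0|_2^2$ coming from the $N^{-1/2}h_t$ smoothing of the indicator (this is where Assumption~\ref{as5}\ref{as5:itm2}, the Gaussian approximation for $h_t'\phi_0$, and Assumption~\ref{as8}\ref{as8:itm3} with $k=2$ enter), against a stochastic modulus-of-continuity bound for the empirical process part, of order $\sqrt{T^{-2\varphi} |\gamma-\gamma_0|_2 / T}$ plus the sub-Gaussian increments from the factor-estimation noise. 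Balancing the deterministic curvature against the empirical fluctuation yields the two candidate rates $T^{1-2\varphi}$ (linear-drift regime, dominating when $\sqrt N$ is large, i.e.\ $N \gtrsim T^{2-4\varphi}$) and $(NT^{1-2\varphi})^{1/3}$ (quadratic-curvature regime), whence $|\widehat\gamma-\gamma_0|_2 = O_P(T^{-(1-2\varphi)} + (NT^{1-2\varphi})^{-1/3})$.

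The main obstacle, I expect, is the chaining/maximal-inequality bookkeeping for the empirical process indexed by $\gamma$ when the relevant function class has increments that are \emph{doubly} small — both in $|\gamma - \gamma_0|_2$ (through the usual thin-shell structure near the threshold) and through the $N^{-1/2}$-scale factor perturbation — so that the variance of a typical increment is of order $T^{-2\varphi}|\gamma-\gamma_0|_2 \wedge (\sqrt N\, T^{-2\varphi}|\gamma-\gamma_0|_2^2)$, and one must show the empirical process does not dominate the drift on the critical shell; the exponential-tail assumptions on $\Xi$ are precisely what make the sub-Gaussian increment bounds (and hence the chaining) go through uniformly over $t$, but translating those pointwise-in-$t$ tail bounds into a uniform-over-$\gamma$ bound on $\max_t \mathbb P(|r_t|_2 > \cdots)$ and then into a bound on the stochastic part of the criterion is delicate and is the technical heart of the proof; this is also where the condition $T=O(N)$ is genuinely used to keep the remainder controlled. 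The remaining steps — the deterministic expansion of the expected criterion with its phase-transition-dependent leading term, and the verification that the rank conditions in Assumption~\ref{as8} make each leading term nondegenerate — are, I anticipate, routine given the density and moment bounds assumed, and are deferred to the online appendix.
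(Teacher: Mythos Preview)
Your overall architecture is right—replace $\widetilde f_t$ by $\widehat f_t$ via the remainder bound, then run a rate argument by lower-bounding the deterministic drift and upper-bounding the stochastic part—and the heuristic balancing you sketch does produce the two correct rates. But there is a genuine gap in your plan for $\widehat\gamma$: you treat the linear drift $T^{-2\varphi}|\gamma-\gamma_0|_2$ and the quadratic drift $T^{-2\varphi}\sqrt N\,|\gamma-\gamma_0|_2^2$ as if they were simultaneously available and could be ``combined'' in a single rate-of-convergence step. In the paper's proof they are not. The quadratic lower bound (Lemma~\ref{l:generic}) is proved only under the a~priori localization $\sqrt N\,|\phi-\phi_0|_2\le L$; outside that ball the $h_t/\sqrt N$-smoothing expansion that generates the curvature breaks down. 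Conversely, the linear (kink) lower bound for the estimated-factor drift (Lemma~\ref{l.2}) is $\mathds G_1(\phi)\ge cT^{-2\varphi}|\phi-\phi_0|_2 - CT^{-2\varphi}N^{-1/2}$: it carries a bias term of order $T^{-2\varphi}N^{-1/2}$, so by itself it cannot deliver anything sharper than $|\widehat\phi-\phi_0|_2=O_P(N^{-1/2})$ in the regime $\sqrt N=o(T^{1-2\varphi})$. The paper therefore proceeds in \emph{two stages}: first use the linear bound together with the matched empirical-process bound $b_{NT}$ of Lemma~\ref{l.1} to obtain the preliminary rate $|\widehat\phi-\phi_0|_2=O_P(T^{-(1-2\varphi)}+N^{-1/2})$ (Proposition~\ref{p:pre}); then, conditioning on $|\widehat\phi-\phi_0|_2\le C/\sqrt N$, invoke the quadratic lower bound and the sharper stochastic bound $a_{NT}$ (Lemma~\ref{l.1}, via the peeling in Lemma~\ref{Lem:rate_gm_1}) to upgrade to $(NT^{1-2\varphi})^{-1/3}$ (Proposition~\ref{rate:est}). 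Your one-shot argument would have to either prove the quadratic bound globally, which it is not, or justify restricting to the $N^{-1/2}$-ball before any rate has been established—and the proposal does neither.

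A related, smaller point: you identify the linear drift with the known-factor quantity $\mathbb E[(x_t'd_0)^2|f_t'g|\,p_{u|f_2}(0)]$. In the estimated-factor problem the relevant drift is $\mathds G_1(\phi)=\mathbb E\mathds R_2(\phi)+\mathbb E\mathds C_3(\delta_0,\phi)$, built from the perturbed $\breve g_t=g_t+h_t/\sqrt N$, and it is precisely the discrepancy between $\breve g_t$ and $g_t$ that produces both the $-CT^{-2\varphi}N^{-1/2}$ bias in the linear bound and the quadratic curvature in the local bound. Your write-up reads as if the $N^{-1/2}h_t$ perturbation primarily enters through the stochastic part; in fact the cube-root rate comes from its effect on the \emph{mean} of the criterion, and that is exactly where Assumptions~\ref{as5} and~\ref{as8} are used in the delicate calculation of Lemma~\ref{l:generic}.
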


While the convergence rate for $\widehat\alpha$ is standard, 
the convergence rate of $\widehat{\gamma}$ merits further explanation.
First of all, when $N$ is relatively large so that  $T^{2-4\varphi }=o\left( N\right)$, $\widehat{\gamma}-\gamma_0$ converges  at a super-consistent rate of ${T^{-(1-2\varphi)}}$.
Contrary to this case, when  $N=o(T^{2-4\varphi})$,  the estimated threshold parameter  has a cube root rate, which is similar to that of the maximum score type estimators \citep{kim1990}.
Therefore,   as  $\sqrt{N}/ T^{1-2\varphi}$ varies in $[0,\infty]$,  the rate of convergence  varies between the super-consistency rate of the usual threshold models to the cube root rate of the maximum score type estimators.

The convergence rates exhibit a continuous transition from one to the other.
To explain this  transition phenomenon, we can show that uniformly in $(\alpha,\gamma)$, the objective function has the following expansion:    there  are  functions   $ R_1(\cdot)$ and $ R_2(\cdot, \cdot)$   such that
$$
{\mathbb{S}}_{T}\left( \alpha , \gamma\right)- {\mathbb{S}}_{T}\left( \alpha_0 , \gamma_0 \right)
=  R_1(\gamma)  +   R_2(\alpha, \gamma),
$$
where $\gamma \mapsto R_1(\gamma)$ is a non-stochastic function, representing the ``mean" of the loss function, but is also highly non-smooth with respect to $\gamma$,
and $R_2(\alpha, \gamma)$ is the remaining stochastic part.
A key step   is to derive a sharp lower bound for $R_1(\gamma)$.
When $N$ is relatively large, the effect of estimating latent factors is negligible, and $R_1(\gamma)$ has a high degree of non-smoothness.  Similar to the usual threshold model, we have
$$
R_1(\gamma)\geq CT^{-2\varphi} |\gamma-\gamma_0|_2 -  O_P(T^{-1}).
$$
This lower bound leads to a super-consistency rate.  On the other hand, when $N$ is relatively small, there are  extra noises arising from the cross-sectional idiosyncratic errors when estimating the latent factors, which we call ``cross-sectional noises."  A remarkable feature of our model is that the cross-sectional noises help  {smooth} the  objective function in this case. As a result,
the behavior of  $R_1(\gamma) $ is  similar to that of the maximum score type estimators, where a quadratic lower bound can be derived:
$$
R_1(\gamma)\geq CT^{-2\varphi} \sqrt{N} |\gamma-\gamma_0|_2^2
-O_P(T^{-2\varphi }N^{-5/6}) .
$$
The quadratic lower bound, together with a larger error rate, then leads to a cube root rate type of convergence.
See  Online Appendix \ref{sec:roadmap}  for a detailed description of the roadmap of the proof.

\subsection{Consistency of Regime-Classification}

We introduce an  error rate in (in-sample)
regime-classification,
\[
\widehat{R}_{T}=\frac{1}{T}\sum_{t=1}^{T}\left\vert 1\left\{ \widetilde{f}%
_{t}^{\prime }\widehat{\gamma}>0\right\} -1\left\{ f_{t}^{\prime }\gamma
_{0}>0\right\} \right\vert.
\]
The uncertainty about the regime classification comes from either   $\widetilde f_t$ or  $\widehat{\gamma}$ or both.  We establish
its convergence rate in the following theorem.

\begin{thm} \label{thm:classification}
Let Assumptions  \ref{as9}--\ref{as8} hold.  Suppose $T=O(N)$. Then
\[
\widehat{R}_{T}=O_P\left( \left( NT^{1-2\varphi }\right)
^{-1/3}+T^{-1+2\varphi }+N^{-1/2}\right) .
\]
\end{thm}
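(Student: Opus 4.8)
The plan is to bound the classification error by how often the true index $u_t:=f_t'\gamma_0=g_t'\phi_0$ lands near zero. If $1\{\widetilde f_t'\widehat\gamma>0\}\neq 1\{f_t'\gamma_0>0\}$, then $\widetilde f_t'\widehat\gamma$ and $u_t$ have opposite signs, so $|u_t|\leq|\widetilde f_t'\widehat\gamma-u_t|$. Using the Bai-type expansion \eqref{Bai-expansion}, write $\widetilde f_t=f_t+N^{-1/2}H_T'h_t+r_t$, so that $\widetilde f_t'\widehat\gamma-u_t=f_t'(\widehat\gamma-\gamma_0)+N^{-1/2}(H_T'h_t)'\widehat\gamma+r_t'\widehat\gamma$; since $1\{|u_t|\leq A+B+C\}\leq 1\{|u_t|\leq 3A\}+1\{|u_t|\leq 3B\}+1\{|u_t|\leq 3C\}$ for nonnegative $A,B,C$,
\begin{align*}
\widehat R_T\leq{}& \frac{1}{T}\sum_{t=1}^T 1\{|u_t|\leq 3|f_t'(\widehat\gamma-\gamma_0)|\}+\frac{1}{T}\sum_{t=1}^T 1\{|u_t|\leq 3N^{-1/2}|(H_T'h_t)'\widehat\gamma|\}\\
&{}+\frac{1}{T}\sum_{t=1}^T 1\{|u_t|\leq 3|r_t'\widehat\gamma|\}.
\end{align*}
Denote the three terms on the right by $\mathrm{I},\mathrm{II},\mathrm{III}$, and set $\kappa_T:=T^{-(1-2\varphi)}+(NT^{1-2\varphi})^{-1/3}$. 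It then suffices to show $\mathrm{I}=O_P(\kappa_T)$, $\mathrm{II}=O_P(N^{-1/2})$ and $\mathrm{III}=o_P(\kappa_T)$, which reproduces exactly the three terms in the statement. A recurring device: the conditional density conditions in Assumptions \ref{as5}--\ref{as8} imply, a.s., $\mathbb P(|u_t|\leq \epsilon\mid \mathcal C)\leq C\epsilon$ for all $\epsilon\geq0$, where $\mathcal C$ is $f_{2t}$ or $h_t$.

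\emph{Term $\mathrm{I}$.} By Theorem \ref{thm:rate}, $|\widehat\gamma-\gamma_0|_2=O_P(\kappa_T)$; so for each $\eta>0$ there is $M$ with $\mathbb P(|\widehat\gamma-\gamma_0|_2>M\kappa_T)<\eta$ for $T$ large. On the complementary event, since $\widehat\gamma-\gamma_0$ has first coordinate zero (both $\widehat\gamma$ and $\gamma_0$ have first coordinate $1$), only $f_{2t}$ enters and $|f_t'(\widehat\gamma-\gamma_0)|\leq M\kappa_T|f_{2t}|_2$, so $\mathrm{I}\leq\frac{1}{T}\sum_{t=1}^T 1\{|u_t|\leq 3M\kappa_T|f_{2t}|_2\}$, a bound free of $\widehat\gamma$. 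Taking expectations and conditioning on $f_{2t}$, the density device gives $\mathbb E[\cdot]\leq C M\kappa_T\,\mathbb E|f_{2t}|_2=O(M\kappa_T)$ (finite moments of $f_{2t}$ follow from the exponential-tail assumption on $g_{1t}$ and $\|H_T\|\leq C$); Markov's inequality together with $\eta\downarrow0$ yields $\mathrm{I}=O_P(\kappa_T)$.

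\emph{Terms $\mathrm{II}$ and $\mathrm{III}$.} For $\mathrm{II}$: since $\widehat\gamma$ lies in the compact set $\Gamma$, $|\widehat\gamma|_2\leq C$, so $\mathrm{II}\leq\frac{1}{T}\sum_{t=1}^T 1\{|u_t|\leq 3CN^{-1/2}|H_T'h_t|_2\}$, again free of $\widehat\gamma$; conditioning on $h_t$ and using that $u_t=g_t'\phi_0$ has conditional density bounded near zero (Assumption \ref{as8}\ref{as8:itm4}), the expectation is $\leq C N^{-1/2}\mathbb E|H_T'h_t|_2$, and $\mathbb E|H_T'h_t|_2\leq C$ because $h_{1t}=(\tfrac1N\Lambda'\Lambda)^{-1}\tfrac1{\sqrt N}\Lambda'e_t$ has uniformly bounded second moment under Assumption \ref{assmp:factor} and $\|H_T\|\leq C$; hence $\mathrm{II}=O_P(N^{-1/2})$. (It is essential that we average over $t$ rather than take a maximum, which would cost a factor $(\log T)^c$.) For $\mathrm{III}$: $|r_t'\widehat\gamma|\leq C|r_t|_2$, and the sharp remainder bound established in proving \eqref{Bai-expansion} gives $\max_{t\leq T}\mathbb P(|r_t|_2>C'(\log T)^cT^{-1})=o(1)$ in the averaged sense used there; splitting the indicator at the level $3CC'(\log T)^cT^{-1}$, the in-range part contributes $O((\log T)^cT^{-1})$ in expectation by the density device and the tail part is asymptotically negligible, so $\mathrm{III}=O_P((\log T)^cT^{-1})=o_P(\kappa_T)$ since $\varphi<1/2$. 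Combining the three bounds proves the theorem.

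\emph{Main obstacle.} The delicate point is handling the data-dependence of $\widehat\gamma$ together with the non-smoothness of the indicators: one cannot integrate directly, so in each term we must first localize on a high-probability event (or exploit the compactness of $\Gamma$), replace the random object by a deterministic envelope---$M\kappa_T|f_{2t}|_2$ in $\mathrm{I}$, $CN^{-1/2}|H_T'h_t|_2$ in $\mathrm{II}$---and only then take expectations. The second delicate ingredient is that the relevant conditional densities of $u_t$, given $f_{2t}$ and given $h_t$, must be bounded near zero, which is precisely the content of the density conditions collected in Assumptions \ref{as5}--\ref{as8}; and controlling $r_t$ uniformly in $t$ rests on the exponential-tail assumption behind \eqref{Bai-expansion}. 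There is no circularity, since Theorem \ref{thm:rate} is proved without reference to $\widehat R_T$.
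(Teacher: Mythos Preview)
Your approach is correct and closely mirrors the paper's, though the framing differs. The paper decomposes via a triangle inequality on the indicator functions themselves, inserting $\widehat f_t'\gamma_0$ and $\widehat f_t'\widehat\gamma$ as intermediates, so that the three pieces correspond to (a) $\widetilde f_t$ vs.\ $\widehat f_t$, (b) $\widehat\gamma$ vs.\ $\gamma_0$ evaluated at $\widehat f_t$, and (c) the $h_t/\sqrt N$ perturbation; you instead observe that a misclassification forces the margin $|u_t|$ below the total perturbation $\widetilde f_t'\widehat\gamma-u_t$ and then split that perturbation additively. The three resulting terms match one for one. The only substantive difference is in the piece driven by $\widehat\gamma$: the paper controls both the mean and the empirical-process fluctuation via the maximal inequality (Lemma~\ref{Lem:modul1}), whereas you localize $\widehat\gamma$ to a $\kappa_T$-ball, pass to the $\widehat\gamma$-free envelope $3M\kappa_T|f_{2t}|_2$, and apply Markov. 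Your route is slightly more elementary; the paper's recycles machinery already in place for the rate proof. One caveat worth tightening: your ``condition on $f_{2t}$'' step is delicate because $f_{2t}$ is a sub-vector of $H_T'g_t$ and $H_T$ is sample-dependent, so the conditioning variable is not purely a function of $(g_t,h_t)$. The cleanest fix is to first restrict to the high-probability event $\|H_T\|\le C$ and bound $|f_{2t}|_2\le C|g_t|_2$, or equivalently work in the $(\breve g_t,\phi)$ parametrization as the paper does; the paper's own proof glosses over the analogous point with ``proceeding similarly.''
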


This is a useful corollary of the derivation of the rates of convergence for the
threshold estimator. We expect a good performance of our regime classification rule even with a moderate size of $T$. 

 \subsection{Asymptotic Distribution}\label{asymp:est:factors}

To describe the asymptotic distribution, we introduce additional notation.
Let $V_T$ denote the $K \times K$ diagonal matrix whose elements are the $K$ largest eigenvalues of $ \mathcal Y \mathcal Y  '/{NT}$.
Define
\begin{align}
 \widetilde H_T':=V_T^{-1}   \frac{1}{T}\sum_{t=1}^T\widetilde f_{1t} g_{1t}' \frac{1}{N}\Lambda'\Lambda,\quad
 H_T:=\diag(\widetilde H_T, 1),
  \label{Bai-type-expansion}
\end{align}
and
$ H:=\plim_{T, N\rightarrow \infty }H_{T} $, which is well defined, following Bai \citep{bai03}.
Let
\begin{equation*}
\omega:=\lim_{N,T\rightarrow \infty }\frac{\sqrt{N}}{T^{1-2\varphi }}\in \lbrack
0,\infty ],\quad
\zeta_\omega :=\max\{\omega, \omega^{1/3}\},\quad \text{and} \quad
M_\omega:=\max\{1, \omega^{-1/3}\} .
\end{equation*}
Define, for $ u_t=f_t'\gamma_0$,
\begin{align*}
A(\omega,g)&:= M_\omega   \mathbb E \left[(x_td_0)^2\left(\left|    f_t'g + \zeta_\omega^{-1}  \mathcal Z_t \right |   -
  \left | \zeta_\omega^{-1} \mathcal Z_t   \right | \right) \bigg{|} u_t=0\right]   p_{u_t}(0)  \cr
\end{align*}%
for $\omega\in \left( 0,\infty\right] $, with the convention that $1/\omega=0$ for $
\omega=\infty $, and%
\begin{align*}
A(0, g)&:=  \mathbb{E}\left[(x_t^{\prime }d_0)^2 (  f_t'g)^2\bigg{|}u_t=0, \mathcal Z_t=0\right]p_{u_t, \mathcal Z_t}(0,0)
\end{align*}%
for $\omega=0 $.
Recall $  Z_{t}(\gamma) :=(x_{t}^{\prime },x_{t}^{\prime }1\{f_{t}^{\prime
}\gamma>0\})^{\prime }$.




\begin{thm}
\label{thm:AD with Estimated f}
Let Assumptions  \ref{as9}--\ref{as8} hold.  Suppose $T=O(N)$.
Let $\mathcal{G}:= \{0\} \times \mathbb{R}^{K} $.
In addition, let $W$ denote the same Gaussian process as in Theorem \ref{asdist-alpha-gamma}. 
Then,  as $N,T\rightarrow \infty $, we have
\begin{align*}
& \sqrt{T}(\widehat{\alpha }-\alpha _{0})\overset{d}{\longrightarrow }%
\mathcal{N}\left( 0,\left( \mathbb{E}Z_t(\gamma_0)Z_t(\gamma_0)^{\prime }\right) ^{-1}\mathbb{E}\left( Z_t(\gamma_0)Z_t(\gamma_0)^{\prime }\varepsilon _{t}^{2}\right) \left( \mathbb{E}Z_t(\gamma_0)Z_t(\gamma_0)^{\prime }\right) ^{-1}\right) , \\
& \left( \left( NT^{1-2\varphi }\right) ^{1/3}\wedge T^{1-2\varphi }\right)
\left( \widehat{\gamma }-\gamma _{0}\right)\overset{d}{\longrightarrow }%
\argmin_{g\in \mathcal{G}}A\left( \omega, g\right) +2W\left( g\right) ,
\end{align*}%
and $\sqrt{T}(\widehat{\alpha }-\alpha _{0})$ and $( (
NT^{1-2\varphi }) ^{1/3}\wedge T^{1-2\varphi }) ( \widehat{\gamma }-\gamma _{0}) $ are asymptotically independent. Moreover,
$A(0, g)=\lim_{w\to0}A( w, g).$
\end{thm}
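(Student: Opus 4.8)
\emph{Proof outline.}\quad The plan is to treat the two limits separately and, in each case, to reduce the estimated-factor problem to the known-factor analysis of Theorem~\ref{asdist-alpha-gamma} after controlling the cost of replacing $f_t$ by $\widetilde f_t$. Throughout I write $\widetilde f_t=\widehat f_t+r_t$ and $\widehat f_t=H_T'(g_t+N^{-1/2}h_t)$ as in \eqref{Bai-expansion}, set $\rho_T:=(NT^{1-2\varphi})^{1/3}\wedge T^{1-2\varphi}$, and use freely that $\rho_T(\widehat\gamma-\gamma_0)=O_P(1)$, $\sqrt T(\widehat\alpha-\alpha_0)=O_P(1)$ (Theorem~\ref{thm:rate}), $\widehat R_T=o_P(1)$ (Theorem~\ref{thm:classification}), and the bound $\max_t\mathbb P(|r_t|>C(\log T)^cT^{-1})=o(1)$ implied by the exponential-tail assumption (stated after \eqref{Bai-expansion}). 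For $\widehat\alpha$: given $\widehat\gamma$, the estimator is the explicit least-squares projection on $Z_t(\widehat\gamma)$ built from $\widetilde f_t$, so it suffices to show $\tfrac1T\sum_tZ_t(\widehat\gamma)Z_t(\widehat\gamma)'\to\mathbb EZ_t(\gamma_0)Z_t(\gamma_0)'$ and $\tfrac1T\sum_tZ_t(\widehat\gamma)Z_t(\widehat\gamma)'\varepsilon_t^2\to\mathbb E(Z_t(\gamma_0)Z_t(\gamma_0)'\varepsilon_t^2)$ in probability (from $\widehat R_T=o_P(1)$, the eighth-moment bounds of Assumption~\ref{as9}, and a law of large numbers for $\rho$-mixing sequences), together with the oracle reduction $\tfrac1{\sqrt T}\sum_t\{Z_t(\widehat\gamma)-Z_t(\gamma_0)\}\varepsilon_t=o_P(1)$, where the second $Z_t$ uses the true $f_t$. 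The latter splits into the part due to $\widehat\gamma\ne\gamma_0$, which is $o_P(1)$ by stochastic equicontinuity of the martingale-difference empirical process indexed by the VC class $\{1\{f'\gamma>0\}:\gamma\in\Gamma\}$ together with $\widehat\gamma\to\gamma_0$, and the part due to $\widetilde f_t\ne f_t$, where the indicator changes only on a set of probability $O(N^{-1/2})$ plus a negligible $r_t$-term, over which the resulting centered sum contributes $O_P(N^{-1/4})=o_P(1)$ after the $\tfrac1{\sqrt T}$ scaling (using $N\to\infty$, as $T=O(N)$). A martingale CLT for $\tfrac1{\sqrt T}\sum_tZ_t(\gamma_0)\varepsilon_t$ then gives the stated sandwich variance.

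For $\widehat\gamma$ I would reparametrize $\gamma=\gamma_0+\rho_T^{-1}g$, $g\in\mathcal G:=\{0\}\times\mathbb R^K$, confined to a compact set by the rate, and study the rescaled profiled criterion $\nu_T(g):=c_T\{\mathbb S_T^{*}(\gamma_0+\rho_T^{-1}g)-\mathbb S_T^{*}(\gamma_0)\}$, with $\mathbb S_T^{*}(\gamma):=\min_\alpha\mathbb S_T(\alpha,\gamma)$ evaluated at $\widetilde f_t$ and $c_T$ the normalization equal to $T$ when $\omega\ge1$ and to $T^{2\varphi}\rho_T^2/\sqrt N$ when $\omega<1$ (the two agree at $\omega=1$). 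As in Theorem~\ref{asdist-alpha-gamma}, profiling out $\alpha$ contributes a term of smaller order near $\gamma_0$, and the uniform approximation bound lets me replace $\widetilde f_t$ by $\widehat f_t$; using $\widehat f_t'\gamma_0=u_t+N^{-1/2}h_t'\phi_0$ and $\widehat f_t'(\gamma_0+\rho_T^{-1}g)=u_t+N^{-1/2}h_t'\phi_0+\rho_T^{-1}f_t'g+O_P(\rho_T^{-1}N^{-1/2})$, I decompose $\nu_T(g)=\mathcal D_T(g)+\mathcal W_T(g)$ into the conditional-mean (``drift'') part, quadratic in $x_t'\delta_0$, and the martingale part $\mathcal W_T(g)=c_T\tfrac2T\sum_t\varepsilon_t x_t'\delta_0\{1\{\widehat f_t'(\gamma_0+\rho_T^{-1}g)>0\}-1\{\widehat f_t'\gamma_0>0\}\}$.

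The drift is the step I expect to be the main obstacle. Writing $a_t:=N^{-1/2}h_t'\phi_0$ and $b_t:=\rho_T^{-1}f_t'g$, the conditional expectation over $u_t$ near zero of the drift integrand equals $(x_t'd_0)^2\,p_{u_t}(0)(|a_t+b_t|-|a_t|)$ up to smaller terms (densities controlled by Assumption~\ref{as8}); integrating over $h_t$ and invoking Assumption~\ref{as5} to pass from the conditional law of $h_t'\phi_0$ to that of the Gaussian $\mathcal Z_t$ — with the $O(N^{-1/2})$ Berry--Esseen-type bound on its median in \ref{as5}\ref{as5:itm1} making the linear-in-$b_t$ term negligible precisely as $\omega\to0$ — one is led to
\begin{align*}
\mathcal D_T(g)=\big(c_TT^{-2\varphi}\rho_T^{-1}\big)\cdot\tfrac1T\sum_t(x_t'd_0)^2\,p_{u_t}(0)\;\mathbb E_{\mathcal Z_t}\!\Big[\,\big|\tfrac{\rho_T}{\sqrt N}\mathcal Z_t+f_t'g\big|-\big|\tfrac{\rho_T}{\sqrt N}\mathcal Z_t\big|\,\Big]+o_P(1).
\end{align*}
Since $c_TT^{-2\varphi}\rho_T^{-1}\to M_\omega$ and $\rho_T/\sqrt N\to\zeta_\omega^{-1}$ in both regimes for $c_T$, a law of large numbers for $\rho$-mixing arrays gives $\mathcal D_T(g)\to A(\omega,g)$; when $\omega=\infty$ the Gaussian shift vanishes and $A(\infty,g)=B(g)$ recovers Theorem~\ref{asdist-alpha-gamma}, while as $\omega\to0$ the shift $\zeta_\omega^{-1}\mathcal Z_t$ diverges and a second-order Taylor expansion of $z\mapsto|z+f_t'g|$ (the first-order term dropping because $\mathcal Z_t$ has mean zero) turns the bracket into $\zeta_\omega^{-1}(f_t'g)^2p_{\mathcal Z_t}(0)$, leaving $M_\omega\zeta_\omega(f_t'g)^2p_{\mathcal Z_t}(0)=(f_t'g)^2p_{\mathcal Z_t}(0)$ — i.e.\ $A(0,g)$ after integrating against the conditional law at $u_t=0$ (which also proves the ``Moreover'' claim, using the differentiability in \ref{as8}\ref{as8:itm_AD}). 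The technical heart is to make every density expansion uniform in $g$ on compacts and uniform in $t$, to absorb $r_t$ and the $O_P(\rho_T^{-1}N^{-1/2})$ cross term, and to verify that the two $c_T$-regimes meet continuously at $\omega=1$; this is where Assumptions~\ref{as5} and \ref{as8} are used in full.

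Finally, a direct covariance computation shows $\mathrm{Cov}(\mathcal W_T(g),\mathcal W_T(s))\to2\,\mathbb E[(\varepsilon_tx_t'd_0)^2(|f_t'g|+|f_t's|-|f_t'(g-s)|)p_{u_t|f_{2t}}(0)]$ — the $a_t$-shift cancels because the two indicators share the reference point $u_t+a_t$ — so by a triangular-array martingale CLT for finite-dimensional distributions plus stochastic equicontinuity over the manageable class of indicator differences, $\mathcal W_T\Rightarrow2W$ with $W$ the Gaussian process of Theorem~\ref{asdist-alpha-gamma}. Hence $\nu_T(\cdot)\Rightarrow A(\omega,\cdot)+2W(\cdot)$ uniformly on compacts; since $A(\omega,\cdot)$ is convex and coercive (strictly convex — a non-degenerate quadratic — at $\omega=0$ by the rank condition \ref{as8}\ref{as8:itm3} with $k=2$) and $W$ has a.s.\ continuous paths, the limit has an a.s.\ unique minimizer, and the argmax continuous-mapping theorem (e.g., \citealp{kim1990,seo2018local}) together with $\rho_T(\widehat\gamma-\gamma_0)=O_P(1)$ yields the claimed limit for $\widehat\gamma$. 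Asymptotic independence of $\sqrt T(\widehat\alpha-\alpha_0)$ and $\rho_T(\widehat\gamma-\gamma_0)$ follows as in Theorem~\ref{asdist-alpha-gamma}: joint finite-dimensional convergence holds by Cram\'er--Wold, and the global average $\tfrac1{\sqrt T}\sum_tZ_t(\gamma_0)\varepsilon_t$ is asymptotically uncorrelated with $\mathcal W_T$, which is supported on $\{u_t\approx0\}$, an event of vanishing probability.
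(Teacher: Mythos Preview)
Your high-level architecture matches the paper's: separate the criterion into an $\alpha$-part and a $\gamma$-part (the paper does this via Lemma~\ref{Lem-separation}), decompose the $\gamma$-part into a drift $\mathcal D_T$ and a martingale process $\mathcal W_T$, show $\mathcal D_T(g)\to A(\omega,g)$ via the $|a+b|-|a|$ representation, show $\mathcal W_T\Rightarrow 2W$, and finish with an argmax CMT. Your treatment of $\widehat\alpha$ and of the ``Moreover'' claim via the expansion $\mathbb E|Z+b|-\mathbb E|Z|\approx b^2 p_Z(0)$ are both along the same lines as the paper.

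There is, however, one genuine gap. You treat $f_t$ and $\gamma_0$ as if they were deterministic, but in the estimated-factor setting $f_t=H_T'g_t$ and $\gamma_0=H_T^{-1}\phi_0$ with $H_T$ random. The combination $f_t'\gamma_0=g_t'\phi_0$ is free of $H_T$, but $f_t'g=g_t'H_Tg$ is not --- so both your drift $\mathcal D_T(g)$ and your martingale process $\mathcal W_T(g)$ depend on $H_T$ through $b_t$. The paper handles this by reparametrizing to $\mathfrak g=H_Tg$: the processes $\breve{\mathbb C}_{1i}(\mathfrak g)$ and $\breve{\mathbb C}_{NT}(\mathfrak g)$ are then functions of the stationary array $\breve g_t=g_t+h_t/\sqrt N$ alone, weak convergence is established in $\mathfrak g$-space, and an \emph{extended} continuous-mapping theorem (Lemma~\ref{CMT-extension}, proved in the paper) is invoked to replace $H_Tg$ by $Hg$ using $H_T\to^P H$. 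Without this step, your ``law of large numbers for $\rho$-mixing arrays gives $\mathcal D_T(g)\to A(\omega,g)$'' and the fidi/tightness argument for $\mathcal W_T$ are not justified, because the summands are not stationary in $g$.

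A second point worth flagging: your key assertion that ``the conditional expectation over $u_t$ near zero of the drift integrand equals $(x_t'd_0)^2p_{u_t}(0)(|a_t+b_t|-|a_t|)$'' is correct but does \emph{not} follow from the indicator difference $|1\{\widehat f_t'\gamma>0\}-1\{\widehat f_t'\gamma_0>0\}|$ alone --- that term by itself integrates to $|b_t|$. The extra piece converting $|b_t|$ into $|a_t+b_t|-|a_t|$ comes from the cross term $2(1\{\widehat f_t'\gamma_0>0\}-1_t)(1\{\widehat f_t'\gamma>0\}-1\{\widehat f_t'\gamma_0>0\})$, which the paper calls $\widehat{\mathbb C}_3$; it appears because the residual at $(\alpha_0,\gamma_0)$ under $\widehat f_t$ is $\varepsilon_t-x_t'\delta_0(1\{\widehat f_t'\gamma_0>0\}-1_t)$, not $\varepsilon_t$. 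Your $\mathcal D_T$ does contain this cross term (since you put everything not multiplied by $\varepsilon_t$ into the drift), but the algebraic identity reducing the combined integrand to $|a+b|-|a|$ is the substantive content of the paper's Bias section (the $A_1$--$A_4$ decomposition in Lemma~\ref{l:generic} and the identity \eqref{e0.7add}), together with the symmetry step that swaps $1\{h_t'\phi_0\le0\}$ and $1\{h_t'\phi_0>0\}$ at cost $O(N^{-1/2})$ via Assumption~\ref{as5}\ref{as5:itm1}. You identify this correctly as ``the technical heart,'' but the derivation is considerably more involved than a single conditional-expectation step. Finally, your piecewise $c_T$ coincides asymptotically with the paper's unified $l_{NT}=\sqrt{r_{NT}T^{1+2\varphi}}$, so no separate continuity check at $\omega=1$ is needed.
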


It is worth noting that 
 $A\left( \omega, g\right) $ is continuous
everywhere, 
which implies
that the distribution of the argmin  of the limit processes $A\left(
\omega,g\right) +2W\left( g\right) $ is  also continuous in $\omega$ in virtue of
the argmax continuous mapping theorem [see e.g.,\citep{VW}].
Furthermore,
 the asymptotic distribution of $\widehat{\gamma}$ is well defined for any $%
\omega $ due to Lemma 2.6 of Kim and Pollard \citep{kim1990}.  Specifically, the argmin of the limit Gaussian process is $O_P\left(
1\right) $ since $A\left( \omega, g\right) $ is a deterministic function of order
at least $\left\vert g\right\vert $ for any $\omega$ while the variance of $%
W\left( g\right) $ grows at the rate of $\left\vert g\right\vert $ as $%
g\rightarrow \infty $. It also possesses a unique minimizer almost
surely.  

    In the literature, 	Bai and Ng \citep{BN06, BN08}  have shown that the oracle property (with regard to the estimation of the factors) holds for the linear regression if $T^{1/2}=o\left( N\right) $ and for the extremum estimation   if $T^{5/8}=o\left( N\right) $,  in the presence of  estimated factors. Thus, it appears that the oracle property demands a larger $N$ as the nonlinearity of the estimating equation rises. In view of this, we regard our condition, $T=O(N)$, as not too stringent because we need to deal with estimated factors inside the indicator functions.

\subsection{Phase Transition}


To demonstrate that our asymptotic results are sharp, we consider a special case that
$N = T^\kappa$ for $\kappa \geq 1$.
In this case, the asymptotic results can be depicted on  the $(\kappa, \varphi)$-space.

We categorize the results  of Theorem \ref{thm:AD with Estimated f} into three groups.
In all three cases, the estimators enjoy certain oracle properties.

\begin{itemize}
\item Strong oracle: $T^{2-4\varphi }=o\left( N\right)
$ or $ \omega = \infty $.
This is equivalent to $\kappa> 2-4\varphi$.
 The drift function $A(\infty, g)$  has a kink at $g=0$.  Intuitively, a bigger $N$ makes the
estimated factors  more precise. This  yields the oracle result for both $\widehat{\gamma }$ and $ \widehat{\alpha} $, and the same asymptotic distribution as in the known factor case.

\item Weak oracle: $N=o\left(
T^{2-4\varphi }\right) $ or $ \omega = 0 $.  This is equivalent to $\kappa< 2-4\varphi$. The drift function $A(0, g)$
 is approximately quadratic in $g$ near the origin.
Because it is harder to  identify the minimum when the function is
smooth than when it has a
kink at the minimum, this results in a non-oracle asymptotic distribution as well as a slower rate of convergence for $\widehat\gamma$ to $\left( NT^{1-2\varphi }\right) ^{-1/3}$. However, the asymptotic distribution for $\widehat \alpha$ are still the same as those when the unknown factors are observed. So  the oracle property for $ \widehat{\alpha} $ is preserved.

\item Semi-strong oracle: $N\asymp T^{2-4\varphi }$ or $\omega\in (0,\infty)$.  This is equivalent to $\kappa=2-4\varphi$. In this case, $A(\omega, g)$ has a continuous  transition between the two polar cases discussed above.  The effect of estimating factors is non-negligible for $ \widehat{\gamma} $ and yet the estimator enjoys the same rate of convergence. The estimator $ \widehat{\alpha} $ continues to achieve the oracle efficiency.
\end{itemize}

The phase transition occurs when   $\kappa=2-4\varphi$, which is the \emph{semi-strong oracle case} and the \emph{critical boundary} of the phase transition. Changes in the convergence rates and asymptotic distributions are continuous along the critical boundary.

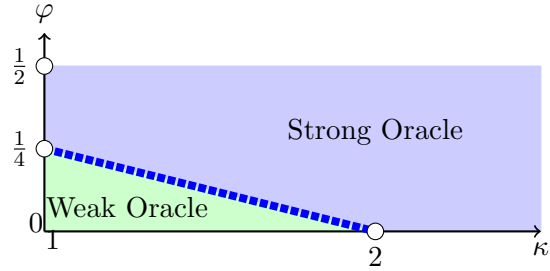
\begin{figure}[htb]
\caption{Phase Diagram}
\label{fig:phase}
\begin{center}
\begin{tikzpicture}[scale = 1.1]
\draw [green!20!white, fill=green!20!white] (0,0) -- (4,0) -- (0,1) -- cycle;
\draw [blue!20!white, fill=blue!20!white] (0,1) -- (0,2) -- (6,2) -- (6,0) -- (4,0);

\draw[thick, ->] (0,0)  -- (6,0) node[below] {$\kappa$};
\draw[thick, ->] (0,0)  -- (0,2.4) node[above] {$\varphi$};

\draw[blue,line width=3pt,densely dotted] (0,1) -- (4,0);

\draw[color=black, fill=white] (0,2) circle (.1);
\draw[color=black, fill=white] (0,1) circle (.1);
\draw[color=black, fill=white] (4,0) circle (.1);

\node[black] at (-0.3,2) {$\frac{1}{2}$};
\node[black] at (-0.3,1) {$\frac{1}{4}$};
\node[black] at (0.1,-0.1) {$1$};
\node[black] at (-0.1,0.1) {$0$};
\node[black] at (4,-0.3) {$2$};

\node[black] at (4,1.2) {Strong Oracle};
\node[black] at (1,0.3) {Weak Oracle};
\end{tikzpicture}
\end{center}
\end{figure}

Figure \ref{fig:phase} depicts a phase transition from the strong oracle phase to the weak oracle phase.
The critical boundary   $\kappa=2-4\varphi$  is shown by closely dotted points in the figure.
On  one hand, as $\varphi$ moves from 0 to $1/2$, the strong oracle region for $\kappa$ increases. That is, as the convergence rate for $\widehat \gamma$ becomes slower, the requirement for the minimal sample size $N$ for factor estimation becomes less stringent.
On the other hand,  as $\kappa$ becomes larger, the strong oracle region for $\varphi$ increases. In other words, as $N$ becomes larger,  the range of attainable oracle rates of convergence for $\widehat \gamma$ becomes wider. In this way, we provide a thorough characterization of the effect of estimated factors.

\subsection{Graphical Representation of $A\left( \omega, g\right)$}

\begin{figure}[htbp]
	\caption{An Example of $A\left( \omega, g\right)$}
	\label{fig-Akg}
	\begin{center}
		\graphicspath{ {plot/} }
		\includegraphics[scale=0.25]{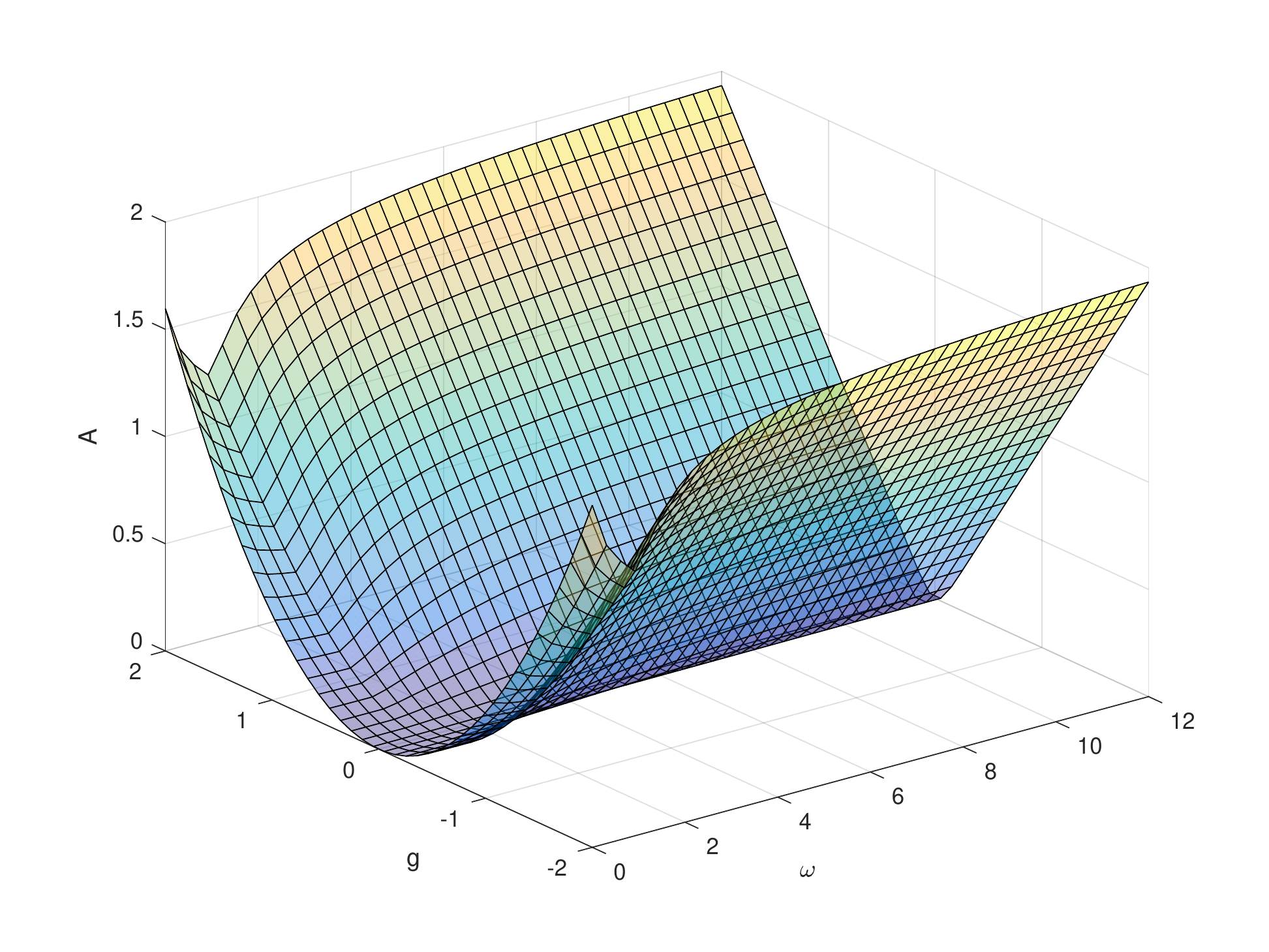}
		\includegraphics[scale=0.25]{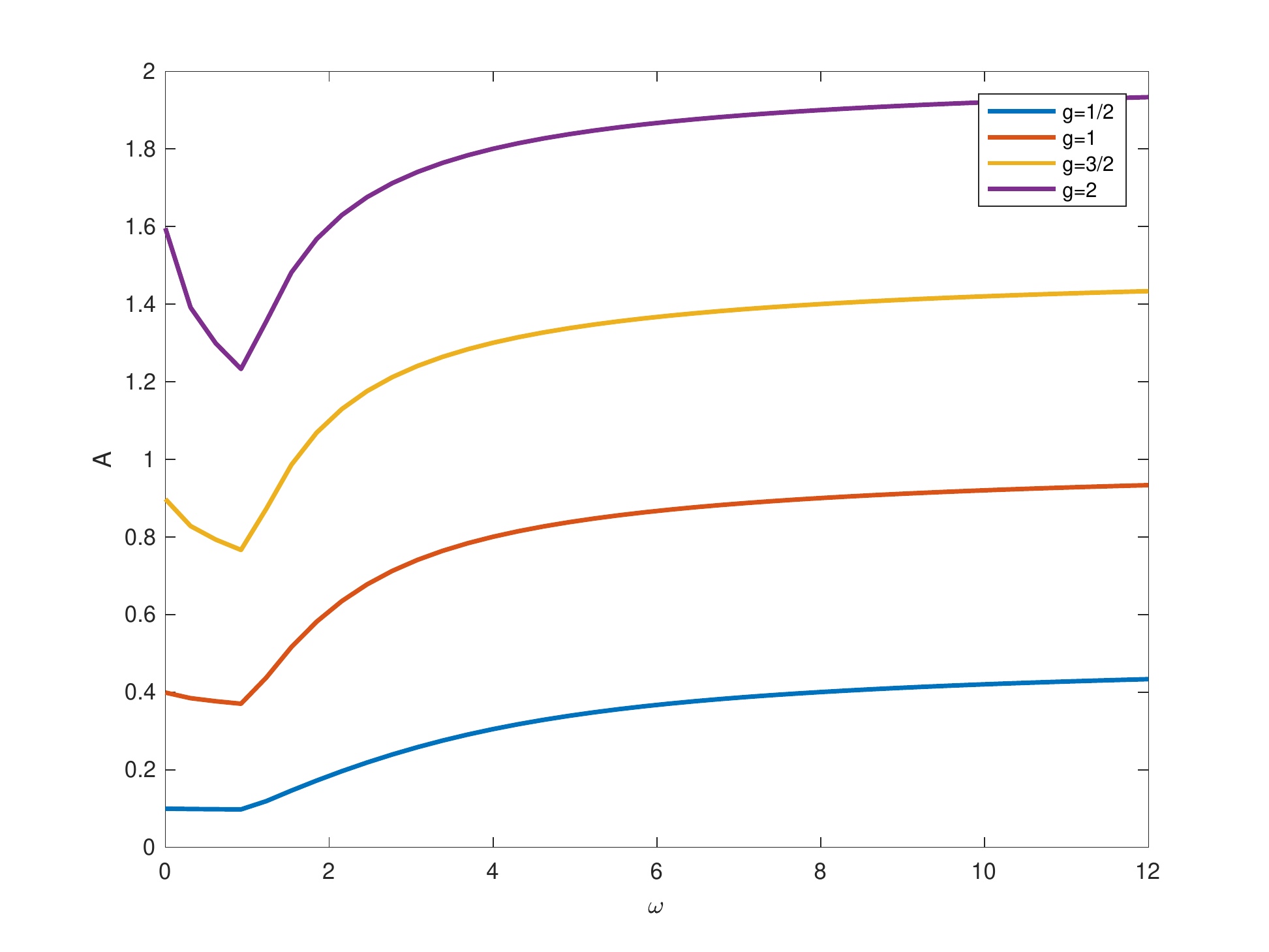}
		\includegraphics[scale=0.25]{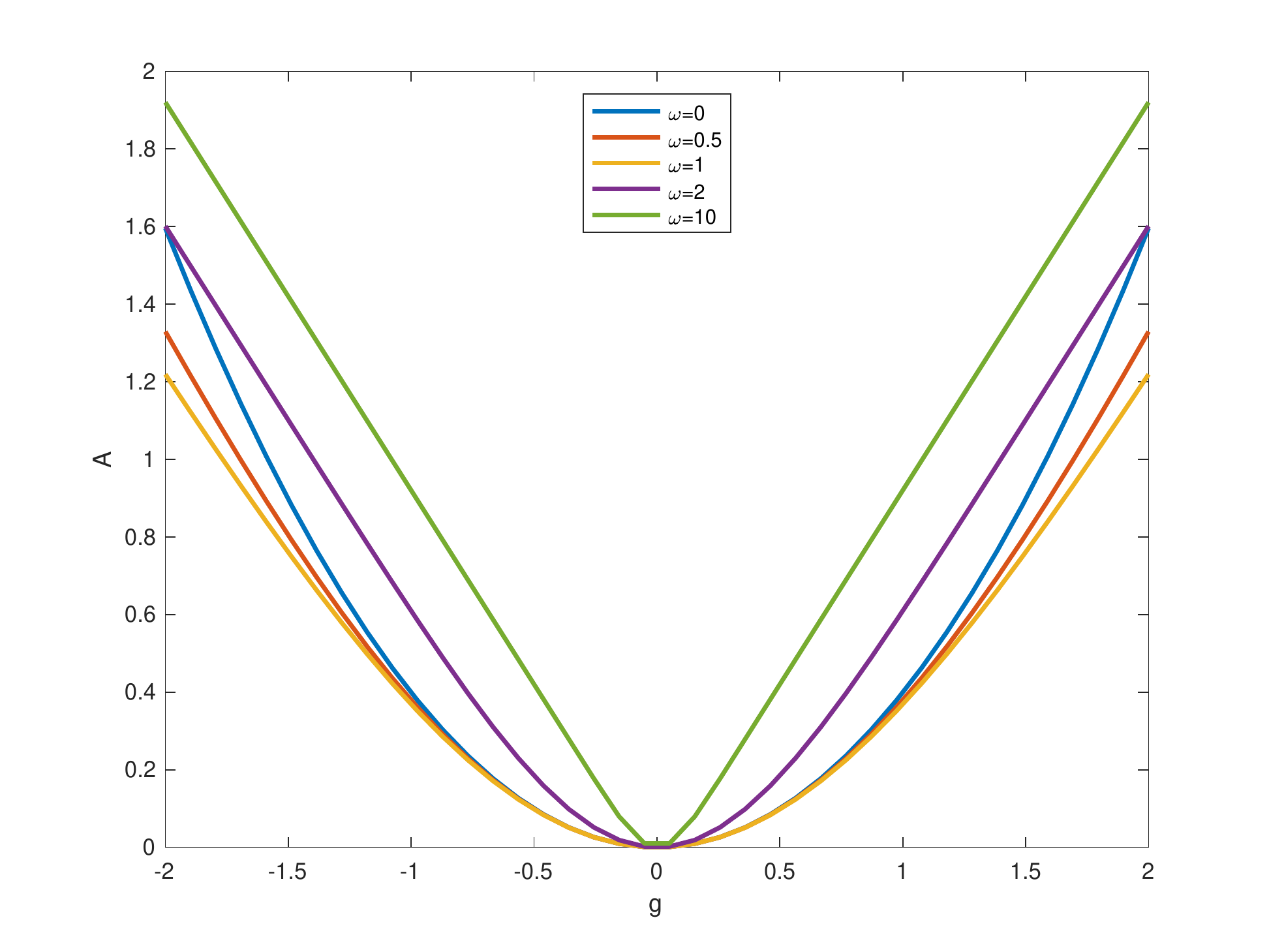}
	\end{center}
\end{figure}

To plot $A\left( \omega, g\right)$, we consider the simple case that
$g_{t}=\left( q_{t},-1\right) ^{\prime }$,   $g=\left( 0,g_{2}\right) ^{\prime },$ $x_{t}=1$,   $d_{0}=1$,
and $h_{t}$ and $q_{t}$ are independent of each other. We  write $g_2=g$ for simplicity.
The left panel of Figure \ref{fig-Akg} shows the three-dimensional graph of $A\left( \omega, g\right)$,
the middle panel depicts the profile of $A\left( \omega, g\right)$ as a function of $\omega$ for several values of $g$,
and the right panel exhibits that of $A\left( \omega, g\right)$ as a function of $g$ for given values of $\omega$.
First of all, it can be  seen that $A\left( \omega, g\right)$ is continuous everywhere but has a kink at $\omega=1$.
As $\omega$ approaches zero, the shape of  $A\left( \omega, g\right)$ is clearly quadratic in $g$; whereas, as $\omega$ becomes larger,
it becomes almost linear in $g$. Also, $A\left( \omega, g\right)$ is quite flat around its minimum at $g=0$ when $\omega$ is close to zero;  however, $A\left( \omega, g\right)$ has a sharp minimum at zero for a larger value of $\omega$. This reflects the fact that the rate of convergence increases as $\omega$ becomes larger.

\section{Inference}\label{sec:inference}

In this section, we consider inference.
Regarding $\alpha_0$,
Theorems \ref{asdist-alpha-gamma} and \ref{thm:AD with Estimated f} imply that inference for $\alpha_0$ can be carried out as if $\gamma_0$ were known. Therefore, the standard inference method based on the asymptotic normality can be carried out for $\alpha_0$ for both observed and estimated $f_t$.

We now focus on the inference issue regarding $\gamma_0$.
 Let $ \theta_0 = h(\gamma_0)$ denote the parameter of interest for some known linear transformation  $h(\cdot)$. For instance, this can be a particular element of $\gamma_0$ or a linear combination of the elements of $\gamma_0$.   We use a quasi-likelihood ratio statistic:
 \begin{align*}
LR(\theta) &:=\frac{\mathbb{%
S}_{T}\left( \widehat\alpha_h ,\widehat\gamma_h \right) - {\mathbb{S}}_{T}(\widehat\alpha,\widehat\gamma)}{ {\mathbb{S}}_{T}(\widehat\alpha,\widehat\gamma)},\cr
( \widehat\alpha_h ,\widehat\gamma_h)&:=\arg\min_{\alpha ,h\left( \gamma \right) = \theta}\mathbb{%
	S}_{T}\left( \alpha ,\gamma \right),\quad
( \widehat\alpha ,\widehat\gamma):=\arg\min_{\alpha ,\gamma}\mathbb{%
	S}_{T}\left( \alpha ,\gamma \right),
\end{align*}
where $\mathbb S_T$ denotes the least-squares loss function, using $f_t$ when factors are observable, and $\widetilde f_t$ when factors are estimated.
Then, the $ 100(1-a) \% $-level confidence set for $ \theta_0 $ is $ \{\theta : LR(\theta) \leq \texttt{cv}_a \} $, where $ \texttt{cv}_a $ denotes a critical value.
As Theorem \ref{t6.1} shows, the asymptotic distribution is non-pivotal, so the  critical value  is   computed based on the  bootstrap.

\subsection{The Bootstrap with Estimated Factors}

We focus on  the case of estimated factors, where we use
$\widetilde f_t$ as the ``true" factors, and denote by  $f_t^*$ as the 	\emph{estimated factors}
in the bootstrap world.
To preserve the phase transition brought by the effect of PCA factor estimators,
$f_t^*$ should be a ``perturbed" version of $\widetilde f_t$.  Specifically,
let $f_t^*$ be re-estimated factors in the bootstrap sample via PCA. This is given by Gon{\c{c}}alves and Perron \citep{gonccalves2018bootstrapping}. To maintain the cross-sectional dependence among the idiosyncratic components in the bootstrap factor models, we  generate bootstrap data by
$$
\mathcal Y_t^*:= \widehat\Lambda \widetilde f_{t} + \widehat \var(e_t)^{1/2}\mathcal W_t^*,
$$
where $\{\mathcal W_t^*:t\leq T\}$ is a sequence of independent $N\times 1$    multivariate standard normal random vectors and $ \widehat \var(e_t)$ is the estimated covariance matrix of $e_t$. If the covariance is a sparse matrix,
we apply the thresholding covariance estimator of Fan, Liao, and Mincheva \citep{POET}.
Then, we  apply PCA to estimate factors to obtain $\widetilde F_{t}^*$.
However,
$\widetilde F_{t}^*$ estimates $\widetilde f_t$, the ``true factors" in the bootstrap sample, up to a new rotation matrix $H_T^*$. Fortunately, such a rotation indeterminacy can be removed  because $H_T^*$ is known in the bootstrap world. Following Gon{\c{c}}alves and Perron \citep{gonccalves2014bootstrapping, gonccalves2018bootstrapping}, we  define
\begin{align}\label{f-star-bootstrap}
f_t^*:= H_T^{*'-1} \widetilde F_{t}^*
\end{align}
as the final ``estimated factors" in the bootstrap sample.
The bootstrap distribution of $f_t^*- \widetilde f_t$    mimics well the asymptotic sampling distribution of $\widetilde f_t -H_T'g_t$, that is $\mathcal N(0, \Sigma_h)$.
We  give more details of this method,  the definition of $H_T^*$,
and an alternative method based on Gaussian perturbation in Online Appendix  \ref{sec:appendix:est:factors}.

\subsection{The k-Step Bootstrap Algorithm}

We now  describe the bootstrap algorithm in detail.
Define
\begin{align}\label{Z-star:bootstrap}
	\widetilde Z_t(\gamma):= (x_t', x_t' 1\{\widetilde f_t'\gamma>0\})'
	\; \text{ and } \;
		Z^*_t(\gamma):=  (x_t', x_t'1\{f_t^{*'}\gamma>0\})'.
\end{align}
For each $t=1,\ldots, T$,  construct $\left\{ y_{t}^{\ast }\right\}_{t\leq T} $  by
	\begin{align}\label{y-star:bootstrap}
	y_{t}^{\ast }:=
	\widetilde Z_t \left( \widehat \gamma \right)^{\prime }\widehat{\alpha}+\eta
	_{t}\widehat{\varepsilon}_{t}
	\; \text{ with } \;
	\widehat\varepsilon_t:=y_t-\widetilde Z_{t}\left( \widehat{\gamma}\right) ^{\prime }\widehat{\alpha},
	\end{align}
where $\eta_t$ is an i.i.d.\ sequence  whose mean is
	zero and whose variance is one.
For example, $\eta_t \sim \mathcal{N} (0,1)$ or it can be simulated from a discrete distribution
(e.g., the Rademacher distribution).
The bootstrap least-squares loss is given by
\begin{equation}\label{eq7.1}
\mathbb S_T^*(\alpha,\gamma):=\frac{1}{T}\sum_{t=1}^T[y_t^*-Z^*_{t}\left(  {\gamma}\right)'\alpha]^2.
\end{equation}
In principle,  the bootstrap analog of the original constraint
is $h(\gamma)=h(\widehat\gamma)$ and the bootstrap analogous $ LR $ is defined as
 $$
 \widetilde{LR}^*:=\frac{\min_{\alpha ,h\left(\gamma \right) =h(\widehat\gamma)}\mathbb{%
S}_{T}^*\left( \alpha ,\gamma \right) -\min_{\alpha, \gamma} {\mathbb{S}}^*_{T}( \alpha, \gamma)}{  \min_{\alpha, \gamma} {\mathbb{S}}^*_{T}( \alpha, \gamma)}.
 $$

 A potential computational problem   for  $ \widetilde{LR}^*$     is that it is necessary to fully solve two joint  MIO problems:  $ \min_{\alpha, \gamma} {\mathbb{S}}^*_{T}( \alpha, \gamma) $ and $  \min_{\alpha, h(\gamma)=h(\widehat\gamma)} {\mathbb{S}}^*_{T}( \alpha, \gamma)$   in each of the bootstrap repetitions.   To circumvent this problem,
we adopt the approach of Andrews \citep{andrews2002higher}. Because a solution based on the original data  should be close to a solution   based on the bootstrapped data, within each bootstrap replication, we can employ the MILP algorithm, with
$(\widehat\alpha,\widehat\gamma)$ as the initial value,  and iteratively update the  algorithm for $k$ steps rather than computing the full bootstrap solutions.
A computationally convenient $k$-step LR statistic (${LR}_k^*$) and
its computational details are given in Algorithm \ref{algo:bootstrap}.

\begin{algorithm}[h!tb]
	\KwInput{$\{(y_t, x_t, \widetilde f_{t}, M_t, \widehat{\varepsilon}_t): t=1,\ldots,T \}$, $\widehat \var(e_t)$, $\widehat\Lambda$,  $ \widehat{\alpha} $, $\widehat\gamma$, $\widehat\gamma_h$, $ B $}

	\KwOutput{bootstrap critical value $\texttt{cv}_a^*$}

	Set $ b=1 $\;

	\While{$ b \leq B $}{

	Generate an i.i.d. sequence $\left\{ \eta _{t}\right\}_{t\leq T} $ whose mean is
	zero and variance is one and an i.i.d. sequence of multivariate vectors $\left\{ \mathcal W^* _{t}\right\}_{t\leq T} $  from $\mathcal N(0, I)$\;

	Generate $
	\mathcal Y_t^*= \widehat\Lambda \widetilde f_{t} + \widehat \var(e_t)^{1/2}\mathcal W_t^*, t=1,...,T
	$\;

Apply PCA to $ \{\mathcal Y_t^* \}$ and obtain $\widetilde F_{t}^*$ as the PCA factor estimates\;

Compute $ H_T^* $	and $ f_t^* = H_T^{*'-1}  \widetilde{F}_t $, $ t=1,...,T$\;

 Construct $
y_{t}^{\ast }=
\widetilde Z_t \left( \widehat \gamma \right)^{\prime }\widehat{\alpha}+\eta
_{t}\widehat{\varepsilon}_{t}$, $ t=1,...,T $, where $ 		\widetilde Z_t(\gamma)= (x_t', x_t' 1\{\widetilde f_t'\gamma>0\})'
$\;

Initialize at $\widehat\gamma^{* ,0}=\widehat\gamma,  $ $\widehat\gamma_h^{* , 0}=\widehat\gamma_h  $\;

    Set $ l=1$;

	\While{$ l \leq k $}{

	Compute $	\widehat{\alpha}^{*, l}= \alpha^*(\widehat{\gamma}^{*, l-1} )$ and $\widehat{\alpha}_h^{*, l}=\alpha^*(\widehat{\gamma}_h^{*, l-1})$, where
\begin{eqnarray*}
			\alpha^*(\gamma)&=& \left[\frac{1}{T}\sum_{t=1}^T Z^*_{t}\left(\gamma\right)Z^*_{t}\left( \gamma \right)' \right]^{-1} \frac{1}{T}\sum_{t=1}^T Z^*_{t}\left( \gamma \right) y_t^*\text{\;} 		\end{eqnarray*}

	For the given $(\widehat{\alpha}^{*l},\widehat{\alpha}^{*l}_h)$,  compute the following by MILP:
		\begin{eqnarray*}
			\widehat\gamma^{*,l} &=&\arg\min_{\gamma}\mathbb S_T^*(\widehat{\alpha}^{*, l},  \gamma ),\cr
		\widehat\gamma^{*,l}_h &=&\arg\min_{h(\gamma)=h(\widehat\gamma)}\mathbb S_T^*(\widehat{\alpha}_h^{*, l},  \gamma )\text{\;}
		\end{eqnarray*}

		Let $l=l+1$\;
	}

	Compute
	\[  {LR}_k^*:=\frac{ \mathbb{%
			S}_{T}^*\left( \widehat \alpha_h^{*, k} ,\widehat \gamma_h^{*, k} \right)-  \mathbb{%
			S}_{T}^*\left( \widehat \alpha^* ,\widehat \gamma^* \right) }{   \mathbb{%
			S}_{T}^*\left( \widehat \alpha^* ,\widehat \gamma^* \right) }\text{\;}
	 \]


	Let $b=b+1$\;

}

Obtain $\texttt{cv}_a^*$ by the $(1-a)$ th quantile of the empirical distribution of $LR^*_k$.

\caption{Bootstrap for Estimated Factors }\label{algo:bootstrap}
\end{algorithm}

\subsection{Asymptotic Distribution}

 To describe the asymptotic distribution of  the quasi-likelihood ratio statistic, let
 $\sigma_\varepsilon^{2}$ be the variance of $\varepsilon_t$.  In addition, recall the asymptotic distributions of $\widehat\gamma$, the minimizer of
 $$
 \mathbb Q(\omega, g) :=   A(\omega, g) +2W\left( g\right),
 $$
and,
 as we discussed for Theorem \ref{thm:AD with Estimated f}, $\omega=\infty$   also corresponds to the  case of known factors. 

 Note that $A(\omega, g)$ depends on the true value $\phi_0$, the rotation matrix $H$, and the covariance matrix $\Sigma_h$.  For the bootstrap sampling distribution, we  consider  drifting sequences around these values. For this, define
\begin{align*}
&\mathbb A(\omega, g, \Sigma,  \bar H,\phi)\cr
&:=M_\omega   \mathbb E \left[(x_td_0)^2\left(\left|    g_t'Hg + \zeta_\omega^{-1}  \mathcal W_t^{*'}\Sigma^{1/2}\bar H^{-1} \phi\right |   -
  \left | \zeta_\omega^{-1}   \mathcal W_t^{*'}\Sigma^{1/2}\bar H^{-1} \phi  \right | \right) \bigg{|} g_t'\phi=0\right]   p_{g_t'\phi}(0)
\end{align*}
 for $\omega\in(0,\infty]$, and
\begin{align*}
&\mathbb A( 0, g, \Sigma, \bar H,\phi)\cr
&:=  \mathbb{E}\left[(x_t^{\prime }d_0)^2 (  g_t'Hg)^2\bigg{|}g_t'\phi=0,    \mathcal W_t^{*'}\Sigma^{1/2}\bar H^{-1} \phi =0\right]p_{g_t'\phi,   \mathcal W_t^{*'}\Sigma^{1/2}\bar H^{-1} \phi }(0,0) .
\end{align*}
Note that   $A(\omega, g)= \mathbb A(\omega, g, H'\Sigma_h H,   H,\phi_0)$.

\begin{assum} \label{a9.1}
(i) Uniformly  for $\phi$ inside a  neighborhood of $\phi_0$,

  $
 \sup_{x_t, f_{2t}}| p_{\breve g_t'\phi|x_t,  f_{2t}}(0)-p_{g_t'\phi_1|x_t, f_{2t}}(0)|=o(1).
 $

 (ii) For each fixed $\omega\in[0,\infty]$ and $g$, $\mathbb A(\omega, g,  S)$ is continuous with respect to $S=(\Sigma, \bar H, \phi)$.

 (iii) The factor idiosyncratic component $e_t$ is independent of $(x_t, g_t)$, and
 $|\widehat \var(e_t)- \var(e_t)|_2=o_P(1)$ under the matrix spectral norm.

 (iv)  $\inf_{\gamma}|\widehat f_t^{*'}\gamma|$ has a density (jointly with respect to $(e_t, g_t, \mathcal W_t^*)$) bounded and continuous  at zero, where $\widehat f_t^{*}=\widehat f_t+ N^{-1/2} \widehat\Sigma_h^{1/2}\mathcal W_t^*$.

\end{assum}

Fan, Liao, and Mincheva \citep{POET} showed that under mild sparsity assumptions, for the matrix spectral norm, $|\widehat \var(e_t)- \var(e_t)|_2=o_P(1)$, given that $\log N$ does not grow too fast relative to $T$. The following theorem presents the asymptotic distribution of $LR$, and the validity of the $k$-step bootstrap procedure.

\begin{thm}\label{t6.1}
Suppose that   assumptions of Theorem \ref{asdist-alpha-gamma} (for the known factor case) or assumptions of Theorem \ref{thm:AD with Estimated f} (for the estimated factor case)
and Assumption \ref{a9.1} hold.
  Let $h(\cdot)$ be a $\mathbb R^m$-valued linear function with a fixed $m$ and  let $
r_{NT}:=\left( NT^{1-2\varphi }\right) ^{1/3}\wedge T^{1-2\varphi } $, where we set  $N = T^2 $ in case of the known factor.
 Then,  under $\mathcal H_0: h(\gamma_0)=\theta$, we have
 $$
  \sqrt{r_{NT}T^{1+2\varphi }} \cdot LR\to^d \sigma_{\varepsilon}^{-2}\min_{ g_h'\nabla h=0} \mathbb Q( \omega,g_h)   - \sigma_{\varepsilon}^{-2}\min_{ g} \mathbb Q( \omega,g),
 $$
and  for any $k\geq 1$ as the number of iterations in the $k$-step bootstrap,
 $$
\sqrt{r_{NT}T^{1+2\varphi }} \cdot  LR_k^*\to^{d^*} \sigma_{\varepsilon}^{-2}\min_{ g_h'\nabla h=0} \mathbb Q( \omega,g_h)   - \sigma_{\varepsilon}^{-2}\min_{ g} \mathbb Q( \omega,g).
 $$
In the above, $\to^{d^*}$ represents the convergence in distribution with respect to the conditional distribution of $\left\{ \eta _{t},\mathcal W_t^*\right\}_{t\leq T} $  given the original data. Also, $\nabla h$ denotes the gradient of $h(\cdot)$, which is independent of $\gamma_0$ as $h$ is linear.
\end{thm}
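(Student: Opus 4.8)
The plan is to reduce both displays to the \emph{same} shrinking‑neighbourhood analysis of the least‑squares criterion that underlies Theorem~\ref{thm:AD with Estimated f}, combined with the argmin continuous mapping theorem, and then to transport that analysis to the bootstrap world via the $k$‑step device of Andrews \citep{andrews2002higher}. First I would dispose of the denominators: $\mathbb{S}_T(\widehat\alpha,\widehat\gamma)\to^p\sigma_\varepsilon^{2}$ by consistency of $(\widehat\alpha,\widehat\gamma)$ and a law of large numbers for $\rho$‑mixing sequences, and, conditionally on the data and in probability, $\min_{\alpha,\gamma}\mathbb{S}_T^*(\alpha,\gamma)\to\sigma_\varepsilon^{2}$ using in addition $\mathbb{E}^*\eta_t^{2}=1$ and $T^{-1}\sum_t\widehat\varepsilon_t^{2}\to^p\sigma_\varepsilon^{2}$. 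By Slutsky it then suffices to analyse the scaled numerators.

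For the original numerator, reparametrize $g:=r_{NT}(\gamma-\gamma_0)$ and set $\mathbb{G}_T(g):=r_{NT}T^{1+2\varphi}\bigl(\mathbb{S}_T(\alpha_0,\gamma_0+g/r_{NT})-\mathbb{S}_T(\alpha_0,\gamma_0)\bigr)$; the normalization $\sqrt{r_{NT}T^{1+2\varphi}}$ is exactly the one making a difference of two criterion values evaluated at $\gamma$'s within $O_P(r_{NT}^{-1})$ of $\gamma_0$ of exact order $O_P(1)$ (it equals $T$ in the strong‑oracle regime and interpolates down to the cube‑root one). The decomposition $\mathbb{S}_T-\mathbb{S}_T(\alpha_0,\gamma_0)=R_1+R_2$ established for Theorem~\ref{thm:AD with Estimated f}, together with the sharp lower bound on $R_1$ and the Gaussian limit of the stochastic part, gives that $\mathbb{G}_T(\cdot)$ converges weakly on compacta to $\sigma_\varepsilon^{2}\,\mathbb{Q}(\omega,\cdot)$ with the tail bound making the argmin $O_P(1)$. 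Two adjustments remain. (a) Profiling out $\alpha$ contributes only $O_P(T^{-1})$, and by the asymptotic orthogonality of $\widehat\alpha$ and $\widehat\gamma$ already shown in Theorem~\ref{thm:AD with Estimated f} this contribution is the \emph{same} in the constrained and the unconstrained problems up to $o_P\bigl((r_{NT}T^{1+2\varphi})^{-1}\bigr)$, hence cancels in the numerator. (b) The linear restriction $h(\gamma)=\theta=h(\gamma_0)$ rescales to the closed linear subspace $\{g\in\mathcal{G}:g'\nabla h=0\}$, on which $\mathbb{Q}(\omega,\cdot)$ still has an a.s.\ unique minimizer by Lemma~2.6 of Kim and Pollard \citep{kim1990}.

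Applying the argmin continuous mapping theorem \citep{VW} \emph{jointly} to the restricted and unrestricted processes yields $r_{NT}(\widehat\gamma-\gamma_0)\to^d\argmin_{g\in\mathcal{G}}\mathbb{Q}(\omega,g)$ and $r_{NT}(\widehat\gamma_h-\gamma_0)\to^d\argmin_{g:\,g'\nabla h=0}\mathbb{Q}(\omega,g)$ driven by the same $\mathbb{Q}(\omega,\cdot)$; composing with the continuous ``value at the argmin'' functional and dividing by the denominator delivers the first display (for observed factors one sets $N=T^{2}$, forcing $r_{NT}=T^{1-2\varphi}$, $\omega=\infty$, $A(\infty,\cdot)=B(\cdot)$, and the limit reduces to the one implied by Theorem~\ref{asdist-alpha-gamma}). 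For the bootstrap, conditionally on the data the DGP $y_t^*=\widetilde Z_t(\widehat\gamma)'\widehat\alpha+\eta_t\widehat\varepsilon_t$ with panel $\mathcal{Y}_t^*=\widehat\Lambda\widetilde f_t+\widehat\var(e_t)^{1/2}\mathcal{W}_t^*$ is itself a factor‑driven two‑regime model with ``true'' parameters $(\widehat\alpha,\widehat\gamma)$, jump $\widehat\delta=d_0T^{-\varphi}(1+o_P(1))$, ``true'' factors $\widetilde f_t$, and re‑estimated factors $f_t^*$ that, by Assumption~\ref{a9.1}(iii), admit the analogue of the expansion \eqref{Bai-expansion} with $f_t^*-\widetilde f_t$ mimicking $\widetilde f_t-H_T'g_t$ and covariance converging to $\Sigma_h$. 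Regarded as a random function on the probability space carrying $\{\eta_t,\mathcal{W}_t^*\}$, $\mathbb{S}_T^*$ satisfies the analogues of Assumptions~\ref{as9}--\ref{as8}, with the conditional‑density regularity — supplied in the original world by the randomness of $g_t$ and $h_t$ — now coming from the Gaussian draws $\{\mathcal{W}_t^*\}$ through the bootstrap PCA step, cf.\ Assumption~\ref{a9.1}(iv); the localization argument above then applies verbatim, so the fully optimized bootstrap estimators obey $r_{NT}(\widehat\gamma^*-\widehat\gamma)\to^{d^*}\argmin_g\{\mathbb{A}(\omega,g,S^*)+2W^*(g)\}$, and likewise for the constrained version, with $W^*$ a Gaussian process whose kernel converges to $H$ and nuisance tuple $S^*\to^p S=(H'\Sigma_hH,H,\phi_0)$. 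Since $\mathbb{A}(\omega,g,\cdot)$ is continuous in $S$ by Assumption~\ref{a9.1}(ii) and $\mathbb{A}(\omega,g,S)=A(\omega,g)$, the bootstrap limit coincides with $\mathbb{Q}(\omega,\cdot)$, and the value functional plus the bootstrap denominator give the claimed limit for the fully optimized bootstrap $LR^*$. Finally, the $k$‑step iterates of Algorithm~\ref{algo:bootstrap} replace the fully optimized estimators at no asymptotic cost: they are initialized at $(\widehat\alpha,\widehat\gamma)$ (resp.\ $(\widehat\alpha_h,\widehat\gamma_h)$), which is $r_{NT}$‑consistent for the bootstrap ``truth'' $\widehat\gamma$ (resp.\ $\widehat\gamma_h$); each iteration performs a closed‑form $\alpha$‑update followed by a \emph{full} MILP minimization over $\gamma$, so after one iteration the $\gamma$‑iterate equals $\argmin_\gamma\mathbb{S}_T^*(\alpha^*(\widehat\gamma),\gamma)$, which differs from the jointly optimized $\widehat\gamma^*$ by $o_{P^*}(r_{NT}^{-1})$ because the value of $\alpha$ enters only the remainder terms of the local expansion; the descent property of the alternating (BCD) scheme then gives $r_{NT}(\widehat\gamma^{*,k}-\widehat\gamma^*)=o_{P^*}(1)$ for every fixed $k\ge1$, and substituting into $LR_k^*$ yields the second display.

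\textbf{Main obstacle.} The delicate part is re‑deriving the full phase‑transition expansion of Theorem~\ref{thm:AD with Estimated f} inside the bootstrap world and establishing the conditional (in‑probability) weak convergence of the localized bootstrap criterion. This requires handling two bootstrap noise sources simultaneously and uniformly in the local parameter $g$ — the multiplier errors $\{\eta_t\}$ acting through $\widehat\varepsilon_t$ and the Gaussian draws $\{\mathcal{W}_t^*\}$ acting through the re‑estimated PCA factors — matching the Gaussian kernel $H$ and the drift $A(\omega,g)$ via the drifting‑nuisance continuity of Assumption~\ref{a9.1}, and in particular controlling the bootstrap analogue of the remainder $r_t$ in \eqref{Bai-expansion} with only the information transmitted from the original sample rather than the exponential‑tail conditions, which are not available conditionally.
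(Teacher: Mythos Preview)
Your strategy is broadly aligned with the paper's, but the organization differs in two places and your handling of the $k$-step has a gap worth flagging.

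\textbf{Organization of the LR argument.} Rather than a joint argmin CMT for $(\widehat g,\widehat g_h)$, the paper decomposes the scaled numerator additively. With $l_{NT}:=\sqrt{r_{NT}T^{1+2\varphi}}$ it writes
\[
l_{NT}\,\mathbb{S}_T(\widehat\alpha,\widehat\gamma)\cdot LR = A_1 + A_2 - A_3,
\]
where $A_j$ are differences of $\mathbb{S}_T$ evaluated at $(\widehat\alpha_h,\widehat\gamma_h),(\widehat\alpha_h,\gamma_0),(\widehat\alpha,\gamma_0),(\widehat\alpha,\widehat\gamma)$. The uniform local expansion $l_{NT}[\mathbb{S}_T(\alpha,\gamma)-\mathbb{S}_T(\alpha,\gamma_0)]=\mathbb{K}_{4T}(g)+o_P(1)$, with $\mathbb{K}_{4T}$ not depending on $\alpha$, reduces $A_1$ and $A_3$ directly to the constrained and unconstrained minima of $\mathbb{K}_{4T}$; your ``cancellation (a)'' is isolated in $A_2$, shown to be $o_P(1)$ via a separate lemma establishing $\widehat\alpha_h-\widehat\alpha=o_P(T^{-1/2})$. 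This buys a cleaner passage to the limit than the joint argmin route, and the paper uses exactly the same $A_1^*+A_2^*-A_3^*$ split in the bootstrap world.

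\textbf{The $k$-step argument.} You assert $r_{NT}(\widehat\gamma^{*,k}-\widehat\gamma^*)=o_{P^*}(1)$ from the descent property. The paper does not prove, and does not need, parameter-level closeness of the $k$-step iterate to the fully optimized bootstrap estimator. It instead establishes \emph{criterion-value} closeness,
\[
\mathbb{S}_T^*(\widehat\alpha^{*,k},\widehat\gamma^{*,k}) \le \min_{\alpha,\gamma}\mathbb{S}_T^*(\alpha,\gamma)+o_{P^*}(l_{NT}^{-1}),
\]
via an explicit induction on the iteration index: (0) the global bootstrap minimizers have the right rates; (1) at any $\gamma$ within $O_{P^*}(r_{NT}^{-1})$ of $\widehat\gamma$, the profiled $\alpha^*(\gamma)$ admits a common leading term, so $\alpha^*(\gamma)-\alpha_g^*=o_{P^*}(T^{-1/2})$; (2) minimizing over $\gamma$ at such an $\alpha$ yields a criterion value $o_{P^*}(l_{NT}^{-1})$-close to the global minimum; (3) and the resulting $\gamma$ is again $O_{P^*}(r_{NT}^{-1})$-close to $\widehat\gamma$, closing the loop. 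This criterion-value statement is exactly what the $A_1^*,A_3^*$ analysis consumes (one only needs $\mathbb{K}_{4T}^*(\widehat g^{*,k})=\min_g\mathbb{K}_{4T}^*(g)+o_{P^*}(1)$), so parameter convergence to the global bootstrap argmin is never invoked. Your intuition ``$\alpha$ enters only remainder terms'' is the right one, but it should be developed as a statement about criterion values, not parameters.

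\textbf{On your ``Main obstacle.''} The exponential-tail worry is slightly misplaced: in the bootstrap world $h_t^*=\widehat S_\Lambda^{-1}N^{-1/2}\widehat\Lambda\,\widehat\var(e_t)^{1/2}\mathcal{W}_t^*$ is exactly Gaussian conditional on the data, so sub-Gaussian tails and the analogue of Assumption~\ref{as5} hold automatically. The genuinely delicate point is that the bootstrap ``true value'' $\widehat\gamma$ (equivalently $\widehat\phi$) is random, so the bias $l_{NT}\mathds{G}_{H_T,\widehat\Sigma}(\phi,\widehat\phi)$ must be controlled along a drifting sequence of evaluation points. The paper handles this by writing the bias as a function $\mathbb{A}(\omega,g,\Sigma,\bar H,\phi)$ of the nuisance tuple, invoking the continuity in Assumption~\ref{a9.1}(ii), and applying an extended CMT that permits stochastic drifting arguments.
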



\section{Monte Carlo Experiments}\label{sec:MC}

In this section, we study the finite sample properties of the proposed method via Monte Carlo experiments. The data are generated from the following design:
\begin{align*}
y_{t} & = x_{t}^{\prime }\beta _{0}+x_{t}^{\prime }\delta
_{0}1\{g_{t}^{\prime }\phi _{0}>0\}+\varepsilon _{t} ~~\text{for}~~t=1,\ldots,T,
\end{align*}
where $\eps_t \sim N(0,0.5^2)$, $x_t \equiv (1,x_{2,t}')'$, and $g_t \equiv (g_{1,t}',-1)'$. Both $x_{2,t}$ and $g_{1,t}$ follow the vector autoregressive model of order 1:
$
x_{2,t}  = \rho_x x_{2, t-1} + \nu_t,
g_{1,t}  = \rho_g g_{1, t-1} + u_t,
$
where $\nu_t \sim N(0, I_{d_x-1})$ and $u_t \sim N(0,I_{K})$. When the factor $g_t$ is not observable, we instead observe $\mathcal{Y}_t$ that is generated from
$\mathcal{Y}_t = \Lambda g_{1,t} + \sqrt{K}e_t, e_t = \rho_e e_{t-1} + \omega_t$,
where $\mathcal{Y}_t$ is an $N\times 1$ vector and $\omega_t$ is an i.i.d.\ innovation generated from $N(0,I_N)$.
The terms $\eps_t$, $\nu_t$, $u_t$, and $\omega_t$ are mutually independent.

In the baseline model, we set $T=N=200$, $d_x=2$, and $K=3$, and apply the MIQP algorithm.
The additional parameter values are set as follows:
$\bt_0=\dt_0=(1,1)$;
$\phi_0=(1,2/3,0,2/3)$;
$\rho_x = \diag(0.5,\ldots,0.5)$;
$\rho_g = \diag(\rho_{g,1},\ldots,\rho_{g,K})$, where $\rho_{g,k} \sim U(0.2,0.8)$ for $k=1,\ldots,K$,
the $i$th row of $\Ld$, $\ld_i' \sim N(0', K \cdot I_K)$; and
$\rho_e = \diag(\rho_{e,1},\ldots,\rho_{e,N})$, where $\rho_{e,i} \sim U(0.3,0.5)$ for $i=1,\ldots,N$.
The values of $\rho_g$ and $\rho_{e}$ are drawn only once and kept for the whole replications. The factor model design is similar to Bai and Ng \citep{bai2009boosting} and Cheng and Hansen \citep{cheng2015forecasting}.
All simulation results are based on 1,000 replications unless otherwise mentioned. We use a desktop computer equipped with an AMD RYZEN Threadripper 1950X CPU (16 cores with 3.4 GHz) and 64 GB RAM. The replication R codes for both the Monte Carlo experiments and empirical applications are available at \url{https://github.com/yshin12/fadtwo}. Also, the full simulation results can be found in Tables \ref{tb1:base}--\ref{tb:com-large-1000}  in Online Appendix \ref{sec:add:sim:appendix}.

\begin{figure}[thb]
	\centering
	\caption{Simulation Results: Baseline Model}\label{fig-base-model}
	\begin{tabular}[t]{c c}
		\includegraphics[width=6cm, height=4cm]{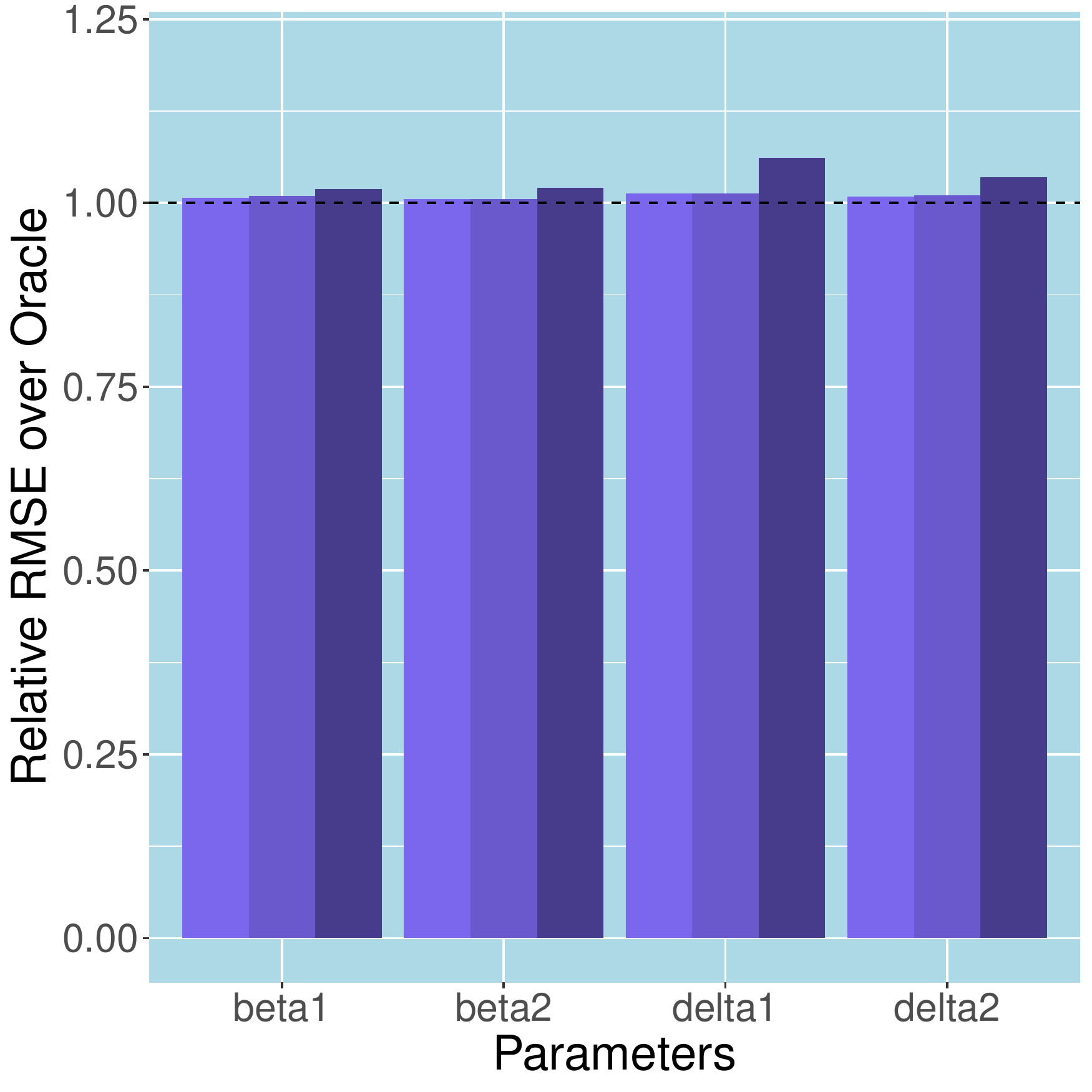} & \includegraphics[width=6cm, height=4cm]{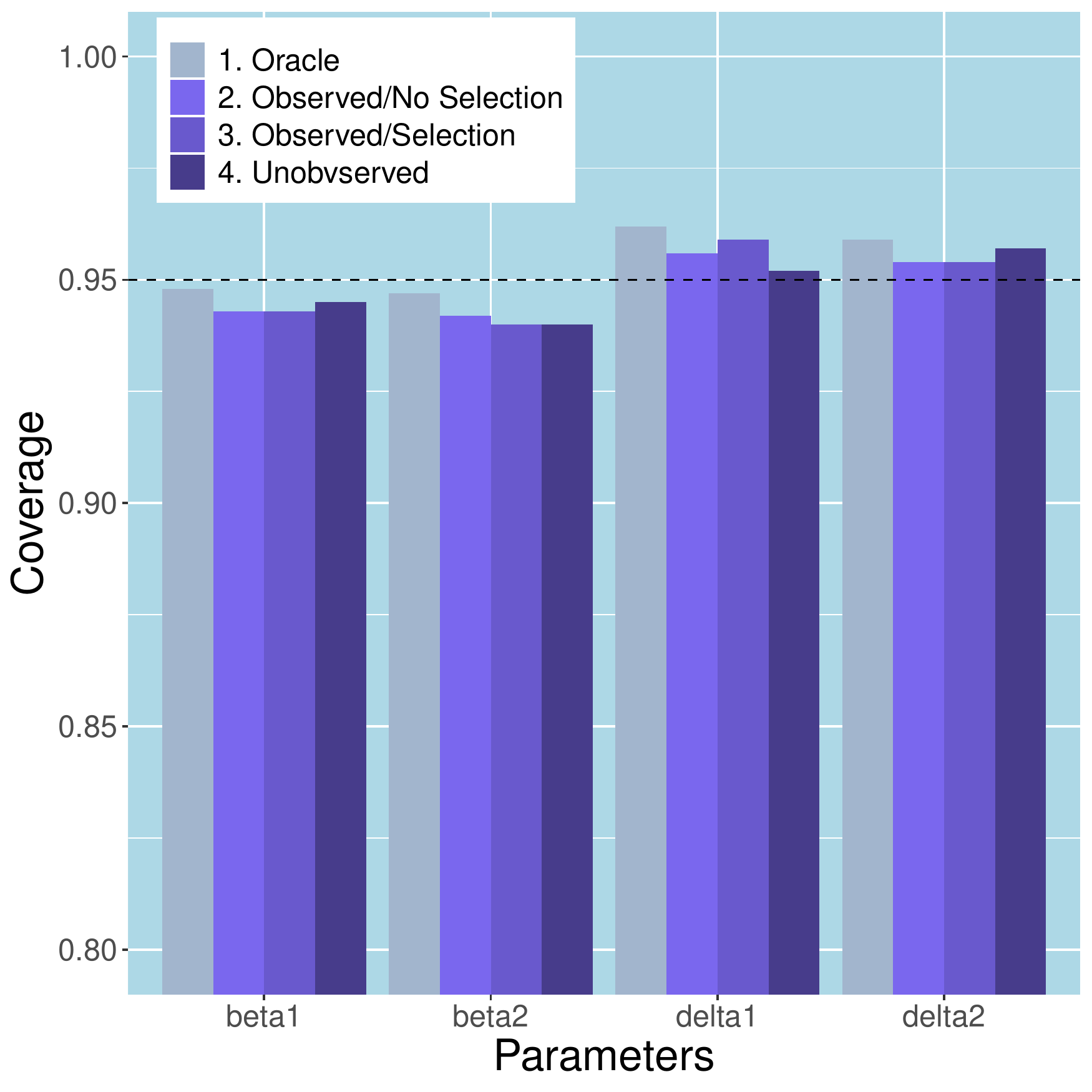}
	\end{tabular}
\end{figure}

First, we study the baseline model under four scenarios: (i) when we know the correct regime, i.e.\ $\phi_0$, (Oracle); (ii) when we observe $g_t$ and know that the third factor is irrelevant (Observed Factors/No Selection); (iii) when we observe $g_t$ and have to select the relevant factors (Observed Factors/Selection); and (iv) when we do not observe $g_t$ but estimate factors from $\mathcal{Y}_t$ by PCA. We set the dimension of $\gm$ to be 4 in (iv). Figure \ref{fig-base-model} reports the relative size of the root-mean-square errors (RMSEs) for $\bt$, $\dt$ as well as the coverage rate for the 95\% confidence intervals. As predicted by the asymptotic theory in the previous sections, the relative RMSEs over Oracle are close to 1 in all scenarios. The coverage rates for the 95\% confidence intervals are also close to the nominal value. Not surprisingly, these results on $\alpha$ are based on the good estimation performance of $\phi$ (or $\gm$).

\begin{figure}[h]
\centering
\caption{Unobserved Factors with Different $N$}\label{fig-sim2}
\begin{tabular}[t]{c c c}
\includegraphics[scale=0.25]{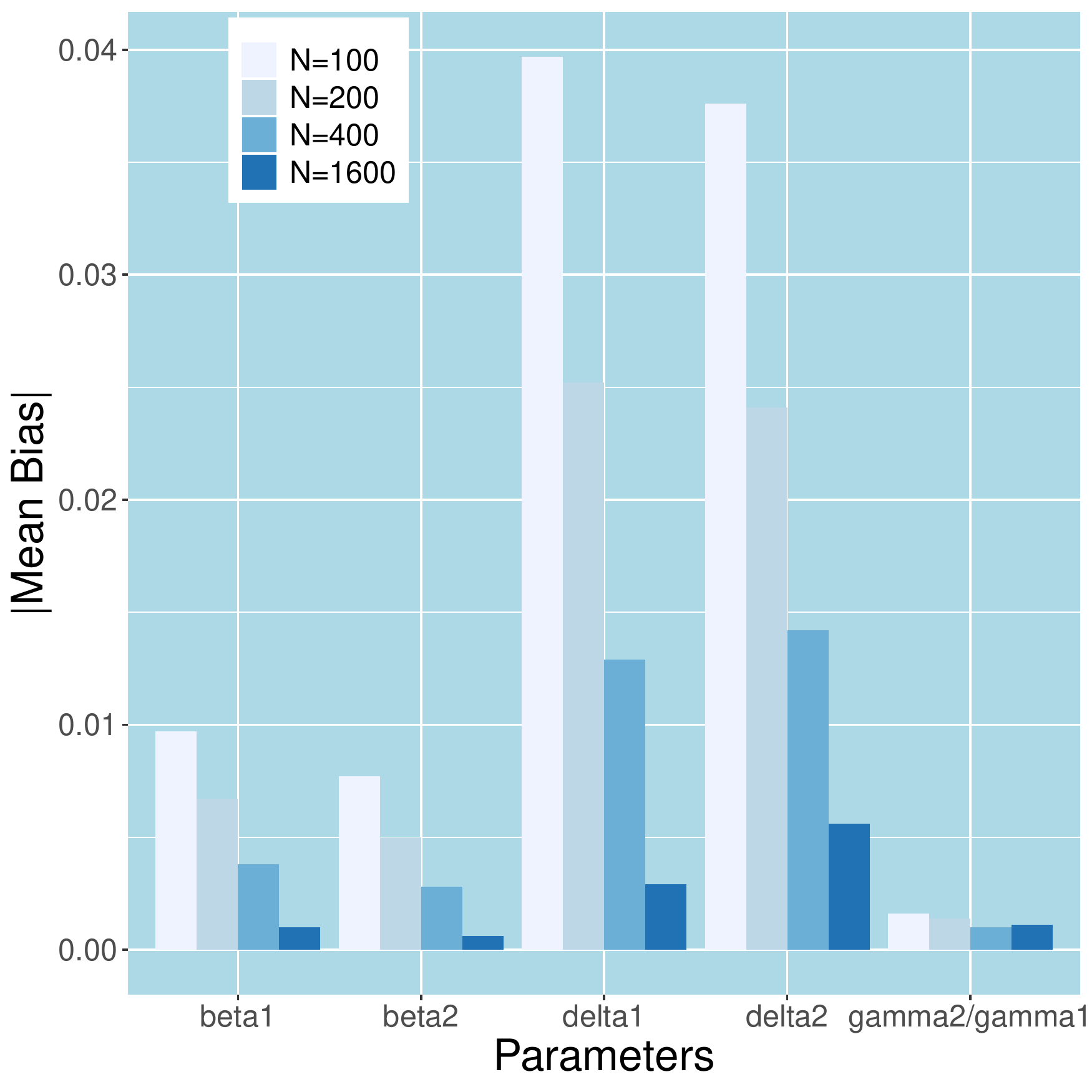} & \includegraphics[scale=0.25]{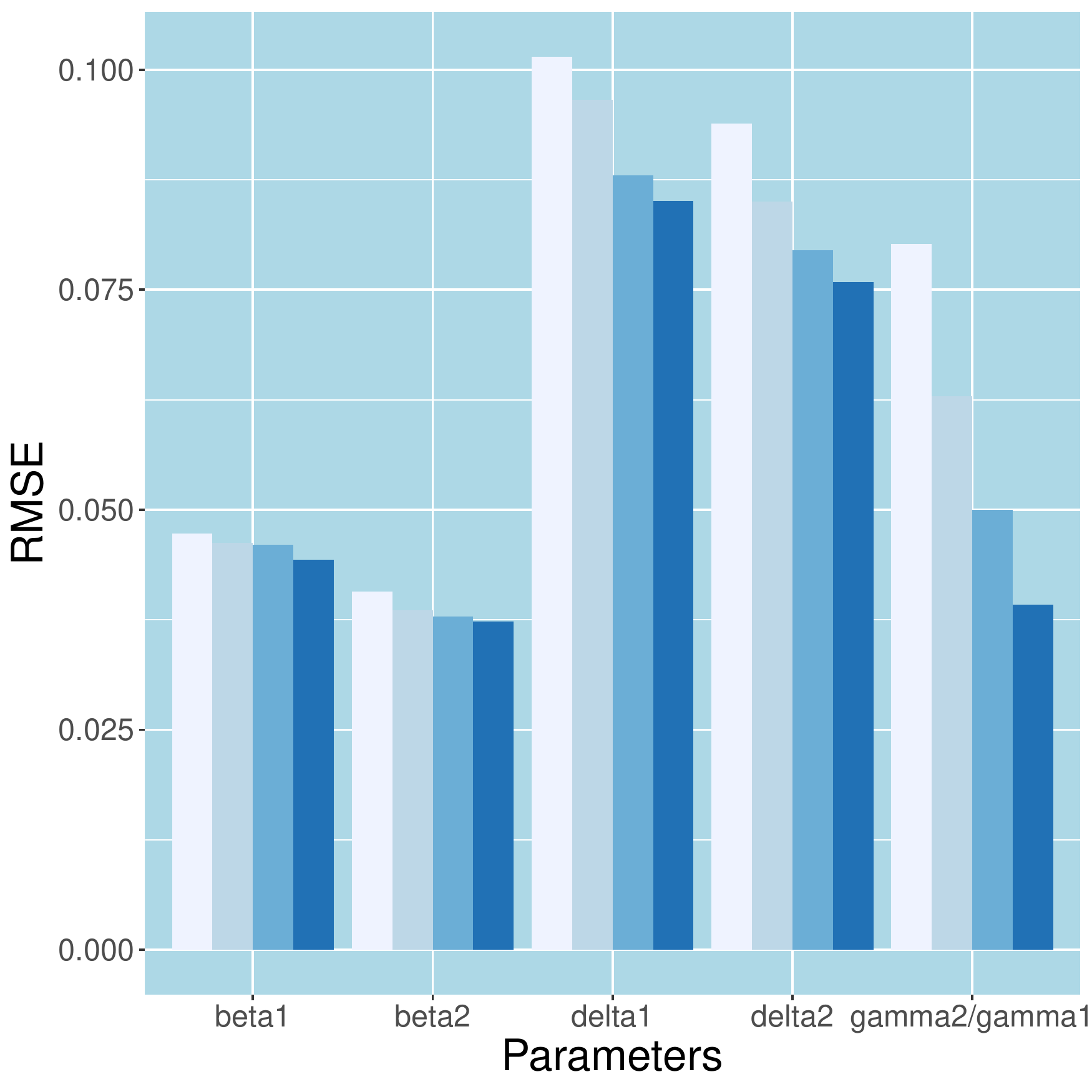}  & \includegraphics[scale=0.25]{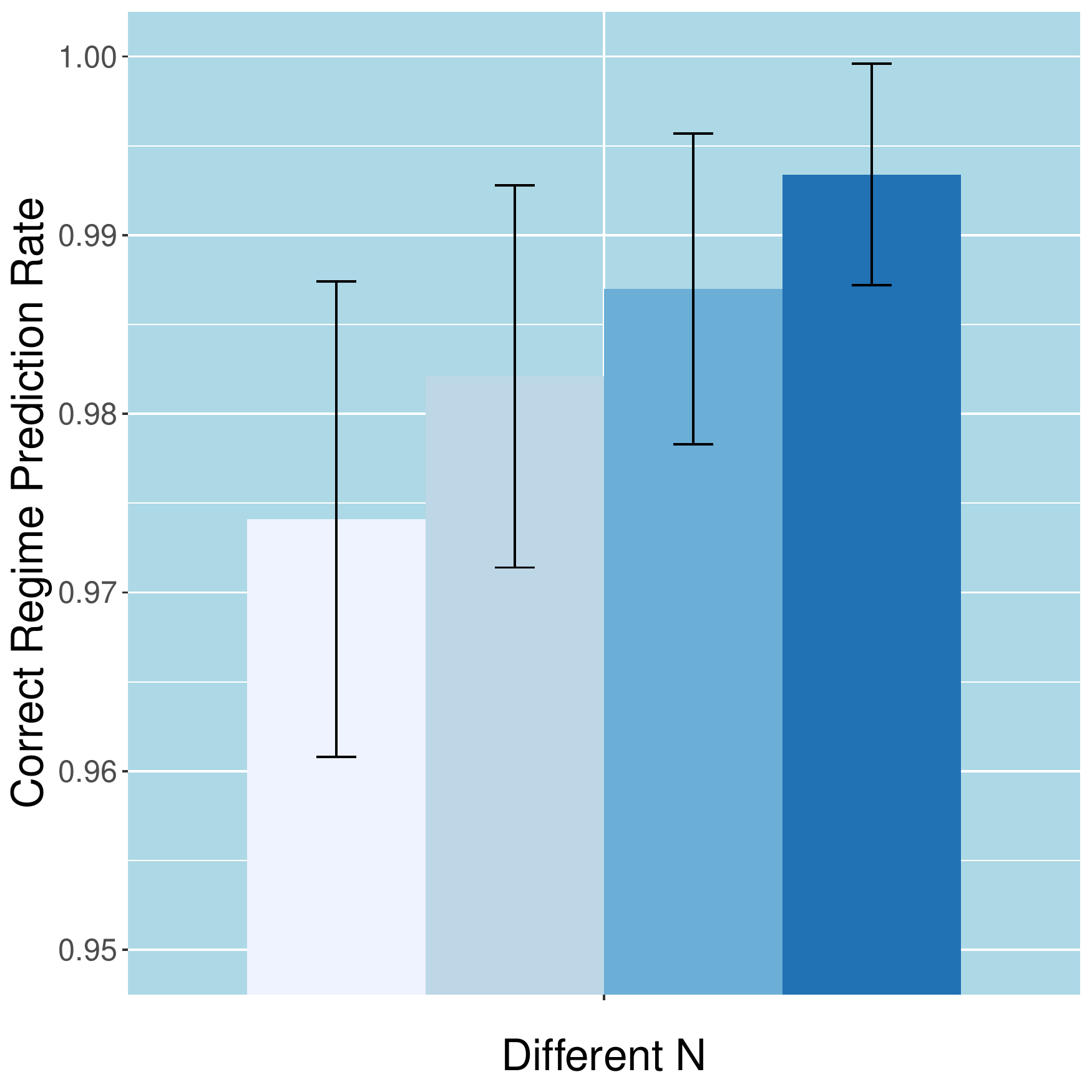}
\end{tabular}
\par
\parbox{5in}{\footnotesize Note. The whisker plot in the panel on the right denotes one standard deviation computed over replication draws.}
\end{figure}

\begin{table}[h]
\caption{Size of Bootstrap Test}\label{tb:bootstrap}
\centering
\begin{tabular}{llcc}
\hline\hline
\multirow{2}{*}{Null hypothesis} & \multirow{2}{*}{Scenarios}                   & \multicolumn{2}{c}{Significance level}                                    \\
                   \cline{3-4}
&                   & \multicolumn{1}{c}{5\%} & \multicolumn{1}{c}{1\%} \\
\hline
$H_0: \gm_{02}=0$ & Estimated factor   & 3.8\%                   & 0.7\%                   \\
$H_0: \phi_{02}=0$ & Known factor       & 4.3\%                   & 0.5\%                   \\
$H_0: \phi_{02}=0 \mbox{ and } \phi_{03}=0$& Many known factors & 7.5\%                   & 1.1\%         \\
\hline
\end{tabular}
\end{table}

\begin{figure}[h]
\centering
\caption{Computation Time over $T$, $d_x$, and $d_g$}\label{fig-time1}
\begin{tabular}[t]{c c c}
\includegraphics[width=5cm, height=4cm]{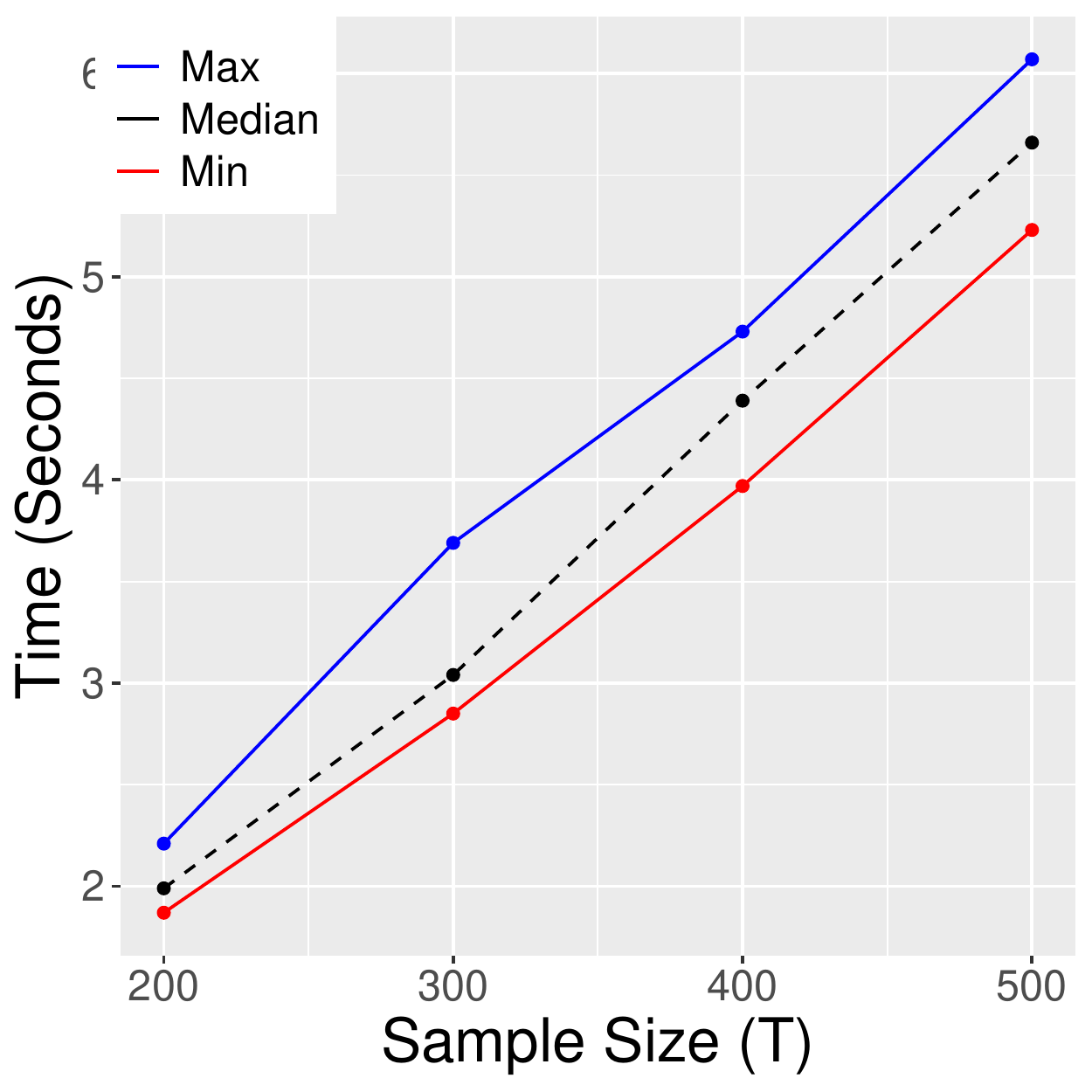} &
\includegraphics[width=5cm, height=4cm]{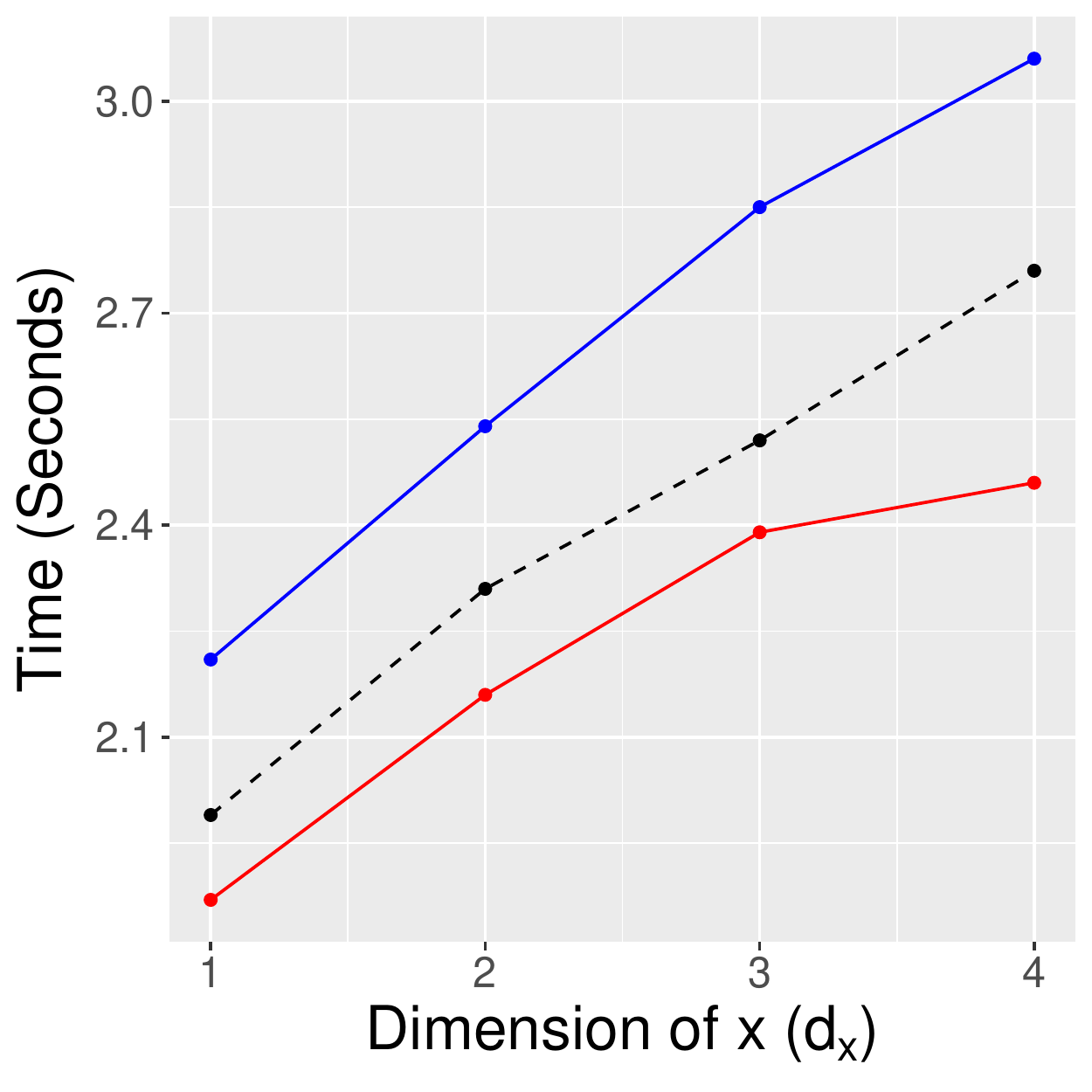}  &
\includegraphics[width=5cm, height=4cm]{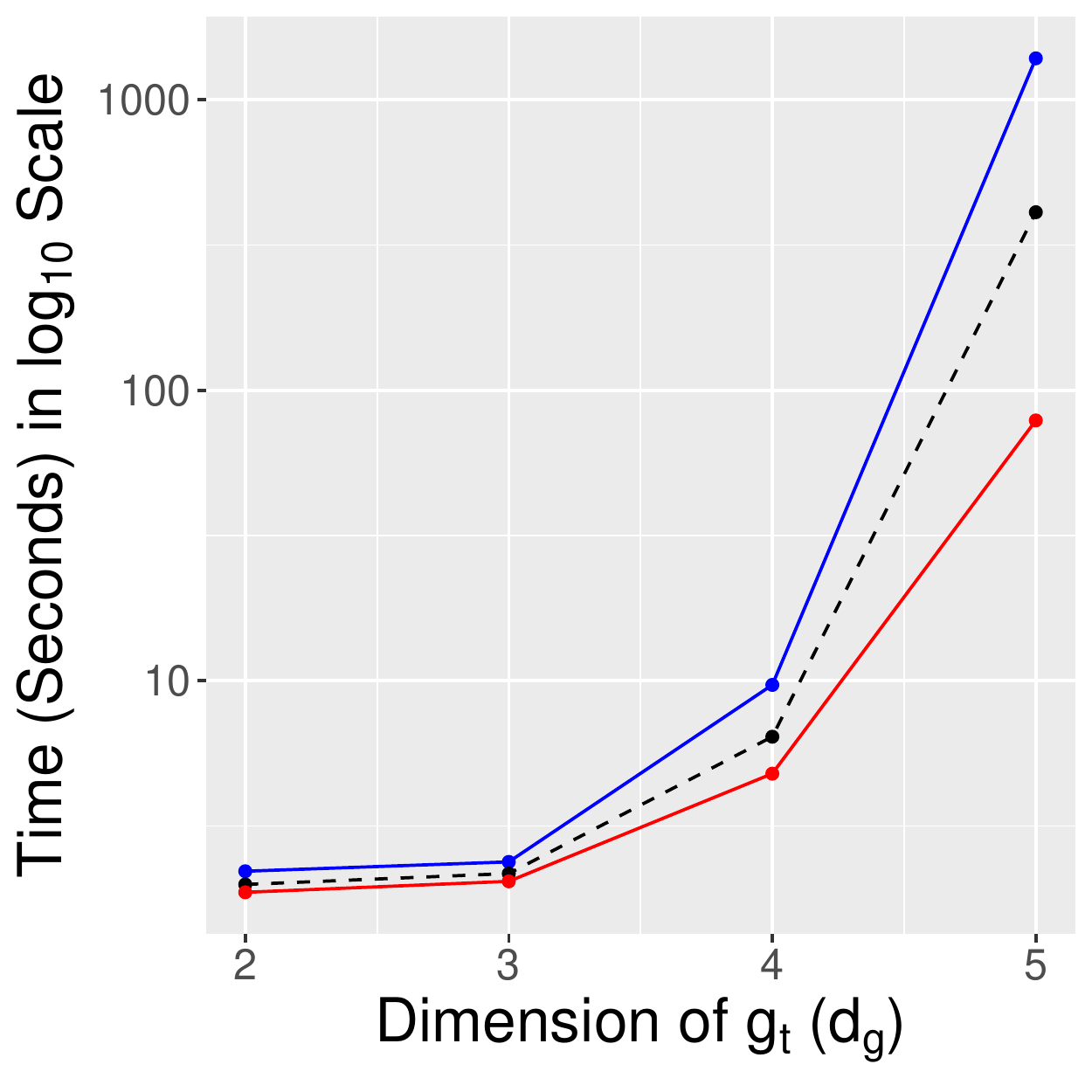}
\end{tabular}
\end{figure}

\begin{figure}[h]
\centering
\caption{Large Dimensional Models ($T=500$)}\label{fig-large}
\begin{tabular}[t]{c c}
$d_x = 6$ & $d_x=10$  \\
 \includegraphics[width=5cm, height=5cm]{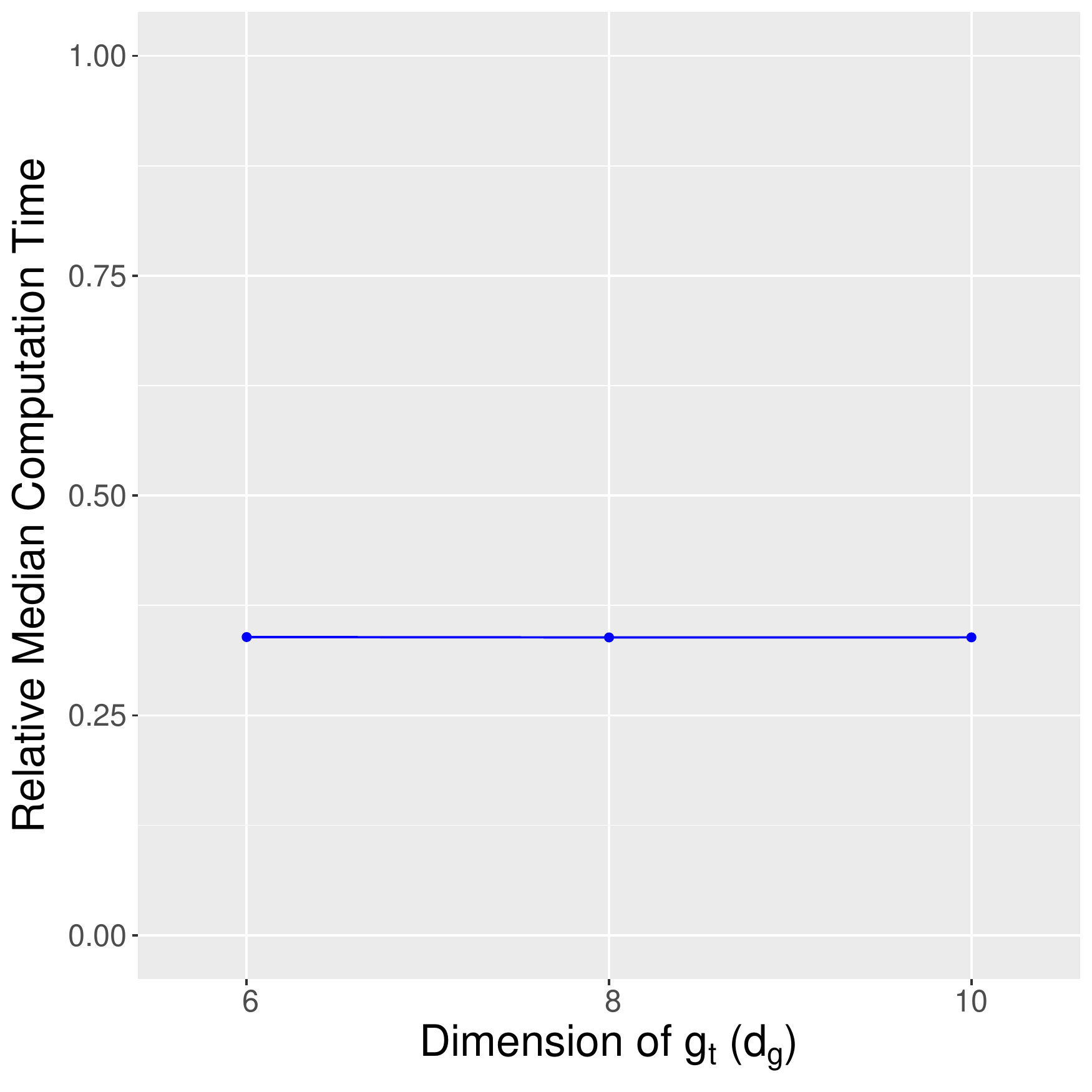} &
 \includegraphics[width=5cm, height=5cm]{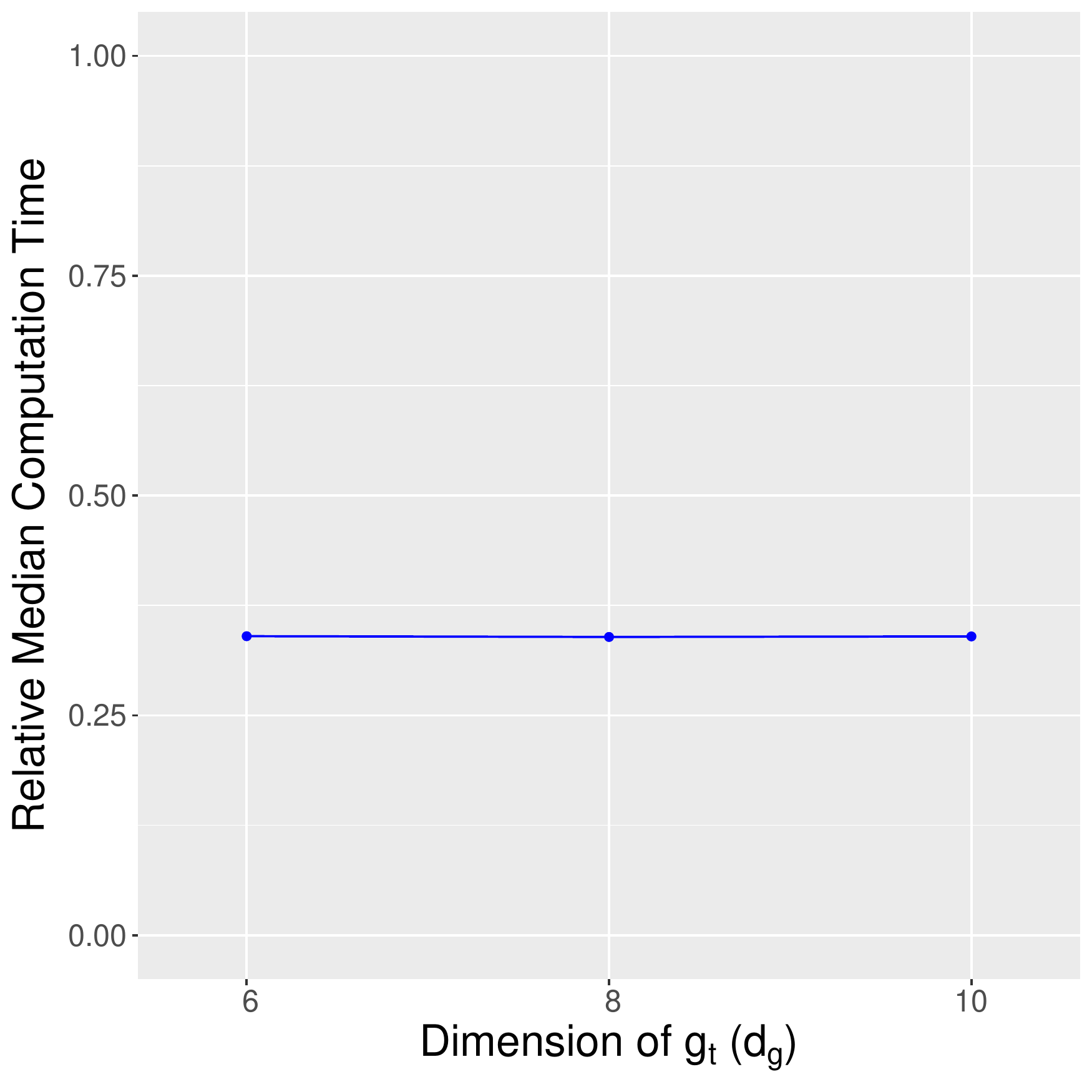}
  \\
 \includegraphics[width=5cm, height=5cm]{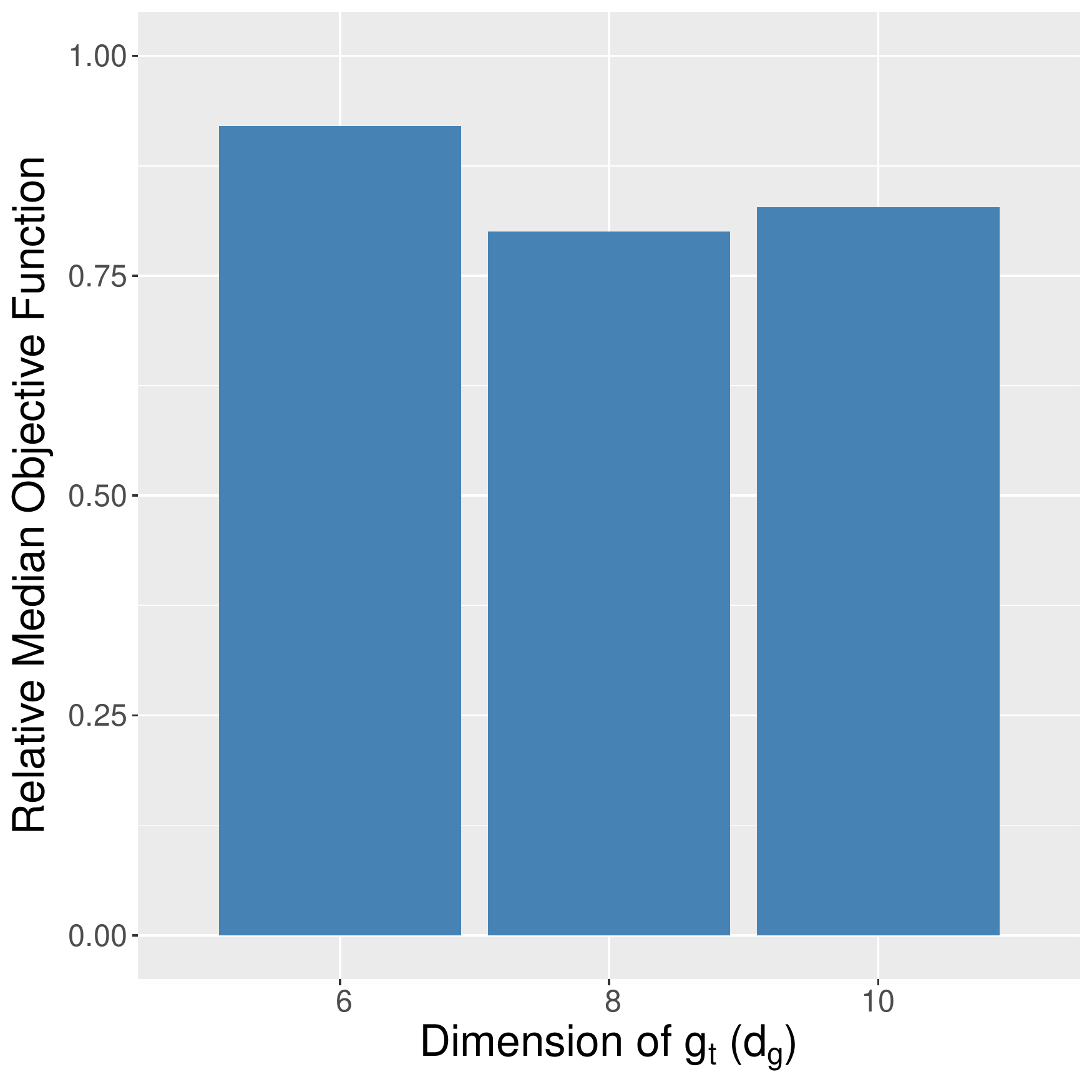} &
 \includegraphics[width=5cm, height=5cm]{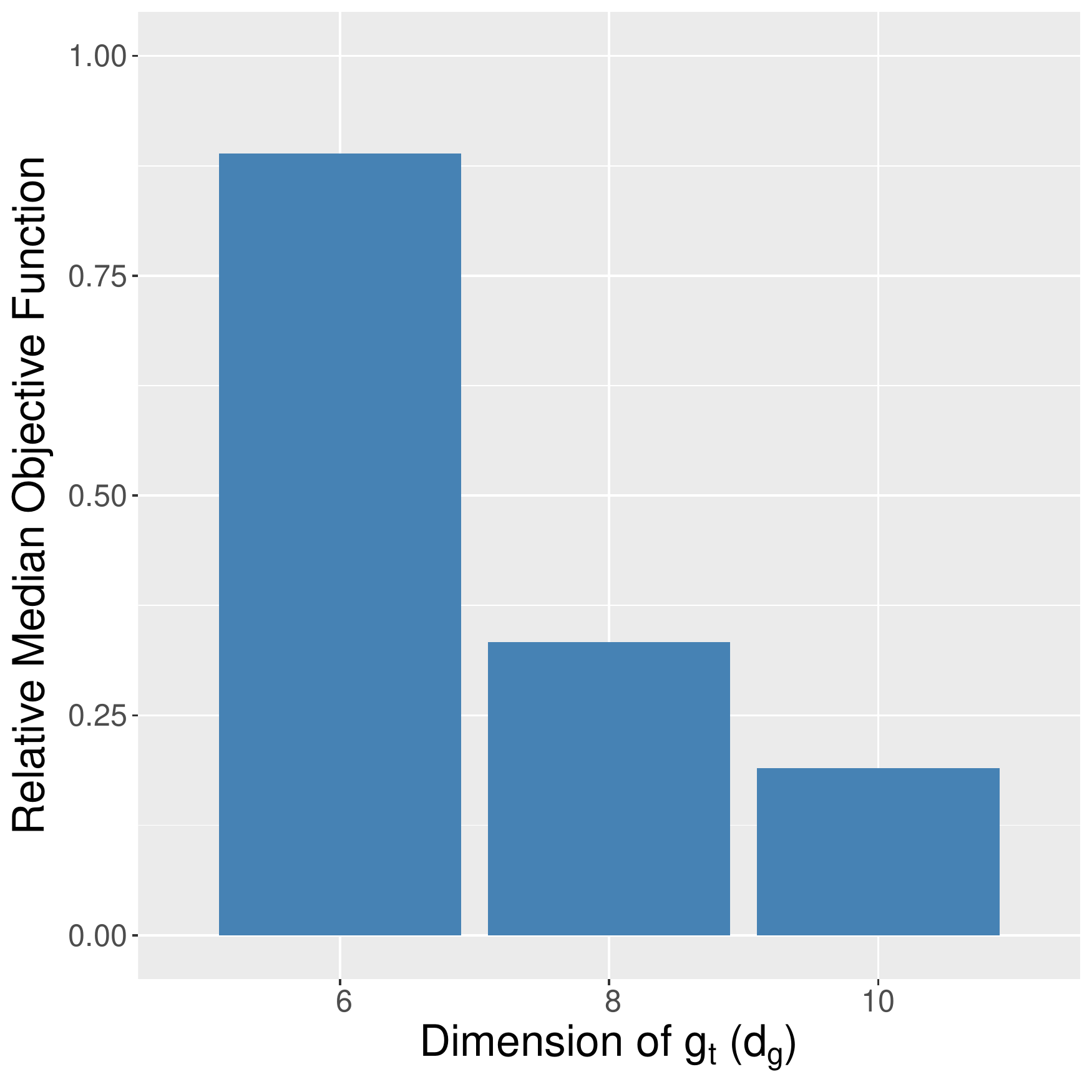}   \\
\end{tabular}
\par
\parbox{5.5in}{\footnotesize Note. The relative measures  are calculated by dividing the outcome of BCD by that of MIQP.}
\end{figure}

Second, we focus on the unobserved factor model and investigate the performance as $N$ increases. For each simulated sample of $\{y_t, x_t, g_t\}$, we generate $\mathcal{Y}_t$ with $N=100, 200, 400, 1600$. We use the same baseline design with $T=200$, $d_x=2$, but $K=1$ to speed up computations. Figure \ref{fig-sim2} summarizes the results. The regimes are predicted more precisely as $N$ increases and the performance of the estimator improves. We observe relatively more improvements in $\gm$ rather than $\alpha$. This is because $\widehat \alpha$ already enjoys the oracle property, provided that $T= O(N)$.

Third, we investigate the performance of the bootstrap test under three scenarios: (i) an estimated factor; (ii) a known factor;  (iii) many known factors. The parameters are set as follows: $T=200$, $N=400$, $B=499$, $\eps_t\sim N(0,1)$, $\eta_t \sim N(0,1)$, $\bt_0=(1,1)$, $\dt_0=(0.5,0.5)$, $\gm_0 (\mbox{or }\phi_0) = (1,0)$ in (i) and (ii), and $\phi_0 = (1,0,0,0)$ in (iii). We test a simple null hypothesis of $H_0: \gm_{02}(\mbox{or }\phi_{02})=0$ in (i) and (ii) and a joint hypothesis of $H_0: \phi_{02}=\phi_{03}=0$ in (iii). There is no serial correlation in the model ($\rho_x=\rho_g=\rho_e=0$). Table \ref{tb:bootstrap} reports the size of the bootstrap test in each scenario and it is satisfactory but we observe over-rejection in the joint hypothesis case.

We next investigate the computation time. We start from a set of simple models and extend to large dimensional models.
We simplify the baseline model by considering scenario (ii) (i.e., Observed/No Selection), and by setting $\rho_x=\rho_g=0$. The results are based on 100 replications. We set $T=200$, $d_x=1$, and $d_g=2$, initially and increase each dimension as follows: $T=\{200, 300, 400, 500\}$, $d_x=\{1, 2, 3, 4\}$ while keeping $T=200$ and $d_g=2$; $d_g=\{2,3,4,5\}$ while keeping $T=200$ and $d_x=1$.
Figure \ref{fig-time1} reports the computation time of MIQP.
The results indicate that the computation time stays in a reasonable bound and increases linearly as $T$ and $d_x$ increase. However, it increases exponentially as $d_g$ increases.

\begin{figure}[h]
\centering
\caption{Larger Dimensional Models using BCD ($T=1000$ and $d_x = 6$)}\label{fig-very-large}
\begin{tabular}[t]{c c}
 \includegraphics[width=5cm, height=5cm]{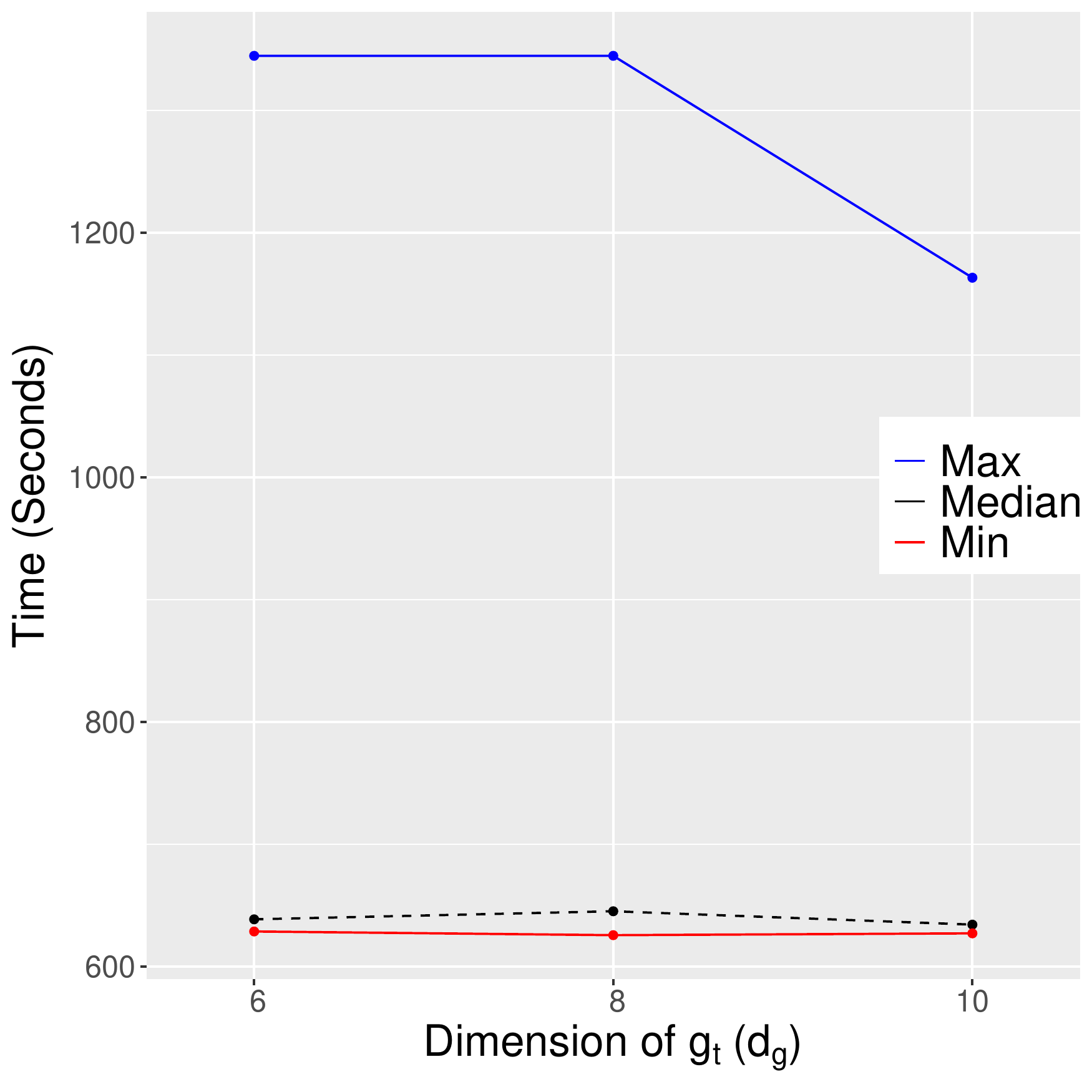}  &
 \includegraphics[width=5cm, height=5cm]{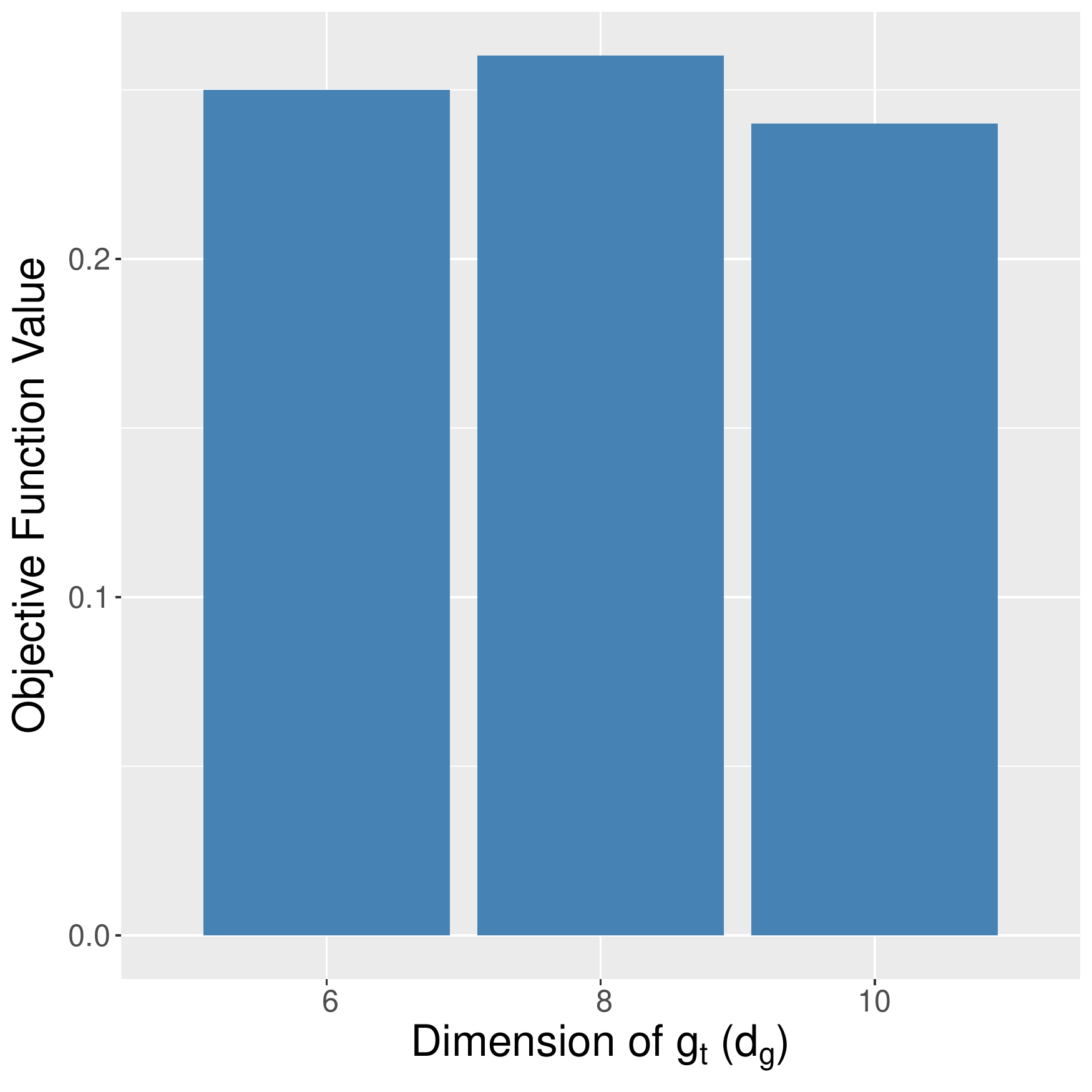}  \\
\end{tabular}
\end{figure}

We now consider large dimensional models and handle the computational challenge by implementing the BCD algorithm in addition to MIQP. We extend the dimension of the models as $T=\{500, 1000\}$, $d_x=\{6, 8, 10\}$, and $d_g = \{6, 8, 10\}$. Note that $d_g=10$ would be quite challenging and the standard grid search method would be infeasible in practice with $T=$ 1,000. The results are based on 10 iterations of each model. We set the total time budget as 1,800 seconds for both MIQP and BCD so that each estimation terminates after that even if it does not converge. In BCD, we set \texttt{MaxTime\_1}=600 (seconds) and \texttt{MaxTime\_2}=60 (seconds).
Figure \ref{fig-large} reports the ratio of the median computation time and median objective function values between BCD and MIQP when $T=500$. BCD spends a third of the computation time, whereas MIQP spends the total time budget. BCD achieves better objective function values in all cases and the performance of MIQP deteriorates quickly as $d_g$ increases when $d_x = 10$. Figure \ref{fig-very-large} reports the summary statistics of computation time and the median objective function values of BCD when $T=$ 1,000. As the computation is more challenging, we observe that the maximum computation time is higher for all $d_g$. However, the median computation time is still around 600 seconds and the achieved objective function values are quite stable.

Based on our simulation studies, we propose to use the BCD algorithm by assigning $1/3$ of the total time budget into the maximum time (\texttt{MaxTime\_1}) for Step 1 (MIQP). When the global solution is not attainable within \texttt{MaxTime\_1}, the BCD algorithm would switch into Steps 2--3 (MILP) automatically. We recommend assigning 1/30 of the total time budget into the maximum time (\texttt{MaxTime\_2}) for each cycle of Step 2. 

In summary, the simulation studies reveal that the proposed method achieves the properties predicted by the asymptotic theory, especially the oracle property of $\ap$ and the inference based on the bootstrap method. The BCD algorithm also shows quite satisfactory results in a large dimensional change-point model whose computation is infeasible with grid search.

\section{Classifying the Regimes of US Unemployment}\label{sec:real-data-app}

We revisit the empirical application of Hansen \citep{Hansen:97},  who
considered  threshold autoregressive models for the US unemployment rate.
Specifically, Hansen \citep{Hansen:97} used monthly unemployment rates (i.e., $u_t$) for males age 20 and over, and  set $y_t = \Delta u_t$ in \eqref{model1}. The lag length in the autoregressive model was $p=12$ and the preferred threshold variable was
$q_{t-1} = u_{t-1} - u_{t-12}$.  In this section, we investigate the usefulness of using unknown but estimated factors. We use  the first principal component (i.e., $F_t$) of Ludvigson and Ng \citep{Ludvigson:Ng:09}  that is estimated from 132 macroeconomic variables. 
This  factor not only explains the largest fraction of the total variation in their panel data set but also loads heavily on employment, production, and so on. Ludvigson and Ng call it a \emph{real factor} and thus it is a legitimate candidate for explaining the unemployment rate.
We consider three different specifications for $f_t$:  (1) $f_{1t} = (q_{t-1}, -1)$,  (2) $f_{2t} = (F_{t-1}, -1)$, and (3) $f_{3t} = (q_{t-1}, F_{t-1}, -1)$.
We combined the updated estimates of the real factor, which are available on Ludvigson's web page at \url{https://www.sydneyludvigson.com},
with Hansen's data, yielding a  monthly sample from March 1960 to July 1996.

\begin{table}[h]
\caption{Estimation Results}
\begin{center}
\begin{tabular}{lrrrrrr}
\hline\hline
 Specification   & \multicolumn{2}{c}{(1) $f_{1t} = (q_{t-1}, -1)$} & \multicolumn{2}{c}{(2) $f_{2t} = (F_{t-1}, -1)$} & \multicolumn{2}{c}{(3) $f_{3t} = (q_{t-1}, F_{t-1}, -1)$} \vspace*{1ex} \\
\hline
Regime 1    (``Expansion'')        & \multicolumn{2}{c}{$q_{t-1} \leq 0.302$} & \multicolumn{2}{c}{$F_{t-1} \leq -0.28$} & \multicolumn{2}{c}{$q_{t-1} + 3.55 F_{t-1} \leq -1.60$} \vspace*{1ex} \\
 Prediction error      & \multicolumn{2}{c}{0.0264}  & \multicolumn{2}{c}{0.0272} & \multicolumn{2}{c}{0.0252} \vspace*{1ex} \\
Classification error         & \multicolumn{2}{c}{0.193}  & \multicolumn{2}{c}{0.106} & \multicolumn{2}{c}{0.104} \vspace*{1ex} \\
\hline
\end{tabular}
\end{center}
\label{table-hansen97-short}
\parbox{6.5in}{\footnotesize Note. See Table \ref{table-hansen97} in the Online Appendix for estimated coefficients and their heteroskedasticity-robust standard errors.
Regime 2 (``Contraction'') is the complement of regime 1.
``Prediction Error'' refers to the average of squared residuals $(T^{-1} \sum_{i=1}^T \widehat{\varepsilon}_t^2)$.
``Classification Error'' corresponds to the proportion of misclassification defined in \eqref{def:correct_matches}.
}
\end{table}

\begin{figure}[h]
	\caption{Regime Classification}
	\label{fig-hansen97}
	\begin{center}
		\graphicspath{ {plot/} }
		\includegraphics[scale=0.3]{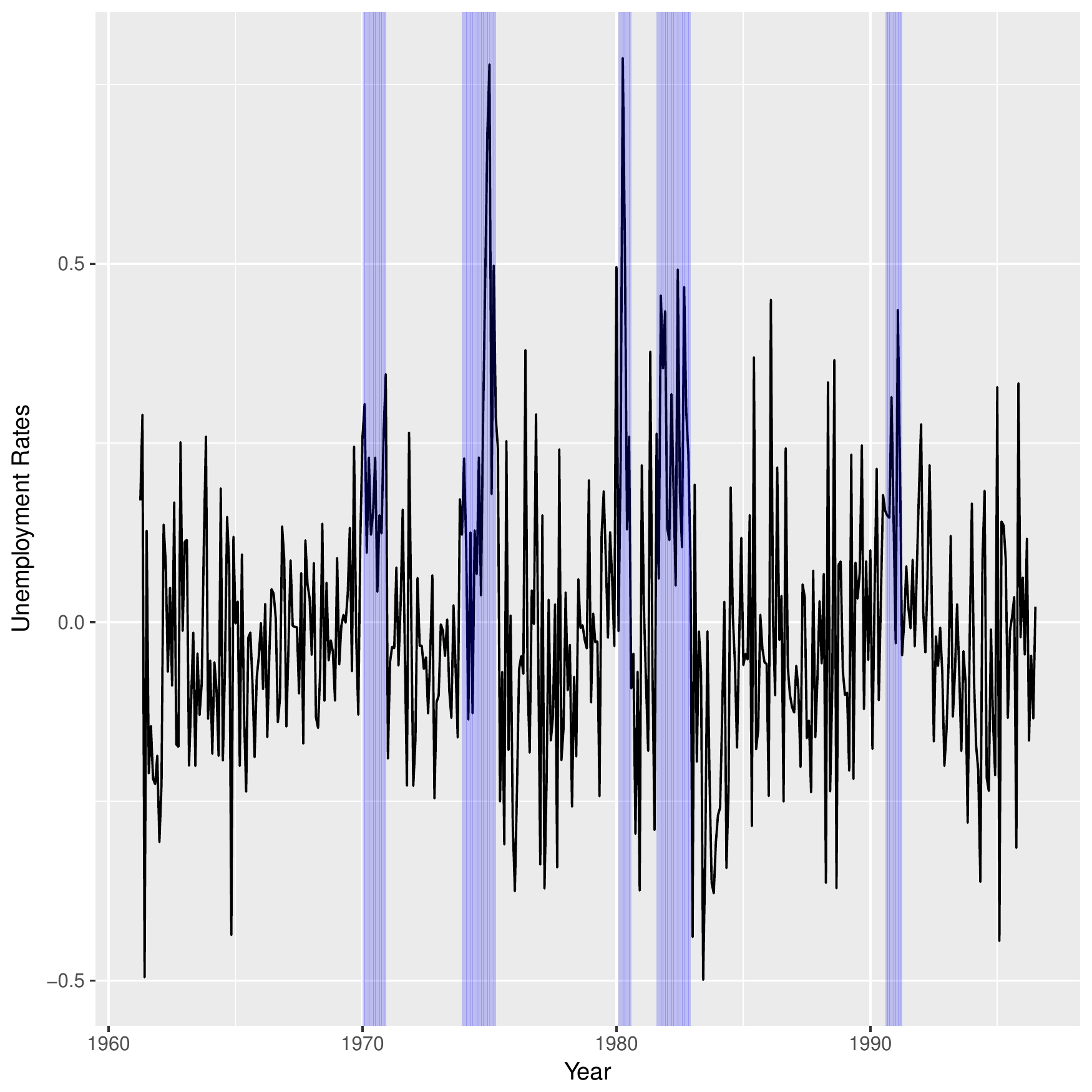}
		\includegraphics[scale=0.3]{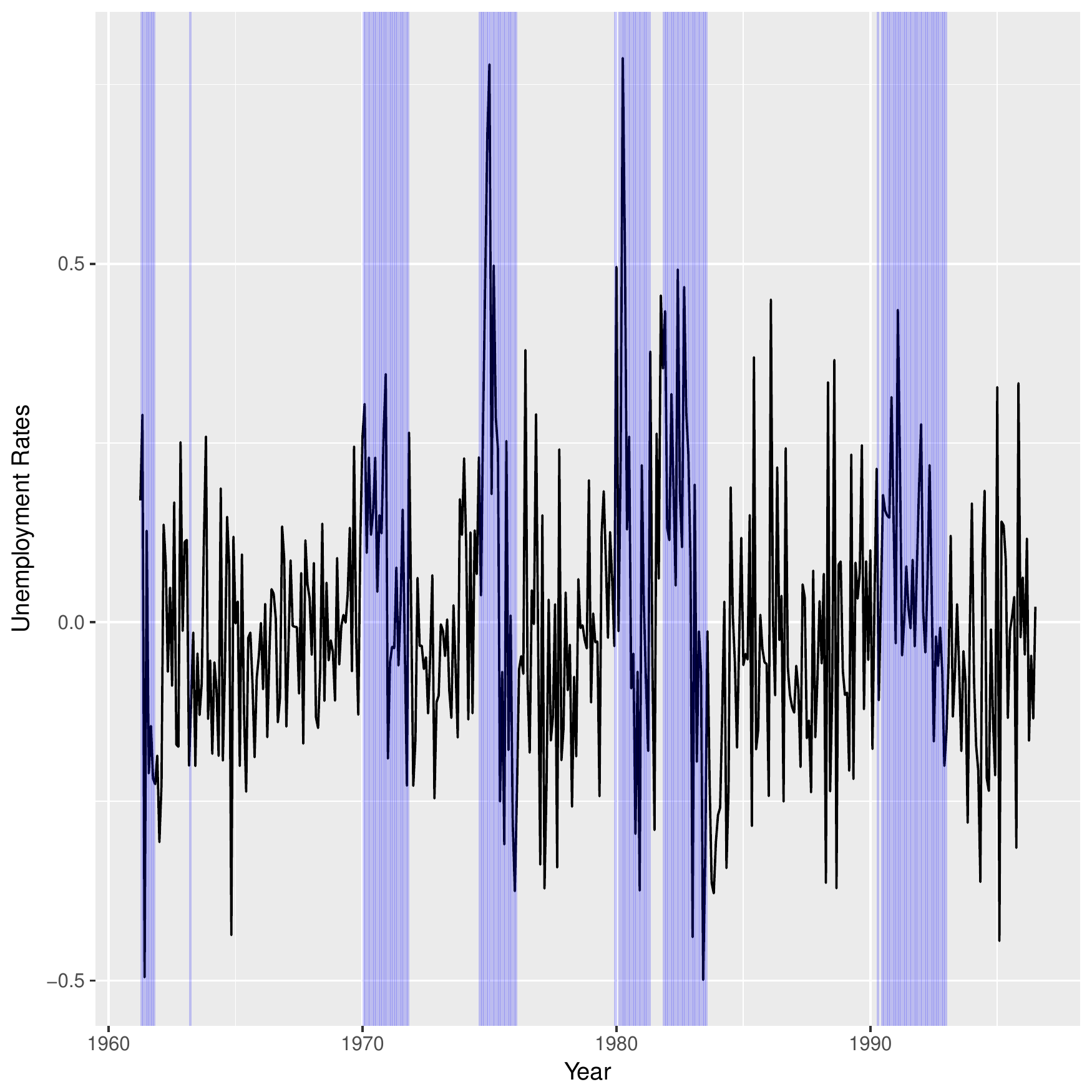}
		\includegraphics[scale=0.3]{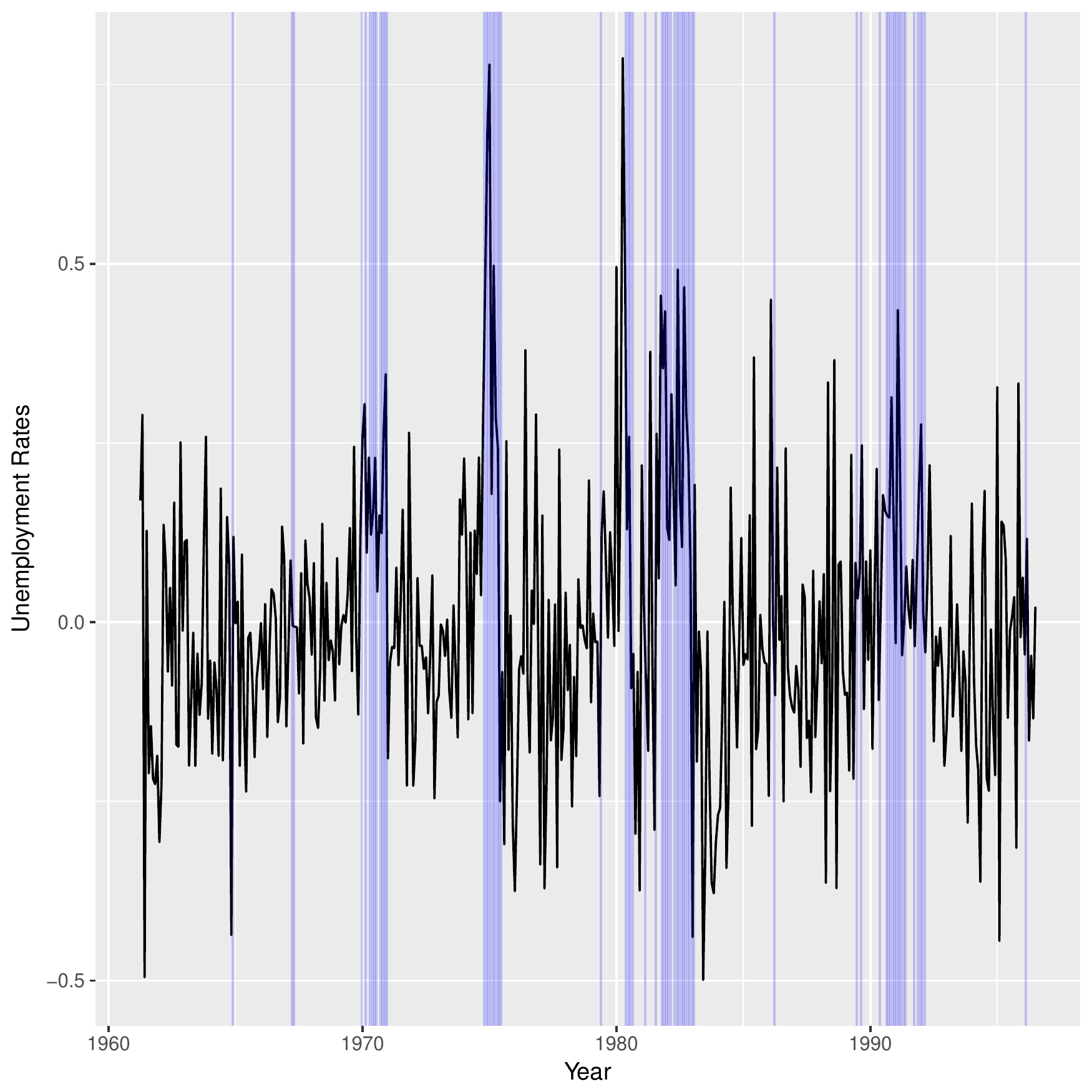}
		\includegraphics[scale=0.3]{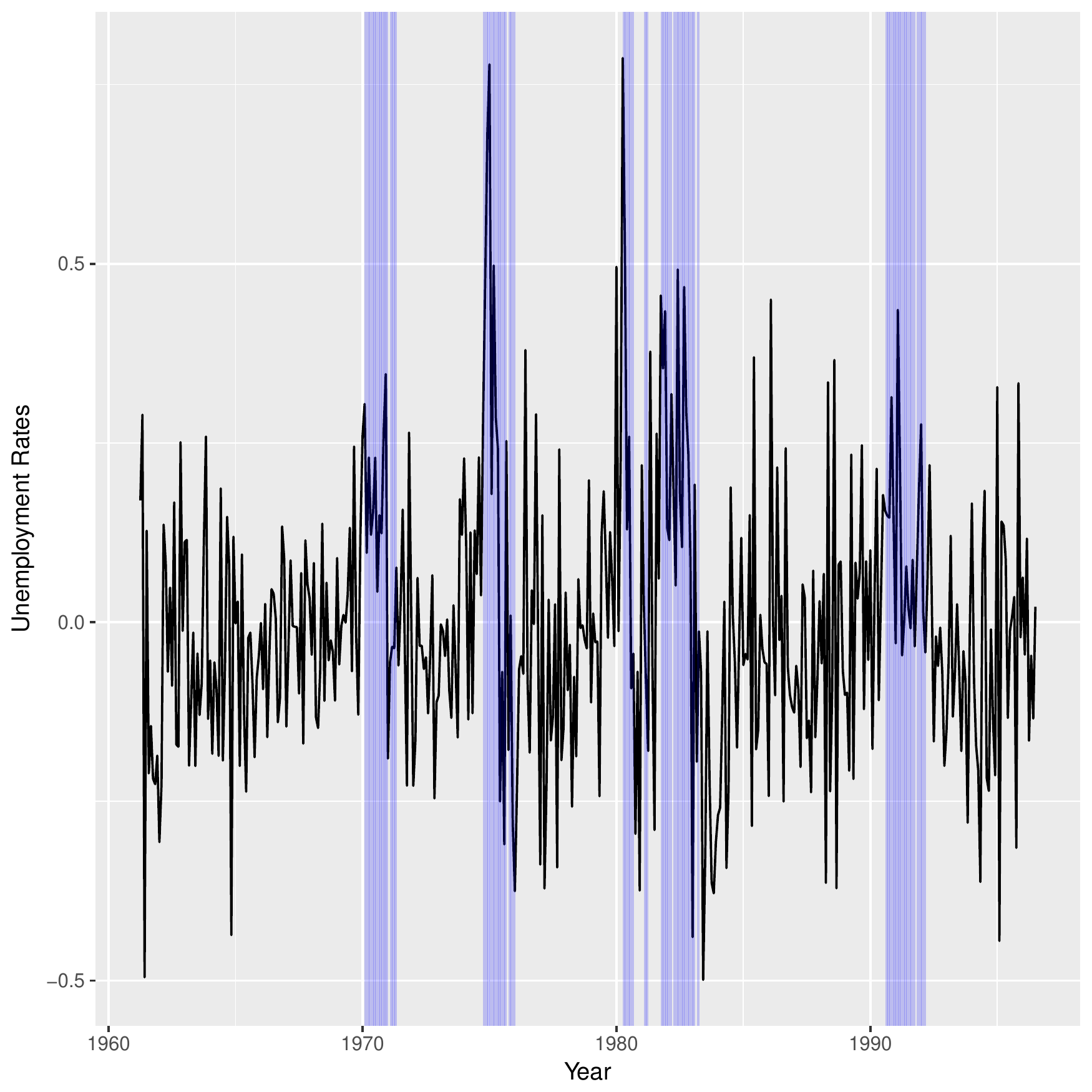}
	\end{center}

\parbox{5in}{Note. The leftmost panel shows NBER recession dates in the shaded area, and
the other three panels display those with specifications (1), (2) and (3), respectively.}
\end{figure}

Table \ref{table-hansen97-short} reports estimation results that are obtained by the MIQP algorithm. We
show the goodness of fit by reporting the average  squared residuals and also the results of regime misclassification relative to the NBER business cycle dates.
The latter is obtained by
\begin{align}\label{def:correct_matches}
 \frac{1}{T}\sum_{t=1}^{T}\Big\vert 1\left\{ {f}_{jt}^{\prime }\widehat{\gamma}_j>0\right\} - 1_{\text{NBER}, t}  \Big\vert   \; \text{ for each $j=1,2,3$},
\end{align}
where $1_{\text{NBER}, t}$ is the indicator function that has value 1 if and only if the economy is in contraction according to the NBER dates.
Accordingly, we label regime 1 ``expansion'' and regime 2  ``contraction'', respectively.
Figure \ref{fig-hansen97} gives the graphical representation of regime classification.
Specification (1) suffers from the highest level of misclassification and tends to classify recessions more often than NBER.
Specification (2) mitigates the misclassification risk but at the expense of a worse goodness of fit. On one hand,
the threshold autoregressive model  solely by $q_{t-1}$  fittingly explains the unemployment rate but  is short of classifying the overall economic conditions satisfactorily. On the other hand, the model based only on $F_{t-1}$ is adequate at describing the underlying overall economy but does not  explain the unemployment rate well.
It turns out that specification (3) has the lowest misclassification error and best explains unemployment. Thus, we have shown the real benefits of  using a vector of possibly unobserved factors to explain the unemployment dynamics.

As an additional check,  we tested the null hypothesis of no threshold effect.   The resulting  $p$-value is 0.002 based on 500 bootstrap replications, thus providing strong evidence for the existence of two regimes. See Table \ref{table-hansen97} in the Online Appendix for details and additional results.

\section{Conclusions}\label{sec:conclusions}



We have proposed a new method for estimating a two-regime regression model where
regime switching  is driven by a vector of possibly unobservable factors. We show that our optimization problem can be reformulated as  MIO and have presented two alternative computational algorithms.
We have also derived the asymptotic distribution of the resulting estimator under the scheme that the threshold effect shrinks to zero as the sample size tends to infinity.
 As a possible interesting extension,  we can consider nonparametric regime switching, where the switching indicator is replaced by $1\{F(w_t)>0\}$ with  a vector of observables $w_t$ and a nonparametric function $F(\cdot)$.  We intend to study this in the future.



%
%
\newpage

\appendix

\addcontentsline{toc}{section}{Appendix} 
\part{Appendix} 
\parttoc 

\newpage

\section{Identification}
\label{sec:identification}

In this section, we  establish sufficient conditions under which $(\beta_0', \delta_0', \gamma_0')'$ is identified.
Recall that the covariates $x_t$ and $f_t$ may not be directly observable  in our general setup; however, since we assume that they can be consistently estimable, it suffices to consider the identification of the unknown parameters under the simple setup that $x_t$ and $f_t$ are observed directly from the data.

If  there is no random variable in $f_t$ with a non-zero coefficient, $\gamma_0$ is unidentifiable.
 Assumption  \ref{scale-normalization} in the main text
avoids this directly by assuming that the first coefficient of $\gamma_0$ is 1.\footnote{Alternatively, it could be $-1$; however, the choice between $+1$ and $-1$ is just a labelling issue since two regimes are equivalent up to reparametrization of $\alpha_0$ under either scale normalization.}
We partition $f_{t}=\left( f_{1t},f_{2t}^{\prime }\right) ^{\prime }$ and $%
\gamma =\left( 1,\gamma _{2}^{\prime }\right) ^{\prime }$,
and write, occasionally,
$
1\left\{ f_{1t}>f_{2t}^{\prime }\gamma \right\}
$
instead of
$
1\left\{ f_{t}^{\prime }\gamma >0\right\}.
$

\begin{remark}[Alternative Scale Normalization]
We may consider an alternative parameter space for $\gamma_0$:
$
\gamma_0 \in \Gamma  \equiv
    \{  \gamma:  |\gamma|_2 = 1, \gamma \neq (0,...,0,1)' $, and $ \gamma \neq (0,...,0,-1)' \}.
 $
This parameter space excludes the case of no real threshold variable by assuming that both $|\gamma|_2 = 1$ and $\gamma \neq (0,...,0,\pm 1)'$ (recall that the last element of $f_t$ is $-1$).
Assumption \ref{scale-normalization} is more convenient for computation since it reduces the number of unknown parameters but it requires to know which factor has a non-zero coefficient.
On the other hand, the alternative parameter space might be more attractive when it is difficult to know which factor has a non-zero coefficient  \emph{a priori}. We focus on the former throughout the paper; however, the main results of the paper could be obtained under the latter.
\end{remark}

We make the following regularity conditions.

\begin{assum}[Identification]
\label{A-f1 conti}
\begin{enumerate}[label=(\roman*)]
\item\label{A1:itm1}
There exists an element $f_{jt}$ in $f_t$ such that   $\gamma _{j0}\neq 0$ and the conditional
distribution of $f_{jt}$ given $f_{-j,t}$ is continuous with
probability one, where $f_{-j,t}$ is the subvector of $f_{t}$ excluding $%
f_{jt}$.
\item\label{A1:itm2}
Let $B_{\gamma t} \equiv \left\{ f_t'\gamma_0\leq
0<f_{t}^{\prime }\gamma \right\} \cup \left\{ f_{t}^{\prime }\gamma \leq
0<f_t'\gamma_0\right\} .$ Then, for any $\gamma \in \Gamma$ such that $\gamma \neq \gamma_{0}$,
\begin{equation}
 \mathbb    E\left[ \left( x_{t}^{\prime }\delta _{0}\right) ^{2}1\left\{ B_{\gamma t
}\right\} \right] >0.  \label{eq:rank1}
\end{equation}%
\item\label{A1:itm3}
Let  $A_{1 \gamma t} \equiv \left\{ f_t'\gamma_0>0\right\} \cap \left\{ f_{t}^{\prime
}\gamma >0\right\} $ and
$A_{2 \gamma t} \equiv \left\{ f_t'\gamma_0\leq 0\right\} \cap \left\{
f_{t}^{\prime }\gamma \leq 0\right\}$. Then,
\begin{equation}
 \inf_{\gamma \in \Gamma }\mathbb E \left[ x_{t}x_{t}^{\prime }1\left\{ A_{1 \gamma t}\right\} \right] >0\; \; \text{and } \; \;
 \inf_{\gamma \in \Gamma }\mathbb E \left[ x_{t}x_{t}^{\prime }1\left\{ A_{2 \gamma t}\right\} \right] >0.  \label{eq:rank2}
\end{equation}
\end{enumerate}
\end{assum}

Recall that
\begin{align}\label{eq:excess:supp}
R(\alpha ,\gamma ) \equiv \mathbb E(y_{t}-x_{t}^{\prime }\beta -x_{t}^{\prime }\delta
1\{f_{t}^{\prime }\gamma >0\})^{2}- \mathbb E(y_{t}-x_{t}^{\prime }\beta
_{0}-x_{t}^{\prime }\delta _{0}1\{f_t'\gamma_0>0\})^{2}.
\end{align}%
Note that  under Assumption \ref{A-f1 conti}\ref{A1:itm1}, $R\left( \cdot ,\cdot \right) $
is continuous.
The condition (\ref{eq:rank1}) ensures the presence of a change in
the regression function. If $\delta_0 = 0$, then \eqref{eq:rank1} is not satisfied.
A sufficient condition for \eqref{eq:rank1} is to assume that  that there exists
some $\eta >0$ such that any open subset  of $F_{\eta } \equiv \left\{
f_{t}:\left\vert f_t'\gamma_0\right\vert \leq \eta \right\} $
possesses a positive probability (dense support) and that%
\begin{equation*}
 \mathbb E\left[ \left( x_{t}^{\prime }\delta _{0}\right) ^{2} \big| f_{t}=z\right] >0
\end{equation*}%
for all but finitely many $z\in \left\{ z:\left\vert z^{\prime }\gamma
_{0}\right\vert \leq \eta \right\} $ (rank condition).

The condition (\ref{eq:rank2}) is satisfied, for example, if
\begin{equation}\label{rank2-suff}
 \mathbb E \left[ x_{t}x_{t}^{\prime }1\left\{ \inf_{\gamma \in \Gamma} f_{t}^{\prime }\gamma >0\right\} \right] >0\ \text{and }%
 \mathbb E \left[ x_{t}x_{t}^{\prime }1\left\{ \sup_{\gamma \in \Gamma} f_{t}^{\prime }\gamma \leq 0\right\}
\right] >0.
\end{equation}
Note that \eqref{rank2-suff} requires that (i) the parameter space $\Gamma$ satisfies \begin{equation}
\mathbb{P}\left( \bigcap_{\gamma \in \Gamma }\left\{ f_{t}^{\prime }\gamma >0\right\}
\right) >0\ \text{and }\mathbb{P}\left( \bigcap_{\gamma  \in \Gamma}\left\{
f_{t}^{\prime }\gamma \leq 0\right\} \right) >0  \label{eq:Gamma}
\end{equation}%
and (ii)
$ \mathbb E\left( x_{t}x_{t}^{\prime }|f_{t}=z\right) $ has full rank for
some $z$ belonging to $\left\{z: \inf_{\gamma \in \Gamma}z^{\prime }\gamma >0\right\} $
and also for some $z$ such that $\left\{z: \sup_{\gamma \in \Gamma}z^{\prime }\gamma \leq 0\right\}$.
In other words, there should be
some non-negligible fraction of observations in each regime for any $\gamma \in \Gamma$.
However, we cannot simply assume that $ \mathbb E\left( x_{t}x_{t}^{\prime }|f_{t}=z%
\right) >0$ for all $z$ since $x_{t}$ may contain $f_{t}$ and thus the
positive-definiteness may not hold for all $z$.

\begin{remark}
It is possible to provide sufficient conditions for Assumption \ref{A-f1 conti}  in a more compact form if $x_{t}$ does not contain $f_{t} = (f_{1t}, f_{2t}')'$ other than the constant $1$.
 For instance, in that case,
 it suffices to assume that $\delta _{0}\neq 0$, the conditional
distribution of $f_{1t}$ given $f_{2t}$ has everywhere positive density with
respect to Lebesque measure for almost every $f_{2t}$,  and both $ \mathbb E \left( f_{2t}f_{2t}^{\prime } \right) >0$ and $ \mathbb E\left( x_{t}x_{t}^{\prime }|f_{t}\right)
>0 $ a.s.
\end{remark}

The following theorem gives the identification and well-separability of $(\alpha_0', \gamma_0')'$.

\begin{thm}[Identification]\label{iden-thm}
If Assumptions \ref{scale-normalization} and \ref{A-f1 conti} hold, then $\left( \alpha_{0}',\gamma _{0}'\right) $ is the unique solution to%
\begin{equation*}
\min_{(\alpha', \gamma')' \in \mathbb{R}^{2d_x} \times \Gamma }
\mathbb E(y_{t}-x_{t}^{\prime }\beta -x_{t}^{\prime }\delta
1\{f_{t}^{\prime }\gamma >0\})^{2}
\end{equation*}%
and
\begin{equation*}
\inf_{\{(\alpha', \gamma')' \in \mathbb{R}^{2d_x} \times \Gamma: |(\alpha', \gamma') - (\alpha_0', \gamma_0')|_2 > \varepsilon \} }
R\left( \alpha,\gamma \right) >0
\end{equation*}%
for any $\varepsilon >0$.
\end{thm}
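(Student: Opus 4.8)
The plan is to verify the two claims in turn, deducing both from the regularity conditions in Assumption~\ref{A-f1 conti}. The natural first step is to decompose the excess loss $R(\alpha,\gamma)$ in \eqref{eq:excess:supp} by conditioning on $(x_t,f_t)$ and using the conditional mean restriction $\mathbb E(\varepsilon_t|\mathcal F_{t-1})=0$ together with adaptedness of $x_t,f_t$. Writing $y_t - x_t'\beta - x_t'\delta 1\{f_t'\gamma>0\} = \varepsilon_t + x_t'(\beta_0-\beta) + x_t'(\delta_0 1\{f_t'\gamma_0>0\} - \delta 1\{f_t'\gamma>0\})$ and squaring, the cross term with $\varepsilon_t$ vanishes in expectation, so
\[
R(\alpha,\gamma) = \mathbb E\big[ \big( x_t'(\beta_0-\beta) + x_t'\delta_0 1\{f_t'\gamma_0>0\} - x_t'\delta 1\{f_t'\gamma>0\} \big)^2 \big] \ge 0,
\]
with $R(\alpha_0,\gamma_0)=0$. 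This immediately shows $(\alpha_0,\gamma_0)$ is \emph{a} minimizer; it remains to show it is the unique one and that $R$ is bounded away from zero outside $\varepsilon$-balls.

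For uniqueness, suppose $R(\alpha,\gamma)=0$. Then the integrand above is zero a.s. First I would split on the event $B_{\gamma t}$ where the two indicators disagree: on $B_{\gamma t}$ the integrand reduces (up to the sign of which regime is active) to a quadratic in $x_t'(\beta_0-\beta\pm\delta_{\text{(one side)}})$ plus $(x_t'\delta_0)^2$ or $(x_t'\delta)^2$ type terms; more cleanly, one uses \eqref{eq:rank1} to force $\gamma=\gamma_0$: if $\gamma\neq\gamma_0$ then $\mathbb E[(x_t'\delta_0)^2 1\{B_{\gamma t}\}]>0$, and on $B_{\gamma t}$ with $\gamma=\gamma_0$ excluded one can show the integrand is bounded below by a positive multiple of $(x_t'\delta_0)^2 1\{B_{\gamma t}\}$ after optimizing out $\beta$ and $\delta$ — hence $R>0$, a contradiction. (This is where Assumption~\ref{A-f1 conti}\ref{A1:itm1}, giving continuity of $R$ and non-triviality of the disagreement sets, and the rank condition \eqref{eq:rank1} do the work.) Once $\gamma=\gamma_0$ is pinned down, the integrand becomes $\big(x_t'(\beta_0-\beta) + x_t'(\delta_0-\delta)1\{f_t'\gamma_0>0\}\big)^2 = \big(Z_t(\gamma_0)'(\alpha_0-\alpha)\big)^2$, whose expectation is $(\alpha_0-\alpha)' \mathbb E[Z_t(\gamma_0)Z_t(\gamma_0)'](\alpha_0-\alpha)$; by \eqref{eq:rank2} (equivalently positive-definiteness of $\mathbb E[Z_t(\gamma_0)Z_t(\gamma_0)']$, which follows from the two rank conditions on $A_{1\gamma_0 t}$ and $A_{2\gamma_0 t}$), this forces $\alpha=\alpha_0$. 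This establishes the first display.

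For the well-separation claim, I would argue by contradiction using compactness. Suppose $\inf\{ R(\alpha,\gamma): |(\alpha',\gamma')-(\alpha_0',\gamma_0')|_2>\varepsilon\}=0$ for some $\varepsilon>0$; take a sequence $(\alpha_n,\gamma_n)$ in that set with $R(\alpha_n,\gamma_n)\to 0$. Since $\gamma_n\in\Gamma$ (compact) and $R(\alpha,\gamma)$ is coercive in $\alpha$ — because $\mathbb E[Z_t(\gamma)Z_t(\gamma)']$ has smallest eigenvalue bounded away from zero uniformly in $\gamma$ (Assumption~\ref{A-mixing}\ref{A-mixing:itm3}, or directly \eqref{eq:rank2}), so $R(\alpha,\gamma)\to\infty$ as $|\alpha|_2\to\infty$ uniformly in $\gamma$ — the sequence $(\alpha_n)$ is bounded. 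Pass to a convergent subsequence $(\alpha_n,\gamma_n)\to(\alpha_*,\gamma_*)$ with $|(\alpha_*',\gamma_*')-(\alpha_0',\gamma_0')|_2\ge\varepsilon$, hence $(\alpha_*,\gamma_*)\neq(\alpha_0,\gamma_0)$. By continuity of $R$ (Assumption~\ref{A-f1 conti}\ref{A1:itm1}), $R(\alpha_*,\gamma_*)=0$, contradicting the uniqueness just proved. The one delicate point is the continuity of $R$ at arbitrary $(\alpha,\gamma)$: the map $\gamma\mapsto \mathbb E[\,\cdot\,1\{f_t'\gamma>0\}]$ is continuous precisely because, under Assumption~\ref{A-f1 conti}\ref{A1:itm1}, $\mathbb P(f_t'\gamma=0)=0$ for every $\gamma\in\Gamma$ (the coordinate $f_{jt}$ with $\gamma_{j0}\neq 0$ has a conditional density), so indicators converge a.s.\ along $\gamma_n\to\gamma$ and dominated convergence applies given $\mathbb E|x_t|_2^4<\infty$.

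\medskip

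\noindent\textbf{Main obstacle.} The routine parts are the algebra of the decomposition and the compactness argument. The genuine work is the uniqueness step — specifically, showing that on the disagreement region $B_{\gamma t}$ one cannot make the squared integrand vanish by any choice of $\beta,\delta$ unless $\gamma=\gamma_0$. The subtlety is that $x_t$ may share components with $f_t$, so one cannot treat $x_t'(\beta_0-\beta)$ and the threshold term as independent; the argument must lean on \eqref{eq:rank1} in the stated conditional form and carefully handle the interplay of the two rank conditions \eqref{eq:rank1}–\eqref{eq:rank2} rather than invoking a naive positive-definiteness of $\mathbb E(x_tx_t'|f_t=z)$ for all $z$, which the paper explicitly warns may fail.
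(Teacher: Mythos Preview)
Your decomposition of $R(\alpha,\gamma)$ is correct, and your compactness-plus-coercivity argument for well-separation is sound once uniqueness is in hand. The gap is in the uniqueness step, precisely where you flagged it as the ``main obstacle.'' You propose to pin down $\gamma=\gamma_0$ first by arguing that the integrand on $B_{\gamma t}$ cannot vanish for any $(\beta,\delta)$ unless $\gamma=\gamma_0$. This is false. Take $\beta=\beta_0+\delta_0$ and $\delta=-\delta_0$: on $\{f_t'\gamma_0\le 0<f_t'\gamma\}$ the integrand is $x_t'(\beta_0-\beta)-x_t'\delta=-x_t'\delta_0+x_t'\delta_0=0$, and on $\{f_t'\gamma\le 0<f_t'\gamma_0\}$ it is $x_t'(\beta_0-\beta)+x_t'\delta_0=-x_t'\delta_0+x_t'\delta_0=0$. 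So the integrand vanishes identically on $B_{\gamma t}$ for this $(\beta,\delta)$ regardless of $\gamma$, and condition \eqref{eq:rank1} alone cannot force $\gamma=\gamma_0$. (Of course $R$ is still positive for this choice --- the integrand is $\pm x_t'\delta_0$ on the agreement regions --- but your plan only looks at $B_{\gamma t}$ at that stage.)

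The paper avoids this by reversing the case split: it conditions on whether $\alpha=\alpha_0$ rather than whether $\gamma=\gamma_0$. If $\alpha\neq\alpha_0$, restrict attention to the \emph{agreement} regions $A_{1\gamma t}$ and $A_{2\gamma t}$, where the integrand is $\big(x_t'(\beta-\beta_0+\delta-\delta_0)\big)^2$ and $\big(x_t'(\beta-\beta_0)\big)^2$ respectively; the uniform rank condition \eqref{eq:rank2} then gives $R(\alpha,\gamma)\ge c|\alpha-\alpha_0|_2^2>0$ for every $\gamma\in\Gamma$. If $\alpha=\alpha_0$ and $\gamma\neq\gamma_0$, the integrand on $B_{\gamma t}$ is exactly $(x_t'\delta_0)^2$ (no free parameters left), and \eqref{eq:rank1} gives $R(\alpha_0,\gamma)>0$. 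This case split is the key idea your plan is missing; once you adopt it, the rest of your argument goes through. Note also that the quadratic lower bound $R(\alpha,\gamma)\ge c|\alpha-\alpha_0|_2^2$ already delivers well-separation in the $\alpha$-direction directly, so your sequential compactness argument is only really needed for the $\gamma$-direction near $\alpha=\alpha_0$.
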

Theorem \ref{iden-thm} gives the basis for our estimator given in the main text.


\begin{proof}[Proof of Theorem \protect\ref{iden-thm}]
Note that
\begin{equation*}
R\left( \alpha ,\gamma \right) = \mathbb E\left( Z_{t}\left( \gamma \right) ^{\prime
}\alpha -Z_{t}\left( \gamma _{0}\right) ^{\prime }\alpha _{0}\right) ^{2}
\end{equation*}%
due to $\left( \ref{model1}\right) $ and $\left( \ref{model2}\right) $. We
consider two cases separately: (1) $\alpha =\alpha _{0}$ and $\gamma \neq
\gamma _{0}$ and (2) $\alpha \neq \alpha _{0}$.

First, when $\alpha =\alpha _{0}$ and $\gamma \neq \gamma _{0},$
\begin{equation*}
\left( Z_{t}\left( \gamma \right) ^{\prime }\alpha -Z_{t}\left( \gamma
_{0}\right) ^{\prime }\alpha _{0}\right) ^{2}=\left( x_{t}^{\prime }\delta
_{0}\right) ^{2}
\end{equation*}%
on $B_{\gamma }=\left\{ f_{t}^{\prime }\gamma _{0}\leq 0<f_{t}^{\prime
}\gamma \right\} \cup \left\{ f_{t}^{\prime }\gamma \leq 0<f_{t}^{\prime
}\gamma _{0}\right\} .$ Thus,
\begin{equation*}
R\left( \alpha _{0},\gamma \right) \geq \mathbb{E}\left[ \left( x_{t}^{\prime }\delta
_{0}\right) ^{2}1\left\{ B_{\gamma }\right\} \right] >0
\end{equation*}%
by \eqref{eq:rank1} and $R\left( \alpha _{0},\gamma \right) $ is continuous
at $\gamma =\gamma _{0}$ due to Assumption \ref{A-f1 conti} (i).

Second, if $\alpha \neq \alpha _{0},$%
\begin{equation*}
\left( Z_{t}\left( \gamma \right) ^{\prime }\alpha -Z_{t}\left( \gamma
_{0}\right) ^{\prime }\alpha _{0}\right) ^{2}=\left( x_{t}^{\prime }\left(
\beta -\beta _{0}+\delta -\delta _{0}\right) \right) ^{2}
\end{equation*}%
on $\left\{ f_{t}^{\prime }\gamma _{0}>0\right\} \cap \left\{ f_{t}^{\prime
}\gamma >0\right\} $ and
\begin{equation*}
\left( Z_{t}\left( \gamma \right) ^{\prime }\alpha -Z_{t}\left( \gamma
_{0}\right) ^{\prime }\alpha _{0}\right) ^{2}=\left( x_{t}^{\prime }\left(
\beta -\beta _{0}\right) \right) ^{2}
\end{equation*}%
on $\left\{ f_{t}^{\prime }\gamma _{0}\leq 0\right\} \cap \left\{
f_{t}^{\prime }\gamma \leq 0\right\} $. Thus,
\begin{align}\label{iden-eq-lower-bound}
\begin{split}
R\left( \alpha ,\gamma \right) &\geq \mathbb{E}\left( x_{t}^{\prime }\left( \beta
-\beta _{0}+\delta -\delta _{0}\right) \right) ^{2}1\left\{ A_{1\gamma
t}\right\} \\
&+\mathbb{E}\left( x_{t}^{\prime }\left( \beta -\beta _{0}\right) \right)
^{2}1\left\{ A_{2\gamma t}\right\} \\
&>c\left\vert \alpha -\alpha _{0}\right\vert _{2}^{2},
\end{split}
\end{align}%
for some $c>0\ $due to the rank condition in (\ref{eq:rank2}).

Together, they imply that the minimizer of $R$ is unique and well-separated.
\end{proof}

\section{Additional Details on Computation}\label{sec:add:computation:appendix}

In this section, we provide additional details on computation.
We give the proof of  Theorem \ref{thm:computation:joint} and  present an alternative form of the proposed algorithm in Section \ref{sec:joint:estimation}.

\subsection{Proof for Section \ref{sec:opt}}

\begin{proof}[Proof of Theorem \ref{thm:computation:joint}]
For convenience, we number constraints in the following way:
$\forall t,j,$%
\begin{itemize}
\item[1.]  $(\beta ,\delta )\in \mathcal{A},\;\gamma \in \Gamma$,
\item[2.]  $L_{j}\leq \delta _{j}\leq U_{j}$,
\item[3.]  $(d_{t}-1)(M_{t}+\epsilon )<f_{t}^{\prime }\gamma \leq d_{t}M_{t}$,
\item[4.]  $d_{t}\in \{0,1\}$,
\item[5.]  $d_{t}L_{j}\leq \ell _{j,t}\leq d_{t}U_{j}$,
\item[6.]  $L_{j}(1-d_{t})\leq \delta _{j}-\ell _{j,t}\leq U_{j}(1-d_{t})$,
\item[7.]  $\tau _{1}\leq \frac{1}{T}\sum_{t=1}^{T}d_{t}\leq \tau _{2}$.
\end{itemize}
Recall that
\begin{equation*}
\mathbb{Q}_{T}\left( \beta ,\boldsymbol{\ell }\right) \equiv \frac{1}{T}%
\sum_{t=1}^{T}\left( y_{t}-x_{t}^{\prime }{\beta }-\sum_{j=1}^{d_x}x_{j,t}\ell
_{j,t}\right) ^{2},
\end{equation*}%
where $\boldsymbol{\ell }=\left( \ell _{1,1},\ell _{1,2},...,\ell
_{d_{x},T}\right) ^{\prime }$,
\begin{equation*}
\left( \bar{\beta},\bar{\delta},\bar{\gamma},\boldsymbol{\bar{d},\bar{\ell}}%
\right) =\limfunc{argmin}_{\beta ,\delta ,\gamma ,\boldsymbol{d,\ell }}%
\mathbb{Q}_{T}\left( \beta ,\boldsymbol{\ell }\right) \ \text{under
conditions 1-7},
\end{equation*}%
and  $\mathbb{S}_{T}\left( \alpha ,\gamma \right) \equiv \frac{1}{T}%
\sum_{t=1}^{T}(y_{t}-x_{t}^{\prime }\beta -x_{t}^{\prime }\delta
1\{f_{t}^{\prime }\gamma >0\})^{2}$ and $\widehat{\alpha}$ and $\widehat{\gamma}$
denote the argmin of $\mathbb{S}_{T}$.

To prove the theorem, we show that (i)  $\mathbb{S}_{T}\left( \bar\alpha, \bar\gamma
\right)  =\mathbb{Q}_{T}\left( \bar\beta, \boldsymbol{ \bar{\ell}}
\right)$; (ii)
 $\mathbb{Q}_{T}\left( \bar{\beta},\boldsymbol{\bar{\ell}}%
\right) \geq \mathbb{S}_{T}\left(\widehat \alpha ,  \widehat\gamma \right) $; (iii) $\mathbb{S}_{T}\left( \widehat{%
\alpha},\widehat{\gamma}\right) \geq \mathbb{Q}_{T}\left( \bar{\beta},\boldsymbol{\bar{\ell}}\right) $.

Proof of (i):  By definition, $\mathbb{S}_{T}\left( \bar{\alpha},\bar{\gamma}\right)  = \frac{1}{T}%
\sum_{t=1}^{T}(y_{t}-x_{t}^{\prime }\bar{\beta}-x_{t}^{\prime }\bar{\delta}%
1\{f_{t}^{\prime }\bar{\gamma}>0\})^{2}$. Hence we need to show
\begin{eqnarray*}
 \frac{1}{T}%
\sum_{t=1}^{T}(y_{t}-x_{t}^{\prime }\bar{\beta}-x_{t}^{\prime }\bar{\delta}%
1\{f_{t}^{\prime }\bar{\gamma}>0\})^{2}
= \frac{1}{T}\sum_{t=1}^{T}\left( y_{t}-x_{t}^{\prime }{\bar{\beta}}%
-\sum_{j=1}^{d_x}x_{j,t}\bar{\ell}_{j,t}\right) ^{2}.
\end{eqnarray*}%
We show $\bar\ell_{j,t}=\bar\delta_j 1\{f_t'\bar \gamma>0\}$ for all $(t,j)$.
If $f_{t}^{\prime }\bar{\gamma}>0,$ $\bar{d}_{t}=1$ by condition 3
and 4, and $\bar{\ell}_{j,t}=\bar{\delta}_{j}$ by condition 6. If $f_t'\bar\gamma\leq 0$,  $%
\bar{d}_{t}=0$ by condtion 3 and 4 and $\bar{\ell}_{j,t}=0$ by condtion 5.

Proof of (ii): By part (i), we have
\begin{equation*}
\mathbb{Q}_{T}\left( \bar{\beta},\boldsymbol{\bar{\ell}}\right) =\mathbb{S}_{T}\left( \bar\alpha, \bar\gamma
\right) \geq
\min_{\alpha \in \mathcal{A},\gamma \in \Gamma }\mathbb{S}_{T}\left( \alpha
,\gamma \right) =\mathbb{S}_{T}\left( \widehat{\alpha},\widehat{\gamma}\right) .
\end{equation*}

Proof of (iii): Define $\widehat{\ell}_{j,t}:=\widehat{\delta}_{j}\widehat d_t $, where $\widehat d_t=1\left\{
f_{t}^{\prime }\widehat{\gamma}>0\right\}$. Then $\mathbb{S}_{T}\left( \widehat{\alpha},\widehat{\gamma}\right) =\mathbb{Q}_{T}\left(
\widehat{\beta},\boldsymbol{\widehat{\ell}}\right)$, where $\boldsymbol{\widehat\ell }=\left( \widehat\ell _{1,1} ,...,\widehat\ell
_{d_{x},T}\right) ^{\prime }$. Now
  it is straightforward to check that  $\left( \widehat{\beta},\widehat{\delta},\widehat{\gamma},\boldsymbol{\widehat{d},\widehat{\ell}}%
\right) $ satisfy conditions 1-7 for all $j$ and $t$.  For simplicity, we just give the details of checking condition 3. When $f_t'\widehat\gamma>0$, then $\widehat d_t=1$. Condition 3 becomes $0<f_t'\widehat\gamma\leq M_t=\sup_{\gamma\in\Gamma}|f_t'\gamma|$, which is satisfied. When $f_t'\widehat\gamma\leq 0$, $\widehat d_t=0$.  Condition 3 becomes $-M_t-\epsilon<f_t'\widehat\gamma\leq 0$, which holds for  any $\epsilon>0$.
So it is a feasible to the optimization problem $\min \mathbb Q_T$ with conditions 1-7. Consequently,
\begin{equation*}
\mathbb{S}_{T}\left( \widehat{\alpha},\widehat{\gamma}\right) =\mathbb{Q}_{T}\left(
\widehat{\beta},\boldsymbol{\widehat{\ell}}\right) \geq  \mathbb{Q}_{T}\left( \bar{\beta},\boldsymbol{\bar{\ell}}%
\right)
\end{equation*}%
by the definition of $(\bar{\beta},\boldsymbol{\bar{\ell}}). $ Combining parts (i),(ii) and (iii),   $\mathbb{S}_{T}\left( \bar\alpha, \bar\gamma
\right)  =\mathbb{Q}_{T}\left( \bar\beta, \boldsymbol{ \bar{\ell}}
\right)=\mathbb{S}_{T}\left( \widehat\alpha, \widehat\gamma
\right)  . $
\end{proof}

\subsection{Alternative MIQP Algorithm} \label{sec:joint:estimation:appendix}

The proposed algorithm in Section \ref{sec:joint:estimation} may run slowly when the dimension of $x_t$ is large. To mitigate this problem, we reformulate MIQP in the following way.

 {\singlespacing
\noindent\rule{15.5cm}{0.3pt} \\
\noindent
\texttt{[MIQP (Alternative Form)]} Let $\bm{d} =  (d_1, \ldots, d_T)'$
and $\bm{\widetilde{\ell}} = \{\widetilde{\ell}_{j,t}: j =1,\ldots,d_x, t = 1,\ldots,T \}$, where
 $\widetilde{\ell}_{j,t}$ is a real-valued variable. Solve the following problem:
\begin{align}\label{prob3-alt}
\min_{\beta, \widetilde{\delta}, \gamma, \bm{d}, \bm{\widetilde{\ell}} }\frac{1}{T}\sum_{t=1}^{T}
\left(y_{t}-x_{t}'{\beta} - \sum_{j=1} x_{j,t} \widetilde{\ell}_{j,t} - \left[ \sum_{j=1}^{d_x} x_{j,t} L_j  \right] d_t\right)^{2}
\end{align}
subject to
\begin{align}\label{main-constraints-alt}
\begin{split}
& (\beta, \delta) \in \mathcal{A}, \; \gamma \in \Gamma,  \\
& 0 \leq \widetilde{\delta}_{j} \leq  (U_j - L_j), \\
& 0 \leq  \widetilde{\ell}_{j,t} \leq \widetilde{\delta}_{j}, \\
& (d_t - 1) (M_t + \epsilon) < f_t' \gamma \leq d_t M_t, \\
& d_t \in \{0, 1\}, \\
& 0 \leq \sum_{j=1}^{d_x} \widetilde{\ell}_{j,t} \leq d_t \sum_{j=1}^{d_x} (U_j - L_j), \\
& 0 \leq \sum_{j=1}^{d_x} \left[ \widetilde{\delta}_{j} - \widetilde{\ell}_{j,t} \right] \leq  (1-d_t)  \sum_{j=1}^{d_x} (U_{j} - L_j), \\
& \tau_1 \leq \frac{1}{T} \sum_{t=1}^T d_t \leq \tau_2
\end{split}
\end{align}
for each $t=1,\ldots,T$ and each $j=1,\ldots,d_x$, where  $0 < \tau_1 < \tau_2 < 1$. \\
\noindent\rule{15.5cm}{0.3pt} \\
}

Note that $\widetilde{\delta}_{j}$ and $\widetilde{\ell}_{j,t}$ are transformed to be positive.
Using the positivity of these variables, one can sum up restrictions across $j$'s, where $j=1,\ldots,d_x$, while ensuring that optimization problem \eqref{prob3-alt} under \eqref{main-constraints-alt}
is mathematically equivalent to optimization problem \eqref{prob3} under \eqref{main-constraints}
in Section \ref{sec:joint:estimation}.
We use the alternative form of formulation in our numerical work; however, we present a simpler form in Section \ref{sec:joint:estimation} to help readers follow our basic ideas more easily.

\section{Selecting Relevant Factors}\label{sec:factor-selection}

In this section,
we consider factor selections.
In applications, it is often difficult to have \emph{a priori} knowledge regarding which variables constitute  $f_t$ in  \eqref{model1}.
Suppose that there are a mildly large number of factors; however,  we are willing to assume that only a small number of factors are active (i.e. their $\gamma$ coefficients are non-zero), although we do not know their identities. This is an unordered combinatorial selection problem, but can be easily adopted in the $\ell_0$-penalization framework with the help of MIO, so long as the number of candidate factors is fixed (\citet{bertsimas2016}).

To be specific, decompose $f_t = (f_{1t}', f_{2t}', -1)'$, where $f_t$ can be either the observed factors or estimated factors \footnote{For this section only, we use $ f_{2t} $ excluding $ -1 $. This is to reflect our setup where  the constant term $-1$ is always included among active factors.}  and $\gamma = (\gamma_{1}', \gamma_{2}', \gamma_{3})'$.
Assume that $f_{1t}$ is known to be active for certainty, but $f_{2t}$ may or may not be active.
Let  $p = |f_{2t}|_0$.
Suppose that each element of $\gamma_{2}$ is bounded between known values of  $\underline{\gamma_2}$ and $\overline{\gamma_2}$.
Let $\gamma_{2j}$ denote the $j$-th element of $\gamma_2$, where $j = 1,\ldots, p$.
Assume further that we know the lower and upper bounds, say $\underline{p}$ and $\overline{p}$, of the number of active elements of $\gamma_2$.    A  default choice of $(\underline{p},\overline{p})$ is
$\underline{p} = 0$ and $\overline{p} = p$; however, a strictly smaller choice of $\overline{p}$ might help estimation in practice when $p$ is relatively large and it is plausible to assume that the maximal number of factors is much less than $p$.

For a given  penalty parameter $\lambda > 0$ (here $f_t$ is either observed or estimated factors),
define
\begin{align}\label{l0-variable-selection}
\begin{split}
\widetilde\gamma= &\arg\min_{\gamma \in \Gamma} \min_{\beta,\delta} \frac{1}{T}\sum_{t=1}^{T}\left(
y_{t}-x_{t}^{\prime}{{\beta}}  -x_{t}^{\prime}%
{\delta}  1\left\{  f_{t}^{\prime}\gamma>0\right\}
\right)  ^{2} +\lambda|\gamma|_0 \\
&
\; \text{subject to  \eqref{restrction:para:original-form}}.
\end{split}
\end{align}
Computation of $\widetilde\gamma$ can be formulated using the following optimization.

{\singlespacing
\noindent\rule{13cm}{0.3pt} \\
\texttt{[MIQP with Factor Selection]} In addition to $\bm{d}$ and $\bm{\ell}$,  let $\bm{e} =  (e_1, \ldots, e_{p})'$.
Choose a penalty parameter $\lambda > 0$. Then
solve the following problem:
\begin{align}\label{prob3-factor}
\min_{\beta, \delta, \gamma, \bm{d}, \bm{\ell}, \bm{e} }
\widetilde{\mathbb{Q}}_{T}\left( \beta ,\boldsymbol{\ell }\right) \equiv
\frac{1}{T}\sum_{t=1}^{T}
\left(y_{t}-x_{t}'{\beta}- \sum_{j=1} x_{j,t} \ell_{j,t} \right)^{2} + \lambda \sum_{m=1}^{p} e_m
\end{align}
subject to  \eqref{main-constraints}
and
\begin{align}\label{selection-constraints}
\begin{split}
& e_m \underline{\gamma_2} \leq \gamma_{2m} \leq e_m  \overline{\gamma_2}, \\
& \underline{p} \leq \sum_{m=1}^p e_m \leq \overline{p}, \\
& e_m \in \{0, 1\} \ \ \text{for each $m=1,\ldots,p$}.
\end{split}
\end{align}
Finally, re-estimate the model using only selected factors via the method given in Section \ref{sec:joint:estimation}. \\
\noindent\rule{13cm}{0.3pt} \\
}

The new indicator variable $e_m$ turns on and off the $m$-th factor in estimation. The complexity of the regression model  is penalized  by the $\ell_0$ norm $(\sum_{m=1}^{p} e_m)$.

When $f_t$ contains only observed factors, we provide selection consistency below.

\begin{thm}\label{thm:factor:selection:consistency}
 Consider the known factor case. Let $S\left(  \gamma\right)  =\left\{  j:\gamma_{j}\neq0\right\}  $ and
$S_{0}=S\left(  \gamma_{0}\right)  $. Let Assumptions \ref{scale-normalization}, \ref{iden-assump},
\ref{A-mixing}, and \ref{A:diminishing dt} hold. Suppose that $\lambda \rightarrow 0$, $\lambda T \rightarrow \infty$, and $p$ is fixed.
 Then,%
\[
\mathbb{P}\left\{  S\left(  \widetilde{\gamma}\right)  =S_{0}\right\}  \rightarrow1.
\]
\end{thm}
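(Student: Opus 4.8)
The plan is to establish selection consistency through two complementary steps: first, show that the $\ell_0$-penalized estimator cannot over-select (i.e., asymptotically includes no irrelevant factors), and second, show that it cannot under-select (i.e., asymptotically includes all relevant factors). Both directions exploit the consistency of the unpenalized least-squares estimator, which already follows from Theorem \ref{asdist-alpha-gamma} together with the identification result in Theorem \ref{iden-thm}, and the fact that the penalty $\lambda|\gamma|_0$ is asymptotically negligible at the scale $\lambda \to 0$ while the penalty \emph{difference} between different support sizes, $\lambda$, dominates the stochastic fluctuations of the profiled criterion once $\lambda T \to \infty$.

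First I would set up notation: for a candidate support $S \subseteq \{1,\ldots,d_f\}$ (always containing index $1$ by the normalization and the index of the constant $-1$), let $\widehat{\mathbb{S}}_T(S) := \min_{\gamma: \, \mathrm{supp}(\gamma)\subseteq S} \min_{\alpha \in \mathcal{A}} \mathbb{S}_T(\alpha, \gamma)$ be the profiled least-squares objective restricted to factors in $S$, subject to the proportion constraint \eqref{restrction:para:original-form}. The penalized estimator then picks $\widetilde S$ minimizing $\widehat{\mathbb{S}}_T(S) + \lambda|S|$ over admissible $S$ with $\underline{p} \le |S \cap (\text{selectable block})| \le \overline{p}$. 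The key deterministic fact is that since $p$ is fixed, there are only finitely many candidate supports, so it suffices to handle each pairwise comparison with $S_0$.

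\textbf{No under-selection.} Suppose $S \not\supseteq S_0$. Then restricting to factors in $S$ excludes at least one active factor, so the best achievable population excess loss is bounded below by a positive constant: there exists $c_S > 0$ with $\inf_{\gamma:\,\mathrm{supp}(\gamma)\subseteq S} \inf_\alpha R(\alpha,\gamma) \ge c_S$. This follows from the well-separated identification inequality (Assumption \ref{iden-assump} / Theorem \ref{iden-thm}), because the true $(\alpha_0,\gamma_0)$ has $\mathrm{supp}(\gamma_0) = S_0 \not\subseteq S$, hence any $\gamma$ supported on $S$ stays bounded away from $\gamma_0$. By a uniform law of large numbers for $\mathbb{S}_T$ over the compact parameter space (available from the $\rho$-mixing and moment conditions in Assumption \ref{A-mixing}, as used implicitly in the proof of Theorem \ref{asdist-alpha-gamma}), $\widehat{\mathbb{S}}_T(S) \ge \widehat{\mathbb{S}}_T(S_0) + c_S/2$ with probability tending to one. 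Since $\lambda(|S_0| - |S|) \le \lambda d_f \to 0$, the penalty cannot overcome this fixed gap, so $\widehat{\mathbb{S}}_T(S) + \lambda|S| > \widehat{\mathbb{S}}_T(S_0) + \lambda|S_0|$ eventually, ruling out $S$.

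\textbf{No over-selection.} Suppose $S \supsetneq S_0$. Now both supports can fit the truth, so the population excess losses are both zero and the comparison is at the stochastic level. Here I would show that $\widehat{\mathbb{S}}_T(S_0) - \widehat{\mathbb{S}}_T(S) = O_P(T^{-1})$: this is the standard fact that adding (finitely many) irrelevant regressors improves the least-squares fit only by a $O_P(1)$ term divided by $T$, which can be seen by expanding $\mathbb{S}_T$ around $(\widehat\alpha,\widehat\gamma)$ and using $\sqrt{T}$-consistency of $\widehat\alpha$ (Theorem \ref{asdist-alpha-gamma}) and the faster-than-$T^{-1/2}$ rate available for the extra threshold coordinates, plus the fact that profiling over the larger set of nuisance parameters reduces the criterion by at most an $O_P(T^{-1})$ amount. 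Consequently $\widehat{\mathbb{S}}_T(S_0) + \lambda|S_0| - \bigl(\widehat{\mathbb{S}}_T(S) + \lambda|S|\bigr) = \lambda(|S_0| - |S|) + O_P(T^{-1}) = -\lambda(|S| - |S_0|) + O_P(T^{-1})$, which is negative with probability tending to one precisely because $\lambda T \to \infty$ forces $\lambda$ to dominate $T^{-1}$. Hence $S_0$ is strictly preferred to $S$ eventually.

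Combining the two directions over the finitely many candidate supports $S \neq S_0$ by a union bound gives $\mathbb{P}\{\widetilde S = S_0\} \to 1$, i.e., $\mathbb{P}\{S(\widetilde\gamma) = S_0\} \to 1$. The main obstacle I anticipate is the over-selection step: one needs a clean argument that the profiled criterion $\widehat{\mathbb{S}}_T$, after minimizing over the \emph{non-smooth} indicator-driven threshold parameters in the larger block, still only decreases by $O_P(T^{-1})$ rather than something slower — the non-smoothness in $\gamma$ means one cannot simply invoke a textbook nested-model $F$-statistic argument, and one must instead lean on the cube-root-type / super-consistent rates for $\widehat\gamma$ established in Theorem \ref{asdist-alpha-gamma} (which give $T^{1-2\varphi}(\widehat\gamma - \gamma_0) = O_P(1)$, hence a negligible contribution to the criterion at scale $T^{-1}$ after the quadratic-in-$\alpha$ expansion) together with the drift-plus-Gaussian-process limit to control the extra threshold directions in $S \setminus S_0$, whose associated $\delta$-coefficients are genuinely zero so that their profiled contribution is $o_P(1)$ relative to $\lambda$.
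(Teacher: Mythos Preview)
Your proposal matches the paper's proof closely: the paper also splits into (i) ruling out under-selection via the identification gap and a ULLN (its Step~1), and (ii) ruling out over-selection by showing $\min_{\gamma:\,S(\gamma)=S_0}\mathbb{Q}_T(\gamma)\le\min_\gamma\mathbb{Q}_T(\gamma)+O_P(T^{-1})$ (its Step~2) and then combining this with $\lambda T\to\infty$ so that the penalty gap $\lambda(|S|-|S_0|)\ge\lambda$ dominates (its Step~3). The $O_P(T^{-1})$ bound is obtained exactly as you anticipate---by invoking the $T^{1-2\varphi}$ rate from Theorem~\ref{asdist-alpha-gamma} to localize both the restricted and unrestricted profiled minimizers into a shrinking neighborhood of $\gamma_0$ where the profiled criterion oscillates only by $O_P(T^{-1})$.
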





 When  factors are unobservable but estimated via the PCA,
 the optimization (\ref{l0-variable-selection}) can be still used to select the estimated factors. Indeed, \citet{bai2008forecasting}  first obtained a set of PCA factors, then applied BIC to select among them for diffusion index forecasts.
 In theory, however,
  interpretation of each estimated factor is more involved since factors are identified only up to some random rotation. The rotation   indeterminacy creates difficulties to define ``the true factors" statistically, and therefore the ``factor selection consistency".  Thus we do not pursue the selection consistency in the estimated factor case.

\begin{proof}[Proof of Theorem \ref{thm:factor:selection:consistency}]
For a given $\gamma,$ let%
\[
{\mathbb{Q}}_{T}\left(  \gamma\right)  \equiv\frac{1}{T}\sum_{t=1}^{T}\left(
y_{t}-x_{t}^{\prime}{\widehat{\beta}}\left(  \gamma\right)  -x_{t}^{\prime}%
\widehat{\delta}\left(  \gamma\right)  1\left\{  f_{t}^{\prime}\gamma>0\right\}
\right)  ^{2}%
\]
and
\[
\widetilde{\mathbb{Q}}_{T}(\gamma)={\mathbb{Q}}_{T}\left(  \gamma\right)
+\lambda\left\vert \gamma\right\vert _{0},
\]
where $\widehat{\alpha}\left(  \gamma\right)  =\left(  \widehat{\beta}\left(
\gamma\right)  ^{\prime},\widehat{\delta}\left(  \gamma\right)  ^{\prime}\right)
^{\prime}$ is the OLS estimate of $\alpha$ for the given $\gamma$. The former
is a profiled criterion function of the original criterion. Define
\[
\widetilde{\gamma}=\arg\min_{\gamma}\widetilde{\mathbb{Q}}_{T}(\gamma).
\]
Our proof is
divided into the following steps.

\begin{enumerate}
\item[] \textbf{Step 1}.
Show that $S_{0}\subset S(\widetilde{\gamma})$ with probability
approaching one.

\item[] \textbf{Step 2}.
Show that $\min_{\gamma: S(\gamma)=S_{0}}\mathbb{Q}_{T}(\gamma)
\leq\min_{\gamma}\mathbb{Q}_{T}(\gamma)+O_{P}(T^{-1}). $

\item[] \textbf{Step 3}.
Show that for $\Gamma_{b}:=\{\gamma:S_{0}\subset S(\gamma),S_{0}\neq
S(\gamma)\}$,
\[
\min_{\gamma\in\Gamma_{b}}\widetilde{\mathbb{Q}}_{T}(\gamma)-\min
_{\gamma:S(\gamma)=S_{0}}\widetilde{\mathbb{Q}}_{T}(\gamma)>\lambda/2
\]
with probability approaching one. 
\end{enumerate}
Now suppose $S_{0}\neq S(\widetilde{\gamma})$. Then by step 1, $\widetilde
{\gamma}\in\Gamma_{b}$, then by step 3,
\[
\widetilde{\mathbb{Q}}_{T}(\widetilde{\gamma})\geq\min_{\gamma\in\Gamma_{b}%
}\widetilde{\mathbb{Q}}_{T}(\gamma)>\min_{\gamma:S(\gamma)=S_{0}}%
\widetilde{\mathbb{Q}}_{T}(\gamma)+\lambda/2,
\]
which contradicts with the definition of $\widetilde{\gamma}$. Consequently,
we must have $S_{0}=S(\widetilde{\gamma})$ with probability approaching one.
\end{proof}

\begin{proof}[Proof of Step 1]
Let $\alpha^{\ast}\left(  \gamma\right)  =\left(  \mathbb{E}Z_{t}\left(
\gamma\right)  Z_{t}\left(  \gamma\right)  ^{\prime}\right)  ^{-1}%
\mathbb{E}Z_{t}\left(  \gamma\right)  Z_{t}\left(  \gamma_{0}\right)
^{\prime}\alpha_{0}. $ Also let
\[
\mathbb{Q}\left(  \gamma\right)  \equiv\mathbb{E}\left(  y_{t}-Z_{t}\left(
\gamma\right)  ^{\prime}{\alpha}^{\ast}\left(  \gamma\right)  \right)
^{2}=\sigma^{2}+\mathbb{E}\left(  \alpha^{\ast}\left(  \gamma\right)
^{\prime}Z_{t}\left(  \gamma\right)  -\alpha_{0}^{\prime}Z_{t}\left(
\gamma_{0}\right)  \right)  ^{2}.
\]

Then, by the ULLN and the CMT and the fact that $\lambda\rightarrow0$,
uniformly in $\gamma$,
\[
\widehat{\alpha}\left(  \gamma\right)  -\alpha^{\ast}\left(  \gamma\right)
=o_P\left(  1\right)  ,\qquad\widetilde{\mathbb{Q}}_{T}\left(
\gamma\right)  -\mathbb{Q}\left(  \gamma\right)  =o_P\left(  1\right)  .
\]
Also, $\alpha^{\ast}(\gamma_{0})=\alpha_{0}$ implies $\mathbb{Q}(\gamma
_{0})=\sigma^{2}$ and
\[
\mathbb{Q}(\widetilde{\gamma})=\widetilde{\mathbb{Q}}_{T}(\widetilde{\gamma
})+o_{P}(1)\leq\widetilde{\mathbb{Q}}_{T}(\gamma_{0})+o_{P}(1)=\mathbb{Q}%
(\gamma_{0})+o_{P}(1)=\sigma^{2}+o_P\left(  1\right)  .
\]
On the other hand, for $\Gamma_{a}=\left\{  \gamma:S_{0}\nsubseteqq S\left(
\gamma\right)  \right\}  $, due to Theorem \ref{iden-thm},
\[
\min_{\gamma\in\Gamma_{a}}\mathbb{E}\left(  \alpha^{\ast}\left(
\gamma\right)  ^{\prime}Z_{t}\left(  \gamma\right)  -\alpha_{0}^{\prime}%
Z_{t}\left(  \gamma_{0}\right)  \right)  ^{2}>0.
\]
So $\min_{\gamma\in\Gamma_{a}}\mathbb{Q}\left(  \gamma\right)  >\sigma^{2}.$
This implies $\widetilde{\gamma}\notin\Gamma_{a}$, thus $S_{0}\subset
S(\widetilde{\gamma})$ with probability approaching one.
\end{proof}

\begin{proof}[Proof of Step 2]
Uniformly over pairs $(\gamma_{1},\gamma_{2})$ in a shrinking
neighborhood of $\gamma_{0}$, ($B_{C}(\gamma_{0})=\{|\gamma-\gamma_{0}%
|_{2}\leq CT^{-(1-2\varphi)}\}$ for any $C>0$),
\[
\mathbb{Q}_{T}(\gamma_{1})-\mathbb{Q}_{T}(\gamma_{2})=R_{T}(\gamma_{1}%
)-R_{T}(\gamma_{2})+\mathbb{G}_{T}(\gamma_{2})-\mathbb{G}_{T}(\gamma_{1}),
\]
where $R_{T}(\gamma)=\frac{1}{T}\sum_{t}[Z_{t}(\gamma)^{\prime}\widehat{\alpha
}(\gamma)-Z_{t}(\gamma_{0})^{\prime}\alpha_{0}]^{2}$ and $\mathbb{G}%
_{T}(\gamma)=\frac{2}{T}\sum_{t}\varepsilon_{t}Z_{t}(\gamma)\widehat{\alpha
}(\gamma)$. Note that $\sup_{\gamma\in B_{C}(\gamma_{0})}|\widehat{\alpha}%
(\gamma)-\alpha_{0}|_{2}=O_{P}(T^{-1/2})$, $\sup_{\gamma\in B_{C}(\gamma_{0}%
)}|R_{T}(\gamma)|=O_{P}(T^{-1})$, and

 $\sup_{\gamma_{1},\gamma_{2}\in
B_{C}(\gamma_{0})}|\mathbb{G}_{T}(\gamma_{1})-\mathbb{G}_{T}(\gamma
_{2})|=O_{P}(T^{-1})$. Therefore,
\[
\sup_{\gamma_{1},\gamma_{2}\in B_{C}(\gamma_{0})}|\mathbb{Q}_{T}(\gamma
_{1})-\mathbb{Q}_{T}(\gamma_{2})|=O_{P}(T^{-1}).
\]

Let $\widehat\gamma_{1}$ and $\widehat\gamma_{2}$ respectively denote the argument of
$\min_{S(\gamma)=S_{0}}\mathbb{Q}_{T}(\gamma)$ and $\min_{\gamma}%
\mathbb{Q}_{T}(\gamma)$. Then for both $j=1,2$, $\mathbb{Q}_{T}(\widehat\gamma
_{j}) \leq\mathbb{Q}_{T}(\gamma_{0}). $ Then it follows from the proof of
Theorem \ref{asdist-alpha-gamma} that $\widehat\gamma_{j}-\gamma_{0}%
=O_{P}(T^{-(1-2\varphi)}),\quad j=1,2. $ As a result,
\[
0\leq\min_{\gamma: S(\gamma)=S_{0}}\mathbb{Q}_{T}(\gamma) - \min_{\gamma
}\mathbb{Q}_{T}(\gamma) =\mathbb{Q}_{T}(\widehat\gamma_{1}) -\mathbb{Q}_{T}%
(\widehat\gamma_{2}) = O_{P}(T ^{-1}).
\]
\end{proof}

\begin{proof}[Proof of Step 3]
Let $\Gamma_{b}:=\{\gamma: S_{0}\subset S(\gamma),
S_{0}\neq S(\gamma)\}$. Then we have
\begin{align*}
\min_{\gamma\in\Gamma_{b}} \widetilde{\mathbb{Q}}_{T}(\gamma) -\min_{\gamma:
S(\gamma)=S_{0}} \widetilde{\mathbb{Q}}_{T}(\gamma)  & \geq^{(1)} \min
_{\gamma}\mathbb{Q}_{T}(\gamma)+\lambda\min_{\gamma\in\Gamma_{b}}|\gamma
|_{0}-\min_{\gamma: S(\gamma)=S_{0}} \widetilde{\mathbb{Q}}_{T}(\gamma
)\cr &=^{(2)} \min_{\gamma}\mathbb{Q}_{T}(\gamma) -\min_{S(\gamma)=S_{0}%
}{\mathbb{Q}}_{T}(\gamma)+\lambda\min_{\gamma\in\Gamma_{b}}|\gamma
|_{0}-\lambda|\gamma_{0}|_{0}\cr &\geq^{(3)} O_{P}(T^{-1}) +\lambda
\cr& >^{(4)}\lambda/2\quad(\text{with probability approaching one})
\end{align*}
where (1) is due to $\min_{\gamma\in\Gamma_{b}} \widetilde{\mathbb{Q}}%
_{T}(\gamma) \geq\min_{\gamma}\mathbb{Q}_{T}(\gamma) +\lambda\min_{\gamma
\in\Gamma_{b}}|\gamma|_{0}$; (2) is due to the fact that $\arg\min_{\gamma:
S(\gamma)=S_{0}} \widetilde{\mathbb{Q}}_{T}(\gamma)= \arg\min_{\gamma:
S(\gamma)=S_{0}} {\mathbb{Q}}_{T}(\gamma)$, and $|\gamma|_{0}=|\gamma_{0}%
|_{0}$ for all $\gamma\in\{\gamma: S(\gamma)=S_{0}\}$; (3) is due to step 2
and $\min_{\gamma\in\Gamma_{b}}|\gamma|_{0}-|\gamma_{0}|_{0} \geq1$. Finally,
(4) is due to $T\lambda\to\infty$.
\end{proof}

\section{Test of Linearity}\label{sec:test:linearity}

In some applications, we are interested in testing the linearity of the regression model in \eqref{model1}. That is,  we may want to test the following null hypothesis:
\begin{equation*}
\mathcal{H}_{0}:\delta_0 =0\ \ \ \text{for all }\gamma_0 \in \Gamma .
\end{equation*}%
Under the null hypothesis the model becomes the linear regression model and thus
$\gamma_0$ is not identified. This testing problem has been studied
intensively in the literature when $f_t$ is directly observed and the dimension of an unidentifiable component of $\gamma_0$ is  1 (see, e.g., \citet{Hansen:1996} and \citet{lee2012testing} among many others).

We propose to use the  following statistic:
\begin{align}\label{def:supQ}
\begin{split}
\text{supQ} &
=\sup_{\gamma \in \Gamma}T%
\frac{\min_{\alpha :\delta =0}\mathbb{S}_{T}\left( \alpha ,\gamma \right)
-\min_{\alpha }\mathbb{S}_{T}\left( \alpha ,\gamma \right) }{\min_{\alpha }%
\mathbb{S}_{T}\left( \alpha ,\gamma \right) } \\
&=T\frac{\min_{\alpha :\delta =0}\mathbb{S}_{T}\left( \alpha ,\gamma
\right) -\min_{\alpha ,\gamma }\mathbb{S}_{T}\left( \alpha ,\gamma \right) }{%
\min_{\alpha ,\gamma }\mathbb{S}_{T}\left( \alpha ,\gamma \right) },
\end{split}
\end{align}%
where $\mathbb{S}_{T}\left( \alpha ,\gamma \right)$  is  the least squares criterion function, using either the observed or estimated factor.

For both observed and latent factor cases, we establish the following result.

\begin{thm}\label{Thm:Linearity Test}
 Suppose that either assumptions of Theorems \ref{asdist-alpha-gamma} (for the known factor case) or  assumptions of  Theorems \ref{thm:AD with Estimated f} (for the estimated factor case) hold.
 Then, under $\mathcal{H}_{0}$,
\begin{equation*}
\ \text{supQ}\overset{d}{\longrightarrow }\sup_{\gamma \in \Gamma} W\left(
\gamma \right) ^{\prime }\left( R\left( \mathbb{E}Z_{t}\left( \gamma \right)
Z_{t}\left( \gamma \right) ^{\prime }\right) ^{-1}\mathbb{E}\varepsilon
_{t}^{2}R^{\prime }\right) ^{-1}W\left( \gamma \right) ,
\end{equation*}%
where $W\left( \gamma \right) $ is a vector of centered Gaussian processes
with covariance kernel
\begin{equation*}
K\left( \gamma _{1},\gamma _{2}\right) =R\left( \mathbb{E}Z_{t}\left( \gamma
_{1}\right) Z_{t}\left( \gamma _{1}\right) ^{\prime }\right) ^{-1}\mathbb{E}%
 \left[ Z_{t}\left( \gamma _{1}\right) Z_{t}\left( \gamma _{2}\right) ^{\prime
}\varepsilon _{t}^{2} \right] \left( \mathbb{E} Z_{t}\left( \gamma _{2}\right)
Z_{t}\left( \gamma _{2}\right) ^{\prime }\right) ^{-1}R^{\prime }
\end{equation*}%
and $R=\left( 0_{{d_x}}, I_{d_{x}}\right)$ is the $(d_x \times 2 d_x)$-dimensional selection matrix.\footnote{Here, $0_{{d_x}}$ and $I_{d_{x}}$, respectively,  denote the $d_x$-dimensional square matrix with all elements being zeros and the $d_x$-dimensional identity matrix.}
\end{thm}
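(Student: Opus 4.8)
The plan is to reduce $\text{supQ}$ to a continuous functional of an empirical process indexed by $\gamma$, prove a functional central limit theorem (FCLT) for that process, and conclude by the continuous mapping theorem; the estimated‑factor case is then obtained by showing that the PCA estimation error does not alter the limit. First I would use that, under $\mathcal H_0$, $y_t=x_t'\beta_0+\varepsilon_t=Z_t(\gamma)'\alpha_0^*$ with $\alpha_0^*:=(\beta_0',0')'$ for \emph{every} $\gamma\in\Gamma$ (the first $d_x$ coordinates of $Z_t(\gamma)$ are $x_t$). For fixed $\gamma$ the summand of $\text{supQ}$ is a quasi‑likelihood‑ratio statistic for the linear restriction $R\alpha=0$, and the classical identity for the reduction in the residual sum of squares gives, with $\widehat Q_T(\gamma):=T^{-1}\sum_t Z_t(\gamma)Z_t(\gamma)'$, $\widehat\delta(\gamma):=R\widehat\alpha(\gamma)$ and $\nu_T(\gamma):=T^{-1/2}\sum_t Z_t(\gamma)\varepsilon_t$,
\begin{equation*}
T\big[\,\min_{\alpha:\,\delta=0}\mathbb S_T(\alpha,\gamma)-\min_{\alpha}\mathbb S_T(\alpha,\gamma)\,\big]=\big(\sqrt T\,\widehat\delta(\gamma)\big)'\big[R\,\widehat Q_T(\gamma)^{-1}R'\big]^{-1}\big(\sqrt T\,\widehat\delta(\gamma)\big),\qquad \sqrt T\,\widehat\delta(\gamma)=R\,\widehat Q_T(\gamma)^{-1}\nu_T(\gamma),
\end{equation*}
the second relation because $y_t-Z_t(\gamma)'\alpha_0^*=\varepsilon_t$ and $R\alpha_0^*=0$. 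Thus $\text{supQ}=\sup_\gamma\{(\sqrt T\widehat\delta(\gamma))'[R\widehat Q_T(\gamma)^{-1}R']^{-1}(\sqrt T\widehat\delta(\gamma))\}/\min_\alpha\mathbb S_T(\alpha,\gamma)$.

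Next I would establish three limits, uniformly over the compact $\Gamma$, all powered by the fact that $\{1\{f_t'\gamma>0\}:\gamma\in\Gamma\}$ is a VC (half‑space) class with an envelope controlled by $|x_t|_2,|\varepsilon_t|$, together with the $\rho$‑mixing condition $\sum_m\rho_m^{1/2}<\infty$ and the moment bounds of Assumption \ref{A-mixing}. (i) A uniform law of large numbers: $\sup_\gamma|\widehat Q_T(\gamma)-Q(\gamma)|\to_P 0$ with $Q(\gamma):=\mathbb E Z_t(\gamma)Z_t(\gamma)'$; by Assumption \ref{A-mixing}\ref{A-mixing:itm3}, $Q(\gamma)^{-1}$ exists and is uniformly bounded, hence $\widehat Q_T(\gamma)^{-1}\to_P Q(\gamma)^{-1}$ uniformly and $R\widehat Q_T(\gamma)^{-1}R'$ is uniformly invertible with probability tending to one. (ii) A FCLT: $\nu_T(\cdot)\overset{d}{\longrightarrow}\mathbb W(\cdot)$ in $\ell^\infty(\Gamma)$, where $\mathbb W$ is a tight mean‑zero Gaussian process with $\mathrm{Cov}(\mathbb W(\gamma_1),\mathbb W(\gamma_2))=\mathbb E[Z_t(\gamma_1)Z_t(\gamma_2)'\varepsilon_t^2]$ — finite‑dimensional convergence from the martingale‑difference CLT (Assumption \ref{A-mixing}\ref{A-mixing:itm2} plus the eighth‑moment bounds) and stochastic equicontinuity from a maximal inequality for the VC‑type class $\{Z_t(\gamma)\varepsilon_t\}$ under the mixing condition. (iii) The denominator: under $\mathcal H_0$, $\min_\alpha\mathbb S_T(\alpha,\gamma)=T^{-1}\sum_t(\varepsilon_t-Z_t(\gamma)'(\widehat\alpha(\gamma)-\alpha_0^*))^2\to_P\sigma_\varepsilon^2$ uniformly, since (i)–(ii) give $\sup_\gamma|\widehat\alpha(\gamma)-\alpha_0^*|_2=O_P(T^{-1/2})$. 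Consequently $\sqrt T\widehat\delta(\cdot)=R\widehat Q_T(\cdot)^{-1}\nu_T(\cdot)\overset{d}{\longrightarrow}W(\cdot):=RQ(\cdot)^{-1}\mathbb W(\cdot)$, whose covariance equals exactly $K(\gamma_1,\gamma_2)$ by a one‑line computation.

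The conclusion then follows from the continuous mapping theorem: the functional $(\xi,M,d)\mapsto\sup_\gamma\xi(\gamma)'[RM(\gamma)^{-1}R']^{-1}\xi(\gamma)/d(\gamma)$ is continuous on the set of $\ell^\infty(\Gamma)$‑triples with $\inf_\gamma\lambda_{\min}(M(\gamma))>0$ and $\inf_\gamma d(\gamma)>0$, so applying it to $(\sqrt T\widehat\delta(\cdot),\widehat Q_T(\cdot),\min_\alpha\mathbb S_T(\alpha,\cdot))$ yields
\begin{equation*}
\text{supQ}\overset{d}{\longrightarrow}\sup_{\gamma\in\Gamma}\sigma_\varepsilon^{-2}\,W(\gamma)'\big[R\,Q(\gamma)^{-1}R'\big]^{-1}W(\gamma)=\sup_{\gamma\in\Gamma}W(\gamma)'\big(R(\mathbb E Z_t(\gamma)Z_t(\gamma)')^{-1}\mathbb E\varepsilon_t^2\,R'\big)^{-1}W(\gamma),
\end{equation*}
where the last equality is the scalar identity $\sigma_\varepsilon^{-2}[RQ^{-1}R']^{-1}=(RQ^{-1}\sigma_\varepsilon^2R')^{-1}$; this is the asserted limit in the observed‑factor case.

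For the estimated‑factor case one replaces $f_t$ by the PCA estimate $\widetilde f_t$ and uses the expansion $\widetilde f_t=H_T'g_t+N^{-1/2}H_T'h_t+r_t$ of \eqref{Bai-expansion}–\eqref{e:6.7:h}; with the rotation convention $f_t=H_T'g_t$ of Section \ref{model:est:factors} it suffices to show that the ingredients of the second step are unaffected when $\widetilde Z_t(\gamma):=(x_t',x_t'1\{\widetilde f_t'\gamma>0\})'$ replaces $Z_t(\gamma)$, i.e. $\sup_\gamma|T^{-1}\sum_t(\widetilde Z_t(\gamma)\widetilde Z_t(\gamma)'-Z_t(\gamma)Z_t(\gamma)')|\to_P 0$ and $\sup_\gamma|T^{-1/2}\sum_t(\widetilde Z_t(\gamma)-Z_t(\gamma))\varepsilon_t|\to_P 0$. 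The first holds because a sign flip forces $|f_t'\gamma|\le|(\widetilde f_t-f_t)'\gamma|$, which by the density‑at‑zero bounds in Assumptions \ref{as5}--\ref{as8} and $\widetilde f_t-f_t=O_P(N^{-1/2})+r_t$ occurs on an $o_P(1)$ fraction of $t$. The second is the main obstacle: the number of sign flips is $O_P(T/\sqrt N)=O_P(\sqrt T)$ under $T=O(N)$, and the corresponding partial sum of the (conditionally) mean‑zero terms $x_t\varepsilon_t$ is $O_P(T^{1/4})=o_P(\sqrt T)$, but making this rigorous — exploiting the martingale‑difference property of $\varepsilon_t$ while $h_t$ depends on the whole cross‑section — requires precisely the uniform approximation bound of Section \ref{model:est:factors} together with the exponential‑tail and conditional‑density Assumptions \ref{as5}--\ref{as8}, exactly as in the proof of Theorem \ref{thm:AD with Estimated f}; here only $\sqrt T$‑level control is needed, which is much less delicate than the sharp rates driving the phase transition. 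Granting these two bounds, the first three steps go through verbatim (with $g_t,\phi$ in place of $f_t,\gamma$ and a relabeling of the supremum), yielding the stated limit.
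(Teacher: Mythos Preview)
Your proposal is correct and follows essentially the same route as the paper. For the known-factor case, the paper takes a shortcut by identifying $Q_T(\gamma)$ with Hansen's (1996) Wald statistic $T_n(\gamma)$ (after replacing his $\widehat V_n(\gamma)$ by $T^{-1}\sum_t x_tx_t'1\{f_t'\gamma>0\}\,\mathbb{S}_T(\widehat\alpha,\widehat\gamma)$) and then simply verifies Hansen's Assumptions 1--3 from Assumption~\ref{A-mixing} and Lemma~\ref{Lem:modul1}; you instead derive the quadratic-form identity directly and carry out the ULLN/FCLT/CMT steps yourself, which is the same machinery unpacked.

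For the estimated-factor case the paper isolates exactly the three uniform negligibility conditions you identify (your two plus the analogous one with $\varepsilon_t^2$), and handles the key FCLT-type bound $\sup_\gamma|T^{-1/2}\sum_t x_t(1\{\widetilde f_t'\gamma>0\}-1\{f_t'\gamma>0\})\varepsilon_t|=o_P(1)$ in two steps: the $\widetilde f_t\to\widehat f_t$ replacement via Lemma~\ref{l:relf}, and the $\widehat f_t\to f_t$ replacement by invoking stochastic equicontinuity of $T^{-1/2}\sum_t x_t 1\{\widehat f_t'\gamma>0\}\varepsilon_t$ (Lemma~\ref{Lem:modul1} plus the extended CMT of Lemma~\ref{CMT-extension}). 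Your sign-flip counting heuristic points to the right order but, as you note, needs the MDS structure of $\varepsilon_t$ conditional on $(g_t,h_t)$ to be made rigorous; the paper's equicontinuity route is the cleaner way to close that step.
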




Below we present a bootstrap algorithm for the p-value.

{\singlespacing
\noindent\rule{13cm}{0.3pt} \\
\texttt{[Computation of Bootstrap $p$-Values]}

\begin{enumerate}
\item Generate an iid sequence $\left\{ \eta _{t}\right\} $ whose mean is
zero and variance is one.

\item Construct $\left\{ y_{t}^{\ast }\right\} $ by
\begin{equation*}
y_{t}^{\ast }=x_{t}^{\prime }\widehat{\beta }+\eta _{t}\widehat{\varepsilon }
_{t},
\end{equation*}
where $\widehat{\beta }$ is the unconstrained estimator of $\beta_0$ and $\widehat{\varepsilon }
_{t}$ is the estimated residual from unconstrained estimation.

\item Construct the bootstrap statistic $\text{supQ}^{\ast }$ by \eqref{def:supQ} with the bootstrap sample $\{ y_{t}^{\ast }, x_t, f_t: t=1,\ldots,T \}$ if $f_t$ is known and
$\{ y_{t}^{\ast }, x_t, \widetilde{f}_t: t=1,\ldots,T \}$ if $f_t$ is estimated, respectively.

\item Repeat 1-3 many times and compute the empirical distribution of supQ$%
^{\ast }$.

\item
Then, with the obtained empirical distribution, say $F_{T}^{\ast }\left(
\cdot \right) ,$ one can compute the bootstrap $p$-value by
\begin{equation*}
p^{\ast }=1-F_{T}^{\ast }\left( \text{supQ}\right) ,
\end{equation*}%
or $a$-level critical value%
\begin{equation*}
c_{a}^{\ast }=F_{T}^{\ast ^{-1}}\left( 1-a\right) .
\end{equation*}
\end{enumerate}
\noindent\rule{13cm}{0.3pt} \\
}

The proposed bootstrap is standard and thus its asymptotic validity follows from the
standard manner in view of Lemma \ref{l:relf} and the conditional martingale difference sequence   central limit theorem (e.g.  Theorem 3.2 of \citet{Hall-Heyde}). The details are omitted for the sake of brevity.
Furthermore, it is straightforward to establish conditions for the consistency of our proposed test.

\section{Additional Empirical Results}\label{sec:add:emp:appendix}

In this part of the appendix, we provide additional empirical results that are omitted from the main text.

\subsection{Testing the Linearity of US GNP and Selecting Factors}\label{sec:add:example}

In this section, we revisit the empirical application in \citet{Hansen:1996}, who tested \citet{Potter:95}'s model of US GNP.
\citet{Hansen:1996} used annualized quarterly growth rates, say $y_t$, for the period 1947-1990.
His estimates were as follows:
\begin{align}\label{hansen-estimates}
\begin{split}
y_t &=  - 3.21 + 0.51 y_{t-1} - 0.93 y_{t-2} - 0.38 y_{t-5} + \widehat{\varepsilon}_t    \;\;\;\; \text{if $y_{t-2} \leq 0.01$} \\
      &   \;\;\;\;\;\; (2.12) \;\;\; (0.25)   \;\;\;\;\;\;\;\; (0.31)   \;\;\;\;\;\;\;\; (0.25)     \\
y_t &=  2.14 + 0.30 y_{t-1} + 0.18 y_{t-2} - 0.16 y_{t-5} + \widehat{\varepsilon}_t    \;\;\;\; \text{if $y_{t-2} > 0.01$,}  \\
      &   \;\;\; (0.77) \;\;\; (0.10)   \;\;\;\;\;\;\;\; (0.10)   \;\;\;\;\;\;\;\; (0.07)
\end{split}
\end{align}
where heteroskedasticity-robust standard errors are given in parenthesis. His heteroskedasticity-robust LM-based tests for the hypothesis of no threshold effect were all far from usual rejection regions (the smallest p-value was 0.17).
Using the same dataset, we carry out the following two exercises: (1) selecting relevant factors and (2) testing the linearity of the model.
For the former, we
keep $y_{t-2}$ as $f_{1t}$ and add  $(y_{t-1}, y_{t-5})$ as $f_{2t}$. That is, we allow for the possibility that the regimes can be determined by a linear combination of $(y_{t-1}, y_{t-2}, y_{t-5})$.
The choice of penalization parameter $\lambda$ is important. Recall that we require $\lambda \rightarrow 0$ and $\lambda T \rightarrow \infty$. In this application,
we set
\begin{align*}
\lambda = \widehat{\sigma}_{\text{Hansen}}^2 \frac{\log T}{T},
\end{align*}
where $\widehat{\sigma}_{\text{Hansen}}^2 = T^{-1} \sum_{t=1}^T \widehat{\varepsilon}_t^2$ and the estimated residual $\widehat{\varepsilon}_t$
is obtained from \citet{Hansen:1996}'s estimates in \eqref{hansen-estimates}.
By implementing joint optimization with this choice of $\lambda$, we select only $y_{t-5}$ but drop $y_{t-1}$ in $f_{2t}$. Our estimated index
is
\[
f_t'\widehat{\gamma} = y_{t-2} - 0.91 y_{t-5}  + 0.50.
\]
If we compare this with Hansen's estimate $f_t'\widehat{\gamma} =  y_{t-2} - 0.01$, we can see that in Hansen's model, the regime is determined by the level of GNP growth in $t-2$; on the contrary, in our model, it is determined by $y_{t-2} - 0.91 y_{t-5}$, roughly speaking the changes in growth rates from $t-5$ to $t-2$. Specifically, the regime is determined whether $y_{t-2} - 0.91 y_{t-5}$ is  above or below $- 0.50$.
Our estimates suggest that a recession might be captured better by a decrease in growth rates from $t-5$ to $t-2$, compared to a low level of growth rates in $t-2$. Our estimated coefficients and their standard errors are as follows:
\begin{align}\label{new-estimates}
\begin{split}
y_t &=  - 2.07 + 0.28 y_{t-1} - 0.33 y_{t-2} + 0.62 y_{t-5} + \widehat{\varepsilon}_t    \;\;\;\; \text{if $y_{t-2} - 0.91 y_{t-5} \leq -0.50$} \\
      &   \;\;\;\;\;\; (1.33) \;\;\; (0.13)   \;\;\;\;\;\;\;\; (0.16)   \;\;\;\;\;\;\;\; (0.19)     \\
y_t &=  2.76 + 0.35 y_{t-1} + 0.07 y_{t-2} - 0.21 y_{t-5} + \widehat{\varepsilon}_t    \;\;\;\; \text{if $y_{t-2} - 0.91 y_{t-5} > -0.50$.}  \\
      &   \;\;\; (0.96) \;\;\; (0.12)   \;\;\;\;\;\;\;\; (0.12)   \;\;\;\;\;\;\;\; (0.10)
\end{split}
\end{align}
We now report the result of testing the null hypothesis of no threshold effect.  We take our estimates in \eqref{new-estimates} as unconstrained estimates. The resulting LR test statistic is 28.19 and the p-value is 0.056 based on 500 bootstrap replications. This implies that the null hypothesis is rejected at the 10\% level but not at the 5\% level. There are two main differences between our test result and \citet{Hansen:1996}'s. We use the LR statistic, whereas \citet{Hansen:1996} considered the LM statistic. Furthermore, his alternative only allows for the scalar threshold variable $y_{t-2}$ but we consider a single index using $y_{t-2}$ and $y_{t-5}$.

\subsection{Details on Estimation Results for Table \ref{table-hansen97-short}}

Table \ref{table-hansen97} shows full estimation results for Table \ref{table-hansen97-short}.

\begin{table}[ h!tbp]
\caption{Estimation Results}
\begin{center}
\begin{tabular}{lrrrrrr}
\hline\hline
Specification     & \multicolumn{2}{c}{(1)} & \multicolumn{2}{c}{(2)} & \multicolumn{2}{c}{(3)} \vspace*{1ex} \\
    & \multicolumn{2}{c}{$f_{1t} = (q_{t-1}, -1)$} & \multicolumn{2}{c}{$f_{2t} = (F_{t-1}, -1)$} & \multicolumn{2}{c}{$f_{3t} = (q_{t-1}, F_{t-1}, -1)$} \vspace*{1ex} \\
                  &   Estimate & Std. Err. &   Estimate & Std. Err.  &   Estimate & Std. Err. \\
\hline
\\
Regime 1          & \multicolumn{2}{c}{$q_{t-1} \leq 0.302$} & \multicolumn{2}{c}{$F_{t-1} \leq -0.28$} & \multicolumn{2}{c}{$q_{t-1} + 3.55 F_{t-1}$} \\
 (``Expansion'')      &  & & & & \multicolumn{2}{c}{$\leq -1.60$} \vspace*{1ex} \\
Intercept         &  -0.0214   &  0.0126 & -0.0255  &  0.0101 &   -0.0294   &   0.0101   \\
$\Delta u_{t-1}$  &  -0.1696   &  0.0640 & -0.1182  &  0.0629 &   -0.1628   &   0.0601   \\
$\Delta u_{t-2}$  &   0.0382   &  0.0650 &  0.0774  &  0.0558 &    0.0264   &   0.0600   \\
$\Delta u_{t-3}$  &   0.1896   &  0.0587 &  0.2097  &  0.0645 &    0.1933   &   0.0520   \\
$\Delta u_{t-4}$  &   0.1399   &  0.0630 &  0.1039  &  0.0523 &    0.1445   &   0.0552   \\
$\Delta u_{t-5}$  &   0.0858   &  0.0749 &  0.0622  &  0.0600 &    0.0699   &   0.0656   \\
$\Delta u_{t-6}$  &   0.0214   &  0.0653 &  0.0193  &  0.0558 &    0.0177   &   0.0613   \\
$\Delta u_{t-7}$  &   0.0318   &  0.0678 & -0.0268  &  0.0596 &    0.0174   &   0.0613   \\
$\Delta u_{t-8}$  &   0.0402   &  0.0599 & -0.0006  &  0.0617 &    0.0103   &   0.0626   \\
$\Delta u_{t-9}$  &  -0.0667   &  0.0663 & -0.0766  &  0.0660 &   -0.0637   &   0.0656   \\
$\Delta u_{t-10}$ &  -0.0540   &  0.0640 & -0.0120  &  0.0559 &   -0.0467   &   0.0575   \\
$\Delta u_{t-11}$ &   0.0782   &  0.0568 &  0.0162  &  0.0529 &    0.0196   &   0.0528   \\
$\Delta u_{t-12}$ &  -0.0899   &  0.0641 & -0.1216  &  0.0576 &   -0.1224   &   0.0572   \\
\hline
\\
Regime 2          & \multicolumn{2}{c}{$q_{t-1} > 0.302$} & \multicolumn{2}{c}{$F_{t-1} > -0.28$} & \multicolumn{2}{c}{$q_{t-1} + 3.55 F_{t-1}$} \\
(``Contraction'')         &  & & & & \multicolumn{2}{c}{$> -1.60$} \vspace*{1ex} \\
%
Intercept         &   0.0876   &  0.0375  &  0.0509  &  0.0560  &    0.1893   &    0.0576   \\
$\Delta u_{t-1}$  &   0.2406   &  0.1179  &  0.3671  &  0.2011  &    0.2937   &    0.1665   \\
$\Delta u_{t-2}$  &   0.2455   &  0.0932  &  0.2198  &  0.1634  &    0.1420   &    0.1279   \\
$\Delta u_{t-3}$  &   0.1283   &  0.1038  &  0.0936  &  0.1563  &    0.1042   &    0.1549   \\
$\Delta u_{t-4}$  &  -0.0222   &  0.1033  & -0.0053  &  0.1883  &   -0.1035   &    0.1690   \\
$\Delta u_{t-5}$  &  -0.0272   &  0.1104  & -0.1804  &  0.2188  &   -0.0723   &    0.1868   \\
$\Delta u_{t-6}$  &  -0.0851   &  0.1083  & -0.0500  &  0.2125  &   -0.0821   &    0.1400   \\
$\Delta u_{t-7}$  &  -0.1562   &  0.1057  & -0.0297  &  0.2027  &   -0.1853   &    0.1443   \\
$\Delta u_{t-8}$  &  -0.0372   &  0.1357  &  0.0021  &  0.2923  &   -0.1214   &    0.2038   \\
$\Delta u_{t-9}$  &   0.0991   &  0.1358  &  0.0754  &  0.1754  &   -0.0861   &    0.1475   \\
$\Delta u_{t-10}$ &   0.1149   &  0.1125  &  0.0445  &  0.1574  &    0.0392   &    0.1426   \\
$\Delta u_{t-11}$ &  -0.1012   &  0.1256  &  0.1872  &  0.1995  &   -0.0307   &    0.1840   \\
$\Delta u_{t-12}$ &  -0.4440   &  0.1144  & -0.2269  &  0.1668  &   -0.3807   &    0.1542   \\
\hline
\multicolumn{7}{l}{Avg. of squared residuals} \\
$(T^{-1} \sum_{i=1}^T \widehat{\varepsilon}_t^2)$          & \multicolumn{2}{c}{0.0264}  & \multicolumn{2}{c}{0.0272} & \multicolumn{2}{c}{0.0252} \vspace*{1ex} \\
\hline
\multicolumn{7}{l}{Proportion of matches between NBER recession dates and threshold estimates} \\
         & \multicolumn{2}{c}{0.807}  & \multicolumn{2}{c}{0.894} & \multicolumn{2}{c}{0.896} \vspace*{1ex} \\
\hline
\end{tabular}
\end{center}
\label{table-hansen97}
\end{table}

\section{Proofs of the Asymptotic Distribution in Section \ref{sec:known factors}: Known $ f $}\label{proof:asymp:known:factors}

Recall that we have proposed two computing algorithms for the estimators of $(\alpha_0, \gamma_0)$.
Throughout this part of the appendix, we assume that
$(\widehat{\alpha }, \widehat{\gamma })$ is the global solution to the optimization problem.
The proof is divided into the following subsections.

\subsection{Consistency}

\begin{lem}[Consistency]\label{cons-thm}
	Let Assumptions \ref{scale-normalization}, \ref{A-f1 conti} and \ref{A-mixing} \ref{A-mixing:itm1} and \ref{A-mixing:itm2}
	hold. Then  as $T \rightarrow \infty$,
	\begin{equation*}
	\left\vert \widehat{\alpha }-\alpha _{0}\right\vert _{2}=o_{P}\left( 1 \right) \ \text{and\ }\left\vert \widehat{\gamma }-\gamma
	_{0}\right\vert _{2}=o_{P}\left( 1 \right) .
	\end{equation*}
\end{lem}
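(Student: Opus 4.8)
## Proof Plan for Lemma (Consistency)

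The plan is to establish consistency via the standard M-estimation argument: show that the sample criterion $\mathbb S_T(\alpha,\gamma)$ converges uniformly to a population limit, identify that limit with (a shift of) the excess loss $R(\alpha,\gamma)$, and then invoke the well-separation condition from Assumption \ref{iden-assump} (equivalently, the identification result in Theorem \ref{iden-thm}). Concretely, first I would write $\mathbb S_T(\alpha,\gamma) - \mathbb S_T(\alpha_0,\gamma_0) = [\mathbb S_T(\alpha,\gamma) - \mathbb E \mathbb S_T(\alpha,\gamma)] - [\mathbb S_T(\alpha_0,\gamma_0) - \mathbb E \mathbb S_T(\alpha_0,\gamma_0)] + R(\alpha,\gamma)$, using the conditional-mean restriction \eqref{model2} to see that $\mathbb E \mathbb S_T(\alpha,\gamma) - \mathbb E(\varepsilon_t^2) = R(\alpha,\gamma)$ and $\mathbb E \mathbb S_T(\alpha_0,\gamma_0) = \mathbb E(\varepsilon_t^2)$.

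The key step is a uniform law of large numbers: $\sup_{(\alpha,\gamma) \in \mathcal A \times \Gamma} |\mathbb S_T(\alpha,\gamma) - \mathbb E \mathbb S_T(\alpha,\gamma)| = o_P(1)$. Here I would expand the square, writing the summand as a finite sum of terms of the form (products of components of $x_t$) times $1\{f_t'\gamma > 0\}^{a}$ for $a \in \{0,1\}$, times ($\varepsilon_t$ or $1$). The classes $\{1\{f_t'\gamma>0\}: \gamma \in \Gamma\}$ form a VC class (half-spaces indexed by a Euclidean parameter), and combined with the $\rho$-mixing and the moment bound $\mathbb E|x_t|_2^4 < \infty$ from Assumption \ref{A-mixing}\ref{A-mixing:itm1}, a uniform ergodic/ULLN theorem for mixing sequences applies (stationarity and ergodicity give pointwise LLN; the envelope $|x_t|_2^2(1+|\varepsilon_t|)$ is integrable and the bracketing/VC structure in $\gamma$ upgrades this to uniformity, while $\alpha$ enters only through fixed compact-range linear/quadratic coefficients). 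Since $\mathcal A$ is compact and the dependence on $\alpha$ is polynomial with bounded coefficients, uniformity over $\alpha$ is routine once uniformity over $\gamma$ is in hand.

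Given the ULLN, for any $\varepsilon > 0$, with probability tending to one we have $\mathbb S_T(\widehat\alpha,\widehat\gamma) - \mathbb S_T(\alpha_0,\gamma_0) \le o_P(1)$ (by the definition of $(\widehat\alpha,\widehat\gamma)$ as minimizer and noting $(\alpha_0,\gamma_0)$ is feasible, using that the constraint \eqref{restrction:para:original-form} holds at $\gamma_0$ a.s.), hence $R(\widehat\alpha,\widehat\gamma) \le 2\sup|\mathbb S_T - \mathbb E\mathbb S_T| + o_P(1) = o_P(1)$. If $|(\widehat\alpha,\widehat\gamma) - (\alpha_0,\gamma_0)|_2$ did not converge to zero in probability, then along a subsequence it would stay bounded away from zero with positive probability, and Assumption \ref{iden-assump} would force $R(\widehat\alpha,\widehat\gamma)$ bounded away from zero, a contradiction. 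Therefore $|\widehat\alpha - \alpha_0|_2 = o_P(1)$ and $|\widehat\gamma - \gamma_0|_2 = o_P(1)$.

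The main obstacle I anticipate is the uniform law of large numbers over $\gamma$ under $\rho$-mixing dependence together with the non-smooth indicator $1\{f_t'\gamma>0\}$: one must carefully invoke (or cite) an appropriate maximal inequality / ULLN for dependent data over a VC-type class, rather than appeal to the i.i.d.\ Glivenko–Cantelli machinery. The summability condition $\sum_m \rho_m^{1/2} < \infty$ in Assumption \ref{A-mixing}\ref{A-mixing:itm1} is exactly what is needed to make such a uniform-in-$\gamma$ law available; the rest of the argument is the textbook consistency route.
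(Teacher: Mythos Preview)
Your approach is correct under the lemma's stated assumptions and follows the textbook M-estimation route, but it differs from the paper's proof in two notable ways. First, for $\widehat\gamma$, the paper does not run a sup-norm ULLN on $\mathbb S_T(\alpha,\gamma)$; instead it profiles $\alpha$ out and studies the concentrated criterion $\mathbb S_T(\gamma)=\mathbb S_T(\widehat\alpha(\gamma),\gamma)=T^{-1}Y'(I-P(\gamma))Y$ via projection matrices, then rescales by $T^{2\varphi}$ so that the limiting objective $\mathbb E(d_0'x_t)^2 1_t - A(\gamma)$ involves the fixed $d_0$ rather than the vanishing $\delta_0$. Second, for $\widehat\alpha$, the paper decomposes $\mathbb S_T-\mathbb S_T(\alpha_0,\gamma_0)=\mathbb R_T-(\mathbb G_T-\mathbb G_T(\alpha_0,\gamma_0))$ and obtains ULLN bounds that are \emph{multiplicative} in $|\alpha-\alpha_0|_2$; combined with the quadratic lower bound $R(\alpha,\gamma)\ge c|\alpha-\alpha_0|_2^2$ from the identification theorem, this yields consistency of $\widehat\alpha$ without a uniform-in-$\alpha$ ULLN over the whole compact $\mathcal A$. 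The practical payoff of the paper's route is robustness to the diminishing-jump framework $\delta_0=d_0T^{-\varphi}$ that underlies the rest of the paper: in that setting your well-separation step for $\gamma$ breaks down as written, because $\inf_{|\gamma-\gamma_0|>\varepsilon}R(\alpha_0,\gamma)=O(T^{-2\varphi})\to 0$ and so $R(\widehat\alpha,\widehat\gamma)=o_P(1)$ no longer pins down $\widehat\gamma$. The $T^{2\varphi}$-scaled profiled argument is exactly what repairs this. Under a fixed $\delta_0$ (which is all the lemma literally assumes), your argument is valid and arguably more elementary.
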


\begin{proof}[Proof of Lemma \ref{cons-thm}]

We begin with stating the following standard ULLN for $ \rho $-mixing sequences, see e.g. \cite{davidson1994}, for which Assumption \ref{A-mixing} \ref{A-mixing:itm1} and \ref{A-mixing:itm2} suffice.
\begin{enumerate}[label=(\roman*)]
	\item\label{A-LLN:itm1}
	$
	\sup_{\gamma \in \Gamma}
	|  \frac{1}{T}\sum_{t=1}^{T}
	Z_{ti}\left( \gamma \right) Z_{tj}\left( \gamma \right)
	- \mathbb E \left[ Z_{ti}\left( \gamma \right) Z_{tj}\left( \gamma \right) \right]
	| = o_{P}\left( 1 \right).
	$
	\item\label{A-LLN:itm2}
	$
	\sup_{\gamma \in \Gamma}
	| \frac{1}{T}\sum_{t=1}^{T}\varepsilon _{t}Z_{t}\left( \gamma \right)
	| = o_{P}\left(1 \right).
	$
\end{enumerate}
These will be cited as ULLN hereafter.

We begin with the consistency of $ \widehat{\gamma} $.
Recall that the least squares estimate of $\alpha $ for a given $\gamma $ is
the OLS\ estimate and construct the profiled least squares criterion $%
\mathbb{S}_{T}\left( \gamma \right) $, that is,
\begin{eqnarray*}
	\mathbb{S}_{T}\left( \gamma \right) &=&\mathbb{S}_{T}\left( \widehat{\alpha}%
	\left( \gamma \right) ,\gamma \right) =\frac{1}{T}Y^{\prime }\left(
	I-P\left( \gamma \right) \right) Y \\
	&=&\frac{1}{T}\left( e^{\prime }\left( I-P\left( \gamma \right) \right)
	e+2\delta _{0}^{\prime }X_{0}\left( I-P\left( \gamma \right) \right)
	e+\delta _{0}^{\prime }X_{0}^{\prime }\left( I-P\left( \gamma \right)
	\right) X_{0}\delta _{0}\right) ,
\end{eqnarray*}%
where $e,Y,$ and $X_{0}$ are the matrices stacking $\varepsilon _{t}$'s, $%
y_{t}^{{}}$'s and $x_{t}^{\prime }1_{t}$'s, respectively$,$ and $P\left(
\gamma \right) $ is the orthogonal projection matrix onto $Z_{t}\left(
\gamma \right) $'s.

Let $\widetilde{\gamma}$ be an estimator such that
\begin{equation}
	\mathbb{S}_{T}\left( \widetilde{\gamma}\right) \leq \mathbb{S}_{T}\left( \gamma
	_{0}\right) +o_P\left( T^{-2\varphi }\right) .  \label{eq:approx gm}
\end{equation}%
Then, by Lemma \ref{A-rates:lemma-used}, the ULLN for $T^{-1}%
\sum_{t=1}^{T}Z_{t}\left( \gamma \right) Z_{t}\left( \gamma \right) ^{\prime
},$ the rank condition for $\mathbb EZ_{t}\left( \gamma \right) Z_{t}\left( \gamma
\right) ^{\prime }$ in Assumption \ref{A-mixing} (iii), the fact that $%
P\left( \gamma _{0}\right) X_{0}=X_{0}$,%
\begin{eqnarray*}
	0 &\geq &T^{2\varphi }\left( \mathbb{S}_{T}\left( \widetilde{\gamma}\right) -%
	\mathbb{S}_{T}\left( \gamma _{0}\right) \right) -o_P\left( 1\right) \\
	&=&\frac{T^{2\varphi }}{T}\left( e^{\prime }\left( P\left( \gamma
	_{0}\right) -P\left( \widetilde{\gamma}\right) \right) e+2\delta _{0}^{\prime
	}X_{0}\left( P\left( \gamma _{0}\right) -P\left( \widetilde{\gamma}\right)
	\right) e+\delta _{0}^{\prime }X_{0}^{\prime }\left( P\left( \gamma
	_{0}\right) -P\left( \widetilde{\gamma}\right) \right) X_{0}\delta _{0}\right) \\
	&=&o_P\left( 1\right) +\frac{1}{T}d_{0}^{\prime }X_{0}^{\prime }\left(
	I-P\left( \widetilde{\gamma}\right) \right) X_{0}d_{0}, \\
	&=&o_P\left( 1\right) +\mathbb Ed_{0}^{\prime }x_{t}x_{t}^{\prime }d_{0}1_{t}-%
	\underset{A\left( \widetilde{\gamma}\right) }{\underbrace{\left( \mathbb Ed_{0}^{\prime
			}x_{t}1_{t}Z_{t}\left( \widetilde{\gamma}\right) ^{\prime }\right) \left(
			\mathbb EZ_{t}\left( \widetilde{\gamma}\right) Z_{t}\left( \widetilde{\gamma}\right)
			^{\prime }\right) ^{-1}\mathbb EZ_{t}\left( \widetilde{\gamma}\right) 1_{t}x_{t}^{\prime
			}d_{0}}}.
\end{eqnarray*}%
However, the term $A\left( \widetilde{\gamma}\right) $ is continuous by Assumption \ref{A-f1 conti} and
has maximum at $\widetilde{\gamma}=\gamma _{0}$ by the property of the orthogonal projection, and
$ \mathbb Ed_{0}^{\prime }x_{t}x_{t}^{\prime }d_{0}1_{t} - A(\gamma) > 0 $ for any $ \gamma \neq \gamma_0 $ due to Assumptions \ref{A-f1 conti} \ref{A1:itm2} and \ref{A-mixing} \ref{A-mixing:itm3}.
Finally, the compact parameter space yields the consistency of $ \widehat{\gamma} $ by the argmax continuous mapping theorem (see, e.g.,   \citet[p.286]{VW}).

Turning to $ \widehat{\alpha} $, note that
	\begin{eqnarray}
	0 &\geq &\mathbb{S}_{T}\left( \widehat{\alpha },\widehat{\gamma }\right) -%
	\mathbb{S}_{T}\left( \alpha _{0},\gamma _{0}\right)  \notag \\
	&=&\mathbb{R}_{T}\left( \widehat{\alpha },\widehat{\gamma }\right) -\mathbb{G%
	}_{T}\left( \widehat{\alpha },\widehat{\gamma }\right) +\mathbb{G}_{T}\left(
	\alpha _{0},\gamma _{0}\right) ,  \label{eq:S-S0}
	\end{eqnarray}%
	where
	\begin{align*}
	\mathbb{R}_{T}\left( \alpha ,\gamma \right) &\equiv \frac{1}{T}%
	\sum_{t=1}^{T}\left( Z_{t}\left( \gamma \right) ^{\prime }\alpha
	-Z_{t}\left( \gamma _{0}\right) ^{\prime }\alpha _{0}\right) ^{2} \\
	\mathbb{G}_{T}\left( \alpha ,\gamma \right) &\equiv \frac{2}{T}%
	\sum_{t=1}^{T}\varepsilon _{t}Z_{t}\left( \gamma \right) ^{\prime }\alpha .
	\end{align*}%
	First, note that
	\begin{align}\label{R-term-LLN}
	\begin{split}
	& \mathbb{R}_{T}\left( \alpha ,\gamma \right) -R\left( \alpha
	,\gamma \right)  \\
	&=\left( \alpha -\alpha _{0}\right) ^{\prime }\frac{1}{T}%
	\sum_{t=1}^{T}\left( Z_{t}\left( \gamma \right) Z_{t}\left( \gamma \right)
	^{\prime }- \mathbb EZ_{t}\left( \gamma \right) Z_{t}\left( \gamma \right) ^{\prime
	}\right) \left( \alpha -\alpha _{0}\right)  \\
	&+\frac{1}{T}\sum_{t=1}^{T}\left( x_{t}^{\prime }\delta _{0}\right)
	^{2}\left\vert 1_{t}\left( \gamma \right) -1_{t}\left( \gamma _{0}\right)
	\right\vert - \mathbb E\left( x_{t}^{\prime }\delta _{0}\right) ^{2}\left\vert
	1_{t}\left( \gamma \right) -1_{t}\left( \gamma _{0}\right) \right\vert
	\\
	&+\frac{2\delta _{0}^{\prime }}{T}\sum_{t=1}^{T}\Big[ x_{t}\left(
	1_{t}\left( \gamma \right) -1_{t}\left( \gamma _{0}\right) \right) Z%
	_{t}\left( \gamma \right) - \mathbb E\left[ x_{t}\left( 1_{t}\left( \gamma \right)
	-1_{t}\left( \gamma _{0}\right) \right) Z_{t}\left( \gamma \right) \right] %
	\Big] ^{\prime }\left( \alpha -\alpha _{0}\right) \\
	&= o_P(1)(|\alpha -\alpha_0 |_2^2 + |\alpha -\alpha_0 |_2 )  \; \text{ uniformly in $ \gamma \in  \Gamma$},
	\end{split}
	\end{align}
	by ULLN.
	Similarly,
	\begin{align}\label{G-term-LLN}
	\begin{split}
	&  \mathbb{G}_{T}\left( \alpha ,\gamma \right) -\mathbb{G}_{T}\left( \alpha _{0},\gamma _{0}\right) \\
	&=\frac{2}{T}\sum_{t=1}^{T}\varepsilon _{t}Z_{t}\left( \gamma \right)
	^{\prime }\left( \alpha -\alpha _{0}\right) +\frac{2}{T}\sum_{t=1}^{T}%
	\varepsilon _{t}x_{t}^{\prime }\delta _{0}\left( 1_{t}\left( \gamma \right)
	-1_{t}\left( \gamma _{0}\right) \right), \\
	&= o_P(1)( |\alpha -\alpha_0 |_2 )  \;
	\text{ uniformly in $ \gamma \in  \Gamma$}
	\end{split}
	\end{align}
	Combining these results together  implies that
	\begin{equation*}
	R\left( \widehat{\alpha},\widehat{\gamma} \right) \leq
	o_P (1)(|\widehat{\alpha} -\alpha_0 |_2 + |\widehat{\alpha} -\alpha_0 |_2^2 ).
	\end{equation*}%
	Then, combining this result with  the proof of Theorem \ref{iden-thm} implies that
	$ \widehat{\alpha} - \alpha_{0} = o_P (1)$ as \eqref{iden-eq-lower-bound} shows that $ R $ is bounded below by some positive constant times $ |\alpha - \alpha_0 |_2^2 $.
\end{proof}

\subsection{Rates of Convergence}


To begin with, we assume $ \gamma $ belongs to a small neighborhood of $ \gamma_0 $ due to the preceding consistency proof.
It is useful to introduce additional notation.
Let $1_{t}\left( \gamma \right) \equiv 1\left\{ f_{t}^{\prime }\gamma >0\right\} $
while $1_{t} \equiv 1_{t}\left( \gamma _{0}\right) $. Similarly, let $1_{t}\left(
\gamma ,\bar{\gamma}\right) \equiv 1\left\{ f_{t}^{\prime }\gamma \leq
0<f_{t}^{\prime }\bar{\gamma}\right\} $. Clearly, $1_{t}\left( \gamma
\right) =1_{t}\left( 0,\gamma \right) $.

Define
\begin{align*}
H_{1,t}(\gamma)
&:= \varepsilon _{t}x_{t}^{\prime }\delta _{0}\left( 1_{t}\left( \gamma \right)-1_{t}\left( \gamma _{0}\right) \right), \\
H_{2,t}(\gamma)
&:=
\left( x_{t}^{\prime }\delta _{0}\right)
^{2}\left\vert 1_{t}\left( \gamma \right) -1_{t}\left( \gamma _{0}\right)
\right\vert, \\
H_{3,t}(\gamma)
&:= (x_{t}'\delta _{0}) \left(
1_{t}\left( \gamma \right) -1_{t}\left( \gamma _{0}\right) \right) Z_{tj} \left( \gamma \right),
\end{align*}
where $Z_{tj}\left( \gamma \right)$ is the $j$-th element of  $Z_{t}\left( \gamma \right)$.
For the simplicity of notation, we suppress the dependence of $H_{3,t}(\gamma)$ on $j$.
We first state a lemma that is a direct consequence of Lemmas \ref{Lem:modul1} and \ref{Lem:rate_gm} for an easy reference.

\begin{lem}
	\label{A-rates:lemma-used}
	There exists a constant $C_2 > 0$ such that for any $\eta > 0$,
	\begin{align*}
	& \sup_{\left\vert \gamma - \gamma_0 \right\vert_2
		\leq T^{-1+2\varphi } }\left\vert \frac{1}{T}\sum_{t=1}^{T}
	\left\{ H_{k,t}(\gamma)
	-\mathbb{E} H_{k,t}(\gamma) \right\} \right\vert
	= O_{P}\left( \frac{1}{T}\right), \\
		& \sup_{\left\vert \gamma - \gamma_0 \right\vert_2
		\leq T^{-1+2\varphi } }\left\vert \frac{1}{T}\sum_{t=1}^{T}
	\left\{ H_{2,t}(\gamma)
	-\mathbb{E} H_{2,t}(\gamma) \right\} \right\vert
	= O_{P}\left( \frac{1}{T^{1+\varphi}}\right), \\
	&\sup_{T^{-1+2\varphi }<\left\vert \gamma -\gamma_{0}\right\vert _{2}<C_2}
	\Bigg|
	\left| \frac{1}{T}\sum_{t=1}^{T}%
	\left\{ H_{k,t}(\gamma)
	-\mathbb{E} H_{k,t}(\gamma) \right\} \right|
	- \eta T^{-2\varphi }\left\vert \gamma -\gamma
	_{0}\right\vert _{2} \Bigg|
	=
	O_{P}\left( \frac{1}{T}\right),
	\end{align*}%
	where $k=1,2,3$.
\end{lem}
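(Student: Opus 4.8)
\textbf{Proof proposal for Lemma~\ref{A-rates:lemma-used}.}

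The plan is to treat each $H_{k,t}$ as a process indexed by $\gamma$ in a shrinking neighborhood of $\gamma_0$ and apply an empirical-process maximal inequality together with a bound on the $L_2$-size of the increments. The essential observation is that each summand $H_{k,t}(\gamma)-\mathbb EH_{k,t}(\gamma)$ depends on $\gamma$ only through the indicator difference $1_t(\gamma)-1_t(\gamma_0) = 1\{f_t'\gamma>0\}-1\{f_t'\gamma_0>0\}$, which is nonzero only on the event $B_{\gamma t}=\{f_t'\gamma_0\le 0<f_t'\gamma\}\cup\{f_t'\gamma\le 0<f_t'\gamma_0\}$. Since $|\gamma-\gamma_0|_2\le \rho$ forces $|f_t'\gamma - f_t'\gamma_0|\le \rho|f_t|_2$, we have $B_{\gamma t}\subset\{|f_t'\gamma_0|\le \rho|f_t|_2\}$, a ``thin slab'' around $u_t=0$. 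Under Assumption~\ref{A:diminishing dt}\ref{A:dim:itm2} (continuity and boundedness of $p_{u_t|f_{2t}}$ and of the relevant conditional second moments near $u_t=0$) and the moment bounds in Assumption~\ref{A-mixing}\ref{A-mixing:itm1}, this yields $\mathbb E[H_{k,t}(\gamma)-H_{k,t}(\gamma')]^2 = O(|\gamma-\gamma'|_2)$ for $k=1,3$ (because $\delta_0 = d_0 T^{-\varphi}$ contributes a factor $T^{-2\varphi}$ from $(x_t'\delta_0)^2$), and $\mathbb E[H_{2,t}(\gamma)-H_{2,t}(\gamma')]^2 = O(T^{-4\varphi}|\gamma-\gamma'|_2)$ — the extra $T^{-2\varphi}$ for $H_2$ coming from the fourth power of $\delta_0$, which is what produces the sharper $T^{-(1+\varphi)}$ rate in the second display. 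First I would record these $L_2$-modulus bounds precisely; they are routine given the density/moment assumptions.

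Next I would invoke a Doob/Bernstein-type maximal inequality for $\rho$-mixing arrays (the mixing summability $\sum\rho_m^{1/2}<\infty$ in Assumption~\ref{A-mixing}\ref{A-mixing:itm1} is exactly what is needed; see, e.g., the chaining arguments in \citep{davidson1994} or the van~der~Vaart–Wellner-style bracketing bound \citep{VW}). The indicator classes $\{1\{f_t'\gamma>0\}:\gamma\in\Gamma\}$ are VC (half-spaces), so the associated $H_{k,t}$ classes have polynomial covering numbers and finite bracketing integral; a standard chaining bound then gives, for the centered sum over the slab of width $\rho$,
\[
\mathbb E\sup_{|\gamma-\gamma_0|_2\le\rho}\Bigl|\tfrac1T\sum_{t=1}^T\{H_{k,t}(\gamma)-\mathbb EH_{k,t}(\gamma)\}\Bigr|
\;\lesssim\; \sqrt{\frac{T^{-2\varphi}\,\rho}{T}}\;+\;\frac{\text{(lower order)}}{T},
\]
and analogously with $T^{-4\varphi}$ in place of $T^{-2\varphi}$ for $k=2$. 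Setting $\rho = T^{-1+2\varphi}$ makes $\sqrt{T^{-2\varphi}\rho/T}=\sqrt{T^{-2\varphi}T^{-1+2\varphi}/T}=T^{-1}$ for $k=1,3$, and $\sqrt{T^{-4\varphi}T^{-1+2\varphi}/T}=T^{-(1+\varphi)}$ for $k=2$; this delivers the first two displays. (These are presumably exactly Lemma~\ref{Lem:modul1} and part of Lemma~\ref{Lem:rate_gm}, which I would cite rather than re-derive.)

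For the third display — the regime $T^{-1+2\varphi}<|\gamma-\gamma_0|_2<C_2$ — the point is that the fluctuation of the centered empirical process, after subtracting off a term $\eta T^{-2\varphi}|\gamma-\gamma_0|_2$, is still $O_P(T^{-1})$ uniformly. I would peel the annulus $\{2^{j-1}T^{-1+2\varphi}<|\gamma-\gamma_0|_2\le 2^jT^{-1+2\varphi}\}$ dyadically up to radius $C_2$. On the $j$-th shell the modulus bound above gives a fluctuation of order $\sqrt{T^{-2\varphi}2^jT^{-1+2\varphi}/T}\cdot(\log)= (2^j)^{1/2}T^{-1}(\log T)^{c}$, while the subtracted drift term $\eta T^{-2\varphi}|\gamma-\gamma_0|_2$ is of order $\eta 2^j T^{-1}$ on that shell; since $2^j T^{-1}$ dominates $(2^j)^{1/2}T^{-1}(\log T)^c$ once $2^j$ exceeds a constant depending on $\eta$ (and one handles the finitely many small-$j$ shells directly by the modulus bound), summing over $j$ gives the claimed $O_P(T^{-1})$. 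The main obstacle — and the only genuinely delicate part — is getting the mixing-dependent maximal inequality sharp enough that the logarithmic chaining factors do not spoil the exact $T^{-1}$ and $T^{-(1+\varphi)}$ rates; this is where one must either use a bracketing-entropy bound with the $L_2$-modulus computed above (so the bracketing integral is finite and contributes only a constant) or a direct moment/Bernstein computation exploiting that the summands are bounded by an integrable envelope supported on an event of probability $O(\rho)$. Everything else is a routine consequence of the density continuity in Assumption~\ref{A:diminishing dt}\ref{A:dim:itm2}, the eighth-moment conditional bounds in Assumption~\ref{A-mixing}\ref{A-mixing:itm1}, and the VC property of half-space indicators.
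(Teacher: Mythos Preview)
Your proposal is essentially correct and matches the paper's approach: the paper states explicitly that this lemma ``is a direct consequence of Lemmas~\ref{Lem:modul1} and~\ref{Lem:rate_gm}'' and gives no further proof, and you correctly identify this and propose to cite those lemmas. Your rate calculations (setting $\rho=T^{-1+2\varphi}$ and reading off $T^{-1}$ and $T^{-1-\varphi}$) are right, and your peeling argument for the third display is exactly how the paper proves Lemma~\ref{Lem:rate_gm}.

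Two small remarks. First, a slip in bookkeeping: for $k=1,3$ the $L_2$-modulus is $O(T^{-2\varphi}|\gamma-\gamma'|_2)$, not $O(|\gamma-\gamma'|_2)$ as you wrote; your subsequent rate computation uses the correct $T^{-2\varphi}$ factor, so this is just a typo. Second, the underlying maximal inequality in the paper (Lemma~\ref{Lem:modul1}) is not proved by VC/chaining but by a fourth-moment argument: Peligrad's (1982) moment bound for $\rho$-mixing sums gives $\mathbb E|T^{-1/2}\sum_t(z_t1_t(\gamma_1,\gamma_2)-\mathbb E[\cdot])|^4\le C|c_1-c_2|^2$, and this is fed into the Bickel--Wichura (1971) multiparameter fluctuation inequality to control the supremum. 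This route sidesteps the log-factor issue you flagged as the ``only genuinely delicate part'' --- the fourth-moment/block approach gives the exact $T^{-1}$ and $T^{-1-\varphi}$ rates without any chaining overhead.
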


\begin{lem}[Rates of Convergence]\label{rates-thm}
	Let Assumptions \ref{scale-normalization}, \ref{A-f1 conti}, \ref{A-mixing}, and
	\ref{A:diminishing dt}	hold. Then  as $T \rightarrow \infty$,
	\begin{equation*}
	\left\vert \widehat{\alpha }-\alpha _{0}\right\vert _{2}=O_{P}\left( \frac{1%
	}{\sqrt{T}}\right) \ \text{and\ }\left\vert \widehat{\gamma }-\gamma
	_{0}\right\vert _{2}=O_{P}\left( \frac{1}{T^{1-2\varphi }}\right) .
	\end{equation*}
\end{lem}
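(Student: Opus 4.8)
The plan is to establish the two rates separately, obtaining first the near-$\gamma_0$ localization of $\widehat\gamma$, then the $\sqrt T$-rate of $\widehat\alpha$, and finally bootstrapping the $\widehat\gamma$ rate to its sharp value. Throughout, I would work on the event that $(\widehat\alpha,\widehat\gamma)$ lies in a small neighborhood of $(\alpha_0,\gamma_0)$, which is licensed by Lemma \ref{cons-thm}.

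\textbf{Step 1: A preliminary rate for $\widehat\gamma$.} I would start from the basic inequality $\mathbb S_T(\widehat\alpha,\widehat\gamma)\le\mathbb S_T(\alpha_0,\gamma_0)$ and expand $\mathbb S_T(\alpha,\gamma)-\mathbb S_T(\alpha_0,\gamma_0)=\mathbb R_T(\alpha,\gamma)-\mathbb G_T(\alpha,\gamma)+\mathbb G_T(\alpha_0,\gamma_0)$ as in the consistency proof, where $\mathbb R_T$ is the quadratic ``mean'' part and $\mathbb G_T$ the cross term involving $\varepsilon_t$. The key is to lower-bound $\mathbb R_T$. Decomposing $Z_t(\gamma)'\alpha-Z_t(\gamma_0)'\alpha_0$ and using the rank condition in Assumption \ref{A-mixing}\ref{A-mixing:itm3}, the orthogonal-projection structure, and Assumption \ref{A:diminishing dt}\ref{A:dim:itm3} (the first-moment rank condition on $f_{2t}$) together with \ref{A:dim:itm2} (densities and conditional second moments bounded away from zero at $u_t=0$), I expect a bound of the form $\mathbb E H_{2,t}(\gamma)\ge c\,T^{-2\varphi}|\gamma-\gamma_0|_2$ for $\gamma$ in a shrinking neighborhood; this is exactly the linear-with-kink drift that shows up in Theorem \ref{asdist-alpha-gamma}. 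On the stochastic side, Lemma \ref{A-rates:lemma-used} controls the fluctuations of $T^{-1}\sum_t\{H_{k,t}(\gamma)-\mathbb E H_{k,t}(\gamma)\}$ uniformly: they are $O_P(T^{-1})$ for $|\gamma-\gamma_0|_2\le T^{-1+2\varphi}$ and, crucially, for larger $|\gamma-\gamma_0|_2$ they can be made smaller than $\eta T^{-2\varphi}|\gamma-\gamma_0|_2 + O_P(T^{-1})$ for arbitrarily small $\eta$. Combining the deterministic lower bound with these modulus-of-continuity bounds, the cross-product terms involving $\widehat\alpha-\alpha_0$ (which are $o_P(1)$ by consistency, and ultimately $O_P(T^{-1/2})$), one gets $c\,T^{-2\varphi}|\widehat\gamma-\gamma_0|_2\le O_P(T^{-1})+o_P(T^{-2\varphi})|\widehat\gamma-\gamma_0|_2$, hence $|\widehat\gamma-\gamma_0|_2=O_P(T^{-(1-2\varphi)})$.

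\textbf{Step 2: The $\sqrt T$-rate for $\widehat\alpha$.} Given the Step-1 localization of $\widehat\gamma$, I would return to $0\ge\mathbb R_T(\widehat\alpha,\widehat\gamma)-\mathbb G_T(\widehat\alpha,\widehat\gamma)+\mathbb G_T(\alpha_0,\gamma_0)$. Using the ULLN displays \eqref{R-term-LLN} and \eqref{G-term-LLN} from the consistency proof, but now retaining the explicit order of the empirical-process remainders, one has $\mathbb R_T(\alpha,\gamma)=R(\alpha,\gamma)+o_P(1)(|\alpha-\alpha_0|_2^2+|\alpha-\alpha_0|_2)$ and $\mathbb G_T(\alpha,\gamma)-\mathbb G_T(\alpha_0,\gamma_0)=\frac2T\sum_t\varepsilon_t Z_t(\gamma)'(\alpha-\alpha_0)+\frac2T\sum_t\varepsilon_t x_t'\delta_0(1_t(\gamma)-1_t(\gamma_0))$; the first piece is $O_P(T^{-1/2})|\alpha-\alpha_0|_2$ by the martingale CLT and the second is $o_P(T^{-2\varphi})$ on $|\gamma-\gamma_0|_2\le CT^{-(1-2\varphi)}$ by Lemma \ref{A-rates:lemma-used}. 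Meanwhile \eqref{iden-eq-lower-bound} in the proof of Theorem \ref{iden-thm} gives $R(\alpha,\gamma)\ge c|\alpha-\alpha_0|_2^2$. Feeding these in yields $c|\widehat\alpha-\alpha_0|_2^2\le O_P(T^{-1/2})|\widehat\alpha-\alpha_0|_2+o_P(|\widehat\alpha-\alpha_0|_2^2)+o_P(T^{-1})$, which forces $|\widehat\alpha-\alpha_0|_2=O_P(T^{-1/2})$. One should iterate Steps 1 and 2 once if needed: Step 1 used $|\widehat\alpha-\alpha_0|_2=o_P(1)$, and the sharper $O_P(T^{-1/2})$ only improves the cross terms there, so the rates are mutually consistent.

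\textbf{Main obstacle.} The genuinely delicate point is the sharp linear-in-$|\gamma-\gamma_0|$ lower bound for $\mathbb E H_{2,t}(\gamma)$ (equivalently for the profiled criterion) that holds \emph{uniformly} over a shrinking neighborhood with a constant independent of direction. This is where Assumption \ref{A:diminishing dt}\ref{A:dim:itm3} and the asymptotic-linear-approximation structure do the real work: the indicator difference $1_t(\gamma)-1_t(\gamma_0)$ has, locally, probability of order $|f_{2t}'(\gamma_2-\gamma_{0,2})|\,p_{u_t|f_{2t}}(0)$, and integrating against $(x_t'd_0)^2$ and taking an infimum over unit directions requires precisely the first-moment nondegeneracy of $f_{2t}$. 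Matching this to the correct modulus-of-continuity estimate in Lemma \ref{A-rates:lemma-used} — so that the stochastic error $O_P(T^{-1})$ does not dominate the signal $T^{-2\varphi}|\gamma-\gamma_0|$ near the boundary $|\gamma-\gamma_0|\asymp T^{-(1-2\varphi)}$ — is the crux; once that balancing is in place, the rest is bookkeeping already carried out in the cited lemmas.
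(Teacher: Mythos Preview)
Your Step 1 has a real gap. When you expand the population excess loss $R(\alpha,\gamma)$ (equation \eqref{expansion-in-R} in the paper's proof), besides the quadratic term $c_1|\alpha-\alpha_0|_2^2$ and the drift $c_0T^{-2\varphi}|\gamma-\gamma_0|_2$ you also pick up a cross term of order $T^{-\varphi}|\gamma-\gamma_0|_2\,|\alpha-\alpha_0|_2$ coming from $2\mathbb E\big[x_t'\delta_0(1_t(\gamma)-1_t(\gamma_0))Z_t(\gamma)'\big](\alpha-\alpha_0)$. With only consistency of $\widehat\alpha$, this cross term is $o_P(T^{-\varphi})|\widehat\gamma-\gamma_0|_2$, which is strictly \emph{larger} in order than the drift $cT^{-2\varphi}|\widehat\gamma-\gamma_0|_2$; it therefore cannot be written as $o_P(T^{-2\varphi})|\widehat\gamma-\gamma_0|_2$ as you claim. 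Your displayed inequality in Step 1 is not valid, and the iteration you propose does not rescue it because without a first-pass rate for $\widehat\gamma$ you never get Step 2 off the ground.

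The paper handles this by keeping the quadratic term $c_1|\alpha-\alpha_0|_2^2$ in the \emph{joint} lower bound for $R(\alpha,\gamma)$ and absorbing the cross term via a case split (effectively Young's inequality: either the quadratic eats it, or the cross term becomes $O(T^{-2\varphi}|\gamma-\gamma_0|_2^2)$, which is absorbed by the linear drift on a small neighborhood). This yields $R(\alpha,\gamma)\ge c\big(|\alpha-\alpha_0|_2^2+T^{-2\varphi}|\gamma-\gamma_0|_2\big)$ jointly, and combining with the modulus-of-continuity bounds for $\mathbb G_T$ and $\mathbb R_T-R$ gives the single inequality $(c-2\eta)\big(|\widehat\alpha-\alpha_0|_2^2+T^{-2\varphi}|\widehat\gamma-\gamma_0|_2\big)\le O_P(T^{-1/2})|\widehat\alpha-\alpha_0|_2+O_P(T^{-1})$, from which the $\alpha$-rate falls out first and the $\gamma$-rate follows. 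So the logical order is the reverse of yours: the joint bound forces $|\widehat\alpha-\alpha_0|_2=O_P(T^{-1/2})$ before (or simultaneously with) the sharp $\gamma$-rate, precisely because the cross term ties them together.
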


\begin{proof}[Proof of Lemma \ref{rates-thm}]
The proof is based on the following two steps, which will be shown later.

\noindent
\textit{Step 1}. As $T \rightarrow \infty$, there exist positive constants $c$ and $ e $, with probability approaching one,
\begin{align*}
R\left( \alpha ,\gamma \right) &\geq c \left\vert \alpha -\alpha _{0}\right\vert _{2}^{2}+ c T^{-2\varphi
}\left\vert \gamma -\gamma _{0}\right\vert _{2},
\end{align*}%
for any $ \alpha $ and $ \gamma $ such that $ |\alpha-\alpha_0|<e $ and $ |\gamma - \gamma_0|<e $. Recall $R(\alpha,\gamma)$ is defined in (\ref{eq:excess:supp}).

\noindent
\textit{Step 2}. There exists a positive constant $\eta < c/2$ such that
\begin{align}
\left\vert \mathbb{G}_{T}\left( \alpha ,\gamma \right) -\mathbb{G}_{T}\left(
\alpha _{0},\gamma _{0}\right) \right\vert &\leq O_{P}\left( \frac{1}{\sqrt{%
T}}\right) \left\vert \alpha -\alpha _{0}\right\vert _{2}+\eta T^{-2\varphi
}\left\vert \gamma -\gamma _{0}\right\vert _{2}+O_{P}\left( \frac{1}{T}%
\right)  \label{eq:boundG} \\
\left\vert \mathbb{R}_{T}\left( \alpha ,\gamma \right) -R\left( \alpha
,\gamma \right) \right\vert &\leq \eta \left\vert \alpha -\alpha
_{0}\right\vert _{2}^{2}+\eta T^{-2\varphi }\left\vert \gamma -\gamma
_{0}\right\vert _{2}+O_{P}\left( \frac{1}{T}\right), \label{eq:boundR}
\end{align}%
where the inequalities above are uniform in $\alpha $ and $\gamma $ such that $ |\alpha-\alpha_0|<e $ and $ |\gamma - \gamma_0|<e $,
in the sense that
the sequences $O_{P}\left( \cdot \right) $ and $o_P\left( \cdot \right) $
do not depend on $\alpha $ and $\gamma $.

Given Steps 1 and 2, since
\begin{equation*}
R\left( \widehat{\alpha },\widehat{\gamma }\right) \leq \left\vert \mathbb{G}%
_{T}\left( \widehat{\alpha },\widehat{\gamma }\right) -\mathbb{G}_{T}\left(
\alpha _{0},\gamma _{0}\right) \right\vert +\left\vert \mathbb{R}_{T}\left(
\widehat{\alpha },\widehat{\gamma }\right) -R\left( \widehat{\alpha },%
\widehat{\gamma }\right) \right\vert ,
\end{equation*}%
we conclude that
\begin{equation}
\left( c-2\eta \right) \left( \left\vert \widehat{\alpha }-\alpha
_{0}\right\vert _{2}^{2}+T^{-2\varphi }\left\vert \widehat{\gamma }-\gamma
_{0}\right\vert _{2}\right) \leq O_{P}\left( \frac{1}{\sqrt{T}}\right)
\left\vert \widehat{\alpha }-\alpha _{0}\right\vert _{2}+O_{P}\left( \frac{1%
}{T}\right) .  \label{eq:ratebound}
\end{equation}%
That is,
\begin{equation*}
\left\vert \widehat{\alpha }-\alpha _{0}\right\vert _{2}^{2}\leq O_{P}\left(
\frac{1}{\sqrt{T}}\right) \left\vert \widehat{\alpha }-\alpha
_{0}\right\vert _{2}+O_{P}\left( \frac{1}{T}\right) ,
\end{equation*}%
implying
\begin{equation*}
\left\vert \widehat{\alpha }-\alpha _{0}\right\vert _{2}=O_{P}\left( \frac{1%
}{\sqrt{T}}\right) \ \text{and thus\ }\left\vert \widehat{\gamma }-\gamma
_{0}\right\vert _{2}=O_{P}\left( \frac{1}{T^{1-2\varphi }}\right) .
\end{equation*}
\end{proof}

\proof[Proof of Step 1]

Due to Assumption \ref{A:diminishing dt} and then Assumption \ref{A-f1 conti} we can find positive constants $ c, c_0 $ such that
\begin{align*}
 \mathbb E\left( x_{t}^{\prime }\delta _{0}\left( 1_{t}\left( \gamma \right)
-1_{t}\left( \gamma _{0}\right) \right) \right) ^{2}
&\geq
T^{-2\varphi
}c \mathbb E\left\vert 1_{t}\left( \gamma \right) -1_{t}\left( \gamma _{0}\right)
\right\vert
\\
&\geq
c_0 T^{-2\varphi }\left\vert \gamma -\gamma _{0}\right\vert
_{2}.
\end{align*}
More specifically, we need to show that there exists a constant $c>0$ and a neighborhood of $%
\gamma _{0}$ such that for all $\gamma $ in the neighborhood
\[
G\left( \gamma \right) =\mathbb{E}\left\vert 1_{t}\left( \gamma \right)
-1_{t}\left( \gamma _{0}\right) \right\vert \geq c\left\vert \gamma -\gamma
_{0}\right\vert _{2}.
\]%
Note that $f_{t}^{\prime }\gamma _{0}=u_{t}$ and the first element of $%
\left( \gamma -\gamma _{0}\right) $ is zero due to the normalization. Then,
\[
G\left( \gamma \right) =\mathbb{P}\left\{ -f_{2t}^{\prime }\left( \gamma
_{2}-\gamma _{20}\right) \leq u_{t}<0\right\} +\mathbb{P}\left\{ 0<u_{t}\leq
-f_{2t}^{\prime }\left( \gamma _{2}-\gamma _{20}\right) \right\} .
\]%
Since the conditional density of $u_{t}$ is bounded away from zero and
continuous, we can find a strictly positive lower bound, say $c_{1},$ of the
conditional density of $u_{t}$ if we choose a sufficiently small open
neighborhood $\epsilon $ of zero. Then,
\[
\mathbb{P}\left\{ -f_{2t}^{\prime }\left( \gamma _{2}-\gamma _{20}\right)
\leq u_{t}<0\right\} \geq c_{1}\mathbb{E}\left( f_{2t}^{\prime }\left(
\gamma _{2}-\gamma _{20}\right) 1\left\{ f_{2t}^{\prime }\left( \gamma
_{2}-\gamma _{20}\right) >0\right\} 1\left\{ \left\vert f_{2t}^{\prime
}\right\vert \leq M\right\} \right) ,
\]%
where $M$ satisfies that $\max \left\vert \gamma -\gamma _{0}\right\vert
_{2}M$ belongs to $\epsilon $. This is always feasible because we can make $%
\max \left\vert \gamma -\gamma _{0}\right\vert _{2}$ as small as necessary
due to the consistency of $\hat{\gamma}$. Similarly,
\[
\mathbb{P}\left\{ 0<u_{t}\leq -f_{2t}^{\prime }\left( \gamma _{2}-\gamma
_{20}\right) \right\} \geq c_{1}\mathbb{E}\left( -f_{2t}^{\prime }\left(
\gamma _{2}-\gamma _{20}\right) 1\left\{ f_{2t}^{\prime }\left( \gamma
_{2}-\gamma _{20}\right) <0\right\} 1\left\{ \left\vert f_{2t}^{\prime
}\right\vert \leq M\right\} \right) .
\]%
Thus,
\[
G\left( \gamma \right) \geq c_{1}\mathbb{E}\left( \left\vert f_{2t}^{\prime
}\left( \gamma _{2}-\gamma _{20}\right) \right\vert 1\left\{ \left\vert
f_{2t}^{\prime }\right\vert \leq M\right\} \right) \geq c_{2}\left\vert
\gamma -\gamma _{0}\right\vert _{2}
\]%
for some $c_{2}>0$ because
\[
\inf_{\left\vert r\right\vert =1}\mathbb{E}\left( \left\vert f_{2t}^{\prime
}r\right\vert 1\left\{ \left\vert f_{2t}^{\prime }\right\vert \leq M\right\}
\right) >0
\]%
for some $M<\infty $ due to Assumption \ref{A:diminishing dt}.

Next,
\begin{align*}
 \mathbb E\left( Z_{t}\left( \gamma \right)
^{\prime }\left( \alpha -\alpha _{0}\right) \right) ^{2} &
\geq c_1 \left\vert \alpha -\alpha _{0}\right\vert _{2}^{2},
\end{align*}%
due to Assumption \ref{A-mixing} \ref{A-mixing:itm3}.

Also, note that
\begin{align*}
&\left\vert  \mathbb E\left( x_{t}^{\prime }\delta _{0}\left( 1_{t}\left( \gamma
\right) -1_{t}\left( \gamma _{0}\right) \right) \right) Z_{t}\left( \gamma
\right) ^{\prime }\left( \alpha -\alpha _{0}\right) \right\vert \\
&\leq T^{-\varphi }   \mathbb E \Big[ \left| x_{t}^{\prime }d _{0}\right| \left| 1_{t}\left( \gamma \right)
-1_{t}\left( \gamma _{0}\right) \right|
 \left| Z_{t}\left( \gamma
\right) \right|_2 \left| \alpha -\alpha _{0} \right|_2 \Big]  \\
&\leq 2 T^{-\varphi }  | d_0 |_2 C_0 C_1 \left\vert \gamma -\gamma _{0}\right\vert
_{2} \left\vert \alpha -\alpha _{0}\right\vert _{2},
\end{align*}%
where  the second inequality comes from Assumption \ref{A-mixing} \ref{A-mixing:itm1} and Assumption \ref{A-f1 conti} \ref{A1:itm1}.
Combining the inequalities above together yields that
\begin{align}\label{expansion-in-R}
\begin{split}
R\left( \alpha ,\gamma \right) &= \mathbb E\left( Z_{t}\left( \gamma \right)
^{\prime }\left( \alpha -\alpha _{0}\right) \right) ^{2}+ \mathbb E\left(
x_{t}^{\prime }\delta _{0}\left( 1_{t}\left( \gamma \right) -1_{t}\left(
\gamma _{0}\right) \right) \right) ^{2} \\
&+2 \mathbb E\left( x_{t}^{\prime }\delta _{0}\left( 1_{t}\left( \gamma \right)
-1_{t}\left( \gamma _{0}\right) \right) \right) Z_{t}\left( \gamma \right)
^{\prime }\left( \alpha -\alpha _{0}\right) \\
&\geq c_1 \left\vert \alpha -\alpha _{0}\right\vert _{2}^{2}+c_0 T^{-2\varphi
}\left\vert \gamma -\gamma _{0}\right\vert _{2}- C_2 T^{-\varphi }  \left\vert \gamma -\gamma _{0}\right\vert
_{2} \left\vert \alpha -\alpha _{0}\right\vert _{2},
\end{split}
\end{align}%
where $C_2 = 2   | d_0 |_2 C_0 C_1$.

We consider two cases: (i)
$
 c_1 \left\vert \alpha -\alpha _{0}\right\vert _{2}
\geq
2C_2 T^{-\varphi }  \left\vert \gamma -\gamma _{0}\right\vert_{2}
$
and (ii)
$
 c_1 \left\vert \alpha -\alpha _{0}\right\vert _{2}
<
2C_2 T^{-\varphi }  \left\vert \gamma -\gamma _{0}\right\vert_{2}.
$
When (i) holds,
\begin{align*}
R\left( \alpha ,\gamma \right) &\geq \frac{c_1}{2} \left\vert \alpha -\alpha _{0}\right\vert _{2}^{2}+c_0 T^{-2\varphi
}\left\vert \gamma -\gamma _{0}\right\vert _{2}.
\end{align*}%
When (ii) holds, we have that
$$
 C_2 T^{-\varphi }  \left\vert \gamma -\gamma _{0}\right\vert
_{2}  \left\vert \alpha -\alpha _{0}\right\vert _{2}
<
2 c_1^{-1} C_2^2 T^{-2\varphi }  \left\vert \gamma -\gamma _{0}\right\vert
_{2}^2.
$$
Then under (ii),
\begin{align*}
&c_0 T^{-2\varphi
}\left\vert \gamma -\gamma _{0}\right\vert _{2} - C_2
 T^{-\varphi }  \left\vert \gamma -\gamma _{0}\right\vert
_{2}  \left\vert \alpha -\alpha _{0}\right\vert _{2} \\
&> T^{-2\varphi
}\left\vert \gamma -\gamma _{0}\right\vert _{2}
\left[c_0 - 2 c_1^{-1} C_2^2   \left\vert \gamma -\gamma _{0}\right\vert
_{2} \right].
\end{align*}
Thus, as long as $\left\vert \gamma -\gamma _{0}\right\vert
_{2} \leq c_0 c_1/(4 C_2^2)$,  we obtain the desired result.
This completes the proof of Step 1 by taking $c = \min \{ c_0, c_1 \}/2$ since
$\left\vert \widehat{\gamma }-\gamma
_{0}\right\vert _{2}=o_{P}\left( 1 \right)$
by Lemma  \ref{cons-thm}.
$\qed$

\proof[Proof of Step 2]

To prove (\ref{eq:boundG}), note that as in \eqref{G-term-LLN},
\begin{align}
&\frac{1}{2}\left| \mathbb{G}_{T}\left( \alpha ,\gamma \right) -\mathbb{G}%
_{T}\left( \alpha _{0},\gamma _{0}\right) \right|  \label{eq:G-G0} \\
&\leq \left| \frac{1}{T}\sum_{t=1}^{T}\varepsilon _{t}Z_{t}\left( \gamma \right)
^{\prime }\left( \alpha -\alpha _{0}\right) \right| + \left| \frac{1}{T}\sum_{t=1}^{T}%
\varepsilon _{t}x_{t}^{\prime }\delta _{0}\left( 1_{t}\left( \gamma \right)
-1_{t}\left( \gamma _{0}\right) \right) \right|  \notag \\
&= O_{P}\left( \frac{1}{\sqrt{T}}\right) \left\vert \alpha -\alpha
_{2}\right\vert _{2}+\eta T^{-2\varphi }\left\vert \gamma -\gamma
_{0}\right\vert _{2}+O_{P}\left( \frac{1}{T}\right)  \notag
\end{align}%
for any $0<\eta <c/2$,
by the MDS CLT and Lemma \ref{Lem:modul1}  for the first term
$T^{-1/2}\sum_{t=1}^{T}\varepsilon _{t}Z_{t}\left(
\gamma \right) $ and by Assumption \ref{A-rates:lemma-used} for the second term
 $T^{-1}\sum_{t=1}^{T}\varepsilon _{t}x_{t}^{\prime }\delta
_{0}\left( 1_{t}\left( \gamma \right) -1_{t}\left( \gamma _{0}\right)
\right) $.

We now prove \eqref{eq:boundR}.
Note that for any $0<\eta <c/2$,  as in \eqref{R-term-LLN},
\begin{align}
&\left\vert \mathbb{R}_{T}\left( \alpha ,\gamma \right) -R\left( \alpha
,\gamma \right) \right\vert  \label{eq:R_T - R} \\
&\leq \left| \left( \alpha -\alpha _{0}\right) ^{\prime }\frac{1}{T}%
\sum_{t=1}^{T}\left( Z_{t}\left( \gamma \right) Z_{t}\left( \gamma \right)
^{\prime }-\mathbb EZ_{t}\left( \gamma \right) Z_{t}\left( \gamma \right) ^{\prime
}\right) \left( \alpha -\alpha _{0}\right)  \right| \notag \\
&+ \left| \frac{1}{T}\sum_{t=1}^{T}\left( x_{t}^{\prime }\delta _{0}\right)
^{2}\left\vert 1_{t}\left( \gamma \right) -1_{t}\left( \gamma _{0}\right)
\right\vert -\mathbb{E}\left( x_{t}^{\prime }\delta _{0}\right) ^{2}\left\vert
1_{t}\left( \gamma \right) -1_{t}\left( \gamma _{0}\right) \right\vert \right|
\notag \\
&+\left| \frac{2}{T}\sum_{t=1}^{T}\delta _{0}^{\prime } \left[ x_{t}\left(
1_{t}\left( \gamma \right) -1_{t}\left( \gamma _{0}\right) \right) Z%
_{t}\left( \gamma \right) -\mathbb{E}\left[ x_{t}\left( 1_{t}\left( \gamma \right)
-1_{t}\left( \gamma _{0}\right) \right) Z_{t}\left( \gamma \right) \right] %
\right] ^{\prime }\left( \alpha -\alpha _{0}\right) \right|  \notag \\
&\leq o_P\left( \left\vert \alpha -\alpha _{0}\right\vert _{2}^{2}\right)
+O_{P}\left( \frac{1}{T}\right) +\eta T^{-2\varphi }\left\vert \gamma
-\gamma _{0}\right\vert _{2}  \notag
\end{align}%
by ULLN for the first term and by Lemma \ref{A-rates:lemma-used} for the second and third terms. This completes the proof.
$\qed$

\subsection{Asymptotic Distribution}\label{sec:asymptotic-distn:known-f}

\begin{proof}[Proof of Theorem \ref{asdist-alpha-gamma}]
	Let $r_T \equiv  T^{1-2\varphi }$, $a \equiv \sqrt{T}\left( \alpha -\alpha
	_{0}\right) $ and $g \equiv r_T \left( \gamma -\gamma _{0}\right) $.
	To prove the theorem, we first
	derive the weak convergence of the process
	\begin{equation*}
	\mathbb{K}_{T}\left( a,g\right) \equiv T\left( \mathbb{S}_{T}\left( \alpha
	_{0}+a\cdot T^{-1/2},\gamma _{0}+g\cdot r_{T}^{-1}\right) -\mathbb{S}%
	_{T}\left( \alpha _{0},\gamma _{0}\right) \right) ,
	\end{equation*}%
	over an arbitrary compact set, say $\mathcal{AG}$, and then apply the argmax continuous mapping
	theorem to obtain the limit distribution of $\widehat{\alpha}$ and $\widehat{\gamma}$.
	\medskip

	\noindent
	\textit{Step 1}.
	The following decomposition holds uniformly in $(a,g) \in \mathcal{AG}$:
	\begin{equation*}
	\mathbb{K}_{T}\left( a,g\right) =\mathbb{K}_{1T}\left( a\right) +\mathbb{K}%
	_{2T}\left( g\right) - 2\mathbb{K}_{3T}\left( g\right) +o_P\left( 1\right) ,
	\end{equation*}%
	where
	\begin{align*}
	\mathbb{K}_{1T}\left( a\right) &:=a^{\prime }\mathbb EZ_{t}\left( \gamma _{0}\right)
	Z_{t}\left( \gamma _{0}\right) ^{\prime }a-\frac{2}{\sqrt{T}}%
	\sum_{t=1}^{T}\varepsilon _{t}Z_{t}\left( \gamma _{0}\right) ^{\prime }a, \\
	\mathbb{K}_{2T}\left( g\right) &:= T\cdot  \mathbb E\left[\left( x_{t}^{\prime }\delta
	_{0}\right) ^{2}\left\vert 1_{t}\left( \gamma _{0}+g\cdot r_{T}^{-1}\right)
	-1_{t}\right\vert\right], \\
	\mathbb{K}_{3T}\left( g\right) &:= \sum_{t=1}^{T}\varepsilon
	_{t}x_{t}^{\prime }\delta _{0}\left( 1_{t}\left( \gamma _{0}+g\cdot
	r_{T}^{-1}\right) -1_{t}\right).
	\end{align*}

	\proof[Proof of Step 1]
	To begin with, note that (\ref{eq:R_T - R}) and Lemma \ref{A-rates:lemma-used} together imply that
	\begin{align}\label{tmp-1}
	\begin{split}
	&T\cdot \left[ \mathbb{R}_{T}\left( \alpha _{0}+a\cdot T^{-1/2},\gamma
	_{0}+g\cdot r_{T}^{-1}\right) -R\left( \alpha _{0}+a\cdot T^{-1/2},\gamma
	_{0}+g\cdot r_{T}^{-1}\right) \right] \\
	&=o_P(1) \;\; \text{uniformly in $(a,g) \in \mathcal{AG}$}.
	\end{split}
	\end{align}%
	Recall \eqref{expansion-in-R} and write that
	\begin{align}\label{tmp-2}
	\begin{split}
	&T \cdot R\left( \alpha _{0}+a\cdot T^{-1/2} ,\gamma _{0}+g\cdot r_{T}^{-1} \right) \\
	&=a' \mathbb E\left[ Z_{t}\left( \gamma _{0}+g\cdot r_{T}^{-1} \right) Z_{t}\left( \gamma _{0}+g\cdot r_{T}^{-1} \right)^{\prime } \right] a \\
	&+T\cdot \mathbb   E\left( x_{t}^{\prime }\delta _{0}\right) ^{2}\left\vert 1\left\{
	f_{t}^{\prime }\left( \gamma _{0}+g\cdot r_{T}^{-1}\right) >0\right\}
	-1\left\{ f_t'\gamma_0\right\} \right\vert \\
	&+2T^{1/2} \cdot \mathbb E\left( x_{t}^{\prime }\delta _{0}\left( 1_{t}\left( \gamma _{0}+g\cdot r_{T}^{-1} \right)
	-1_{t}\left( \gamma _{0}\right) \right) \right) Z_{t}\left( \gamma _{0}+g\cdot r_{T}^{-1} \right)
	^{\prime }a.
	\end{split}
	\end{align}%
	Then, due to Assumption \ref{A:diminishing dt},
	\begin{align}\label{tmp-3}
	\begin{split}
	a'\left\{ \mathbb E\left[ Z_{t}\left( \gamma _{0}+g\cdot r_{T}^{-1} \right) Z_{t}\left( \gamma _{0}+g\cdot r_{T}^{-1} \right)^{\prime } \right]
	- \mathbb E\left[ Z_{t}\left( \gamma_0 \right) Z_{t}\left( \gamma_0) \right)^{\prime } \right]  \right\} a &= o_P(1), \\
	T^{1/2} \cdot \mathbb E\left[ \left( x_{t}^{\prime }\delta _{0}\left( 1_{t}\left( \gamma _{0}+g\cdot r_{T}^{-1} \right)
	-1_{t}\left( \gamma _{0}\right) \right) \right) Z_{t}\left( \gamma _{0}+g\cdot r_{T}^{-1} \right)
	^{\prime } \right] a &= o_P(1)
	\end{split}
	\end{align}
	uniformly in $(a,g) \in \mathcal{AG}$.
	Then combining \eqref{tmp-1}-\eqref{tmp-3} yields that
	\begin{align}\label{tmp-4}
	\begin{split}
	&T\cdot  \mathbb{R}_{T}\left( \alpha _{0}+a\cdot T^{-1/2},\gamma
	_{0}+g\cdot r_{T}^{-1}\right) \\
	&=
	a' \mathbb E\left[ Z_{t}\left( \gamma _{0} \right) Z_{t}\left( \gamma _{0}\right)^{\prime } \right] a
	+T\cdot \mathbb{E}\left( x_{t}^{\prime }\delta _{0}\right) ^{2}\left\vert 1\left\{
	f_{t}^{\prime }\left( \gamma _{0}+g\cdot r_{T}^{-1}\right) >0\right\}
	-1\left\{ f_t'\gamma_0\right\} \right\vert  \\
	&+ o_P(1) \;\; \text{uniformly in $(a,g) \in \mathcal{AG}$}.
	\end{split}
	\end{align}%

	We now consider the term $T\left[ \mathbb{G}_{T}\left( \alpha _{0}+a\cdot T^{-1/2},\gamma
	_{0}+g\cdot r_{T}^{-1}\right) -\mathbb{G}_{T}\left( \alpha _{0},\gamma
	_{0}\right) \right]$. First, note that due to Lemma \ref{Lem:modul1},
	\begin{align}\label{tmp-5}
	\frac{1}{\sqrt{T}}\sum_{t=1}^{T}\varepsilon _{t}
	\left[ Z_{t}\left( \gamma _{0}+g\cdot r_{T}^{-1}\right)
	- Z_{t}\left( \gamma _{0}\right) \right]^{\prime }a
	= o_P\left( 1\right)
	\end{align}
	uniformly in $(a,g) \in \mathcal{AG}$.
	Then, recall (\ref{G-term-LLN}) and write that
	\begin{align}
	&T\left[ \mathbb{G}_{T}\left( \alpha _{0}+a\cdot T^{-1/2},\gamma
	_{0}+g\cdot r_{T}^{-1}\right) -\mathbb{G}_{T}\left( \alpha _{0},\gamma
	_{0}\right) \right] \nonumber \\
	&=\frac{2}{\sqrt{T}}\sum_{t=1}^{T}\varepsilon _{t}Z_{t}\left( \gamma _{0}+g\cdot r_{T}^{-1}\right) ^{\prime }a+2\sum_{t=1}^{T}\varepsilon _{t}x_{t}^{\prime }\delta
	_{0}\left( 1_{t}\left( \gamma _{0}+g\cdot r_{T}^{-1}\right) -1_{t}\left(
	\gamma _{0}\right) \right) \nonumber \\
	&=\frac{2}{\sqrt{T}}\sum_{t=1}^{T}\varepsilon _{t}Z_{t}\left( \gamma
	_{0}\right) ^{\prime }a+2\sum_{t=1}^{T}\varepsilon _{t}x_{t}^{\prime }\delta
	_{0}\left( 1_{t}\left( \gamma _{0}+g\cdot r_{T}^{-1}\right) -1_{t}\left(
	\gamma _{0}\right) \right) +o_P\left( 1\right), \label{AD-G}
	\end{align}%
	uniformly in $(a,g) \in \mathcal{AG}$, where the last equality follows from \eqref{tmp-5}.
	Then Step 1 follows immediately recalling the decomposition in (\ref{eq:S-S0})
	and collecting the leading terms in \eqref{tmp-4} and \eqref{AD-G}.
 $\qed$

	In view of Step 1,
	the limiting distribution of $a$ is determined by $\mathbb{K}_{1T}\left( a\right)$. That is,
	\begin{align*}
	a = \left[ \mathbb EZ_{t}\left( \gamma _{0}\right) Z_{t}\left( \gamma _{0}\right) ^{\prime } \right]^{-1}
	\frac{1}{\sqrt{T}}\sum_{t=1}^{T}\varepsilon _{t}Z_{t}\left( \gamma _{0}\right)
	+ o_P(1).
	\end{align*}
	Then the first desired result follows directly from the martingale difference central limit theorem \citep[e.g.][]{Hall-Heyde}.

	\noindent
	\textit{Step 2}.
	\begin{equation*}
	T^{1-2\varphi }\left( \widehat{\gamma}-\gamma _{0}\right) \overset{d}{%
		\longrightarrow }\limfunc{argmin}_{g \in \mathcal{G}}
\mathbb E \left[ \left( x_{t}^{\prime }d_{0}\right) ^{2}\left\vert
	f_{t}^{\prime }g\right\vert p_{u_t|f_{2t}}(0) \right] +2W\left( g\right) ,
	\end{equation*}%
	where $W$ is a Gaussian process whose covariance kernel is given by $H\left(
	\cdot ,\cdot \right)$  in \eqref{H-Gaussian-process-covariance-kernel}
	and
	$\mathcal{G} = \{ g \in \mathbb{R}^d: g_1 = 0 \}$.

	\proof[Proof of Step 2]
	The distribution of $g$ is determined by
	$\mathbb{K}_{2T}\left( g\right) -2 \mathbb{K}_{3T}\left( g\right)$.
	For the weak convergence of $\mathbb{K}_{3T}\left( g\right) $, we need to
	verify the tightness of the process and the finite dimensional convergence.
	The tightness is the consequence of Lemma \ref{Lem:modul1} since for any
	finite $g$  and for any $%
	c>0$,
	\begin{eqnarray*}
		&&\mathbb{P }\left\{ \sup_{\left\vert h-g\right\vert <\epsilon }\left\vert
		\mathbb{K}_{3T}\left( g\right) -\mathbb{K}_{3T}\left( h\right) \right\vert
		>c\right\} \\
		&= &\mathbb{P }\left\{ \sup_{\left\vert \vec{\gamma}-\gamma \right\vert
			<\epsilon /r_{T}} \left\vert \frac{1}{\sqrt{T}}\sum_{t=1}^{T}\varepsilon
		_{t}x_{t}^{\prime }d_{0}\left( 1_{t}\left( \vec{\gamma}\right) -1_{t}\left(
		\gamma \right) \right)\right\vert >\frac{c}{2\sqrt{T}}T^{\varphi }\right\}  \\
		&\leq &C\frac{\epsilon ^{2}}{c^{4}},
	\end{eqnarray*}%
	which can be made arbitrarily small by choosing $\epsilon $ small. For the
	fidi, we apply the martingale difference central limit theorem \citep[e.g.][]{Hall-Heyde}. Specifically, let $%
	w_{t}=\sqrt{r_{T}}\varepsilon _{t}x_{t}^{\prime }d_{0}\left( 1_{t}\left(
	\gamma _{0}+g\cdot r_{T}^{-1}\right) -1_{t}\right) $ and verify that $%
	\max_{t}\left\vert w_{t}\right\vert =o_P\left( \sqrt{T}\right) $ and that $%
	\frac{1}{T}\sum_{t=1}^{T}w_{t}^{2}$ has a proper non-degenerate probability
	limit. However, $T^{-2}\mathbb E\max_{t}w_{t}^{4}\leq T^{-1}\mathbb Ew_{t}^{4}$
	since    $\max_{t}\left\vert a_{t}\right\vert \leq
	\sum_{t=1}^{T}\left\vert a_{t}\right\vert $ and $w_t$ is stationary.
	Now,
	$$%
	T^{-1}\mathbb Ew_{t}^{4}=T^{-1}r_{T}^{2}\mathbb E \left[ \left( \varepsilon _{t}x_{t}^{\prime }d_{0} \right)^4 \left|
	1_{t}\left( \gamma _{0}+g\cdot r_{T}^{-1}\right) -1_{t}\right| \right] \leq
	CT^{-1}r_{T}=o\left( 1\right).
	$$
	Furthermore, $\frac{1}{T}%
	\sum_{t=1}^{T}\left( w_{t}^{2}-\mathbb Ew_{t}^{2}\right) =o_P\left( 1\right) $.
	The limit of
	$\mathbb Ew_{t}^{2}$ will be given later while we characterize the covariance kernel
	of the process $\mathbb{K}_{3T}\left( g\right) $.

	To derive the covariance kernel of $\mathbb{K}_{3T}\left( g\right) $ and the
	limit of $\mathbb{K}_{2T}\left( g\right) $, we need to derive the limit of
	the type%
	\begin{equation*}
	\lim_{m\rightarrow \infty }m\mathbb E\eta _{t}^{2}\left\vert 1\left\{ f_{t}^{\prime
	}\left( \gamma _{0}+s/m\right) >0\right\} -1\left\{ f_{t}^{\prime }\left(
	\gamma _{0}+g/m\right) >0\right\} \right\vert
	\end{equation*}%
	for some random variable $\eta_t$ given $s\neq g$.
	We split the remainder of the proof into two cases.

	\begin{remark} \label{rem-ind}
	In the meantime, we note that this proof also implies that the covariance between
	the second term in $ \mathbb{K}_{1T}(a) $ and
	$ \mathbb{K}_{3T}(g) $ degenerates, which implies the asymptotic independence between two processes.
	\end{remark}

	Recall that $\gamma _{1}=1$.
	With this normalization, we need to fix the first element of $g$ in $\mathbb{%
		K}_{2T}\left( g\right) \ $and $\mathbb{K}_{3T}\left( g\right) $ at zero.
	Thus, we assume $g\in \mathbb{R}^{d-1}$ with a slight abuse of notation and
	introduce $u_{t}=f_t'\gamma_0$ and
	\begin{equation*}
	h\left( (\eta _{t},u_{t},f_{2t}),g/m\right) =\eta _{t}1\left\{
	u_{t}+f_{2t}^{\prime }g/m>0\right\}
	\end{equation*}%
	for $g\in \mathbb{R}^{d-1}$ and some random variable $\eta _{t}$, which will
	be made more explicit later. Then, the asymptotic covariances of the process
	$\mathbb{K}_{3T}\left( g\right) $ and the limit of $\mathbb{K}_{2T}\left(
	g\right) $ are characterized by the limit of the type
	\begin{equation*}
	L\left( s,g\right) =\lim_{m\rightarrow \infty }m\mathbb{E}\left( h\left( \cdot
	,s/m\right) -h\left( \cdot ,g/m\right) \right) ^{2},
	\end{equation*}%
	for $g,s\in \mathbb{R}^{d-1}$. That is, for the asymptotic covariance kernel
	$H\left( s,g\right) $ of $\mathbb{K}_{3T}\left( g\right) $, set $\eta
	_{t}=x_{t}^{\prime }d_{0}\varepsilon _{t},$ which is a martingale difference
	sequence to render $\mathbb Eh\left( \cdot ,g/m\right) =0$, and $m=T^{1-2\varphi }$.
	Then,
	\begin{eqnarray*}
		H\left( s,g\right) &=&\func{cov}\left( \mathbb{K}_{3T}\left( s\right) ,%
		\mathbb{K}_{3T}\left( g\right) \right) \\
		&=&\mathbb E\left( \left( h\left( \cdot ,s/m\right) -\eta _{t}1\left\{
		u_{t}>0\right\} \right) \left( h\left( \cdot ,g/m\right) -\eta _{t}1\left\{
		u_{t}>0\right\} \right) \right) \\
		&=&\frac{1}{2}\left( L\left( s,0\right) +L\left( g,0\right) -L\left(
		s,g\right) \right) ,
	\end{eqnarray*}%
	since $2ab=a^{2}+b^{2}-\left( a-b\right) ^{2}$ and $h\left( \cdot ,0\right)
	=\eta _{t}1\left\{ u_{t}>0\right\} $. On the other hand, the limit of $%
	\mathbb{K}_{2T}\left( g\right) $ will be given by $L\left( g,0\right) $ with
	$\eta _{t}=x_{t}^{\prime }d_{0}$.

	Note that%
	\begin{eqnarray*}
		L\left( s,g\right) &=&\lim_{m\rightarrow \infty }m\mathbb E\eta _{t}^{2}\left\vert
		1\left\{ u_{t}+f_{2t}^{\prime }s/m>0\right\} -1\left\{ u_{t}+f_{2t}^{\prime
		}g/m>0\right\} \right\vert \\
		&=&m\mathbb E\eta _{t}^{2}1\left\{ u_{t}+f_{2t}^{\prime }s/m>0\geq
		u_{t}+f_{2t}^{\prime }g/m\right\} \\
		&&+m\mathbb E\eta _{t}^{2}1\left\{ u_{t}+f_{2t}^{\prime }g/m>0\geq
		u_{t}+f_{2t}^{\prime }s/m\right\} .
	\end{eqnarray*}
	Furthermore, let $p_{u|f_{2}}\left( \cdot \right) $ and $P_{2}$ denote the
	conditional density of $u_{t}$ given $f_{2t}=f_{2}$ and the probability
	measure for $f_{2t}$, respectively, and note that
	\begin{eqnarray*}
		&&m\mathbb E\eta _{t}^{2}1\left\{ u_{t}+f_{2t}^{\prime }s/m>0\geq
		u_{t}+f_{2t}^{\prime }g/m\right\} \\
		&=&\int \int \mathbb E\left[ \eta _{t}^{2}|w/m,f_{2}\right] 1\left\{ -f_{2}^{\prime
		}g\geq w>-f_{2}^{\prime }s\right\} p_{u|f_{2}}\left( w/m\right) dwdP_{2} \\
		&\rightarrow &\int \mathbb E\left[ \eta _{t}^{2}|0,f_{2}\right] \left( -f_{2}^{\prime
		}g+f_{2}^{\prime }s\right) 1\left( f_{2}^{\prime }g < f_{2}^{\prime
	}s\right) p_{u|f_{2}}\left( 0\right) dP_{2},
\end{eqnarray*}%
where the equality is by a change of variables, $w=m\cdot u$ and the
convergence is as $m\rightarrow \infty $ by the dominated convergence theorem (DCT). This implies that
\begin{equation*}
L\left( s,g\right) =\int \mathbb E\left[ \eta _{t}^{2}|0,f_{2}\right] \left\vert
f_{2}^{\prime }g-f_{2}^{\prime }s\right\vert p_{u|f_{2}}\left( 0\right)
dP_{2}.
\end{equation*}%
In the special case where $z_{t}^{\prime }g<0<z_{t}^{\prime }s$ almost
surely, $L\left( s,g\right) =L\left( s,0\right) +L\left( g,0\right) .$ This
happens when $f_{t}=\left( q_{t},-1\right) $ and thus $z_{t}$ is a constant
given $u_{t}$.

Therefore, putting together,
\begin{equation*}
T^{1-2\varphi }\left( \widehat{\gamma}-\gamma _{0}\right) \overset{d}{%
	\longrightarrow }\limfunc{argmin}_{g\in \mathbb{R}^{d}:g_{1}=0}
\mathbb  E\left[ \left( x_{t}^{\prime }d_{0}\right) ^{2}\left\vert
f_{t}^{\prime }g\right\vert p_{u_t|f_{2t}}(0) \right] +2W\left( g\right) ,
\end{equation*}%
where $W$ is a Gaussian process whose covariance kernel is given by
\begin{equation*}
H\left( s,g\right) =\frac{1}{2} \mathbb{E}\left[ \left(
x_{t}^{\prime }d_{0}\right) ^{2}\left( \left\vert f_{t}^{\prime
}g\right\vert +\left\vert f_{t}^{\prime }s\right\vert -\left\vert
f_{t}^{\prime }\left( g-s\right) \right\vert \right) p_{u_t|f_{2t}}(0) \right]  .
\end{equation*}
\end{proof}

\noindent
\textit{Step 3}. Asymptotically, $a$ and $g$ are  independent of each other.

\begin{proof}[Proof of Step 3] This is straightforward due to the separability of $ \mathbb{K} $ into functions of $ a $ and $ g $, and due to Remark \ref{rem-ind} that addresses the independence between the processes of $a $ and $ g $.
\end{proof}

\section{Proof of Asymptotics in Section \ref{model:est:factors}: Estimated Factors}
\label{sec:estimated:proof}

\subsection{A Roadmap of the Proof}\label{sec:roadmap}

Due to the  complexity of the proof, we begin with  a roadmap to help readers follow the steps of the proof.

\begin{itemize}
	\item[] \textbf{Step I}.  We first prove a probability bound for $|\widetilde f_t-\widehat f_t|_2$ in Section \ref{sec:E1},
	where
	\begin{equation*}
		\widehat{f}_{t}=H_T'g_{t}+H_T'\frac{h_{t}}{\sqrt{N}}.
	\end{equation*}%

	\item[] \textbf{Step II}.
	We then replace the PCA estimator  $\widetilde f_t$  in the objective function $\widetilde {\mathbb{S}}_{T}\left( \alpha ,\gamma \right)$ with its first-order approximation $\widehat f_t$, and show that the effect of such a replacement is negligible for the convergence rates of the estimators we obtain in the later steps in Section \ref{sec:E2}.

	\item[] \textbf{Step III}.
	We show the consistency of estimators. To do so and to derive the convergence rates in the later steps, we use the alternative parametrization $ \phi = H_T  \gamma $, which helps us derive various uniform convergence lemmas. Note that the reparametrization is fine for the consistency and convergence rate results of the original parameter estimates since $ H_T $ is nonsingular with probability approaching one.
	\medskip

	\item[] \textbf{Step IV}.  We then decompose the objective function into the following form:
	$$
	\widetilde {\mathbb{S}}_{T}\left( \alpha , H_T^{-1}\phi \right)- \widetilde {\mathbb{S}}_{T}\left( \alpha_0 , H_T^{-1}\phi_0 \right)
	= {\mathds R_T(\alpha, \phi)}+ \mathds G_1(\phi)  -\mathds C(\alpha, \phi),
	$$
	where $\mathds R_T(\cdot,\cdot)$ and $\mathds G_1(\cdot)$ are deterministic functions and  $\mathds C(\cdot, \cdot)$ is a stochastic function. The formal definitions are given before Lemma \ref{l:rt}.
	Then as $\mathbb{\widetilde{S}}_{T}\left( \widehat{\alpha },\widehat{\gamma }%
	\right) -\mathbb{\widetilde{S}}_{T}\left( \alpha _{0},\gamma _{0}\right)\leq 0$, the decomposition yields:  for $\widehat \phi=H_T\widehat \gamma$,
	\begin{eqnarray}
		&&C|\widehat\alpha-\alpha_0|_2^2
		+ \mathds G_1(\widehat\phi)
		\leq   \mathds C(\widehat\alpha, \widehat\phi)
	\end{eqnarray}
	where ${\mathbb R_T(\alpha, \phi)}$ is lower bounded by $C|\alpha-\alpha_0|_2^2$ uniformly. Then, Lemmas \ref{l:rt} and \ref{l.1} establish uniform stochastic upper bounds for $ \mathds C(\widehat\alpha, \widehat\phi) $ through maximal inequalities.

	\item[] \textbf{Step V}.
	Next, we derive  a uniform lower bound for $\mathds G_1(\phi)$   over $\phi$ near  $\phi_0$ and over the ratio $\sqrt{N}/T^{1-2\varphi}$ in Lemma \ref{l.2}. In particular, $ \mathds G_1(\phi)$ has a ``kink" lower bound:
	$$
	\mathds G_1(\phi)
	\geq CT^{-2\varphi} |\phi-\phi_0|_2  -  \frac{C}{\sqrt{N}T^{2\varphi}} .
	$$
	These bounds  lead to the rate of convergence:  $$
	|\widehat\alpha-\alpha_0|_2=O_P(T^{-1/2}+N^{-1/4}T^{-\varphi}),\quad
	|\widehat\phi-\phi_0|_2=O_P(T^{-(1-2\varphi)}+ N^{-1/2}).
	$$ These bounds and the rates are sharp in the case $\sqrt{N}/T^{1-2\varphi} \to \infty$, and  are  identical to the case of the known factor.

	\item[] \textbf{Step VI}.
	It turns out the  lower and upper bounds for $\mathds G_1(\cdot)$ and $ \mathds C( \cdot)
	$  are not sharp when $\sqrt{N}/T^{1-2\varphi} \to \omega < \infty$. We then provide sharper bounds for these terms. In particular, obtaining the sharp lower bound for $\mathds G_1(\cdot)$ is most challenging and involves complicated expansions. We establish in Lemma \ref{l:generic} that it has a quadratic lower bound with an unusual error rate: \begin{eqnarray*}
		\mathds G_1(\phi)
		&\geq& CT^{-2\varphi} \sqrt{N} |\phi -\phi_0|_2^2 -  O(\frac{1}{T^{2\varphi}N^{5/6}}).
	\end{eqnarray*}%
	These lead to a sharp rate for $\widehat\phi, \widehat\gamma$ in Proposition \ref{rate:est} in the case of $\omega<\infty.$

	\item[] \textbf{Step VII}.
	Finally, we derive the limiting distributions for $\widehat\alpha$ and $\widehat\gamma.$ This involves utilizing the convergence rates we obtained through the preceding steps to recenter, rescale and reparametrize the original criterion function, which is parametrized not by $ \phi $ but by $ \gamma $. Then, we establish the stochastic equicontinuity of the empirical process part  of the transformed process (i.e. centered process) in Section \ref{sec:EP} and the careful expansion of the drift (i.e. bias) part of the process as a function of the limit $ \omega = \lim_{N,T}\sqrt{N}T^{-1+2\varphi} $ in Section \ref{sec:Bias}.
	Due to the random rotation matrix $ H_T $ incurred by the factor estimation, we prove an extended continuous mapping theorem in Lemma \ref{CMT-extension}, to derive the weak convergence of the transformed criterion function. The remaining step is the application of the argmax continuous mapping theorem.  The new CMT extends Theorem 1.11.1 of \cite{VW} to allowing stochastic drifting functions $\mathbb G_n$ (while \cite{VW} requires $\mathbb G_n$ be deterministic).

\end{itemize}

\subsection{Discussion on Assumption  \ref{as8} }\label{sec:e:discuss}

We discuss the reasons why Assumption  \ref{as8} presents various conditions on several different conditional distributions and why those conditional distributions are well defined.
  A key technical issue in expanding the least squares loss function, in the unknown factor case, is to
 consider the properties of the conditional density of $g_t'\phi_0$, given $g_t'(\phi-\phi_0)$ and $(x_t, h_t)$. It is needed in bounding terms of the form:
  $$
 \mathbb E \left[ (x_t'\delta_0)^2\Psi(h_t'\phi_0, g_t'\phi_0,  g_t'(\phi-\phi_0) \right]
 $$
 with a suitably defined function $\Psi$.
But we should be cautious  that such a conditional density might be degenerated  because  given $ g_t'(\phi-\phi_0)$, there might be no degree of freedom left for $  g_t'\phi_0$. To address this issue, we observe that by the identification condition, we can write $\gamma=(1, \gamma_2)=H_T^{-1}\phi$, where $1$ is the first element of $\gamma$. Let the corresponding factor   be $ f_t= ( f_{1t}, f_{2t})$.  Then $  g_t'(\phi-\phi_0)= f_t'(\gamma-\gamma_0)=   f_{2t}'(\gamma_2-\gamma_{02})$, so it depends on $  f_t$ only through $ f_{2t}$. As such, we can consider the conditional density of  $  f_t'\gamma_0$ given $(  f_{2t}, x_t, h_t)$. Being given $f_{2t}$ still leaves degrees of freedom for $f_{t}'\gamma_0$, so such  conditional density is well defined.

In   the lower bound for  $		\mathds G_1(\phi)$ in Step VI,  the problem eventually reduces to lower bounding
$$
  \mathbb E\left[(x_t'\delta_0)^2p_{f_t'\gamma_0|f_{2t}, x_t, h_t }(0)     |  g_t'(\phi -\phi_0) |^2    1\{|  g_t|_2<M_0\}\right]
$$
 for a sufficiently large $M_0$.
We can apply the above argument to  achieve a tight quadratic lower bound $C|\phi-\phi_0|_2^2$, so long as the conditional density
$ p_{f_t'\gamma_0 |f_{2t}, x_t, h_t   }(0)$
and the  eigenvalues of $ \mathbb E[(x_t'd_0)^2  |g_t, h_t]$ are bounded away from zero.  In addition, here we also need to
upper bound
$\mathbb P(\frac{h_t'\phi}{\sqrt{N}}<g_t'(\phi-\phi_0)<  \frac{h_t'\phi_0}{\sqrt{N}} |h_t)$
and $\mathbb P(\frac{h_t'\phi}{\sqrt{N}}<g_t'\phi<  \frac{h_t'\phi_0}{\sqrt{N}} |h_t)$. This is ensured by the
condition   $\sup_{|u|<c}p_{g_t'r|x_t, h_t}(u) \leq M$.

When we derive a lower bound  for $		\mathds G_1(\phi)$ in Step V,
we also need such an argument   for the conditional density of $\widehat f_t= H_T'\breve g_t$, where $\breve g_t=g_t+\frac{h_t}{\sqrt{N}}$ is the perturbed factors, estimated by the PCA.
 For instance,     we need a lower bound when    $\Psi= \mathbb P \left( 0<\breve g_t'\phi_0< |  \breve g_t'(\phi-\phi_0)|   \right)$.     To derive this lower bound, write $\widehat f_t= (\widehat f_{1t}, \widehat f_{2t})$. Then $\breve g_t'(\phi-\phi_0)$ depends on $\widehat f_t$ only through $\widehat f_{2t}$. As such, we can consider the conditional density of  $\widehat f_t'\gamma_0$ given $(\widehat f_{2t}, x_t)$, and obtain a lower bound
 $$
 \mathbb E\left[(x_t'd_0)^2  1(0<\breve g_t'\phi_0<  |\breve g_t'(\phi-\phi_0)|  )\right]\geq \inf_{m, x, \widehat f_{2t}}p_{\widehat f_t'\gamma_0| \widehat f_{2t}, x_t } (m) \mathbb E\left[| \breve g_t'(\phi-\phi_0)|\right]\geq C|\phi-\phi_0|_2,
 $$
 where it is assumed that $\inf_{|m|<c}\inf_{ x, \widehat f_{2t}}p_{\widehat f_t'\gamma_0| \widehat f_{2t}, x_t } (m) \geq c_0>0.$
 The need for arguments like this  gives rise to Assumption \ref{as8} (i)-(iv).

\subsection{Consistency}

\subsubsection{A probability bound for $|\widetilde f_t-\widehat f_t|_2$} \label{sec:E1}

The stochastic order of the approximation error of $\widetilde f_t-\widehat f_t$ has been well studied in the literature \citep[see, e.g.][]{bai03}.
However, all the existing results in the literature are on the rates of convergence for  $\widetilde f_t-\widehat f_t$ of a fixed $t$ and for  $\frac{1}{T}\sum_t|\widetilde f_t-\widehat f_t|_2^2$. We strengthen these results below by obtaining the following probability bound.

\begin{pro} \label{prop:f}  Suppose $T=O(N)$. Define
$$
\Delta_f=  \frac{(\log T)^{2/c_1} }{T}
$$
Then for a sufficiently large constant $C>0$, and $\widehat  f_t=H_T'( g_{t}+\frac{h_t}{\sqrt{N}})$,
$$
\mathbb P(|\widetilde f_t-\widehat f_t|_2>C\Delta_f)\leq O(T^{-6}).
$$
\end{pro}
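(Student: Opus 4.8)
## Proof Proposal for Proposition \ref{prop:f}

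\textbf{Overall strategy.} The plan is to start from the standard Bai \citep{bai03} identity that expresses $\widetilde f_t - \widehat f_t$ (equivalently $\widetilde f_{1t} - \widetilde H_T' g_{1t} - \widetilde H_T' h_{1t}/\sqrt N$) as a sum of a handful of explicit terms, each of which is a normalized (matrix-valued) average of products involving the idiosyncratic errors $e_{it}$, the loadings $\lambda_i$, and the factors $g_{1t}$. The existing literature bounds these terms only in an $L^2$ sense, or for a fixed $t$, or in the $\frac1T\sum_t|\cdot|^2$ average sense; to upgrade to a high-probability bound of order $(\log T)^{2/c_1}/T$ holding for \emph{every} fixed $t$ with exceptional probability $O(T^{-6})$, I would replace each moment bound by an exponential tail bound, exploiting the exponential-tail Assumption on the members of $\Xi = \{e_{it}, g_{1t}, \xi_{s,t}, \zeta_t, \vecc(\psi), \eta_t\}$. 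The factor $(\log T)^{2/c_1}$ is exactly what one pays to convert a sub-Weibull (exponent $c_1$) tail into a uniform-in-$t$ bound via a union bound over $t \le T$.

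\textbf{Key steps, in order.} First I would recall/derive the decomposition: using $V_T \widetilde f_{1t}/\sqrt T = \frac1{NT}\sum_s \widetilde f_{1s} \mathcal Y_s' \mathcal Y_t$ and $\mathcal Y_s = \Lambda g_{1s} + e_s$, expand to get
\[
\widetilde f_{1t} - \widetilde H_T' g_{1t} = V_T^{-1}\Big( \tfrac1T\sum_s \widetilde f_{1s}\, \tfrac{g_{1s}'\Lambda' e_t}{N} + \tfrac1T\sum_s \widetilde f_{1s}\, \tfrac{e_s'\Lambda g_{1t}}{N} + \tfrac1T\sum_s \widetilde f_{1s}\, \tfrac{e_s' e_t - \mathbb E e_s' e_t}{N} + \tfrac1T\sum_s \widetilde f_{1s}\, \tfrac{\mathbb E e_s' e_t}{N}\Big).
\]
The leading term is the second one, which after substituting the first-order behavior $\widetilde f_{1s} \approx \widetilde H_T' g_{1s}$ produces precisely $\widetilde H_T' h_{1t}/\sqrt N$ up to the claimed remainder. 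Second, I would bound $V_T^{-1}$ and $\frac1T\sum_s|\widetilde f_{1s}|^2$ on a high-probability event (these follow from Assumption \ref{assmp:factor} plus an exponential tail bound on the relevant sample second moments). Third, I would control each of the four pieces: the term $\frac1{NT}\sum_s \widetilde f_{1s} g_{1s}'\Lambda' e_t$ is handled by writing $\widetilde f_{1s} = \widetilde H_T' g_{1s} + (\widetilde f_{1s} - \widetilde H_T' g_{1s})$ and bounding $|\psi|$ and $|\zeta_t|$ via their exponential tails; the $\mathbb E e_s' e_t$ term is $O(1/N)$ deterministically by Assumption \ref{assmp:factor}\ref{assmp:factor:itm4}; the $\xi_{s,t}$-type term uses the exponential tail of $\xi_{s,t}$ combined with a union bound over $s \le T$ inside the average (so $T=O(N)$ keeps $\frac1N\cdot\sqrt T$-type quantities controlled); the $\eta_t$ term is handled directly by its exponential tail. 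Fourth, in each case, a sub-Weibull$(c_1)$ random variable $\varpi$ satisfies $\mathbb P(|\varpi| > C(\log T)^{1/c_1}) \le C\exp(-C_1 C^{c_1}\log T) = O(T^{-6})$ for $C$ large, and I choose the constants so that the total exceptional probability over the finitely many terms (and, where needed, over $s \le T$) is still $O(T^{-6})$; the squared/product structure of several terms is what turns $(\log T)^{1/c_1}$ into $(\log T)^{2/c_1}$, matching $\Delta_f$.

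\textbf{Main obstacle.} The delicate point is the term $\frac1{NT}\sum_{s=1}^T \widetilde f_{1s}(e_s' e_t - \mathbb E e_s' e_t)/\sqrt N \cdot \sqrt N$, i.e.\ the piece involving $\xi_{s,t}$: it is an average over $s$ of $T$ random variables each of which already is a normalized cross-sectional sum, and naively union-bounding over $s$ costs a $\log T$ factor in each coordinate while the product with $\widetilde f_{1s}$ (itself only known up to its own approximation error) creates a mild circularity. I expect to resolve this by a two-step argument: first establish a crude uniform bound $\max_{s\le T}|\widetilde f_{1s} - \widetilde H_T' g_{1s}|_2 = O_P((\log T)^{c}/\sqrt{\min(N,T)})$ (enough for the error terms), then plug that crude bound back in to extract the sharp $(\log T)^{2/c_1}/T$ rate for the target difference at the single index $t$. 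Throughout, the role of $T=O(N)$ is to guarantee that factors of $\sqrt{T}/N$ and $1/\sqrt N$ are all $O(T^{-1/2})$ or smaller, so that the dominant remainder is genuinely $O((\log T)^{2/c_1}/T)$ and not something larger coming from the cross-sectional averaging. Everything else is a routine but careful bookkeeping of exponential tail bounds and union bounds.
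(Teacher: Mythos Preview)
Your overall strategy matches the paper's: both start from the Bai-type decomposition of $\widetilde f_{1t}-\widetilde H_T'g_{1t}$, both establish high-probability control of $V_T^{-1}$ and $\widetilde H_T$, and both convert the exponential-tail assumption on the elements of $\Xi$ into $O(T^{-6})$ exceptional probabilities via the calibration $\mathbb P(|\varpi|_2>C(\log T)^{1/c_1})\le O(T^{-6})$ for large $C$. The bookkeeping is essentially identical (the paper writes the decomposition as six terms $A_{t,1},\ldots,A_{t,6}$, obtained from your four by splitting $\widetilde f_{1s}=\widetilde H_T'g_{1s}+(\widetilde f_{1s}-\widetilde H_T'g_{1s})$).

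The one substantive difference is in the resolution of what you correctly identify as the main obstacle --- the circularity in the terms containing $\widetilde f_{1s}-\widetilde H_T'g_{1s}$ inside an average over $s$. You propose a two-step argument resting on a crude uniform bound $\max_{s\le T}|\widetilde f_{1s}-\widetilde H_T'g_{1s}|_2=O_P((\log T)^c/\sqrt{\min(N,T)})$. But establishing that max bound directly faces the same circularity you are trying to escape: to bound $\widetilde f_{1s}-\widetilde H_T'g_{1s}$ for each $s$ you need to control $A_{s,2},A_{s,3},A_{s,5}$, which again contain the difference at other indices. The paper avoids this by working instead with the $L^2$ average $Q=\frac1T\sum_s|\widetilde f_{1s}-\widetilde H_T'g_{1s}|_2^2$. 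Squaring the decomposition and summing over $s$ produces a \emph{self-consistency} inequality of the form
\[
Q\,(1-a_1-a_2-a_3)\ \le\ (\text{terms not involving }Q),
\]
where $a_1,a_2,a_3$ are shown to be $o_P(1)$ on a high-probability event (Steps 3--4). Solving gives $Q\le C(\log T)^{2/c_1}(N^{-1}+T^{-2})$ with probability $1-o(T^{-6})$. The circular terms at the fixed index $t$ are then handled by Cauchy--Schwarz, e.g.\ $|A_{t,3}|\le |V_T^{-1}|\,Q^{1/2}\bigl(\frac1T\sum_s|\frac1N(e_s'e_t-\mathbb Ee_s'e_t)|^2\bigr)^{1/2}$, and the second factor is controlled by the exponential tail of $\xi_{s,t}$ plus a union bound over $s$. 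This route is cleaner: it replaces your iterative crude-then-sharp scheme by a single fixed-point step on $Q$, and Cauchy--Schwarz is exactly enough to feed $Q$ into the pointwise bound at index $t$.
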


\begin{proof} [Proof of Proposition \ref{prop:f}]
The proof consists of several steps.
Recall that $\widetilde f_{1t}$ denotes the $K\times 1$ vector of PCA estimator of $g_{1t}$.
Write $e_t=(e_{1t},...,e_{Nt})'.$

\medskip
\noindent
\textit{Step 1: Decomposition of $\widetilde f_t-H_T'g_t$}
\medskip

Define     $K\times K$ matrix
$
\widetilde H_T'=V_T^{-1}   \frac{1}{T}\sum_{t=1}^T\widetilde f_{1t} g_{1t}' S_\Lambda,$ and $S_\Lambda= \frac{1}{N}\Lambda'\Lambda.
$
Also let  $V_T$ be the $K\times K$ diagonal matrix whose entries are the first $K$ eigenvalues of
$\mathcal Y \mathcal Y'/NT$ (equivalently, the first $K$ eigenvalues of $\frac{1}{NT}\sum_t\mathcal Y_t \mathcal Y_t '$).
We have
\begin{equation}\label{e9.3}
\widetilde f_{1t}-\widetilde H_T'g_{1t}=
\widetilde H_T'S_{\Lambda}^{-1}\frac{1}{N} \Lambda'e_t+
 \sum_{d=1}^6 A_{t,d},
\end{equation}
where
 \begin{eqnarray*}
  A_{t,1}&=&V_T^{-1}\widetilde H_T'\frac{1}{T}\sum_{s=1}^T g_{1s}\frac{1}{N}  \mathbb Ee_s'e_t,\cr
 A_{t,2}&=&V_T^{-1}  \frac{1}{T}\sum_{s=1}^T (\widetilde  f_{1s}-\widetilde H_T'g_{1s})\frac{1}{N}  \mathbb E  e_s'e_t,   \cr
 A_{t,3}&=&V_T^{-1}\frac{1}{T}\sum_{s=1}^T(\widetilde  f_{1s}-\widetilde H_T'g_{1s})\frac{1}{N}  (e_s'e_t-  \mathbb Ee_s'e_t), \cr
  A_{t,4}&=&V_T^{-1}\widetilde H_T'\frac{1}{TN}\sum_{s=1}^T g _{1s}  (e_s'e_t- \mathbb  Ee_s'e_t),\cr
 A_{t,5}&=&V_T^{-1}\frac{1}{T}\sum_{s=1}^T(\widetilde  f_{1s} -\widetilde H_T'g_{1s})g_{1t}' \frac{1}{N} \sum_{i=1}^N   \lambda_ie_{is},  \cr
  A_{t,6}&=&V_T^{-1}\widetilde H_T'\frac{1}{T}\sum_{s=1}^T   g_{1s} g_{1t}' \frac{1}{N} \sum_{i=1}^N \lambda_i e_{is}.
 \end{eqnarray*}
  Hence for $H_T'=\diag\{\widetilde H_T', 1\}$,  $g_t=(g_{1t}', 1)'$,   $\widetilde f_t=(\widetilde f_{1t}', 1)'$,
$
h_t=(S_{\Lambda}^{-1} \frac{ \Lambda' e_t}{\sqrt{N}}, 0)',$ and $ \widehat  f_t=H_T'( g_{t}+\frac{h_t}{\sqrt{N}}),
$
we have
\begin{equation}\label{e:keyexpress}
\widetilde  f_{t}- \widehat f_t=(  \sum_{d=1}^6 A_{t,d}, 0)'.
\end{equation}

\medskip
\noindent
\textit{Step 2: Bounding $\frac{1}{T}\sum_t|\widetilde  f_{1t}-\widetilde H_T'g_{1t}|_2^2$}
\medskip

Note that
  \begin{eqnarray*}
\frac{1}{T}\sum_{t=1}^T|\widetilde  f_{1t}-\widetilde H_T'g_{1t} |_2^2   &\leq &
4\frac{1}{T}\sum_{t=1}^T| \widetilde H_T'\frac{h_t}{\sqrt{N}}|_2^2
+ 4\frac{1}{T}\sum_{t=1}^T| V_T^{-1}\widetilde H_T'\frac{1}{T}\sum_{s=1}^T g_{1s}\frac{1}{N}  \mathbb Ee_s'e_t |_2^2\cr
&&
+ \frac{1}{T}\sum_{s=1}^T|\widetilde  f_{1s}-\widetilde H_T'g_{1s}|_2^2(a_1+a_2+a_3)\cr
&&
+8\frac{1}{T}\sum_{t=1}^T|   V_T^{-1}\widetilde H_T'\frac{1}{TN}\sum_{s=1}^T g_{1s}  (e_s'e_t-  \mathbb  Ee_s'e_t)  |_2^2\cr
&&
+8\frac{1}{T}\sum_{t=1}^T|  V_T^{-1}\widetilde H_T'\frac{1}{T}\sum_{s=1}^T \frac{1}{N} \sum_{i=1}^N     g_{1s} e_{is}   \lambda_i'g_{1t}   |_2^2, \cr
 \end{eqnarray*}
where $$a_1= |  V_T^{-1}|_2^2\frac{1}{T^2}\sum_{t=1}^T
 \sum_{s=1}^T|\frac{1}{N}  (e_s'e_t-  \mathbb  Ee_s'e_t)|_2^2,\quad a_2= |   V_T^{-1}|_2^2     \frac{1}{T^2}\sum_{t=1}^T \sum_{s=1}^T  |g_{1t}' \frac{1}{N}   \Lambda' e_s |_2^2$$
 and assuming $\frac{1}{NT}\sum_{t,s\leq T}\sum_{i\leq N}|\mathbb Ee_{it}e_{is}|<C$,
 $$
 a_3=  |  V_T^{-1}|_2^2\max_{s,t}|\frac{1}{N}  \mathbb E  e_s'e_t| \frac{1}{T^2}\sum_t\sum_{s=1}^T |\frac{1}{N}  \mathbb E  e_s'e_t|\leq  C|  V_T^{-1}|_2^2\frac{1}{T}.
 $$
 Hence  for $c_{NT}= (1-a_1-a_2-a_3)$,
   \begin{eqnarray}\label{eb.5}
 \frac{1}{T}\sum_{t=1}^T|\widetilde  f_{1t}-\widetilde H_T'g_{1t} |_2^2 c_{NT} &\leq& 8\frac{1}{T}\sum_{t=1}^T|    V_T^{-1}\widetilde H_T'\frac{1}{TN}\sum_{s=1}^T g_{1s}  (e_s'e_t-  \mathbb  Ee_s'e_t)  |_2^2\cr
&&+4\frac{1}{T}\sum_{t=1}^T| \widetilde H_T'\frac{h_t}{\sqrt{N}}|_2^2
+ 4\frac{1}{T}\sum_{t=1}^T| V_T^{-1}\widetilde H_T'\frac{1}{T}\sum_{s=1}^T g_{1s}\frac{1}{N}  \mathbb Ee_s'e_t |_2^2\cr
&&+8\frac{1}{T}\sum_{t=1}^T|   V_T^{-1}\widetilde H_T'\frac{1}{T}\sum_{s=1}^T \frac{1}{N} \sum_{i=1}^N     g_{1s} e_{is}   \lambda_i'g_{1t}   |_2^2.
 \end{eqnarray}
Next we provide probability bounds for each term on the right hand side below.

 \medskip
\noindent
\textit{Step 3: Proving that $T^6  \mathbb P(|V_T^{-1}|_2>C_v)+T^6 \mathbb{P}(|\widetilde  H_T|_2>C_H)=o(1) $ for some  $C_v, C_H>0$}
\medskip

 Let $V$ be the diagonal matrix consisting of the first $K$ eigenvalues of  $\Sigma_{\Lambda}^{1/2} \mathbb{E} [g_{1t}g_{1t}'] \Sigma_\Lambda^{1/2}$.
 On the event $|V_T-V|_2<\lambda_{\min}(V)/2$,
 $$
 |V_T^{-1}|_2=\lambda_{\min}^{-1}(V_T)\leq  2\lambda_{\min}^{-1}(V)\leq 2\lambda_{\min}^{-1}(\frac{1}{N}\Lambda'\Lambda)\lambda_{\min}^{-1}(  \mathbb  Eg_{1t}g_{1t}')<C_v.
 $$

 We now show $T^6  \mathbb P(  |V_T-V|_2>\lambda_{\min}(V)/2)=o(1)$. By Weyl's theorem,
\begin{eqnarray*}
 |V_T-V|_2&\leq &|\frac{1}{NT}\sum_t \mathcal Y_t \mathcal Y_t '-\frac{1}{N}\Lambda   \mathbb  Eg_{1t}g_{1t}'\Lambda'|_2
 \leq |\frac{1}{N}\Lambda (  \mathbb  Eg_{1t}g_{1t}'-\frac{1}{T}\sum_tg_{1t}g_{1t}')\Lambda'|_2\cr
 &&+2|\frac{1}{N}\Lambda  \frac{1}{T}\sum_tg_{1t} e_t'|_2+|\frac{1}{N} ( \frac{1}{T}\sum_te_t e_t'-  \mathbb  Ee_te_t')| _2+\frac{1}{N}|  \mathbb  Ee_te_t'|_2\cr
 &\leq& C |  \mathbb  Eg_{1t}g_{1t}'-\frac{1}{T}\sum_tg_{1t}g_{1t}'|_2+C\frac{1}{\sqrt{N}}|\frac{1}{T}\sum_tg_{1t} e_t'|_2+|\frac{1}{N} ( \frac{1}{T}\sum_te_t e_t'-  \mathbb  Ee_te_t')|_2 +\frac{C}{N} \cr
 &=&b_1+b_2+b_3+\frac{C}{N}.
 \end{eqnarray*}
By the Bernstein inequality,  for some $M,c ,\zeta, r>0$,
 \begin{eqnarray*}
  T^6 \mathbb P( b_1>\lambda_{\min}(V)/9) &=&T^6  \mathbb P( C | Eg_{1t}g_{1t}'-\frac{1}{T}\sum_tg_{1t}g_{1t}'|_2>\lambda_{\min}(V)/9)\cr
  &\leq& T^6  \exp(-MT^c)=o(1), \cr
   T^6 \mathbb P( b_2>\lambda_{\min}(V)/9)  &=&T^6  \mathbb P(C |\frac{1}{T}\sum_tg_{1t} e_t'|_2>\sqrt{N}\lambda_{\min}(V)/9)\cr
   &\leq& CT^{-3} \max_{i\leq N}    \mathbb  E |\frac{1}{\sqrt{T}}\sum_tg_{1t} e_{it}|_2^{r}
   \cr
   &=&CT^{-3}\max_i\int_0^{\infty} \mathbb P( |\frac{1}{\sqrt{T}}\sum_tg_{1t} e_{it}|_2>x^{-r})dx\cr
   &\leq& CT^{-3}\int_0^{\infty}  \exp(-Cx^{-\zeta})dx =O(T^{-3}), \cr
      T^6 \mathbb P( b_3>\lambda_{\min}(V)/9)       &=&T^6  \mathbb P( |   \frac{1}{T}\sum_te_t e_t'-  \mathbb  Ee_te_t'|_2>N\lambda_{\min}(V)/9)\cr
     \cr
     &\leq& CT^{-3}\max_{ij}  \mathbb  E|   \frac{1}{\sqrt{T}}\sum_t(e_{it} e_{jt}-  \mathbb  Ee_{it}e_{jt})|^{r}\cr
     &\leq&CT^{-3}\max_{ij}\int_0^{\infty}\mathbb P(|   \frac{1}{\sqrt{T}}\sum_t(e_{it} e_{jt}-  \mathbb  Ee_{it}e_{jt})|>x^{-r})dx\cr
     &\leq& CT^{-3}\int_0^{\infty} \exp(-Cx^{-\zeta})dx=O(T^{-3}).
   \end{eqnarray*}
  Hence
\begin{eqnarray*}
 T^6  \mathbb P(|V_T^{-1}|>C_v)&\leq &T^6  \mathbb P(|V_T^{-1}|_2>C_v, |V_T-V|_2<\lambda_{\min}(V)/2)\cr
 &&+ T^6  \mathbb P(  |V_T-V|_2>\lambda_{\min}(V)/2)\cr
 &=&T^6  \mathbb P(  |V_T-V|_2>\lambda_{\min}(V)/2)\cr
  &\leq& T^6  \mathbb P(b_1+b_2+b_3>\lambda_{\min}(V)/3)\cr
      &\leq& T^6 \sum_{i=1}^3 \mathbb P(b_i >\lambda_{\min}(V)/9)=o(1).
 \end{eqnarray*}
Now
On the event $|V_T^{-1}|_2\leq C_v$,    for $C_H>C_{\lambda}^2C_v(2M_f)^{1/2}K$ (recall $|S_\Lambda|_2\leq C_\lambda$ and $ E|g_{1t}|_2^2<M_f$),
\begin{eqnarray*} &&T^6 \mathbb{P}(|\widetilde H_T|_2>C_H)\cr
&\leq&T^6 \mathbb{P}(|V_T^{-1}|_2>C_v) +T^6 \mathbb{P}(\frac{1}{T}\sum_t|g_{1t}|_2^2>2M_f)\cr
&\leq &o(1)+T^6 \mathbb{P}(\frac{1}{T}\sum_t(|g_{1t}|_2^2-  \mathbb  E|g_{1t}|_2^2)>M_f)=o(1).
    \end{eqnarray*}

  \medskip
\noindent
\textit{Step 4: Proving $T^6  \mathbb P(a_{1,2}>CN^{-1}\log ^cT)=o(1)$ for some $c,C>0$}
\medskip

In step 2, $a_1= |   V_T^{-1}|^2\frac{1}{T^2}\sum_{t=1}^T
 \sum_{s=1}^T|\frac{1}{N}  (e_s'e_t-  \mathbb  Ee_s'e_t)|^2$.  By steps 3 and 4, with probability at least $1-o(T^{-6})$, $  |V_T^{-1}|_2<C$. Thus  for $c=2c_{1}^{-1}$,
\begin{eqnarray}\label{e:6.7}
T^6\mathbb P(a_1 >  CN^{-1}\log ^cT)&\leq&T^6 \mathbb P(C\frac{1}{T^2}\sum_{t=1}^T
 \sum_{s=1}^T|\frac{1}{\sqrt{N}}  (e_s'e_t-  \mathbb  Ee_s'e_t)|^2>C \log ^cT)+o(1)\cr
 &\leq&T^6 \mathbb P(C \max_{st} |\frac{1}{\sqrt{N}}  (e_s'e_t-  \mathbb  Ee_s'e_t)|^2>C \log ^cT)+o(1)\cr
&\leq&T^8\max_{st}\mathbb P(|\frac{1}{\sqrt{N}}  (e_s'e_t-  \mathbb  Ee_s'e_t)|>C\log ^{c/2}T)\cr
&\leq&C \exp(11\log T-C_{1}C^{c_1} \log  T)=o(1),
 \end{eqnarray}
provided that $C_{1}C^{c_1} >11$.
Similarly,
\begin{equation}\label{e:6.6}
T^6  \mathbb P(a_{2}>CN^{-1}\log ^cT)
\leq o(1)+ T^6\max_s  \mathbb P(  |  \frac{1}{N}   \Lambda' e_s |_2^2>CN^{-1}\log ^cT)=o(1).
\end{equation}

   \medskip
\noindent
\textit{Step 5: Prove $T^6\mathbb P(\frac{1}{T}\sum_{t=1}^T|\widetilde  f_{1t}-\widetilde H_T'g_{1t} |_2^2> C(\log T)^c( \frac{1}{N} +\frac{1}{T^2}))=o(1)$ for $c=2/c_1$}
\medskip

By  (\ref{eb.5}),  and steps 3 and 4,  there is $C>0$, with probability  at least $1-o(T^{-6})$,
$$
\frac{1}{T}\sum_{t=1}^T|\widetilde  f_{1t}-\widetilde H_T'g_{1t} |_2^2\leq C(d_1+...+d_4),
$$
where
\begin{eqnarray*}
d_1&=&\frac{1}{T}\sum_{t=1}^T| \frac{1}{TN}\sum_{s=1}^T g_{1s}  (e_s'e_t-  \mathbb  Ee_s'e_t)  |_2^2,
\cr
d_2&=&   \frac{1}{T}\sum_{t=1}^T| \frac{h_t}{\sqrt{N}}|_2^2,
\cr
d_3&=&   \frac{1}{T}\sum_{t=1}^T| \frac{1}{TN}\sum_{s=1}^T     g_{1s} e_{s}'   \Lambda g_{1t}   |_2^2, \cr
d_4&=&
  \frac{1}{T}\sum_{t=1}^T|  \frac{1}{T}\sum_{s=1}^Tg_{1s}\sigma_{st} |_2^2,\quad \sigma_{st}=\frac{1}{N}  \mathbb Ee_s'e_t.
 \end{eqnarray*}
  The tail probability of $d_2$ has already been bounded in (\ref{e:6.6}):
   $$
   T^6\mathbb P(d_2>N^{-1}C\log^{2/c_1}T)=o(1).
   $$
   For $x=(\log T)^{2/c_1} m$,  $y=(\log T)^{2/c_1} m$, $z=(\log T)^{2/c_1}m$
    and sufficiently large $m$,
       \begin{align} \label{e:6.9}
    T^6\max_t \mathbb P(| \frac{1}{\sqrt{TN}}\sum_{s=1}^T g_{1s}  (e_s'e_t-  \mathbb  Ee_s'e_t)  |_2>  x^{1/2})
 &\leq C  \exp(10\log T-C_1 x^{c_1/2}) =o(1),\cr
T^6\mathbb P(  | \frac{1}{TN}\sum_{s=1}^T     g_{1s} u_{s}'   \Lambda    |_2^2>(NT)^{-1}y)
&\leq C  \exp(10\log T-C_1 y^{c_1/2})=o(1),\cr
T^6 \mathbb P(\max_s|g_{1s}|_2^2>z)
&\leq \exp(6\log T- C_1 z^{c_1/2})=o(1).
 \end{align}
  Note that $  \max_t  \sum_{s=1}^T |\sigma_{st} | \leq C_{\sigma}$ for some $C_{\sigma}>0$. Therefore,
    \begin{eqnarray*}
T^6\mathbb P(d_1>(NT)^{-1}x) &\leq& T^6\mathbb P(\frac{1}{T}\sum_{t=1}^T| \frac{1}{TN}\sum_{s=1}^T g_{1s}  (e_s'e_t-  \mathbb  Ee_s'e_t)  |_2^2>(NT)^{-1}x)\cr
&\leq& T^6\max_t\mathbb P(| \frac{1}{\sqrt{TN}}\sum_{s=1}^T g_{1s}  (e_s'e_t-  \mathbb  Ee_s'e_t)  |_2> x^{1/2})=o(1),\cr
T^6\mathbb P(d_3>(NT)^{-1}y) &\leq& T^6\mathbb P(  | \frac{1}{TN}\sum_{s=1}^T     g_{1s} e_{s}'   \Lambda    |_2^2>(NT)^{-1}y) +o(1)=o(1),\cr
T^6\mathbb P(d_4>T^{-2}C_{\sigma}^2z) &\leq&
T^6\max_t\mathbb P(  |  \frac{1}{T}\sum_{s=1}^Tg_{1s}\sigma_{st} |_2^2>T^{-2}C_{\sigma}^2z)\cr
&\leq&T^6\max_t\mathbb P( \max_{s}|g_{1s}|^2 (  \frac{1}{T}\sum_{s=1}^T|\sigma_{st}| )^2>T^{-2}C_{\sigma}^2z)
\cr
&\leq&T^6\max_t\mathbb P( \max_{s}|g_{1s}|^2  > z)=o(1).
 \end{eqnarray*}
 Together, we have, for $c=\log^{2/c_1} $, with probability at least $1-o(T^{-6})$,
 $$
 \frac{1}{T}\sum_{t=1}^T|\widetilde  f_{1t}-\widetilde H_T'g_{1t} |_2^2\leq Cm^2_{NT},\; \text{where} \; m^2_{NT}:= (\log T)^c( \frac{1}{N} +\frac{1}{T^2}).
 $$

    \medskip
\noindent
\textit{Step 6: finishing the proof}
\medskip

    We now work with (\ref{e:keyexpress})
$\widetilde  f_{t}- \widehat f_t=(  \sum_{d=1}^6 A_{t,d}, 0)'.$
Write $
    Q=\frac{1}{T}\sum_{s=1}^T|\widetilde  f_{1s}-\widetilde H_T'g_{1s}|_2^2.$    Step 5 proved  $Q<Cm^2_{NT}$ with probability at least $1-o(T^{-9})$.  In addition,
        $$
    \mathbb P(|f_t|_2> M(\log T)^{1/c_1})\leq C\exp(-C_fM^{c_1}(\log T))=CT^{-C_fM^{c_1}}<o(T^{-9})
    $$
    for large enough $M$.

    Now take
    $$
    x= C(\log T)^{1/c_1},\quad y=C(\log T)^{1/c_1} ,\quad w=C(\log T)^{1/c_1},
    $$
    $$
    z= (\log T)^{1/c_1}w,\quad \widetilde x=C(\log T)^{1/c_1} ,\quad  \widetilde y= (\log T)^{1/c_1}\widetilde x.
    $$
Then, we have,  for sufficiently large $C>0$,
     \begin{align*}
      T^6\mathbb P(|A_{t,1}|_2>C  T^{-1}(\log T)^{1/c_1})&\leq
       T^6\mathbb P(\max_s|g_{1s}|_2 \sum_{s=1}^T  |\frac{1}{N}  \mathbb Ee_s'e_t|>C  (\log T)^{1/c_1})   +o(1)\cr
       &\leq    T^6\mathbb P(\max_s|g_{1s}|_2  >C  (\log T)^{1/c_1})   +o(1) =o(1),
      \cr
         T^6\mathbb P(|A_{t,2}|_2>m_{NT}T^{-1/2} C)&\leq
               T^6\mathbb P(| \frac{1}{T}\sum_{s=1}^T (\widetilde  f_{1s}-\widetilde H_T'g_{1s})\frac{1}{N}  \mathbb E  e_s'e_t  |_2>m_{NT}T^{-1/2} C)\cr
               &\leq
                 T^6\mathbb P(Q\frac{1}{T}\sum_s| \frac{1}{N}  \mathbb E  e_s'e_t  |^2>m_{NT}^2T^{-1} C^2)\cr
  &\leq
                 T^6\mathbb P(\max_{st}| \frac{1}{N}  \mathbb E  e_s'e_t  | \sum_s| \frac{1}{N}  \mathbb E  e_s'e_t  |>   C^2) +o(1)=o(1),
          \cr
 T^6\mathbb P(|A_{t,3}|_2>m_{NT}N^{-1/2}x)
 &=  T^6\mathbb P(C|\frac{1}{T}\sum_{s=1}^T(\widetilde  f_{1s}-\widetilde H_T'g_{1s})\frac{1}{N}  (e_s'e_t-  \mathbb  Ee_s'e_t)|>m_{NT}N^{-1/2}x)+o(1)\cr
 &\leq^{(a)}  T^6\mathbb P(CQ\frac{1}{T}\sum_{s=1}^T| \frac{1}{N}  (e_s'e_t-  \mathbb  Ee_s'e_t)|^2>m_{NT}^2N^{-1}x^2)+o(1) \cr
 &\leq   T^8\max_{st}\mathbb P( | \frac{1}{\sqrt{N}}  (e_s'e_t-  \mathbb  Ee_s'e_t)|>  x)+o(1) =^{(b)}o(1), \cr
T^6\mathbb P(|A_{t,4}|_2>(NT)^{-1/2}y)&= T^6\mathbb P(C|\frac{1}{\sqrt{TN}}\sum_{s=1}^T g_{1s}  (e_s'e_t-  \mathbb  Ee_s'e_t)|_2>y)=^{(c)}o(1)
 \cr
 T^6\mathbb P(|A_{t,5}|_2>m_{NT}N^{-1/2}z)&=T^6\mathbb P(C|\frac{1}{T}\sum_{s=1}^T(\widetilde  f_{1s} -\widetilde H_T'g_{1s})g_{1t}' \frac{1}{N}\Lambda'e_s|_2> m_{NT}N^{-1/2}z)+o(1), \cr
 &\leq T^6\mathbb P(C|g_{1t}|_2^2 \frac{1}{T}\sum_{s=1}^T | \frac{1}{N}\Lambda'e_s|_2^2>  N^{-1}z^2)+o(1)\cr
  &\leq T^7\max_s\mathbb P(C    | \frac{1}{\sqrt{N}}\Lambda'e_s|_2>   w)+o(1)=^{(d)}o(1), \cr
T^6\mathbb P(|A_{t,6}|_2> (NT)^{-1/2}\widetilde y)&=T^6\mathbb P(C |\frac{1}{NT}\sum_{s=1}^T     g_{1s} g_{1t}' \Lambda'e_s  |_2>(NT)^{-1/2}\widetilde y)+o(1) \cr
&\leq  T^6\mathbb P(C  |\frac{1}{NT}\sum_{s=1}^T    g_{1s}  e_s'\Lambda  |_2>(NT)^{-1/2}\widetilde x)+o(1), 
 \end{align*}
 where in (a) we used Cauchy-Schwarz;  (b) comes from (\ref{e:6.7}); (c) and (e) follow from (\ref{e:6.9}); (d) is from (\ref{e:6.6}). Combined together,  $ |\widetilde  f_{t}- \widehat f_t|<C\Delta_f$ with probability at least $1-o(T^{-9})$,      \begin{eqnarray*}
  \Delta_f &=&\frac{\log^{1/c_1} T}{T} +\frac{\log^{1/c_1} T+\log^{1/c_1} T \log^{1/c_1} T}{\sqrt{NT}}+ m_{NT}(\frac{1}{\sqrt{T}} +\frac{\log^{1/c_1} T}{\sqrt{N}} )\cr
  &\leq& 3\frac{\log^{2/c_1} T}{T}.
\end{eqnarray*}
where that last inequality is due to $T=O(N)$.

 \end{proof}

\subsubsection{Defining  notation}\label{sec:def:not}

In the sequel, we show that  $ \left( \widehat{\alpha },\widehat{\gamma }\right) $
defined in Section \ref{model:est:factors}
is asymptotically equivalent to the minimizer of the criterion function that replaces $ \widetilde{f}_t $ in $ \mathbb{\widetilde{S}}_{T}\left( \alpha ,\gamma \right) $ with $ \widehat{f}_t $ in the sense that they have an identical asymptotic distribution. Below we introduce various terms in the form of $ \widetilde{\cdot} $ and $ \widehat{\cdot} $. They indicate that the corresponding terms contain $ \widetilde{f}_t $ and $ \widehat{f}_t $ in their definitions, respectively.


Let $1_{t}=1\left\{ f_{t}^{\prime } \gamma_0>0\right\} $ and recall that
\begin{eqnarray*}
&&\mathbb{\widetilde{S}}_{T}\left( \alpha,\gamma \right) \\
&=&\mathbb{\widetilde{S}}_{T}\left( \alpha _{0},\gamma _{0}\right) +\frac{1}{T}%
\sum_{t=1}^{T}\left( x_{t}^{\prime }\left( \beta -\beta _{0}\right)
+x_{t}^{\prime }\left( \delta1\{\widetilde f_t'\gamma>0\} -\delta _{0}%
1\{\widetilde f_t'\gamma_0>0\} \right) \right) ^{2} \\
&&-\frac{2}{T}\sum_{t=1}^{T}\left( \varepsilon _{t}-x_{t}^{\prime }\delta
_{0}\left( 1\{\widetilde f_t'\gamma_0>0\} -1_{t}\right) \right) \left(
x_{t}^{\prime }\left( \beta -\beta _{0}\right) +x_{t}^{\prime }\left( \delta
1\{\widetilde f_t'\gamma>0\} -\delta _{0}1\{\widetilde f_t'\gamma_0>0\} \right) \right).
\end{eqnarray*}
And introduce the following decomposition:
\begin{eqnarray*}
	\mathbb{\widetilde{S}}_{T}\left( \widehat{\alpha },\widehat{\gamma }%
	\right) -\mathbb{\widetilde{S}}_{T}\left( \alpha _{0},\gamma _{0}\right)
	&=&
	\underset{\mathbb{\widetilde {R}}_{T}\left( \widehat{\alpha },\widehat{\gamma }\right)}
	{\underbrace{\widetilde R_{1}(\widehat\alpha,\widehat\gamma)+\widetilde R_{2}(\widehat\alpha,\widehat\gamma)+\widetilde R_{3}(\widehat\alpha,\widehat\gamma)}} \\
	&& - \
	\underset{\mathbb{\widetilde {G}}_{T}\left( \widehat{\alpha },\widehat{\gamma }\right) -%
		\mathbb{\widetilde {G}}_{T}\left( \alpha _{0}, \gamma _{0}\right)}
	{\underbrace{\left( \widetilde{\mathbb C}_1(\widehat\alpha,\widehat\gamma)+ \widetilde{\mathbb C}_2(\widehat\alpha,\widehat\gamma)
			- \widetilde{\mathbb C}_3(\widehat\alpha,\widehat\gamma)  -\widetilde{\mathbb C}_4(\widehat\alpha,\widehat\gamma)\right) }},
\end{eqnarray*}
where the additional terms are defined in the sequel. Also, note that we suppress the dependence on $T$ to save notational burden as we introduce the more detailed decomposition. Let
$$
 {\widetilde Z_t( \gamma)=(x_t', x'_t1\{\widetilde f_t' \gamma>0\})'},\quad \widehat Z_t(\gamma)=(x_t', x_t'1\{\widehat f_t\gamma>0\})',
$$
\begin{eqnarray*}
\mathbb{\widetilde {R}}_{T}\left( \alpha ,\gamma \right) &=&\frac{1}{T}%
\sum_{t=1}^{T}\left( \widetilde{Z}_{t}\left( \gamma \right) ^{\prime }\alpha -\widetilde{%
Z}_{t}\left( \gamma _{0}\right) ^{\prime }\alpha _{0}\right) ^{2} \\
&=&\underset{\widetilde{R}_{1}\left( \alpha ,\gamma \right) }{\underbrace{\frac{1%
}{T}\sum_{t=1}^{T}\left( \widetilde{Z}_{t}\left( \gamma \right) ^{\prime }\left(
\alpha -\alpha _{0}\right) \right) ^{2}}}+\underset{\widetilde{R}_{2}\left(
\alpha,\gamma \right) }{\underbrace{\frac{1}{T}\sum_{t=1}^{T}\left( x_{t}^{\prime
}\delta _{0}\right) ^{2}\left\vert 1\left\{ \widetilde{f}_{t}^{\prime }\gamma
>0\right\} -1\left\{ \widetilde f_t'\gamma_0>0\right\} \right\vert }%
} \\
&&+\underset{\widetilde{R}_{3}\left( \alpha ,\gamma \right) }{\underbrace{\frac{2%
}{T}\sum_{t=1}^{T}x_{t}^{\prime }\delta _{0}\left( 1\left\{\widetilde f_t'\gamma>0\right\} -1\left\{ \widetilde{f}_{t}^{\prime }\gamma_0>0\right\} \right) \widetilde{Z}_{t}\left( \gamma \right) ^{\prime }\left(
\alpha -\alpha _{0}\right) }}, \\
\mathbb{\widetilde{G}}_{T}\left( \alpha ,\gamma \right) &=&\frac{2}{T}%
\sum_{t=1}^{T}\left( \varepsilon _{t}-x_{t}^{\prime }\delta _{0}\left( 1\{\widetilde f_t'\gamma_0>0\} -1\{f_t'\gamma_0>0\}\right) \right) \left( \widetilde{Z}%
_{t}\left( \gamma \right) ^{\prime }\alpha -\widetilde{Z}_{t}\left( \gamma_0\right) ^{\prime }\alpha _{0}\right) .
\end{eqnarray*}%
Then we have
\begin{eqnarray*}
\mathbb{\widetilde{G}}_{T}\left( \alpha ,\gamma \right) -\mathbb{\widetilde{G}}%
_{T}\left( \alpha _{0},\gamma _{0}\right) &=&\frac{2}{T}\sum_{t=1}^{T}\left(
\varepsilon _{t}-x_{t}^{\prime }\delta _{0}\left( 1\{\widetilde f_t'\gamma_0>0\} -1_{t}\right) \right) \left( \widetilde{Z}_{t}\left( \gamma \right)
^{\prime }\alpha -\widetilde{Z}_{t}\left( \gamma _{0}\right) ^{\prime }\alpha
_{0}\right) \\
&=&\mathbb{\widetilde{C}}_{1}(\alpha,\gamma)+\mathbb{\widetilde{C}}_{2}(\alpha,\gamma)- \mathbb{\widetilde{C}}_{3}(\alpha,\gamma) - \mathbb{\widetilde{C}}_{4}(\alpha,\gamma),
\end{eqnarray*}%
where
\begin{eqnarray*}
\mathbb{\widetilde{C}}_{1}(\alpha,\gamma)&=&
\frac{2}{T}\sum_{t=1}^{T}\varepsilon _{t}x_{t}^{\prime }\delta \left( 1\{\widetilde f_t'\gamma>0\} -1\{\widetilde f_t'\gamma_0>0\} \right),  \cr
\mathbb{\widetilde{C}}_{2}(\alpha,\gamma)&=&    \frac{2}{T}%
\sum_{t=1}^{T}\varepsilon _{t}\widetilde{Z}_{t}\left( \gamma _{0}\right) ^{\prime
}\left( \alpha -\alpha _{0}\right),  \cr
\mathbb{\widetilde{C}}_{3}(\alpha,\gamma)&=&
\frac{2}{T}\sum_{t=1}^{T}x_{t}^{\prime }\delta _{0}x_{t}^{\prime }\delta
\left( 1\{\widetilde f_t'\gamma_0>0\} -1_{t}\right) \left( 1\{\widetilde f_t'\gamma>0\} -1\{\widetilde f_t'\gamma_0>0\} \right),   \\
\mathbb{\widetilde{C}}_{4}(\alpha,\gamma)&=&  \frac{2 }{T}\sum_{t=1}^{T}x_{t}^{\prime }\delta _{0}\left( 1\{\widetilde f_t'\gamma_0>0\} -1_{t}\right) \widetilde{Z}_{t}\left( \gamma _{0}\right) ^{\prime
}\left( \alpha -\alpha _{0}\right)  .
\end{eqnarray*}%
 In addition, the following quantities will be used in the proofs to follow.
\begin{eqnarray*}
\widehat{R}_{1}\left( \alpha ,\gamma \right) &=&\frac{1}{T}\sum_{t=1}^{T}\left( \widehat{Z}_{t}\left( \gamma \right) ^{\prime }\left(\alpha -\alpha _{0}\right) \right) ^{2}, \cr
\widehat{R}_{2}\left( \alpha, \gamma \right) &=&\frac{1}{T}\sum_{t=1}^{T}\left( x_{t}^{\prime
}\delta _{0}\right) ^{2}      |   1\{ \widehat{f}_{t}^{\prime }\gamma
>0\} -1\{ \widehat f_t'\gamma_0>0\}   |,  \cr
\widehat{R}_{3}\left( \alpha ,\gamma \right) &=&\frac{2}{T}\sum_{t=1}^{T}x_{t}^{\prime }\delta _{0}\left( 1\left\{\widehat f_t'\gamma>0\right\} -1\left\{ \widehat f_t'\gamma_0>0\right\} \right) \widehat{Z}_{t}\left( \gamma \right) ^{\prime }\left(\alpha -\alpha _{0}\right), \cr
 \mathbb{\widehat{C}}_{1}\left( \alpha ,\gamma \right) &=&
\frac{2}{T}\sum_{t=1}^{T}\varepsilon _{t}x_{t}^{\prime }\delta \left( 1\{\widehat f_t'\gamma>0\} -1\{\widehat f_t'\gamma_0>0\} \right), \cr
  \mathbb{\widehat{C}}_{2}\left( \alpha,\gamma \right) &=&\frac{2}{T}\sum_{t=1}^{T}\varepsilon _{t}\widehat{Z}_{t}\left( \gamma _{0}\right) ^{\prime}\left( \alpha -\alpha _{0}\right), \cr
\mathbb{\widehat{C}}_{3}\left( \alpha ,\gamma \right) &=&
\frac{2}{T}\sum_{t=1}^{T}x_{t}^{\prime }\delta _{0}x_{t}^{\prime }\delta\left( 1\{\widehat f_t'\gamma_0>0\} -1_{t}\right) \left( 1\{\widehat f_t'\gamma>0\} -1\{\widehat f_t'\gamma_0>0\} \right), \cr
 \mathbb{\widehat{C}}_{4}\left( \alpha, \gamma \right) &=&\frac{2}{T}\sum_{t=1}^{T}x_{t}^{\prime }\delta _{0}\left( 1\{\widehat f_t'\gamma_0>0\} -1_{t}\right) \widehat{Z}_{t}\left( \gamma _{0}\right) ^{\prime}\left( \alpha -\alpha _{0}\right).
\end{eqnarray*}

\subsubsection{Effect of $\widetilde f_t-\widehat f_t$} \label{sec:E2}

\begin{lem}\label{l:relf}
	Uniformly over $\alpha $ and $\gamma $, for $\Delta_f $ defined in Proposition \ref{prop:f}, \newline
	(i) For $j=1,...,4,$ $\left\vert \widetilde{\mathbb{C}}_{j}(\delta ,\gamma )-%
	\widehat{\mathbb{C}}_{j}(\delta ,\gamma )\right\vert \leq (T^{-\varphi }+|\alpha -\alpha
	_{0}|_{2})O_{P}(\Delta_f+T^{-6}).$ \newline
	(ii) $|\widetilde{\mathbb{C}}_{2}(\alpha )|\leq O_{P}(T^{-1/2}+\Delta_f)|\alpha
	-\alpha _{0}|_{2}$.\newline
	(iii) $|\widetilde{\mathbb{C}}_{4}(\alpha )|\leq O_{P}(\Delta_f+N^{-1/2})T^{-\varphi
	}|\alpha -\alpha _{0}|_{2}$.\newline
	(iv)  For $j=1,2,3,$ $|\widetilde{R}_{jT}\left( \alpha ,\gamma \right) -%
	\widehat{R}_{jT}\left( \alpha ,\gamma \right) |\leq [|\alpha -\alpha
	_{0}|_{2}^2+T^{-2\varphi }]O_{P}(\Delta _{f}+T^{-6}).$
\end{lem}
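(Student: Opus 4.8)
\textbf{Proof proposal for Lemma \ref{l:relf}.}

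The plan is to exploit the uniform probability bound $\mathbb{P}(|\widetilde f_t - \widehat f_t|_2 > C\Delta_f) = O(T^{-6})$ from Proposition \ref{prop:f}, together with the fact that all the quantities $\widetilde{\mathbb{C}}_j - \widehat{\mathbb{C}}_j$ and $\widetilde R_{jT} - \widehat R_{jT}$ differ only through the indicator functions $1\{\widetilde f_t'\gamma > 0\}$ versus $1\{\widehat f_t'\gamma > 0\}$ (and likewise at $\gamma_0$). The key observation is that $|1\{\widetilde f_t'\gamma > 0\} - 1\{\widehat f_t'\gamma > 0\}| \le 1\{|\widehat f_t'\gamma| \le |(\widetilde f_t - \widehat f_t)'\gamma|\} \le 1\{|\widehat f_t'\gamma| \le |\widetilde f_t - \widehat f_t|_2 \cdot |\gamma|_2\}$. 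So the difference of the two indicators is bounded by an event that (a) has small probability when $\widehat f_t'\gamma$ has a density bounded near zero (Assumption \ref{as8}\ref{as8:itm6} / \ref{as8:itm1} gives the needed density bounds, uniformly over the compact parameter space via $\Phi_T$), and (b) is further controlled on the high-probability event $\{|\widetilde f_t - \widehat f_t|_2 \le C\Delta_f\}$ by Proposition \ref{prop:f}.

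First I would establish the master estimate: for each fixed $t$ and uniformly over $\gamma$ in the (compact) parameter space,
\[
\mathbb{E}\,\bigl| 1\{\widetilde f_t'\gamma > 0\} - 1\{\widehat f_t'\gamma > 0\} \bigr| \le C\Delta_f + O(T^{-6}),
\]
obtained by splitting on the event $\{|\widetilde f_t - \widehat f_t|_2 \le C\Delta_f\}$ and using the bounded-density condition on $\widehat f_t'\gamma$ (conditionally on the relevant sub-vectors, per Assumption \ref{as8}) to bound $\mathbb{P}(|\widehat f_t'\gamma| \le C\Delta_f |\gamma|_2) \le C'\Delta_f$; the complement contributes $O(T^{-6})$. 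Then, for part (iv), $\widetilde R_{jT} - \widehat R_{jT}$ is a sample average over $t$ of terms of the form $(x_t'\delta_0)^2$ or $x_t'\delta_0 \cdot \widetilde Z_{tj}(\gamma)$ or $(\widetilde Z_t(\gamma)'(\alpha-\alpha_0))^2$ multiplied by such indicator differences; taking absolute values, applying the moment bounds on $x_t$ (Assumption \ref{as9}\ref{as9:itm11}), pulling out the factor $T^{-2\varphi}$ from $(x_t'\delta_0)^2 = T^{-2\varphi}(x_t'd_0)^2$ when $\delta_0$ appears and the factor $|\alpha - \alpha_0|_2^2$ or $|\alpha - \alpha_0|_2$ when $Z_t(\alpha - \alpha_0)$ appears, and using the master estimate together with Markov's inequality to pass from expectation to $O_P$, gives the claimed bound $[|\alpha-\alpha_0|_2^2 + T^{-2\varphi}]O_P(\Delta_f + T^{-6})$. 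Part (i) is entirely analogous, with the $\varepsilon_t$ factor in $\mathbb{C}_1, \mathbb{C}_2, \mathbb{C}_4$ handled by the conditional-mean-zero / moment conditions (and the $T^{-\varphi}$ pulled from $\delta_0$ in $\mathbb{C}_3, \mathbb{C}_4$), noting that $\mathbb{C}_2$ contributes only through $\widetilde Z_t(\gamma_0)$ vs $\widehat Z_t(\gamma_0)$, i.e. the same indicator-difference mechanism at $\gamma_0$.

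For parts (ii) and (iii), I would invoke the already-available bounds on $\widehat{\mathbb{C}}_2$ and $\widehat{\mathbb{C}}_4$ (these follow from the known-factor-type arguments and Assumption \ref{as5}: $\widehat{\mathbb{C}}_2$ is a martingale-type average giving $O_P(T^{-1/2})|\alpha-\alpha_0|_2$, and $\widehat{\mathbb{C}}_4$ carries the $N^{-1/2}$ from the $h_t/\sqrt N$ perturbation inside the indicator, giving $O_P(N^{-1/2})T^{-\varphi}|\alpha-\alpha_0|_2$), and then add the $\widetilde{\mathbb{C}}_j - \widehat{\mathbb{C}}_j$ difference from part (i): $\widetilde{\mathbb{C}}_2 = \widehat{\mathbb{C}}_2 + (T^{-\varphi} + |\alpha-\alpha_0|_2)O_P(\Delta_f + T^{-6})$, but since $\Delta_f = (\log T)^{2/c_1}/T = o(T^{-1/2})$ the extra term is absorbed into $O_P(T^{-1/2} + \Delta_f)|\alpha-\alpha_0|_2$ after noting the $T^{-\varphi}O_P(\Delta_f)$ piece must be re-examined — actually here one checks directly that $\widetilde{\mathbb{C}}_2$ has no $\delta_0$ factor, so the difference is $|\alpha-\alpha_0|_2 \cdot O_P(\Delta_f)$ only, consistent with the stated bound; similarly $\widetilde{\mathbb{C}}_4 = \widehat{\mathbb{C}}_4 + T^{-\varphi}|\alpha-\alpha_0|_2 O_P(\Delta_f)$, and $O_P(\Delta_f) = O_P(N^{-1/2})$ under $T = O(N)$ after accounting for the log factor (or one simply writes $O_P(\Delta_f + N^{-1/2})$ as in the statement). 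The main obstacle I anticipate is part (iii) and the delicate bookkeeping in part (i)/(iv) of keeping the error uniform over the full (non-shrinking) parameter space rather than just a neighborhood of $\gamma_0$ — this requires the density bounds in Assumption \ref{as8}\ref{as8:itm6} to hold uniformly over $\Phi_T$ (hence over all $\gamma$), and requires a chaining/bracketing argument to upgrade the pointwise-in-$t$ expectation bound to a uniform-in-$\gamma$ statement for the sample averages; the indicator-class has bounded VC dimension, so a standard maximal inequality closes this gap, but it must be stated carefully so that the $O_P$ sequences genuinely do not depend on $(\alpha,\gamma)$.
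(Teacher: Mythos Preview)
Your proposal is essentially correct and follows the paper's strategy: bound $|1\{\widetilde f_t'\gamma>0\}-1\{\widehat f_t'\gamma>0\}|$ by $1\{|\widehat f_t'\gamma|\le C|\widetilde f_t-\widehat f_t|_2\}$, split on the event $\{|\widetilde f_t-\widehat f_t|_2\le C\Delta_f\}$ using Proposition~\ref{prop:f} for the complement, and for (ii)--(iii) reduce to $\widehat{\mathbb C}_2,\widehat{\mathbb C}_4$ via (i) before bounding those directly (the $O_P(T^{-1/2})$ from the martingale structure of $\widehat{\mathbb C}_2$ and the $O_P(N^{-1/2})$ from the $h_t/\sqrt N$ perturbation in $\widehat{\mathbb C}_4$). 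This matches the paper.

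Where you diverge is in handling the uniformity over $\gamma$, which you flag as the ``main obstacle'' and propose to resolve by chaining/VC. The paper bypasses chaining entirely by pushing the infimum \emph{inside} the indicator:
\[
\sup_{\gamma}\frac{1}{T}\sum_t |\varepsilon_t x_t|_2\,1\{|\widehat f_t'\gamma|<C\Delta_f\}
\;\le\;
\frac{1}{T}\sum_t |\varepsilon_t x_t|_2\,1\Bigl\{\inf_{\gamma}|\widehat f_t'\gamma|<C\Delta_f\Bigr\},
\]
after which the right side is $\gamma$-free and one only needs $\mathbb P(\inf_{\phi\in\Phi_T}|\breve g_t'\phi|<C\Delta_f)=O(\Delta_f)$ plus a LLN. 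This is precisely why Assumption~\ref{as8}\ref{as8:itm6} is formulated for the \emph{infimum} $\inf_{\phi\in\Phi_T}|\breve g_t'\phi|$ rather than for each fixed linear combination. Your chaining route would, if executed carefully, end up invoking the same infimum when bounding the envelope of the function class; the paper's version just goes there directly. The infimum trick also sidesteps a subtlety your plan would have to address: $\widetilde f_t$ (and $\widehat f_t$, through $H_T$) depend on the full sample, so standard maximal inequalities for stationary or mixing arrays do not apply to the empirical process indexed by $\gamma$ without extra work.
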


A consequence of this lemma is that the first-order asymptotic distribution
of $\widehat{\alpha}$ and $\widehat{\gamma}$ can be characterized by the minimizer
of $\mathbb{\widehat{S}}_{T}\left( \alpha ,\gamma \right) ,$ which replaces $%
\widetilde{f}_{t}$ in the construction of $\mathbb{\widetilde{S}}_{T}\left( \alpha
,\gamma \right) $ with $\widehat{f}_{t}$, since the difference between the two
is $T^{-\varphi }O_{P}(\Delta_{f}+T^{-6}) $, by Proposition \ref{prop:f}. If in addition $T=O(N)$ then it is $T^{-\varphi }O_{P}(\Delta _{f}+T^{-6})=o_P\left( T^{-1}\right) $.

\begin{proof}
	(i) We prove this for $j=1.$ The others are similarly shown. Note that
		\begin{eqnarray*}
		&&\sup_{\gamma }|\frac{1}{T}\sum_{t=1}^{T}\varepsilon _{t}x_{t}^{\prime }[1\{%
		\widetilde{f}_{t}^{\prime }\gamma >0\}-1\{\widehat{f}_{t}^{\prime }\gamma
		>0\}]|_{2} \\
		&\leq &\sup_{\gamma }\frac{1}{T}\sum_{t=1}^{T}|\varepsilon _{t}x_{t}^{\prime
		}|_{2}1\{\widehat{f}_{t}^{\prime }\gamma <0<\widetilde{f}_{t}^{\prime
	}\gamma \}+\sup_{\gamma }\frac{1}{T}\sum_{t=1}^{T}|\varepsilon
	_{t}x_{t}^{\prime }|_{2}1\{\widetilde{f}_{t}^{\prime }\gamma <0<\widehat{f}%
	_{t}^{\prime }\gamma \}
\end{eqnarray*}%
We bound the first term on the right side of the inequality above. The second term follows similarly.
As $\sup_{\gamma }|\gamma |_{2}\leq C$,
\begin{eqnarray}
	&&\sup_{\gamma }\frac{1}{T}\sum_{t=1}^{T}|\varepsilon _{t}x_{t}^{\prime
	}|_{2}1\{\widehat{f}_{t}^{\prime }\gamma <0<\widetilde{f}_{t}^{\prime
}\gamma \}  \label{e7.9} \\
&\leq& \sup_{\gamma }\frac{1}{T}\sum_{t=1}^{T}|\varepsilon
_{t}x_{t}^{\prime }|_{2}1\{-|\widehat{f}_{t}-\widetilde{f}_{t}|_{2}C<%
\widehat{f}_{t}^{\prime }\gamma <0\}  \notag \\
&\leq &\sup_{\gamma }\frac{1}{T}\sum_{t=1}^{T}|\varepsilon _{t}x_{t}^{\prime
}|_{2}1\{\left\vert \widehat{f}_{t}^{\prime }\gamma \right\vert <C\Delta_f\}+\frac{1}{T}\sum_{t=1}^{T}|\varepsilon _{t}x_{t}^{\prime }|_{2}1\{|%
\widehat{f}_{t}-\widetilde{f}_{t}|_{2}\geq \Delta_f\}  \notag \\
&\leq &\frac{1}{T}\sum_{t=1}^{T}|\varepsilon _{t}x_{t}^{\prime
}|_{2}1\{\inf_{\gamma }\left\vert \widehat{f}_{t}^{\prime }\gamma
\right\vert <C\Delta_f\}+O_P\left( 1\right) C\mathbb{P}\{|\widehat{f}%
_{t}-\widetilde{f}_{t}|_{2}\geq \Delta_f\}  \notag \\
&\leq &O_P\left( 1\right) C\mathbb{P}\left( \inf_{\gamma }|\widehat{f}%
_{t}^{\prime }\gamma |<C\Delta_f\right) +O_P\left( T^{-6}\right)  \notag
\\
&\leq &O_{P}(\Delta_f+T^{-6}),  \notag
\end{eqnarray}%
where the first inequality is by the fact that $1\left\{ A\right\} 1\left\{
B\right\} \leq 1\left\{ A\right\} $ for any events $A$ and $B,$ and the
remaining inequalities are by the law of iterated expectations, the rank
condition and the moment bound that $\mathbb{E}\left( \left\vert \varepsilon
_{t}x_{t}\right\vert _{2}|g_{t},h_{t}\right) \leq C$ a.s. in Assumption \ref%
{as9}, and Proposition \ref{prop:f}.
\medskip

\noindent
(ii) The same proof as in part (i) leads to
$\left\vert \widetilde{\mathbb{C}}_{2}(\delta ,\gamma )-%
	\widehat{\mathbb{C}}_{2}(\delta ,\gamma )\right\vert \leq |\alpha -\alpha
	_{0}|_{2}O_{P}(\Delta_f+T^{-6}).$

 It suffices to show $|\frac{1}{T}\sum_{t=1}^{T}\varepsilon _{t}\widehat{%
	Z}_{t}\left( \gamma _{0}\right) |_{2}\leq O_{P}(\frac{1}{\sqrt{T}})$ due to
(i).  Then
\begin{eqnarray*}
	&&|\frac{1}{T}\sum_{t=1}^{T}\varepsilon _{t}\widehat{Z}_{t}\left( \gamma
	_{0}\right) |_{2}\leq O_{P}(\frac{1}{\sqrt{T}})+|\frac{1}{T}%
	\sum_{t=1}^{T}\varepsilon _{t}x_{t}1\{\widehat{f}_{t}^{\prime }\gamma
	_{0}>0\}|_{2} \\
	&\leq &|\frac{1}{T}\sum_{t=1}^{T}\varepsilon _{t}x_{t}1\{(g_{t}+\frac{h_{t}}{%
		\sqrt{N}})^{\prime }\phi _{0}>0\}|_{2}+O_{P}(\frac{1}{\sqrt{T}})=O_{P}(\frac{%
		1}{\sqrt{T}}).
\end{eqnarray*}
\medskip

\noindent
(iii) The same proof as in part (i) leads to
$\left\vert \widetilde{\mathbb{C}}_{4}(\delta ,\gamma )-%
	\widehat{\mathbb{C}}_{4}(\delta ,\gamma )\right\vert \leq |\alpha -\alpha
	_{0}|_{2}O_{P}(n _{NT}+T^{-6})T^{-\varphi}.$
	Hence it is sufficient to show that
\begin{eqnarray*}
	&&\frac{1}{T}\sum_{t}|x_{t}|_{2}^{2}1\{\widehat{f}_{t}^{\prime }\gamma
	_{0}<0<f_{t}^{\prime }\gamma _{0}\} \\
	&\leq &\frac{1}{T}\sum_{t}|x_{t}|_{2}^{2}1\{0<f_{t}^{\prime }\gamma
	_{0}<|(f_{t}-\widehat{f}_{t})^{\prime }\gamma _{0}|\} \leq \frac{1}{T}\sum_{t}|x_{t}|_{2}^{2}1\{0<f_{t}^{\prime }\gamma_{0}<C
	\frac{|h_{t}|_{2}}{\sqrt{N}}\} \\
	&\leq &O_{P}(1)\frac{1}{T}\sum_{t}\mathbb{E}|x_{t}|_{2}^{2}1\{0<f_{t}^{\prime }\gamma _{0}<C
	\frac{|h_{t}|_{2}}{\sqrt{N}}\} \\
	&\leq &O_{P}(1)\mathbb{E}|x_{t}|_{2}^{2}\mathbb{P}\left( 0<f_{t}^{\prime
	}\gamma _{0}<C\frac{|h_{t}|_{2}}{\sqrt{N}}\bigg{|}x_{t},h_{t}\right) \\
	&\leq &O_{P}(1)\mathbb{E}|x_{t}|_{2}^{2}|h_{t}|_{2}\frac{1}{\sqrt{N}}%
	=O_{P}\left( N^{-1/2}\right) .
\end{eqnarray*}

\noindent
(iv) Similarly as in (i),
\begin{eqnarray*}
	&&\sup_{\gamma }|\frac{1}{T}\sum_{t=1}^{T}x_{t}\left( 1\left\{ \widetilde{f}%
	_{t}^{\prime }\gamma >0\right\} -1\left\{ \widehat{f}_{t}^{\prime }\gamma
	>0\right\} \right) \widetilde{Z}_{t}\left( \gamma \right) ^{\prime }| \\
	&\leq &\sup_{\gamma }\frac{1}{T}\sum_{t=1}^{T}|x_{t}|_{2}^{2}[1\{\widetilde{f%
	}_{t}^{\prime }\gamma <0<\widehat{f}_{t}^{\prime }\gamma \}+1\{\widehat{f}%
	_{t}^{\prime }\gamma <0<\widetilde{f}_{t}^{\prime }\gamma \}] \\
	&\leq &\sup_{\gamma }\frac{2}{T}\sum_{t=1}^{T}|x_{t}|_{2}^{2}1\{|\widehat{f}%
	_{t}^{\prime }\gamma |<C\Delta_f\}+O_{P}(T^{-6})\leq \frac{1}{T}%
	\sum_{t=1}^{T}|x_{t}|_{2}^{2}1\{\inf_{\gamma }|(g_{t}+\frac{h_{t}}{\sqrt{N}}%
	)^{\prime }\gamma |<C\Delta_f\} \\
	&\leq &O_{P}(1)\mathbb{E}|x_{t}|_{2}^{2}\mathbb{P}\left( \inf_{\gamma
	}|(g_{t}+\frac{h_{t}}{\sqrt{N}})^{\prime }\gamma |<C\Delta_f\bigg{|}%
	x_{t}\right) \leq O_{P}(\Delta_f+T^{-6}).
\end{eqnarray*}%
Hence uniformly in $(\alpha ,\gamma )$,
\begin{equation*}
	|\widetilde{R}_{3}\left( \alpha ,\gamma \right) -\widehat{R}_{3}\left(
	\alpha ,\gamma \right) |\leq |\alpha -\alpha _{0}|_{2}T^{-\varphi}O_{P}(\Delta_f+T^{-6})
\end{equation*}%
and the cases for $j=1$ and 2 are similar, so
$|\widetilde{R}_{1}\left( \alpha ,\gamma \right) -\widehat{R}_{1}\left(
	\alpha ,\gamma \right) |\leq |\alpha -\alpha _{0}|_{2}^2O_{P}(\Delta_f+T^{-6})$ and
	$|\widetilde{R}_{2}\left( \alpha ,\gamma \right) -\widehat{R}_{2}\left(
	\alpha ,\gamma \right) |\leq T^{-2\varphi}O_{P}(\Delta_f+T^{-6})$. Together, we have
	$$
 (\Delta_f+T^{-6})[T^{-2\varphi}+|\alpha-\alpha_0|_2^2+|\alpha-\alpha_0|_2T^{-\varphi}]
 \leq 2(\Delta_f+T^{-6})[T^{-2\varphi}+|\alpha-\alpha_0|_2^2].
	$$
\end{proof}

\subsubsection{Consistency} \label{sec:E3}

The introduced notation $\widehat{R}_{i}(\alpha ,\gamma )$ and $\widehat{%
	\mathbb{C}}_{i}(\delta ,\gamma )$ depend on the random rotation matrix $%
H_{T} $, which is inconvenient to carry throughout the study of consistency
and rates of convergence. On the other hand, with $\breve{g}_{t}:=g_{t}+%
\frac{1}{\sqrt{N}}h_{t}$, note that for any $\gamma $ and $\phi =H_{T}\gamma
$, we have $\widehat{f}_{t}^{\prime }\gamma =\breve{g}_{t}^{\prime }\phi $,
which is in fact independent of $H_{T}$. It is therefore more convenient to
work with functions with respect to $\phi $. Hence we introduce the
following functions of reparametrization:
\begin{eqnarray*}
	\breve{\mathds Z}_t(\phi)&=&(x_t^{\prime }, x_t^{\prime }1\{\breve
	g_t^{\prime }\phi>0\})^{\prime },\cr
	\mathds Z_{t}(\phi)&=&(x_t^{\prime },
	x_t^{\prime }1\{ g_t^{\prime }\phi>0\})^{\prime },\cr
	\mathds R(\alpha
	,\phi ) &=&\mathbb{E}[(\alpha -\alpha _{0})^{\prime }\mathds Z_{t}(\phi
	)]^{2}, \cr
	\mathds{R}_{2}\left( \phi \right) &=&\widehat{R}_{2}\left(
	\alpha , H_T^{-1}\phi\right)=\frac{1}{T}\sum_{t=1}^{T}\left( x_{t}^{\prime
	}\delta _{0}\right) ^{2} | 1\{ \breve g_{t}^{\prime }\phi >0\} -1\{\breve
	g_t^{\prime }\phi_0>0\} |, \cr
	\mathds{R}_{3}\left( \alpha ,\phi \right) &=&%
	\widehat{R}_{3}\left( \alpha , H_T^{-1}\phi \right)=\frac{2}{T}%
	\sum_{t=1}^{T}x_{t}^{\prime }\delta _{0}\left( 1\left\{\breve g_t^{\prime
	}\phi>0\right\} -1\left\{\breve g_t^{\prime }\phi_0>0\right\} \right) \breve{%
	\mathds Z}_t\left( \phi \right) ^{\prime }\left(\alpha -\alpha _{0}\right), %
\cr \mathds{C}_{1}\left( \delta ,\phi \right) &=& \mathbb{\widehat{C}}%
_{1}\left( \delta , H_T^{-1}\phi \right)= \frac{2}{T}\sum_{t=1}^{T}%
\varepsilon _{t}x_{t}^{\prime }\delta \left( 1\{\breve g_t^{\prime }\phi>0\}
-1\{\breve g_t^{\prime }\phi_0>0\} \right), \cr
\mathds{C}_{3}\left( \delta
,\phi \right) &=& \mathbb{\widehat{C}}_{3}\left( \delta , H_T^{-1}\phi
\right)= \frac{2}{T}\sum_{t=1}^{T}x_{t}^{\prime }\delta _{0}x_{t}^{\prime
}\delta\left( 1\{\breve g_t^{\prime }\phi_0>0\} -1_{t}\right) \left(
1\{\breve g_t^{\prime }\phi>0\} -1\{\breve g_t^{\prime }\phi_0>0\} \right).
\end{eqnarray*}

\begin{lem}
	\label{l:rhat} Uniformly in $(\alpha ,\phi )$, for an arbitrarily small $%
	\eta >0$,\newline
	(i) $\sup_{\phi }|\widehat{R}_{1}(\alpha ,H_{T}^{-1}\phi )-\mathds %
	R(\alpha ,\phi )|=o_{P}(1)|\alpha -\alpha _{0}|_{2}^{2}$,\newline
	(ii) $|\mathds R_{3}(\alpha ,\phi )|\leq \left( O_{P}\left( T^{-1}\right)
	+CT^{-\varphi }\left\vert \phi -\phi _{0}\right\vert _{2}\right) \left\vert
	\alpha -\alpha _{0}\right\vert _{2}.$\newline
	(iii) $|\mathds C_{1}(\delta ,\phi )-\mathds C_{1}(\delta _{0},\phi )|\leq
	\left( O_{P}\left( T^{-1}\right) +\eta T^{-2\varphi }\left\vert \phi -\phi
	_{0}\right\vert \right) T^{\varphi }\left\vert \delta -\delta
	_{0}\right\vert _{2}$\newline
	(iv) $|\mathds C_{3}(\delta ,\phi )-\mathds C_{3}(\delta _{0},\phi )|\leq
	T^{-\varphi }\left\vert \delta -\delta _{0}\right\vert _{2}O_{P}\left(
	N^{-1/2}\right) .$
\end{lem}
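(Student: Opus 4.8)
The plan is to exploit the reparametrization introduced just before the lemma: writing $\phi = H_T\gamma$ one has $\widehat f_t'\gamma = \breve g_t'\phi$ with $\breve g_t = g_t + h_t/\sqrt N$, so that each of $\widehat R_1$, $\mathds R_3$, $\mathds C_1$, $\mathds C_3$ becomes an empirical average no longer involving the random rotation $H_T$. In every case the linear deviation $\alpha-\alpha_0$ (resp.\ $\delta-\delta_0$) factors out, and what remains is an average of a bounded-moment factor built from $x_t,\varepsilon_t,d_0$ times one or two indicator increments of the form $1\{\breve g_t'\phi>0\}-1\{\breve g_t'\psi>0\}$, possibly multiplied by the power $T^{-\varphi}$ coming from $\delta_0 = d_0 T^{-\varphi}$. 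The proof of each part then combines a uniform law of large numbers, or a maximal inequality over $\phi$ (the reparametrized analogue of the modulus-of-continuity bound in Lemma \ref{A-rates:lemma-used} and of Lemmas \ref{Lem:modul1}--\ref{Lem:rate_gm}), with a moment bound on the relevant indicator increment; the centred sums are mean zero because $\varepsilon_t$ is a martingale difference with respect to the information set containing $(x_t,g_t,h_t)$ under Assumption \ref{as9}.

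For part (i), I would write $\widehat R_1(\alpha,H_T^{-1}\phi) = T^{-1}\sum_t (\breve{\mathds Z}_t(\phi)'(\alpha-\alpha_0))^2$ and bound $|\widehat R_1(\alpha,H_T^{-1}\phi) - \mathds R(\alpha,\phi)|$, after factoring $|\alpha-\alpha_0|_2^2$ and maximizing over unit vectors $v$, by $\sup_\phi|T^{-1}\sum_t\{(\breve{\mathds Z}_t(\phi)'v)^2 - (\mathds Z_t(\phi)'v)^2\}|$ plus $\sup_\phi|T^{-1}\sum_t(\mathds Z_t(\phi)'v)^2 - \mathbb E(\mathds Z_t(\phi)'v)^2|$. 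The second supremum is $o_P(1)$ by the $\rho$-mixing ULLN over the VC-subgraph (hence Glivenko--Cantelli) class $\{((x_t',x_t'1\{g_t'\phi>0\})'v)^2\}$, which has the integrable envelope $C|x_t|_2^2$ by Assumption \ref{as9}; uniformity over the $T$-dependent set $\Phi_T$ is harmless since $H_T$ converges and $\Phi_T$ eventually lies in a fixed compact set. The first supremum vanishes unless $1\{\breve g_t'\phi>0\}\neq 1\{g_t'\phi>0\}$, and since $\breve g_t'\phi - g_t'\phi = h_t'\phi/\sqrt N$ the conditional probability of that event is $O(N^{-1/2})$ uniformly in $\phi$ by the density bound in Assumption \ref{as8} together with $\sup_\phi|h_t'\phi|$ having bounded moments; hence the first supremum is $O_P(N^{-1/2})$ and the total is $o_P(1)|\alpha-\alpha_0|_2^2$.

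For parts (ii) and (iii), I would mimic Step~1 of Lemma \ref{rates-thm} and Lemma \ref{A-rates:lemma-used} after reparametrization, splitting the relevant average into its deterministic mean and a centred fluctuation. The mean is controlled by the uniform bound $\mathbb E|1\{\breve g_t'\phi>0\}-1\{\breve g_t'\phi_0>0\}|\le C|\phi-\phi_0|_2$, established exactly as the lower bound on $G(\gamma)$ in Step~1 of Lemma \ref{rates-thm}: the sign-change event is handled using the conditional density of $\breve g_t'\phi_0 = \widehat f_t'\gamma_0$ given the estimated sub-vector $\widehat f_{2t}$ --- which is well defined because $\breve g_t'(\phi-\phi_0) = \widehat f_{2t}'(\gamma_2-\gamma_{02})$ depends on $\widehat f_t$ only through $\widehat f_{2t}$, leaving randomness in $\widehat f_t'\gamma_0$ (cf.\ Online Appendix \ref{sec:e:discuss}) --- together with the first-moment rank condition in Assumption \ref{as8}; multiplying by the $T^{-\varphi}$ from $\delta_0$ yields the $CT^{-\varphi}|\phi-\phi_0|_2$ term. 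The centred fluctuations, which for (iii) are mean-zero sums in $\varepsilon_t$ and for (ii) are of $H_3$-type, obey $O_P(T^{-1})$ for $|\phi-\phi_0|_2\lesssim T^{-1+2\varphi}$ and $O_P(T^{-1}) + \eta T^{-2\varphi}|\phi-\phi_0|_2$ on the larger shell, by the reparametrized maximal inequalities (the indexing classes are VC); combining these, and for (iii) applying the arithmetic--geometric split $\sqrt{|\phi-\phi_0|_2/T}\le \eta T^{-\varphi}|\phi-\phi_0|_2 + (4\eta)^{-1}T^{-1+\varphi}$ that recasts the naive modulus into the stated form with the extra factor $T^{\varphi}$, gives (ii) and (iii).

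For part (iv), I would write $\mathds C_3(\delta,\phi) - \mathds C_3(\delta_0,\phi) = 2T^{-\varphi}\,T^{-1}\sum_t (x_t'd_0)(x_t'(\delta-\delta_0))(1\{\breve g_t'\phi_0>0\}-1_t)(1\{\breve g_t'\phi>0\}-1\{\breve g_t'\phi_0>0\})$ and bound the product of the two indicator increments by $1\{|g_t'\phi_0|<|h_t'\phi_0|/\sqrt N\}$, a set whose conditional probability given $(x_t,h_t)$ is $O(N^{-1/2})$ by the boundedness of the density of $\breve g_t'\phi_0$ at zero (Assumption \ref{as8}) and $\mathbb E(|x_t|_2^2|h_t|_2)<\infty$; since the fluctuation of $T^{-1}\sum_t|x_t|_2^2 1\{|g_t'\phi_0|<|h_t'\phi_0|/\sqrt N\}$ around its $O(N^{-1/2})$ mean is of smaller order by the ULLN, this gives $T^{-\varphi}|\delta-\delta_0|_2\,O_P(N^{-1/2})$. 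The main obstacle will be the uniform-in-$\phi$ control of the centred sums in (ii)--(iii) with the sharp $T^{-2\varphi}|\phi-\phi_0|_2$ modulus: this needs the reparametrized maximal inequalities together with the non-degeneracy of the conditional densities of $\breve g_t'\phi_0$ and $\widehat f_t'\gamma_0$ given the estimated sub-vectors, and for the $N^{-1/2}$-rate pieces in (i) and (iv) it forces the use of the density and moment conditions of Assumption \ref{as8} in place of a naive conditional-independence argument, since $h_t$ is a cross-sectional average and hence correlated across $t$.
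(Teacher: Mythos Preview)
Your proposal is correct and follows essentially the same route as the paper: part (i) via a ULLN plus the $O(N^{-1/2})$ probability that $\breve g_t'\phi$ and $g_t'\phi$ differ in sign; parts (ii)--(iii) by splitting into mean and fluctuation and invoking the reparametrized Lemma \ref{Lem:rate_gm}; and part (iv) by bounding the product of indicator increments by the single event $\{|g_t'\phi_0|<|h_t'\phi_0|/\sqrt N\}$. Two minor remarks. In (i) the paper orders the triangle inequality the other way---ULLN on $\breve{\mathds Z}_t$ first, then $|\mathbb E\breve{\mathds Z}_t\breve{\mathds Z}_t'-\mathbb E\mathds Z_t\mathds Z_t'|=o(1)$---but your decomposition works equally well. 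In (iii) the arithmetic--geometric split is unnecessary: the $T^{\varphi}$ factor in the statement is purely the algebraic identity $1/T = T^{\varphi}/T^{1+\varphi}$, so once Lemma \ref{Lem:rate_gm} bounds $(2/T^{1+\varphi})\sum_t \varepsilon_t x_t(1\{\breve g_t'\phi>0\}-1\{\breve g_t'\phi_0>0\})$ by $O_P(T^{-1})+\eta T^{-2\varphi}|\phi-\phi_0|$, multiplying by $T^{\varphi}|\delta-\delta_0|_2$ gives the claim directly---your AM--GM detour recovers the same bound from the cruder $\sqrt{|\phi-\phi_0|/T}$ modulus but is redundant given that you already cite the peeled form two sentences earlier.
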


      \begin{proof}
(i) First, note that by uniform law of large numbers, for a sufficiently
large $C>0$,
\begin{equation*}
	\sup_{\phi }\left\vert \frac{1}{T}\sum_{t=1}^{T}\breve{\mathds Z}_{t}(\phi )%
	\breve{\mathds Z}_{t}(\phi )^{\prime }-\mathbb{E}\breve{\mathds Z}_{t}(\phi )%
	\breve{\mathds Z}_{t}(\phi )^{\prime }\right\vert =o_{P}(1).
\end{equation*}%
In addition, $\left\vert \mathbb{E}\mathds Z_{t}(\phi )\mathds Z_{t}(\phi
)^{\prime }-\mathbb{E}\breve{\mathds Z}_{t}(\phi )\breve{\mathds Z}_{t}(\phi
)^{\prime }\right\vert =o_{P}(1)$. Also, $\frac{1}{T}\sum_{t=1}^{T}\widehat{Z%
}_{t}(H_{T}^{-1}\phi )\widehat{Z}_{t}(H_{T}^{-1}\phi )^{\prime }=\frac{1}{T}%
\sum_{t=1}^{T}\breve{\mathds Z}_{t}(\phi )\breve{\mathds Z}_{t}(\phi
)^{\prime }$.
Hence
\begin{eqnarray*}
	&&\sup_{\phi }|\widehat{R}_{1}(\alpha ,H_{T}^{-1}\phi )-\mathds R(\alpha
	,\phi )| \\
	&\leq & |\alpha -\alpha _{0}|_{2}^{2}\sup_{\phi }\left\vert \frac{1}{T}%
	\sum_{t=1}^{T}\breve{\mathds Z}_{t}(\phi )\breve{\mathds Z}_{t}(\phi
	)^{\prime }-\mathbb{E}\breve{\mathds Z}_{t}(\phi )\breve{\mathds Z}_{t}(\phi
	)^{\prime }\right\vert \\
	&&+|\alpha -\alpha _{0}|_{2}^{2}\sup_{\phi }\left\vert \mathbb{E}\mathds %
	Z_{t}(\phi )\mathds Z_{t}(\phi )^{\prime }-\mathbb{E}\breve{\mathds Z}%
	_{t}(\phi )\breve{\mathds Z}_{t}(\phi )^{\prime }\right\vert \\
	&=&o_{P}(1)|\alpha -\alpha _{0}|_{2}^{2}.
\end{eqnarray*}

\noindent
(ii)
By Lemma  \ref{Lem:rate_gm}, uniformly in $\phi$
  \begin{eqnarray*}
  |\mathds R_{3}(\alpha,\phi)|&=&
| \frac{2%
}{T}\sum_{t=1}^{T}x_{t}^{\prime }\delta _{0}\left( 1\left\{\breve g_t'\phi>0\right\} -1\left\{\breve g_t'\phi_0>0\right\} \right) \breve{\mathds  Z}_t\left( \phi \right) ^{\prime }\left(
\alpha -\alpha _{0}\right)|\cr
&\leq&  C |\alpha-\alpha_0|_2 \frac{1
}{T^{1+\varphi}}\sum_{t=1}^{T}|x_{t}|_2^2  \left| 1\left\{\breve g '\phi>0\right\} -1\left\{ \breve g '\phi_0>0\right\} \right |\cr
&\leq&C|\alpha-\alpha_0|_2\big{[}
 O_P(T^{-1})+T^{-2\varphi}|\phi-\phi_0|
\big{] } \cr
&&+ C|\alpha-\alpha_0|_2T^{-\varphi}\mathbb E|x_{t}|_2^2  \left| 1\left\{\breve g '\phi>0\right\} -1\left\{ \breve g '\phi_0>0\right\} \right |\cr
&\leq&C|\alpha-\alpha_0|_2\big{[}
 O_P(T^{-1})+T^{-2\varphi}|\phi-\phi_0|
\big{] }.
\end{eqnarray*}

\noindent
(iii) Due to Lemma \ref{Lem:rate_gm} and H\"{o}lder inequality, for an arbitrarily small $\eta>0$,
  \begin{eqnarray*}
&&  |\mathds C_{1}( \delta,  \phi)-\mathds C_{1}(\delta_0, \phi)|\leq\left|
\frac{2}{T}\sum_{t=1}^{T}\varepsilon _{t}x_{t} \left( 1\{\breve g_t'\phi>0\} -1\{\breve g_t'\phi_0>0\} \right)
\right| |\delta-\delta_0|_2\cr
&=&\left\vert \frac{2}{T^{1+\varphi}}%
\sum_{t=1}^{T}\varepsilon _{t}x_{t}\left( 1\left\{ \breve g'\phi _{0}\leq 0<\breve g'\phi \right\} -1\left\{\breve g'\phi\leq 0<\breve g'\phi_{0}\right\} \right)  \right\vert\\
&&  \notag T^{\varphi} |\delta-\delta_0|_2 \\
&\leq &\left( O_P\left( T^{-1}\right) +\eta T^{-2\varphi }\left\vert \phi -\phi_{0}\right\vert \right) T^{\varphi }\left\vert \delta -\delta
_{0}\right\vert _{2}  .
\end{eqnarray*}

\noindent
(iv)
Uniformly in $\phi$,
\begin{eqnarray*}
 \left\vert \mathds C_{3}\left( \delta _{0},\phi \right) -\mathds C_{3}\left( \delta ,\phi \right) \right\vert
&\leq&
 \frac{2}{T}\sum_{t=1}^{T} |x_{t}|_2^2\left|
1\{\breve g_t'\phi_0>0\} -1\{g_t'\phi_0>0\}\right|
|\delta -\delta
_{0}|_2T^{-\varphi}\cr
&\leq &T^{-\varphi }\left\vert \delta -\delta _{0}\right\vert _{2}O_P\left(
N^{-1/2}\right),
\end{eqnarray*}%
since the modulus of the difference between two indicators is less than equal to $ 1 $.
\end{proof}

 \begin{pro}\label{Pp:consistency}
$$|\widehat\alpha-\alpha_0|_2=o_P(1),\quad  |\widehat\phi-\phi_0|_2=o_P(1).
$$
\end{pro}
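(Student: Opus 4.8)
The plan is to mirror the two-step argument of Lemma~\ref{cons-thm}: establish $\widehat\phi\to\phi_0$ (equivalently $\widehat\gamma\to\gamma_0$) in probability first, then deduce $\widehat\alpha\to\alpha_0$. The preliminary reduction is to pass from the criterion built on the PCA factors $\widetilde f_t$ to the one built on the linearization $\widehat f_t$. By Lemma~\ref{l:relf} together with Proposition~\ref{prop:f} and $T=O(N)$, the gap $|\widetilde{\mathbb S}_T(\alpha,\gamma)-\widehat{\mathbb S}_T(\alpha,\gamma)|$ is $O_P(\Delta_f+T^{-6})\bigl[T^{-2\varphi}+|\alpha-\alpha_0|_2^2+T^{-\varphi}|\alpha-\alpha_0|_2\bigr]$ uniformly; since $\varphi<1/2$ forces $\Delta_f=o(T^{-2\varphi})$ and the estimators (and the profiled least-squares coefficients) stay in a bounded set, this gap is $o_P(T^{-2\varphi})$ at every point that enters the argument, so it suffices to prove consistency for the minimizer of $\widehat{\mathbb S}_T$. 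I then reparametrize via $\phi=H_T\gamma\in\Phi_T$: because $H_T\to H$ with $H$ nonsingular (Assumption~\ref{assmp:factor}), $\|H_T\|$ and $\|H_T^{-1}\|$ are $O_P(1)$, so $\widehat\gamma-\gamma_0=o_P(1)$ iff $\widehat\phi-\phi_0=o_P(1)$ (recall $\phi_0=H_T\gamma_0$), and all of the $\widehat{\,\cdot\,}$ objects of Section~\ref{sec:def:not} turn into the rotation-free quantities $\mathds R,\ \mathds R_2,\ \mathds R_3,\ \mathds C_1,\ \mathds C_3$, together with $\widehat{\mathbb C}_2$ and $\widehat{\mathbb C}_4$.

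For $\widehat\phi$, I would profile out $\alpha$: let $\widehat{\mathbb S}_T(\phi)=\min_\alpha\widehat{\mathbb S}_T(\alpha,H_T^{-1}\phi)$, whose OLS solution is explicit in $\{\breve{\mathds Z}_t(\phi)\}$. Repeating the projection-matrix computation of the proof of Lemma~\ref{cons-thm}, and using the ULLN, the rank conditions carried over in Assumption~\ref{as9}, and $\delta_0=d_0T^{-\varphi}$, one gets uniformly in $\phi\in\Phi_T$
\[
T^{2\varphi}\bigl(\widehat{\mathbb S}_T(\phi)-\widehat{\mathbb S}_T(\phi_0)\bigr)=\mathbb E\bigl[(x_t'd_0)^2\,1\{g_t'\phi_0>0\}\bigr]-\breve A(\phi)+o_P(1),
\]
where $\breve A(\phi)$ is the analogue of the term $A(\widetilde\gamma)$ in Lemma~\ref{cons-thm} with the perturbed factors $\breve g_t=g_t+h_t/\sqrt N$ in place of $f_t$ (replacing $1\{g_t'\phi_0>0\}$ by $1\{\breve g_t'\phi_0>0\}$ only costs $T^{-2\varphi}O_P(N^{-1/2})=o_P(T^{-2\varphi})$, by the argument of Lemma~\ref{l:relf}(iii)). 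Since $h_t/\sqrt N\to0$ a.s.\ and the relevant conditional densities are bounded and continuous (Assumption~\ref{as8}), $\breve A(\phi)\to A^\ast(\phi)$ uniformly, with $A^\ast$ continuous and, by the identification condition Assumption~\ref{iden-assump} transported to $(g_t,\phi_0)$ through Assumption~\ref{as9}, $\mathbb E[(x_t'd_0)^2 1\{g_t'\phi_0>0\}]-A^\ast(\phi)>0$ for $\phi\neq\phi_0$ with a strict minimum at $\phi_0$. As $\widehat{\mathbb S}_T(\widehat\phi)\le\widehat{\mathbb S}_T(\phi_0)+o_P(T^{-2\varphi})$, the argmax continuous mapping theorem (over the compact sets $\Phi_T$, which converge to $H\Gamma_\epsilon$) yields $\widehat\phi\overset{p}{\to}\phi_0$, hence $\widehat\gamma\overset{p}{\to}\gamma_0$.

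For $\widehat\alpha$, I would start from $\widehat{\mathbb S}_T(\widehat\alpha,\widehat\gamma)-\widehat{\mathbb S}_T(\alpha_0,\gamma_0)\le o_P(T^{-2\varphi})$ and the reparametrized decomposition of Section~\ref{sec:def:not}. The leading term $\widehat R_1(\widehat\alpha,H_T^{-1}\widehat\phi)$ equals $\mathds R(\widehat\alpha,\widehat\phi)+o_P(1)|\widehat\alpha-\alpha_0|_2^2$ by Lemma~\ref{l:rhat}(i), and $\mathds R(\alpha,\phi)=(\alpha-\alpha_0)'\mathbb E[\mathds Z_t(\phi)\mathds Z_t(\phi)'](\alpha-\alpha_0)\ge c|\alpha-\alpha_0|_2^2$ uniformly over $\Phi_T$ by the rank condition; the term $\mathds R_2(\widehat\phi)\ge0$ is dropped. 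Each remaining term is negligible times $1+|\widehat\alpha-\alpha_0|_2$: $|\mathds R_3|=o_P(1)|\widehat\alpha-\alpha_0|_2$ by Lemma~\ref{l:rhat}(ii) (using $|\widehat\phi-\phi_0|_2=O(1)$); $|\mathds C_1(\widehat\delta,\widehat\phi)|$ is handled by splitting off $\mathds C_1(\delta_0,\widehat\phi)$, with $|\mathds C_1(\widehat\delta,\widehat\phi)-\mathds C_1(\delta_0,\widehat\phi)|=o_P(1)|\widehat\delta-\delta_0|_2$ from Lemma~\ref{l:rhat}(iii) and $\sup_\phi|\mathds C_1(\delta_0,\phi)|=T^{-\varphi}o_P(1)$ from a ULLN over the VC-type class $\{1\{\breve g_t'\phi>0\}\}$; $|\mathds C_3(\widehat\delta,\widehat\phi)|$ similarly via Lemma~\ref{l:rhat}(iv) and $\sup_\phi|\mathds C_3(\delta_0,\phi)|\le T^{-2\varphi}O_P(N^{-1/2})$; and $|\widehat{\mathbb C}_2|+|\widehat{\mathbb C}_4|=o_P(1)|\widehat\alpha-\alpha_0|_2$ by Lemma~\ref{l:relf}(ii)--(iii). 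Collecting, $(c-o_P(1))|\widehat\alpha-\alpha_0|_2^2\le o_P(1)|\widehat\alpha-\alpha_0|_2+o_P(1)$, and compactness of $\mathcal A$ bounds the left side, so $|\widehat\alpha-\alpha_0|_2=o_P(1)$.

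The hard part is the second step. The diminishing jump $\delta_0=d_0T^{-\varphi}$ flattens the population objective in $\phi$ at the $o(1)$ scale, so consistency of $\widehat\phi$ is not a direct consequence of a fixed limiting criterion but requires profiling $\alpha$, rescaling by $T^{2\varphi}$, and then simultaneously controlling (i) the factor-estimation perturbation $h_t/\sqrt N$ sitting inside the indicators, via the density-continuity conditions of Assumption~\ref{as8}; (ii) the $N^{-1/2}$ mismatch between the true regime label $1\{g_t'\phi_0>0\}$ and the estimated label $1\{\breve g_t'\phi_0>0\}$; and (iii) the $T$-dependence of the parameter space $\Phi_T$ entering through $H_T$, before the argmax continuous mapping theorem can be applied.
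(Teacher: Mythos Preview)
Your proposal is correct and follows essentially the same route as the paper. Both arguments first prove $\widehat\phi\to\phi_0$ by profiling out $\alpha$, rescaling the criterion by $T^{2\varphi}$, using the projection-matrix decomposition of Lemma~\ref{cons-thm}, passing from $\widetilde f_t$ to $\widehat f_t$ to $\breve g_t$ via Lemma~\ref{l:relf}, and applying the argmax continuous mapping theorem to the limiting contrast $\mathbb E[(x_t'd_0)^2 1_t]-A(\phi)$; then both deduce $\widehat\alpha\to\alpha_0$ from the basic inequality $\widetilde{\mathbb S}_T(\widehat\alpha,\widehat\gamma)\le\widetilde{\mathbb S}_T(\alpha_0,\gamma_0)$, the decomposition of Section~\ref{sec:def:not}, and Lemmas~\ref{l:relf}--\ref{l:rhat}. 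The only cosmetic difference is in the second step: you isolate the quadratic form $\mathds R(\alpha,\phi)\ge c|\alpha-\alpha_0|_2^2$ and bound every other piece by $o_P(1)(1+|\widehat\alpha-\alpha_0|_2)$, whereas the paper bundles the pieces into the full excess risk $\mathds R^0(\alpha,\phi)=\mathbb E(\mathds Z_t(\phi)'\alpha-\mathds Z_t(\phi_0)'\alpha_0)^2$, shows $\mathds R^0(\widehat\alpha,\widehat\phi)=o_P(1)$, and invokes identification directly---but this is the same argument with different bookkeeping.
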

Since $H_T^{-1}=O_P(1)$, this proposition implies that $\widehat\gamma- \gamma_0  =H_T^{-1}(\widehat\phi-\phi_0) + o_P(1) = o_P(1)$ as well.

\begin{proof}
	We begin with showing the consistency of $ \widehat{\gamma} $.
	Let $\widetilde P(\gamma)$ and $\widehat P(\gamma)$ respectively be the orthogonal projection matrices on $\widetilde Z_t(\gamma)$ and $\widehat Z_t(\gamma)$.
	Then
	\begin{eqnarray*}
		\widetilde{\mathbb{S}}_{T}\left( \gamma \right) &=&\widetilde{\mathbb{S}}_{T}\left( \widehat{\alpha}%
		\left( \gamma \right) ,\gamma \right) =\frac{1}{T}Y^{\prime }\left(
		I-\widetilde P\left( \gamma \right) \right) Y \\
		&=&\frac{1}{T}\left( e^{\prime }\left( I-\widetilde  P\left( \gamma \right) \right)
		e+2\delta _{0}^{\prime }X_{0}\left( I-\widetilde P\left( \gamma \right) \right)
		e+\delta _{0}^{\prime }X_{0}^{\prime }\left( I-\widetilde P\left( \gamma \right)
		\right) X_{0}\delta _{0}\right) ,
	\end{eqnarray*}%
	where $e,Y,$ and $X_{0}$ are the matrices stacking $\varepsilon _{t}$'s, $%
	y_{t}^{{}}$'s and $x_{t}^{\prime }1_{t}$'s, respectively.

	Let $\widetilde{\gamma}$ be an estimator such that
	\begin{equation}
	\widetilde{\mathbb{S}}_{T}\left( \widetilde{\gamma}\right) \leq \widetilde{\mathbb{S}}_{T}\left( \gamma
	_{0}\right) +o_P\left( T^{-2\varphi }\right) .  \label{eq:approx gm:2}
	\end{equation}%
	Then, $ \widehat{\gamma} $ satisfies this as it is a minimizer. Furthermore, \begin{eqnarray}
	0 &\geq &T^{2\varphi }\left( \widetilde{\mathbb{S}}_{T}\left( \widetilde{\gamma}\right) -%
	\widetilde{\mathbb{S}}_{T}\left( \gamma _{0}\right) \right) -o_P\left( 1\right) \nonumber \\
	&=&\frac{T^{2\varphi }}{T}\left( e^{\prime }\left( \widetilde  P\left( \gamma
	_{0}\right) -\widetilde  P\left( \widetilde{\gamma}\right) \right) e
	+2\delta _{0}^{\prime
	}X_{0}\left( \widetilde  P\left( \gamma _{0}\right) -\widetilde  P\left( \widetilde{\gamma}\right)
	\right) e+\delta _{0}^{\prime }X_{0}^{\prime }\left( \widetilde  P\left( \gamma
	_{0}\right) -\widetilde  P\left( \widetilde{\gamma}\right) \right) X_{0}\delta _{0}\right).\cr \label{eq:F-decomp}
	\end{eqnarray}%
	For the first term in \eqref{eq:F-decomp},  recall $\breve g_t= g_t+h_tN^{-1/2}$
	and note that by Lemma \ref{l:relf}, Lemma \ref{l:rhat} and ULLN lead to uniformly in $\gamma$,  and $\phi=H_T\gamma$, (recall $\mathds Z_{t}(\phi)=Z_t(\gamma)$)
	\begin{eqnarray*}
		&& \frac{1}{T} \widetilde Z(\gamma)'\widetilde Z(\gamma)
		=\frac{1}{T} \widehat Z(\gamma)'\widehat Z(\gamma) +o_P(1)
		=T^{-1}%
		\sum_{t=1}^{T}Z_{t}\left( \gamma \right) Z_{t}\left( \gamma \right) ^{\prime
		}+o_P(1)\cr
		&=&T^{-1} \sum_{t=1}^{T}\mathds Z_{t}(\phi)   \mathds Z_{t}(\phi)  ' +o_P(1)
		=\mathbb E\mathds Z_{t}(\phi)   \mathds Z_{t}(\phi)  ' +o_P(1).
	\end{eqnarray*}
	Then the rank condition for $\mathbb E\mathds Z_{t}(\phi)  \mathds Z_{t}(\phi)  '$ in Assumption \ref{as9} implies that $ \sup_\gamma  [\frac{1}{T} \widetilde Z(\gamma)'\widetilde Z(\gamma)]^{-1}=O_P(1)$. Also,
	$$
	\sup_\gamma | \frac{1}{T} \widetilde Z(\gamma)'e|_2\leq  \sup_\gamma | \frac{1}{T} \widehat Z(\gamma)'e|_2
	+O_P(\Delta_f+ T^{-6})= O_P(\frac{1}{\sqrt{T}}),$$
	by Lemma \ref{l:relf} and an FCLT for VC classes in \cite{arcones1994central}. So
	\begin{eqnarray*}
		&&|\frac{1}{T} e^{\prime }\left( \widetilde  P\left( \gamma
		_{0}\right) -\widetilde  P\left( \widetilde{\gamma}\right) \right) e|
		\leq 2\sup_\gamma\frac{1}{T}  e' \widetilde  P\left( \gamma \right) e
		\leq 2\frac{1}{T}\sup_\gamma|  [ \widetilde Z(\gamma)'\widetilde Z(\gamma)]^{-1}|_2^2| \widetilde Z(\gamma)'e|_2^2\cr
		&\leq&2 \sup_\gamma  [\frac{1}{T} \widetilde Z(\gamma)'\widetilde Z(\gamma)]^{-1}
		\sup_\gamma | \frac{1}{T} \widetilde Z(\gamma)'e|_2^2=O_P( T^{-1}).
	\end{eqnarray*}
	So $\frac{T^{2\varphi }}{T} e^{\prime }\left( \widetilde  P\left( \gamma
	_{0}\right) -\widetilde  P\left( \widetilde{\gamma}\right) \right) e=o_P(1)$.
	For the second term in \eqref{eq:F-decomp},
	\begin{eqnarray*}
		&&\frac{T^{2\varphi }}{T}\delta _{0}^{\prime}X_{0}\left( \widetilde  P\left( \gamma _{0}\right) -\widetilde  P\left( \widetilde{\gamma}\right)
		\right) e\leq  O_P(T^\varphi)\sup_\gamma |\frac{1}{T}X_{0}  \widetilde  P\left( {\gamma}
		\right) e|_2\cr
		&\leq&  O_P(T^\varphi)\sup_\gamma |\frac{1}{T} \sum_t X_t\varepsilon_t1\{\widetilde f_t'\gamma>0\}|\cr
		&=&
		O_P(T^\varphi)\sup_\gamma |\frac{1}{T} \sum_t X_t\varepsilon_t1\{\widehat f_t'\gamma>0\}|
		+O_P(T^\varphi) (\Delta_f+T^{-6})\cr
		&=&o_P(1),
	\end{eqnarray*}
	due to Lemma \ref{l:relf} and FCLT. Applying the same reasoning for the third term in \eqref{eq:F-decomp} and recalling that $ P(\gamma_{0})X_0 = X_0 $,
	\begin{eqnarray*}
		\frac{T^{2\varphi }}{T}\delta _{0}^{\prime }X_{0}^{\prime }\left( \widetilde  P\left( \gamma
		_{0}\right) -\widetilde  P\left( \widetilde{\gamma}\right) \right) X_{0}\delta _{0}
		= o_P(1) + \mathbb E(d_{0}^{\prime}x_{t})^2 1_{t} -A(\widetilde\phi),
\end{eqnarray*}
		where	$ A(\widetilde\phi)=  \mathbb Ed_{0}^{\prime
		}x_{t}1_{t} \mathds Z_{t}\left( \widetilde{\phi}\right) ^{\prime } \left(
		\mathbb E\mathds Z_{t}\left( \widetilde{\phi}\right)  \mathds Z_{t}\left( \widetilde{\phi}\right)
		^{\prime }\right) ^{-1}\mathbb E \mathds Z_{t}\left( \widetilde{\phi}\right) 1_{t}x_{t}^{\prime
		}d_{0} $.
The remaining proof for $\widetilde\phi\overset{P}{\rightarrow }\phi_0$ is the same as the  known factor case.

	Turning to $ \widehat{\alpha} $, recall$$\mathbb{\widetilde {R}}_{T}\left( \alpha ,H_T^{-1}\phi \right) =\frac{1}{T}%
\sum_{t=1}^{T}\left( \widetilde{Z}_{t}\left(H_T^{-1}\phi \right) ^{\prime }\alpha -\widetilde{%
Z}_{t}\left( H_T^{-1}\phi_{0}\right) ^{\prime }\alpha _{0}\right) ^{2}. $$

Write 
\begin{eqnarray*}
\mathds R(\alpha,\phi)&: =& \mathbb E
\left( \breve{\mathds  Z}_t(\phi)'\alpha - \breve{\mathds  Z}_t(\phi_0)'\alpha _{0}\right) ^{2}\cr
\mathds R^0(\alpha,\phi)&: =& \mathbb E
\left( \mathds Z_{t}(\phi)'\alpha - \mathds Z_{t}(\phi_0)'\alpha _{0}\right) ^{2}.
\end{eqnarray*}

We have
\begin{eqnarray*}
&&\sup_{\alpha ,\phi }|\frac{1}{T}%
\sum_{t=1}^{T}\left( \widetilde{Z}_{t}\left( H_T^{-1}\phi \right) ^{\prime }\alpha -\widetilde{%
Z}_{t}\left(H_T^{-1}\phi _{0}\right) ^{\prime }\alpha _{0}\right) ^{2}   - \left( \widehat{Z}_{t}\left( H_T^{-1}\phi\right) ^{\prime }\alpha -\widehat{%
Z}_{t}\left( H_T^{-1}\phi _{0}\right) ^{\prime }\alpha _{0}\right) ^{2}   |\cr
&\leq&\sup_{ \phi}\left(\frac{1}{T}
\sum_{t=1}^{T}     |x_t|_2^21\{|\breve  g_t'\phi|<|\widehat f_t-\widetilde f_t|_2C\}   \right)^{1/2}\cr
&\leq&\left(\frac{1}{T}
\sum_{t=1}^{T}     |x_t|_2^21\{\inf_{ \phi}|\breve g_t'\phi|<|\widehat f_t-\widetilde f_t|_2C\}   \right)^{1/2}\cr
&\leq& \left(\frac{1}{T}
\sum_{t=1}^{T}     |x_t|_2^21\{ \inf_{ \phi}|\breve g_t'\phi|<\Delta_fC\}    +
\frac{1}{T}
\sum_{t=1}^{T}     |x_t|_2^21\{|\widehat f_t-\widetilde f_t|>\Delta_f,\text{ or }|H_T|>C\}
 \right)^{1/2}\cr
 &=&o_P(1).
\end{eqnarray*}
Furthermore,
\begin{eqnarray*}
&& \sup_{\alpha ,\phi }|\frac{1}{T}%
\sum_{t=1}^{T} \left( \widehat{Z}_{t}\left( H_T^{-1}\phi \right) ^{\prime }\alpha -\widehat{%
Z}_{t}\left(H_T^{-1}\phi _{0}\right) ^{\prime }\alpha _{0}\right) ^{2}
- \mathds R(\alpha, \phi)
  |\cr
  &=& \sup_{\alpha ,\phi }|\frac{1}{T}%
\sum_{t=1}^{T} \left( \mathds{Z}_{t}\left( \phi \right) ^{\prime }\alpha -\breve{\mathds  Z}_t\left( \phi _{0}\right) ^{\prime }\alpha _{0}\right) ^{2}
- \mathds R(\alpha,\phi)
  |
  =o_P(1),
\end{eqnarray*}
by uniform law of large numbers.
Also,
\begin{eqnarray*}
  && \sup_{\alpha ,\phi}|    \mathds R(\alpha,\phi)
-    \mathds R^0(\alpha,\phi) |\leq    \left(  \mathbb   E|x_t|_2^21\{\inf_\phi|g_t'\phi|<C|h_t|_2N^{-1/2}\}		\right)^{1/2}=o(1).
\end{eqnarray*}
Hence $\sup_{\alpha ,\phi }\left\vert \mathbb{\widetilde{R}}_{T}\left(
\alpha ,H_T^{-1}\phi\right) -  \mathds R^0(\alpha,\phi)\right\vert\leq
o_P(1).$

Next, we turn to the $ \widehat{\phi} $.
Recall that $\widehat{\alpha}$ and $\widehat{\gamma}$ are minimizers of $\mathbb{\widetilde{S}%
}_{T}$ and thus
\begin{eqnarray*}
0 &\geq &\mathbb{\widetilde{S}}_{T}\left( \widehat{\alpha },\widehat{\gamma }%
\right) -\mathbb{\widetilde{S}}_{T}\left( \alpha _{0},\gamma _{0}\right)
=\mathbb{\widetilde {R}}_{T}\left( \widehat{\alpha },\widehat{\gamma }\right) -%
\mathbb{\widetilde {G}}_{T}\left( \widehat{\alpha },\widehat{\gamma }\right) +%
\mathbb{\widetilde {G}}_{T}\left( \alpha _{0}, \gamma _{0}\right) .
\end{eqnarray*}
Since $\widehat \phi:=H_T\widehat\gamma$,  Lemma \ref{l:relf}, \ref{l:rhat}, and the fact that $\mathds{C}_{i}\left( \delta ,\widehat\phi \right) = \mathbb{\widehat{C}}_{i}\left( \delta , \widehat\gamma  \right)$, $i=1,3$ imply that
      \begin{eqnarray*}
|\mathds R^0(\widehat \alpha, \widehat \phi)|&\leq &
\mathbb{\widetilde{R}}_{T}\left(
\widehat \alpha , \widehat \gamma\right)+  \sup_{\alpha ,\phi }\left\vert \mathbb{\widetilde{R}}_{T}\left(
\alpha ,H_T^{-1}\phi\right) -  \mathds R^0(\alpha,\phi)\right\vert\cr
&\leq &
 o_P(1)+
\mathbb{\widetilde {G}}_{T}\left( \widehat{\alpha },\widehat{\gamma }\right) +%
\mathbb{\widetilde {G}}_{T}\left( \alpha _{0}, \gamma _{0}\right) \cr
&\leq& o_P(1) +|\widetilde{\mathbb C}_1(\widehat\delta,\widehat\gamma)|
+|\widetilde{\mathbb C}_2(\widehat\alpha)|
+|\widetilde{\mathbb C}_3(\widehat\delta,\widehat\gamma)|
+|\widetilde{\mathbb C}_4(\widehat\alpha)|\cr
&\leq& o_P(1) + |\widehat{\mathbb C}_1( \delta_0,\widehat\gamma)|
+|\widehat{\mathbb C}_3(\delta_0,\widehat\gamma)| =o_P(1).
\end{eqnarray*}
By the identification theorem, $\mathds R^0(  \alpha, \phi)$ has a unique minimum  at  $ \left( \alpha _{0},\phi _{0}\right) $.
Then the  continuity of $\mathds R^0$
   implies  $\widehat{\alpha }\overset{P}{\longrightarrow }\alpha_0$ and $\widehat{\phi }\overset{P}{\longrightarrow }\phi_0$ by the argmax continuous mapping theorem
   \citep[see e.g.][p.286]{VW}.
\end{proof}

\subsection{Rate of convergence for $\widehat\phi$ (Proof of Theorem \ref{thm:rate})}

Here, we prove Theorem \ref{thm:rate}.
Let
  \begin{eqnarray}\label{e.11}
\mathds G_1(\phi)&:=& \mathbb E\mathds R_{2}(  \phi)+\mathbb E\mathds C_{3}( \delta_0,  \phi)
\cr
\mathds G_2(\phi)&:=&|  \mathds R_{2}(  \phi)+\mathds C_{3}( \delta_0,  \phi)  -( \mathbb E\mathds R_{2}(  \phi)+\mathbb E\mathds C_{3}( \delta_0,  \phi)  )|.
\end{eqnarray}

Recall that $\mathds R(\alpha,\phi)= \mathbb E[(\alpha-\alpha_0)'\mathds  Z_{t}(\phi)]^2$.

\begin{lem}\label{l:rt} Uniformly in $\alpha,\phi$,  for any $\epsilon>0$,  there is $C>0$ that is independent of $\epsilon$, and $C_\epsilon$ that depends on $\epsilon$, so that
$|\mathds R(\alpha, \phi)
-\mathds R(\alpha,\phi_0)|\leq C|\alpha-\alpha_0|_2^2[   C_\epsilon|\phi-\phi_0|_2+\epsilon
]^{1/2}.$  Hence
$|\mathds R(\alpha, \widehat\phi)
-\mathds R(\alpha,\phi_0)|=o_P(1)|\alpha-\alpha_0|_2^2$.
\end{lem}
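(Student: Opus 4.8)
\textbf{Proof plan for Lemma \ref{l:rt}.}

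The plan is to expand the difference $\mathds R(\alpha,\phi) - \mathds R(\alpha,\phi_0)$ explicitly and bound it by a quantity controlled by the probability that the two indicator functions $1\{g_t'\phi>0\}$ and $1\{g_t'\phi_0>0\}$ disagree. Recalling that $\mathds Z_t(\phi) = (x_t', x_t'1\{g_t'\phi>0\})'$, write $a := \alpha - \alpha_0 = (a_\beta', a_\delta')'$ and note that $(\alpha-\alpha_0)'\mathds Z_t(\phi) = x_t'a_\beta + x_t'a_\delta 1\{g_t'\phi>0\}$. Then
\begin{align*}
\mathds R(\alpha,\phi) - \mathds R(\alpha,\phi_0)
&= \mathbb E\Big[ \big(x_t'a_\beta + x_t'a_\delta 1\{g_t'\phi>0\}\big)^2 - \big(x_t'a_\beta + x_t'a_\delta 1\{g_t'\phi_0>0\}\big)^2\Big] \\
&= \mathbb E\Big[ \big(2 x_t'a_\beta\, x_t'a_\delta + (x_t'a_\delta)^2\big)\big(1\{g_t'\phi>0\} - 1\{g_t'\phi_0>0\}\big)\Big].
\end{align*}
Since $1\{g_t'\phi>0\} - 1\{g_t'\phi_0>0\}$ is supported on the event $B_{\phi,t} := \{g_t'\phi>0 \ge g_t'\phi_0\}\cup\{g_t'\phi_0>0\ge g_t'\phi\}$ and is bounded by $1\{B_{\phi,t}\}$ in absolute value, Cauchy--Schwarz in the form $|\mathbb E[UV]| \le (\mathbb E U^2)^{1/2}(\mathbb E V^2)^{1/2}$ with $V = 1\{B_{\phi,t}\}$ gives
\[
|\mathds R(\alpha,\phi) - \mathds R(\alpha,\phi_0)| \le \big(\mathbb E\big[(2 x_t'a_\beta\, x_t'a_\delta + (x_t'a_\delta)^2)^2\big]\big)^{1/2}\, \big(\mathbb P(B_{\phi,t})\big)^{1/2}.
\]
The first factor is bounded by $C|a|_2^2 = C|\alpha-\alpha_0|_2^2$ using $\mathbb E|x_t|_2^4 < \infty$ from Assumption \ref{as9} (the moment condition $\mathbb E(|x_t|_2^8|g_t,e_t)<C$ is more than enough here, together with $|\alpha-\alpha_0|_2$ bounded on the compact parameter set $\mathcal A$).

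It remains to show $\mathbb P(B_{\phi,t}) \le C_\epsilon|\phi-\phi_0|_2 + \epsilon$ for $\phi$ ranging over a compact set, where $C_\epsilon$ is allowed to blow up as $\epsilon\downarrow 0$ but $\epsilon$ can be taken arbitrarily small. The idea is to split: for $|\phi-\phi_0|_2$ small, near-$\phi_0$ we control $\mathbb P(B_{\phi,t})$ by the density of $g_t'\phi$ near zero (Assumption \ref{as9}\ref{as9:itm11} gives $g_t'\phi$ a density bounded by $C$ for all $\phi\in\Phi_T$, and Assumption \ref{as8}\ref{as8:itm6} gives the relevant boundedness and continuity at zero), which yields $\mathbb P(B_{\phi,t}) \le C|\phi-\phi_0|_2$ --- this is essentially the argument already used in the consistency/rate proofs (cf.\ the proof of Step 1 in Lemma \ref{rates-thm}), applied here to $g_t$ rather than $f_t$. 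For $|\phi-\phi_0|_2$ bounded away from zero (the ``far'' regime), we do not get a linear bound, but we can make $\mathbb P(B_{\phi,t})$ small by noting it is a continuous function of $\phi$ that vanishes at $\phi_0$; uniform continuity on the compact parameter set then gives, for any $\epsilon>0$, a neighborhood of $\phi_0$ outside of which we instead use the crude bound combined with the fact that actually $\mathbb P(B_{\phi,t})$ is bounded --- more carefully, one writes $\mathbb P(B_{\phi,t}) \le \mathbb P(|g_t'\phi_0|\le |g_t'(\phi-\phi_0)|)$ and chooses the constant $C_\epsilon$ large enough that $C_\epsilon|\phi-\phi_0|_2 \ge 1 \ge \mathbb P(B_{\phi,t})$ whenever $|\phi-\phi_0|_2$ exceeds the neighborhood radius, while on the neighborhood the linear bound $C|\phi-\phi_0|_2$ suffices and $\epsilon$ is not even needed. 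The role of $\epsilon$ is to absorb the supremum over the ``moderate'' range where the density is controlled but the Lipschitz-type bound on $\mathbb P(B_{\phi,t})$ may have a constant depending on how close one insists on being to zero. The final claim $|\mathds R(\alpha,\widehat\phi) - \mathds R(\alpha,\phi_0)| = o_P(1)|\alpha-\alpha_0|_2^2$ then follows immediately: by Proposition \ref{Pp:consistency}, $\widehat\phi \overset{P}{\to}\phi_0$, so $C_\epsilon|\widehat\phi-\phi_0|_2 + \epsilon \overset{P}{\to}\epsilon$ for each fixed $\epsilon$, hence the bracket is $o_P(1)$ by letting $\epsilon\downarrow 0$ along a subsequence (a standard $\epsilon$-$\delta$ argument for convergence in probability).

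\textbf{Main obstacle.} The routine part is the algebraic expansion and Cauchy--Schwarz; the delicate point is establishing the uniform (in $\phi$ over the compact set) bound $\mathbb P(B_{\phi,t}) \le C_\epsilon|\phi-\phi_0|_2 + \epsilon$ with a constant $C$ in the leading $|\alpha-\alpha_0|_2^2$ factor that is genuinely \emph{independent of $\epsilon$}, as the statement demands. This requires care in bookkeeping which constants depend on $\epsilon$ (those coming from the near-zero density lower/upper bounds and the neighborhood radius) versus which do not (the moment bound on $x_t$). I expect the cleanest route is to prove the linear bound $\mathbb P(B_{\phi,t})\le C|\phi-\phi_0|_2$ on a \emph{fixed} small neighborhood of $\phi_0$ using Assumptions \ref{as9}\ref{as9:itm11} and \ref{as8}\ref{as8:itm6}, and then handle the complement by choosing $C_\epsilon$ so large that $C_\epsilon|\phi-\phi_0|_2\ge 1$ there --- in which case $\epsilon$ is actually superfluous and the statement holds with the $\epsilon$ term being a harmless slack. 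If instead the density is only continuous (not bounded below uniformly) one genuinely needs the $\epsilon$: split $B_{\phi,t}$ according to whether $|g_t'\phi_0|$ lies in a small fixed window around $0$ (where the bounded density gives the linear term, with $C_\epsilon$ depending on the window) or outside it (giving a contribution that is $o(1)$ as the window and $|\phi-\phi_0|_2$ shrink, absorbed into $\epsilon$).
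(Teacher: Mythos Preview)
Your overall architecture matches the paper's: expand the difference, apply Cauchy--Schwarz to separate the $|\alpha-\alpha_0|_2^2$ factor from the indicator, and reduce to bounding $\mathbb P(B_{\phi,t})$. The first two steps are fine.

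The gap is in your treatment of $\mathbb P(B_{\phi,t})$ and, in particular, your diagnosis of where the $\epsilon$ comes from. You propose to get a clean linear bound $\mathbb P(B_{\phi,t})\le C|\phi-\phi_0|_2$ on a neighborhood of $\phi_0$ from the bounded \emph{marginal} density of $g_t'\phi_0$ (Assumption~\ref{as9}\ref{as9:itm11}). That does not follow: the relevant event is $\{|g_t'\phi_0|\le |g_t'(\phi-\phi_0)|\}$, whose threshold $|g_t'(\phi-\phi_0)|\le |g_t|_2\,|\phi-\phi_0|_2$ is \emph{random} through $|g_t|_2$. A bounded marginal density of $g_t'\phi_0$ only gives $\mathbb P(|g_t'\phi_0|\le c)\le Cc$ for deterministic $c$; it says nothing when $c$ is replaced by a random variable correlated with $g_t'\phi_0$. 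To use conditioning instead you would need a bounded \emph{conditional} density of $g_t'\phi_0$ given $g_t'(\phi-\phi_0)$, uniformly in $\phi$, which is not among the assumptions you cite and is problematic when $\phi-\phi_0$ is nearly collinear with $\phi_0$.

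The paper's fix is exactly what produces the $\epsilon$: truncate $|g_t|_2$ at a level $C_1=C_\epsilon$ chosen so that $\mathbb P(|g_t|_2>C_\epsilon)<\epsilon$. Then
\[
\mathbb P\big(|g_t'\phi_0|\le |g_t|_2\,|\phi-\phi_0|_2\big)\le \mathbb P\big(|g_t'\phi_0|\le C_\epsilon|\phi-\phi_0|_2\big)+\mathbb P(|g_t|_2>C_\epsilon)\le C\,C_\epsilon|\phi-\phi_0|_2+\epsilon,
\]
with the first term now using the bounded marginal density legitimately. So $\epsilon$ is the tail of $|g_t|_2$, not a slack from partitioning the $\phi$-space or from density continuity. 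Once you make this correction, the rest of your argument (including the $o_P(1)$ conclusion via consistency of $\widehat\phi$ and arbitrariness of $\epsilon$) goes through exactly as you wrote.
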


      \begin{proof}
      For any $\epsilon>0$, there is $C_1$, so that $\mathbb P(|g_t|_2>C_1)<\epsilon.$
Note that  for any deterministic $\phi$,
  \begin{eqnarray*}
&&|\mathds R(\alpha,  \phi)
-\mathds R(\alpha, \phi_0)|\leq |\alpha-\alpha_0|_2^2 \mathbb E|x_t|_2^21\{|g_t'\phi_0|< |g_t|_2 |\phi-\phi_0|_2\}\cr
&\leq&|\alpha-\alpha_0|_2^2 \mathbb P^{1/2}(|g_t'\phi_0|< |g_t|_2 |\phi-\phi_0|_2)
 (\mathbb E|x_t|_2^4)^{1/2}
\cr
&\leq& C|\alpha-\alpha_0|_2^2[ \mathbb P (|g_t'\phi_0|< C_\epsilon|\phi-\phi_0|_2)
+ \mathbb P (|g_t|_2>C_1 )
]^{1/2}\cr
&\leq& C|\alpha-\alpha_0|_2^2[ C_\epsilon  |\phi-\phi_0|_2+\epsilon
]^{1/2}. 
\end{eqnarray*}
Now let $\phi=\widehat\phi$,  and the consistency implies $|\widehat\phi-\phi_0|_2=o_P(1)$. Thus
 \begin{eqnarray*}
&&|\mathds R(\alpha,  \phi)
-\mathds R(\alpha, \phi_0)|\leq
 C|\alpha-\alpha_0|_2^2[ C_\epsilon  o_P(1)+\epsilon
]^{1/2}. 
\end{eqnarray*}
Since $\epsilon>0$ is arbitrary, we have the desired result.
\end{proof}

\begin{lem}\label{l.1} For an arbitrarily small $\eta>0,$ uniformly in $\phi$,
$$|  \mathds  G_2(  \phi)  )|
\leq  b_{NT} T^{-\varphi}, \quad |\mathds C_{1}\left( \delta _{0},\phi
\right) |\leq  b_{NT}.$$
If in addition,  $\sqrt{N}=O(T^{1-2\varphi})$, then
$$|    \mathds  G_2(   \phi  )|
\leq  a_{NT} T^{-\varphi}, \quad |\mathds C_{1}\left( \delta _{0},\phi
\right) |\leq  a_{NT}.$$
where
\begin{eqnarray*}
a_{NT}&=&T^{-2\varphi}
O_P\left( \frac{\sqrt{N}}{\left( NT^{1-2\varphi }\right) ^{2/3}}\right)
+T^{-2\varphi} \eta \left\vert \phi -\phi
_{0}\right\vert _{2}^{2}\sqrt{N}
\\
b_{NT}&=&O_P(\frac{1}{T}) +\eta T^{-2\varphi }\left\vert \phi -\phi
_{0}\right\vert _{2}.
\end{eqnarray*}
\end{lem}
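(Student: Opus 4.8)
The goal is to control the two ``remainder-type'' quantities $\mathds G_2(\phi)$ and $\mathds C_1(\delta_0,\phi)$ by maximal inequalities, obtaining the coarse bound $b_{NT}$ in general and the refined bound $a_{NT}$ under $\sqrt N = O(T^{1-2\varphi})$. The approach is to treat each of the three pieces appearing in $\mathds G_2$ and $\mathds C_1$, namely the centered versions of
$\mathds R_2(\phi) = T^{-1}\sum_t (x_t'\delta_0)^2 |1\{\breve g_t'\phi>0\}-1\{\breve g_t'\phi_0>0\}|$, of $\mathds C_3(\delta_0,\phi)$, and the martingale sum $\mathds C_1(\delta_0,\phi) = 2T^{-1}\sum_t \varepsilon_t x_t'\delta_0(1\{\breve g_t'\phi>0\}-1\{\breve g_t'\phi_0>0\})$, as empirical processes indexed by $\phi$ over a shrinking neighborhood of $\phi_0$ (permissible by the consistency in Proposition \ref{Pp:consistency}). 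The key structural observation is that the indicator increment $|1\{\breve g_t'\phi>0\}-1\{\breve g_t'\phi_0>0\}|$ is supported on the event $B_t(\phi):=\{\breve g_t'\phi_0 \le 0 < \breve g_t'\phi\}\cup\{\breve g_t'\phi\le 0<\breve g_t'\phi_0\}$, whose conditional probability given $(x_t,g_t,h_t)$ is, by the density bounds in Assumption \ref{as8}\ref{as8:itm6} and \ref{as8:itm4}, of order $\min\{|\phi-\phi_0|_2,\, \text{(something)}\}$; one keeps explicit track of the factor $T^{-\varphi}$ coming from $\delta_0 = d_0 T^{-\varphi}$ in every term except $\mathds C_1$, which carries no such factor.

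The concrete steps, in order, are: (1) peel off, using Proposition \ref{prop:f} and Lemma \ref{l:relf}, the difference between $\breve g_t$-functionals and $\widetilde f_t$-functionals, so that we work with the $\breve g_t = g_t + h_t/\sqrt N$ parametrization directly --- this is already essentially done in Lemma \ref{l:rhat}. (2) Establish the coarse bound. For $\mathds C_1$, apply the conditional martingale-difference maximal inequality (as in Lemma \ref{Lem:modul1} / the proof of Theorem \ref{asdist-alpha-gamma}) to the class $\{\varepsilon_t x_t'd_0 (1\{\breve g_t'\phi>0\}-1\{\breve g_t'\phi_0>0\}): |\phi-\phi_0|\le \text{const}\}$, which is a bounded-variation VC-subgraph class; this yields $|\mathds C_1(\delta_0,\phi)| = O_P(T^{-1}) + \eta T^{-2\varphi}|\phi-\phi_0|_2$ after splitting on whether $|\phi-\phi_0|_2 \lessgtr T^{-1+2\varphi}$, exactly mirroring the three-regime bound of Lemma \ref{A-rates:lemma-used}. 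For $\mathds G_2$, the same VC maximal inequality applied to the (non-martingale, but $\rho$-mixing) centered sums of $H_{2,t}$- and $H_{3,t}$-type terms, now carrying the extra $T^{-\varphi}$, gives $|\mathds G_2(\phi)| \le b_{NT}T^{-\varphi}$. (3) Establish the refined bound under $\sqrt N = O(T^{1-2\varphi})$. Here the point is that when $N$ is not too large, the perturbation $h_t/\sqrt N$ is non-negligible and, by Assumption \ref{as5}, behaves like a Gaussian smoothing of the indicator; the conditional probability of $B_t(\phi)$ given $(x_t,g_t)$ but \emph{not} $h_t$ is then of order $\min\{|\phi-\phi_0|_2,\ |\phi-\phi_0|_2^2\sqrt N\}$ near the origin (the $\sqrt N$ amplification is what produces the quadratic term). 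One reruns the maximal inequality with this sharper envelope: the bracketing/VC entropy integral, combined with the variance bound $\lesssim T^{-2\varphi}(|\phi-\phi_0|_2^2\sqrt N \wedge |\phi-\phi_0|_2)$, yields a modulus of continuity giving the $(NT^{1-2\varphi})^{-2/3}$-type rate after the usual peeling argument, i.e. $|\mathds G_2(\phi)|\le a_{NT}T^{-\varphi}$ and $|\mathds C_1(\delta_0,\phi)|\le a_{NT}$.

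The main obstacle is step (3): getting the sharp refined envelope requires a delicate uniform (in $\phi$ near $\phi_0$) version of the Gaussian-smoothing estimate in Assumption \ref{as5}, i.e. controlling $\mathbb E[(x_t'd_0)^2 \mathbb P(B_t(\phi)\mid x_t,g_t)]$ both from above with the right $\min$ structure and ensuring the $O(N^{-5/6})$-type error in the implied expansion is uniform. This entails combining the Edgeworth/Berry-Esseen control of the law of $h_t'\phi$ (Assumption \ref{as5}\ref{as5:itm1}) with the density regularity of $g_t'\phi$ (Assumption \ref{as8}), and then feeding the resulting variance bound into a chaining argument that must respect the two-regime ($|\phi-\phi_0|$ small vs. moderate) behavior --- the bookkeeping of the three different rates ($T^{-1}$, the cross term, and $|\phi-\phi_0|^2\sqrt N$) through the peeling device is where care is needed. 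Once Lemma \ref{l.1} is in hand, it plugs directly into Step IV--V of the roadmap via $\mathds G_1(\widehat\phi) \le \mathds C(\widehat\alpha,\widehat\phi)$ together with the lower bounds of Lemmas \ref{l.2} and \ref{l:generic} to deliver the convergence rate of Theorem \ref{thm:rate}.
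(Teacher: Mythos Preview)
Your Steps 1--2 for the coarse bound $b_{NT}$ are essentially the paper's approach: one applies the peeling-based maximal inequality for the centered indicator empirical process (Lemma \ref{Lem:rate_gm}, built on Lemma \ref{Lem:modul1}) directly to each of $\mathds R_2-\mathbb E\mathds R_2$, $\mathds C_3-\mathbb E\mathds C_3$ and $\mathds C_1$, pulling out the $T^{-\varphi}$ from $\delta_0$.

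Step 3, however, misidentifies the mechanism behind the refined bound $a_{NT}$. You propose to exploit the Gaussian smoothing of the indicator by $h_t/\sqrt N$ (Assumption \ref{as5}) to obtain a sharper envelope $\mathbb P(B_t(\phi)\mid x_t,g_t)\lesssim |\phi-\phi_0|_2^2\sqrt N$, and then rerun the chaining with this quadratic variance bound. This is not what happens, and Assumption \ref{as5} is not used at all for Lemma \ref{l.1}. The paper obtains $a_{NT}$ from the \emph{same} maximal inequality (Lemma \ref{Lem:modul1}, same envelope $\mathbb P(B_t)\lesssim |\phi-\phi_0|_2$) but with a different peeling device: instead of linear shells $\{(j-1)T^{-1+2\varphi}\le |\gamma-\gamma_0|<jT^{-1+2\varphi}\}$ one uses quadratic shells $\{(j-1)\le (NT^{1-2\varphi})^{2/3}|\gamma-\gamma_0|^2<j\}$; this is Lemma \ref{Lem:rate_gm_1}. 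The $\sqrt N$ and $|\phi-\phi_0|^2$ in $a_{NT}$ arise purely from renormalizing $\mathds C_1=T^{-2\varphi}\sqrt N\cdot\bigl(\frac{1}{\sqrt N T^{1-\varphi}}\sum_t\varepsilon_tx_t'd_0(\cdots)\bigr)$ to match the scaling in that lemma, not from any smoothing of the envelope. So the ``main obstacle'' you flag --- a delicate uniform Gaussian-smoothing estimate with $N^{-5/6}$-type errors --- simply does not arise here; the refined bound is as cheap as the coarse one.

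Where Gaussian smoothing via Assumption \ref{as5} \emph{is} needed is in the companion \emph{lower} bound on the drift $\mathds G_1$ (Lemma \ref{l:generic}), not in the present upper bounds on the stochastic remainders $\mathds G_2,\mathds C_1$. You have conflated the two mechanisms.
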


\begin{proof} Let $z_t=T^{2\varphi}2(x_{t}^{\prime }\delta _{0} )^2
\left( 1\{\breve g_t'\phi_0>0\} -1(g_t'\phi_0>0)\right)$. By  Lemma \ref{Lem:rate_gm}, we have the following  bound:
  \begin{eqnarray*}
| \mathds C_{3}( \delta_0,  \phi)  -  \mathbb E \mathds C_{3}( \delta_0,  \phi)  |&= & T^{-\varphi} | \frac{1}{T^{1+\varphi}}\sum_{t=1}^{T} [z_t \left( 1\{ \breve g_t'\phi>0\} -1\{ \breve g_t'\phi_0>0\}\right)\cr
&&- \mathbb E  z_t \left( 1\{ \breve g_t'\phi>0\} -1\{ \breve g_t'\phi_0>0\} \right)] |\cr
&\leq& O_P(\frac{1}{T^{1+\varphi}}) +\eta T^{-3\varphi }\left\vert \phi-\phi_0\right\vert _{2}.
\end{eqnarray*}

In addition, by Lemma \ref{Lem:rate_gm_1},   when $\sqrt{N}=O(T^{1-2\varphi})$ we have the other upper bound:
 \begin{eqnarray*}
| \mathds C_{3}( \delta_0,  \phi)  -  \mathbb E \mathds C_{3}( \delta_0,  \phi)  |&=& T^{-3\varphi} | \frac{1}{T^{1-\varphi}}\sum_{t=1}^{T} [z_t \left( 1\{ \breve g_t'\phi>0\} -1\{ \breve g_t'\phi_0>0\}\right)\cr
&&- \mathbb E  z_t \left( 1\{ \breve g_t'\phi>0\} -1\{ \breve g_t'\phi_0>0\} \right)] |\cr
&\leq& T^{-3\varphi}
O_P\left( \frac{\sqrt{N}}{\left( NT^{1-2\varphi }\right) ^{2/3}}\right)
+T^{-3\varphi} \eta \left\vert \phi -\phi
_{0}\right\vert _{2}^{2}\sqrt{N}
\end{eqnarray*}

Similarly, the same upper bound applies to
$|  \mathds R_{2}(  \phi) -  \mathbb E\mathds R_{2}(  \phi)|$.

 Furthermore, note that for any $\eta >0$%
\begin{eqnarray*}
\mathds C_{1}\left( \delta _{0},\phi \right)  &\leq&\big\vert
\frac{2}{T}\sum_{t=1}^{T}\varepsilon _{t}x_{t}^{\prime }\delta _{0}\big(
1\left\{ \breve g_t'\phi _{0}\leq 0< \breve g_t'\phi \right\}
-1\left\{ \breve g_t'\phi \leq 0<\breve g_t'\phi _{0}\right\}
\big) \big\vert \\
&\leq &O_P\left( T^{-1}\right) +\eta T^{-2\varphi }\left\vert\phi
-\phi _{0}\right\vert_2
\end{eqnarray*}%
due to Lemma \ref{Lem:rate_gm} and that when  $\sqrt{N}=O(T^{1-2\varphi})$
\begin{equation}
\mathds C_{1}\left( \delta _{0},\phi
\right) \leq T^{-2\varphi }O_P\left( \frac{\sqrt{N}}{\left( NT^{1-2\varphi }\right) ^{2/3}}%
\right) +\eta T^{-2\varphi }\sqrt{N} \left\vert \phi -\phi _{0}\right\vert ^{2},
\label{eq:C1-2}
\end{equation}%
due to Lemma \ref{Lem:rate_gm_1}.
\end{proof}

Lemma \ref{l.2} below  holds regardless of whether $N^{1/2}<T^{1-2\varphi}$ or not, but is crude when $N^{1/2}=o(T^{1-2\varphi})$.  When $N^{1/2}=o(T^{1-2\varphi})$, a sharper bound is given in Lemma \ref{l:generic}.
\begin{lem}  \label{l.2}
Suppose the conditional density of $f_t'\gamma_0$ given $(x_t, h_t)$ is bounded away from above almost surely. Then
  there is a constant $C,c>0$ that do not depend on $\phi$,   \begin{eqnarray*}
 \mathds G_1(\phi)
&\geq&cT^{-2\varphi} |\phi-\phi_0|_2  -  \frac{C}{\sqrt{N}T^{2\varphi}} .
\end{eqnarray*}%
\end{lem}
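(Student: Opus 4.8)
The quantity $\mathds G_1(\phi) = \mathbb E\mathds R_2(\phi) + \mathbb E\mathds C_3(\delta_0,\phi)$ is the non-stochastic ``drift'' part of the criterion-function decomposition, and the goal is a \emph{kink} lower bound $cT^{-2\varphi}|\phi-\phi_0|_2 - C/(\sqrt N T^{2\varphi})$. The plan is to rewrite $\mathds R_2(\phi)+\mathds C_3(\delta_0,\phi)$ as a perfect square plus a negligible cross term, take expectations, and then produce the linear-in-$|\phi-\phi_0|_2$ lower bound from the probability that the perturbed indicator $1\{\breve g_t'\phi>0\}$ differs from $1\{\breve g_t'\phi_0>0\}$, exactly as in the known-factor Step~1 of Lemma \ref{rates-thm}, but carrying the $h_t/\sqrt N$ perturbation explicitly.

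First I would recall from the definitions that, writing $1^\circ_t = 1\{g_t'\phi_0>0\}$ and $\breve 1_t(\phi) = 1\{\breve g_t'\phi>0\}$,
\[
\mathds R_2(\phi) = \frac1T\sum_t (x_t'\delta_0)^2\big|\breve 1_t(\phi) - \breve 1_t(\phi_0)\big|,
\qquad
\mathds C_3(\delta_0,\phi) = \frac2T\sum_t (x_t'\delta_0)^2\big(\breve 1_t(\phi_0)-1^\circ_t\big)\big(\breve 1_t(\phi)-\breve 1_t(\phi_0)\big),
\]
so that, using $|a| + 2b\,a = (a+b)^2 - b^2$ with $a = \breve 1_t(\phi)-\breve 1_t(\phi_0)\in\{-1,0,1\}$ (so $|a| = a^2$) and $b = \breve 1_t(\phi_0)-1^\circ_t$,
\[
\mathds R_2(\phi)+\mathds C_3(\delta_0,\phi) = \frac1T\sum_t (x_t'\delta_0)^2\Big[\big(\breve 1_t(\phi)-1^\circ_t\big)^2 - \big(\breve 1_t(\phi_0)-1^\circ_t\big)^2\Big].
\]
Taking expectations, $\mathds G_1(\phi) = \mathbb E(x_t'\delta_0)^2(\breve 1_t(\phi)-1^\circ_t)^2 - \mathbb E(x_t'\delta_0)^2(\breve 1_t(\phi_0)-1^\circ_t)^2$. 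The second term is of order $N^{-1/2}T^{-2\varphi}$: indeed $|\breve 1_t(\phi_0)-1^\circ_t| = 1\{g_t'\phi_0$ and $\breve g_t'\phi_0$ have opposite signs$\} \le 1\{|g_t'\phi_0| \le |h_t'\phi_0|/\sqrt N\}$, and by the conditional-density-bounded-above hypothesis together with $\mathbb E[(x_t'\delta_0)^2\mid g_t,h_t]\le M_0 T^{-2\varphi}$ (Assumption \ref{as8}\ref{as8:itm7}, recalling $\delta_0 = d_0 T^{-\varphi}$) and $\mathbb E|h_t'\phi_0|<\infty$, this is $\le C N^{-1/2}T^{-2\varphi}$; this is the source of the subtracted error term.

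For the first (main positive) term I would condition on $(x_t,h_t)$ and bound $\mathbb E(x_t'\delta_0)^2(\breve 1_t(\phi)-1^\circ_t)^2 \ge \mathbb E(x_t'\delta_0)^2(\breve 1_t(\phi)-\breve 1_t(\phi_0))^2 - 2\mathbb E(x_t'\delta_0)^2|\breve 1_t(\phi_0)-1^\circ_t|$, where the last term is again $\le CN^{-1/2}T^{-2\varphi}$ by the same argument. It remains to bound below $\mathbb E(x_t'\delta_0)^2|\breve 1_t(\phi)-\breve 1_t(\phi_0)| = T^{-2\varphi}\mathbb E(x_t'd_0)^2 \, 1\{\breve g_t'\phi$ and $\breve g_t'\phi_0$ have opposite signs$\}$. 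Decompose $\breve g_t = (\breve f_{1t},\breve f_{2t})$ (after applying $H_T$, equivalently work with $\widehat f_t$ and parameters $\gamma = H_T^{-1}\phi$, $\gamma=(1,\gamma_2)$), so that $\breve g_t'\phi - \breve g_t'\phi_0$ depends on $\breve g_t$ only through $\widehat f_{2t}$; then condition on $(\widehat f_{2t},x_t)$ and integrate out $\widehat f_t'\gamma_0 = \breve g_t'\phi_0$, whose conditional density is bounded below by $c_0$ near zero (Assumption \ref{as8}\ref{as8:itm1}), to obtain $\mathbb E(x_t'd_0)^2\,1\{\cdots\} \ge c_1\,\mathbb E[|\widehat f_{2t}'(\gamma_2-\gamma_{02})|\,1\{|\widehat f_{2t}|_2\le M_0\}\,\mathbb E[(x_t'd_0)^2\mid\widehat f_{2t},x_t]]$; finally Assumption \ref{as8}\ref{as8:itm7} and the rank condition \ref{as8}\ref{as8:itm3} (with $k=1$) give a lower bound $\ge c_2|\gamma-\gamma_0|_2 \ge c_3|\phi-\phi_0|_2$ since $H_T$ and $H_T^{-1}$ have bounded spectral norms with probability approaching one. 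Collecting the pieces yields $\mathds G_1(\phi)\ge cT^{-2\varphi}|\phi-\phi_0|_2 - CN^{-1/2}T^{-2\varphi}$, with constants uniform in $\phi$ in a neighborhood of $\phi_0$ (which suffices by the consistency in Proposition \ref{Pp:consistency}).

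The main obstacle is the degeneracy issue flagged in Online Appendix \ref{sec:e:discuss}: one cannot naively condition on $\breve g_t'(\phi-\phi_0)$ and then ask for the density of $\breve g_t'\phi_0$, since no degrees of freedom may remain. The resolution — reparametrizing through $\gamma = H_T^{-1}\phi = (1,\gamma_2)$ so that the perturbation direction lives in the $\widehat f_{2t}$-coordinates only, and then using the conditional density of $\widehat f_t'\gamma_0$ given $(\widehat f_{2t},x_t)$, which is genuinely non-degenerate by Assumption \ref{as8}\ref{as8:itm1} — is exactly what makes the argument go through, and is the one step that requires care rather than routine computation. The handling of the extra perturbation-induced events $\{\breve g_t'\phi\le 0<\breve g_t'\phi_0\}$ etc.\ uses the upper density bound $\sup_{|u|<c}p_{g_t'r|x_t,h_t}(u)\le M$ (Assumption \ref{as8}\ref{as8:itm4}, and the hypothesis of this lemma), and is otherwise mechanical.
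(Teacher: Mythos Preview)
Your approach is correct and reaches the same destination as the paper, but the paper's route is more direct. Rather than using the perfect-square identity $|a|+2ba=(a+b)^2-b^2$ and then partially undoing it, the paper simply bounds $|\mathbb E\mathds C_3(\delta_0,\phi)| \le CT^{-2\varphi}N^{-1/2}$ in one line (exactly your ``second term'' estimate) and then lower-bounds $\mathbb E\mathds R_2(\phi)$ directly; your detour through $(\breve 1_t(\phi)-1^\circ_t)^2$ followed by the inequality $(u+v)^2\ge u^2-2|v|$ just recovers the same two pieces with a slightly worse constant. From that point on the two arguments coincide: the reparametrization $\phi\mapsto\gamma=(1,\gamma_2)$, the conditional-density lower bound for $\widehat f_t'\gamma_0$ given $\widehat f_{2t}$ (Assumption~\ref{as8}\ref{as8:itm1}), and the rank condition with $k=1$.

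One step you wave through that the paper handles explicitly: Assumption~\ref{as8}\ref{as8:itm3} is stated for $f_{2t}$, not $\widehat f_{2t}$, so the claim $\inf_{|r|=1}\mathbb E\big[|\widehat f_{2t}'r|1\{|\widehat f_{2t}|\le M\}\big]>0$ requires verification. The paper does this by bounding the difference from the corresponding $f_{2t}$-quantity, controlling both $\mathbb E|\widehat f_t-f_t|_2=O(N^{-1/2})$ and the measure of the shell $\{M-|h_t|/\sqrt N<|f_{2t}|<M+|h_t|/\sqrt N\}$ via the bounded conditional density $p_{f_{2t}|h_t}$ (Assumption~\ref{as8}\ref{as8:itm1}). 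This is routine but not entirely trivial, and is worth including explicitly.
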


\begin{proof}    First,
\begin{eqnarray*}%
 | \mathbb E\mathds C_3(\delta_0,\phi)|
&\leq&  \mathbb E (x_{t}^{\prime }\delta _{0})^2
\left| 1\{ \breve g_t'\phi_0>0\} -1\{g_t'\phi_0>0\}\right|   \leq CT^{-2\varphi} \frac{1}{\sqrt{N}}.
\end{eqnarray*}%
Next, we lower bound $ \mathbb E\mathds{R}_{2}\left(
\phi \right)= \mathbb E \left( x_{t}^{\prime
}\delta _{0}\right) ^{2}\left\vert 1\left\{ \breve g_t'\phi
>0\right\} -1\left\{\breve g_t'\phi_0>0\right\} \right\vert$. The proof is similar to Step 1 of Proof of Lemma \ref{rates-thm}]. We  show that there exists a constant $c>0$ and a neighborhood of $%
\phi _{0}$ such that for all $\phi $ in the neighborhood
\[
G\left( \gamma \right) =\mathbb{E}\left\vert 1 \{\breve g_t'\phi>0\}
-1\{\breve g_t'\phi_0>0\} \right\vert \geq c\left\vert \phi -\phi
_{0}\right\vert _{2}.
\]%
Note that   the first element of $%
\left( \gamma -\gamma _{0}\right) $ is zero due to the normalization. Then,
\[
G\left( \gamma \right) =\mathbb{P}\left\{ -\widehat f_{2t}^{\prime }\left( \gamma
_{2}-\gamma _{20}\right) \leq\breve g_t'\phi_0<0\right\} +\mathbb{P}\left\{ 0<\breve g_t'\phi_0\leq
-\widehat f_{2t}^{\prime }\left( \gamma _{2}-\gamma _{20}\right) \right\} .
\]%
Since the conditional density of $\breve g_t'\phi_0$ given $\widehat f_{2t}$   is bounded away from zero and
continuous in a sufficiently small open
neighborhood $\epsilon $ of zero, we can find   $c_{1}>0$ so that
\[
\mathbb{P}\left\{ -\widehat f_{2t}^{\prime }\left( \gamma _{2}-\gamma _{20}\right)
\leq \breve g_t'\phi_0<0\right\} \geq c_{1}\mathbb{E}\left(\widehat  f_{2t}^{\prime }\left(
\gamma _{2}-\gamma _{20}\right) 1\left\{\widehat f_{2t}^{\prime }\left( \gamma
_{2}-\gamma _{20}\right) >0\right\} 1\left\{ \left\vert \widehat f_{2t}^{\prime
}\right\vert \leq M\right\} \right) ,
\]%
where $M$ satisfies that $  \left\vert \gamma -\gamma _{0}\right\vert
_{2}M<\epsilon $. This is always feasible because we can make $%
 \left\vert \gamma -\gamma _{0}\right\vert _{2}$ as small as necessary
due to the consistency of $\widehat{\gamma}$. Similarly,
\[
\mathbb{P}\left\{ 0<\breve g_t'\phi_0\leq - \widehat   f_{2t}^{\prime }\left( \gamma _{2}-\gamma
_{20}\right) \right\} \geq c_{1}\mathbb{E}\left( -  \widehat   f_{2t}^{\prime }\left(
\gamma _{2}-\gamma _{20}\right) 1\left\{  \widehat  f_{2t}^{\prime }\left( \gamma
_{2}-\gamma _{20}\right) <0\right\} 1\left\{ \left\vert  \widehat  f_{2t}^{\prime
}\right\vert \leq M\right\} \right) .
\]%
Thus,
\[
G\left( \gamma \right) \geq c_{1}\mathbb{E}\left( \left\vert \widehat f_{2t}^{\prime
}\left( \gamma _{2}-\gamma _{20}\right) \right\vert 1\left\{ \left\vert
 \widehat  f_{2t} \right\vert \leq M\right\} \right) \geq c_{2}\left\vert
\gamma -\gamma _{0}\right\vert _{2}
\]%
for some $c_{2}>0$ because
\[
\inf_{\left\vert r\right\vert =1}\mathbb{E}\left( \left\vert \widehat   f_{2t}^{\prime
}r\right\vert 1\left\{ \left\vert  \widehat f_{2t} \right\vert \leq M\right\}
\right) >0
\]%
for some $M<\infty $. The last inequality $\inf_{\left\vert r\right\vert =1}\mathbb{E}\left( \left\vert \widehat   f_{2t}^{\prime
}r\right\vert 1\left\{ \left\vert  \widehat f_{2t} \right\vert \leq M\right\}
\right) >0$ follows since
 \begin{align*}
 &\inf_{\left\vert r\right\vert =1}\mathbb{E}\left( \left\vert \widehat   f_{2t}^{\prime
}r\right\vert 1\left\{ \left\vert  \widehat f_{2t} \right\vert \leq M\right\}
\right)\cr
& \geq \inf_{\left\vert r\right\vert =1}\mathbb{E}\left( \left\vert    f_{2t}^{\prime
}r\right\vert 1\left\{ \left\vert   f_{2t} \right\vert \leq M\right\}
\right) - \mathbb E|\widehat f_t-f_t|_2-\mathbb E|f_t|_21\{M-\frac{|h_t|_2}{\sqrt{N}}<|f_t|_2< M+\frac{|h_t|_2}{\sqrt{N}}\} \cr
&\geq c -O(N^{-1/8}) -\mathbb E|f_t|_21\{M-\frac{|h_t|_2}{\sqrt{N}}<|f_t|_2< M+\frac{|h_t|_2}{\sqrt{N}}\} 1\{|h_t|_2<MN^{1/4}\}\cr
&\geq c/2 - c\left[\sup_{|f|<2M, h_t}p_{f_{2t}|h_t}(f) \mathbb E\mu\left( f\in\mathbb R^{\dim(f_{2t})}: M-\frac{|h_t|_2}{\sqrt{N}}<|f|_2< M+\frac{|h_t|_2}{\sqrt{N}}   \right)  1\{ |h_t|_2 <MN^{1/4}\} \right]^{1/2}\cr
&\geq c/2 -c\left[ \mathbb E \left( (M+\frac{|h_t|_2}{\sqrt{N}})^{\dim(f_{2t})}-(M-\frac{|h_t|_2}{\sqrt{N}})^{\dim(f_{2t})}  \right)  1\{ |h_t|_2 <MN^{1/4}\}\right]^{1/2}\geq c/4.
 \end{align*}
where $\mu( A)$ denotes the Lebesgue measure of the set $A$; here $A$ is the difference of two balls in $\mathbb R^{\dim(f_{2t})}$. Here the second inequality follows from: $ \mathbb E|\widehat f_t-f_t|_2=O(N^{-1/2})$, and write
$a_t:=|f_t|_21\{M-\frac{|h_t|_2}{\sqrt{N}}<|f_t|_2< M+\frac{|h_t|_2}{\sqrt{N}}\} .$
\begin{align*}
 \mathbb Ea_t&\leq  \mathbb Ea_t1\{|h_t|_2<MN^{1/4}\}+ (\mathbb Ea_t^2)^{1/2}\mathbb P(|h_t|_2>MN^{1/4})^{1/2}\cr
 &\leq \mathbb Ea_t1\{|h_t|_2<MN^{1/4}\}+(\mathbb E|f_t|_2^2)^{1/2}(\frac{\mathbb E|h_t|_2}{MN^{1/4}})^{1/2}\cr
 &\leq \mathbb Ea_t1\{|h_t|_2<MN^{1/4}\}+O(N^{-1/8}).
    \end{align*}

\end{proof}

\begin{pro}[Preliminary Rate of convergence]\label{p:pre} Suppose  $T^{ 2\varphi}\log^\kappa T=O(N)$ for any $\kappa>0.$ For $\widehat\phi=H_T\widehat\gamma$,
 $$
 |\widehat\alpha-\alpha_0|_2=O_P(T^{-1/2}+N^{-1/4}T^{-\varphi}),\quad
 |\widehat\phi-\phi_0|_2=O_P(T^{-(1-2\varphi)}+ N^{-1/2}).
 $$
\end{pro}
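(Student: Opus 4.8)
## Proof Proposal for Proposition \ref{p:pre}

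The plan is to combine the inequality $\widetilde{\mathbb S}_T(\widehat\alpha,\widehat\gamma)-\widetilde{\mathbb S}_T(\alpha_0,\gamma_0)\le 0$ with the decomposition from Section \ref{sec:def:not}, the reduction to $\widehat f_t$ via Lemma \ref{l:relf} (valid because $T=O(N)$), the reparametrization $\phi=H_T\gamma$, and the uniform bounds in Lemmas \ref{l:rhat}, \ref{l:rt}, \ref{l.1}, \ref{l.2}. First I would write, following the argument in the proof of Proposition \ref{Pp:consistency}, that for $\widehat\phi=H_T\widehat\gamma$,
\begin{align*}
0 &\ge \widetilde{\mathbb S}_T(\widehat\alpha,\widehat\gamma)-\widetilde{\mathbb S}_T(\alpha_0,\gamma_0)\\
&= \mathbb{\widetilde R}_T(\widehat\alpha,\widehat\gamma) - \big(\mathbb{\widetilde G}_T(\widehat\alpha,\widehat\gamma)-\mathbb{\widetilde G}_T(\alpha_0,\gamma_0)\big).
\end{align*}
Using Lemma \ref{l:relf}(iv) to replace $\widetilde R_j$ by $\widehat R_j=\mathds R_j$ up to $(\Delta_f+T^{-6})(T^{-2\varphi}+|\widehat\alpha-\alpha_0|_2^2)$, and using $T^{2\varphi}\log^\kappa T=O(N)$ so that $\Delta_f=o(T^{-2\varphi}/\sqrt N)$ is negligible at the scales below, the right side becomes $\mathds R(\widehat\alpha,\widehat\phi)+\mathds G_1(\widehat\phi)-\mathds C(\widehat\alpha,\widehat\phi)$ plus negligible remainder, where $\mathds C$ collects $\mathds C_1,\mathds C_3$ (the terms involving $\delta$) and $\mathds C_2,\mathds C_4$ (via Lemma \ref{l:relf}(ii)(iii)). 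By Assumption \ref{as9} (rank condition on $\mathbb E\mathds Z_t(\phi)\mathds Z_t(\phi)'$) together with Lemma \ref{l:rt}, one has $\mathds R(\widehat\alpha,\widehat\phi)\ge (c-o_P(1))|\widehat\alpha-\alpha_0|_2^2$, and by Lemma \ref{l.2}, $\mathds G_1(\widehat\phi)\ge cT^{-2\varphi}|\widehat\phi-\phi_0|_2 - C N^{-1/2}T^{-2\varphi}$.

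Next I would bound $|\mathds C(\widehat\alpha,\widehat\phi)|$ from above. For the $\delta$-terms, Lemma \ref{l.1} gives $|\mathds C_1(\delta_0,\widehat\phi)|\le O_P(T^{-1})+\eta T^{-2\varphi}|\widehat\phi-\phi_0|_2$, and $\mathds G_2(\widehat\phi)\le \eta T^{-\varphi}\cdot\big(O_P(T^{-1})+\eta T^{-2\varphi}|\widehat\phi-\phi_0|\big)$-type bounds; Lemma \ref{l:rhat}(iii)(iv) handle $\mathds C_1(\widehat\delta,\widehat\phi)-\mathds C_1(\delta_0,\widehat\phi)$ and $\mathds C_3(\widehat\delta,\widehat\phi)-\mathds C_3(\delta_0,\widehat\phi)$, each of order $\le T^{-\varphi}|\widehat\delta-\delta_0|_2\cdot\big(O_P(N^{-1/2})+O_P(T^{-1})+\eta T^{-2\varphi}|\widehat\phi-\phi_0|\big)$. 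For the $\alpha$-linear terms $\mathds C_2,\mathds C_4$: Lemma \ref{l:relf}(ii) gives $|\widetilde{\mathbb C}_2(\widehat\alpha)|\le O_P(T^{-1/2}+\Delta_f)|\widehat\alpha-\alpha_0|_2$ and Lemma \ref{l:relf}(iii) gives $|\widetilde{\mathbb C}_4(\widehat\alpha)|\le O_P(\Delta_f+N^{-1/2})T^{-\varphi}|\widehat\alpha-\alpha_0|_2$. Collecting everything, and absorbing $\eta$ (chosen small relative to $c$) and the $o_P(1)$ factors, I obtain an inequality of the schematic form
\begin{align*}
c|\widehat\alpha-\alpha_0|_2^2 + cT^{-2\varphi}|\widehat\phi-\phi_0|_2
&\le O_P\!\big(T^{-1/2}+N^{-1/4}T^{-\varphi}\big)|\widehat\alpha-\alpha_0|_2 \\
&\quad + O_P\!\big(T^{-1}+N^{-1/2}T^{-2\varphi}\big),
\end{align*}
where the $N^{-1/4}T^{-\varphi}$ contribution to the $|\widehat\alpha-\alpha_0|_2$ coefficient comes from bounding $T^{-\varphi}|\widehat\delta-\delta_0|_2\cdot O_P(N^{-1/2})$ by $O_P(N^{-1/4}T^{-\varphi})|\widehat\alpha-\alpha_0|_2$ after a Young-type split, and the $N^{-1/2}T^{-2\varphi}$ on the right comes from the crude lower-bound error in Lemma \ref{l.2} together with the $\mathbb E\mathds C_3$ and $\mathds G_2$ estimates.

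From this quadratic inequality the conclusion follows by the standard self-improving argument: first drop the $cT^{-2\varphi}|\widehat\phi-\phi_0|_2$ term and solve the scalar quadratic in $|\widehat\alpha-\alpha_0|_2$ to get $|\widehat\alpha-\alpha_0|_2=O_P(T^{-1/2}+N^{-1/4}T^{-\varphi})$; then feed this back to obtain $cT^{-2\varphi}|\widehat\phi-\phi_0|_2 = O_P(T^{-1}+N^{-1/2}T^{-2\varphi}) + O_P(T^{-1/2}+N^{-1/4}T^{-\varphi})^2$, i.e. $|\widehat\phi-\phi_0|_2=O_P(T^{-(1-2\varphi)}+N^{-1/2})$, using $(N^{-1/4}T^{-\varphi})^2\cdot T^{2\varphi}=N^{-1/2}$. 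I expect the main obstacle to be the bookkeeping of the error rates rather than any single hard estimate: one must verify that each appearance of $\eta$ can genuinely be taken arbitrarily small uniformly (which is what Lemmas \ref{l:rhat}(iii), \ref{l.1}, \ref{l.2} are designed to supply) and that the condition $T^{2\varphi}\log^\kappa T=O(N)$ is exactly what makes the PCA-approximation error $\Delta_f$ (hence the $T^{-6}$ and $\log$ factors) drop out at every scale — in particular, that $\Delta_f T^{-\varphi}=o(T^{-2\varphi}N^{-1/2}+T^{-1})$, which is where the logarithmic slack in the hypothesis is consumed. A secondary subtlety is that all the "uniform in $\phi$" statements are needed only on a shrinking neighborhood of $\phi_0$, which is justified a posteriori by the consistency already established in Proposition \ref{Pp:consistency}; I would note this explicitly at the start of the argument.
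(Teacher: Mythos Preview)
Your proposal is correct and follows essentially the same route as the paper's proof: the same decomposition, the same lemmas (\ref{l:relf}, \ref{l:rhat}, \ref{l:rt}, \ref{l.1}, \ref{l.2}), and the same quadratic-inequality endgame to read off the two rates. The one term you do not explicitly track is the cross term $C_1T^{-\varphi}|\widehat\phi-\phi_0|_2\,|\widehat\alpha-\alpha_0|_2$ coming from $\mathds R_3$ via Lemma~\ref{l:rhat}(ii), whose constant $C_1$ is \emph{not} an arbitrarily small $\eta$; the paper absorbs it by Young's inequality together with the consistency $|\widehat\phi-\phi_0|_2=o_P(1)$ from Proposition~\ref{Pp:consistency}, after which your schematic inequality and the rates follow exactly as you describe.
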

\noindent {\bf Remark} When  $T^{1-2\varphi}=O(\sqrt{N})$,  this rate becomes
 $$
 |\widehat\alpha-\alpha_0|_2=O_P(T^{-1/2}),\quad
 |\widehat\phi-\phi_0|_2=O_P(T^{-(1-2\varphi)}),
 $$
which  is tight and  identical to the case of the known factor, but not so when $\sqrt{N}=o(T^{1-2\varphi})$.

\begin{proof}
As $\widehat{\alpha}$ and $\widehat{\gamma}$ are minimizers of $\mathbb{\widetilde{S}%
}_{T}$,%
\begin{eqnarray}
0 &\geq &\mathbb{\widetilde{S}}_{T}\left( \widehat{\alpha },\widehat{\gamma }%
\right) -\mathbb{\widetilde{S}}_{T}\left( \alpha _{0},\gamma _{0}\right)
=\mathbb{\widetilde {R}}_{T}\left( \widehat{\alpha },\widehat{\gamma }\right) -%
\mathbb{\widetilde {G}}_{T}\left( \widehat{\alpha },\widehat{\gamma }\right) +%
\mathbb{\widetilde {G}}_{T}\left( \alpha _{0}, \gamma _{0}\right) ,  \notag
\end{eqnarray}%

So
$\widetilde R_{1}(\widehat\alpha,\widehat\gamma)
+\widetilde R_{2}( \widehat\gamma)+\mathbb{\widetilde{C}}_{3}(\widehat\delta, \widehat\gamma)
+\widetilde R_{3}(\widehat\alpha,\widehat\gamma)
\leq
\mathbb{\widetilde{C}}_{1}(\widehat\delta, \widehat\gamma)
+\mathbb{\widetilde{C}}_{2}(\widehat\alpha)
-\mathbb{\widetilde{C}}_{4}(\widehat\alpha) .
$
By Lemma \ref{l:relf},
\begin{eqnarray*}
&& \mathds  R(\alpha,  \widehat\phi)
+\widehat R_{2}( \widehat\gamma)+\mathbb{\widehat{C}}_{3}(\widehat\delta, \widehat\gamma)
+\widehat R_{3}(\widehat\alpha,\widehat\gamma)
\leq o_P(1)|\widehat\alpha-\alpha_0|_2^2+O_P(\Delta_f+T^{-6})T^{-\varphi}
\cr
&&+ O_P(\Delta_f+T^{-1/2}) |\widehat\alpha-\alpha_0|_2
+\mathbb{\widehat{C}}_{1}(\widehat\delta, \widehat\gamma).
\end{eqnarray*}
Note that  $\mathds{R}_{3}\left( \alpha ,\phi \right) =\widehat{R}_{3}\left( \alpha , H_T^{-1}\phi \right), $ $\mathds{R}_{2}\left( \phi \right) =\widehat{R}_{2}\left(  H_T^{-1}\phi \right), $ $\mathds{C}_{i}\left( \delta ,\phi \right) =  \mathbb{\widehat{C}}_{i}\left( \delta , H_T^{-1}\phi  \right)$, $i=1,3$.
In addition, since    $\varphi<1/2$, by Lemma \ref{l:rhat},  it follows that  there is $C_1>0$,
\begin{eqnarray*}
&& \mathds  R(\alpha,  \widehat\phi)
+\mathds R_{2}( \widehat\phi)+\mathds C_{3}(\delta_0, \widehat\phi)
\leq o_P(1)|\widehat\alpha-\alpha_0|_2^2+O_P(\Delta_f+T^{-6})T^{-\varphi}
\cr
&&+\mathds C_{1}( \delta_0, \widehat\phi)
+ O_P(\Delta_f+T^{-1/2}+ T^{-\varphi}N^{-1/2}) |\widehat\alpha-\alpha_0|_2
+
 C_1T^{-\varphi }\left|\widehat\phi -\phi_{0}\right\vert _{2}
\left\vert \widehat \alpha -\alpha _{0}\right\vert _{2}.
\end{eqnarray*}
We  now provide a lower bound on the left hand side.
By Lemma \ref{l:rt},  $|\mathds R_T(\widehat\alpha, \widehat\phi)
-\mathds R_T(\widehat\alpha,\phi_0)|=o_P(1)|\widehat\alpha-\alpha_0|_2^2.$ Also, uniformly in $\alpha,$
$$\mathds R(\alpha,\phi)= \mathbb E[(\alpha-\alpha_0)'\mathds  Z_{t}(\phi)]^2\geq C|\alpha-\alpha_0|_2^2.$$ In addition,
 $\mathds R_{2}( \widehat\phi)+\mathds C_{3}(\delta_0, \widehat\phi)\geq\mathds G_1(\widehat\phi)-  \mathds G_2(\widehat\phi)$.
 This implies
\begin{eqnarray}\label{e:prer}
&&(C_0-o_P(1))|\widehat\alpha-\alpha_0|_2^2
+ \mathds G_1(\widehat\phi)
\leq \mathds G_2(\widehat\phi)+\mathds C_{1}( \delta_0, \widehat\phi)
+ O_P(\Delta_f+T^{-6})T^{-\varphi}
\cr
&&+ O_P(\Delta_f+T^{-1/2}  + T^{-\varphi}N^{-1/2} ) |\widehat\alpha-\alpha_0|_2
+
 C_1T^{-\varphi }\left|\widehat\phi -\phi_{0}\right\vert _{2}
\left\vert \widehat \alpha -\alpha _{0}\right\vert _{2}.
\end{eqnarray}
Let  $C_3$ be chosen to be smaller than $C_0/2$ and $C_2$ be  chosen to be smaller than $C_4/4$ below.  Due to the consistency of $\widehat\phi$,  with probability approaching one, $|\widehat\phi -\phi_{0}|_2\leq (C_2C_3)/(8C_1^2)$. Hence   with probability approaching one, for $d=\frac{C_3}{4C_1^2}$, one term on the right hand side:
\begin{eqnarray*}
	&&C_1T^{-\varphi }\left|\widehat\phi -\phi_{0}\right\vert _{2}|\widehat\alpha-\alpha_0|_2 \leq C_1^2
	d|\widehat\alpha-\alpha_0|_2^2 +T^{-2\varphi}\left|\widehat\phi -\phi_{0}\right\vert _{2}^2 d^{-1}\cr
	&\leq& C_3|\widehat\alpha-\alpha_0|_2^2/4
	+C_2T^{-2\varphi}\left|\widehat\phi -\phi_{0}\right\vert _{2}  /2.
\end{eqnarray*}

Given this, the goal becomes lower bounding $\mathds G_1(\widehat\phi)$ and upper bounding $\mathds G_2(\widehat\phi)+\mathds C_{1}( \delta_0, \widehat\phi)$.
Apply  Lemma  \ref{l.1} using the upper bound $b_{NT}$, and reach,
$$
\mathds G_2(\widehat\phi)+\mathds C_{1}( \delta_0, \widehat\phi)
\leq O_P(1)b_{NT}\leq O_P(T^{-1}) +\eta T^{-2\varphi }\left\vert \widehat\phi -\phi
_{0}\right\vert _{2}.
$$
with an arbitrarily small $\eta>0$.
 Lemma \ref{l.2} implies $\mathds G_1(\widehat\phi)\geq C_4T^{-2\varphi} |\widehat\phi-\phi_0|_2  -  \frac{C}{\sqrt{N}T^{2\varphi}} $ almost surely.
 Since $\eta>0$ is arbitrarily small, (\ref{e:prer}) implies,
  \begin{eqnarray}  \label{e:iter2}
&&  C_0|\widehat\alpha-\alpha_0|_2^2/4
+ C_4T^{-2\varphi} |\widehat\phi-\phi_0|_2 /2
\cr
&\leq& O_P(T^{-1}+\frac{C}{\sqrt{N}T^{2\varphi}} )  +
O_P(\Delta_f+T^{-1/2}  + T^{-\varphi}N^{-1/2} )|\widehat\alpha-\alpha_0|_2
+O_P(\Delta_f+T^{-6})T^{-\varphi} \cr
\end{eqnarray}
 which leads to the preliminary rate: when $T^{ 2\varphi}\log^\kappa T=O(N)$ for any $\kappa>0,$
 \begin{eqnarray*}
 |\widehat\alpha-\alpha_0|_2&=&O_P(T^{-1/2}+N^{-1/4}T^{-\varphi}+\Delta_f^{1/2}T^{-\varphi/2}+\Delta_f) = O_P(T^{-1/2}+N^{-1/4}T^{-\varphi})    ,\cr
 |\widehat\phi-\phi_0|_2&=&O_P(T^{-(1-2\varphi)}+ N^{-1/2} +\Delta_fT^{\varphi}+ (\Delta_fT^{\varphi})^2)=O_P(T^{-(1-2\varphi)}+ N^{-1/2} ),
\end{eqnarray*}
where we used    $\Delta_f\leq O(\log ^c T)(\frac{1}{N}+\frac{1}{T})$ proved in Proposition \ref{prop:f}.
\end{proof}

To improve the convergence rate when $N=o(T^{2-4\varphi})$, we need to obtain a sharper lower bound for $\mathds G_1(\phi)$ than that of Lemma \ref{l.2} .
 To present the   lemma below,  we first introduce some notation.   Let $p_{X_t|Y_t}$ denote the conditional density of $X_t$  given $Y_t$, for the random vectors $X_t  $ and $Y_t$ specified in the lemma below, assumed to exist.

\begin{lem} \label{l:generic}   Let $u_t=g_t'\phi_0$ and Assumption  \ref{as8} hold. Suppose  $N=o(T^{2-4\varphi})$. Consider a generic deterministic vector $\phi$  that   is linearly independent of $\phi_0$ and $\sqrt{N}|\phi -\phi_0|\leq L$ for some $L>0$.
Then    uniformly in $\phi$,
\begin{eqnarray*}
|\mathds G_1(\phi)
&\geq& CT^{-2\varphi} \sqrt{N} |\phi -\phi_0|_2^2 -  O(\frac{1}{T^{2\varphi}N^{5/6}}).
\end{eqnarray*}%

\end{lem}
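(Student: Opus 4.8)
Here is my plan for proving Lemma~\ref{l:generic}, the sharp quadratic lower bound for $\mathds G_1(\phi)$ when $N = o(T^{2-4\varphi})$.

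\medskip

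\textbf{Overall strategy.} Recall $\mathds G_1(\phi) = \mathbb E\mathds R_2(\phi) + \mathbb E\mathds C_3(\delta_0,\phi)$, where $\mathds R_2(\phi) = \frac1T\sum_t (x_t'\delta_0)^2 |1\{\breve g_t'\phi > 0\} - 1\{\breve g_t'\phi_0 > 0\}|$ and $\mathds C_3(\delta_0,\phi) = \frac2T\sum_t (x_t'\delta_0)^2(1\{\breve g_t'\phi_0 > 0\} - 1_t)(1\{\breve g_t'\phi > 0\} - 1\{\breve g_t'\phi_0 > 0\})$, with $\breve g_t = g_t + h_t/\sqrt N$ and $1_t = 1\{g_t'\phi_0 > 0\}$. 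The key observation is that $\mathbb E\mathds R_2 + \mathbb E\mathds C_3$ combines into (up to sign bookkeeping) a quantity of the form $\mathbb E[(x_t'\delta_0)^2(1\{\breve g_t'\phi > 0\} - 1_t)^2] - \mathbb E[(x_t'\delta_0)^2(1\{\breve g_t'\phi_0 > 0\} - 1_t)^2]$, i.e.\ the ``extra'' misclassification that $\phi$ incurs relative to $\phi_0$ after accounting for the perturbation $h_t/\sqrt N$. Writing $\delta_0 = d_0 T^{-\varphi}$, the prefactor contributes the $T^{-2\varphi}$; we then must show the remaining conditional-probability expression is bounded below by $c\sqrt N|\phi-\phi_0|_2^2$ up to an error of order $N^{-5/6}$.

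\medskip

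\textbf{Key steps in order.} First, I would reduce to conditioning on $(x_t, h_t, f_{2t})$ where $f_{2t}$ is the subvector of $f_t = H_T^{-1}{}'$-rotated factors excluding the first coordinate, using the same device as in Online Appendix~\ref{sec:e:discuss}: since $\breve g_t'(\phi - \phi_0) = \widehat f_t'(\gamma - \gamma_0)$ depends on $\widehat f_t$ only through $\widehat f_{2t}$ (and on $h_t$), the conditional density of $\breve g_t'\phi_0 = u_t + h_t'\phi_0/\sqrt N$ given $(\widehat f_{2t}, x_t, h_t)$ is non-degenerate, and by Assumption~\ref{as8}\ref{as8:itm2}--\ref{as8:itm4} it is bounded below by $c_0$ and is Lipschitz near zero. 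Second, I would perform the change of variables that exposes the scale: on the conditioning event, the event $\{1\{\breve g_t'\phi > 0\} \neq 1\{\breve g_t'\phi_0 > 0\}\}$ is $\{\breve g_t'\phi_0 \text{ between } 0 \text{ and } -\widehat f_{2t}'(\gamma_2 - \gamma_{02})\}$, so $\mathbb E\mathds R_2(\phi) \approx \mathbb E[(x_t'\delta_0)^2 p_{\breve g_t'\phi_0|\widehat f_{2t},x_t,h_t}(0)|\widehat f_{2t}'(\gamma_2 - \gamma_{02})|]$ to leading order. The twist relative to the known-factor case is that the relevant ``signed overlap'' between $\{\breve g_t'\phi_0 > 0\}$ and $\{g_t'\phi_0 > 0\}$ in $\mathds C_3$ — governed by whether $h_t'\phi_0/\sqrt N$ flips the sign — is of order $1/\sqrt N$, which is precisely what cancels the linear term in $\mathds R_2$ and leaves the quadratic term. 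I would make this precise by conditioning further on $h_t'\phi_0$ and using Assumption~\ref{as8}\ref{as8:itm7} ($\mathbb E[(x_t'd_0)^2|g_t,h_t]$ bounded above/below) together with the Lipschitz continuity of the conditional density in Assumption~\ref{as8}\ref{as8:itm2} to Taylor-expand: the $O(1)$ and $O(1/\sqrt N)$ contributions from $\mathds R_2$ and $\mathds C_3$ match, and the surviving term is $\frac12 p'' \cdot (\text{second moment of }\widehat f_{2t}'(\gamma_2-\gamma_{02}))$-type, which by the rank condition Assumption~\ref{as8}\ref{as8:itm3} with $k=2$ is $\geq c_0|\gamma_2 - \gamma_{02}|_2^2 = c_0|\phi - \phi_0|_2^2 \cdot O(1)$ (using $\phi = H_T\gamma$ and $H_T$ well-conditioned). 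Multiplying back by the $T^{-2\varphi}$ prefactor and the fact that the ``width'' of the overlap region is $\asymp 1/\sqrt N$ rescaled by $\sqrt N$ from the $h_t$ normalization gives the claimed $CT^{-2\varphi}\sqrt N|\phi-\phi_0|_2^2$. Third, I would control the remainder: the error from replacing $\widehat f_{2t}$ by $f_{2t}$, from the Lipschitz remainder in the density expansion, and from truncating $|h_t|_2$ and $|g_t|_2$ to $O((\log T)^{1/c_1})$-type balls via the exponential-tail Assumption~5 and Assumption~\ref{as8}\ref{as8:itm_AD} (bounded fourth moment $\mathbb E[(\varepsilon_t x_t'd_0)^4 |\breve g_t|_2^2 | \breve g_t'\phi_0]$); tracking the orders carefully, the dominant remainder is $O(T^{-2\varphi} N^{-5/6})$, coming from the cubic term in the Edgeworth-type expansion of the $h_t$-average (the $N^{-5/6}$ is the same anomalous rate that appears in Lemma~\ref{Lem:rate_gm_1}).

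\medskip

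\textbf{Main obstacle.} The hard part will be the bookkeeping in the cancellation between $\mathbb E\mathds R_2$ and $\mathbb E\mathds C_3$: naively each is of order $T^{-2\varphi}|\phi-\phi_0|_2$ (linear), and only their combination produces the quadratic lower bound, so I must show the linear pieces cancel \emph{exactly} to leading order and then extract the next term, all while the perturbation $h_t/\sqrt N$ shifts the threshold. This requires conditioning on both $h_t'\phi_0$ and $\widehat f_{2t}'(\gamma_2 - \gamma_{02})$ simultaneously and expanding the conditional density $p_{u_t | h_t'\phi_0, f_{2t}, x_t}$ to first order in its first argument — exactly the Lipschitz condition Assumption~\ref{as8}\ref{as8:itm2} is designed to license — and then invoking Assumption~\ref{as5}\ref{as5:itm2} (the Gaussian approximation to the law of $h_t'\phi_0$, with variance bounded below) to evaluate the resulting $h_t$-moments; the $o(1)$ discrepancy between $p_{h_t'\phi_0}$ and $p_{\mathcal Z_t}$ feeds into the $N^{-5/6}$ error term. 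A secondary technical point is ensuring the bound is uniform in $\phi$ over the region $\sqrt N|\phi - \phi_0|_2 \leq L$: this is where the $L$-dependence enters the constant $C$, and one must verify that the truncation arguments and the rank condition bounds do not degrade as $|\phi - \phi_0|_2$ ranges over this shrinking ball — which follows because all the density and moment bounds in Assumption~\ref{as8} are stated locally (on $\{|u| < c\}$, $\{|f|_2 < M_0\}$) and uniformly in the conditioning variables.
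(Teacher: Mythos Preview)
Your overall architecture is right: you correctly observe that $\mathds G_1(\phi)=T^{-2\varphi}\big[\mathbb E(x_t'd_0)^2|1\{\breve g_t'\phi>0\}-1_t|-\mathbb E(x_t'd_0)^2|1\{\breve g_t'\phi_0>0\}-1_t|\big]$, and that the linear (kink) contribution must cancel, leaving something quadratic in $\phi-\phi_0$. That is exactly the shape of the paper's argument. But two of your key mechanisms are misidentified, and one of them is a genuine gap.

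\textbf{The quadratic does not come from $p''$.} You propose to Taylor-expand the conditional density $p_{u_t|h_t'\phi_0,f_{2t},x_t}$ to second order, extracting a $p''\cdot(\text{second moment})$ term. This cannot work as stated: Assumption~\ref{as8}\ref{as8:itm2} only gives Lipschitz continuity of that density, not $C^2$ regularity. The paper's mechanism is different. After reducing to $p_{u_t|\bigstar}(0)$ times the integrand $|a+b|-|b|$ with $a=g_t'(\phi-\phi_0)$ and $b=h_t'\phi_0/\sqrt N$, the quadratic comes from integrating over $b$: the function $a\mapsto\mathbb E_b|a+b|$ is smooth with a minimum at $a=0$, and its curvature at zero is $2p_{h_t'\phi_0}(0)$. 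The crucial ingredient you never invoke is Assumption~\ref{as5}\ref{as5:itm1}, which says $\sup_{x_t,g_t}|\mathbb P(h_t'\phi_0<0|x_t,g_t)-1/2|=O(N^{-1/2})$. This approximate symmetry is what the paper uses to swap $1\{h_t'\phi_0\leq 0\}\leftrightarrow 1\{h_t'\phi_0>0\}$ in two of the eight summands (the terms called $a_2,a_5$), after which the combination becomes manifestly of the form $(a+b)^-1\{b>0\}+(a+b)^+1\{b\leq 0\}$ and the lower bound by $c\sqrt N\,a^2$ follows by integrating the bounded-below density of $h_t'\phi_0$ over the window $\{|h_t'\phi_0|<(1-\epsilon)\sqrt N|a|\}$. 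Without Assumption~\ref{as5}\ref{as5:itm1} your cancellation step is not justified; this is the missing idea.

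\textbf{The $N^{-5/6}$ is not an Edgeworth remainder and is unrelated to Lemma~\ref{Lem:rate_gm_1}.} That lemma gives $(NT^{1-2\varphi})^{-2/3}$, a different object. In the paper the $N^{-5/6}$ arises from the ``cross'' error terms (called $A_{31},A_{41}$) that come from replacing $\breve g_t'(\phi-\phi_0)$ by $g_t'(\phi-\phi_0)$ inside indicators in the $A_3,A_4$ pieces: these are controlled by a H\"older inequality with exponent $p=3/2$, yielding $T^{-2\varphi}N^{-1/2-1/(2p)}=T^{-2\varphi}N^{-5/6}$. The Berry--Esseen/symmetry error from Assumption~\ref{as5}\ref{as5:itm1} contributes only $O(T^{-2\varphi}/N)$, which is smaller.
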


\begin{proof}

Write $1_t=1\{g_t'\phi_0>0\}$. First, we note that a careful calculation yields:
\begin{eqnarray*}
&&2\left( 1\{ \breve g_t'\phi_0>0\} -1_{t}\right) \left( 1\{\breve g_t'\phi>0\}) -1\{ \breve g_t'\phi_0>0\} \right)
 +\left\vert 1\{\breve g_t'\phi>0\} -1\{ \breve g_t'\phi_0>0\} \right\vert \\
&:=& A_{1t}(\phi) +A_{2t}(\phi)-A_{3t}(\phi)-A_{4t}(\phi) \end{eqnarray*}%
where
\begin{eqnarray*}
A_{1t}(\phi)&=&1\left\{ \breve g_t'\phi\leq 0<\breve g_t'\phi_0\right\}
1\left\{ g_{t}^{\prime }\phi _{0}>0\right\} \cr
A_{2t}(\phi)&=& 1\left\{ \breve g_t'\phi_0\leq 0<%
 \breve g_t'\phi \right\} 1\left\{ g_{t}^{\prime }\phi _{0}\leq
0\right\} \cr
A_{3t}(\phi)&=&1\left\{ \breve g_t'\phi\leq 0<\breve g_t'\phi_0\right\}
1\left\{ g_{t}^{\prime }\phi _{0}\leq 0 \right\}\cr
A_{4t}(\phi)&=&1\left\{
\breve g_t'\phi_0\leq 0< \breve g_t'\phi \right\}
1\left\{ g_{t}^{\prime }\phi_{0} > 0\right\}
\end{eqnarray*}

 Therefore,     \begin{eqnarray*}
 \mathds G_1(\phi)&=& \mathbb E\left( x_{t}^{\prime }\delta _{0}\right) ^{2}\left( A_{1t}\left( \phi
\right) +A_{2t}\left( \phi \right) -A_{3t}\left( \phi \right)
-A_{4t}\left( \phi \right) \right).
\end{eqnarray*}%
The goal is to provide a sharp lower bound of the right hand side.  Note that  $\phi-\phi_0$ is linearly independent of $\phi_0$ due to the normalization. And as elsewhere $ C $ is a generic positive constant.


\medskip
\noindent
\textbf{Calculating $A_1$}
\medskip

Take the first term $A_{1t}\left( \phi \right) $ and note that (cf.
notation $u_{t}=g_{t}^{\prime }\phi _{0}$ )
\begin{eqnarray*}
A_1
&=&1\left\{ 0\vee -\frac{h_{t}^{\prime }\phi _{0}}{\sqrt{N}}<u_{t}\leq
-\left( g_{t}+\frac{h_{t}}{\sqrt{N}}\right) ^{\prime }\left( \phi-\phi_0\right) -\frac{h_{t}^{\prime }\phi _{0}}{\sqrt{N}}\right\}  \notag \\
&=&1\left\{ -h_t'\phi_0<\sqrt{N}u_{t}\leq -\sqrt{N} g_t ^{\prime }\left( \phi-\phi_0\right) - h_t'\phi_0\right\} 1\left\{
h_t'\phi_0\leq 0\right\}    \cr
&&+1\left\{ 0<\sqrt{N}u_{t}\leq -\sqrt{N}  g_t^{\prime
}\left( \phi-\phi_0\right) -  h_t'\phi_0\right\} 1\left\{ h_t'\phi_0>0\right\}\cr
&&+\left[1\left\{  \sqrt{N}u_{t}\leq -\sqrt{N} \breve g_t ^{\prime }\left( \phi-\phi_0\right) - h_t'\phi_0\right\} -1\left\{  \sqrt{N}u_{t}\leq -\sqrt{N} g_t ^{\prime }\left( \phi-\phi_0\right) - h_t'\phi_0\right\} \right] \cr
&&\quad \times [1\left\{
h_t'\phi_0\leq 0\right\}   1\{-h_t'\phi_0<\sqrt{N}u_{t}\}  + 1\left\{ h_t'\phi_0>0\right\} 1\{u_t>0\} ] .
\end{eqnarray*}
Now suppose that for any $L>0$, the conditional density of $g_t'\phi$ given $(h_t, x_t)$ is bounded uniformly for $\phi\in\{|\phi-\phi_0|_2<LN^{-1/2}\}$: that is $\sup_{|\phi-\phi_0|_2<LN^{-1/2}}p_{g_t'\phi|h_t, x_t}(\cdot)<C.$
Hence
\begin{eqnarray*}
\mathbb  E\left( x_{t}^{\prime }\delta _{0}\right) ^{2}A_1&=&\mathbb  E\left( x_{t}^{\prime }\delta _{0}\right) ^{2}1\left\{ -h_t'\phi_0<\sqrt{N}u_{t}\leq -\sqrt{N} g_t ^{\prime }\left( \phi-\phi_0\right) - h_t'\phi_0\right\} 1\left\{
h_t'\phi_0\leq 0\right\}    \cr
&&+\mathbb  E\left( x_{t}^{\prime }\delta _{0}\right) ^{2}1\left\{ 0<\sqrt{N}u_{t}\leq -\sqrt{N}  g_t^{\prime
}\left( \phi-\phi_0\right) -  h_t'\phi_0\right\} 1\left\{ h_t'\phi_0>0\right\}+ A_{11},
\end{eqnarray*}
where
\begin{eqnarray*}
A_{11}&:=&\mathbb  E\left( x_{t}^{\prime }\delta _{0}\right) ^{2} [1\left\{
h_t'\phi_0\leq 0\right\}   1\{-h_t'\phi_0<\sqrt{N}u_{t}\}  + 1\left\{ h_t'\phi_0>0\right\} 1\{u_t>0\} ]
\cr
&&\quad \times
\left[1\left\{  \sqrt{N}u_{t}\leq -\sqrt{N} \breve g_t ^{\prime }\left( \phi-\phi_0\right) - h_t'\phi_0\right\} -1\left\{  \sqrt{N}u_{t}\leq -\sqrt{N} g_t ^{\prime }\left( \phi-\phi_0\right) - h_t'\phi_0\right\} \right] \cr
&\leq& C T^{-2\varphi}\mathbb  E\mathbb P\left\{ -    h_t^{\prime }\left( \phi-\phi_0\right) - h_t'\phi_0< \sqrt{N}g_{t}'\phi\leq   - h_t'\phi_0\bigg{|} h_t\right\} \cr
&&+ T^{-2\varphi}\mathbb  E\mathbb P\left\{    - h_t'\phi_0< \sqrt{N}g_{t}'\phi\leq  -  h_t ^{\prime }\left( \phi-\phi_0\right) - h_t'\phi_0    \bigg{|} h_t\right\} \cr
&\leq&  2C\sup_{\|\phi-\phi_0\|<LN^{-1/2}}p_{g_t'\phi|h_t}(\cdot)T^{-2\varphi}\mathbb  E  \frac{|h_t'(\phi-\phi_0)|}{\sqrt{N}}\cr
&\leq& \frac{C}{\sqrt{N}T^{2\varphi}}|\phi-\phi_0|_2 \leq \frac{CL}{NT^{2\varphi}},\quad \text{ given that  $|\phi-\phi_0|_2<LN^{-1/2}$,}
\end{eqnarray*}
due to Assumption \ref{as8} \ref{as8:itm7} for the first inequality.
 On the other hand,
 note that the normalization condition requires the first element of $\gamma-\gamma_0=0$, so $  g_t^{\prime
}\left( \phi-\phi_0\right)= f_t'(\gamma-\gamma_0)=f_{2t}'(\gamma-\gamma_0)_2$. Thus $g_t'(\phi-\phi_0)$ depends on $g_t$ only through $f_{2t}= (H_T'f_t)_2$, where $f_{2t}$ and  $(H_T'f_t)_2$ denote the subvectors of   $f_t$ and $H_T'f_t$, excluding their first elements, corresponding to the 1-element of $\phi.$

 Let $p_{u_t|\bigstar}(\cdot):=p_{f_t'\gamma_0|h_t'\phi_0,   f_{2t},x_{t} }(\cdot)$ denote the conditional density of $u_t = f_t'\gamma_0 = g_t'\phi_0$, given $(h_{t}^{\prime }\phi _{0},   f_{2t},x_{t})$.
Change variable  $a=\sqrt{N}u$, we have,
\begin{align}
& \mathbb E\left( x_{t}^{\prime }\delta _{0}\right) ^{2}A_1- A_{11}\cr
&=\frac{1}{\sqrt{N}} \mathbb E\left( x_{t}^{\prime }\delta _{0}\right) ^{2}\int 1\left\{ -h_t'\phi_0<a\leq -\sqrt{N}   g_t ^{\prime }\left( \phi-\phi_0\right) - h_t'\phi_0\right\} 1\left\{
h_t'\phi_0\leq 0\right\}  p_{u_t|\bigstar}(\frac{a}{\sqrt{N}})da  \cr
&+\frac{1}{\sqrt{N}} \mathbb E\left( x_{t}^{\prime }\delta _{0}\right) ^{2}\int 1\left\{ 0<a\leq -\sqrt{N}   g_t^{\prime
}\left( \phi-\phi_0\right) -  h_t'\phi_0\right\} 1\left\{ h_t'\phi_0>0\right\}p_{u_t|\bigstar}(\frac{a}{\sqrt{N}})da\cr
&=
-  \mathbb E\left( x_{t}^{\prime }\delta _{0}\right) ^{2}p_{u_t|\bigstar}(0)    g_t ^{\prime }\left( \phi-\phi_0\right)1\{   g_t ^{\prime }\left( \phi-\phi_0\right)\leq 0\} 1\left\{
h_t'\phi_0\leq 0\right\}     \cr
&-  \mathbb E\left( x_{t}^{\prime }\delta _{0}\right) ^{2}p_{u_t|\bigstar}(0)  \left(   g_t^{\prime
}\left( \phi-\phi_0\right) + \frac{h_t'\phi_0}{\sqrt{N}}\right)1\left\{   g_t^{\prime
}\left( \phi-\phi_0\right) + \frac{h_t'\phi_0}{\sqrt{N}}<0\right\}1\left\{ h_t'\phi_0>0\right\}
\cr
&+ B_1,
\end{align}
where
\begin{align*}
B_1&=
\frac{ \mathbb E\left( x_{t}^{\prime }\delta _{0}\right) ^{2}}{\sqrt{N}}\int 1\left\{ -h_t'\phi_0<a\leq -\sqrt{N}   g_t ^{\prime }\left( \phi-\phi_0\right) - h_t'\phi_0\right\} 1\left\{
h_t'\phi_0\leq 0\right\} \left( p_{u_t|\bigstar}(\frac{a}{\sqrt{N}})-p_{u_t|\bigstar}(0)\right)da  \cr
&+\frac{1}{\sqrt{N}} \mathbb E\left( x_{t}^{\prime }\delta _{0}\right) ^{2}\int 1\left\{ 0<a\leq -\sqrt{N}   g_t^{\prime
}\left( \phi-\phi_0\right) -  h_t'\phi_0\right\} 1\left\{ h_t'\phi_0>0\right\}\left(p_{u_t|\bigstar}(\frac{a}{\sqrt{N}})-p_{u_t|\bigstar}(0)\right)da. 
\end{align*}
We now show that for some $C$ independent of $\gamma$,   $
|B_1|\leq   \frac{C}{NT^{2\varphi}}.
$
Because   $p_{u_t|\bigstar}(.)$ is Lipschitz,
\begin{eqnarray*}
|B_1|
&\leq&
\frac{C}{N} \mathbb E\left( x_{t}^{\prime }\delta _{0}\right) ^{2}\int 1\left\{ -h_t'\phi_0<a\leq -\sqrt{N}   g_t ^{\prime }\left( \phi-\phi_0\right) - h_t'\phi_0\right\} 1\left\{
h_t'\phi_0\leq 0\right\} |a| da\cr
&&+\frac{C}{N} \mathbb E\left( x_{t}^{\prime }\delta _{0}\right) ^{2}\int 1\left\{ 0<a\leq -\sqrt{N}   g_t^{\prime
}\left( \phi-\phi_0\right) -  h_t'\phi_0\right\} 1\left\{ h_t'\phi_0>0\right\} |a|da\cr
&\leq&
\frac{C'T^{-2\varphi}}{N} \mathbb E (
|\sqrt{N}   g_t ^{\prime }\left( \phi-\phi_0\right) + h_t'\phi_0| +|h_t'\phi_0|) ^2\leq \frac{C'}{N}T^{-2\varphi},
\end{eqnarray*}
due to Assumption \ref{as8} \ref{as8:itm7}.

\medskip
\noindent
\textbf{Calculating $A_2$}
\medskip

The calculation of $A_2$ is very similar to that of $A_1.$ Write
\begin{eqnarray*}
A_2
 &=&1\left\{ - \sqrt{N}   g_t'\left( \phi -\phi_0\right) - h_t'\phi_0<\sqrt{N}u_{t}\leq - h_t'\phi_0\right\} 1\left\{
h_t'\phi_0>0\right\} \\
&&+1\left\{ - \sqrt{N}  g_t'\left( \phi -\phi_0\right) - h_t'\phi_0<\sqrt{N}u_{t}\leq 0\right\} 1\left\{h_t'\phi_0 \leq 0\right\} \cr
&&+[  1\left\{ - \sqrt{N} \breve g_t'\left( \phi -\phi_0\right) - h_t'\phi_0<\sqrt{N}u_{t}\right\}
-1\left\{ - \sqrt{N}   g_t'\left( \phi -\phi_0\right) - h_t'\phi_0<\sqrt{N}u_{t}\right\} ]\cr
&&\quad \times
[1\left\{
h_t'\phi_0>0\right\} 1\left\{\sqrt{N}u_{t}\leq - h_t'\phi_0\right\}+1\left\{h_t'\phi_0 \leq 0\right\}
1\left\{u_{t}\leq 0\right\}   ]  .
\end{eqnarray*}
Hence,
\begin{align*}
 \mathbb  E\left( x_{t}^{\prime }\delta _{0}\right) ^{2}A_2&=
\mathbb E\left( x_{t}^{\prime }\delta _{0}\right) ^{2} 1\left\{ - \sqrt{N}   g_t'\left( \phi -\phi_0\right) - h_t'\phi_0<\sqrt{N}u_{t}\leq - h_t'\phi_0\right\} 1\left\{
h_t'\phi_0>0\right\} \\
&+\mathbb E\left( x_{t}^{\prime }\delta _{0}\right) ^{2}1\left\{ - \sqrt{N}  g_t'\left( \phi -\phi_0\right) - h_t'\phi_0<\sqrt{N}u_{t}\leq 0\right\} 1\left\{h_t'\phi_0 \leq 0\right\} + A_{21}\cr
A_{21}&:=
\mathbb E\left( x_{t}^{\prime }\delta _{0}\right) ^{2}
\left[ 1\left\{
h_t'\phi_0>0\right\} 1\left\{\sqrt{N}u_{t}\leq - h_t'\phi_0\right\}+1\left\{h_t'\phi_0 \leq 0\right\}
1\left\{u_{t}\leq 0\right\}   \right] \cr
&\quad \times
[  1\left\{ - \sqrt{N} \breve g_t'\left( \phi -\phi_0\right) - h_t'\phi_0<\sqrt{N}u_{t}\right\}
-1\left\{ - \sqrt{N}   g_t'\left( \phi -\phi_0\right) - h_t'\phi_0<\sqrt{N}u_{t}\right\} ]\cr
&\leq  \frac{CL}{NT^{2\varphi}},\quad \text{ similar to the bound of } A_{11}.
\end{align*}
So very similar to the bound of $\mathbb E\left( x_{t}^{\prime }\delta _{0}\right) ^{2}A_1 -A_{11}$, we have
\begin{eqnarray*}
&&\mathbb E\left( x_{t}^{\prime }\delta _{0}\right) ^{2}A_2 -A_{21}
\cr
&=& B_2+
\mathbb   E\left( x_{t}^{\prime }\delta _{0}\right) ^{2}     p_{u_t|\bigstar}(0)
   g_t'\left(\phi -\phi_0\right)1\left\{    g_t'\left(\phi-\gamma _{0}\right) >0\right\}  1\{h_t'\phi_0>0\}
\cr
&&+ \mathbb E\left( x_{t}^{\prime }\delta _{0}\right) ^{2}  p_{u_t|\bigstar}(0) \left(    g_t'\left(
\phi -\phi_0\right) +\frac{ h_t'\phi_0}{\sqrt{N}}\right)1\left\{    g_t'\left(
\phi -\phi_0\right) + \frac{h_t'\phi_0}{\sqrt{N}}> 0\right\} 1\left\{h_t'\phi_0
\leq 0\right\}
\end{eqnarray*}
with
$
|B_2|\leq   \frac{C}{NT^{2\varphi}}$.

\medskip
\noindent
\textbf{Calculating $A_3$}
\medskip

First we define events
\begin{eqnarray*}
	E_1&:=& \{\sqrt{N}g_{t}^{\prime }\phi _{0}\leq -\sqrt{N} \breve g_t'\left(\phi  -\phi _{0}\right) - h_t'\phi_0\}\cr
	E_2&:=&\{\sqrt{N}   g_t'\left(\phi  -\phi _{0}\right) +h_t'\phi_0>0\}\cr
	E_3&:=& \{\sqrt{N} \breve g_t'\left(\phi  -\phi _{0}\right) +h_t'\phi_0>0\}\cr
	E_4&:=&\{ \sqrt{N}g_{t}^{\prime }\phi _{0}   \leq -\sqrt{N}   g_t'\left(\phi  -\phi _{0}\right) - h_t'\phi_0\}\cr
	E_5&:=&\{0<\sqrt{N} g_t'\left(\phi  -\phi _{0}\right) +h_t'\phi_0<-h_t'(\phi-\phi_0)\}\cr
	E_6&:=&\{-h_t'(\phi-\phi_0)<\sqrt{N} g_t'\left(\phi  -\phi _{0}\right) +h_t'\phi_0<0\}
\end{eqnarray*}
Careful calcuations yield:
\begin{eqnarray*}
A_3 &=&1\left\{ \breve g_t'\phi \leq 0<\breve g_t'\phi_0\right\}
1\left\{ g_{t}^{\prime }\phi _{0}\leq 0<\breve{g}_{t}^{\prime }\phi_{0}\right\}\cr
      &=&
  1\left\{ -h_t'\phi_0<\sqrt{N}g_{t}^{\prime }\phi _{0}\leq  0\right\} 1\{\sqrt{N}  g_t'\left(\phi  -\phi _{0}\right) + h_t'\phi_0<0\}\cr
   &&+
    1\left\{ -h_t'\phi_0<\sqrt{N}g_{t}^{\prime }\phi _{0}\leq -\sqrt{N}   g_t'\left(\phi  -\phi _{0}\right) - h_t'\phi_0\right\}
    1\{E_2\}+ A_{31}\cr
A_{31}       &:=&
   [  1\left\{E_1\right\}
   +1\left\{  \sqrt{N}g_{t}^{\prime }\phi _{0}\leq  0\right\} ] 1\left\{ -h_t'\phi_0<\sqrt{N}g_{t}^{\prime }\phi _{0}\right\}
    [1\{E_3\}  -1\{E_2\}  ] \cr
    &&+
    1\{E_2\}    1\left\{ -h_t'\phi_0<\sqrt{N}g_{t}^{\prime }\phi _{0}\right\}  \left[  1\left\{ E_1\right\}
  - 1\left\{ E_4\right\}\right ].
\end{eqnarray*}
So
\begin{eqnarray*}
&&\mathbb E(x_t'\delta_0)^2A_3\cr
&=& \mathbb E(x_t'\delta_0)^2
    1\left\{ -h_t'\phi_0<\sqrt{N}g_{t}^{\prime }\phi _{0}\leq -\sqrt{N}   g_t'\left(\phi  -\phi _{0}\right) - h_t'\phi_0\right\}
    1\{E_2\}\cr
    &&+\mathbb E(x_t'\delta_0)^21\left\{ -h_t'\phi_0<\sqrt{N}g_{t}^{\prime }\phi _{0}\leq  0\right\} 1\{\sqrt{N}  g_t'\left(\phi  -\phi _{0}\right) + h_t'\phi_0<0\} +\mathbb E(x_t'\delta_0)^2A_{31}.
\end{eqnarray*}
Note that $\sqrt{N}\breve{g}_t= \sqrt{N} g_t +h_t$, so $|1\{E_3\}-1\{E_2\}|\leq 1\{E_5\}+1\{E_6\}.$ This gives, by Assumption \ref{as8} \ref{as8:itm7} and letting $ M_0=1 $ to simplify the notation,
\begin{eqnarray*}
    \mathbb E(x_t'\delta_0)^2A_{31}
  &\leq& T^{-2\varphi} \mathbb E [1\{E_5\}+1\{E_6\}] [1\left\{E_1\right\}
   +1\left\{  \sqrt{N}g_{t}^{\prime }\phi _{0}\leq  0\right\} ] 1\left\{ -h_t'\phi_0<\sqrt{N}g_{t}^{\prime }\phi _{0}\right\}
\cr  && + T^{-2\varphi} \mathbb E 1\left\{ -h_t'(\phi-\phi_0) - h_t'\phi_0< \sqrt{N}g_{t}^{\prime }\phi    \leq  - h_t'\phi_0\right\} \cr  && + T^{-2\varphi} \mathbb E 1\left\{  - h_t'\phi_0< \sqrt{N}g_{t}^{\prime }\phi <   -h_t'(\phi-\phi_0) - h_t'\phi_0   \right\}  \cr
&\leq&  T^{-2\varphi} \mathbb E 1\{E_5\}  \mathbb P \left\{  -h_t'\phi_0<\sqrt{N}g_{t}^{\prime }\phi _{0}\leq -\sqrt{N} \breve g_t'\left(\phi  -\phi _{0}\right) - h_t'\phi_0\bigg{|}  h_t, g_t'r \right\}
   \cr
   &&+T^{-2\varphi} \mathbb E 1\{E_5\}   \mathbb P\left\{ -h_t'\phi_0<\sqrt{N}g_{t}^{\prime }\phi _{0}<0\bigg{|}  h_t, g_t'r \right\}\cr
&&+   T^{-2\varphi} \mathbb E 1\{E_6\}  \mathbb P \left\{  -h_t'\phi_0<\sqrt{N}g_{t}^{\prime }\phi _{0}\leq -\sqrt{N} \breve g_t'\left(\phi  -\phi _{0}\right) - h_t'\phi_0\bigg{|}  h_t, g_t'r \right\}
   \cr
   &&+T^{-2\varphi} \mathbb E 1\{E_6\}   \mathbb P\left\{ -h_t'\phi_0<\sqrt{N}g_{t}^{\prime }\phi _{0}<0\bigg{|}  h_t,  g_t'r\right\}\cr
   \cr  && + T^{-2\varphi} \mathbb E 1\left\{ -h_t'(\phi-\phi_0) - h_t'\phi_0< \sqrt{N}g_{t}^{\prime }\phi    \leq  - h_t'\phi_0 \bigg{|}  h_t\right\} \cr  && + T^{-2\varphi} \mathbb E 1\left\{  - h_t'\phi_0< \sqrt{N}g_{t}^{\prime }\phi <   -h_t'(\phi-\phi_0) - h_t'\phi_0 \bigg{|}  h_t  \right\}
\cr
&\leq^{(1)}&  T^{-2\varphi} \mathbb E1\{E_5\}   C|\breve g_t'(\phi-\phi_0)|
 + T^{-2\varphi}\mathbb E \mathbb P\{E_5|h_t, x_t\}C|\frac{h_t'\phi_0}{\sqrt{N}}|\cr
&&+  T^{-2\varphi}  \mathbb E 1\{E_6\}      C|\breve g_t'(\phi-\phi_0)|  + T^{-2\varphi}  \mathbb E \mathbb P\{E_6|h_t, x_t\}      C|\frac{h_t'\phi_0}{\sqrt{N}}|\cr
   &&+ \mathbb E  C |\frac{h_t'(\phi-\phi_0)}{\sqrt{N}}| \cr
   &\leq^{(2)}&T^{-2\varphi} |\phi-\phi_0|_2C  (\mathbb  E[ |\breve g_t |] ^q)^{1/q} ( \mathbb E\mathbb P \{E_5|h_t \}  )^{1/p}
   \cr
&&+   T^{-2\varphi} |\phi-\phi_0|_2C ( \mathbb E[ |\breve g_t |] ^q)^{1/q}(  \mathbb E\mathbb P\{E_6|h_t\}     )^{1/p} \cr
&&+ T^{-2\varphi} C  \mathbb E|\frac{h_t'r}{\sqrt{N}}   |   |\frac{h_t'\phi_0}{\sqrt{N}}|+T^{-2\varphi} \mathbb E  C |\frac{h_t'(\phi-\phi_0)}{\sqrt{N}}| \cr
   &\leq^{(3)}& |\phi-\phi_0|_2C (\mathbb E |\frac{h_t'r}{\sqrt{N}}|)^{1/p}T^{-2\varphi}
+ T^{-2\varphi} C  \mathbb E|\frac{h_t'rh_t'\phi_0}{N}   |  + \mathbb E  C |\frac{h_t'(\phi-\phi_0)}{\sqrt{N}}| \cr
&\leq^{(4)}& O(\frac{1}{T^{2\varphi}N^{0.5+1/(2p)}})
\end{eqnarray*}
where inequality (1) follows from the assumption that the conditional density $p_{u_t|\bigstar}$ and the conditional density of $g_t'\phi$ given $( h_t)$ are bounded in a neighborhood of zero, with $r=|\phi-\phi_0|_2^{-1}(\phi-\phi_0)$; (2) (3) follow from
the Holder's inequality for some $p>1$ and $q>0$ and $p^{-1}+q^{-1}=1$, and
that the conditional density of $g_t'r$ given $( h_t)$ is bounded.  (We  take $p=1.5$.); (4) follows from $|\phi-\phi_0|_2<LN^{-1/2}$.

Also,
\begin{eqnarray}
&& \mathbb E(x_t'\delta_0)^2A_3 -\mathbb E(x_t'\delta_0)^2A_{31}
\cr
&=&
 \mathbb E(x_t'\delta_0)^2  \int     1\left\{ -h_t'\phi_0<a\leq  0\right\} 1\{\sqrt{N}     g_t'\left(\phi  -\phi _{0}\right) + h_t'\phi_0<0\}   p_{u_t|\bigstar}(\frac{a}{\sqrt{N}})d\frac{a}{\sqrt{N}}  \cr
&&+
 \mathbb E(x_t'\delta_0)^2  \int     1\left\{ -h_t'\phi_0<a\leq -\sqrt{N}     g_t'\left(\phi  -\phi _{0}\right) - h_t'\phi_0\right\} 1\{     g_t'\left(\phi  -\phi _{0}\right) +\frac{h_t'\phi_0}{\sqrt{N}}>0\}   \cr
 &&p_{u_t|\bigstar}(\frac{a}{\sqrt{N}})d\frac{a}{\sqrt{N}} \cr
&=&
 \mathbb E(x_t'\delta_0)^2  \int     1\left\{ -h_t'\phi_0<a\leq -\sqrt{N}     g_t'\left(\phi  -\phi _{0}\right) - h_t'\phi_0\right\} 1\{     g_t'\left(\phi  -\phi _{0}\right) +\frac{h_t'\phi_0}{\sqrt{N}}>0\}   p_{u_t|\bigstar}(0)d\frac{a}{\sqrt{N}} \cr
&&+
 \mathbb E(x_t'\delta_0)^2  \int     1\left\{ -h_t'\phi_0<a\leq  0\right\} 1\{\sqrt{N}     g_t'\left(\phi  -\phi _{0}\right) + h_t'\phi_0<0\}   p_{u_t|\bigstar}(0)d\frac{a}{\sqrt{N}}  -B_3\cr
&=&
- \mathbb Ep_{u_t|\bigstar}(0)(x_t'\delta_0)^2          g_t'\left(\phi  -\phi _{0}\right) 1\{\frac{h_t'\phi_0}{\sqrt{N}}>     g_t'\left(\phi  -\phi _{0}\right) +\frac{h_t'\phi_0}{\sqrt{N}}>0\}    \cr
&&+
 \mathbb Ep_{u_t|\bigstar}(0)(x_t'\delta_0)^2 \frac{h_t'\phi_0 }{\sqrt{N}}1\{ \sqrt{N}     g_t'\left(\phi  -\phi _{0}\right) + h_t'\phi_0<0\} 1\{h_t'\phi_0>0\}    -B_3,\cr
 \end{eqnarray}
 where
\begin{eqnarray*}
&&|B_3|\leq  \mathbb E(x_t'\delta_0)^2  \int     1\left\{ -h_t'\phi_0<a\leq  0\right\} 1\{\sqrt{N}     g_t'\left(\phi  -\phi _{0}\right) + h_t'\phi_0<0\}   [p_{u_t|\bigstar}(\frac{a}{\sqrt{N}})-p_{u_t|\bigstar}(0)]d\frac{a}{\sqrt{N}}  \cr
&&+C|
 \mathbb E(x_t'\delta_0)^2\frac{1}{N}  \int     1\left\{ -h_t'\phi_0<a\leq [-\sqrt{N}     g_t'\left(\phi  -\phi _{0}\right) - h_t'\phi_0]\right\} 1\{     g_t'\left(\phi  -\phi _{0}\right) >-\frac{h_t'\phi_0}{\sqrt{N}}\}    |a|da\cr
&\leq&  \frac{C}{N}
 \mathbb E(x_t'\delta_0)^2(|h_t'\phi_0| + |\sqrt{N}     g_t'\left(\phi  -\phi _{0}\right)|  )^2\leq  \frac{C}{NT^{2\varphi}}.
 \end{eqnarray*}

\medskip
\noindent
\textbf{Calculating $A_4$}
\medskip

Write
\begin{eqnarray*}
A_4&=&1\left\{
\breve g_t'\phi_0\leq 0< \breve g_t'\phi \right\}
1\left\{ \breve g_t'\phi_0\leq 0<g_t'\phi_0\right\}\cr
&=&
1\left\{0<g_t'\phi_0\leq  -\frac{h_{t}^{\prime }\phi _{0}}{\sqrt{N}}\right\}
 1\left\{    - \breve g_t'\left(\phi  -\phi _{0}\right) -\frac{h_{t}^{\prime }\phi _{0}}{\sqrt{N}}<g_{t}^{\prime }\phi _{0}\leq   -\frac{h_{t}^{\prime }\phi _{0}}{\sqrt{N}}
 \right\} \cr
  &=&
  1\left\{0<g_t'\phi_0\leq  -\frac{h_{t}^{\prime }\phi _{0}}{\sqrt{N}}\right\}
 1\{  \breve g_t'\left(\phi  -\phi _{0}\right) +\frac{h_{t}^{\prime }\phi _{0}}{\sqrt{N}}>0\}
 \cr
 &&
 1\left\{    - \breve g_t'\left(\phi  -\phi _{0}\right) -\frac{h_{t}^{\prime }\phi _{0}}{\sqrt{N}}<g_{t}^{\prime }\phi _{0}\leq   -\frac{h_{t}^{\prime }\phi _{0}}{\sqrt{N}}
 \right\}
 1\{  \breve g_t'\left(\phi  -\phi _{0}\right) +\frac{h_{t}^{\prime }\phi _{0}}{\sqrt{N}}<0\}
\end{eqnarray*}
 The same proof as that of $A_3$ shows
 \begin{eqnarray*}
&&  \mathbb E(x_t'\delta_0)^2A_4\cr
 &=&    \mathbb E(x_t'\delta_0)^2 (-h_t'\phi_0)1\{h_t'\phi_0<0\}
 1\{    g_t'\left(\phi  -\phi _{0}\right) +\frac{h_{t}^{\prime }\phi _{0}}{\sqrt{N}}>0\}   p_{u_t | \bigstar(0)}\frac{1}{\sqrt{N}}
\cr
&&+    \mathbb E(x_t'\delta_0)^2    g_t'\left(\phi  -\phi _{0}\right) 1\{   g_t'\left(\phi  -\phi _{0}\right)>0\} 1\{    g_t'\left(\phi  -\phi _{0}\right) +\frac{h_{t}^{\prime }\phi _{0}}{\sqrt{N}}<0\}   p_{u_t | \bigstar(0) }  \cr
&&+O(\frac{1}{T^{2\varphi}N^{0.5+1/(2p)}}).
\end{eqnarray*}

Combining the above results, we reach,
 \begin{eqnarray}
&& \mathbb E(x_t'\delta_0)^2(A_1-A_3+ A_2-A_4)=\sum_{d=1}^{8} \mathbb  E[(x_t'\delta_0)^2p_{u_t|\bigstar}(0) a_d] + O(\frac{1}{T^{2\varphi}N^{0.5+1/(2p)}}) \cr \label{e0.5}\end{eqnarray}
where
 \begin{eqnarray}
a_1&=&
-     \left(   g_t^{\prime
}\left( \phi-\phi_0\right) + \frac{h_t'\phi_0}{\sqrt{N}}\right)1\left\{   g_t^{\prime
}\left( \phi-\phi_0\right) + \frac{h_t'\phi_0}{\sqrt{N}}<0\right\}1\left\{ h_t'\phi_0>0\right\}\cr
a_2&=&-        g_t ^{\prime }\left( \phi-\phi_0\right)1\{   g_t ^{\prime }\left( \phi-\phi_0\right)\leq 0\} 1\left\{
h_t'\phi_0\leq 0\right\}    \cr
a_3&=&          g_t'\left(\phi  -\phi _{0}\right) 1\{\frac{h_t'\phi_0}{\sqrt{N}}>   g_t'\left(\phi  -\phi _{0}\right) +\frac{h_t'\phi_0}{\sqrt{N}}>0\}    \cr
a_4&=&-    \frac{h_t'\phi_0 }{\sqrt{N}}1\{ \sqrt{N}   g_t'\left(\phi  -\phi _{0}\right) + h_t'\phi_0<0\}  1\{h_t'\phi_0>0\}  \cr
a_5&=&
   g_t'\left( \phi-\phi_0\right)1\left\{    g_t'\left(\phi -\phi_0\right) >0\right\}  1\{h_t'\phi_0>0\}
\cr
a_6&=&      \left(    g_t'\left(
\phi  -\phi _{0}\right) +\frac{ h_t'\phi_0}{\sqrt{N}}\right)1\left\{    g_t'\left(
\phi  -\phi _{0}\right) + \frac{h_t'\phi_0}{\sqrt{N}}> 0\right\} 1\left\{h_t'\phi_0
\leq 0\right\}
\cr
a_7&=&
      \frac{h_t'\phi_0}{\sqrt{N}}     1\{h_t'\phi_0<0\}
 1\{    g_t'\left(\phi  -\phi _{0}\right) +\frac{h_{t}^{\prime }\phi _{0}}{\sqrt{N}}>0\}
\cr
a_8&=&-        g_t'\left(\phi  -\phi _{0}\right) 1\{   g_t'\left(\phi  -\phi _{0}\right)>0\} 1\{    g_t'\left(\phi  -\phi _{0}\right) +\frac{h_{t}^{\prime }\phi _{0}}{\sqrt{N}}<0\}  .\label{e0.5add}
  \end{eqnarray}

We now further simplify the above terms by paying  special attentions to terms involving $a_2$ and $a_5$:
 \begin{eqnarray}
&&-  { \mathbb E(x_t'\delta_0)^2p_{u_t | \bigstar}(0)    g_t ^{\prime }\left( \phi-\phi_0\right)1\{   g_t ^{\prime }\left( \phi-\phi_0\right)\leq 0\} 1\left\{
h_t'\phi_0\leq 0\right\}   } \label{e0.3}\\
&& {  \mathbb E(x_t'\delta_0)^2     p_{u_t | \bigstar}(0)
   g_t'\left( \phi-\phi_0\right)1\left\{    g_t'\left(\phi -\phi_0\right) >0\right\}  1\{h_t'\phi_0>0\} }
\label{e0.4}.
  \end{eqnarray}
The key idea is that $ 1\left\{
h_t'\phi_0\leq 0\right\} $ and $1\{h_t'\phi_0>0\}$   can be exchanged up to an error $O(\frac{T^{-2\varphi}}{N})$. Roughly speaking, this is due to the fact that  given $(x_t, g_t)$, the  conditional distribution of  $h_t'\phi_0$ is approximately normal, and symmetric around zero. The conditional normality of  $h_t'\phi_0$ follows from:
  for $\sigma^2_{h,x_t,g_t}:= \lim_{N\to\infty} \mathbb E(( h_t'\phi_0)^2|x_t, g_t) $, 
$$
 h_t'\phi_0=  \frac{1}{\sqrt{N}}\sum_{i=1}^Ne_{it} \lambda_i'\phi_0  (\frac{1}{N}\Lambda'\Lambda)^{-1}   |(x_t, g_t)\overset{d}{\longrightarrow } \mathcal Z_t
 $$
 where $\mathcal Z_t $ is a Gaussian variable, whose conditional distribution given $(x_t, g_t)$ is
$  \mathcal N(0, \sigma^2_{h,x_t,g_t})$.  For a formal treatment,   we show that $h_t'\phi_0$ in  (\ref{e0.3}) and (\ref{e0.4})  can be replaced with $\mathcal Z_t$.
Under the assumption of the lemma, we have
$$\sup_{x_t, g_t}| \mathbb P(h_t'\phi_0\leq 0|x_t, g_t) -1/2|=O(\frac{1}{\sqrt{N}}).$$

Then   for (\ref{e0.3}),  we have by Assumption \ref{as5} and \ref{as8}
\begin{eqnarray*}
&&\mathbb E(x_t'\delta_0)^2 p_{u_t | \bigstar}(0)  g_t ^{\prime }\left( \phi-\phi_0\right)1\{   g_t ^{\prime }\left( \phi-\phi_0\right)\leq 0\}   [1\left\{
h_t'\phi_0\leq 0\right\} -1\left\{
h_t'\phi_0> 0\right\} ]\cr
&=&\mathbb Ep_{u_t|\bigstar}(0)(x_t'\delta_0)^2   g_t ^{\prime }\left( \phi-\phi_0\right)1\{   g_t ^{\prime }\left( \phi-\phi_0\right)\leq 0\}   [ 1\left\{
h_t'\phi_0\leq 0\right\} - 1/2 ]\cr
&& + \mathbb Ep_{u_t|\bigstar}(0)(x_t'\delta_0)^2   g_t ^{\prime }\left( \phi-\phi_0\right)1\{   g_t ^{\prime }\left( \phi-\phi_0\right)\leq 0\}   [ 1\left\{
h_t'\phi_0> 0\right\} - 1/2 ]\cr
&\leq&O_P\left( \frac{1}{\sqrt{N}}\right) \mathbb E\left( p_{u_t=0|\bigstar}(0)( x_t'\delta_0)^2 |  g_t ^{\prime }\left( \phi-\phi_0\right) |\right)   \cr
&=&O(\frac{T^{-2\varphi}}{N}),\quad \text{ since }  |\phi-\phi_0|_2<LN^{-1/2}.
  \end{eqnarray*}
  Hence (\ref{e0.3}) can be replaced with $\mathbb E(x_t'\delta_0)^2p_{u_t | \bigstar}(0)a_2' + O(\frac{T^{-2\varphi}}{N})$, where
 $$
a_2'=  g_t ^{\prime }\left( \phi-\phi_0\right)1\{   g_t ^{\prime }\left( \phi-\phi_0\right)\leq 0\} 1\left\{
h_t'\phi_0> 0\right\}  .
 $$
     Similarly, (\ref{e0.4}) can be replaced with $\mathbb E(x_t'\delta_0)^2p_{u_t | \bigstar}(0)a_5' + O(\frac{T^{-2\varphi}}{N})$, where
      $$
a_5'=   g_t ^{\prime }\left( \phi-\phi_0\right)1\{   g_t ^{\prime }\left( \phi-\phi_0\right)> 0\} 1\left\{
h_t'\phi_0<0\right\}   .
 $$
Hence with a careful calculation, up to $O(\frac{1}{T^{2\varphi}N^{0.5+1/(2p)}})$ (which is uniform over $\phi$),  it can be shown that
 \begin{eqnarray} \label{e0.10add}
&& \mathbb E(x_t'\delta_0)^2(A_1-A_3+ A_2-A_4)   \cr
&=& \mathbb E(x_t'\delta_0)^2p_{u_t | \bigstar}(0) (a_1+a_2'+a_3+a_4+a_5'+a_6+a_7+a_8).
\cr
&=&
-2  \mathbb E(x_t'\delta_0)^2p_{u_t | \bigstar}(0)  \left(   g_t^{\prime
}\left( \phi-\phi_0\right) + \frac{h_t'\phi_0}{\sqrt{N}}\right)1\left\{   g_t^{\prime
}\left( \phi-\phi_0\right) + \frac{h_t'\phi_0}{\sqrt{N}}<0\right\}1\left\{ h_t'\phi_0>0\right\}\cr
&&+2 \mathbb E(x_t'\delta_0)^2  p_{u_t | \bigstar}(0) \left(    g_t'\left(
\phi  -\phi _{0}\right) +\frac{ h_t'\phi_0}{\sqrt{N}}\right)1\left\{    g_t'\left(
\phi  -\phi _{0}\right) + \frac{h_t'\phi_0}{\sqrt{N}}> 0\right\} 1\left\{h_t'\phi_0
\leq 0\right\}.
\cr
  \end{eqnarray}

Let
$$
R=-\frac{h_t'\phi_0}{\sqrt{N}   g_t^{\prime
}\left( \phi  -\phi _{0}\right)}.
$$

Recall that $\sqrt{N}|\phi -\phi_0|\leq L$. Fix any $M_0>0$, we choose $\epsilon>0$ so that when $|  g_t|_2<M_0$, then
 $
 |(1-\epsilon)\sqrt{N}  g_t'(\phi -\phi_0)|\leq (1-\epsilon)LM_0 $, so that $(1-\epsilon)\sqrt{N}  g_t'(\phi -\phi_0)$  is inside the neighborhood of zero on which the conditional density of $h_t'\phi_0$ given  $(  g_t, x_t)$  is bounded away from zero.
     Thus
  almost surely,
 $$
 \mathbb P\left\{   0<h_t'\phi_0< -(1-\epsilon)\sqrt{N} g_t'(\phi -\phi_0)| x_t, g_t\right\}  \geq
 c |\sqrt{N}  g_t'(\phi -\phi_0) |.
 $$
 So up to $O(\frac{1}{T^{2\varphi}N^{0.5+1/(2p)}})$, by Assumption \ref{as8},
  \begin{eqnarray*}
&& \mathbb E(x_t'\delta_0)^2(A_1-A_3+ A_2-A_4)  \cr
&= &
-2  \mathbb E(x_t'\delta_0)^2p_{u_t | \bigstar}(0)   g_t^{\prime
}\left( \phi-\phi_0\right) (1-R) 1\left\{   0<R<1\right\}  1\left\{   h_t'\phi_0 >0\right\}  \cr
&&+2 \mathbb E(x_t'\delta_0)^2p_{u_t | \bigstar}(0)   g_t'\left(
\phi  -\phi _{0}\right)   \left( 1-R\right)1\left\{ 0<R<1\right\} 1\left\{h_t'\phi_0
\leq 0\right\}
\cr
&\geq &
-2\epsilon T^{-2\varphi} \mathbb E   g_t^{\prime
}\left( \phi-\phi_0\right)    1\left\{   h_t'\phi_0 >0\right\}   1\left\{   0<R<1-\epsilon\right\}1\{|  g_t|_2<M_0\}  \cr
&&+2 \epsilon T^{-2\varphi}\mathbb E   g_t'\left(
\phi  -\phi _{0}\right)   1\left\{h_t'\phi_0
\leq 0\right\}  1\left\{   0<R<1-\epsilon\right\} 1\{|  g_t|_2<M_0\}
\cr
&\geq &
 2\epsilon T^{-2\varphi} \mathbb E      1\left\{   h_t'\phi_0 >0\right\}   1\{|  g_t|_2<M_0\}   c \sqrt{N} | g_t'(\phi -\phi_0) |^2
\cr
&&+2 \epsilon T^{-2\varphi} \mathbb E      1\left\{h_t'\phi_0
\leq 0\right\}    1\{|  g_t|_2<M_0\} c\sqrt{N} |  g_t'(\phi -\phi_0) |^2
\cr
&=&2c \epsilon T^{-2\varphi}\sqrt{N} \mathbb E     |  g_t'(\phi -\phi_0) |^2    1\{|  g_t|_2<M_0\} \cr
&\geq & CT^{-2\varphi} \sqrt{N} |\phi -\phi_0|_2^2,
  \end{eqnarray*}
  where the last ineqaulity follows since    the minimum eigenvalue of
$  \mathbb E  \left( x_{t}^{\prime }d _{0}\right) ^{2} g_tg_t' 1\{|g_t|_2<M_0\} $
is bounded away from zero. It then implies
$$
 \mathbb E(x_t'\delta_0)^2(A_1-A_3+ A_2-A_4)\geq C\sqrt{N}T^{-2\varphi} |\phi -\phi_0|_2^2 -O(\frac{1}{T^{2\varphi}N^{0.5+1/(2p)}}),\quad p=1.5.
$$
\end{proof}

\begin{pro}\label{rate:est}
Suppose $T=O(N)$, the first components of $\gamma_0,\widehat\gamma$ are one.
$$
 |\widehat\phi-\phi_0|_2 \leq O_P\left(\frac{1}{T^{1-2\varphi}}+\frac{1}{\left( NT^{1-2\varphi }\right) ^{1/3}}\right).
$$

\end{pro}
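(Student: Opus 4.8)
The plan is to combine the rate-of-convergence argument that culminated in inequality \eqref{e:iter2} of the preliminary-rate proposition with the sharp quadratic lower bound on $\mathds G_1$ from Lemma~\ref{l:generic}, treating the two regimes $\sqrt N \gtrsim T^{1-2\varphi}$ and $\sqrt N = o(T^{1-2\varphi})$ separately. In the first regime, Proposition~\ref{p:pre} (and its remark) already gives $|\widehat\phi-\phi_0|_2 = O_P(T^{-(1-2\varphi)})$, which is dominated by the claimed bound $T^{-(1-2\varphi)} + (NT^{1-2\varphi})^{-1/3}$; so nothing further is needed there. The work is entirely in the complementary regime $N = o(T^{2-4\varphi})$, where the kink lower bound of Lemma~\ref{l.2} is too crude and must be replaced by the quadratic one.

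So assume $N = o(T^{2-4\varphi})$. First I would re-run the basic inequality chain leading to \eqref{e:prer}, namely
\[
(C_0 - o_P(1))|\widehat\alpha - \alpha_0|_2^2 + \mathds G_1(\widehat\phi)
\leq \mathds G_2(\widehat\phi) + \mathds C_1(\delta_0,\widehat\phi) + (\text{factor-error terms}) + C_1 T^{-\varphi}|\widehat\phi - \phi_0|_2 |\widehat\alpha - \alpha_0|_2,
\]
but now I would bound the right-hand side using the \emph{sharp} versions available when $\sqrt N = O(T^{1-2\varphi})$: by Lemma~\ref{l.1}, $\mathds G_2(\widehat\phi) \le a_{NT} T^{-\varphi}$ and $|\mathds C_1(\delta_0,\widehat\phi)| \le a_{NT}$ with
$a_{NT} = T^{-2\varphi} O_P\!\big(\sqrt N/(NT^{1-2\varphi})^{2/3}\big) + \eta T^{-2\varphi}\sqrt N |\widehat\phi-\phi_0|_2^2$. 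By the preliminary rate we already know $|\widehat\phi-\phi_0|_2 = O_P(T^{-(1-2\varphi)} + N^{-1/2})$, so the cross term $C_1 T^{-\varphi}|\widehat\phi-\phi_0|_2|\widehat\alpha-\alpha_0|_2$ and the factor-error contributions $O_P(\Delta_f + T^{-6})T^{-\varphi}$ (negligible under $T=O(N)$) can be absorbed into $C_0|\widehat\alpha-\alpha_0|_2^2/2$ plus a lower-order term exactly as in the proof of Proposition~\ref{p:pre}. For the left side, I apply Lemma~\ref{l:generic}, which gives
\[
\mathds G_1(\widehat\phi) \geq C T^{-2\varphi}\sqrt N\, |\widehat\phi - \phi_0|_2^2 - O\!\Big(\frac{1}{T^{2\varphi} N^{5/6}}\Big),
\]
valid because $\sqrt N |\widehat\phi-\phi_0|_2 = O_P(1)$ (from the preliminary rate), so $\widehat\phi$ lies in the region $\{\sqrt N|\phi-\phi_0|_2 \le L\}$ with high probability for large $L$ — this is the hypothesis under which Lemma~\ref{l:generic} applies. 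Choosing $\eta$ small so the $\eta T^{-2\varphi}\sqrt N|\widehat\phi-\phi_0|_2^2$ term from $a_{NT}$ is absorbed by the left-hand quadratic term, I arrive at
\[
C T^{-2\varphi}\sqrt N\, |\widehat\phi-\phi_0|_2^2 \leq O_P\!\Big(\frac{T^{-2\varphi}\sqrt N}{(NT^{1-2\varphi})^{2/3}} + \frac{1}{T^{2\varphi}N^{5/6}}\Big) + (\text{lower order}),
\]
which after dividing by $C T^{-2\varphi}\sqrt N$ yields $|\widehat\phi-\phi_0|_2^2 = O_P\big((NT^{1-2\varphi})^{-2/3} + N^{-4/3}\big)$; since $T = O(N)$ forces $N^{-4/3} \le (NT^{1-2\varphi})^{-2/3}$ (as $N \gtrsim T \gtrsim T^{1-2\varphi}$), this is $O_P((NT^{1-2\varphi})^{-2/3})$, i.e. $|\widehat\phi-\phi_0|_2 = O_P((NT^{1-2\varphi})^{-1/3})$, which is dominated by the stated bound in this regime.

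Combining the two regimes (using $a\wedge b$-type bookkeeping, or simply taking the max of the two regime bounds since the stated rate is $T^{-(1-2\varphi)} + (NT^{1-2\varphi})^{-1/3}$ in both), and noting that $|\widehat\gamma - \gamma_0|_2 \le |H_T^{-1}|_2\,|\widehat\phi - \phi_0|_2 = O_P(1)\cdot|\widehat\phi-\phi_0|_2$ transfers the rate back to the original parametrization (completing Theorem~\ref{thm:rate}), gives the result. The main obstacle is not any single estimate but the careful accounting of which lower/upper bounds are "sharp enough" in which regime: one must verify that the preliminary rate from Proposition~\ref{p:pre} is strong enough to (i) place $\widehat\phi$ in the domain of validity of Lemma~\ref{l:generic}, (ii) make the cross term $T^{-\varphi}|\widehat\phi-\phi_0|_2|\widehat\alpha-\alpha_0|_2$ truly lower-order after the quadratic absorption, and (iii) control the error term $O(T^{-2\varphi}N^{-5/6})$ relative to the main term $T^{-2\varphi}\sqrt N |\widehat\phi-\phi_0|_2^2 \asymp T^{-2\varphi}(NT^{1-2\varphi})^{-2/3}\sqrt N$ — i.e. checking $N^{-5/6} = o\big(\sqrt N (NT^{1-2\varphi})^{-2/3}\big)$, which reduces to $T^{1-2\varphi} = o(N^{2})$, automatic under $T = O(N)$. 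Each of these is a routine but necessary sanity check, and getting the exponent bookkeeping exactly right (the $5/6$, $2/3$, $4/3$ powers) is where care is needed.
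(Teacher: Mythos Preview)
Your proposal is correct and follows essentially the same route as the paper: split into the regimes $T^{1-2\varphi}=O(\sqrt N)$ (handled by Proposition~\ref{p:pre}) and $\sqrt N=O(T^{1-2\varphi})$, then in the latter regime feed the quadratic lower bound from Lemma~\ref{l:generic} and the sharp $a_{NT}$ bounds from Lemma~\ref{l.1} back into \eqref{e:prer}. The one technical point you elide is that Lemma~\ref{l:generic} also requires $\widehat\phi$ to be linearly independent of $\phi_0$; the paper handles this by noting that the normalization $(\gamma_0)_1=(\widehat\gamma)_1=1$ forces $\widehat\phi=\phi_0$ whenever $\widehat\phi-\phi_0$ is parallel to $\phi_0$, so the dependent case is trivial.
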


\begin{proof}

Proposition \ref{p:pre} shows $ |\widehat\phi-\phi_0|_2=O_P(T^{-(1-2\varphi)}+ N^{-1/2}).$
 When $T^{1-2\varphi}=O(\sqrt{N})$, the above upper bound leads to
 \begin{equation}\label{e6.6}
  |\widehat\phi-\phi_0|_2 \leq O_P(\frac{1}{T^{1-2\varphi}})   .
 \end{equation}
When $\sqrt{N}=O(T^{1-2\varphi} )$, the above upper bound leads to
$ |\phi-\phi_0|_2 \leq O_P( \frac{1}{\sqrt{N}})   .
$
We now improve this bound in the case $\sqrt{N}=O(T^{1-2\varphi} )$. In this case,
For an arbitrarily small $\epsilon>0$, there is $C_e>0$, with probability at least $1-\epsilon$, $ |\phi-\phi_0|_2 \leq   \frac{C_e}{\sqrt{N}}   .
$ We now proceed the argument conditioning on this event. We use the lower bound in Lemma \ref{l:generic} for $ \mathds G_1(\phi)= \mathbb E\left( x_{t}^{\prime }\delta _{0}\right) ^{2}\left( A_{1t}\left( \phi
\right) +A_{2t}\left( \phi \right) -A_{3t}\left( \phi \right)
-A_{4t}\left( \phi \right) \right).$

If $\widehat \phi-\phi_0$ is linearly dependent of $\phi_0$, there is a scalar $c_T$ so that $\widehat\phi-\phi_0=c_T\phi_0$, implying $\widehat\phi=(1+c_T)\phi_0$. Let $(v)_1$ denote the first component of a vector $v$. Then   $1=(H_T^{-1}\widehat\phi)_1=(H_T^{-1}\phi_0)_1(1+c_T)=1+c_T$, implying $c_T=0$. Hence $\widehat\phi=\phi_0$. Hence  we only need to focus on  the case that $\widehat\phi$ is linearly independent of $\phi_0$.
Then Lemma \ref{l:generic}
yields,  for $p=1.5$
$$
  \mathds G_1(\widehat\phi)
\geq CT^{-2\varphi} \sqrt{N} |\widehat\phi-\phi_0|_2^2 - O(\frac{1}{T^{2\varphi}N^{5/6}}).
$$Write
$$
m_{NT}:=T^{-2\varphi}
  \frac{\sqrt{N}}{\left( NT^{1-2\varphi }\right) ^{2/3}}.
$$

Substitute to (\ref{e:prer}), there are $C_1, C_2, C_3>0$,
\begin{eqnarray*}
&&C|\widehat\alpha-\alpha_0|_2^2
+ CT^{-2\varphi} \sqrt{N} |\widehat\phi-\phi_0|_2^2\cr
&\leq& \mathds G_2(\widehat\phi)+\mathds C_{1}( \delta_0, \widehat\phi)
+ O_P(\Delta_f+T^{-6})T^{-\varphi}
 + O_P(\Delta_f+T^{-1/2}+ T^{-\varphi}N^{-1/2}) |\widehat\alpha-\alpha_0|_2
\cr
&&+
 C_1T^{-\varphi }\left|\widehat\phi -\phi_{0}\right\vert _{2}
\left\vert \widehat \alpha -\alpha _{0}\right\vert _{2} +O(\frac{1}{T^{2\varphi}N^{5/6}}).
\end{eqnarray*}
Next,  replaced $ \mathds G_2$ and $\mathds C_{1}$ with their upper bound based on $a_{NT}$ given in Lemma \ref{l.1}.  In addition,
$C_1T^{-\varphi }\left|\widehat\phi -\phi_{0}\right\vert _{2}
\left\vert \widehat \alpha -\alpha _{0}\right\vert _{2} \leq C_1^2T^{-2\varphi}|\widehat\phi-\phi_0|_2^2N^{1/4} + |\widehat\alpha-\alpha_0|_2^2N^{-1/4}.
$
Also note that  $\frac{1}{T^{2\varphi}N^{5/6}}=O(m_{NT})$ as $T=O(N)$,  and $ T^{-1}=O\left(m_{NT}\right) $ when $\sqrt{N}=O(T^{1-2\varphi} )$. 
\begin{eqnarray*}
&&   C|\widehat\alpha-\alpha_0|_2^2/2
+ CT^{-2\varphi} \sqrt{N} |\widehat\phi-\phi_0|_2^2 /2
\cr
&\leq&  O_P(T^{-1/2}+ \Delta_f+ T^{-\varphi} N^{-1/2} )|\widehat\alpha-\alpha_0|_2
+O_P(\Delta_f+T^{-6})T^{-\varphi}+
O_P\left(   m_{NT} \right)\cr
&\leq&  O_P(T^{-1/2}+\Delta_f)|\widehat\alpha-\alpha_0|_2
+  O_P(m_{NT}  +\Delta_f T^{-\varphi}).
\end{eqnarray*}
This implies
$|\widehat\alpha-\alpha_0|_2^2
\leq O_P(m_{NT} +\Delta_fT^{-\varphi} )  $ with $T^{\varphi}\log^\kappa T=O(N)$ for any $\kappa>0$.
  Hence
  \begin{eqnarray*}
T^{-2\varphi} \sqrt{N} |\widehat\phi-\phi_0|_2^2 &\leq&
   O_P(m_{NT} +   T^{-1/2}\Delta_f^{1/2}T^{-\varphi/2}+\Delta_f\sqrt{m_{NT}} + \Delta_f^{3/2}T^{-\varphi/2}+\Delta_fT^{-\varphi}) \cr
   &\leq&O_P(m_{NT})
\end{eqnarray*}
where in the second inequality we assumed $T=O(N)$. 

Hence
$$
|\widehat\phi-\phi_0|_2^2=O_P(T^{2\varphi}N^{-1/2} m_{NT})=O_P\left( \frac{1}{\left( NT^{1-2\varphi }\right) ^{1/3}}\right) ^2.
$$

 Combining with (\ref{e6.6}), we reach
$$
  |\widehat\phi-\phi_0|_2 \leq O_P\left(\frac{1}{T^{1-2\varphi}}+\frac{1}{\left( NT^{1-2\varphi }\right) ^{1/3}}\right).
$$

\end{proof}

\subsection{Consistency of Regime Classification (Proof of Theorem \ref{thm:classification})}

\begin{proof}[Proof of Theorem \ref{thm:classification}]
To begin with, we consider the case of observed factors, $\widehat{f}_{t}=g_{t}$%
, for which we have $\phi _{0}=\gamma _{0}$ and $\widehat{\gamma}-\gamma
_{0}=O_P\left( T^{-1+2\varphi }\right) .$
Then, it suffices to show that
\[
\sup_{\left\vert \gamma -\gamma _{0}\right\vert \leq CT^{-1+2\varphi
}} \frac{1}{T}\sum_{t=1}^{T}
\left\vert 1\left\{
g_{t}^{\prime }\gamma >0\right\} -1\left\{ g_{t}^{\prime }\gamma
_{0}>0\right\} \right\vert
 =O_P\left( T^{-1+2\varphi }\right), \]
for any $C<\infty $.
It follows by noting that for any $\gamma $ satisfying the
normalization of $\gamma _{1}=1$ and for some finite $c$,
\begin{eqnarray*}
&& \mathbb{E}\left\vert 1\left\{
g_{t}^{\prime }\gamma >0\right\} -1\left\{ g_{t}^{\prime }\gamma
_{0}>0\right\} \right\vert \\
  &=&\mathbb{E} \mathbb{P} \left[ \left( g_{2t}^{\prime }\gamma
_{20}<-g_{1t}\leq g_{2t}^{\prime }\gamma _{2}\right) |g_{1t}\right] +\mathbb{E} \mathbb{P}\left[
\left( g_{2t}^{\prime }\gamma _{20}\geq -g_{1t}>g_{2t}^{\prime }\gamma
_{2}\right) |g_{1t}\right]  \\
&\leq &c\mathbb{E}\left\vert g_{2t}^{\prime }\left( \gamma _{2}-\gamma _{20}\right)
\right\vert  \\
&=&O\left( \left\vert \gamma -\gamma _{0}\right\vert _{2}\right),
\end{eqnarray*}%
and
\begin{eqnarray*}
&&\sup_{\left\vert \gamma -\gamma _{0}\right\vert _{2}\leq CT^{-1+2\varphi
}}\left\vert \frac{1}{T}\sum_{t=1}^{T}\left( \left\vert 1\left\{
g_{t}^{\prime }\gamma >0\right\} -1\left\{ g_{t}^{\prime }\gamma
_{0}>0\right\} \right\vert -\mathbb{E}\left\vert 1\left\{ g_{t}^{\prime }\gamma
>0\right\} -1\left\{ g_{t}^{\prime }\gamma _{0}>0\right\} \right\vert
\right) \right\vert  \\
&=&O_P\left( T^{-1+\varphi }\right)
\end{eqnarray*}%
by the maximal inequality in Lemma \ref{Lem:modul1} and the subsequent
remark.

Next, we move to the case of estimated factors. Recall that $\widehat{f}_{t}=H_{T}^{\prime }g_{t}+H_{T}h_{t}/\sqrt{N}$. By the triangle inequality, for any $%
\gamma $
\begin{eqnarray}
\frac{1}{T}\sum_{t=1}^{T}\left\vert 1\left\{ \widetilde{f}_{t}^{\prime }\gamma
>0\right\} -1\left\{ g_{t}^{\prime }\phi _{0}>0\right\} \right\vert  &\leq &%
\frac{1}{T}\sum_{t=1}^{T}\left\vert 1\left\{ \widehat{f}_{t}^{\prime }\gamma
>0\right\} -1\left\{ \widetilde{f}_{t}^{\prime }\gamma >0\right\} \right\vert
\label{eq:classificaiton01} \\
&&+\frac{1}{T}\sum_{t=1}^{T}\left\vert 1\left\{ \widehat{f}_{t}^{\prime }\gamma
_{0}>0\right\} -1\left\{ \widehat{f}_{t}^{\prime }\gamma >0\right\} \right\vert
\nonumber \\
&&+\frac{1}{T}\sum_{t=1}^{T}\left\vert 1\left\{ \widehat{f}_{t}^{\prime }\gamma
_{0}>0\right\} -1\left\{ g_{t}^{\prime }\phi _{0}>0\right\} \right\vert .
\nonumber
\end{eqnarray}%
Proceeding similarly as the case of the observed factors, we get%
\[
\frac{1}{T}\sum_{t=1}^{T}\left\vert 1\left\{ \widehat{f}_{t}^{\prime }\gamma
_{0}>0\right\} -1\left\{ \widehat{f}_{t}^{\prime }\widehat{\gamma}>0\right\}
\right\vert =O_P\left( \frac{\sqrt{\left\vert \widehat{\gamma}-\gamma
_{0}\right\vert _{2}}}{\sqrt{T}}+\left\vert \widehat{\gamma}-\gamma
_{0}\right\vert _{2}\right)
\]%
and%
\begin{eqnarray*}
\frac{1}{T}\sum_{t=1}^{T}\left\vert 1\left\{ \widehat{f}_{t}^{\prime }\gamma
_{0}>0\right\} -1\left\{ g_{t}^{\prime }\phi _{0}>0\right\} \right\vert  &=&%
\frac{1}{T}\sum_{t=1}^{T}\left\vert 1\left\{ g_{t}^{\prime }\phi
_{0}>-h_{t}^{\prime }\phi _{0}/\sqrt{N}\right\} -1\left\{ g_{t}^{\prime
}\phi _{0}>0\right\} \right\vert  \\
&=&O_P\left( \frac{1}{\sqrt{N}}\right) .
\end{eqnarray*}%
For the remaining term in (\ref{eq:classificaiton01}), note that
\begin{eqnarray*}
&&\sup_{\gamma }|\frac{1}{T}\sum_{t=1}^{T}\left\vert 1\{\widetilde{f}%
_{t}^{\prime }\gamma >0\}-1\{\widehat{f}_{t}^{\prime }\gamma >0\}\right\vert
\\
&\leq &\sup_{\gamma }\frac{1}{T}\sum_{t=1}^{T}1\{\widehat{f}_{t}^{\prime
}\gamma <0<\widetilde{f}_{t}^{\prime }\gamma \}+\sup_{\gamma }\frac{1}{T}%
\sum_{t=1}^{T}1\{\widetilde{f}_{t}^{\prime }\gamma <0<\widehat{f}%
_{t}^{\prime }\gamma \}
\end{eqnarray*}%
and that
\begin{eqnarray}
&&\sup_{\left\vert \gamma \right\vert _{2}\leq C}\frac{1}{T}\sum_{t=1}^{T}1\{%
\widehat{f}_{t}^{\prime }\gamma <0<\widetilde{f}_{t}^{\prime }\gamma \}
\label{e7.9-additional} \\
&=&\sup_{\left\vert \gamma \right\vert _{2}\leq C}\frac{1}{T}%
\sum_{t=1}^{T}1\{-|\widehat{f}_{t}-\widetilde{f}_{t}|_{2}C<\widehat{f}%
_{t}^{\prime }\gamma <0\}  \nonumber \\
&\leq &\sup_{\left\vert \gamma \right\vert _{2}\leq C}\frac{1}{T}%
\sum_{t=1}^{T}1\left\{ \left\vert \widehat{f}_{t}^{\prime }\gamma
\right\vert <C\Delta _{f}\right\} +\frac{1}{T}\sum_{t=1}^{T}1\{|\widehat{f}%
_{t}-\widetilde{f}_{t}|_{2}\geq \Delta _{f}\}  \nonumber \\
&\leq &\frac{1}{T}\sum_{t=1}^{T}1\left\{ \inf_{\left\vert \gamma \right\vert
_{2}\leq C}\left\vert \widehat{f}_{t}^{\prime }\gamma \right\vert <C\Delta
_{f}\right\} +O_P\left( 1\right) \mathbb{P}\{|\widehat{f}_{t}-\widetilde{f%
}_{t}|_{2}\geq \Delta _{f}\}  \nonumber \\
&\leq &O_P\left( 1\right) \mathbb{P}\left( \inf_{\left\vert \gamma
\right\vert _{2}\leq C}|\widehat{f}_{t}^{\prime }\gamma |<C\Delta
_{f}\right) +O_P\left( T^{-6}\right)   \nonumber \\
&\leq &O_{P}(\Delta _{f}+T^{-6}),  \nonumber
\end{eqnarray}%
where the first inequality is by the fact that $1\left\{ A\right\} 1\left\{
B\right\} \leq 1\left\{ A\right\} $ for any events $A$ and $B,$ and the
remaining inequalities are by the law of iterated expectations, the rank
condition in Assumption \ref{as9}, and Proposition \ref{prop:f}. Recall in Proposition \ref{prop:f}  that
notation $\Delta _{f}$ is introduced and  $\Delta _{f}=O\left( T^{-1+2\varphi }\right) $ for any $\varphi >0.$

Putting together, and recalling that $\widehat{\gamma}-\gamma _{0}=O_P\left(
\left( NT^{1-2\varphi }\right) ^{-1/3}+T^{-1+2\varphi }\right) ,$ we
conclude that
\[
\sup_{\left\vert \gamma -\gamma _{0}\right\vert \leq CT^{-1+2\varphi
}} \frac{1}{T}\sum_{t=1}^{T}
\left\vert 1\left\{
\widehat{f}_{t}^{\prime }\gamma >0\right\} -1\left\{ f_{t}^{\prime }\gamma
_{0}>0\right\} \right\vert
=O_P\left( T^{-1+2\varphi }\right). \]

\end{proof}

Proof of Theorem \ref{thm:AD with Estimated f} is divided into two subsections, one for the derivation of the asymptotic distribution of $ \widehat{\alpha} $ and  the other for the derivation of the asymptotic distribution of $ \widehat{\gamma} $. The latter will contain the asymptotic independence proof as well.

\subsection{Limiting distribution of $\widehat{\protect\alpha}$ (Proof of Theorem \ref{thm:AD with Estimated f}: Part I)}\label{sec:alpha-hat}

Recall the notation that
$\widehat{Z}_{t}(\gamma )=(x_{t}^{\prime },x_{t}^{\prime }1\{\widehat{f}_{t}^{\prime }\gamma >0\})^{\prime }$,
$\widetilde{Z}_{t}(\gamma)=(x_{t}^{\prime },x_{t}^{\prime }1\{\widetilde{f}_{t}^{\prime }\gamma >0\})^{\prime }$
and
$Z_{t}(\gamma )=(x_{t}^{\prime },x_{t}^{\prime}1\{f_{t}^{\prime }\gamma >0\})^{\prime }$.
In this subsection, define
$A=(\frac{1}{T}\sum_{t}\widetilde{Z}_{t}(\widehat{\gamma })\widetilde{Z}%
_{t}(\widehat{\gamma })^{\prime} )^{-1}.
$
Then write
\begin{align*}
\widehat{\alpha }&= \left[ \frac{1}{T}\sum_{t}\widetilde{Z}_{t}(\widehat{\gamma })%
\widetilde{Z}_{t}(\widehat{\gamma })^{\prime} \right]^{-1}\frac{1}{T}\sum_{t}%
\widetilde{Z}_{t}(\widehat{\gamma })y_{t}\cr
&=\alpha _{0}+(\frac{1}{T}%
\sum_{t}Z_{t}(\gamma _{0})Z_{t}(\gamma _{0})^{\prime} )^{-1}\frac{1}{T}%
\sum_{t}Z_{t}(\gamma _{0})\varepsilon _{t}+\sum_{l=1}^{5}a_{l},
\end{align*}
where
\begin{align*}
 a_{1}&=A\frac{1}{T}\sum_{t}\widetilde{Z%
}_{t}(\widehat{\gamma })[Z_{t}(\gamma _{0})-\widetilde{Z}_{t}(\gamma
_{0})]^{\prime }\alpha _{0},\cr a_{2}&=A\frac{1}{T}\sum_{t}\widetilde{Z}_{t}(%
\widehat{\gamma })[\widetilde{Z}_{t}(\gamma _{0})-\widetilde{Z}_{t}(\widehat{%
	\gamma })]^{\prime }\alpha _{0},\cr a_{3}&=A\frac{1}{T}\sum_{t}[\widetilde{Z}%
_{t}(\widehat{\gamma })-\widetilde{Z}_{t}(\gamma _{0})]\varepsilon _{t},\cr %
a_{4}&=A\frac{1}{T}\sum_{t}[\widetilde{Z}_{t}(\gamma _{0})-Z_{t}(\gamma
_{0})]\varepsilon _{t},\cr a_{5}&= \left[ A- \left(\frac{1}{T}\sum_{t}Z_{t}(\gamma
_{0})Z_{t}(\gamma _{0})^{\prime} \right)^{-1} \right]\frac{1}{T}\sum_{t}Z_{t}(\gamma
_{0})\varepsilon _{t}.
\end{align*}
In view of Lemma \ref{l:relf}, the fact that $\mathbb{P}(|\widetilde
f_t-\widehat f_t|_2>C\Delta_f)\leq O(T^{-6}) $ implies $A- (\frac{1}{T}%
\sum_t Z_t( \gamma_0) Z_t( \gamma_0) ^{\prime} )^{-1}=o_P(1)$, since $%
\widehat\gamma-\gamma_0=o_P(1)$ and a ULLN applies. Hence $A=O_P(1)$ and $%
a_5=o_P(T^{-1/2})$ by the MDS CLT. Furthermore, Lemma \ref{l7.7} below
implies $\sqrt{T}\sum_{l=1}^4a_l=o_P(1)$. Hence
\begin{equation*}
\sqrt{T}(\widehat\alpha- \alpha_0)= (\frac{1}{T}\sum_t Z_t( \gamma_0) Z_t(
\gamma_0) ^{\prime} )^{-1} \frac{1}{\sqrt{T}}\sum_t Z_t(\gamma_0) \varepsilon_t
+o_P(1).
\end{equation*}
This leads to the desired strong oracle limiting distribution.

Define
\begin{equation}
r_{NT}:=\left( NT^{1-2\varphi }\right) ^{1/3}\wedge T^{1-2\varphi }. \label{eq:r_NT}
\end{equation}%


\begin{lem}
	\label{l7.7} Suppose that $T=O(N)$, the conditional density of $%
	f_t^{\prime }\gamma_0$ given $h_t,x_t$ is bounded a.s. and the density of $%
	\inf_{\gamma \in \Gamma_T} |(g_t+h_tN^{-1/2})^{\prime }\gamma|$ is bounded,
	where $\Gamma_T $ is a $r_{NT}^{-1} $-neighborhood of $\gamma_{0} $. Then,
	\newline
	(i) $\frac{1}{T}\sum_t\widetilde Z_t
	(\widehat\gamma)[Z_t(\gamma_0)-\widetilde Z_t( \gamma_0)]^{\prime }\alpha_0
	= o_P(T^{-1/2})$,\newline
	(ii) $\frac{1}{T}\sum_t\widetilde Z_t (\widehat\gamma)[\widetilde
	Z_t(\gamma_0)-\widetilde Z_t(\widehat \gamma)]^{\prime }\alpha_0 =
	o_P(T^{-1/2})$,\newline
	(iii) $\frac{1}{T}\sum_t[ \widetilde Z_t( \widehat\gamma) - \widetilde
	Z_t(\gamma_0)] \varepsilon_t=o_P(T^{-1/2}),$\newline
	(iv) $\frac{1}{T}\sum_t[ \widetilde Z_t(\gamma_0) - Z_t(\gamma_0) ]
	\varepsilon_t =o_P(T^{-1/2})$.
\end{lem}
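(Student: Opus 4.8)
The plan is to establish Lemma \ref{l7.7} by reducing each of the four bounds to a control on how often the indicator functions $1\{\widetilde f_t'\gamma>0\}$ change sign relative to $1\{f_t'\gamma_0>0\}$, and then to use the rate of convergence $\widehat\gamma-\gamma_0=O_P(r_{NT}^{-1})$ established in Theorem \ref{thm:rate} (equivalently Proposition \ref{rate:est}) together with the probability bound on $|\widetilde f_t-\widehat f_t|_2$ from Proposition \ref{prop:f}. The common mechanism is: each summand is a product of a martingale-difference-like or mean-zero term with an indicator difference that is supported on a small set; a maximal inequality (Lemma \ref{Lem:modul1}) plus the smallness of that set gives a rate faster than $T^{-1/2}$.

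First I would handle parts (iii) and (iv), which are the cleanest. For (iv), $\frac{1}{T}\sum_t[\widetilde Z_t(\gamma_0)-Z_t(\gamma_0)]\varepsilon_t$ equals $\frac{1}{T}\sum_t x_t\varepsilon_t(1\{\widetilde f_t'\gamma_0>0\}-1\{f_t'\gamma_0>0\})$; splitting off the contribution of $|\widetilde f_t-\widehat f_t|_2>\Delta_f$ (which has probability $O(T^{-6})$ by Proposition \ref{prop:f}) and using $\widehat f_t'\gamma_0 = \breve g_t'\phi_0 = g_t'\phi_0+h_t'\phi_0/\sqrt N$, the indicator difference is nonzero only on the event $\{0<|f_t'\gamma_0| < C|h_t|_2/\sqrt N\} \cup \{|\widehat f_t'\gamma_0|<C\Delta_f\}$, whose conditional probability given $(x_t,h_t)$ is $O(|h_t|_2/\sqrt N + \Delta_f)$ by the boundedness of the conditional density of $f_t'\gamma_0$. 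Since $\mathbb E|x_t\varepsilon_t|\le C$ a.s.\ and the summands form (conditionally) a martingale difference sequence, a direct second-moment bound gives $O_P(N^{-1/2}T^{-1/2}+\Delta_f^{1/2}T^{-1/2}) = o_P(T^{-1/2})$, using $T=O(N)$ and $\Delta_f=O((\log T)^c/T)$. For (iii), $\frac1T\sum_t[\widetilde Z_t(\widehat\gamma)-\widetilde Z_t(\gamma_0)]\varepsilon_t = \frac1T\sum_t x_t\varepsilon_t(1\{\widetilde f_t'\widehat\gamma>0\}-1\{\widetilde f_t'\gamma_0>0\})$; I replace $\widetilde f_t$ by $\widehat f_t$ at cost $O_P(\Delta_f+T^{-6})$ and then apply the maximal inequality of Lemma \ref{Lem:modul1} over $\{|\gamma-\gamma_0|_2\le C r_{NT}^{-1}\}$ to the empirical process $\frac{1}{\sqrt T}\sum_t x_t\varepsilon_t(1\{\widehat f_t'\gamma>0\}-1\{\widehat f_t'\gamma_0>0\})$, which is $O_P(r_{NT}^{-1/2})$; dividing by $\sqrt T$ and noting $r_{NT}\to\infty$ gives $o_P(T^{-1/2})$.

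Next I would treat (i) and (ii), which involve $\alpha_0$ rather than $\varepsilon_t$, so the leading term is deterministic rather than mean zero and I must extract an explicit order. In (i), $Z_t(\gamma_0)-\widetilde Z_t(\gamma_0) = (0', x_t'(1\{f_t'\gamma_0>0\}-1\{\widetilde f_t'\gamma_0>0\}))'$, so the sum is bounded by $\frac{C}{T}\sum_t |x_t|_2^2 \,|1\{f_t'\gamma_0>0\}-1\{\widetilde f_t'\gamma_0>0\}|$; by the same event decomposition as in (iv), a first-moment ULLN bound gives $O_P(N^{-1/2}+\Delta_f) = o_P(T^{-1/2})$ whenever $T=o(N)$, and for the boundary regime one notes the classification-error bound of Theorem \ref{thm:classification} already gives this term at rate $O_P(N^{-1/2})$. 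For (ii), $\widetilde Z_t(\gamma_0)-\widetilde Z_t(\widehat\gamma)$ is supported on sign changes of $\widetilde f_t'\gamma$ between $\gamma_0$ and $\widehat\gamma$; again replacing $\widetilde f_t$ by $\widehat f_t$ and using $|\widehat\gamma-\gamma_0|_2=O_P(r_{NT}^{-1})$ with the boundedness of the conditional density of $f_t'\gamma_0$ (to convert a small perturbation of $\gamma$ into a small probability), together with a ULLN on $\frac1T\sum_t|x_t|_2^2\,1\{\text{sign change}\}$, gives an order $O_P(r_{NT}^{-1})$, which is $o_P(T^{-1/2})$ precisely because $r_{NT}\ge T^{1-2\varphi}\cdot$(stuff)$\,\gg\sqrt T$ when the effective rate dominates $\sqrt T$; in the regime where $r_{NT}=(NT^{1-2\varphi})^{1/3}$ one has $r_{NT}^{-1}=o(T^{-1/2})$ iff $N\gg T^{1-2\varphi}$, which is exactly the case covered here (otherwise $r_{NT}=T^{1-2\varphi}$ and $1-2\varphi\ge 1/2$ fails—so the argument must instead route through the sharper classification rate of Theorem \ref{thm:classification}, $\widehat R_T=O_P(r_{NT}^{-1}+N^{-1/2})$, combined with $\frac1T\sum_t|x_t|_2^2\mathbf{1}\{\cdot\}\le \widehat R_T\cdot O_P(1)$, and then one checks $r_{NT}^{-1}+N^{-1/2}=o(T^{-1/2})$ holds under $T=O(N)$ and $\varphi\in(0,1/2)$ by a short case analysis).

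The main obstacle I expect is making the $o_P(T^{-1/2})$ bound in parts (i) and (ii) genuinely uniform over the $r_{NT}^{-1}$-neighborhood while also being sharp enough: a crude bound of the indicator-difference set gives only $O_P(|\widehat\gamma-\gamma_0|_2)=O_P(r_{NT}^{-1})$, and $r_{NT}^{-1}$ is NOT always $o(T^{-1/2})$ across the full phase diagram (it can be as large as $T^{-(1-2\varphi)}$, which exceeds $T^{-1/2}$ when $\varphi>1/4$). Resolving this requires the extra factor of smallness coming either from the mean-zero structure (as in (iii)–(iv), where the maximal inequality gives the square-root improvement $r_{NT}^{-1/2}T^{-1/2}$ instead of $r_{NT}^{-1}$—but (i)–(ii) have no mean-zero multiplier) or from a second application of the smoothing effect of the cross-sectional noise. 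I anticipate the cleanest route for (i)–(ii) is to go directly through Theorem \ref{thm:classification}: the quantity $\frac1T\sum_t|x_t|_2^2|1\{\widetilde f_t'\widehat\gamma>0\}-1\{f_t'\gamma_0>0\}|$ is bounded by $O_P(1)\cdot\widehat R_T$ after a ULLN, and $\widehat R_T=O_P((NT^{1-2\varphi})^{-1/3}+T^{-1+2\varphi}+N^{-1/2})$, each term of which one verifies is $o(T^{-1/2})$ under $T=O(N)$ and $\varphi\in(0,1/2)$ — and this same bound dominates the pieces arising in (i) as well. Once all four pieces are $o_P(T^{-1/2})$, the decomposition $\widehat\alpha=\alpha_0+(\frac1T\sum Z_t(\gamma_0)Z_t(\gamma_0)')^{-1}\frac1T\sum Z_t(\gamma_0)\varepsilon_t+\sum_{l=1}^5 a_l$ together with $a_5=o_P(T^{-1/2})$ (from the consistency of $A$ and the MDS CLT) yields $\sqrt T(\widehat\alpha-\alpha_0)\to^d\mathcal N(0,(\mathbb E Z_t(\gamma_0)Z_t(\gamma_0)')^{-1}\mathbb E(Z_t(\gamma_0)Z_t(\gamma_0)'\varepsilon_t^2)(\mathbb E Z_t(\gamma_0)Z_t(\gamma_0)')^{-1})$ by the martingale difference central limit theorem of \citet{Hall-Heyde}.
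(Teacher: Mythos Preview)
Your handling of parts (iii) and (iv) is essentially correct and matches the paper: split off the $\widetilde f_t\to\widehat f_t$ error via Proposition~\ref{prop:f}, then use either the maximal inequality (for (iii)) or a direct variance computation exploiting the MDS property of $\varepsilon_t$ (for (iv)).

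There is, however, a genuine gap in (i) and (ii). You bound the summand by $\frac{C}{T}\sum_t|x_t|_2^2\,|1\{\cdot\}-1\{\cdot\}|$ and then try to show this is $o_P(T^{-1/2})$. That last step fails: your ULLN bound gives $O_P(N^{-1/2})$ for (i) and $O_P(r_{NT}^{-1})$ for (ii), and neither is $o(T^{-1/2})$ across the full parameter range. You correctly flag this as ``the main obstacle,'' but your proposed fix---routing through $\widehat R_T$ from Theorem~\ref{thm:classification}---does not work either: $\widehat R_T=O_P((NT^{1-2\varphi})^{-1/3}+T^{-1+2\varphi}+N^{-1/2})$, and for $\varphi\ge 1/4$ the term $T^{-1+2\varphi}\ge T^{-1/2}$, while for $N\asymp T$ the term $N^{-1/2}\asymp T^{-1/2}$. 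So $\widehat R_T$ is in general only $O_P(T^{-1/2})$, not $o_P(T^{-1/2})$.

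The missing idea is the diminishing-threshold factor $|\delta_0|_2=|d_0|_2T^{-\varphi}$ from Assumption~\ref{A:diminishing dt}. In both (i) and (ii), the difference $[Z_t(\gamma_0)-\widetilde Z_t(\gamma_0)]'\alpha_0$ (or the analogous difference in (ii)) equals $x_t'\delta_0$ times an indicator difference, because only the second block of $Z_t$ changes. Hence the crude bound picks up an extra $T^{-\varphi}$:
\[
\Big|\tfrac{1}{T}\sum_t\widetilde Z_{jt}(\widehat\gamma)\,x_t'\delta_0\,(1\{\cdot\}-1\{\cdot\})\Big|
\le \frac{|\delta_0|_2}{T}\sum_t|x_t|_2^2\,|1\{\cdot\}-1\{\cdot\}|
=O_P\big(T^{-\varphi}\big)\cdot\widehat R_T\text{-type term}.
\]
For (i) this gives $T^{-\varphi}(N^{-1/2}+\Delta_f)$; since $T=O(N)$ implies $T^{1-2\varphi}=T\cdot T^{-2\varphi}=O(N)\cdot o(1)=o(N)$, we get $T^{-\varphi}N^{-1/2}=o(T^{-1/2})$. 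For (ii) it gives $T^{-\varphi}r_{NT}^{-1}(\log T)^c$; one checks $T^{1/2-\varphi}r_{NT}^{-1}(\log T)^c\to 0$ under $T=O(N)$ in both regimes of $r_{NT}$ (this reduces to $T^{1-2\varphi}(\log T)^{c'}=o(N^2)$, which holds). This is exactly what the paper does. Once you restore the $T^{-\varphi}$ factor, your outline for (i)--(ii) goes through without any appeal to Theorem~\ref{thm:classification}.
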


\begin{proof}[Proof of Lemma \ref{l7.7}]
	(i) For each $j $,
	\begin{align*}
		&\left| \frac{1}{T}\sum_t\widetilde Z_{jt}
		(\widehat\gamma)[Z_t(\gamma_0)-\widetilde Z_t( \gamma_0)]^{\prime }\alpha_0
		\right| \\
		&=\left| \frac{1}{T}\sum_t\widetilde Z_{jt} (\widehat\gamma)
		x_t^{\prime }\delta_0(1\{f_t^{\prime }\gamma_0>0\}-1\{\widetilde f_t^{\prime
		}\gamma_0>0\})\right| \cr &\leq \frac{|\delta_0|_2 }{T}\sum_t|x_t|_2^2
		|1\{f_t^{\prime }\gamma_0>0\}-1\{\widetilde f_t^{\prime }\gamma_0>0\}| \cr %
		&\leq \frac{|\delta_0|_2 }{T}\sum_t|x_t|_2^2 1\{-|f_t-\widetilde f_t|
		_2|\gamma_0|_2<f_t^{\prime }\gamma_0<0 \} + \frac{|\delta_0|_2 }{T}%
		\sum_t|x_t|_2^2 1\{0<f_t^{\prime }\gamma_0<|f_t-\widetilde f_t|
		_2|\gamma_0|_2 \}.
	\end{align*}
	We bound the first term on the right hand side, and the second term follows
	from a similar argument. In view of Lemma \ref{l:relf} and the boundedness
	of the conditional density of $f_t^{\prime }\gamma_{0} $,
	\begin{align*}
		& \frac{|\delta_0|_2 }{\sqrt{T}}\sum_t|x_t|_2^2 1\{-|f_t-\widetilde f_t|
		_2|\gamma_0|_2<f_t^{\prime }\gamma_0<0 \} \cr &\leq \frac{C }{%
			T^{1/2+\varphi}}\sum_t|x_t|_2^2 1\{-C(\Delta_f+|\frac{h_t}{\sqrt{N}}%
		|_2)<f_t^{\prime }\gamma_0<0 \} +\frac{C }{T^{1/2+\varphi}}%
		\sum_t|x_t|_2^21\{|\widetilde f_t-\widehat f_t|>C\Delta_f\}\cr &\leq
		O_P(T^{1/2-\varphi} ) \mathbb E\left( |x_t|_2^2 \mathbb{P}\{-C(\Delta_f+|\frac{h_t}{%
			\sqrt{N}}|_2)<f_t^{\prime }\gamma_0<0 \left|h_t, x_t \right. \} \right)
		+o_P(1)\cr &\leq O_P(T^{1/2-\varphi} ) \left( \Delta_f  \mathbb E\left( |x_t|_2^2
		\right) + \mathbb   E|x_t|_2^2 |h_t|_2 \frac{1}{\sqrt{N}}\right) +o_P(1) \\
		&=o_P(1),
	\end{align*}
	provided that $T=O(N)$. Hence $\frac{1}{T}\sum_t\widetilde Z_t
	(\widehat\gamma)[Z_t(\gamma_0)-\widetilde Z_t( \gamma_0)]
	\alpha_0=o_P(T^{-1/2}).$

	(ii) For each $j$,
	\begin{align*}
	&\left\vert \frac{1}{T}\sum_{t}\widetilde{Z}_{jt}(\widehat{\gamma })[%
	\widetilde{Z}_{t}(\gamma _{0})-\widetilde{Z}_{t}(\widehat{\gamma })]\alpha
	_{0}\right\vert  \\
&\leq \frac{|\delta
		_{0}|_{2}}{T}\sum_{t}|x_{t}|^{2}1\{0<\widehat{f}_{t}^{\prime }\gamma _{0}<|%
	\widehat{f}_{t}|_{2}|\gamma _{0}-\widehat{\gamma }|_{2}\}+\sup_{\gamma \in
		\Gamma _{T}}\frac{2\sqrt{2}|\delta _{0}|_{2}}{T}\sum_{t}|x_{t}|^{2}1\{0<%
	\widehat{f}_{t}^{\prime }\gamma <|\widehat{f}_{t}-\widetilde{f}%
	_{t}|_{2}|\gamma |_{2}\}\cr
	&+\frac{|\delta _{0}|_{2}}{T}\sum_{t}|x_{t}|^{2}1%
	\{-|\widehat{f}_{t}|_{2}|\gamma _{0}-\widehat{\gamma }|_{2}<\widehat{f}%
	_{t}^{\prime }\gamma _{0}<0\}\cr
	&+\sup_{\gamma \in \Gamma _{T}}\frac{2|\delta
		_{0}|_{2}}{T}\sum_{t}|x_{t}|^{2}1\{-|\widehat{f}_{t}-\widetilde{f}%
	_{t}|_{2}|\gamma |_{2}<\widehat{f}_{t}^{\prime }\gamma <0\}.
	\end{align*}%
	We bound the first two terms on the right hand side; the other two terms can be bounded similarly and thus details are omitted.
	Note that with probability at least $1-o(T^{-1})$, there
	is $c>0$, uniformly in $t$,
	\begin{equation}\label{e9.21}
	|\widehat{f}_{t}|_{2}\leq |H_{T}g_{t}|_{2}+|H_{T}h_{t}|_{2}N^{-1/2}<c(\log
	T)^{c}.
	\end{equation}%
	Moreover, for any $\epsilon >0$, $\mathbb{P}\left\{ |\widehat{\gamma }-\gamma
	_{0}|_{2}>\epsilon r_{NT}^{-1}\log T\right\} \rightarrow 0.$ Thus
	\begin{eqnarray*}
		&&\sqrt{T}\frac{|\delta _{0}|_{2}}{T}\sum_{t}|x_{t}|^{2}1\{0<\widehat{f}%
		_{t}^{\prime }\gamma _{0}<|\widehat{f}_{t}|_{2}|\gamma _{0}-\widehat{\gamma }%
		|_{2}\} \\
		&=&\frac{|\delta _{0}|_{2}}{\sqrt{T}}\sum_{t}|x_{t}|^{2}1\{0<\widehat{f}%
		_{t}^{\prime }\gamma _{0}<c(\log T)^{c}|\gamma _{0}-\widehat{\gamma }%
		|_{2}\}+o_P\left( 1\right) \\
		&=&\frac{|\delta _{0}|_{2}}{\sqrt{T}}\sum_{t}|x_{t}|^{2}1\{0<\widehat{f}%
		_{t}^{\prime }\gamma _{0}<c(\log T)^{c+1}\epsilon r_{NT}^{-1}\}+o_P\left(
		1\right) .
	\end{eqnarray*}
However, due to the boundedness of the conditional density of $\widehat{f}%
	_{t}^{\prime }\gamma _{0},$
	\begin{eqnarray*}
		&&\mathbb{E}\frac{|\delta _{0}|_{2}}{\sqrt{T}}\sum_{t}|x_{t}|^{2}1\left\{ 0<%
		\widehat{f}_{t}^{\prime }\gamma _{0}<c^{\prime }(\log T)^{c+1}r_{NT}^{-1}\right\}
		\\
		&\leq &T^{1/2-\varphi }\mathbb{E}\left[ \mathbb{P}\left\{ (0<\widehat{f}%
		_{t}^{\prime }\gamma _{0}<c(\log T)^{c+1}\epsilon r_{NT}^{-1})|x_{t}\right\}
		|x_{t}|^{2}\right] \\
		&\leq &C\epsilon T^{1/2-\varphi }(\log T)^{c+1}r_{NT}^{-1}\mathbb{E}%
		|x_{t}|^{2}\rightarrow 0\text{ so long as }T^{1-2\varphi }(\log
		T)^{6c+1}=o(N^{2}).
	\end{eqnarray*}
It remains to show $\sqrt{T}\sup_{\gamma \in \Gamma _{T}}\frac{2\sqrt{2}%
		|\delta _{0}|_{2}}{T}\sum_{t}|x_{t}|^{2}1\{0<\widehat{f}_{t}^{\prime }\gamma
	<|\widehat{f}_{t}-\widetilde{f}_{t}|_{2}|\gamma |_{2}\}=o_{P}(1)$, which is
	similar to the proof of (i) due to the boundedness of $\gamma $ and thus
	details are omitted.

Note that
	\begin{eqnarray*}
	&&\sqrt{T}\sup_{\gamma \in \Gamma _{T}}\frac{2\sqrt{2}|\delta _{0}|_{2}}{T}%
	\sum_{t}|x_{t}|^{2}1\{0<\widehat{f}_{t}^{\prime }\gamma <|\widehat{f}_{t}-%
	\widetilde{f}_{t}|_{2}|\gamma |_{2}\}\cr
	&\leq& \sqrt{T}\sup_{\gamma \in \Gamma
		_{T}}\frac{2\sqrt{2}|\delta _{0}|_{2}}{T}\sum_{t}|x_{t}|^{2}1\{0<\widehat{f}%
	_{t}^{\prime }\gamma <C\Delta _{f}\}+\sqrt{T}\sup_{\gamma \in \Gamma _{T}}%
	\frac{2\sqrt{2}|\delta _{0}|_{2}}{T}\sum_{t}|x_{t}|^{2}1\{|\widehat{f}_{t}-%
	\widetilde{f}_{t}|_{2}>C\Delta _{f}\}\cr
	&\leq& \sqrt{T}\frac{2\sqrt{2}|\delta
		_{0}|_{2}}{T}\sum_{t}|x_{t}|^{2}1\{\inf_{\gamma }|\widehat{f}_{t}^{\prime
	}\gamma |<C\Delta _{f}\}\leq O_{P}(T^{1/2-\varphi })\mathbb{P}(\inf_{\gamma
	}|\widehat{f}_{t}^{\prime }\gamma |<C\Delta _{f})\cr
	&=&O_{P}(T^{1/2-\varphi
	}\Delta _{f})=o_{P}(1).
	\end{eqnarray*}

	(iii) For each $j,$
	\begin{eqnarray*}
	&&\left\vert \frac{1}{T}\sum_{t}[\widetilde{Z}_{jt}(\widehat{\gamma })-%
	\widetilde{Z}_{jt}(\gamma _{0})]\varepsilon _{t}\right\vert \\
	&&\leq \left\vert \frac{1}{T}\sum_{t}x_{jt}%
	\varepsilon _{t}[1\{\widehat{f}_{t}^{\prime }\widehat{\gamma }>0\}-1\{%
	\widehat{f}_{t}^{\prime }\gamma _{0}>0\}]\right\vert +2\sup_{\gamma \in
		\Gamma _{T}}\left\vert \frac{1}{T}\sum_{t}x_{jt}\varepsilon _{t}[1\{\widehat{%
		f}_{t}^{\prime }\gamma >0\}-1\{\widetilde{f}_{t}^{\prime }\gamma
	>0\}]\right\vert .
	\end{eqnarray*}%
	Note that $\widehat{f}_{t}^{\prime }\gamma =\breve{g}_{t}^{\prime }\phi $ for $\breve{g%
	}_{t}=g_{t}+h_{t}N^{-1/2}$ and $\phi =H^{-1}\gamma $, and $\breve{g}_{t}$ is
	$\rho $-mixing. Since $\widehat{\phi }$ is consistent, by Lemma \ref%
	{Lem:modul1}, the first term on the right hand side is bounded by: for any $%
	\epsilon _{1},\epsilon _{2}>0$,
	\begin{align*}
	&\mathbb{P}\left( |\frac{1}{T}\sum_{t}x_{jt}\varepsilon _{t}[1\{\widehat{f}%
	_{t}^{\prime }\widehat{\gamma }>0\}-1\{\widehat{f}_{t}^{\prime }\gamma
	_{0}>0\}]|_{2}>T^{-1/2}\epsilon _{1}\right) \cr
	&\leq o(1)+\mathbb{P}\left(
	\sup_{|\phi -\phi _{0}|<\epsilon _{1}^{2}\sqrt{\epsilon _{2}}}|\frac{1}{T}%
	\sum_{t}x_{jt}\varepsilon _{t}[1\{\breve{g}_{t}^{\prime }\phi >0\}-1\{\breve{%
		g}_{t}^{\prime }\phi _{0}>0\}]|_{2}>T^{-1/2}\epsilon _{1}\right) \cr
		&\leq
	o(1)+\frac{C\epsilon _{1}^{4}\epsilon _{2}}{\epsilon _{1}^{4}}\leq
	o(1)+C\epsilon _{2}.
	\end{align*}%
	Because $\epsilon _{1},\epsilon _{2}>0$ are arbitrary, the first term is $%
	o(T^{-1/2})$.

	As for the second term, by (\ref{e7.9}),
	\begin{eqnarray*}
&&	\sup_{\gamma \in \Gamma _{T}}|\frac{1}{T}\sum_{t}x_{jt}\varepsilon _{t}[1\{%
	\widehat{f}_{t}^{\prime }\gamma >0\}-1\{\widetilde{f}_{t}^{\prime }\gamma
	>0\}]|\cr
	&\leq& \sup_{\gamma \in \Gamma _{T}}\frac{1}{T}\sum_{t}|x_{jt}%
	\varepsilon _{t}|1\{\widetilde{f}_{t}^{\prime }\gamma <0<\widehat{f}%
	_{t}^{\prime }\gamma \}+\sup_{\gamma \in \Gamma _{T}}\frac{1}{T}%
	\sum_{t}|x_{jt}\varepsilon _{t}|1\{\widehat{f}_{t}^{\prime }\gamma <0<%
	\widetilde{f}_{t}^{\prime }\gamma \}\cr
	&\leq& O_{P}(\Delta
	_{f}+T^{-6})=o_{P}(T^{-1/2}).
	\end{eqnarray*}

	(iv) By (\ref{e7.9}), for each $j,$%
	\begin{eqnarray*}
	\left\vert \frac{1}{T}\sum_{t}[\widetilde{Z}_{jt}(\gamma _{0})-\widehat{Z}%
	_{jt}(\gamma _{0})]\varepsilon _{t}\right\vert \leq \left\vert \frac{1}{T}%
	\sum_{t=1}^{T}\varepsilon _{t}x_{jt}1\{\widehat{f}_{t}^{\prime }\gamma
	_{0}<0<\widetilde{f}_{t}^{\prime }\gamma _{0}\}\right\vert \cr+\left\vert
	\frac{1}{T}\sum_{t=1}^{T}\varepsilon _{t}x_{j}1\{\widetilde{f}_{t}^{\prime
	}\gamma _{0}<0<\widehat{f}_{t}^{\prime }\gamma _{0}\}\right\vert \leq
	O_{P}(\Delta _{f}+T^{-6})=o_{P}(T^{-1/2}),
	\end{eqnarray*}%
	and%
	\begin{eqnarray*}
	\frac{1}{T}\sum_{t}[\widehat{Z}_{jt}(\gamma _{0})-Z_{jt}(\gamma
	_{0})]\varepsilon _{t}=\frac{1}{T}\sum_{t=1}^{T}\varepsilon _{t}x_{jt}1\{%
	\widehat{f}_{t}^{\prime }\gamma _{0}<0<f_{t}^{\prime }\gamma _{0}\}\cr+\frac{%
		1}{T}\sum_{t=1}^{T}\varepsilon _{t}x_{jt}1\{f_{t}^{\prime }\gamma _{0}<0<%
	\widehat{f}_{t}^{\prime }\gamma _{0}\},
	\end{eqnarray*}%
	unless it is zero. Then, $\mathbb{E}\varepsilon _{t}x_{jt}1\{\widehat{f}%
	_{t}^{\prime }\gamma _{0}<0<f_{t}^{\prime }\gamma _{0}\}=0$ as $\varepsilon
	_{t}$ is an MDS, while
	\begin{equation*}
	\mathrm{var} \left[\frac{1}{T}\sum_{t=1}^{T}\varepsilon _{t}x_{jt}1\{\widehat{f}%
	_{t}^{\prime }\gamma _{0}<0<f_{t}^{\prime }\gamma _{0}\} \right]=\frac{1}{T^{2}}%
	\sum_{t=1}^{T}\mathbb{E}x_{jt}^{2}1\{\widehat{f}_{t}^{\prime }\gamma
	_{0}<0<f_{t}^{\prime }\gamma _{0}\}\mathbb{E}[\varepsilon
	_{t}^{2}|x_{t},g_{t},h_{t}]=o(T^{-1}).
	\end{equation*}%
	Thus $\frac{1}{T}\sum_{t}[\widehat{Z}_{t}(\gamma _{0})-Z_{t}(\gamma
	_{0})]\varepsilon _{t}=o(T^{-1/2}).$
\end{proof}

\subsection{Limiting distribution of $\widehat{\protect\gamma} $ (Proof of Theorem \ref{thm:AD with Estimated f}: Part II)}

Recall the defintion of $ r_{NT} $ in \eqref{eq:r_NT},
which represents the convergence rate as a function of both $N$ and $T,$ and define
\begin{equation*}
l_{NT}=\sqrt{r_{NT}T^{1+2\varphi }}\ \ \ \text{and \ \ }g=r_{NT}\left(
\gamma -\gamma _{0}\right) ,
\end{equation*}%
which are introduced so as to define a reparametrized process that reflects
the convergence rate $r_{NT}$. Then, the following lemma shows that the
estimator $\widehat{\gamma }$ can be represented by the following minimizer
of the reparametrized version of the process:
\begin{equation*}
\limfunc{argmin}_{g:g_{1}=0}l_{NT}\left[ \mathbb{\widetilde{S}}_{T}\left(
\alpha _{0},\gamma _{0}+\frac{g}{r_{NT}}\right) -\mathbb{\widetilde{S}}%
_{T}\left( \alpha _{0},\gamma _{0}\right) \right] .
\end{equation*}%
Note that we fix the first element of $g$ at $0$ to impose the normalization
restriction of $\gamma _{1}=0$.

The following lemma now presents the separability of the centered and scaled
criterion function.

\begin{lem}
	\label{Lem-separation} Let $\alpha =\alpha _{0}+bT^{-1/2}$, and $\gamma
	=\gamma _{0}+gr_{NT}^{-1}$. Then, uniformly in $b,g$ on any compact set,
	\begin{eqnarray*}
		&&l_{NT}\left[ \mathbb{\widetilde{S}}_{T}\left( \alpha ,\gamma \right) -%
		\mathbb{\widetilde{S}}_{T}\left( \alpha _{0},\gamma _{0}\right) \right] \cr
		&=&-l_{NT}\mathbb{\widehat{C}}_{1}\left( \delta _{0},\gamma _{0}+\frac{g}{%
			r_{NT}}\right) +l_{NT}\mathbb{E}\left( \widehat{R}_{2}\left( \gamma _{0}+%
		\frac{g}{r_{NT}}\right) +\widehat{\mathbb{C}}_{3}\left( \gamma _{0}+\frac{g}{%
			r_{NT}}\right) \right)  \\
		&&+l_{NT}T^{-1}\mathbb{E}[b^{\prime }Z_{t}(\gamma _{0})]^{2}+l_{NT}\left[
		\mathbb{\widetilde{C}}_{2}(\alpha _{0}+bT^{-1/2})+\mathbb{\widetilde{C}}%
		_{4}(\alpha _{0}+bT^{-1/2})\right]  \\
		&&+o_{P}(1).
	\end{eqnarray*}%
	Furthermore, the two processes $l_{NT}\mathbb{\widehat{C}}_{1}\left( \delta
	_{0},\gamma _{0}+\frac{g}{r_{NT}}\right) $ and $l_{NT}\left[ \mathbb{%
		\widetilde{C}}_{2}(\alpha _{0}+bT^{-1/2})+\mathbb{\widetilde{C}}_{4}(\alpha
	_{0}+bT^{-1/2})\right] $ are asymptotically independent.
\end{lem}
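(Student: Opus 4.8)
\textbf{Proof proposal for Lemma \ref{Lem-separation}.}

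The plan is to start from the exact decomposition of $\mathbb{\widetilde{S}}_T$ into $\mathbb{\widetilde R}_T$ and $\mathbb{\widetilde G}_T$ recorded in Section \ref{sec:def:not}, substitute $\alpha = \alpha_0 + bT^{-1/2}$ and $\gamma = \gamma_0 + gr_{NT}^{-1}$, multiply through by $l_{NT} = \sqrt{r_{NT}T^{1+2\varphi}}$, and then discard all terms that are $o_P(1)$ uniformly on compact sets. The first reduction is to pass from the $\widetilde{\cdot}$ (PCA factor) quantities to the $\widehat{\cdot}$ (first-order approximation) quantities wherever possible, using Lemma \ref{l:relf}: for each $j=1,\dots,4$ the difference $\widetilde{\mathbb C}_j - \widehat{\mathbb C}_j$ is bounded by $(T^{-\varphi}+|\alpha-\alpha_0|_2)O_P(\Delta_f+T^{-6})$, and multiplying by $l_{NT}$ and using $T=O(N)$ together with $\Delta_f \le O(\log^c T)(N^{-1}+T^{-1})$ (Proposition \ref{prop:f}) shows these are $o_P(1)$; similarly for $\widetilde R_j - \widehat R_j$. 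So I may replace $\widetilde{\mathbb C}_1$ and $\widetilde{\mathbb C}_3$ and all of $\widetilde R_1,\widetilde R_2,\widetilde R_3$ by their hatted analogues, but I keep $\widetilde{\mathbb C}_2$ and $\widetilde{\mathbb C}_4$ in hatted-or-tilde form only to the extent needed — in fact the statement keeps $\widetilde{\mathbb C}_2,\widetilde{\mathbb C}_4$ exactly because those are the terms that govern the (asymptotically independent) $b$-fluctuation, and they need not be further simplified.

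Next I would handle the $\widehat R$ block. Write $\widehat R_1(\alpha,\gamma) = \frac1T\sum_t(\widehat Z_t(\gamma)'(\alpha-\alpha_0))^2$; by Lemma \ref{l:rhat}(i) and the rate $|\widehat\alpha-\alpha_0|_2=O_P(T^{-1/2}+N^{-1/4}T^{-\varphi})$ (Proposition \ref{p:pre}), plus $T^{-1}\sum_t \widehat Z_t(\gamma)\widehat Z_t(\gamma)' \to \mathbb E Z_t(\gamma_0)Z_t(\gamma_0)'$ uniformly for $\gamma$ near $\gamma_0$, this equals $T^{-1}\mathbb E[b'Z_t(\gamma_0)]^2 + o_P(l_{NT}^{-1})$ after rescaling; the cross term $\widehat R_3$ is $O_P(|\alpha-\alpha_0|_2(T^{-1}+T^{-\varphi}|\gamma-\gamma_0|_2))$ by Lemma \ref{l:rhat}(ii), which multiplied by $l_{NT}$ is $o_P(1)$ on compacta because $l_{NT}\cdot T^{-1/2}\cdot T^{-\varphi}r_{NT}^{-1} = \sqrt{r_{NT}}\,T^{1/2+\varphi/2-1/2-\varphi}r_{NT}^{-1}\to 0$ and similarly for the other branch. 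For $\widehat R_2 + \widehat{\mathbb C}_3$ I would split each into its mean plus a centered part; Lemma \ref{l.1} (with the $b_{NT}$ bound, or $a_{NT}$ when $\sqrt N=O(T^{1-2\varphi})$) shows the centered parts, times $l_{NT}$, are $o_P(1)$ uniformly on compacta — this is where the maximal-inequality Lemmas \ref{Lem:rate_gm}, \ref{Lem:rate_gm_1} do the work — leaving $l_{NT}\,\mathbb E(\widehat R_2 + \widehat{\mathbb C}_3)$. The remaining stochastic pieces of $\widehat{\mathbb G}_T$ are $-\widehat{\mathbb C}_1$, $+\widehat{\mathbb C}_2$, $-\widehat{\mathbb C}_4$; using Lemma \ref{l:relf}(ii)–(iii) one replaces $\widehat{\mathbb C}_2 \to \widetilde{\mathbb C}_2$ and $\widehat{\mathbb C}_4 \to \widetilde{\mathbb C}_4$ up to $o_P(l_{NT}^{-1})$, and $\widehat{\mathbb C}_1$ is kept as is but evaluated at $\delta_0$ rather than $\widehat\delta$ via Lemma \ref{l:rhat}(iii) (the difference, $\le (O_P(T^{-1})+\eta T^{-2\varphi}|\phi-\phi_0|)T^\varphi|\delta-\delta_0|_2$, times $l_{NT}$, vanishes). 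Collecting everything yields precisely the six-term expression in the statement plus $o_P(1)$.

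Finally, the asymptotic independence of $l_{NT}\widehat{\mathbb C}_1(\delta_0,\gamma_0 + g/r_{NT})$ and $l_{NT}[\widetilde{\mathbb C}_2(\alpha_0+bT^{-1/2}) + \widetilde{\mathbb C}_4(\alpha_0+bT^{-1/2})]$: $\widetilde{\mathbb C}_4$ times $l_{NT}$ is $o_P(1)$ by Lemma \ref{l:relf}(iii) ($|\widetilde{\mathbb C}_4| \le O_P(\Delta_f+N^{-1/2})T^{-\varphi}|\alpha-\alpha_0|_2$, and $l_{NT}\,N^{-1/2}T^{-\varphi}T^{-1/2} = \sqrt{r_{NT}}\,T^{\varphi/2}N^{-1/2}T^{-\varphi}\to 0$ under $T=O(N)$), so it suffices to show $l_{NT}\widehat{\mathbb C}_1$ and $l_{NT}\widetilde{\mathbb C}_2$ are asymptotically independent. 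Both are (after centering $\widetilde{\mathbb C}_2 \to \widehat{\mathbb C}_2 \to \frac{2}{T}\sum_t \varepsilon_t Z_t(\gamma_0)'(\alpha-\alpha_0)$ using Lemma \ref{l:relf}(ii)) martingale-difference sums in $\varepsilon_t$; their covariance is driven by $\mathbb E[\varepsilon_t^2 x_t'\delta_0 (1\{\breve g_t'\phi>0\}-1\{\breve g_t'\phi_0>0\})Z_t(\gamma_0)]$, and the same "shrinking-window" computation as in Remark \ref{rem-ind} of the known-factor proof — now carried out conditionally on $(g_t,h_t)$ and using the conditional-density conditions in Assumptions \ref{as5}, \ref{as8} to pass the window width to zero — shows this covariance is $o(l_{NT}^{-2}\cdot l_{NT}^{2}) = o(1)$ after rescaling, because the indicator difference has conditional probability $O(r_{NT}^{-1})$ while the normalizing constants only inflate by $r_{NT}$. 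Joint asymptotic normality/tightness then gives independence via the Cramér–Wold device and the martingale CLT of \citet{Hall-Heyde}. \textbf{The main obstacle} I anticipate is the bookkeeping in this last step: one must verify that every cross-moment between the $g$-indexed process and the $b$-indexed process — including the contributions routed through $\widetilde{\mathbb C}_4$ and through the replacement errors — is genuinely negligible after multiplication by $l_{NT}^2 = r_{NT}T^{1+2\varphi}$, and this requires keeping careful track of which regime ($\omega<\infty$ vs. $\omega=\infty$) controls $r_{NT}$ and invoking the sharper Lemma \ref{Lem:rate_gm_1} in the former case.
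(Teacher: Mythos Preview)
Your proposal is correct and follows essentially the same decomposition-and-bound strategy as the paper: pass from $\widetilde{\cdot}$ to $\widehat{\cdot}$ via Lemma \ref{l:relf}, kill $\widehat R_3$ and the $\delta$-vs-$\delta_0$ discrepancy in $\widehat{\mathbb C}_1,\widehat{\mathbb C}_3$ via Lemma \ref{l:rhat}, center $\widehat R_2+\widehat{\mathbb C}_3$ and bound the fluctuation by Lemma \ref{l.1} (using $b_{NT}$ or $a_{NT}$ depending on the regime), and reduce $\widehat R_1$ to $T^{-1}\mathbb E[b'Z_t(\gamma_0)]^2$ via Lemma \ref{l:rhat}(i) together with Lemma \ref{l:rt}.

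The only organizational difference is in the independence argument. The paper does a cleaner case split: when $l_{NT}=o(T)$ (i.e.\ $r_{NT}=(NT^{1-2\varphi})^{1/3}<T^{1-2\varphi}$), the \emph{entire} $b$-block $l_{NT}T^{-1}\mathbb E[b'Z_t(\gamma_0)]^2$ and $l_{NT}[\widetilde{\mathbb C}_2+\widetilde{\mathbb C}_4]$ are $o_P(1)$ (since $\widetilde{\mathbb C}_2=O_P(T^{-1})$ by Lemma \ref{l:relf}(ii)), so independence is vacuous; only when $l_{NT}=T$ does one invoke the shrinking-window covariance computation, and there the paper simply cites Lemma \ref{Lem-DCTinP}. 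Your route---first arguing $l_{NT}\widetilde{\mathbb C}_4=o_P(1)$ unconditionally and then computing the $\widehat{\mathbb C}_1$--$\widetilde{\mathbb C}_2$ covariance in all regimes---also works, but the case split spares you the ``bookkeeping obstacle'' you flag at the end: in the $\omega<\infty$ regime there is nothing to check because the $b$-process has already degenerated.
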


\begin{proof}
	Uniformly in $\gamma$, and $\phi=H_T{\gamma}$, by Lemmas \ref{l:relf} and %
	\ref{l:rhat}
	\begin{eqnarray*}
		&& |\widetilde{\mathbb{C}}_1(\delta,\gamma) - \widehat{\mathbb{C}}%
		_1(\delta_0,\gamma)|\leq |\widetilde{\mathbb{C}}_1(\delta,\gamma) - \widehat{%
			\mathbb{C}}_1(\delta,\gamma)| +|\mathbb{\widehat{C}}_{1}( \delta, \gamma)-%
		\mathbb{\widehat{C}}_{1}(\delta_0, \gamma)| \\
		&\leq&(T^{-\varphi}+|\alpha-\alpha_0|_2) O_P(\Delta_f+T^{-6}) +\left(
		O_P\left( T^{-1}\right) + \eta T^{-2\varphi }\left\vert \phi -\phi
		_{0}\right\vert \right) T^{\varphi }\left\vert \delta -\delta
		_{0}\right\vert _{2}
	\end{eqnarray*}

	Note that $|\widehat{\gamma }-\gamma _{0}|_{2}=O_{P}(r_{NT}^{-1})$. Hence
	Lemma \ref{l:relf} implies
	\begin{eqnarray*}
		l_{NT} |\widetilde{R}_{2}\left( \gamma \right)-\mathds{R}_{2}\left( \phi
		\right)| &\leq& O_P(\Delta_f+T^{-6})T^{-2\varphi}l_{NT} =o_P(1) \\
		l_{NT} |\widetilde{R}_{3}|&\leq& O_{P}(T^{-1/2}T^{-\varphi }r_{NT}^{-1})
		l_{NT} =o_P(1) \\
		l_{NT} |\widetilde{\mathbb{C}}_1(\delta,\gamma) - \widehat{\mathbb{C}}%
		_1(\delta_0,\gamma)|&\leq& O_P(T^{-1/2})\Delta_fl_{NT}=o_P(1) \\
   l_{NT} \left\vert \mathbb{\widehat{C}}_{3}\left( \delta _{0},\gamma \right) 	-	\mathbb{\widetilde{C}}_{3}\left( \delta ,\gamma \right) \right\vert& \leq&
      l_{NT} \left\vert \mathbb{\widehat{C}}_{3}\left( \delta,\gamma \right) 	-	\mathbb{\widetilde{C}}_{3}\left( \delta ,\gamma \right) \right\vert
      +   l_{NT} \left\vert \mathbb{\widehat{C}}_{3}\left( \delta,\gamma \right) 	-	\mathbb{\widehat{C}}_{3}\left( \delta_0 ,\gamma \right) \right\vert  \cr
      &\leq&  l_{NT}T^{-\varphi} 	O_P(  \Delta_f )   (   T^{-\varphi}			+  |\alpha-\alpha_0|_2) +   l_{NT}	  T^{-\varphi} O_P(N^{-1/2})|\alpha-\alpha_0|_2\cr
      &\leq& o_P(1).
         	\end{eqnarray*}

	In addition, recall $\mathds G_2:=|\widehat{R}_{2}(\gamma )+\mathbb{\widehat{C}}%
	_{3}(\delta _{0},\gamma )-(\mathbb{E}\widehat{R}_{2}(\gamma )+\mathbb{%
		\widehat{C}}_{3}(\delta _{0},\gamma ))|.$ By Lemma \ref{l.1}, when $%
	T^{1-2\varphi }=O(\sqrt{N})$,
$
	l_{NT}\mathds G_2\leq (O_{P}(\frac{1}{T})+\eta T^{-2\varphi }\left\vert \gamma
	-\gamma _{0}\right\vert _{2})T^{-\varphi }l_{NT}=o_{P}(1).
$
	When $\sqrt{N}=o(T^{1-2\varphi })$,
$
	l_{NT}\mathds G_2\leq \left[ T^{-2\varphi }O_P\left( \frac{\sqrt{N}}{\left(
		NT^{1-2\varphi }\right) ^{2/3}}\right) +T^{-2\varphi }\eta r_{NT}^{2}\sqrt{N}%
	\right] T^{-\varphi }l_{NT}=o_{P}(1).
$

	Note that, $\mathds R( \alpha, \phi_0)=\mathbb{E}[b^{\prime }Z_{t}(\gamma
	_{0})]^{2} .$ In addition, Lemma \ref{l:relf} and Lemma \ref{l:rt} show
	uniformly in $\alpha,\gamma$, for any $\epsilon>0$, there is $C>0$ that does
	not depend on $\epsilon$,
	\begin{eqnarray*}
		&& l_{NT}|\widetilde R_{1}(\alpha,\gamma) -\mathds R( \alpha, \phi_0)|
		\leq l_{NT}|\widetilde R_{1}(\alpha,\gamma) - \mathds R(\alpha,
		H_T^{-1}\gamma) |\cr && + l_{NT}| \mathds R(\alpha, H_T^{-1}\gamma_0) - %
		\mathds R(\alpha, H_T^{-1}\gamma) |\cr &\leq& o_P(
		l_{NT})|\alpha-\alpha_0|_2^2+ l_{NT} C|\alpha-\alpha_0|_2^2[ o_P(1)+\epsilon
		]^{1/2}=o_P( l_{NT})|\alpha-\alpha_0|_2^2\cr &=&o_P( l_{NT})T^{-1} =o_P(1)
		\sqrt{r_{NT}T^{-1+2\varphi }} =o_P(1).
	\end{eqnarray*}

	All the above $O_{P},o_{P}$ are uniform in $\alpha ,g$. Then uniformly in $%
	\alpha ,g$, for $\gamma =\gamma _{0}+gr_{NT}^{-1}$, %
	\begin{eqnarray*}
		&&l_{NT}[\mathbb{\widetilde{S}}_{T}\left( {\alpha },{\gamma }\right) -%
		\mathbb{\widetilde{S}}_{T}\left( \alpha _{0},\gamma _{0}\right) ] \\
		&=&l_{NT}[\widetilde{R}_{1}(\alpha ,\gamma )+\widetilde{R}_{2}(\gamma )+%
		\widetilde{R}_{3}(\alpha ,\gamma )-\widetilde{\mathbb{C}}_{1}(\delta
		,\gamma )-\widetilde{\mathbb{C}}_{2}(\alpha )+\widetilde{\mathbb{C}}%
		_{3}(\delta ,\gamma )+\widetilde{\mathbb{C}}_{4}(\alpha )] \\
		&=&o_{P}(1)+l_{NT}[\mathbb{E}\widehat{R}_{2}\left( \gamma \right) +\mathbb{E%
		}\mathbb{\widehat{C}}_{3}\left( \delta _{0},\gamma \right) -\widehat{\mathbb{%
				C}}_{1}(\delta _{0},\gamma )]+l_{NT}[\mathds R\left( \alpha ,\phi
		_{0}\right) -\widetilde{\mathbb{C}}_{2}(\alpha )+\widetilde{\mathbb{C}}%
		_{4}(\alpha )]
	\end{eqnarray*}

	Turning to the last claim, first note that when $l_{NT}=o\left( T\right) ,$ $%
	l_{NT}T^{-1}\mathbb{E}[b^{\prime }Z_{t}(\gamma _{0})]^{2}=o_P\left(
	1\right) \ $and $l_{NT}\left[ \mathbb{\widetilde{C}}_{2}(\alpha
	_{0}+bT^{-1/2})+\mathbb{\widetilde{C}}_{4}(\alpha _{0}+bT^{-1/2})\right]
	=o_P\left( 1\right) $ due to the proof in Section \ref{sec:alpha-hat}.
	When $l_{NT}=T,$ we need to show that $l_{NT}\left[ \mathbb{\widetilde{C}}%
	_{2}(\alpha _{0}+bT^{-1/2})+\mathbb{\widetilde{C}}_{4}(\alpha _{0}+bT^{-1/2})%
	\right] $ is asymptotically uncorrelated to $l_{NT}\mathbb{\widehat{C}}%
	_{1}\left( \delta _{0},\gamma _{0}+\frac{g}{r_{NT}}\right) $. This follows
	from Lemma \ref{Lem-DCTinP} in the ensueing section.
\end{proof}

\subsubsection{Empirical Process Part}\label{sec:EP}

We concern the weak convergence of the empirical process given by%
\begin{eqnarray*}
	l_{NT}\mathbb{\widehat{C}}_{1}\left( \delta _{0},\gamma _{0}+\frac{g}{r_{NT}}%
	\right) &=&l_{NT}\frac{2}{T}\sum_{t=1}^{T}\varepsilon _{t}x_{t}^{\prime
	}\delta _{0}\left( \widehat{1}_{t}\left( \gamma _{0}+\frac{g}{r_{NT}}\right)
	-\widehat{1}_{t}\left( \gamma _{0}\right) \right) \\
	&=&2\mathbb{\breve{C}}_{11}\left( H_{T}g\right) -2\mathbb{\breve{C}}%
	_{12}\left( H_{T}g\right) ,
\end{eqnarray*}%
where $\breve{u}_{t}=\breve{g}_{t}^{\prime }\phi _{0}$ and
\begin{eqnarray*}
	\mathbb{\breve{C}}_{11}\left( \mathfrak{g}\right) &=&\frac{\sqrt{r_{NT}}}{%
		\sqrt{T}}\sum_{t=1}^{T}\varepsilon _{t}x_{t}^{\prime }d_{0}1\left\{ -\breve{g%
	}_{t}^{\prime }\frac{\mathfrak{g}}{r_{NT}}<\breve{u}_{t}\leq 0\right\} , \\
	\mathbb{\breve{C}}_{12}\left( \mathfrak{g}\right) &=&\frac{\sqrt{r_{NT}}}{%
		\sqrt{T}}\sum_{t=1}^{T}\varepsilon _{t}x_{t}^{\prime }d_{0}1\left\{ 0<\breve{%
		u}_{t}\leq -\breve{g}_{t}^{\prime }\frac{\mathfrak{g}}{r_{NT}}\right\} ,
\end{eqnarray*}%
where $\mathfrak{g}\ $belongs to a compact set $\mathfrak{G}$. This is
because $l_{NT}T^{-1-\varphi }=\sqrt{r_{NT}/T}$, $\breve{g}_{t}=g_{t}+h_{t}/%
\sqrt{N}=H_{T}^{-1\prime }\widehat{f}_{t}$, and $\widehat{f}_{t}^{\prime }%
\mathfrak{g}=\breve{g}_{t}^{\prime }H_{T}\mathfrak{g}$.

We introduce this transformation to remove the randomness in $H_{T}$ from
the definition of the processes $\mathbb{\breve{C}}_{11}\left( \mathfrak{g}%
\right) $ and $\mathbb{\breve{C}}_{12}\left( \mathfrak{g}\right) $ and make
use of the stationarity of $\breve{g}_{t}$. Furthermore, in view of the
extended CMT in Lemma \ref{CMT-extension} $\mathbb{\breve{C}}_{11}\left(
H_{T}g\right) $ and $\mathbb{\breve{C}}_{11}\left( Hg\right) $ have the same
weak limit if $H_{T}\overset{p}{\longrightarrow }H\ $and $H$ is a finite
constant. Thus, it is sufficient to derive the weak convergence of $\left(
\mathbb{\breve{C}}_{11}\left( \mathfrak{g}\right) ,\mathbb{\breve{C}}%
_{12}\left( \mathfrak{g}\right) \right) $ to some process, say, $\left(
\mathbb{\mathbb{C}}_{11}\left( \mathfrak{g}\right) ,\mathbb{\mathbb{C}}%
_{12}\left( \mathfrak{g}\right) \right) .$ Since $\mathbb{\breve{C}}%
_{11}\left( \mathfrak{g}\right) $ is of the same type as $\mathbb{\breve{C}}%
_{12}\left( \mathfrak{g}\right) $ and there is no correlation between the
two as $\varepsilon _{t}$ is an mds and the two indicators are orthogonal to
each other, we focus on the stochastic equicontinuity and fidi of $\mathbb{%
	\breve{C}}_{11}\left( \mathfrak{g}\right) .$

The stochastic equicontinuity of $\mathbb{\breve{C}}_{11}\left( \mathfrak{g}%
\right) ,$ however, is a direct consequence of Lemma \ref{Lem:modul1} since $%
\breve{u}_{t}\ $and $\breve{g}_{t}$ are stationary triangular arrays and
thus for any finite $\mathfrak{g}$ and $\gamma =\frac{\mathfrak{g}}{r_{NT}}$
and for any $c,\epsilon >0$
\begin{eqnarray*}
	&&\mathbb{P} \left\{ \sup_{\left\vert \mathfrak{h}-\mathfrak{g}\right\vert
		<\epsilon }\left\vert \mathbb{\breve{C}}_{11}\left( \mathfrak{h}\right) -%
	\mathbb{\breve{C}}_{11}\left( \mathfrak{g}\right) \right\vert >c\right\}  \\
	&=&\mathbb{P} \left\{ \sup_{\left\vert \vec{\gamma}-\gamma \right\vert <\epsilon
		/r_{NT}}\frac{1}{\sqrt{T}}\sum_{t=1}^{T}\varepsilon _{t}x_{t}^{\prime
	}d_{0}\left( 1\left\{ -\breve{g}_{t}^{\prime }\gamma <\breve{u}_{t}\leq
	0\right\} -1\left\{ -\breve{g}_{t}^{\prime }\vec{\gamma}<\breve{u}_{t}\leq
	0\right\} \right) >\frac{c}{\sqrt{r_{NT}}}\right\}  \\
	&\leq &C\frac{\epsilon ^{2}}{c^{4}},
\end{eqnarray*}%
which can be made arbitrarily small by choosing $\epsilon $ small.

Turning to the fidi of $\mathbb{\breve{C}}_{11}\left( \mathfrak{g}\right) $,
we first check $\mathbb{\breve{C}}_{11}\left( \mathfrak{g}\right) $
satisfies the conditions to apply the mds CLT (e.g. Hall and Heyde 1980).
Specifically, let $v_{t}=\sqrt{r_{NT}}\varepsilon _{t}x_{t}^{\prime
}d_{0}1\left\{ -\breve{g}_{t}^{\prime }\frac{\mathfrak{g}}{r_{NT}}<\breve{u}%
_{t}\leq 0\right\} ,$ which is an mds as $\varepsilon _{t}$ is an mds, and
verify that $\max_{t}\left\vert v_{t}\right\vert =o_P\left( \sqrt{T}%
\right) $ and that $\frac{1}{T}\sum_{t=1}^{T}v_{t}^{2}$ has a proper
non-degenerate probability limit. However, $T^{-2}\mathbb E\max_{t}v_{t}^{4}\leq
T^{-1}\mathbb Ev_{t}^{4}$ by the stationarity and by $\max_{t}\left\vert
a_{t}\right\vert \leq \sum_{t=1}^{T}\left\vert a_{t}\right\vert $ and \\$%
T^{-1}\mathbb Ev_{t}^{4}=T^{-1}r_{NT}^{2}\mathbb E\left( \varepsilon _{t}x_{t}^{\prime
}d_{0}\right) ^{4}1\left\{ -\breve{g}_{t}^{\prime }\frac{\mathfrak{g}}{r_{NT}%
}<\breve{u}_{t}\leq 0\right\} \leq CT^{-1}r_{NT}=o\left( 1\right) $.
Furthermore, $\frac{1}{T}\sum_{t=1}^{T}\left( v_{t}^{2}-Ev_{t}^{2}\right)
=o_P\left( 1\right) $ due to Lemma \ref{Lem:modul1}. Thus, it remains to
show that the limit of $\mathbb Ev_{t}^{2}$ does not degenerate, which is shown in
the following.

To that end, we first derive the following limit%
\begin{eqnarray*}
	L\left( \mathfrak{s},\mathfrak{g}\right) &=&\lim_{N,T\rightarrow \infty
	}\mathbb E\left( \mathbb{\breve{C}}_{11}\left( \mathfrak{s}\right) -\mathbb{\breve{C}%
	}_{12}\left( \mathfrak{s}\right) -\mathbb{\breve{C}}_{11}\left( \mathfrak{g}%
	\right) +\mathbb{\breve{C}}_{12}\left( \mathfrak{g}\right) \right) ^{2} \\
	&=&\lim_{N,T\rightarrow \infty }r_{NT}\mathbb E\eta _{t}^{2}\left\vert 1\left\{
	\breve{g}_{t}^{\prime }\left( \phi _{0}+\frac{\mathfrak{s}}{r_{NT}}\right)
	>0\right\} -1\left\{ \breve{g}_{t}^{\prime }\left( \phi _{0}+\frac{\mathfrak{%
			g}}{r_{NT}}\right) >0\right\} \right\vert
\end{eqnarray*}%
for $\mathfrak{s}\neq \mathfrak{g}$ and $\eta _{t}=\varepsilon
_{t}x_{t}^{\prime }d_{0}$.

Note that each element $\mathfrak{g}\in \mathfrak{G}$ is linearly
independent of $\phi _{0}=H\gamma _{0}$, since $g_{1}=0$ while $\gamma
_{01}=1.$ Otherwise, there is $c\neq 0$ such that $\mathfrak{g=c\phi }_{0}.$
Then, $\mathfrak{g}=Hg=cH\gamma _{0},\ $which in turn implies that$\
g=c\gamma _{0}.$ This is a contradiction as $g_{1}=0$ while $\gamma _{01}=1$%
. This allows us to apply Lemma \ref{Lem-DCTinP} below to conclude that
\begin{eqnarray*}
	&&r_{NT}\mathbb E\eta _{t}^{2}1\left\{ \breve{u}_{t}+\breve{g}_{t}^{\prime }\frac{%
		\mathfrak{s}}{r_{NT}}>0\geq \breve{u}_{t}+\breve{g}_{t}^{\prime }\frac{%
		\mathfrak{g}}{r_{NT}}\right\} \\
	&\rightarrow &\mathbb E\left[ \eta _{t}^{2}\left( -g_{t}^{\prime }\mathfrak{g}%
	+g_{t}^{\prime }\mathfrak{s}\right) 1\left( g_{t}^{\prime }\mathfrak{g}%
	<g_{t}^{\prime }\mathfrak{s}\right) |u_{t}=0\right] p_{u}\left( 0\right) ,
\end{eqnarray*}%
and that%
\begin{eqnarray*}
	&&r_{NT}\mathbb E\eta _{t}^{2}1\left\{ \breve{u}_{t}+\breve{g}_{t}^{\prime }\frac{%
		\mathfrak{s}}{r_{NT}}\leq 0<\breve{u}_{t}+\breve{g}_{t}^{\prime }\frac{%
		\mathfrak{g}}{r_{NT}}\right\} \\
	&\rightarrow &\mathbb E\left[ \eta _{t}^{2}\left( g_{t}^{\prime }\mathfrak{g}%
	-g_{t}^{\prime }\mathfrak{s}\right) 1\left( g_{t}^{\prime }\mathfrak{g}%
	>g_{t}^{\prime }\mathfrak{s}\right) |u_{t}=0\right] p_{u}\left( 0\right) .
\end{eqnarray*}%
Thus, we conclude that%
\begin{equation*}
L\left( \mathfrak{s},\mathfrak{g}\right) =\mathbb E_{0}\left[ \left. \eta
_{t}^{2}\left\vert g_{t}^{\prime }\left( \mathfrak{g}-\mathfrak{s}\right)
\right\vert \right\vert u_{t}=0\right] p_{u}\left( 0\right) .
\end{equation*}

Putting these together, we conclude%
\begin{equation*}
l_{NT}\mathbb{\widehat{C}}_{1}\left( \delta _{0},\gamma _{0}+\frac{g}{r_{NT}}%
\right) \Rightarrow 2W\left( g\right) ,
\end{equation*}%
where $W\left( g\right) $ is a centered Gaussian process with the covariance
kernel%
\begin{equation*}
\mathbb EW\left( g\right) W\left( s\right) =\frac{1}{2}\left( L\left( Hs,0\right)
+L\left( Hg,0\right) -L\left( Hs,Hg\right) \right) ,
\end{equation*}%
recalling that $\mathbb EXY=\frac{1}{2}\left( \mathbb EX^{2}+\mathbb EY^{2}-\mathbb E\left( X-Y\right)
^{2}\right) $ and $\mathbb{\breve{C}}_{11}\left( 0\right) =0$.

\begin{lem}
	\label{Lem-DCTinP}Assume Assumption \ref{as8}. Then,
	\begin{equation*}
	r_{NT}\mathbb E\eta _{t}^{2}1\left\{ \breve{u}_{t}+\breve{g}_{t}^{\prime }\frac{s}{%
		r_{NT}}>0\geq \breve{u}_{t}+\breve{g}_{t}^{\prime }\frac{g}{r_{NT}}\right\}
	\rightarrow \mathbb{E}\left[ \eta _{t}^{2}\left( g_{t}^{\prime }s-g_{t}^{\prime
	}g\right) 1\left( g_{t}^{\prime }g<g_{t}^{\prime }s\right) |u_{t}=0\right]
	p_{u_{t}}\left( 0\right) ,
	\end{equation*}%
	as $N,T\rightarrow \infty .$
\end{lem}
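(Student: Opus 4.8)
The plan is to compute the limit by conditioning on $(\breve g_t, x_t)$, changing the variable of integration to $a = r_{NT}\breve u_t$, and then applying a dominated-convergence argument, the subtlety being that $\breve u_t = \breve g_t'\phi_0$ itself has a conditional density that drifts with $N$ (since $\breve g_t = g_t + h_t/\sqrt N$). First I would condition on $(\breve g_t, x_t)$ and write the indicator event $\{\breve u_t + \breve g_t's/r_{NT} > 0 \geq \breve u_t + \breve g_t'g/r_{NT}\}$ as the event that $\breve u_t$ lies in an interval of length $|\breve g_t'(s-g)|/r_{NT}$ whenever $\breve g_t'g < \breve g_t's$ (and empty otherwise). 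Thus, writing $p_{\breve u_t|\bigstar}$ for the conditional density of $\breve u_t$ given the relevant conditioning variables and $m(\breve g_t, x_t):=\mathbb E[\eta_t^2\,|\,\breve u_t = 0, \breve g_t, x_t]$, the expression equals
\begin{align*}
r_{NT}\,\mathbb E\!\left[ m(\breve g_t, x_t)\, \mathbb P\!\left(\left. -\tfrac{\breve g_t's}{r_{NT}} < \breve u_t \le -\tfrac{\breve g_t'g}{r_{NT}} \,\right|\, \breve g_t, x_t\right) 1\{\breve g_t'g < \breve g_t's\}\right] + o(1),
\end{align*}
where the $o(1)$ absorbs the error from replacing $\mathbb E[\eta_t^2\,|\,\breve u_t = a, \cdot]$ by its value at $a=0$ via the continuity/differentiability hypotheses in Assumption~\ref{as8}\ref{as8:itm_AD}. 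Changing variables $a = r_{NT}\breve u_t$ turns the inner probability times $r_{NT}$ into $\int_{-\breve g_t's}^{-\breve g_t'g} p_{\breve u_t|\bigstar}(a/r_{NT})\,da$, which tends pointwise to $(\breve g_t's - \breve g_t'g)\,p_{\breve u_t|\bigstar}(0)$ as $r_{NT}\to\infty$.

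Second I would pass the limit inside the outer expectation by dominated convergence: the integrand is bounded by $C\,|\breve g_t'(s-g)|$ times $\sup_{|a|<c}p_{\breve u_t|\bigstar}(a)$, which is integrable because $\mathbb E|\breve g_t|_2^2 < \infty$ (inherited from $\mathbb E|g_t|_2^2<\infty$ and $\mathbb E|h_t|_2^2<\infty$) and the conditional densities are uniformly bounded near zero by Assumption~\ref{as8}\ref{as8:itm1}, \ref{as8:itm6}. The third step is to replace the perturbed factors $\breve g_t$ by $g_t$ and $\breve u_t$ by $u_t$ in the limiting expression. Since $\breve g_t - g_t = h_t/\sqrt N \to 0$ in $L^2$ and, crucially, $p_{\breve g_t'\phi_0|\bigstar}(0) \to p_{u_t}(0)$ by Assumption~\ref{a9.1}(i) together with the continuity in Assumption~\ref{as8}\ref{as8:itm_AD}, the identification argument used in Lemma~\ref{l:generic} (rewriting $\breve g_t'g = f_{2t}'(\cdot)$ using the normalization of the first coordinate of $\gamma$, so the relevant conditional density is genuinely non-degenerate) lets me conclude that the limit is
\begin{align*}
\mathbb E\!\left[\eta_t^2\,(g_t's - g_t'g)\,1\{g_t'g < g_t's\}\,\big|\,u_t = 0\right] p_{u_t}(0),
\end{align*}
as claimed.

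The main obstacle I anticipate is the justification that the conditional density of $\breve u_t$ at the shrinking argument converges to $p_{u_t}(0)$ uniformly enough to survive the outer expectation — i.e., controlling the interaction between the $N^{-1/2}$ perturbation inside the density and the $r_{NT}^{-1}$ scaling of the integration window. This is handled by first restricting to the event $\{|h_t|_2 \le M_0 N^{1/4}\}$ (whose complement has negligible probability by the exponential-tail Assumption, exactly as in Lemma~\ref{l:generic}), where the perturbed density is bounded and close to $p_{u_t}(0)$, and then splitting the density difference $|p_{\breve u_t|\bigstar}(a/r_{NT}) - p_{u_t|\bigstar}(0)|$ into the drift part (controlled by Assumption~\ref{a9.1}(i)) and the Lipschitz-in-argument part (controlled by Assumption~\ref{as8}\ref{as8:itm2}), each contributing $o(1)$ after the change of variables. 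A secondary technical point is ensuring the conditioning set $\bigstar$ is chosen so that $\breve g_t'(s-g)$ is measurable with respect to it while $\breve u_t$ retains a non-degenerate conditional law; this is precisely the $f_{2t}$ versus $f_t'\gamma_0$ bookkeeping from the discussion in Online Appendix~\ref{sec:e:discuss}, which I would invoke rather than redo.
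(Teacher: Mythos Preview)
Your overall architecture---change of variables $a=r_{NT}\breve u_t$, pointwise convergence of the integrand, dominated convergence, then replacing $\breve g_t$ by $g_t$---matches the paper's proof. However, your opening step ``condition on $(\breve g_t, x_t)$'' is not viable as written: since $\breve u_t = \breve g_t'\phi_0$ is a deterministic function of $\breve g_t$, the conditional probability $\mathbb P(-\breve g_t's/r_{NT} < \breve u_t \le -\breve g_t'g/r_{NT}\mid \breve g_t, x_t)$ is identically $0$ or $1$, and there is no conditional density $p_{\breve u_t\mid\bigstar}$ to integrate. You flag this at the end as a ``secondary technical point,'' but it is the primary one: the entire change-of-variables computation collapses unless the conditioning set is chosen correctly from the outset.

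The paper's fix is to condition not on all of $\breve g_t$ but only on the two linear functionals $(\breve g_t's,\breve g_t'w)$. Because $s$ and $w$ are linearly independent of $\phi_0$ (the first element of $g=s,w$ is zero while $\gamma_{01}=1$), the conditional density $p_{\breve u_t\mid \breve g_t's,\breve g_t'w}(\cdot)$ is genuinely non-degenerate, and Assumption~\ref{as8}\ref{as8:itm_AD} supplies exactly the continuous differentiability at zero needed to control the error terms after the change of variables. The convergence $p_{\breve u_t}(0)\to p_{u_t}(0)$ and the passage from $\breve g_t$ to $g_t$ in the main term are then handled via $\breve g_t\overset{p}{\to}g_t$, the continuous mapping theorem, and uniform integrability from the moment bound $\mathbb E(\eta_t^4|\breve g_t|_2^2\mid\breve u_t)\le M$ in the same assumption---you do not need to invoke Assumption~\ref{a9.1}, which is not among the lemma's hypotheses and belongs to the bootstrap section.
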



\begin{proof}[Proof of Lemma \ref{Lem-DCTinP}] First, we write a conditional density of $ \breve{u}_t $ given a random variable $ Y $ by $ p(u|Y) $ for more clarity. Note that%
\begin{eqnarray*}
	&&r_{NT}\mathbb E\eta _{t}^{2}1\left\{ \breve{u}_{t}+\breve{g}_{t}^{\prime }\frac{s}{%
		r_{NT}}>0\geq \breve{u}_{t}+\breve{g}_{t}^{\prime }\frac{w}{r_{NT}}\right\}
	\cr
	&=&r_{NT}\mathbb E\eta _{t}^{2}1\left\{ -\frac{\breve{g}_{t}^{\prime }s}{r_{NT}}<%
	\breve{u}_{t}\leq -\frac{\breve{g}_{t}^{\prime }w}{r_{NT}}\right\}  \cr
	&=&\mathbb E\left[ \int_{-\breve{g}_{t}^{\prime }s}^{-\breve{g}_{t}^{\prime
		}w}\mathbb E\left( \eta _{t}^{2}|\frac{z}{r_{NT}},\breve{g}_{t}^{\prime }s,\breve{g}%
	_{t}^{\prime }w\right) p\left( \frac{z}{r_{NT}}|\breve{g}_{t}^{\prime }s,%
	\breve{g}_{t}^{\prime }w\right) dz1\left\{ \breve{g}_{t}^{\prime }s>\breve{g}%
	_{t}^{\prime }w\right\} \right]  \cr
	&=&\mathbb E\left[ \int_{-\breve{g}_{t}^{\prime }s}^{-\breve{g}_{t}^{\prime
		}w}\mathbb E\left( \eta _{t}^{2}|0,\breve{g}_{t}^{\prime }s,\breve{g}_{t}^{\prime
	}w\right) p\left( 0|\breve{g}_{t}^{\prime }s,\breve{g}_{t}^{\prime }w\right)
	dz1\left\{ \breve{g}_{t}^{\prime }s>\breve{g}_{t}^{\prime }w\right\} \right]
	\cr
	&&+\mathbb{E}\left[ \int_{-\breve{g}_{t}^{\prime }s}^{-\breve{g}_{t}^{\prime
		}w}\left( \mathbb E\left( \eta _{t}^{2}|\frac{z}{r_{NT}},\breve{g}_{t}^{\prime }s,%
	\breve{g}_{t}^{\prime }w\right) -\mathbb E\left( \eta _{t}^{2}|0,\breve{g}%
	_{t}^{\prime }s,\breve{g}_{t}^{\prime }w\right) \right) p\left( 0|\breve{g}%
	_{t}^{\prime }s,\breve{g}_{t}^{\prime }w\right) dz1\left\{ \breve{g}%
	_{t}^{\prime }s>\breve{g}_{t}^{\prime }w\right\} \right]  \cr
	&&+\mathbb E\left[ \int_{-\breve{g}_{t}^{\prime }s}^{-\breve{g}_{t}^{\prime
		}w}\left( p\left( \frac{z}{r_{NT}}|\breve{g}_{t}^{\prime }s,\breve{g}%
	_{t}^{\prime }w\right) -p\left( 0|\breve{g}_{t}^{\prime }s,\breve{g}%
	_{t}^{\prime }w\right) \right)\mathbb  E\left( \eta _{t}^{2}|0,\breve{g}_{t}^{\prime
	}s,\breve{g}_{t}^{\prime }w\right) dz1\left\{ \breve{g}_{t}^{\prime }s>%
	\breve{g}_{t}^{\prime }w\right\} \right]  \cr
	&&+\mathbb E \int_{-\breve{g}_{t}^{\prime }s}^{-\breve{g}_{t}^{\prime
		}w}\left( \mathbb E\left( \eta _{t}^{2}|\frac{z}{r_{NT}},\breve{g}_{t}^{\prime }s,%
	\breve{g}_{t}^{\prime }w\right) -\mathbb E\left( \eta _{t}^{2}|0,\breve{g}%
	_{t}^{\prime }s,\breve{g}_{t}^{\prime }w\right) \right) \left( p\left( \frac{%
		z}{r_{NT}}|\breve{g}_{t}^{\prime }s,\breve{g}_{t}^{\prime }w\right) -p\left(
	0|\breve{g}_{t}^{\prime }s,\breve{g}_{t}^{\prime }w\right) \right)\cr
&&	dz1\left\{ \breve{g}_{t}^{\prime }s>\breve{g}_{t}^{\prime }w\right\}
\end{eqnarray*}%
by a change-of-variables formula $z=r_{NT}u.$ First,
\begin{eqnarray*}
	&&\mathbb E\left[ \int_{-\breve{g}_{t}^{\prime }s}^{-\breve{g}_{t}^{\prime
		}w}\mathbb E\left( \eta _{t}^{2}|0,\breve{g}_{t}^{\prime }s,\breve{g}_{t}^{\prime
	}w\right) p\left( 0|\breve{g}_{t}^{\prime }s,\breve{g}_{t}^{\prime }w\right)
	dz1\left\{ \breve{g}_{t}^{\prime }s>\breve{g}_{t}^{\prime }w\right\} \right]
	\cr
	&=&\mathbb E\left( 1\left\{ \breve{g}_{t}^{\prime }s>\breve{g}_{t}^{\prime
	}w\right\} \left( \breve{g}_{t}^{\prime }s-\breve{g}_{t}^{\prime }w\right)
	\mathbb E\left( \eta _{t}^{2}|0,\breve{g}_{t}^{\prime }s,\breve{g}_{t}^{\prime
	}w\right) p\left( 0|\breve{g}_{t}^{\prime }s,\breve{g}_{t}^{\prime }w\right)
	\right)  \cr
	&=&\mathbb E\left( \eta _{t}^{2}1\left\{ \breve{g}_{t}^{\prime }s>\breve{g}%
	_{t}^{\prime }w\right\} \left( \breve{g}_{t}^{\prime }s-\breve{g}%
	_{t}^{\prime }w\right) |\breve{u}_{t}=0\right) p_{\breve{u}_{t}}\left(
	0\right)  \cr
	&\rightarrow &\mathbb E\left( \eta _{t}^{2}1\left\{ g_{t}^{\prime }s>g_{t}^{\prime
	}w\right\} \left( g_{t}^{\prime }s-g_{t}^{\prime }w\right) |u_{t}=0\right)
	p_{u_{t}}\left( 0\right) ,
\end{eqnarray*}%
where the convergence holds by the following reasons. Since $\left( \eta
_{t},\breve{g}_{t}^{\prime }\right) ^{\prime }\overset{p}{\longrightarrow }%
\left( \eta _{t},g_{t}^{\prime }\right) ^{\prime }$ as $N\rightarrow \infty ,
$ we have $\eta _{t}^{2}1\left\{ \breve{g}_{t}^{\prime }s>\breve{g}%
_{t}^{\prime }w\right\} \left( \breve{g}_{t}^{\prime }s-\breve{g}%
_{t}^{\prime }w\right) \overset{p}{\longrightarrow }\eta _{t}^{2}1\left\{
g_{t}^{\prime }s>g_{t}^{\prime }w\right\} \left( g_{t}^{\prime
}s-g_{t}^{\prime }w\right) $ and $\breve{u}_{t}\overset{p}{\longrightarrow }%
u_{t}$ by the continuous mapping theorem, which imples by the Lipschitz
continuity of the densities (Assumption \ref{as8} \ref{as8:itm_AD}) the convergence of $p_{\breve{u}_{t}}\left(
0\right) $ and the conditional densities.
This in turn implies the convergence of $\mathbb E\left( \eta _{t}^{2}1\left\{
\breve{g}_{t}^{\prime }s>\breve{g}_{t}^{\prime }w\right\} \left( \breve{g}%
_{t}^{\prime }s-\breve{g}_{t}^{\prime }w\right) |\breve{u}_{t}=0\right) $
due to the uniform integrability, which is implied by the boundedness of $%
\mathbb E\left( \eta _{t}^{4}\left\vert \breve{g}_{t}\right\vert _{2}^{2}|\breve{u}%
_{t}\right) $.

Then, we show the other terms are negligible. We elaborate the first of
these since the reasonings are similar.%
\begin{eqnarray*}
	&&\mathbb E\left[ \int_{-\breve{g}_{t}^{\prime }s}^{-\breve{g}_{t}^{\prime }w}\left(
	\mathbb E\left( \eta _{t}^{2}|\frac{z}{r_{NT}},\breve{g}_{t}^{\prime }s,\breve{g}%
	_{t}^{\prime }w\right) -\mathbb E\left( \eta _{t}^{2}|0,\breve{g}_{t}^{\prime }s,%
	\breve{g}_{t}^{\prime }w\right) \right) \frac{z}{r_{NT}}p\left( 0|\breve{g}%
	_{t}^{\prime }s,\breve{g}_{t}^{\prime }w\right) dz1\left\{ \breve{g}%
	_{t}^{\prime }s>\breve{g}_{t}^{\prime }w\right\} \right]  \\
	&\leq &C\mathbb E\left[ \int_{-\breve{g}_{t}^{\prime }s}^{-\breve{g}_{t}^{\prime }w}%
	\frac{z}{r_{NT}}dzp\left( 0|\breve{g}_{t}^{\prime }s,\breve{g}_{t}^{\prime
	}w\right) 1\left\{ \breve{g}_{t}^{\prime }s>\breve{g}_{t}^{\prime }w\right\} %
	\right]  \\
	&=&C^{\prime }\mathbb E\left( \left( \breve{g}_{t}^{\prime }w\right) ^{2}-\left(
	\breve{g}_{t}^{\prime }s\right) ^{2}\right) \frac{1}{2r_{NT}}=o\left(
	1\right) .
\text{\hfill}
\end{eqnarray*}
\end{proof}

\subsubsection{Bias} \label{sec:Bias}

We show that, as $N,T\rightarrow \infty ,$
\begin{equation*}
l_{NT}(\mathbb{E}\widehat{R}_{2}(g)+\widehat{\mathbb{C}}_{3}(g))\rightarrow
A\left( \omega, g\right),
\end{equation*}%
where
$$
A(\omega, g):=  M_\omega\mathbb{E}\left( (x_{t}^{\prime
}d_{0})^{2}[\left\vert g_{t}^{\prime }Hg+\zeta _{\omega}^{-1}\mathcal{Z}%
_{t}\right\vert -\left\vert \zeta _{\omega}^{-1}\mathcal{Z}_{t}\right\vert ]\bigg{|}u_{t}=0\right) p_{u_{t}}(0).
$$
and that $A\left( \omega, g\right)
\rightarrow +\infty $ as $\left\vert g\right\vert \rightarrow +\infty $ for
any $\omega$.

\proof

For $\gamma =H^{-1}\phi $, and $g=r_{NT}\left[ \gamma -\gamma _{0}\right] $,
we have $\phi -\phi _{0}=H(\gamma -\gamma _{0})=r_{NT}^{-1}Hg,$ with $%
g_{1}=0 $ due to the normalization. Suppose $g\neq 0$. Let
\begin{equation*}
r_g=|\phi-\phi_0|_2^{-1}(\phi-\phi_0)=|Hg|_2^{-1} Hg.
\end{equation*}
We only need to focus on the case that $r_g$ is linearly independent of $%
\phi_0.$ Let
\begin{equation*}
\zeta _{NT}=\sqrt{N}r_{NT}^{-1} .
\end{equation*}%
By the proof of Lemma \ref{l:generic},
\begin{eqnarray*}
	&&l_{NT}\mathbb{E}\left( \mathbb{\widehat{C}}_{3}\left( \delta _{0},\gamma
	\right) +\widehat{R}_{2}\left( \gamma \right) \right) \\
	&=&l_{NT}\mathbb{E}\left( x_{t}^{\prime }\delta _{0}\right) ^{2}\left(
	A_{1t}\left( \phi \right) +A_{2t}\left( \phi \right) -A_{3t}\left( \phi
	\right) -A_{4t}\left( \phi \right) \right)
\end{eqnarray*}

\medskip
\noindent
\textbf{Step I}: obtaining the results for the case of $\omega\in(0,\infty]$.
\medskip

In this case, $\zeta_{NT}\to\zeta_\omega\in (0,\infty]$. We now work with (\ref{e0.5add}). Note
that for $p=1.5$,
\begin{equation*}
\frac{l_{NT}}{T^{2\varphi}N^{0.5+1/(2p)}}=o(1),
\end{equation*}
and
\begin{equation*}
M_{NT}:=\frac{1}{\sqrt{N}}l_{NT}T^{-2\varphi}\zeta_{NT}\to M_\omega:=\max\{1,
\omega^{-1/3}\}\in(0,\infty).
\end{equation*}

We shall use the following equality, which can be verified:
\begin{eqnarray}  \label{e0.7add}
|a+b| -|b|&=& \Xi(a,b), \quad \text{ where }\cr \Xi(a,b) &:=& -a1\{a\leq 0\}
1\left\{b\leq 0\right\} - \left( a + b\right)1\left\{ a +
b<0\right\}1\left\{ b>0\right\}\cr &&- b1\{ a + b<0\} 1\{b>0\} + a 1\{ a
+b>0\} 1\{ a <0\} \cr &&+ a1\left\{ a >0\right\} 1\{b>0\} + \left( a +
b\right)1\left\{ a + b> 0\right\} 1\left\{b\leq 0\right\} \cr && + b
1\{b<0\} 1\{ a +b>0\}- a 1\{ a>0\} 1\{ a +b<0\} .
\end{eqnarray}
Let $g_t^{\prime }(\phi-\phi_0)=a$, $\frac{h_t^{\prime }\phi_0}{\sqrt{N}}=b$%
, Note that (\ref{e0.5add}) can be written exactly as the right hand side of
the above equality, up to $\mathbb{E}_{|u_t=0}(x_t^{\prime
}\delta_0)^2p_{u_t}(0)$. Hence (\ref{e0.5add}) and the above equality imply,
for $\phi-\phi_0=r_{NT}^{-1}H_Tg$,
\begin{eqnarray*}
	&&l_{NT} \mathbb{E}(x_t^{\prime }\delta_0)^2(A_1-A_3+ A_2-A_4) \cr &=^{(1)}&
	l_{NT}\mathbb{E}_{|u_t=0}(x_t^{\prime }\delta_0)^2p_{u_t}(0) \Xi(a,b) +o(1) %
	\cr &=^{(2)}& l_{NT}\mathbb{E}_{|u_t=0}(x_t^{\prime }\delta_0)^2p_{u_t}(0) %
	\left[\left| g_t^{\prime }\left(\phi -\phi _{0}\right) + \frac{h_t^{\prime
		}\phi_0}{\sqrt{N}} \right | - \left |\frac{h_t^{\prime }\phi_0 }{\sqrt{N}}
	\right | \right] +o(1)\cr
 &=& \mathbb{\breve{C}}_{NT}(H_Tg)+o(1),%
	\quad \text{ where} \cr \mathbb{\breve{C}}_{NT}(\mathfrak{g})&:=&M_{NT}%
	\mathbb{E}_{|u_t=0}(x_t^{\prime }d_0)^2p_{u_t}(0) \left(\left| g_t^{\prime }%
	\mathfrak{g} + \zeta_{\omega} ^{-1} h_t^{\prime }\phi_0\right | - \left |
	\zeta_{\omega}^{-1} h_t^{\prime }\phi_0 \right | \right)
\end{eqnarray*}
In the above, (1) is rewriting (\ref{e0.5add}) using the notation of $%
\Xi(a,b)$ for $g_t^{\prime }(\phi-\phi_0)=a$ and $\frac{h_t^{\prime }\phi_0}{%
	\sqrt{N}}=b$; (2) uses the equality $|a+b| -|b|= \Xi(a,b)$.

\medskip
\noindent
\textbf{Step I.1}: pointwise convergence of $\mathbb{\breve{C}}_{NT}(\mathfrak{g})$
\medskip

We now derive the pointwise limit of $\mathbb{\breve{C}}_{NT}(\mathfrak{g})$%
. Define
\begin{equation*}
\widetilde F_{g_t}(z)= \left| g_t^{\prime }\mathfrak{g} + \zeta_\omega ^{-1}
z\right | - \left | \zeta_\omega^{-1} z\right | .
\end{equation*}
Then $\mathbb{\breve{C}}_{NT}(\mathfrak{g})= M_{NT}\mathbb{E}%
_{|u_t=0}(x_t^{\prime }d_0)^2p_{u_t}(0) \mathbb{E}[ \widetilde
F_{g_t}(h_t^{\prime }\phi_0)| x_t, g_t]$. Now we use the following
portmanteau lemma: $X_n\overset{d}{\longrightarrow } X$ if and only if $\mathbb{E}\widetilde F(X_n)\to
\mathbb{E}\widetilde F(X)$ for all bounded continuous functions $\widetilde F$. Note
that $h_t^{\prime }\phi_0|x_t, g_t\overset{d}{\longrightarrow } \mathcal{Z}_t$. Now for each fixed $%
(x_t, g_t)$,
\begin{equation*}
|\widetilde F_{g_t}(z)|\leq |g_t^{\prime } \mathfrak{g}|;
\end{equation*}
the right hand side is independent of $z$, and $\widetilde F_{g_t}(z)$ is
continuous in $z$. So we can apply the portmanteau lemma to conclude that $%
\mathbb{E}[ \widetilde F_{g_t}(h_t^{\prime }\phi_0)| x_t, g_t]\to \mathbb{E}[
\widetilde F_{g_t}(\mathcal{Z}_t)| x_t, g_t]$ for each fixed $x_t, g_t$. This
further implies, $P_N(x_t, g_t) \to P(x_t, g_t)$ for each fixed $(x_t, g_t)$%
, with
\begin{eqnarray*}
	P_N(x_t, g_t) &:=& (x_t^{\prime }d_0)^2p_{u_t}(0) \mathbb{E}[ \widetilde
	F_{g_t}(h_t^{\prime }\phi_0)| x_t, g_t] ,\cr P(x_t, g_t)&:=& (x_t^{\prime
	}d_0)^2p_{u_t}(0) \mathbb{E}[ \widetilde F_{g_t}(\mathcal{Z}_t)| x_t, g_t].
\end{eqnarray*}

In addition, note that for each fixed $x_{t},g_{t}$, $|\mathbb{E}[\widetilde{F}%
_{g_{t}}(h_{t}^{\prime }\phi _{0})|x_{t},g_{t}]|\leq |g_{t}^{\prime } \mathfrak{g}|$.
For all $N$, \newline
$|P_{N}(x_{t},g_{t})|\leq (x_{t}^{\prime
}d_{0})^{2}p_{u_{t}}(0)|g_{t}^{\prime } \mathfrak{g}|$; the right hand side does not
depend on $N$, and has a bounded expectation:
$\mathbb E(x_{t}^{\prime
}d_{0})^{2}p_{u_{t}}(0)|g_{t}^{\prime } \mathfrak{g}|<\infty$.
 Hence by the dominated
convergence theorem, the
pointwise convergence of $P_{N}(x_{t},g_{t})\rightarrow P(x_{t},g_{t})$
implies $\mathbb{E}_{|u_{t}=0}P_{N}(x_{t},g_{t})\rightarrow \mathbb{E}%
_{|u_{t}=0}P(x_{t},g_{t})$, which means
\begin{equation*}
\mathbb{E}_{|u_{t}=0}(x_{t}^{\prime }d_{0})^{2}p_{u_{t}}(0)\mathbb{E}[\widetilde{%
	F}_{g_{t}}(h_{t}^{\prime }\phi _{0})|x_{t},g_{t}]\rightarrow \mathbb{E}%
_{|u_{t}=0}(x_{t}^{\prime }d_{0})^{2}p_{u_{t}}(0)\mathbb{E}[\widetilde{F}%
_{g_{t}}(\mathcal{Z}_{t})|x_{t},g_{t}].
\end{equation*}%
Also, $M_{NT}\rightarrow M_\omega\in (0,\infty )$. Thus
\begin{eqnarray*}
\mathbb{\breve{C}}_{NT}(\mathfrak{g})&=&M_{NT}\mathbb{E}_{|u_{t}=0}\{(x_{t}^{	\prime }d_{0})^{2}p_{u_{t}}(0)\mathbb{E}[\widetilde{F}_{g_{t}}(h_{t}^{\prime}\phi _{0})|x_{t},g_{t}]\}\cr&\rightarrow& M_\omega\mathbb{E}_{|u_{t}=0}(x_{t}^{%
	\prime }d_{0})^{2}p_{u_{t}}(0)\mathbb{E}[\widetilde{F}_{g_{t}}(\mathcal{Z}%
_{t})|x_{t},g_{t}]\cr&=&M_\omega\mathbb{E}\left( (x_{t}^{\prime
}d_{0})^{2}[\left\vert g_{t}^{\prime }\mathfrak{g}+\zeta _{\omega}^{-1}\mathcal{Z}%
_{t}\right\vert -\left\vert \zeta _{\omega}^{-1}\mathcal{Z}_{t}\right\vert ]%
\bigg{|}u_{t}=0\right) p_{u_{t}}(0)\cr
&:=&\breve{A}(\mathfrak{g}).
\end{eqnarray*}%

Hence we have proved  for some $C>0$ and any $|\mathfrak{g}|_2<C$,
\begin{eqnarray*}
l_{NT}\mathbb{E}(x_{t}^{\prime }\delta _{0})^{2}(A_{1}-A_{3}+A_{2}-A_{4})&=&
\mathbb{\breve{C}}_{NT}(H_{T}g)+o(1),
\cr
\mathbb{\breve{C}}_{NT}(\mathfrak{g})
&\to& \breve A(\mathfrak{g}).
\end{eqnarray*}%

\medskip
\noindent
\textbf{Step I.2}: $\mathbb{\breve{C}}_{NT}(H_{T}g)\overset{P}{\longrightarrow }A(\omega, g) $
\medskip

  We apply the extended continuous mapping theorem (CMT) for drifting functions (cf.
Lemma \ref{CMT-extension}).  To do so, first note that $H_T\to ^P H$ for some $K\times K$ invertible nonrandom matrix $H$ (e.g., \cite{bai03}). To applied the extended CMT, we need to show, for any converging sequence $\mathfrak{g}_T\to \mathfrak{g}$ in a compact space, we have
\begin{equation}\label{e9.23}
\mathbb{\breve{C}}_{NT}(\mathfrak{g}_T)\to \breve A(\mathfrak{g}).
\end{equation}
Once this is achieved, then because $H_Tg\overset{P}{\longrightarrow } Hg$, by Theorem 1.11.1 of \cite{VW}, we have $\mathbb{\breve{C}}_{NT}(H_Tg)\overset{P}{\longrightarrow } \breve A(Hg)=A(\omega, g)$.

To prove (\ref{e9.23}), note that $
|\mathbb{\breve{C}}_{NT}(\mathfrak{g}_T)- \breve A(\mathfrak{g})|
\leq |\mathbb{\breve{C}}_{NT}(\mathfrak{g}_T)- \mathbb{\breve{C}}_{NT}(\mathfrak{g})|+
|\mathbb{\breve{C}}_{NT}(\mathfrak{g})- \breve A(\mathfrak{g})|.
$
The second term on the right hand side is $o(1)$ due to the pointwise convergence. It remains to prove the first term on the right is also $o(1)$. By definition,
\begin{eqnarray*}
&& |\mathbb{\breve{C}}_{NT}(\mathfrak{g}_T)- \mathbb{\breve{C}}_{NT}(\mathfrak{g})|
\leq M_{NT}\mathbb{E}_{|u_{t}=0}(x_{t}^{	\prime }d_{0})^{2}p_{u_{t}}(0)\mathbb{E}[    \left| g_t^{\prime }(\mathfrak{g}_T  -  \mathfrak{g} )\right |  |x_{t},g_{t}]\cr
&\leq& O(1)\mathbb{E}_{|u_{t}=0}(x_{t}^{	\prime }d_{0})^{2}|g_t|_2    \left| \mathfrak{g}_T  -  \mathfrak{g} \right |  \leq O(1)
  \left| \mathfrak{g}_T  -  \mathfrak{g} \right |=o(1).
\end{eqnarray*}
Hence by the triangular inequality,  (\ref{e9.23}) holds.
It then immediately follows that
$l_{NT}\mathbb{E}(x_{t}^{\prime }\delta _{0})^{2}(A_{1}-A_{3}+A_{2}-A_{4})\overset{P}{\longrightarrow } A(\omega, g)$.
In particular, when $\omega=\infty $, $\zeta _{\omega}^{-1}=0$ and $M_\omega=1$, so $%
A\left( \omega, g\right)=A(\infty ,g)$.


\medskip
\noindent
\textbf{Step II}: obtaining the results for the case of $\omega=0$
\medskip

In this case, we have that $\zeta_{NT}\to 0$, and
\begin{equation*}
\widetilde M_{NT}:=\frac{l_{NT}\zeta_{NT}^2}{\sqrt{N}}T^{-2\varphi} \to 1.
\end{equation*}
We now work with the last equality of (\ref{e0.10add}), up to $\frac{l_{NT}}{%
	T^{2\varphi}N^{0.5+1/(2p)}}=o(1)$,
\begin{eqnarray*}
	&& l_{NT}\mathbb{E}_{|u_t=0}(x_t^{\prime }\delta_0)^2(A_1-A_3+ A_2-A_4)
:= \mathbb{\breve{C}}	_{NT,2}(H_Tg)+o(1)
\end{eqnarray*}
where
\begin{eqnarray*}
	\mathbb{\breve{C}}_{NT,2}(\mathfrak{g})&:=& -\frac{l_{NT}T^{-2\varphi}%
		\zeta_{NT}}{\sqrt{N}}2 \mathbb{E}_{|u_t=0}(x_t^{\prime }d_0)^2p_{u_t}(0)
	\left( g_t^{\prime } \mathfrak{g} + \zeta_{NT}^{-1} h_t^{\prime
	}\phi_0\right)1\left\{ g_t^{\prime } \mathfrak{g} + \zeta_{NT}^{-1}
	h_t^{\prime }\phi_0<0\right\}1\left\{ h_t^{\prime }\phi_0>0\right\}\cr &&+%
	\frac{l_{NT}T^{-2\varphi}\zeta_{NT}}{\sqrt{N}}2 \mathbb{E}%
	_{|u_t=0}(x_t^{\prime }d_0)^2 p_{u_t}(0)\left( g_t^{\prime }\mathfrak{g}
	+\zeta_{NT}^{-1}h_t^{\prime }\phi_0\right)1\left\{ g_t^{\prime }\mathfrak{g}
	+ \zeta_{NT}^{-1}h_t^{\prime }\phi_0> 0\right\} 1\left\{h_t^{\prime }\phi_0
	\leq 0\right\} .
\end{eqnarray*}

\medskip
\noindent
\textbf{Step II.1}: pointwise convergence of $\mathbb{\breve{C}}_{NT,2}(\mathfrak{g})$
\medskip

We now derive the limit of $\mathbb{\breve{C}}_{NT,2}(\mathfrak{g})$. Change
variable $y=h_t^{\prime }\phi_0\zeta_{NT}^{-1}$, $\mathbb{\breve{C}}_{NT,2}(%
\mathfrak{g})$ equals
\begin{eqnarray*}
&&- \widetilde M_{NT}2
	p_{u_t}(0)\mathbb{E}[(x_t^{\prime }d_0)^2 F_{NT,1}(g_t, x_t) |u_t=0]+
	\widetilde M_{NT}2 p_{u_t}(0) \mathbb{E}[(x_t^{\prime }d_0)^2 F_{NT,2} (g_t,
	x_t) |u_t=0], \cr \cr &\text{where} \cr & & F_{NT,1} (g_t, x_t) :=\int
	\left( g_t^{\prime }\mathfrak{g} + y\right)1\left\{ g_t^{\prime } \mathfrak{g%
	} + y<0\right\}1\left\{ y>0\right\} p_{h_t^{\prime }\phi_0|
		g_t,x_t,  u_t=0}(\zeta_{NT}y)d y \cr && F_{NT,2} (g_t, x_t) := \int \left(
	g_t^{\prime }\mathfrak{g} +y\right)1\left\{ g_t^{\prime }\mathfrak{g} + y>
	0\right\} 1\left\{y \leq 0\right\} p_{h_t^{\prime }\phi_0|
		g_t,x_t,  u_t=0}(\zeta_{NT}y)d y.
\end{eqnarray*}

For each fixed $y$, $x_{t},g_{t}$, as $\zeta _{NT}\rightarrow 0$,
for any $C>0$, for all large $N,T$, $|\zeta_{NT}y|<C$.
Recall  $p_{\mathcal Z_t}(\cdot)$ is  the pdf of $\mathcal N(0,\sigma^2_{h, x_t, g_t})$
with
$\sigma^2_{h, x_t, g_t}:=\plim_{N\to\infty} \mathbb E[(h_t'\phi_0)^2|x_t, g_t, g_t'\phi_0=0]$. By Assumption \ref{as5},
\begin{equation*}
|p_{h_{t}^{\prime }\phi _{0}|g_{t},x_{t}, u_t=0}(\zeta _{NT}y)-p_{\mathcal{Z}%
	_{t}}(0)|\leq \sup_{|z|<C}|p_{h_{t}^{\prime }\phi
	_{0}|g_{t},x_{t}, u_t=0}(z)-p_{\mathcal{Z}_{t}}(z)|+|p_{\mathcal{Z}%
	_{t}}(\zeta _{NT}y)-p_{\mathcal{Z}_{t}}(0)|=o(1).
\end{equation*}%
 and
$\sup_{x_t, g_t}p_{h_t^{\prime }\phi_0|
		g_t,x_t,  u_t=0}(\cdot)<C_0$ for some $C_0>0$ for all $N,T$.   For each fixed $g_{t}$ and all $N,T$, the integrand of $F_{NT,1} (g_t, x_t)$ is bounded by
\begin{equation*}
|\left( g_{t}^{\prime }\mathfrak{g}+y\right) 1\left\{ g_{t}^{\prime }%
\mathfrak{g}+y<0\right\} 1\left\{ y>0\right\} p_{h_{t}^{\prime }\phi
	_{0}|g_{t},x_{t}, u_t=0}(\zeta _{NT}y)|\leq C_0|\left( g_{t}^{\prime }\mathfrak{g}%
+y\right) 1\left\{ g_{t}^{\prime }\mathfrak{g}+y<0\right\} 1\left\{
y>0\right\} |
\end{equation*}%
with the right hand side being  free of $N,T$ and    integrable with respect to $y$:
$$
\int \left|\left( g_{t}^{\prime }\mathfrak{g}%
+y\right) 1\left\{ g_{t}^{\prime }\mathfrak{g}+y<0\right\} 1\left\{
y>0\right\}\right |dy =  \frac{( g_{t}^{\prime }\mathfrak{g})^2}{2}1\{ g_{t}^{\prime }\mathfrak{g}<0\}.
$$
 Hence by the dominated convergence theorem, for each fixed $g_{t},x_{t}$,
\begin{eqnarray*}
F_{NT,1}(g_{t},x_{t})\rightarrow F_{1}(g_{t},x_{t}):=\int \left( g_{t}^{\prime }\mathfrak{g}+y\right)
1\left\{ g_{t}^{\prime }\mathfrak{g}+y<0\right\} 1\left\{ y>0\right\} p_{%
	\mathcal{Z}_{t}}(0)dy=-\frac{1}{2}p_{\mathcal{Z}_{t}}(0)(g_{t}^{\prime }%
\mathfrak{g})^{2}1\{   g_{t}^{\prime }\mathfrak{g}<0 \}.
\end{eqnarray*}
Nnote that $p_{\mathcal{Z}_{t}}(0)$ does not depend on $N,T$, and is a function of $x_t, g_t$ through $\sigma^2_{h, x_t, g_t}$.
 In addition, let $\mathcal R(x_t, g_t)=C_0(x_t'd_0)^2 \frac{( g_{t}^{\prime }\mathfrak{g})^2}{2}1\{ g_{t}^{\prime }\mathfrak{g}<0\}$.
Then for all $N,T$,
\begin{eqnarray*}
 |(x_{t}^{\prime
}d_{0})^{2}F_{NT,1}(g_{t},x_{t})|
&\leq& (x_{t}^{\prime
}d_{0})^{2} |  \int
	\left( g_t^{\prime }\mathfrak{g} + y\right)1\left\{ g_t^{\prime } \mathfrak{g%
	} + y<0\right\}1\left\{ y>0\right\} p_{h_t^{\prime }\phi_0|
		g_t,x_t,  u_t=0}(\zeta_{NT}y)d y | \cr
		&\leq&
		C_0(x_{t}^{\prime
}d_{0})^{2}   \int |\left( g_t^{\prime }\mathfrak{g} + y\right)1\left\{ g_t^{\prime } \mathfrak{g%
	} + y<0\right\}1\left\{ y>0\right\} |dy
	\cr
	&=&C_0(x_t'd_0)^2 \frac{( g_{t}^{\prime }\mathfrak{g})^2}{2}1\{ g_{t}^{\prime }\mathfrak{g}<0\}
	= \mathcal R(x_t, g_t)
 \end{eqnarray*}
 Here $\mathcal R(x_t, g_t)$ is free of $N,T$, and
$\mathbb E(|\mathcal R(x_t, g_t)|| u_t=0)<\infty$.
 Therefore, still by the dominated convergence theorem,
 $\mathbb{E}[(x_{t}^{\prime
}d_{0})^{2}F_{NT,1}(g_{t},x_{t})|u_{t}=0]\rightarrow \mathbb{E}%
[(x_{t}^{\prime }d_{0})^{2}F_{1}(g_{t},x_{t})|u_{t}=0]$.
Using the similar argument, we also reach:
 $\mathbb{E}[(x_{t}^{\prime
}d_{0})^{2}F_{NT,2}(g_{t},x_{t})|u_{t}=0]\rightarrow \mathbb{E}%
[(x_{t}^{\prime }d_{0})^{2}F_{2}(g_{t},x_{t})|u_{t}=0]$,
where
$$
 F_{2} (g_t, x_t) := \int \left(
	g_t^{\prime }\mathfrak{g} +y\right)1\left\{ g_t^{\prime }\mathfrak{g} + y>
	0\right\} 1\left\{y \leq 0\right\} p_{\mathcal Z_t}(0)d y
	=\frac{1}{2}p_{\mathcal{Z}	_{t}}(0)(g_{t}^{\prime }\mathfrak{g})^{2} 1\{   g_{t}^{\prime }\mathfrak{g}>0 \}.
$$
 So
\begin{eqnarray*}
\mathbb{\breve{C}}_{NT,2}(\mathfrak{g})
&=&-\widetilde{M}_{NT}2p_{u_{t}}(0)%
\mathbb{E}[(x_{t}^{\prime }d_{0})^{2}F_{NT,1}(g_{t},x_{t})|u_{t}=0]+\widetilde{M}%
_{NT}2p_{u_{t}}(0)\mathbb{E}[(x_{t}^{\prime
}d_{0})^{2}F_{NT,2}(g_{t},x_{t})|u_{t}=0]\cr
&\rightarrow& -2\mathbb{E}%
[(x_{t}^{\prime }d_{0})^{2}p_{u_{t}}(0)F_{1}(g_{t},x_{t})|u_{t}=0]+2\mathbb{E%
}[(x_{t}^{\prime }d_{0})^{2}p_{u_{t}}(0)F_{2}(g_{t},x_{t})|u_{t}=0]\cr
	&=&(\mathbb{E}(x_{t}^{\prime
}d_{0})^{2}(g_{t}^{\prime }\mathfrak{g})^{2}|u_{t}=0,\mathcal{Z}_{t}=0)p_{u_{t},\mathcal{Z}_{t}}(0,0)\cr
&:=& C(\mathfrak{g}).
\end{eqnarray*}

\medskip
\noindent
\textbf{Step II.2}: $\mathbb{\breve{C}}_{NT,2}(H_{T}g)\overset{P}{\longrightarrow }C(g) $
\medskip

Again by the extended CMT (Lemma \ref{CMT-extension}), due to the pointwise convergence of  $\mathbb{\breve{C}}_{NT,2}(\mathfrak{g})$, similar to the proof of step I.2, it suffices to prove, for any converging sequence $\mathfrak{g}_T\to \mathfrak{g}$ on a compact space,
$
|\mathbb{\breve{C}}_{NT,2}(\mathfrak{g}_T)-  \mathbb{\breve{C}}_{NT,2}(\mathfrak{g})|\to 0.
$
By definition, $|\mathbb{\breve{C}}_{NT,2}(\mathfrak{g}_T)- \mathbb{\breve{C}}_{NT,2}(\mathfrak{g})|\leq a_1+a_2$, where
\begin{eqnarray*}
a_1&=& \widetilde{M}_{NT}2p_{u_{t}}(0)%
\mathbb{E}[(x_{t}^{\prime }d_{0})^{2}|z(\mathfrak{g}_T)-z(\mathfrak{g})||u_{t}=0]\cr
z(\mathfrak{g}_T)&:=&\int
	\left( g_t^{\prime }\mathfrak{g}_T + y\right)1\left\{ g_t^{\prime } \mathfrak{g%
	} _T+ y<0\right\}1\left\{ y>0\right\} p_{h_t^{\prime }\phi_0|
		g_t,x_t,  u_t=0}(\zeta_{NT}y)d y\cr
		a_2&=& \widetilde{M}_{NT}2p_{u_{t}}(0)%
\mathbb{E}[(x_{t}^{\prime }d_{0})^{2}|\widetilde z(\mathfrak{g}_T)-\widetilde z(\mathfrak{g})||u_{t}=0]\cr
\widetilde z(\mathfrak{g}_T)&:=&\int
	\left( g_t^{\prime }\mathfrak{g}_T + y\right)1\left\{ g_t^{\prime } \mathfrak{g%
	} _T+ y>0\right\}1\left\{ y\leq 0\right\} p_{h_t^{\prime }\phi_0|
		g_t,x_t,  u_t=0}(\zeta_{NT}y)d y
\end{eqnarray*}
and $a_2$ is defined similarly. Note that
\begin{eqnarray*}
|z(\mathfrak{g}_T)-z(\mathfrak{g})|&\leq&\int
|	\left( g_t^{\prime }\mathfrak{g} _T+ y\right)1\left\{ g_t^{\prime } \mathfrak{g%
	} _T+ y<0\right\} -\left( g_t^{\prime }\mathfrak{g} + y\right)1\left\{ g_t^{\prime } \mathfrak{g%
	} + y<0\right\}    |1\left\{ y>0\right\} \cr
&&\cdot p_{h_t^{\prime }\phi_0|
		g_t,x_t,  u_t=0}(\zeta_{NT}y)d y\cr
&\leq&\int
|	  g_t^{\prime }(\mathfrak{g}_T-\mathfrak{g}) |1\left\{ g_t^{\prime } \mathfrak{g%
	} _T+ y<0\right\}    1\left\{ y>0\right\} p_{h_t^{\prime }\phi_0|
		g_t,x_t,  u_t=0}(\zeta_{NT}y)d y\cr
&&+\int
|	 \left( g_t^{\prime }\mathfrak{g} + y\right)||1\left\{ g_t^{\prime } \mathfrak{g%
	} + y<0\right\}  -1\left\{ g_t^{\prime } \mathfrak{g%
	} _T+ y<0\right\}   |1\left\{ y>0\right\} \cr
	&&\cdot p_{h_t^{\prime }\phi_0|
		g_t,x_t,  u_t=0}(\zeta_{NT}y)d y\cr
&\leq& C|g_t|_2^2|  \mathfrak{g} _T- \mathfrak{g}    |_2.
\end{eqnarray*}
Thus
$
a_1\leq O(1)  \mathbb{E}[(x_{t}^{\prime }d_{0})^{2} |g_t|_2^2|   u_{t}=0] |\mathfrak{g}_T-\mathfrak{g}|_2=o(1).
$ Similarly, $a_2=o(1)$, implying $\mathbb{\breve{C}}_{NT,2}(\mathfrak{g}_T)\to   \mathbb{\breve{C}}_{NT,2}(\mathfrak{g}).$
Hence by the extended CMT,  $\mathbb{\breve{C}}_{NT,2}(H_{T}g)\overset{P}{\longrightarrow }C(g) $.
So
\begin{eqnarray*}
	&& l_{NT}\mathbb{E}_{|u_t=0}(x_t^{\prime }\delta_0)^2(A_1-A_3+ A_2-A_4) = \mathbb{\breve{C}}_{NT,2}(H_Tg)+o(1)\cr &\overset{P}{\longrightarrow }&(\mathbb{E}(x_t^{\prime
	}d_0)^2 ( (g_t'Hg)^2|u_t=0, \mathcal{Z}_t=0)p_{u_t, \mathcal{Z}%
		_t}(0,0):=C(g).
\end{eqnarray*}

\medskip
\noindent
\textbf{Step II.3}: $C(g)=\lim_{\omega\to0}A\left( \omega, g\right)$
\medskip

 As $\omega\to0$, we have that $%
\zeta_\omega=\omega^{1/3} $, $M_\omega=\omega^{-1/3}$. Still use (\ref{e0.7add}) with $%
g_t^{\prime }Hg=a$, $\zeta_\omega^{-1}\mathcal{Z}_t=b$, and the formula $%
|a+b|-|b|=\Xi(a,b)$:
\begin{eqnarray*}
	A\left( \omega, g\right)&:=&M_\omega \mathbb{E }\left[(xd_0)^2\left(\left| g_t^{\prime }Hg +
	\zeta_\omega^{-1} \mathcal{Z}_t \right | - \left | \zeta_\omega^{-1} \mathcal{Z}_t
	\right | \right) \bigg{|} u_t=0\right] p_{u_t}(0) \cr &=& M_\omega\mathbb{E}%
	_{|u_t=0}(x_t^{\prime }d_0)^2p_{u_t}(0) \left(\left| a + b\right | - \left |
	b\right | \right) \cr &=&-M_\omega\mathbb{E}_{|u_t=0}(x_t^{\prime
	}d_0)^2p_{u_t}(0) a1\{a\leq 0\} 1\left\{b\leq 0\right\} \cr &&+M_\omega\mathbb{E%
	}_{|u_t=0}(x_t^{\prime }d_0)^2p_{u_t}(0) a1\left\{ a >0\right\} 1\{b>0\} \cr
	&& + M_\omega\mathbb{E}_{|u_t=0}(x_t^{\prime
	}d_0)^2p_{u_t}(0)\Delta(a,b)
\end{eqnarray*}
where $\Delta(a,b)  $  denotes the sum of the other terms in the expression of $\Xi(a,b)$ given in (\ref{e0.7add}).
We now aim to  obtain alternative expressions for the first two terms on the right hand side.
Note that conditional on $(x_{t},g_{t},u_{t}=0)$, $b=\zeta_\omega^{-1}\mathcal{Z}_t$ is Gaussian with zero
mean, so the first term on the right hand side can be replaced with
\begin{eqnarray*}
&&-M_\omega\mathbb{E}(x_{t}^{\prime }d_{0})^{2}p_{u_{t}}(0)a1\{a\leq 0\}1\left\{
b\leq 0\right\} \cr
&=&-M_\omega\mathbb{E}(x_{t}^{\prime
}d_{0})^{2}p_{u_{t}}(0)a1\{a\leq 0\}1\left\{ b>0\right\} \cr
&=&-M_\omega\mathbb{E}%
(x_{t}^{\prime }d_{0})^{2}p_{u_{t}}(0)a1\{a\leq 0\}1\left\{ b>-a\right\} -M_\omega\mathbb{E}(x_{t}^{\prime }d_{0})^{2}p_{u_{t}}(0)a1\{a\leq 0\}1\left\{
-a>b>0\right\}
\end{eqnarray*}%
Similarly, $1\{b>0\}$ in the second term on the right hand side of $A(\omega, g)$ can be replaced with
\begin{eqnarray*}
&&M_\omega\mathbb{E}(x_{t}^{\prime
}d_{0})^{2}p_{u_{t}}(0)a1\left\{ a>0\right\} 1\{b<-a\}+M_\omega\mathbb{E}%
(x_{t}^{\prime }d_{0})^{2}p_{u_{t}}(0)a1\left\{ a>0\right\} 1\{-a<b<0\}.
\end{eqnarray*}
These alternative expressions can be combined with $\Delta(a,b)$, to reach:
 (note that $M_\omega=\zeta _{k}^{-1}$ and $%
\zeta _{k}\rightarrow 0$ as $k\rightarrow 0$),
\begin{eqnarray*}
A\left( \omega, g\right)&=&-2\zeta _{k}^{-1}\mathbb{E}_{|u_{t}=0}(x_{t}^{\prime
}d_{0})^{2}p_{u_{t}}(0)\left( a+b\right) 1\left\{ a+b<0\right\} 1\left\{
b>0\right\} \cr
&&+2\zeta _{k}^{-1}\mathbb{E}_{|u_{t}=0}(x_{t}^{\prime
}d_{0})^{2}p_{u_{t}}(0)\left( a+b\right) 1\left\{ a+b>0\right\} 1\left\{
b\leq 0\right\} \cr
&=&-2\mathbb{E}_{|u_{t}=0}(x_{t}^{\prime
}d_{0})^{2}p_{u_{t}}(0)\int \left( a+b\right) 1\left\{ a+b<0\right\}
1\left\{ b>0\right\} p_{\mathcal{Z}_{t}}(\zeta _{k}b)db\cr
&&+2\mathbb{E}
_{|u_{t}=0}(x_{t}^{\prime }d_{0})^{2}p_{u_{t}}(0)\int \left( a+b\right)
1\left\{ a+b>0\right\} 1\left\{ b\leq 0\right\} p_{\mathcal{Z}_{t}}(\zeta
_{k}b)db\cr
&\rightarrow ^{(1)}&-2\mathbb{E}_{|u_{t}=0}(x_{t}^{\prime
}d_{0})^{2}p_{u_{t}}(0)\int \left( a+b\right) 1\left\{ a+b<0\right\}
1\left\{ b>0\right\} p_{\mathcal{Z}_{t}}(0)db\cr
&&+2\mathbb{E}%
_{|u_{t}=0}(x_{t}^{\prime }d_{0})^{2}p_{u_{t}}(0)\int \left( a+b\right)
1\left\{ a+b>0\right\} 1\left\{ b\leq 0\right\} p_{\mathcal{Z}_{t}}(0)db\cr
&=&%
\mathbb{E}_{|u_{t}=0}\left( x_{t}^{\prime }d_{0}\right) ^{2}p_{u_{t}}(0)p_{%
	\mathcal{Z}_{t}}(0)a^{2}\cr
	&=&(\mathbb{E}(x_{t}^{\prime
}d_{0})^{2}(g_{t}^{\prime }Hg)^2|u_{t}=0,\mathcal{Z}_{t}=0)p_{u_{t},\mathcal{Z}%
	_{t}}(0,0):=C(g).
\end{eqnarray*}
It remains to argue that (1) in the above limit holds by applying the DCT.
First, for each fixed $b$,  $ p_{\mathcal{Z}_{t}}(\zeta
_{\omega}b)\to  p_{\mathcal{Z}_{t}}(0)$. Secondly,
$\sup_x p_{\mathcal{Z}_{t}}(x)=\sup_x\frac{1}{\sqrt{2\pi\sigma^2_{h, x_t, g_t}}}\exp(-\frac{x^2}{2\sigma^2_{h, x_t, g_t}})
=(2\pi\sigma^2_{h,x_t, g_t})^{-1/2}<C_0
$
for some $C_0>0$, due to $\inf_{x_t, g_t}\sigma^2_{h,x_t, g_t}>c_0$ (by the assumption).
So
 in the integration: ($a=g_t'Hg$)
$$
\mathcal E_{NT}(a):=\int \left( a+b\right) 1\left\{ a+b<0\right\}
1\left\{ b>0\right\} p_{\mathcal{Z}_{t}}(\zeta _{k}b)db,
$$
$|\left( a+b\right) 1\left\{ a+b<0\right\}
1\left\{ b>0\right\} p_{\mathcal{Z}_{t}}(\zeta _{k}b)|< |\left( a+b\right) 1\left\{ a+b<0\right\}
1\left\{ b>0\right\}| C_0$,
where the right hand side is free of $N,T$ and is integrable:
$\int  |\left( a+b\right) 1\left\{ a+b<0\right\}
1\left\{ b>0\right\}|  db <\infty$ for each fixed $a$. Then  DCT implies
$
\mathcal E_{NT}(a)\to \mathcal E (a):= \int \left( a+b\right) 1\left\{ a+b<0\right\}
1\left\{ b>0\right\} p_{\mathcal{Z}_{t}}(0)db
$ for each fixed $a$. Thirdly,
$$
|(x_t^2d_0)^2 \mathcal E_{NT}(a)|
\leq (x_t^2d_0)^2 C_0 \int |\left( a+b\right) 1\left\{ a+b<0\right\}
1\left\{ b>0\right\}  |db \leq  0.5(x_t^2d_0)^2 C_0a^2
$$
with $a=g_t'Hg$, so that $0.5(x_t^2d_0)^2 C_0a^2$ is free of $N, T$ and is integrable: $\mathbb E_{|u_t=0}0.5(x_t^2d_0)^2 C_0a^2<\infty. $ Also, $(x_t^2d_0)^2 \mathcal E_{NT}(a)\to(x_t^2d_0)^2 \mathcal E(a)$ for each fixed $x_t, g_t$.  Thus applying DCT again yields
$$
\mathbb E_{|u_t=0}(x_t^2d_0)^2 \mathcal E_{NT}(a)\to\mathbb E_{|u_t=0} (x_t^2d_0)^2 \mathcal E(a).
$$
The same argument also applies to the second term on the right hand side of (1).

\section{Proofs for Section \ref{sec:inference}}

\subsection{Proof of Theorem \ref{t6.1}: known factor case}

\subsubsection{Proof of the distribution of $LR$}\label{scg.1.1}

	 Below we prove, under $H_0: h(\gamma_0)=0$,
	 $$
	T\cdot LR\to^d \sigma_{\varepsilon}^{-2}\min_{ g_h'\nabla h=0} \mathbb Q(\infty, g_h)   - \sigma_{\varepsilon}^{-2}\min_{ g} \mathbb Q(\infty, g).
	 $$

	 \begin{proof}
	 	Define
	 	$
	 	\widehat\gamma_h=\arg\min_{\alpha, h(\gamma)=0}\mathbb S_T(\alpha,\gamma),
	 	$
	 	$
	 	\widehat\alpha(\gamma)=\arg\min_{\alpha}\mathbb S_T(\alpha,\gamma),$ and $\widehat\alpha_h=\widehat\alpha(\widehat\gamma_h).
	 	$ Then,
	 	 \begin{eqnarray*}
	 	T	\min_{\alpha,\gamma} {\mathbb S}_T(\alpha,\gamma)LR &=& T[\mathbb S_T(\widehat\alpha_h, \widehat\gamma_h)-\mathbb S_T(\widehat\alpha,\widehat\gamma)] \\
	 		&=&A_1+A_2-A_3,\quad \text{where }\cr
	 		A_1&=&T[\mathbb S_T(\widehat\alpha_h, \widehat\gamma_h)-\mathbb S_T(\widehat\alpha_h,\gamma_0)]\cr
	 		A_2   &=&T[\mathbb S_T(\widehat\alpha_h, \gamma_0)-\mathbb S_T(\widehat\alpha,\gamma_0)]\cr
	 		A_3 &=&T[\mathbb S_T(\widehat\alpha, \widehat\gamma)-\mathbb S_T(\widehat{\alpha},\gamma_0)].
	 	\end{eqnarray*}

	 	Let us first prove a useful equality.   Note that
	 	$$
	 	T[\mathbb S_T(\alpha,\gamma)-\mathbb S_T(\alpha,\gamma_0)]
	 	=T[\mathbb R_T(\alpha,\gamma)- \mathbb R_T(\alpha,\gamma_0)]
	 	-T[\mathbb G_T(\alpha,\gamma)- \mathbb G_T(\alpha,\gamma_0)],
	 	$$
	 	where $\mathbb R_T$ and $\mathbb G_T$ are defined in the proof of Lemma \ref{cons-thm}.
	 	Also recall \begin{eqnarray*}
	 		\mathbb{K}_{2T}\left( g\right) &=&T\cdot  \mathbb E\left( x_{t}^{\prime }\delta
	 		_{0}\right) ^{2}\left\vert 1_{t}\left( \gamma _{0}+g\cdot r_{T}^{-1}\right)
	 		-1_{t}(\gamma_0)\right\vert \\
	 		\mathbb{K}_{3T}\left( g\right) &=&-2\sum_{t=1}^{T}\varepsilon
	 		_{t}x_{t}^{\prime }\delta _{0}\left( 1_{t}\left( \gamma _{0}+g\cdot
	 		r_{T}^{-1}\right) -1_{t}(\gamma_0)\right) .
	 	\end{eqnarray*}
	 	Here    $r_T =T^{1-2\varphi}$, $1_t(\gamma)=1\{f_t'\gamma>0\}$.
	 	Uniformly over $|\gamma-\gamma_0|_2<Cr_T^{-1}, $ $|\alpha-\alpha_0|_2<CT^{-1/2}$,  and $g=r_T(\gamma-\gamma_0)$,
	 	we  have
	 	\begin{eqnarray*}
	 		&&T[\mathbb R_T(\alpha, \gamma)-\mathbb R_T({\alpha}, \gamma_0)]\cr
	 		&=&T\delta'\frac{1}{T}\sum_t\left[x_tx_t' |1\{f_t'\gamma>0\}-1\{f_t'\gamma_0>0\}|- \mathbb Ex_tx_t' |1\{f_t'\gamma>0\}-1\{f_t'\gamma_0>0\}|\right]\delta\cr
	 		&&+T\alpha'\frac{2}{T}\sum_t
	 		[Z_t(\gamma)-Z_t(\gamma_0)]Z_t(\gamma_0)'(\alpha-\alpha_0)\cr
	 		&&+T \mathbb E[(x_t'\delta)^2 -(x_t'\delta_0)^2]|1\{f_t'\gamma>0\}-1\{f_t'\gamma_0>0\}|+T\mathbb  E(x_t'\delta_0)^2 |1\{f_t'\gamma>0\}-1\{f_t'\gamma_0>0\}|\cr
	 		&=&T \mathbb E(x_t'\delta_0)^2 |1\{f_t'\gamma>0\}-1\{f_t'\gamma_0>0\}|+o_P(1)\cr
	 		&=&\mathbb K_{2T}(g)+o_P(1)
	 	\end{eqnarray*}
 	and
	 	\begin{eqnarray*}
	 		-T[\mathbb G_T(\alpha, \gamma)-\mathbb G_T({\alpha}, \gamma_0)]&=&
	 		-2
	 		\sum_{t=1}^{T}\varepsilon _{t}x_t'(\delta-\delta_0)(1\{f_t'\gamma>0\}-1\{f_t'\gamma_0>0\})\cr
	 		&&-2
	 		\sum_{t=1}^{T}\varepsilon _{t}x_t'\delta_0(1\{f_t'\gamma>0\}-1\{f_t'\gamma_0>0\})\cr
	 		&=&\mathbb{K}_{3T}\left( g\right) +o_P(1).
	 	\end{eqnarray*}
	 	Hence uniformly over $|\gamma-\gamma_0|_2<Cr_T^{-1}, $ $|\alpha-\alpha_0|_2<CT^{-1/2}$,  and $g=r_T(\gamma-\gamma_0)$,
	 	\begin{equation}\label{ec.11}
	 	T[\mathbb S_T(\alpha,\gamma)-\mathbb S_T(\alpha,\gamma_0)]
	 	=\mathbb K_{2T}(g)+\mathbb{K}_{3T}\left( g\right) +o_P(1).
	 	\end{equation}

	 	We are now ready to analyze $A_1$.
	 	By Lemma \ref{l5.1}, $|\widehat\gamma_h-\gamma_0|_2=O_P(T^{-(1-2\varphi)})$ under $H_0.$ Also, in the proof of Lemma \ref{l5.1} we have shown that $|\widehat\alpha_h-\alpha_0|_2=O_P(T^{-1/2})$.
	 	Hence  apply (\ref{ec.11}) with $\alpha=\widehat\alpha_h$ and $\gamma=\widehat\gamma_h$,
	 	\begin{eqnarray*}
	 		A_1&=&T[\mathbb S_T(\widehat\alpha_h, \widehat\gamma_h)-\mathbb S_T(\widehat\alpha_h,\gamma_0)]=\mathbb K_{2T}(\widehat g_h)+\mathbb{K}_{3T}\left( \widehat g_h\right) +o_P(1).
	 	\end{eqnarray*}
	 	To analyze the right hand side, recall that in Proof of Theorem \ref{asdist-alpha-gamma},
	 	\begin{eqnarray*}
	 		\mathbb{K}_{T}\left( a,g\right) &=&T\left( \mathbb{S}_{T}\left( \alpha
	 		_{0}+a\cdot T^{-1/2},\gamma _{0}+g\cdot r_{T}^{-1}\right) -\mathbb{S}%
	 		_{T}\left( \alpha _{0},\gamma _{0}\right) \right) \cr
	 		&=&\mathbb{K}_{1T}\left( a\right) +\mathbb{K}%
	 		_{2T}\left( g\right) +\mathbb{K}_{3T}\left( g\right) +o_P\left( 1\right)
	 	\end{eqnarray*}%
	 	where $o_P\left( 1\right) $ is uniform over any compact set.
	 	Define
	 	\begin{eqnarray*}
	 		(\widehat a_h, \widehat g_h)&=& \arg\min_{a,  h(\gamma_0+g_hr_T^{-1})=0} \mathbb{K}_{T}\left( a,g_h\right) \cr
	 		\widehat g_h&=&T^{-1+2\varphi }\left(\widehat \gamma_h -\gamma _{0}\right)
	 		\cr
	 		\widetilde g_h&=&\arg\min_{h(\gamma_0+g_hr_T^{-1})=0}\mathbb K_{2T}( g_h)+\mathbb{K}_{3T}\left(  g_h\right).
	 	\end{eqnarray*}%
	 	Then
	 	$
	 	\mathbb{K}_{T}\left( \widehat a_h,\widehat g_h\right)
	 	\leq  \mathbb{K}_{T}\left( \widehat a_h,\widetilde g_h\right),
	 	$
	 	implying
	 	\begin{eqnarray*}
	 		\mathbb{K}_{T}\left( \widehat a_h,\widehat g_h\right)&= & \mathbb K_{2T}(\widehat g_h)+\mathbb{K}_{3T}\left( \widehat  g_h\right) + \mathbb{K}_{1T}\left(\widehat  a_h\right)+o_P(1)\cr
	 		&\leq&\mathbb{K}_{T}\left( \widehat a_h,\widetilde g_h\right)\cr
	 		\mathbb{K}_{T}\left( \widehat a_h,\widetilde g_h\right)&=&\mathbb K_{2T}(\widetilde g_h)+\mathbb{K}_{3T}\left( \widetilde g_h\right) + \mathbb{K}_{1T}\left(\widehat  a_h\right)+o_P(1)\cr
	 		&\leq& \mathbb K_{2T}(\widehat g_h)+\mathbb{K}_{3T}\left( \widehat  g_h\right) + \mathbb{K}_{1T}\left(\widehat  a_h\right)+o_P(1).
	 	\end{eqnarray*}
	 	Thus
	 	\begin{eqnarray*}
	 		\mathbb K_{2T}(\widehat g_h)+\mathbb{K}_{3T}\left( \widehat  g_h\right) &=&\mathbb K_{2T}(\widetilde g_h)+\mathbb{K}_{3T}\left( \widetilde g_h\right) +o_P(1)\cr
	 		&=&\min_{h(\gamma_0+g_hr_T^{-1})=0}\mathbb K_{2T}( g_h)+\mathbb{K}_{3T}\left(  g_h\right)+o_P(1)
	 	\end{eqnarray*}

	 	These imply, with $\mathbb Q_T(g):=\mathbb{K}_{2T}\left(  g\right) +\mathbb{K}_{3T}\left(  g\right) $,
	 	\begin{eqnarray*}
	 		A_1
	 		&=&\mathbb K_{2T}(\widehat g_h)+\mathbb{K}_{3T}\left( \widehat g_h\right) +o_P(1)=\min_{ h(\gamma_0+g_hr_T^{-1})=0}  {\mathbb{K}_{2T}\left(  g_h\right) +\mathbb{K}_{3T}\left(  g_h\right) } +o_P(1)\cr
	 		&=&\min_{ h(\gamma_0+g_hr_T^{-1})=0} \mathbb Q_T(g_h)+o_P(1) \cr
	 		&=& {\min_{ r_T\{h(\gamma_0+g_hr_T^{-1})-h(\gamma_0)\}=0 }  \mathbb Q_T(g_h)  +o_P(1)}, \quad (\text{under }H_0: h(\gamma_0)=0), \cr
	 		&=& \min_{ g_h'\nabla h =0} \mathbb Q_T(g_h)  +o_P(1).
	 	\end{eqnarray*}

	 	As for $A_2$, Lemma \ref{l5.1}  shows that $A_2=o_P(1).$
	 	As for $A_3$, by definition $(\widehat a, \widehat g)= \arg\min_{a,  g} \mathbb{K}_{T}\left( a,g\right) $ and
	 	$ \widehat g=T^{-1+2\varphi }\left(\widehat \gamma -\gamma _{0}\right)
	 	$.
	 	Apply (\ref{ec.11}) with $\alpha=\widehat\alpha$ and $\gamma=\widehat\gamma$,
	 	\begin{eqnarray*}
	 		A_3&=&T[
	 		\mathbb S_T(\widehat\alpha, \widehat\gamma)-\mathbb S_T(\widehat{\alpha},\gamma_0)]=\mathbb{K}_{2T}\left( \widehat g\right) +\mathbb{K}_{3T} (\widehat g)+o_P(1)\cr
	 		&=&\min_{ g} \mathbb Q_T(g)+o_P(1).
	 	\end{eqnarray*}%

	 	Together, we have
	 	\begin{eqnarray*}
	 	T	\min_{\alpha,\gamma} {\mathbb S}_T(\alpha,\gamma)LR=A_1+A_2-A_3= \min_{ g_h'\nabla h =0} \mathbb Q_T(g_h) -\min_{ g} \mathbb Q_T(g)+o_P(1).
	 	\end{eqnarray*}

	 	Note that
	 	$
	 	\mathbb Q_T(\cdot)
	 	\Rightarrow
	 	\mathbb Q(\infty, \cdot)$.
	 	In addition, the operator $
	 	\mathcal P: f\to \min_{ g_h'\nabla h =0} f(g_h)    -\min_{ g} f(g)
	 	$
	 	is continuous in $f$ with respect to the metric (\textit{essential supremum})$|f_1-f_2|_\infty=\inf\{M: |f_1(x)-f_2(x)|<M\textit{ almost surely}\}$.
	 	Hence by the continuous mapping theorem,  and the fact that $\min_{\alpha,\gamma} {\mathbb S}_T(\alpha,\gamma)\to^P\sigma_{\varepsilon}^2$,   $$
	 T\cdot LR\to^d \sigma_{\varepsilon}^{-2}\min_{ g_h'\nabla h =0} \mathbb Q(\infty, g_h)   - \sigma_{\varepsilon}^{-2}\min_{ g} \mathbb Q(\infty, g).
	 	$$

	  \end{proof}

	 \subsubsection{Proof of the  distribution of $LR_k^*$}

	 \begin{proof}
	 	 We first prove that under $\mathcal H_0: h(\gamma_0)=0$,
	 	 $$
	 	T \mathbb{ {S}}_{T}^{\ast }(\widehat\alpha^*,\widehat\gamma^*) LR^*_k=  {\min_{ g_h'\nabla h=0} \mathbb Q_T^*(g_h)    -\min_{ g} \mathbb Q_T^*(g)+o_{P^*}(1)}
	 	 $$
	 	 where $\mathbb Q_T^*(g)=  \sum_t\left( x_{t}^{\prime }\widehat\delta
	 	 \right) ^{2}\left\vert 1_{t}\left( \widehat\gamma +g\cdot r_{T}^{-1}\right)
	 	 -1_{t}(\widehat\gamma)\right\vert
	 	 -2\sum_{t=1}^{T}\eta_t\widehat\varepsilon
	 	 _{t}x_{t}^{\prime }\widehat\delta \left( 1_{t}\left(\widehat \gamma +g\cdot
	 	 r_{T}^{-1}\right) -1_{t}(\widehat\gamma)\right).
	 	 $

	 	 To do so, define \begin{eqnarray*}
	 	 	\alpha^*(\gamma)&=&\arg\min_{\alpha}\mathbb S_T^*(\alpha,\gamma),
	 	 	\cr
	 	 	\gamma^*(\alpha)&=&\arg\min_{\gamma}\mathbb S_T^*(\alpha,\gamma)
	 	 	\cr
	 	 	\gamma_h^*(\alpha)&=&\arg\min_{h(\gamma)=h(\widehat\gamma)}\mathbb S_T^*(\alpha,\gamma) .
	 	 \end{eqnarray*}
	 	 We have
	 	 $
	 	T \mathbb{ {S}}_{T}^{\ast }(\widehat\alpha^*,\widehat\gamma^*)LR_k^* =T[\mathbb S_T^*(\widehat\alpha_h^*, \widehat\gamma_h^*)-\mathbb S_T^*(\widehat\alpha^*,\widehat\gamma^*)] =A_1^*+A_2^*-A^*_3,
	 	 $
		  where
	 	 \begin{eqnarray*}
	 	 	A^*_1&=&T[\mathbb S_T^*(\widehat\alpha_h^*, \widehat\gamma_h^*)-\mathbb S_T^*(\widehat\alpha_h^*,\widehat\gamma)],\quad A_2^*    = T[\mathbb S_T^*(\widehat\alpha_h^*, \widehat\gamma)-\mathbb S_T(\widehat\alpha^*,\widehat\gamma)] ,\quad A^*_3 =T[\mathbb S_T^*(\widehat\alpha^*, \widehat\gamma^*)-\mathbb S_T^*(\widehat{\alpha}^*,\widehat\gamma)].
	 	 \end{eqnarray*}

	 	 Define
	 	 \begin{align*}
	 	 \mathbb{R}_{T}^*\left( \alpha ,\gamma \right) &:= \frac{1}{T}%
	 	 \sum_{t=1}^{T}\left( Z_{t}\left( \gamma \right) ^{\prime }\alpha
	 	 -Z_{t}\left( \widehat \gamma\right) ^{\prime }\widehat\alpha\right) ^{2} \\
	 	 \mathbb{G}_{T}^*\left( \alpha ,\gamma \right) &:= \frac{2}{T}%
	 	 \sum_{t=1}^{T}\eta_t\widehat\varepsilon _{t}Z_{t}\left( \gamma \right) ^{\prime }\alpha
	 	 \cr
	 	 \mathbb{K}_{1T}^*\left( a\right) &:=a^{\prime }\frac{1}{T}\sum_tZ_{t}\left(\widehat \gamma \right)
	 	 Z_{t}\left( \widehat\gamma \right) ^{\prime }a-\frac{2}{\sqrt{T}}%
	 	 \sum_{t=1}^{T}\eta_t\widehat\varepsilon _{t}Z_{t}\left( \widehat\gamma \right) ^{\prime }a, \cr
	 	 \mathbb{K}_{2T}^*\left( g\right)&:= T\cdot  \frac{1}{T}\sum_t\left( x_{t}^{\prime }\widehat\delta
	 	 \right) ^{2}\left\vert 1_{t}\left( \widehat\gamma +g\cdot r_{T}^{-1}\right)  -1_{t}(\widehat\gamma)\right\vert, \\
	 	 \mathbb{K}_{3T}^*\left( g\right) &:= -2\sum_{t=1}^{T}\eta_t\widehat\varepsilon _{t}x_{t}^{\prime }\widehat\delta \left( 1_{t}\left(\widehat \gamma +g\cdot
	 	 r_{T}^{-1}\right) -1_{t}(\widehat\gamma)\right)\cr
	 	 \mathbb{K}^*_{T}\left( a,g\right) &:=T\left( \mathbb{S}^*_{T}\left( \widehat \alpha
	 	 +a\cdot T^{-1/2},\widehat\gamma +g\cdot r_{T}^{-1}\right) -\mathbb{S}^*
	 	 _{T}\left(\widehat \alpha ,\widehat\gamma \right) \right) .
	 	 \end{align*}
	 	 We first show two important equalities: \\
	 	 (i) $ T[\mathbb S_T^*( \alpha,  \gamma)-\mathbb S_T^*(\alpha,\widehat\gamma)]=\mathbb K_{2T}^*(g)+\mathbb{K}_{3T}^*\left( g \right) +o_{P^*}(1)$
	 	 \\
	 	 (ii) $ \mathbb{K}_{T} ^*  \left( a,g\right) =
	 	 \mathbb{K}_{1T}^*\left( a\right) +\mathbb{K}^*
	 	 _{2T}\left( g\right) +\mathbb{K}_{3T}^*\left( g\right) +o_{P^*}\left( 1\right)$, \\
	 	 where $o_{p^*}\left( 1\right) $ is uniform over any compact set.

	 	 For (i), note that
	 	 $
	 	 T[\mathbb S_T^*( \alpha,  \gamma)-\mathbb S_T^*(\alpha,\widehat\gamma)]
	 	 =T[\mathbb R_T^*( \alpha,  \gamma)-\mathbb R_T^*({\alpha}, \widehat\gamma)]-T[\mathbb G_T^*( \alpha ,  \gamma )-\mathbb G_T^*({\alpha} , \widehat\gamma)].
	 	 $
	 	 To bound the right hand side,  note that uniformly for $|\alpha-\widehat\alpha|_2=O_{P^*}(T^{-1/2})$,
	 	 $|\gamma-\widehat\gamma|_2=O_{P^*}(r_T^{-1})$ and
	 	 $g=r_T(\gamma-\widehat\gamma)$,
	 	 \begin{eqnarray*}
	 	 	&&T[\mathbb R_T^*  (\alpha, \gamma)-\mathbb R^* _T({\alpha}, \widehat\gamma)]\cr
	 	 	&=& T\frac{1}{T}\sum_t[\delta'x_t]^2 |1\{f_t'\gamma>0\}-1\{f_t'\widehat\gamma>0\}|  +T\frac{2}{T}\sum_t\delta'x_t(1\{f_t'\gamma>0\}-1\{f_t'\widehat\gamma>0\})Z_t(\widehat\gamma)'(\alpha-\widehat\alpha)\cr
	 	 	&=&\mathbb K^* _{2T}(g)+O_P(1) T^{1-\varphi} |\delta-\widehat\delta|_2 \frac{1}{T}\sum_t|x_t|^2_2 |1\{f_t'\gamma>0\}-1\{f_t'\widehat\gamma>0\}| \cr
	 	 	&=&\mathbb K^* _{2T}(g)+O_P(1) T^{1-\varphi} |\delta-\widehat\delta|_2 |\gamma-\widehat\gamma|_2+o_P(1) =\mathbb K_{2T}^* (g)+o_{P^*}(1)\cr
	 	 	&&-T[\mathbb G_T^* (\alpha, \gamma)-\mathbb G_T^* ({\alpha}, \widehat\gamma)]=
	 	 	-2
	 	 	\sum_{t=1}^{T}\eta_t\widehat\varepsilon _{t}x_t'(\delta-\widehat\delta)(1\{f_t'\gamma>0\}-1\{f_t'\widehat\gamma>0\})+\mathbb{K}_{3T}^*\left( g\right) \cr
	 	 	& =&\mathbb{K}_{3T}^*\left( g\right)   +o_{P^*}(1),
	 	 \end{eqnarray*}
	 	 {where we applied  Lemma \ref{A-rates:lemma-used} on the bootstrap sampling space}
	 	 to show \\$\sum_{t=1}^{T}\eta_t\widehat\varepsilon _{t}x_t'(\delta-\widehat\delta)(1\{f_t'\gamma>0\}-1\{f_t'\widehat\gamma>0\})= o_{P^*}(1)$.
	 	 Therefore,  uniformly in $g$, $|\gamma-\widehat\gamma|_2=O_{P^*}(r_T^{-1})$, and $|\alpha-\widehat\alpha|_2=O_P(T^{-1/2})$,

	 	 \begin{equation}\label{e.20}
	 	 T[\mathbb S_T^*( \alpha,  \gamma)-\mathbb S_T^*(\alpha,\widehat\gamma)]=\mathbb K_{2T}^*(g)+\mathbb{K}_{3T}^*\left( g \right) +o_{P^*}(1).
	 	 \end{equation}

	 	 For (ii),  note that uniformly for $\alpha-\widehat\alpha=T^{-1/2}a$ and $\gamma-\widehat\gamma= r_T^{-1}g$, we have
	 	 $$
	 	 \mathbb{K}^*_{T}\left( a,g\right) = \mathbb{K}_{1T}^*\left( a\right) +\mathbb{K}^*
	 	 _{2T}\left( g\right) +\mathbb{K}_{3T}^*\left( g\right)+ \Delta^*_1+\Delta_2^*+\Delta^*_3
	 	 $$
	 	 where
	 	 \begin{eqnarray*}
	 	 	\Delta_1^*(\alpha, \gamma) &=&
	 	 	2\sum_t x_t' (\widehat\delta-\delta) \eta_t\widehat\epsilon_t (1_{t}\left(\gamma \right) -1_{t}(\widehat\gamma)) =o_{P^*}(1)\cr
	 	 	\Delta_2^*(\alpha, \gamma) &=&
	 	 	\frac{2}{\sqrt{T}}\sum_t a'Z_t(\widehat\gamma)   x_t'\delta (1_{t}\left(\gamma \right) -1_{t}(\widehat\gamma))    \cr
	 	 	\Delta_3^*(\alpha, \gamma) &=&o_P(1)
	 	 	\sum_t  [( x_t'\delta)^2-( x_t'\widehat\delta)^2 ] |1_{t}\left(\gamma \right) -1_{t}(\widehat\gamma)|    =o_P(1)
	 	 \end{eqnarray*}%
	 	 {where we applied  Lemma \ref{A-rates:lemma-used} on the bootstrap sampling space to bound the first term}, and applied the same lemma on the original space to bound the other two terms.

	 	 We are now ready to analyze $A^*_1.$
	 	 By Lemma \ref{bootstrap.high_level},
	 	 $	|\widehat\alpha_h^*-\widehat\alpha|_2=O_{P^*}(T^{-1/2})$, and $	|\widehat\gamma_h^*-\widehat\gamma|_2=O_P(T^{-(1-2\varphi)}).$
	 	 Apply  (\ref{e.20}) with $\alpha=\widehat\alpha_h^*$ and $\gamma=  \widehat\gamma_h^*=\widehat\gamma +\widehat g_h^*r_T^{-1}
	 	 $,
	 	 \begin{eqnarray*}
	 	 	A^*_1&=&T[\mathbb S_T^*(\widehat\alpha_h^*, \widehat\gamma_h^*)-\mathbb S_T^*(\widehat\alpha_h^*,\widehat\gamma)]=\mathbb K_{2T}^*(\widehat g_h^*)+\mathbb{K}_{3T}^*\left( \widehat g_h^*\right) +o_{P^*}(1).
	 	 \end{eqnarray*}
	 	 Define
	 	 \begin{eqnarray*}
	 	 	\widehat a^*&=&\sqrt{T}(\widehat\alpha_h^*-\widehat\alpha) ,\quad  \widehat g_h^*=r_T(\widehat\gamma_h^*-\widehat\gamma)
	 	 	\cr
	 	 	\widetilde g_h^*&:=&\arg\min_{h(\widehat\gamma+gr_T^{-1})=h(\widehat\gamma)}\mathbb K_{2T}^*( g)+\mathbb{K}_{3T}^*\left(  g\right),\quad h(\widehat\gamma+\widetilde g^*_hr_T^{-1})=h(\widehat\gamma).
	 	 \end{eqnarray*}

	 	 By  Lemma \ref{bootstrap.high_level},
	 	 $
	 	 \mathbb S_T^*(\widehat\alpha_h^*,\widehat\gamma_h^*)
	 	 \leq \min_{\alpha, h(\gamma)=h(\widehat\gamma)}\mathbb S_T^*(\alpha,\gamma)+ o_{P^*}(T^{-1})$, and $h(\widehat\gamma_h^*)=h(\widehat\gamma)
	 	 $
	 	 we have
	 	 \begin{eqnarray*}
	 	 	\mathbb{K}^*_{T}\left( \widehat a^*,\widehat g^*_h\right)
	 	 	&=&T\left( \mathbb{S}^*_{T}\left( \widehat\alpha_h^*,\widehat\gamma_h^*\right) -\mathbb{S}^*
	 	 	_{T}\left(\widehat \alpha ,\widehat\gamma \right) \right)
	 	 	\leq
	 	 	T\left( \min_{\alpha, h(\gamma)=h(\widehat\gamma)}\mathbb S_T^*(\alpha,\gamma) -\mathbb{S}^*
	 	 	_{T}\left(\widehat \alpha ,\widehat\gamma \right) \right)  + o_{P^*}(1)\cr
	 	 	&= &\min_{a, h(\widehat\gamma+r_T^{-1}g)=h(\widehat\gamma)}\mathbb K_T^*(a, g)
	 	 	+ o_{P^*}(1)\leq  \mathbb{K}_{T}\left( \widehat a^*,\widetilde g_h^*\right)+ o_{P^*}(1).
	 	 \end{eqnarray*}%
	 	 So by $ \mathbb{K}_{T} ^*  \left( a,g\right) =
	 	 \mathbb{K}_{1T}^*\left( a\right) +\mathbb{K}^*
	 	 _{2T}\left( g\right) +\mathbb{K}_{3T}^*\left( g\right) +o_{P^*}\left( 1\right)$,
	 	 \begin{eqnarray*}
	 	 	&&\mathbb K_{2T}^*(\widehat g_h^*)+\mathbb{K}^*_{3T}\left( \widehat  g_h^*\right) + \mathbb{K}^*_{1T}\left(\widehat  a_h^*\right)+o_{P^*}(1)
	 	 	=  \mathbb{K}_{T}^*\left( \widehat a^*,\widehat g_h^*\right)
	 	 	\leq \mathbb{K}^*_{T}\left( \widehat a^*,\widetilde g_h^*\right)+ o_{P^*}(1).
	 	 \end{eqnarray*}
	 	 On the other hand,  by the definition of $ \widetilde g_h^*$,
	 	 \begin{eqnarray*}
	 	 	\mathbb{K}_{T}^*\left( \widehat a^*,\widetilde g^*_h\right)&=&\mathbb K^*_{2T}(\widetilde g^*_h)+\mathbb{K}^*_{3T}\left( \widetilde g^*_h\right) + \mathbb{K}^*_{1T}\left(\widehat  a^*_h\right)+o_{P^*}(1)\cr
	 	 	&\leq& \mathbb K^*_{2T}(\widehat g_h^*)+\mathbb{K}^*_{3T}\left( \widehat  g^*_h\right) + \mathbb{K}^*_{1T}\left(\widehat  a^*_h\right)+o_{P^*}(1)
	 	 \end{eqnarray*}
	 	 and note that  $
	 	 \mathbb Q_T^*(g)= \mathbb K^*_{2T}(g)+\mathbb{K}^*_{3T}\left( g\right)
	 	 $.  So
	 	 \begin{eqnarray*}
	 	 	\mathbb K_{2T}^*(\widehat g^*_h)+\mathbb{K}^*_{3T}\left( \widehat  g^*_h\right) &=&\mathbb K^*_{2T}(\widetilde g^*_h)+\mathbb{K}_{3T}^*\left( \widetilde g^*_h\right) +  o_{P^*}(1)\cr
	 	 	&=&\min_{h(\widehat\gamma+gr_T^{-1})=h(\widehat\gamma)}\mathbb K_{2T}^*( g)+\mathbb{K}_{3T}^*\left(  g\right)
	 	 	+o_{P^*}(1)\cr
	 	 	&=&\min_{h(\widehat\gamma+gr_T^{-1})=h(\widehat\gamma)}\mathbb Q_T^*(g)   +o_{P^*}(1).
	 	 \end{eqnarray*}
	 	 These imply
	 	 \begin{eqnarray*}
	 	 	A^*_1&=&\mathbb K_{2T}^*(\widehat g_h^*)+\mathbb{K}_{3T}^*\left( \widehat g_h^*\right) +o_{P^*}(1)=\min_{h(\widehat\gamma+gr_T^{-1})=h(\widehat\gamma)}\mathbb Q_T^*(g)     +o_{P^*}(1).
	 	 \end{eqnarray*}

	 	 As for $A_2^*$, Lemma \ref{bootstrap.high_level}  shows that
		 $\widehat\alpha_h^*-\widehat\alpha^*=o_{P^*}(T^{-1/2})$.
	 Hence similar proof as in Lemma \ref{l5.1} shows 	 $A_2^*=o_{P^*}(1).$

	 	 As for $A^*_3$, let
	 	 $\widehat g^*=r_T(\widehat\gamma^*-\widehat\gamma).
	 	 $
	 	 Apply  (\ref{e.20}) with $\alpha=\widehat\alpha^*, $ and $\gamma=\widehat\gamma^*$,  then
	 	 $
	 	 A^*_3 =[\mathbb S_T^*(\widehat\alpha^*, \widehat\gamma^*)-\mathbb S_T^*(\widehat\alpha^*,\widehat\gamma)]
	 	 =\mathbb{K}_{2T}\left( \widehat g^*\right) +\mathbb{K}_{3T} (\widehat g^*)+o_{P^*}(1).
	 	 $
	 	 Now let  $
	 	 \widetilde g^*=\arg\min_g \mathbb K_{2T}^*(g)+\mathbb K_{3T}^*(g).
	 	 $ Then by Lemma \ref{bootstrap.high_level} ,
	 	 $ \mathbb S_T^*(\widehat\alpha^*,\widehat\gamma^*)
	 	 \leq  \min_{\alpha, \gamma}\mathbb S_T^*(\alpha,\gamma)+ o_{P^*}(T^{-1}) $.
	 	 Hence,
	 	 \begin{eqnarray*}
	 	 	&&\mathbb{K}_{1T}^*(\widehat a^*)
	 	 	+\mathbb{K}_{2T}^*(\widehat g^*)
	 	 	+\mathbb{K}_{3T}^*(\widehat g^*)
	 	 	+o_{P^*}(1)\cr
	 	 	&=&
	 	 	\mathbb K_T^*\left( \widehat a^*,\widehat g^*\right) \cr
	 	 	&=&T\left( \mathbb{S}^*_{T}\left( \widehat \alpha^*,\widehat\gamma^*\right) -\mathbb{S}^*
	 	 	_{T}\left(\widehat \alpha ,\widehat\gamma \right) \right)
	 	 	\leq
	 	 	T\left( \min_{\alpha, \gamma}\mathbb S_T^*(\alpha,\gamma) -\mathbb{S}^*
	 	 	_{T}\left(\widehat \alpha ,\widehat\gamma \right) \right)  +o_{P^*}(1)\cr
	 	 	&= &\min_{a, g}\mathbb K_T^*(a, g)
	 	 	+o_{P^*}(1) \leq  \mathbb K_T^*(\widehat a^*, \widetilde g^*)
	 	 	+o_{P^*}(1)
	 	 	\cr
	 	 	&=&
	 	 	\mathbb{K}_{1T}^*(\widehat a^*)
	 	 	+\mathbb{K}_{2T}^*(\widetilde g^*)
	 	 	+\mathbb{K}_{3T}^*(\widetilde g^*)
	 	 	+o_{P^*}(1)\cr
	 	 	&\leq&
	 	 	\mathbb{K}_{1T}^*(\widehat a^*)
	 	 	+\mathbb{K}_{2T}^*(\widehat g^*)
	 	 	+\mathbb{K}_{3T}^*(\widehat g^*)
	 	 	+o_{P^*}(1).
	 	 \end{eqnarray*}%

	 	 This implies
	 	 $
	 	 \mathbb{K}_{2T}^*(\widetilde g^*)
	 	 +\mathbb{K}_{3T}^*(\widetilde g^*)
	 	 \leq \mathbb{K}_{2T}^*(\widehat g^*)
	 	 +\mathbb{K}_{3T}^*(\widehat g^*)
	 	 \leq
	 	 \mathbb{K}_{2T}^*(\widetilde g^*)
	 	 +\mathbb{K}_{3T}^*(\widetilde g^*)
	 	 +o_{P^*}(1).
	 	 $
	 	 So
	 	 \begin{eqnarray*}%
	 	 	A^*_3 &=& \mathbb{K}_{2T}^*(\widehat g^*)
	 	 	+\mathbb{K}_{3T}^*(\widehat g^*) +o_{P^*}(1)  =
	 	 	\mathbb{K}_{2T}^*(\widetilde g^*)
	 	 	+\mathbb{K}_{3T}^*(\widetilde g^*)
	 	 	+o_{P^*}(1) \cr
	 	 	&=&\min_g\mathbb Q_T^*(g)+o_{P^*}(1) .
	 	 \end{eqnarray*}%

	 	 Together, we have
	 	 \begin{eqnarray*}
	 	 T	{\mathbb S}_T^*(\widehat\alpha^*,\widehat\gamma^*)LR^*_k
	 	 	&=&A^*_1+A^*_2 -A^*_3=\min_{g_h: h(\widehat\gamma+g_hr_T^{-1})=h(\widehat\gamma)} \mathbb Q_T^*(g_h) -\min_{ g} \mathbb Q_T^*(g)+o_{P^*}(1)\cr
	 	 	& =& {\min_{ r_T\{h(\widehat\gamma+g_hr_T^{-1})-h(\widehat\gamma)\}=0 }  \mathbb Q_T^*(g_h) -\min_{ g} \mathbb Q_T^*(g)+o_{P^*}(1)}\cr
	 	 	&=& {\min_{ g_h'\nabla h=0} \mathbb Q_T^*(g_h)    -\min_{ g} \mathbb Q_T^*(g)+o_{P^*}(1)}.
 	 	\end{eqnarray*}
	 	 Here $\nabla h$ is constant since $h$ is linear.

	 	 Next,  recall
	 	 \begin{align*}
	 	 \mathbb{K}_{2T}^*\left( g\right)&:= T\cdot  \frac{1}{T}\sum_t\left( x_{t}^{\prime }\widehat\delta
	 	 \right) ^{2}\left\vert 1_{t}\left( \widehat\gamma +g\cdot r_{T}^{-1}\right)
	 	 -1_{t}(\widehat\gamma)\right\vert\cr
	 	 &= T\cdot  \frac{1}{T}\sum_t\left( x_{t}^{\prime }\delta_0
	 	 \right) ^{2}\left\vert 1_{t}\left( \widehat\gamma +g\cdot r_{T}^{-1}\right)
	 	 -1_{t}(\widehat\gamma)\right\vert+o_P(1)\cr
	 	 &=M_T(\widehat\gamma, g)+ o_P(1)
	 	 \end{align*}
	 	 where
	 	 $$
	 	 M_T(\gamma, g)=  T\cdot  \mathbb E\left( x_{t}^{\prime }\delta_0
	 	 \right) ^{2}\left\vert 1_{t}\left( \gamma +g\cdot r_{T}^{-1}\right)
	 	 -1_{t}(\gamma)\right\vert.
	 	 $$
	 	  For any  $\gamma_T\to \gamma_0$, and fixed $g$, we have
	 	 $
	 	 M_T(\gamma_T, g) \to \mathbb Q(\infty, g).
	 	 $
	 	 It then follows from the extended continuous mapping theorem that $M_T(\widehat\gamma, g)\to^P \mathbb Q(\infty, g)$ for each $g$.
	 	 So $\mathbb{K}_{2T}^*=\mathbb Q(\infty, g)+o_P(1)$ pointwise for each $g$.

	 	 Next, for
 $  \mathbb{K}_{3T}^*\left( g\right) := -2\sum_{t=1}^{T}\eta_t\widehat\varepsilon
	 	 _{t}x_{t}^{\prime }\widehat\delta \left( 1_{t}\left(\widehat \gamma +g\cdot
	 	 r_{T}^{-1}\right) -1_{t}(\widehat\gamma)\right)$,
	Lemma \ref{bootstrapempirical} shows that in the known factor case,
	 	  $ \mathbb{K}_{3T}^*\left( g\right) \Rightarrow^{*} 2W(g)$.

	 	 So $ \mathbb Q_T^*(.)= \mathbb K^*_{2T}(.)+\mathbb{K}^*_{3T}\left(.\right)\Rightarrow ^{*} \mathbb Q(\infty,.)$.  Here $\Rightarrow^*$ denotes the weak convergence with respect to the bootstrap distribution. It follows  that
	 	 \begin{eqnarray*}
	 	T 	\mathbb{ {S}}_{T}^{\ast }(\widehat\alpha^*,\widehat\gamma^*) LR^*_k&=&  {\min_{ g_h'\nabla h=0} \mathbb Q_T^*(g_h)    -\min_{ g} \mathbb Q_T^*(g)+o_{P^*}(1)}\cr
	 	 	&\to^{d^*}& \min_{ g_h'\nabla h=0} \mathbb Q(\infty, g_h)    -\min_{ g} \mathbb Q(\infty, g).
	 	 \end{eqnarray*}
	 	 In addition,
	 	 \begin{eqnarray*}
	 	 	{\mathbb S}_T^*(\widehat\alpha^*,\widehat\gamma^*)&=&
	 	 	{\mathbb S}_T^*(\widehat \alpha,\widehat \gamma)+o_{P^*}(1)
	 	 	=\frac{1}{T}\sum_t(\eta_t\widehat\varepsilon_t)^2+o_{P^*}(1)\cr
	 	 	& =&\mathbb E^*\frac{1}{T}\sum_t(\eta_t\widehat\varepsilon_t)^2+o_{P^*}(1)=\frac{1}{T}\sum_t\widehat\varepsilon_t^2+o_{P^*}(1)=\sigma_{\varepsilon}^2+o_{P^*}(1).
	 	 \end{eqnarray*}
	 	 Thus
	 	 $
	 	T\cdot  LR^*_k\to^{d^*}   \sigma_{\varepsilon}^{-2} \min_{ g_h'\nabla h=0} \mathbb Q(\infty, g_h)    - \sigma_{\varepsilon}^{-2} \min_{ g} \mathbb Q(\infty, g).
	 	 $

	 \end{proof}

	 \subsubsection{Technical Lemmas}

	 \begin{lem}\label{l5.1} Under $\mathcal H_0$,

	 	(i) $|\widehat\gamma_h-\gamma_0|_2=O_P(T^{-(1-2\varphi)})$.

	 	(ii)
	 	$T[\mathbb S_T(\widehat\alpha_h, \gamma_0)-\mathbb S_T(\widehat\alpha,\gamma_0)]=o_P(1)$
	 \end{lem}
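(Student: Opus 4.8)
The plan is to handle the two parts in turn, the common engine being that under $\mathcal{H}_0$ the true value $\gamma_0$ is feasible for the constrained problem, so the constrained estimator inherits every near-minimization inequality that drove the consistency and rate arguments for the unconstrained estimator. Recall $\widehat\gamma_h$ minimizes the profiled criterion $\mathbb{S}_T(\gamma):=\mathbb{S}_T(\widehat\alpha(\gamma),\gamma)$ over $\{\gamma\in\Gamma:h(\gamma)=0\}$ and $\widehat\alpha_h=\widehat\alpha(\widehat\gamma_h)$; under $\mathcal{H}_0$ the point $(\alpha_0,\gamma_0)$ lies in $\{(\alpha,\gamma)\in\mathcal{A}\times\Gamma:h(\gamma)=0\}$, so $\mathbb{S}_T(\widehat\gamma_h)\le\mathbb{S}_T(\gamma_0)$ and $\mathbb{S}_T(\widehat\alpha_h,\widehat\gamma_h)-\mathbb{S}_T(\alpha_0,\gamma_0)\le0$. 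For part (i) I would re-run the proofs of Lemmas \ref{cons-thm} and \ref{rates-thm} with $(\widehat\alpha,\widehat\gamma)$ replaced by $(\widehat\alpha_h,\widehat\gamma_h)$: the constraint set is compact and contains $(\alpha_0,\gamma_0)$, so the argmax continuous-mapping argument of Lemma \ref{cons-thm} (which uses only that $\gamma_0$ uniquely maximizes the limit objective over $\Gamma$) gives $\widehat\gamma_h\overset{P}{\longrightarrow}\gamma_0$ and $\widehat\alpha_h\overset{P}{\longrightarrow}\alpha_0$, and then Steps 1--2 in the proof of Lemma \ref{rates-thm}, which invoke only $\mathbb{S}_T(\widehat\alpha_h,\widehat\gamma_h)-\mathbb{S}_T(\alpha_0,\gamma_0)\le0$ together with the uniform bounds \eqref{eq:boundG}--\eqref{eq:boundR}, yield $|\widehat\alpha_h-\alpha_0|_2=O_P(T^{-1/2})$ and $|\widehat\gamma_h-\gamma_0|_2=O_P(T^{-(1-2\varphi)})$.

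For part (ii), note that $\alpha\mapsto\mathbb{S}_T(\alpha,\gamma_0)$ is an exact quadratic with minimizer $\bar\alpha:=\widehat\alpha(\gamma_0)$, so $\mathbb{S}_T(\alpha,\gamma_0)-\mathbb{S}_T(\bar\alpha,\gamma_0)=(\alpha-\bar\alpha)'\widehat Q(\alpha-\bar\alpha)$ with $\widehat Q:=\tfrac1T\sum_{t=1}^T Z_t(\gamma_0)Z_t(\gamma_0)'=O_P(1)$. Evaluating this at $\alpha=\widehat\alpha_h$ and at $\alpha=\widehat\alpha$ and subtracting, the claim $T[\mathbb{S}_T(\widehat\alpha_h,\gamma_0)-\mathbb{S}_T(\widehat\alpha,\gamma_0)]=o_P(1)$ follows once $\sqrt T(\widehat\alpha_h-\bar\alpha)=o_P(1)$ and $\sqrt T(\widehat\alpha-\bar\alpha)=o_P(1)$. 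The second holds because $\bar\alpha-\alpha_0=\widehat Q^{-1}\tfrac1T\sum_t Z_t(\gamma_0)\varepsilon_t$ exactly, while the proof of Theorem \ref{asdist-alpha-gamma} gives $\sqrt T(\widehat\alpha-\alpha_0)=(\mathbb{E}Z_t(\gamma_0)Z_t(\gamma_0)')^{-1}\tfrac1{\sqrt T}\sum_t\varepsilon_t Z_t(\gamma_0)+o_P(1)$ and $\widehat Q\overset{P}{\longrightarrow}\mathbb{E}Z_t(\gamma_0)Z_t(\gamma_0)'$, so both equal the same linear form.

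It remains to show $\sqrt T(\widehat\alpha_h-\bar\alpha)=o_P(1)$, i.e.\ that substituting $\widehat\gamma_h$ for $\gamma_0$ in the profiled OLS is harmless at the $T^{-1/2}$ scale. Using $y_t=Z_t(\gamma_0)'\alpha_0+\varepsilon_t$ and the closed forms of $\widehat\alpha(\widehat\gamma_h)$ and $\widehat\alpha(\gamma_0)$,
\begin{align*}
\widehat\alpha(\widehat\gamma_h)-\widehat\alpha(\gamma_0)
&= M_h^{-1}\tfrac1T\sum_{t=1}^T Z_t(\widehat\gamma_h)\big(Z_t(\gamma_0)-Z_t(\widehat\gamma_h)\big)'\alpha_0\\
&\quad{}+ M_h^{-1}\tfrac1T\sum_{t=1}^T\big(Z_t(\widehat\gamma_h)-Z_t(\gamma_0)\big)\varepsilon_t
+ \big(M_h^{-1}-\widehat Q^{-1}\big)\tfrac1T\sum_{t=1}^T Z_t(\gamma_0)\varepsilon_t,
\end{align*}
where $M_h:=\tfrac1T\sum_t Z_t(\widehat\gamma_h)Z_t(\widehat\gamma_h)'$; both $M_h^{-1}$ and $\widehat Q^{-1}$ are $O_P(1)$ by Assumption \ref{A-mixing}\ref{A-mixing:itm3}, a ULLN and consistency of $\widehat\gamma_h$. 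Since $(Z_t(\gamma_0)-Z_t(\widehat\gamma_h))'\alpha_0=x_t'\delta_0\,(1_t(\gamma_0)-1_t(\widehat\gamma_h))$ carries the factor $\delta_0=d_0T^{-\varphi}$, and since the indicators $1_t(\widehat\gamma_h)$ and $1_t(\gamma_0)$ differ on only $O_P(T^{2\varphi})$ of the $T$ observations, part (i) together with the maximal inequalities of Lemmas \ref{Lem:modul1} and \ref{A-rates:lemma-used} (the latter being the computation behind tightness of $\mathbb{K}_{3T}$ in the proof of Theorem \ref{asdist-alpha-gamma}) bound the first two sums by $O_P(T^{-(1-\varphi)})$ and give $M_h-\widehat Q=O_P(T^{-(1-2\varphi)})$, so that the third sum is $O_P(T^{-(1-2\varphi)})\cdot O_P(T^{-1/2})$. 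Because $\varphi<1/2$, each of these is $o_P(T^{-1/2})$, hence $\sqrt T(\widehat\alpha_h-\bar\alpha)=o_P(1)$ and, by the quadratic identity above, $T[\mathbb{S}_T(\widehat\alpha_h,\gamma_0)-\mathbb{S}_T(\widehat\alpha,\gamma_0)]=o_P(1)$.

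The main obstacle is precisely the step $\sqrt T(\widehat\alpha_h-\bar\alpha)=o_P(1)$: plugging the slower-converging $\widehat\gamma_h$ (rate $T^{-(1-2\varphi)}$) into the profiled OLS must perturb the slope by only $o_P(T^{-1/2})$, which does not follow from the rate of $\widehat\gamma_h$ alone. It rests on the averaging effect that only $O_P(T^{2\varphi})$ of the indicator values flip and that the jump magnitude $\delta_0$ is $O(T^{-\varphi})$, so that the bias, the noise, and the design-matrix contributions to the OLS difference are all of order $T^{-(1-\varphi)}$ or $T^{-(3/2-2\varphi)}$, hence $o_P(T^{-1/2})$ when $\varphi<1/2$. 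Organizing this through the maximal inequalities already assembled in Lemmas \ref{Lem:modul1} and \ref{A-rates:lemma-used} keeps the argument free of any new empirical-process input.
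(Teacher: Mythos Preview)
Your proof is correct and follows essentially the same route as the paper. Part (i) is identical: under $\mathcal H_0$ the point $(\alpha_0,\gamma_0)$ is feasible for the constrained problem, so $\mathbb S_T(\widehat\alpha_h,\widehat\gamma_h)\le\mathbb S_T(\alpha_0,\gamma_0)$ and the consistency/rate argument of Lemmas \ref{cons-thm}--\ref{rates-thm} applies verbatim to $(\widehat\alpha_h,\widehat\gamma_h)$.

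For part (ii) there is a small organizational difference. The paper compares $\widehat\alpha_h$ to $\widehat\alpha$ directly, expanding $\widehat\alpha(\widehat\gamma_h)-\widehat\alpha(\widehat\gamma)$ with $Z(\gamma_0),\,Z(\widehat\gamma),\,Z(\widehat\gamma_h)$ and concluding $|\widehat\alpha_h-\widehat\alpha|_2=o_P(T^{-1/2})$; it then expands $T[\mathbb S_T(\widehat\alpha_h,\gamma_0)-\mathbb S_T(\widehat\alpha,\gamma_0)]$ as $O_P(T)|\widehat\alpha_h-\widehat\alpha|_2^2+O_P(\sqrt T)|\widehat\alpha_h-\widehat\alpha|_2$. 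You instead route both $\widehat\alpha_h$ and $\widehat\alpha$ through the exact OLS minimizer $\bar\alpha=\widehat\alpha(\gamma_0)$ and use the clean quadratic identity $\mathbb S_T(\alpha,\gamma_0)-\mathbb S_T(\bar\alpha,\gamma_0)=(\alpha-\bar\alpha)'\widehat Q(\alpha-\bar\alpha)$. Your three-term decomposition of $\widehat\alpha(\widehat\gamma_h)-\widehat\alpha(\gamma_0)$ and the rate bounds $O_P(T^{-(1-\varphi)})$ and $O_P(T^{-(3/2-2\varphi)})$ are exactly the same computations the paper carries out (with $\widehat\gamma$ in place of $\gamma_0$), driven by the same inputs: the $T^{-(1-2\varphi)}$ rate from part (i), the $T^{-\varphi}$ jump size, and Lemmas \ref{Lem:modul1}/\ref{A-rates:lemma-used}. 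Your version is slightly tidier because the quadratic centering at $\bar\alpha$ eliminates cross terms, but the substance is the same.
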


	 \begin{proof}

	 	(i) The proof is ver similar to that of the rate for $\widehat\gamma$, so we only briefly sketch the main steps. First of all,    $h(\gamma_0)=0$ and $h(\widehat\gamma_h)=0$. By definition, we have
	 	\begin{eqnarray*}
	 		0 &\geq &\mathbb{S}_{T}\left( \widehat{\alpha }_h,\widehat{\gamma }_h\right) -%
	 		\mathbb{S}_{T}\left( \alpha _{0},\gamma _{0}\right)  \notag \\
	 		&=&\mathbb{R}_{T}\left( \widehat{\alpha }_h, \widehat{\gamma }_h\right) -\mathbb{G%
	 		}_{T}\left( \widehat{\alpha }_h, \widehat{\gamma }_h\right) +\mathbb{G}_{T}\left(
	 		\alpha _{0},\gamma _{0}\right) ,  \label{eq:S-S0-add}
	 	\end{eqnarray*}

	 	Similarly as before, we  can find some $c,c^{\prime }>0$ such that for
	 	sufficiently small $\left\vert \alpha -\alpha _{0}\right\vert _{2}$%
	 	\begin{eqnarray*}
	 		R\left( \alpha ,\gamma \right) &=&\mathbb E\left( Z_{t}\left( \gamma \right)
	 		^{\prime }\left( \alpha -\alpha _{0}\right) \right) ^{2}+\mathbb E\left(
	 		x_{t}^{\prime }\delta _{0}\left( 1_{t}\left( \gamma \right) -1_{t}\left(
	 		\gamma _{0}\right) \right) \right) ^{2} \\
	 		&&+2\mathbb E\left( x_{t}^{\prime }\delta _{0}\left( 1_{t}\left( \gamma \right)
	 		-1_{t}\left( \gamma _{0}\right) \right) \right) Z_{t}\left( \gamma \right)
	 		^{\prime }\left( \alpha -\alpha _{0}\right) \\
	 		&\geq &c\left\vert \alpha -\alpha _{0}\right\vert _{2}^{2}+cT^{-2\varphi
	 		}\left\vert \gamma -\gamma _{0}\right\vert _{2}-c^{\prime }\left\vert \alpha
	 		-\alpha _{0}\right\vert _{2}\left\vert \gamma -\gamma _{0}\right\vert
	 		_{2}T^{-\varphi },
	 	\end{eqnarray*}%
	 	where the first inequality is from the bounds and the second from the
	 	condition that $\left\vert \alpha -\alpha _{0}\right\vert _{2}$ is small. (this is guaranteed since $\widehat\alpha_h$ is consistent under $H_0$)
	 	Furthermore, we  still have, for  $0<\eta <c,$%
	 	\begin{eqnarray*}
	 		\left\vert \mathbb{G}_{T}\left( \alpha ,\gamma \right) -\mathbb{G}_{T}\left(
	 		\alpha _{0},\gamma _{0}\right) \right\vert &\leq &O_{P}\left( \frac{1}{\sqrt{%
	 				T}}\right) \left\vert \alpha -\alpha _{0}\right\vert _{2}+\eta T^{-2\varphi
	 		}\left\vert \gamma -\gamma _{0}\right\vert _{2}+O_{P}\left( \frac{1}{T}%
	 		\right)  \\
	 		\left\vert \mathbb{R}_{T}\left( \alpha ,\gamma \right) -R\left( \alpha
	 		,\gamma \right) \right\vert &\leq &\eta \left\vert \alpha -\alpha
	 		_{0}\right\vert _{2}^{2}+\eta T^{-2\varphi }\left\vert \gamma -\gamma
	 		_{0}\right\vert _{2}+O_{P}\left( \frac{1}{T}\right),
	 	\end{eqnarray*}%
	 	where the inequality is uniform in $\alpha $ and $\gamma $ in the sense that
	 	the sequences $O_{P}\left( \cdot \right) $ and $o_P\left( \cdot \right) $
	 	do not depend on $\alpha $ and $\gamma $. Since
	 	\begin{equation*}
	 	R\left(  \widehat{\alpha }_h, \widehat{\gamma }_h\right) \leq \left\vert \mathbb{G}%
	 	_{T}\left(  \widehat{\alpha }_h, \widehat{\gamma }_h\right) -\mathbb{G}_{T}\left(
	 	\alpha _{0},\gamma _{0}\right) \right\vert +\left\vert \mathbb{R}_{T}\left(
	 	\widehat{\alpha }_h, \widehat{\gamma }_h\right) -R\left(  \widehat{\alpha }_h,%
	 	\widehat{\gamma }_h\right) \right\vert ,
	 	\end{equation*}%
	 	we conclude that
	 	\begin{equation*}
	 	\left( c-\eta \right) \left( \left\vert  \widehat{\alpha }_h-\alpha
	 	_{0}\right\vert _{2}^{2}+T^{-2\varphi }\left\vert  \widehat{\gamma }_h-\gamma
	 	_{0}\right\vert _{2}\right) \leq O_{P}\left( \frac{1}{\sqrt{T}}\right)
	 	\left\vert  \widehat{\alpha }_h-\alpha _{0}\right\vert _{2}+O_{P}\left( \frac{1%
	 	}{T}\right) .
	 	\end{equation*}%
	 	implying
	 	\begin{equation*}
	 	\left\vert  \widehat{\gamma }_h-\gamma
	 	_{0}\right\vert _{2}=O_{P}\left( \frac{1}{T^{1-2\varphi }}\right) .
	 	\end{equation*}

	 	(ii)
	 	First we show that
	 	$
	 	|\widehat\alpha_h-\widehat\alpha|_2=o_P(T^{-1/2})$ under $H_0.$ Let $\widehat Z_h=Z(\widehat\gamma_h)$. Straightforward calculations yield
	 	\begin{eqnarray*}
	 		\widehat\alpha_h-\widehat\alpha&=&
	 		(\widehat Z_h'\widehat Z_h)^{-1}(\widehat Z_h-\widehat Z)'(Z-\widehat Z_h)\alpha_0+(\widehat Z_h'\widehat Z_h)^{-1}\widehat Z'(\widehat Z-\widehat Z_h)\alpha_0+(\widehat Z_h'\widehat Z_h)^{-1}(\widehat Z_h-\widehat Z)'\epsilon\cr
	 		&&+[(\widehat Z_h'\widehat Z_h)^{-1}-(\widehat Z'\widehat Z)^{-1}][\widehat Z'(Z-\widehat Z)\alpha_0+ Z'\epsilon+(\widehat Z-Z)'\epsilon]
	 	\end{eqnarray*}
	 	which is $o_P(T^{-1/2})$ since $|\widehat\gamma_h-\gamma_0|_2=o(T^{-(0.5-\varphi)})$  under $H_0$. Then
	 	\begin{eqnarray*}
	 		&&T[\mathbb S_T(\widehat\alpha_h,  \gamma_0)-\mathbb S_T(\widehat\alpha,\gamma_0)] \cr
	 		&=&T(\widehat\alpha_h-\widehat\alpha)'\frac{1}{T}\sum_tZ_t(\gamma_0)Z_t(\gamma_0)'(\widehat\alpha_h-\widehat\alpha)
	 		+T(\widehat\alpha-\widehat\alpha_h) '\frac{2}{T}\sum_tZ_t(\gamma_0)\varepsilon_t\cr
	 		&&+T\frac{2}{T}(\alpha_0-\widehat\alpha)\sum_tZ_t(\gamma_0)Z_t(\gamma_0)'(\widehat\alpha-\widehat\alpha_h)\cr
	 		&=&O_P(T)|\widehat\alpha_h-\widehat\alpha|_2^2 +O_P(\sqrt{T})|\widehat\alpha_h-\widehat\alpha|_2=o_P(1).
	 	\end{eqnarray*}

	 \end{proof}

	 \begin{lem} \label{bootstrap.high_level} In the known factor case, the k-step bootstrap estimators $(\widehat\alpha^*, \widehat\gamma^*, \widehat\gamma_h^*)$ satisfy:
	 	\begin{eqnarray*}
	 		\mathbb S_T^*(\widehat \alpha^*,\widehat \gamma^*)& \leq&   \min_{\alpha, \gamma}\mathbb S_T^*(\alpha,\gamma) + o_{P^*}(T^{-1}).\cr
	 		\mathbb S_T^*(\widehat\alpha_h^*,\widehat\gamma_h^*)
	 		&\leq& \min_{\alpha, h(\gamma)=h(\widehat\gamma)}\mathbb S_T^*(\alpha,\gamma)+ o_{P^*}(T^{-1}), \quad h(\widehat\gamma_h^*)=h(\widehat\gamma)
	 		\cr
	 		|\widehat\alpha^*-\widehat\alpha|_2&=&O_{P^*}(T^{-1/2}),\quad |\widehat\alpha_h^*-\widehat\alpha|_2=O_{P^*}(T^{-1/2}),\quad |\widehat\alpha^*-\widehat\alpha^*_h|_2=o_{P^*}(T^{-1/2})\cr
	 		|\widehat\gamma_h^*-\widehat\gamma|_2&=&O_P(T^{-(1-2\varphi)})\cr
	 		|\widehat\gamma^*-\widehat\gamma|_2&=&O_P(T^{-(1-2\varphi)}).
	 	\end{eqnarray*}

	 \end{lem}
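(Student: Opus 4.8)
In the known-factor case $Z^*_t(\gamma)=Z_t(\gamma)$ and $y^*_t=Z_t(\widehat\gamma)'\widehat\alpha+\eta_t\widehat\varepsilon_t$, so $\mathbb S^*_T(\alpha,\gamma)=\frac1T\sum_t(y^*_t-Z_t(\gamma)'\alpha)^2$ is structurally the original least-squares problem with ``truth'' $(\widehat\alpha,\widehat\gamma)$, conditionally independent mean-zero ``errors'' $\{\eta_t\widehat\varepsilon_t\}$, and jump $\widehat\delta=O_P(T^{-\varphi})$. First I would pin down the rates of the bootstrap global minimiser $(\bar\alpha^*,\bar\gamma^*):=\arg\min_{\alpha,\gamma}\mathbb S^*_T$ and of the constrained minimiser $(\bar\alpha^*_h,\bar\gamma^*_h):=\arg\min_{\alpha,\,h(\gamma)=h(\widehat\gamma)}\mathbb S^*_T$. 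Decomposing $\mathbb S^*_T(\alpha,\gamma)-\mathbb S^*_T(\widehat\alpha,\widehat\gamma)=\mathbb R^*_T(\alpha,\gamma)-[\mathbb G^*_T(\alpha,\gamma)-\mathbb G^*_T(\widehat\alpha,\widehat\gamma)]$ with $\mathbb R^*_T(\alpha,\gamma)=\frac1T\sum_t(Z_t(\gamma)'\alpha-Z_t(\widehat\gamma)'\widehat\alpha)^2$ and $\mathbb G^*_T(\alpha,\gamma)=\frac2T\sum_t\eta_t\widehat\varepsilon_t Z_t(\gamma)'\alpha$, I would replay the consistency argument of Lemma~\ref{cons-thm} and the rate argument of Lemma~\ref{rates-thm} on the conditional law: the bootstrap analogue of Lemma~\ref{A-rates:lemma-used} (already invoked in the proof of Theorem~\ref{t6.1}) controls $\mathbb R^*_T-\mathbb E^*\mathbb R^*_T$ and $\mathbb G^*_T$ uniformly, the conditional martingale CLT handles the linear term, and the Step~1 lower bound $\mathbb E^*\mathbb R^*_T(\alpha,\gamma)\ge c|\alpha-\widehat\alpha|_2^2+cT^{-2\varphi}|\gamma-\widehat\gamma|_2-O_{P^*}(T^{-1})$ goes through verbatim with $\widehat\delta$ in the role of $\delta_0$. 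This yields $|\bar\alpha^*-\widehat\alpha|_2=O_{P^*}(T^{-1/2})$ and $|\bar\gamma^*-\widehat\gamma|_2=O_{P^*}(T^{-(1-2\varphi)})$, and likewise for $(\bar\alpha^*_h,\bar\gamma^*_h)$ (note $\widehat\gamma$ itself satisfies the constraint $h(\gamma)=h(\widehat\gamma)$, so the constrained criterion is again centred at $\widehat\gamma$). Under $\mathcal H_0$, $|\widehat\gamma_h-\widehat\gamma|_2=O_P(T^{-(1-2\varphi)})$ and $|\widehat\alpha_h-\widehat\alpha|_2=o_P(T^{-1/2})$ by Lemma~\ref{l5.1}, so the starting values of both recursions already lie in the relevant shrinking neighbourhoods.

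The decisive ingredient is a sharp control of the $\alpha$-update $\alpha^*(\gamma):=[\frac1T\sum_tZ_t(\gamma)Z_t(\gamma)']^{-1}\frac1T\sum_tZ_t(\gamma)y^*_t$. Since $\widehat\alpha$ exactly solves the bootstrap normal equations at $\gamma=\widehat\gamma$, one has $\alpha^*(\gamma)-\widehat\alpha=\widehat M(\gamma)^{-1}\big[(\widehat N(\gamma)-\widehat M(\gamma))\widehat\alpha+\widehat\xi(\gamma)\big]$, where $\widehat M(\gamma)=\frac1T\sum_tZ_t(\gamma)Z_t(\gamma)'$, $\widehat N(\gamma)=\frac1T\sum_tZ_t(\gamma)Z_t(\widehat\gamma)'$, and $\widehat\xi(\gamma)=\frac1T\sum_tZ_t(\gamma)\eta_t\widehat\varepsilon_t$. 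The point is that $(\widehat N(\gamma)-\widehat M(\gamma))\widehat\alpha$ collapses to $\big[\frac1T\sum_tZ_t(\gamma)x_t'(1\{f_t'\widehat\gamma>0\}-1\{f_t'\gamma>0\})\big]\widehat\delta$ — the mismatch involves only the $O(T|\gamma-\widehat\gamma|_2)$ observations whose regime flips and is multiplied by the \emph{small} factor $\widehat\delta=O_P(T^{-\varphi})$ — while $\widehat\xi(\gamma)=O_{P^*}(T^{-1/2})$ has bootstrap modulus of continuity $O_{P^*}(\sqrt{|\gamma-\widehat\gamma|_2/T})$. Hence, uniformly over $|\gamma-\widehat\gamma|_2\le CT^{-(1-2\varphi)}$,
\[
\alpha^*(\gamma)-\alpha^*(\widehat\gamma)=O\big(T^{-\varphi}|\gamma-\widehat\gamma|_2\big)+O_{P^*}\big(|\gamma-\widehat\gamma|_2\,T^{-1/2}+\sqrt{|\gamma-\widehat\gamma|_2/T}\big)=o_{P^*}(T^{-1/2}),
\]
using $\varphi<1/2$. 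Naively the first term would be $O(|\gamma-\widehat\gamma|_2)$; the gain produced by the factor $\widehat\delta$ is exactly what makes a single iteration of the recursion suffice, and this is the step I expect to be the main obstacle to nail down rigorously.

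Granting these two ingredients the conclusion follows. Each BCD $\alpha$-step is exact OLS and each $\gamma$-step is an exact minimisation, so the recursion is a monotone exact block coordinate descent: $\mathbb S^*_T(\widehat\alpha^{*,l},\widehat\gamma^{*,l})$ is non-increasing in $l$. After the first $\alpha$-step $\widehat\alpha^{*,1}=\alpha^*(\widehat\gamma)$, and since $|\bar\gamma^*-\widehat\gamma|_2=O_{P^*}(T^{-(1-2\varphi)})$ the displayed bound gives $|\widehat\alpha^{*,1}-\bar\alpha^*|_2=|\alpha^*(\widehat\gamma)-\alpha^*(\bar\gamma^*)|_2=o_{P^*}(T^{-1/2})$. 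Because $\bar\alpha^*=\alpha^*(\bar\gamma^*)$ is the OLS at $\bar\gamma^*$, $\mathbb S^*_T(\alpha,\bar\gamma^*)-\mathbb S^*_T(\bar\alpha^*,\bar\gamma^*)=|\alpha-\bar\alpha^*|^2_{\widehat M(\bar\gamma^*)}$ with $\widehat M(\bar\gamma^*)$ bounded and uniformly invertible with probability tending to one, so
\[
\mathbb S^*_T(\widehat\alpha^{*,1},\widehat\gamma^{*,1})\le\mathbb S^*_T(\widehat\alpha^{*,1},\bar\gamma^*)=\min_{\alpha,\gamma}\mathbb S^*_T(\alpha,\gamma)+|\widehat\alpha^{*,1}-\bar\alpha^*|^2_{\widehat M(\bar\gamma^*)}=\min_{\alpha,\gamma}\mathbb S^*_T(\alpha,\gamma)+o_{P^*}(T^{-1}),
\]
and monotonicity gives the first displayed bound of the lemma for every $k\ge1$; the constrained recursion is identical, using $|\widehat\alpha_h-\widehat\alpha|_2=o_P(T^{-1/2})$ at the start and $h(\widehat\gamma^{*,l}_h)=h(\widehat\gamma)$ by feasibility of every constrained $\gamma$-step. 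Finally, feeding this objective-value bound into the profiled analogue of the Step~1 lower bound around $\bar\gamma^*$ — namely $\min_\alpha\mathbb S^*_T(\alpha,\gamma)-\min_{\alpha,\gamma}\mathbb S^*_T\ge cT^{-2\varphi}|\gamma-\bar\gamma^*|_2-O_{P^*}(T^{-1})$ — gives $cT^{-2\varphi}|\widehat\gamma^{*,k}-\bar\gamma^*|_2\le o_{P^*}(T^{-1})+O_{P^*}(T^{-1})$, hence $|\widehat\gamma^{*,k}-\widehat\gamma|_2=O_{P^*}(T^{-(1-2\varphi)})=O_P(T^{-(1-2\varphi)})$ and likewise for $\widehat\gamma^{*,k}_h$; one more application of the perturbation bound yields $|\widehat\alpha^{*,k}-\widehat\alpha|_2=O_{P^*}(T^{-1/2})$, $|\widehat\alpha^{*,k}_h-\widehat\alpha|_2=O_{P^*}(T^{-1/2})$, and — mimicking the proof of Lemma~\ref{l5.1}(ii) in the bootstrap world, the constraint being compatible with $\widehat\gamma$ — $|\widehat\alpha^{*,k}-\widehat\alpha^{*,k}_h|_2=o_{P^*}(T^{-1/2})$. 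Apart from the perturbation bound, the only other care needed is the routine transfer of the uniform moduli of Lemma~\ref{A-rates:lemma-used} and of the FCLT for VC classes to the conditional law, as in Sections~\ref{sec:known factors} and~\ref{sec:estimated:proof}, exploiting that $\{\eta_t\widehat\varepsilon_t\}$ is conditionally independent with the relevant averaged conditional moments bounded on an event of probability tending to one.
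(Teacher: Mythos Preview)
Your proposal is correct and follows essentially the same route as the paper: both hinge on the oracle expansion $\alpha^*(\gamma)-\widehat\alpha=[\text{term independent of }\gamma]+o_{P^*}(T^{-1/2})$ uniformly over $|\gamma-\widehat\gamma|_2\le CT^{-(1-2\varphi)}$ (your ``perturbation bound'' is the paper's equation (\ref{e.22})), then use this to show the $k$-step iterate achieves the global bootstrap minimum up to $o_{P^*}(T^{-1})$, and finally feed that back into the rate inequality. The one organizational difference is that the paper runs Steps 1--4 by induction on $l$, whereas you exploit monotonicity of exact block coordinate descent to reduce everything to $l=1$; this is a clean shortcut and buys a shorter argument. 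One small point: your profiled lower bound is more naturally stated around $\widehat\gamma$ (the bootstrap ``truth'') than around $\bar\gamma^*$---the Step~1 machinery delivers $\min_\alpha\mathbb S^*_T(\alpha,\gamma)\ge\mathbb S^*_T(\widehat\alpha,\widehat\gamma)+cT^{-2\varphi}|\gamma-\widehat\gamma|_2-O_{P^*}(T^{-1})$, and combining this with $\mathbb S^*_T(\widehat\alpha^{*,k},\widehat\gamma^{*,k})\le\mathbb S^*_T(\widehat\alpha,\widehat\gamma)+o_{P^*}(T^{-1})$ gives the $\gamma$-rate directly, without needing to recenter at $\bar\gamma^*$.
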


	 \begin{proof}
	 	Define
	 	\begin{eqnarray*}
	 		(\alpha^*_g, \gamma_g^*)&=&\arg\min \mathbb S_T^*(\alpha,\gamma).
	 		\cr
	 		(\alpha^*_{g,h}, \gamma_{g,h}^*)&=&\arg\min_{\alpha, h(\gamma)=h(\widehat\gamma)} \mathbb S_T^*(\alpha,\gamma)\cr
	 		\alpha^*(\gamma)&=& \arg\min_\alpha \mathbb S_T^*(\alpha,\gamma),  \cr
	 		\gamma^*(\alpha)&=& \arg\min_\gamma \mathbb S_T^*(\alpha,\gamma),\cr
	 		\gamma_h^*(\alpha)&=& \arg\min_{\gamma: h(\gamma)=h(\widehat\gamma)} \mathbb S_T^*(\alpha,\gamma).
	 	\end{eqnarray*}

	 	Our proof is divided into the following steps.
	 	\\
	 	\textit{step 0:  }$ |\gamma_{g,h}^*-\widehat\gamma|_2=O_{P^*}(T^{-(1-2\varphi)})$, $|\gamma_g^*-\widehat\gamma|_2=O_{P^*}(T^{-(1-2\varphi)})$ and $|\alpha^*_g-\widehat\alpha|_2=O_{P^*}(T^{-1/2}).$
	 	\\
	 	\textit{step 1: } if $|\gamma-\widehat\gamma|_2=O_{P^*}(T^{-(1-2\varphi)})$, then $	|\alpha^*(\gamma)-\widehat\alpha|_2=O_{P^*}(T^{-1/2}) $.
	 	\\
	 	\textit{step 2:}    in addition,   $	|\alpha^*(\gamma)-\alpha^*_g|_2=o_{P^*}(T^{-1/2}) $, and $	|\alpha^*(\gamma)-\alpha_{g,h}^*|_2=o_{P^*}(T^{-1/2}) $.
	 	\\
	 	\textit{step 3:  } if  $	|\alpha -\alpha^*_g|_2=o_{P^*}(T^{-1/2}) $, and $	|\alpha -\alpha_{g,h}^*|_2=o_{P^*}(T^{-1/2}) $, then\\ $\mathbb S_T^*(  \alpha,  \gamma^*(\alpha) )\leq    \min_{\alpha, \gamma}\mathbb S_T^*(\alpha,\gamma) + o_{P^*}(T^{-1})$, and $\mathbb S_T^*(  \alpha,  \gamma_h^*(\alpha) )\leq     \min_{\alpha, h(\gamma)=h(\widehat\gamma)}\mathbb S_T^*(\alpha,\gamma)+ o_{P^*}(T^{-1})$.
	 	\\
	 	\textit{step 4:} in addition,  $	 |\gamma^*(\alpha)- \widehat\gamma|_2=O_{P^*}(T^{-(1-2\varphi)})$ and $	 |\gamma_h^*(\alpha)- \widehat\gamma|_2=O_{P^*}(T^{-(1-2\varphi)})$.

	 	Once the above steps are successfully achieved, then the proof is completed  by the following argument.
	 	Recall that $\widehat\gamma^{*,0}=\widehat\gamma^{*,0 }_h=\widehat\gamma$. Also, for $l\geq1$,
	 	$\widehat\alpha^{*,l}=\alpha^*(\widehat\gamma^{*,l-1} )$, $\widehat\alpha^{*,l}_h=\alpha^*(\widehat\gamma_h^{*,l-1} )$,
	 	$\widehat\gamma^{*,l} = \gamma^*(\widehat\alpha^{*,l})$, $\widehat\gamma^{*,l} _h= \gamma_h^*(\widehat\alpha_h^{*,l})$,
	 	and $\widehat\alpha^*=\widehat\alpha^{*,k}$,   $\widehat\gamma^*=\widehat\gamma^{*,k}$, and  $\widehat\gamma^*_h=\widehat\gamma^{*,k}_h$.

	 	For $k=1$, $\widehat\gamma^{*,0}=\widehat\gamma^{*,0 }_h=\widehat\gamma$.  Hence by step 1, $|\widehat\alpha^{*,1}-\widehat\alpha|_2=O_{P^*}(T^{-1/2}) =|\widehat\alpha_h^{*,1}-\widehat\alpha|_2$. Conditions of step 3 are satisfied due to step 2, hence  for $\alpha=\alpha^*(\widehat\gamma^{*, 0})$ in step 3,
	 	$$\mathbb S_T^*( \widehat\alpha^{*,1},  \widehat\gamma^{*,1}  )\leq    \min_{\alpha, \gamma}\mathbb S_T^*(\alpha,\gamma) + o_{P^*}(T^{-1})$$ and  for $\alpha=\alpha^*(\widehat\gamma_h^{*, 0})$ in step 3,
	 	$$\mathbb S_T^*( \widehat\alpha_h^{*,1},  \widehat\gamma^{*,1} _h )\leq     \min_{\alpha, h(\gamma)=h(\widehat\gamma)}\mathbb S_T^*(\alpha,\gamma)+ o_{P^*}(T^{-1}).$$
	 	By step 4, $	 | \widehat\gamma^{*,1} - \widehat\gamma|_2=O_{P^*}(T^{-(1-2\varphi)})$ and $	 | \widehat\gamma^{*,1} _h- \widehat\gamma|_2=O_{P^*}(T^{-(1-2\varphi)})$.  Thus results of Lemma \ref{bootstrap.high_level}  are verified if we stop after $k$ step(s) for  $k=1$.

	 	For $k=2$,  the previous step 4  ensures that we can apply step 1 respectively with $\gamma= \widehat\gamma^{*,1}$ and $\gamma= \widehat\gamma_h^{*,1}$.  Thus the same argument yields  Lemma \ref{bootstrap.high_level}  is verified for  $k=2$. We can employ the mathematical induction  to   conclude that Lemma  \ref{bootstrap.high_level}  is verified for  all $k\geq 1$.

	 	\textit{Proof of Step 0.}

	 	In the bootstrap world, $\widehat\gamma$ is the  true value while $\gamma_g^*$ is the least squares estimator.
	 	Also, by the definition of $(\alpha^*_{g,h}, \gamma_{g,h}^*)$, we have
	 	$$
	 	\mathbb S_T^*(\alpha_{g,h}^*,\gamma_{g,h}^*)\leq \mathbb S_T^*(\widehat\alpha, \widehat\gamma),\quad
	 	\mathbb S_T^*(\alpha_{g}^*,\gamma_{g}^*)\leq \mathbb S_T^*(\widehat\alpha, \widehat\gamma).
	 	$$
	 	Hence the proof of this step  is simply the bootstrap version of the proof of the rates of convergence  in the original sampling space. We thus omit its proof to avoid repetitions.

	 	\textit{Proof of Step 1.}

	 	For a generic $\gamma$, let $A(\gamma):= \frac{1}{T} \sum_{t=1}^{T}Z_{t}\left( \gamma \right) Z_{t}\left( \gamma \right)'$.
	 	\begin{eqnarray*}
	 		{\alpha}^*\left( \gamma \right) -\widehat \alpha &=&A(\gamma) ^{-1}\left( \frac{1}{T}\sum_{t=1}^{T}Z_{t}\left( \gamma \right)
	 		\widehat\varepsilon _{t}\eta_t+\frac{1}{T}\sum_{t=1}^{T}Z_{t}\left( \gamma \right)
	 		x_{t}^{\prime }\widehat \delta \left( 1_{t}\left( \gamma \right) -1_{t}(\widehat\gamma)\right)
	 		\right).    \end{eqnarray*}
	 	So  conditional on the event $|\widehat\gamma-\gamma_0|_2\leq CT^{-(1-2\varphi)}$ and uniformly in  $|\gamma-\widehat\gamma|\leq  CT^{-(1-2\varphi)}$,
	 	\begin{eqnarray*}
	 		&&\left| {\alpha}^*\left( \gamma \right) -\widehat \alpha  - A(\widehat\gamma)^{-1}  \frac{1}{T}\sum_{t=1}^{T}Z_{t}\left( \widehat\gamma \right)
	 		\widehat\varepsilon _{t}\eta_t \right|\cr
	 		&\leq &
	 		\left|(A(\gamma)^{-1} -A(\widehat\gamma)^{-1}) \frac{1}{T}\sum_{t=1}^{T}Z_{t}\left( \widehat\gamma \right)
	 		\widehat\varepsilon _{t}\eta_t  \right|+
	 		|A(\gamma)^{-1}| \sup_{|\gamma-\widehat\gamma|_2\leq CT^{-(1-2\varphi)}}\left|\frac{1}{T}\sum_{t=1}^{T}[Z_{t}\left(\gamma \right)- Z_t(\widehat\gamma)]
	 		\widehat\varepsilon _{t}\eta_t \right |\cr
	 		&&+ |A(\gamma)^{-1} O_P(T^{-\varphi})|\sup_{|\gamma-\gamma_0|_2\leq CT^{-(1-2\varphi)}}\left| \frac{1}{T}\sum_{t=1}^{T}|x_{t}|_2^2|1_{t}\left( \gamma \right) -1_{t}(\gamma_0)|
	 		-\mathbb E|x_{t}|_2^2|1_{t}\left( \gamma \right) -1_{t}(\gamma_0)|
	 		\right|
	 		\cr
	 		&&+  |A(\gamma)^{-1} O_P(T^{-\varphi})  \sup_{|\gamma-\gamma_0|_2\leq CT^{-(1-2\varphi)}} \mathbb E|x_{t}|_2^2|1_{t}\left( \gamma \right) -1_{t}(\gamma_0)|\cr
	 		&=&o_{P^*}(T^{-1/2}).
	 	\end{eqnarray*}

	 	Thus we have proved, uniformly over  $|\gamma-\widehat\gamma|\leq  CT^{-(1-2\varphi)}$,
	 	\begin{equation}\label{e.22}
	 	{\alpha}^*\left( \gamma \right) -\widehat \alpha=A(\widehat\gamma)^{-1}) \frac{1}{T}\sum_{t=1}^{T}Z_{t}\left( \widehat\gamma \right)
	 	\widehat\varepsilon _{t}\eta_t+o_{P^*}(T^{-1/2}).
	 	\end{equation}

	 	\textit{Proof of Step 2.}

	 	Note  that
	 	$\alpha^*_g=\alpha^*(\gamma_g^*)$ and
	 	$\alpha_{m, h}^*=\alpha^*(\gamma_{g,h}^*)$.
	 	Respectively letting    $\gamma=\gamma_{g,h}^*$    and $\gamma=\gamma_g^*$ in (\ref{e.22}) yields (by step 2)
	 	\begin{eqnarray*}
	 		{\alpha}^*\left( \gamma \right) -\widehat \alpha&=&A(\widehat\gamma)^{-1}) \frac{1}{T}\sum_{t}Z_{t}\left( \widehat\gamma \right)
	 		\widehat\varepsilon _{t}\eta_t+o_{P^*}(T^{-1/2})
	 		\cr
	 		{\alpha}^* _m -\widehat\alpha&=&A(\widehat\gamma)^{-1}) \frac{1}{T}\sum_{t}Z_{t}\left( \widehat\gamma \right)
	 		\widehat\varepsilon _{t}\eta_t+o_{P^*}(T^{-1/2})
	 		\cr
	 		{\alpha}^*_{g,h} -\widehat\alpha&=&A(\widehat\gamma)^{-1}) \frac{1}{T}\sum_{t}Z_{t}\left( \widehat\gamma \right)
	 		\widehat\varepsilon _{t}\eta_t+o_{P^*}(T^{-1/2}).
	 	\end{eqnarray*}
	 	Thus
	 	$
	 	\alpha^*(\gamma)-\alpha^*_g=o_{P^*}(T^{-1/2}) $ and
	 	$
	 	\alpha^*(\gamma)-\alpha_{g,h}^*=o_{P^*}(T^{-1/2}) .$

	 	\textit{Proof of Step 3.}

	 	By the definition of $\gamma^*(\alpha)$ and $\gamma_h^*(\alpha)$,
	 	\begin{eqnarray*}
	 		\mathbb S_T^*( \alpha,\gamma^*( \alpha))&\leq& \mathbb S_T^*( \alpha,\gamma^*_g )
	 		=\mathbb S_T^*(\alpha^*_g,\gamma^*_g ) +\mathbb S_T^*(\alpha,\gamma^*_g )  -\mathbb S_T^*(\alpha^*_g,\gamma^*_g ) \cr
	 		\mathbb S_T^*(\alpha, \gamma_h^*(\alpha))
	 		&	\leq& \mathbb S_T^*(\alpha, \gamma_{g,h}^*)
	 		= \mathbb S_T^*(\alpha_{g,h}^*,\gamma_{g,h}^*)+ \mathbb S_T^*( \alpha, \gamma_{g,h}^*)-  \mathbb S_T^*(\alpha_{g,h}^*,\gamma_{g,h}^*).
	 	\end{eqnarray*}
	 	By definition,  $ \min_{\alpha, h(\gamma)=h(\widehat\gamma)}\mathbb S_T^*(\alpha,\gamma)=\mathbb S_T^*(\alpha_{g,h}^*,\gamma_{g,h}^*)$ and
	 	$ \min_{\alpha, \gamma}\mathbb S_T^*(\alpha,\gamma)=\mathbb S_T^*(\alpha_{g}^*,\gamma_{g}^*)$. Hence it suffices to show   if $	|\alpha -\alpha^*_g|_2=o_{P^*}(T^{-1/2}) $, and $	|\alpha -\alpha_{g,h}^*|_2=o_{P^*}(T^{-1/2}) $,
	 	\begin{eqnarray*}
	 		&&	\mathbb S_T^*(  \alpha,\gamma^*_g )  -\mathbb S_T^*(\alpha^*_g,\gamma^*_g )\leq o_{P^*}(T^{-1})\cr
	 		&&	\mathbb S_T^*( \alpha, \gamma_{g,h}^*)-  \mathbb S_T^*(\alpha_{g,h}^*,\gamma_{g,h}^*)\leq  o_{P^*}(T^{-1}).
	 	\end{eqnarray*}
	 	By  $|\alpha^*_g-\widehat\alpha|_2=O_{P^*}(T^{-1/2})$
	 	and  the triangular inequality, $|\alpha-\widehat\alpha|_2=O_{P^*}(T^{-1/2})$.
	 	Uniformly in   $\gamma$ so that  $|\gamma-\widehat\gamma|_2\leq CT^{-(1-2\varphi)}$,
	 	\begin{eqnarray}\label{e.23}
	 &&	\mathbb S_T^*( \alpha,\gamma )  -\mathbb S_T^*(\alpha^*_g,\gamma )
	=(\alpha^*_g-\alpha)'\frac{2}{T}\sum_tZ_t(\gamma) \eta_t\widehat\varepsilon_t
	 	+(\alpha^*_g-\alpha) '\frac{1}{T}\sum_tZ_t(\gamma)  [Z_t(\widehat\gamma)-Z_t(\gamma)]'\widehat\alpha
	 	\cr
	 	&& +(\alpha^*_g-\alpha)'\frac{1}{T}\sum_tZ_t(\gamma) Z_t(\gamma)'(\widehat\alpha -\alpha)
	 	+(\alpha^*_g-\alpha)'\frac{1}{T}\sum_tZ_t(\gamma) Z_t(\gamma)'(\widehat\alpha -\alpha^*_g) \cr
	 	&=&o_{P^*}(T^{-1}) + o_{P^*}(T^{-1/2})\frac{1}{T}\sum_tZ_t(\gamma) \eta_t\widehat\varepsilon_t
	 	+ o_{P^*}(T^{-1/2})    \frac{1}{T}\sum_tZ_t(\gamma)  [Z_t(\widehat\gamma)-Z_t(\gamma)]'\widehat\alpha\cr
	 	&=&
	 	o_{P^*}(T^{-1}) + o_{P^*}(T^{-1/2})\frac{1}{T}\sum_tZ_t(\widehat\gamma) \eta_t\widehat\varepsilon_t
	 	+o_{P^*}(T^{-1/2})\frac{1}{T}\sum_t[ Z_t(\gamma)-Z_t(\widehat\gamma)]   \eta_t\widehat\varepsilon_t
	 	\cr
	 	&&
	 	+ o_{P^*}(T^{-1/2})    \frac{1}{T}\sum_tZ_t(\gamma)  [1\{f_t'\widehat\gamma>0\}-1\{f_t'\gamma>0\}]'x_t'\widehat\delta\cr
	 	&\leq & o_{P^*}(T^{-1})
	 	+o_{P^*}(T^{-1/2})\sup_{|\gamma-\widehat\gamma|<CT^{-(1-2\varphi)}}|\frac{1}{T}\sum_t[ Z_t(\gamma)-Z_t(\widehat\gamma)]   \eta_t\widehat\varepsilon_t  |_2\cr
	 	&&  + o_{P^*}(T^{-1/2-\varphi})   \sup_{|\gamma-\gamma_0|<CT^{-(1-2\varphi)}}  \left|   \frac{1}{T}\sum_t |x_t|_2^2  |1(\gamma_0)-1(\gamma)| -\mathbb E  |x_t|_2^2  |1(\gamma_0)-1(\gamma)|    \right|  \cr
	 	&&  + o_{P^*}(T^{-1/2-\varphi})   \sup_{|\gamma-\gamma_0|<CT^{-(1-2\varphi)}}   \mathbb E  |x_t|_2^2  |1(\gamma_0)-1(\gamma)|      \cr
	 	&\leq & o_{P^*}(T^{-1})
	 	+    o_{P^*}(T^{-1/2})  T^{-(1-\varphi)}
	 	+o_{P^*}(T^{-1/2-\varphi}) T^{-(1-2\varphi)}= o_{P^*}(T^{-1}) .
	 	\end{eqnarray}

	 	Applying the above to  $\gamma= \gamma_g^*$,
	 	which satisfies  $|\gamma-\widehat\gamma|_2\leq CT^{-(1-2\varphi)}$ with bootstrap probability measure arbitrarily close to one by step 1 due to step 0,
	 	we have
	 	$$
	 	\mathbb S_T^*( \alpha,\gamma_g^* )  -\mathbb S_T^*(\alpha^*_g,\gamma_g^* )= o_{P^*}(T^{-1}) .
	 	$$

	 	In addition,  	 (\ref{e.23})   also applies when $\alpha^*_g$ is replaced with $\alpha_{g,h}^*$.
	 	That is,\\  $\mathbb S_T^*( \alpha,\gamma  )  -\mathbb S_T^*(\alpha_{g,h}^*,\gamma  )= o_{P^*}(T^{-1}) $ uniformly  in  $|\gamma-\widehat\gamma|_2\leq CT^{-(1-2\varphi)}$.
	 	By step 0,    $ |\gamma_{g,h}^*-\widehat\gamma|_2=O_{P^*}(T^{-(1-2\varphi)})$. Hence let $\gamma=\gamma^*_{m,h}$, we have
	 	$$
	 	\mathbb S_T^*(  \alpha, \gamma^*_{m,h}  )  -\mathbb S_T^*(\alpha_{g,h}^*,\gamma^*_{m,h}  )= o_{P^*}(T^{-1}) .
	 	$$

	 	\textit{Proof of Step 4.}
	 	Note that $|\alpha-\widehat\alpha|=O_{P^*}(T^{-1/2})$.
	 	The proof is then simply the bootstrap version of     step 3 of the iterative estimator in the known factor case.  Thus we just sketch the proof for $	| \gamma_h^*(\alpha)-\widehat\gamma|_2=O_{P^*}(T^{-(1-2\varphi)})$ for brevity.

	 	For   generic $\gamma$,
	 	\begin{eqnarray*}
	 		\mathbb{S}_{T}^*\left(  {\alpha},\gamma \right) -\mathbb{S}^*
	 		_{T}\left(  {\alpha},\widehat\gamma \right)
	 		& =& {\delta}^{\prime }\frac{1}{T}\sum_{t=1}^{T}x_{t}x_{t}^{\prime
	 		}\left\vert 1_{t}\left( \gamma \right) -1_{t} (\widehat\gamma) \right\vert   {\delta}-\frac{2%
	 		}{T}\sum_{t=1}^{T}\eta_t\widehat\varepsilon _{t}x_{t}^{\prime }\left( 1_{t}\left( \gamma
	 		\right) -1_{t} (\widehat\gamma)   \right)   {\delta}  \notag \\
	 		&&+\left(   {\alpha}-\widehat\alpha \right) ^{\prime }\frac{2}{T}%
	 		\sum_{t=1}^{T}Z_{t}\left( \widehat\gamma \right) x_{t}^{\prime }\left(
	 		1_{t}\left( \gamma \right) -1_{t}  (\widehat\gamma) \right)   {\delta} .
	 	\end{eqnarray*}
	  {Apply Lemma \ref{A-rates:lemma-used} with $\gamma_0$ replaced by a generic $\gamma_2$},   uniformly for $\gamma, \gamma_2$, and an arbitrarily small $\eta >0, $
	 	\begin{eqnarray*}
	 		\delta ^{\prime }\frac{1}{T}\sum_{t=1}^{T}x_{t}x_{t}^{\prime }\left\vert
	 		1_{t}\left( \gamma \right) -1_{t}(\gamma_2)\right\vert \delta  &=&O_{p}\left(
	 		\frac{1}{T^{1+\varphi }}\right) +\left\vert \delta \right\vert _{2}\eta
	 		T^{-2\varphi }\left\vert \gamma -\gamma _{2}\right\vert _{2}\cr
	 		&&+T^{-2\varphi
	 		}\mathbb E\left( d_{0}^{\prime }x_{t}\right) ^{2}\left\vert 1_{t}\left( \gamma
	 		\right) -1_{t}(\gamma_2)\right\vert \\
	 		&\geq &O_{p}\left( \frac{1}{T}\right) +cT^{-2\varphi }\left\vert \gamma
	 		-\gamma _{2}\right\vert _{2}\cr
	 		\frac{2}{T}\sum_{t=1}^{T}\eta_t\widehat\varepsilon _{t}x_{t}^{\prime }\left( 1_{t}\left(
	 		\gamma \right) -1_{t}(\gamma_2)\right) \delta  &\leq &O_{P^*}\left( \frac{1}{T}\right)
	 		+\eta T^{-2\varphi }\left\vert \gamma -\gamma _{2}\right\vert _{2}
	 		\\
	 		\left( {\alpha}-\widehat \alpha \right) ^{\prime }\frac{2}{T}%
	 		\sum_{t=1}^{T}Z_{t}\left( \widehat \gamma \right) x_{t}^{\prime }\left(
	 		1_{t}\left( \gamma \right) -1_{t}(\gamma_2)\right) \delta  &=&\left( O_{p}\left( \frac{1}{T}\right) +\eta		T^{-2\varphi }\left\vert \gamma -\gamma _{2}\right\vert _{2}+T^{-\varphi		}\left\vert \gamma -\gamma _{2}\right\vert _{2}\right)\cr		&&\times O_{P^*}\left(		T^{-1/2 }\right)  .
	 	\end{eqnarray*}
	 	Combining these bounds and setting  $\gamma=\gamma_h^*(  \alpha)$, and $\gamma_2=\widehat\gamma$,
	 	\begin{eqnarray*}
	 		0\geq 	\mathbb{S}_{T}^*\left( \alpha ,  \gamma_h^* (\alpha)\right) -\mathbb{S}^*
	 		_{T}\left( \alpha ,\widehat\gamma \right)
	 		& \geq & O_{P^*}\left(  T^{-1}\right) +cT^{-2\varphi }\left\vert  \gamma_h^*(\alpha)
	 		-\widehat\gamma \right\vert .
	 	\end{eqnarray*}
	 	This implies
	 	$\left\vert  \gamma_h^*(\alpha)
	 	-\widehat\gamma \right\vert   \leq O_{P^*}(T^{-(1-2\varphi)}).
	 	$
	 	The same argument yields
	 	$\left\vert \gamma^*(\alpha)
	 	-\widehat\gamma \right\vert   \leq O_{P^*}(T^{-(1-2\varphi)}).
	 	$
	 \end{proof}

			\subsection{Proof of Theorem \ref{t6.1}: estimated factor case}\label{sec:appendix:est:factors}

		\subsubsection{The bootstrap with re-estimated factors using PCA}
	We now present details on our main  bootstrap procedure that re-estimates factors in the bootstrap sample. This is given by \cite{gonccalves2018bootstrapping}. To maintain the cross-sectional dependence among the idiosyncratic components in the bootstrap factor models, we rely on a consistent (in spectral norm) covariance matrix for $e_t$, given by $\widehat \var(e_t)$. Such a covariance estimator can be obtained via thresholding as in \cite{POET}. Let $\mathcal W_t^* $ be a sequence of independent  $N\times 1$   multivariate standard normal vectors. As in \cite{gonccalves2018bootstrapping},   generate bootstrap data
	$$
	\mathcal Y_t^*= \widehat\Lambda \widetilde f_{1t} + \widehat \var(e_t)^{1/2}\mathcal W_t^*.
	$$
	Then  apply PCA to estimate factors, obtaining $\widetilde F_{1t}^*$ as the estimated factors in the bootstrap sample. Let $F_t^*=(F_{1t}^*, -1)'$. It   has the following expansion
	$$
	F_t^*= H_T^{*'}(\widetilde f_t +\frac{1}{\sqrt{N}}h_t^* )+ r_t^*
	$$
	where $r_t^*$ is some remainder term. In addition, let $V^*$ be the diagonal matrix containing the top dim$(\widetilde f_{1t})$ eigenvalues of $\mathcal Y^*\mathcal Y^{*'}/(NT)$.
	Then
		$$
	H_T^{*'}=
	\begin{pmatrix}
	H^{*'} & 0\\
	0& 1
	\end{pmatrix},\quad H^{*'}=V^{*-1}\frac{1}{T}\sum_tF_{1t}^*\widetilde f_{1t}'\widehat S_\Lambda, \quad h_t^*=\begin{pmatrix}
	h_{1t}^*\\
	0
	\end{pmatrix}.
	$$
	where
	$\widehat S_\Lambda=\frac{1}{N}\widehat\Lambda'\widehat\Lambda$, and $$h_{1t}^*=\widehat S_\Lambda^{-1}\frac{1}{N}\widehat\Lambda \widehat \var(e_t)^{1/2}\mathcal W_t^*:=\mathcal Z_t^*.$$
Note that $F_t^*$ estimates $\widetilde f_t$, the ``true factors" in the bootstrap sample, up to a new rotation matrix $H_T^*$. Hence the bootstrap distribution of $F_t^*$ would not be able to exactly mimic the sampling distribution of $\widetilde f_t$. Fortunately, such a rotation discrepancy can be removed in the bootstrap estimation because $H_T^*$ is known. As such, we can define
$$
f_t^*:= H_T^{*'-1} F_t^*
$$
as the final ``estimated factors" in the bootstrap sample, whose asymptotic bootstrap distribution would then exactly mimic that of $\tilde f_t$ without rotations.

 \subsubsection{Gaussian-perturbed factors}\label{sec:Gaussian:perturbed-factors}

We now present an alternative bootstrap procedure using Gaussian-perturbed factors.
Let $\{\mathcal W_t^*:t\leq T\}$ be a sequence of independent $K \times 1$    multivariate standard normal random vectors.
We simply use
$$
f_t^*:=\widetilde f_t +N^{-1/2} \widehat\Sigma_h^{1/2}\mathcal W_t^*,
$$
where $N^{-1/2} \widehat\Sigma_h^{1/2}\mathcal W_t^*$ is a Gaussian perturbation; $\widehat\Sigma_h$ is an estimator for the asymptotic variance of $H'h_{1t}$, as defined in (\ref{e:6.7:h})
$$
\Sigma_h:=\var\left[(\frac{1}{N}\Lambda'\Lambda)^{-1}\frac{1}{\sqrt{N}}\Lambda' e_t\right].
$$
Hence we can use
$$
\widehat\Sigma_h=N ( \widehat \Lambda'\widehat \Lambda)^{-1}\widehat \Lambda'\widehat \var(e_t)\widehat \Lambda (\widehat \Lambda'\widehat \Lambda)^{-1},
$$
where $\widehat \var(e_t)$ is a high-dimensional covariance estimator for $\var(e_t)$.  For instance,
\citet{POET} assumed that $\var(e_t)$ is a sparse covariance matrix, and  constructed $\widehat \var(e_t)$ using thresholding.

		\subsubsection{Proof of the distribution of $LR$}

Let $$l_{NT}=\sqrt{r_{NT}T^{1+2\varphi }}.$$

\begin{proof}
Define
	 	$
	 	\widehat\gamma_h=\arg\min_{\alpha, h(\gamma)=0}\mathbb S_T(\alpha,\gamma),
	 	$
	 	$
	 	\widehat\alpha(\gamma)=\arg\min_{\alpha}\mathbb S_T(\alpha,\gamma),$ and $\widehat\alpha_h=\widehat\alpha(\widehat\gamma_h).
	 	$
	 	Then \begin{eqnarray*}
l_{NT}	\min_{\alpha,\gamma} {\mathbb S}_T(\alpha,\gamma)LR &=& l_{NT}[\mathbb S_T(\widehat\alpha_h, \widehat\gamma_h)-\mathbb S_T(\widehat\alpha,\widehat\gamma)] \\
&=&A_1+A_2-A_3\quad \text{ where, }\cr		A_1&=&l_{NT}[\mathbb S_T(\widehat\alpha_h, \widehat\gamma_h)-\mathbb S_T(\widehat\alpha_h,\gamma_0)]\cr
	 		A_2   &=&l_{NT}[\mathbb S_T(\widehat\alpha_h, \gamma_0)-\mathbb S_T(\widehat\alpha,\gamma_0)]\cr
	 		A_3 &=&l_{NT}[\mathbb S_T(\widehat\alpha, \widehat\gamma)-\mathbb S_T(\widehat{\alpha},\gamma_0)].
	 	\end{eqnarray*}
	We first analyze $\mathbb S_T(\alpha,\gamma)-\mathbb S_T(\alpha,\gamma_0)$.

	 We have
	$
\mathbb S_T(\alpha,\gamma)-\mathbb S_T(\alpha,\gamma_0)
=  \widetilde{\mathbb C}_3(\alpha,\gamma) - \widetilde{\mathbb C}_1(\alpha,\gamma) +\sum_{d=2}^6\widetilde R_d(\alpha,\gamma)
	$
	where  $ \widetilde{\mathbb C}_1(\alpha,\gamma) $, $ \widetilde{\mathbb C}_3(\alpha,\gamma) , \widetilde R_2(\alpha,\gamma),\widetilde R_3(\alpha,\gamma)$ are as defined in
		Section \ref{sec:def:not}, and
\begin{eqnarray*}
\widetilde R_4(\alpha,\gamma)&=&\frac{1}{T}\sum_t2\widetilde Z_t(\gamma_0)'(\alpha-\alpha_0)
x_t'(\delta-\delta_0) (1\{\widetilde f_t'\gamma>0\}-1\{\widetilde f_t'\gamma_0>0\}) \cr
\widetilde R_5(\alpha,\gamma)&=& \frac{1}{T}\sum_t
(x_t'(\delta-\delta_0) )^2|1\{\widetilde f_t'\gamma>0\}-1\{\widetilde f_t'\gamma_0>0\}| \cr
\widetilde R_6(\alpha,\gamma)&=& \frac{1}{T}\sum_t
x_t'(\delta-\delta_0) x_t'\delta |1\{\widetilde f_t'\gamma>0\}-1\{\widetilde f_t'\gamma_0>0\}|.
\end{eqnarray*}
Uniformly  for $|\alpha-\alpha_0|= O_P(T^{-1/2})$ and $|\gamma-\gamma_0|_2=O_P(r_{NT}^{-1})$, we have $l_{NT} |\widetilde R_d(\alpha,\gamma)|=o_P(1)$ for $d=3\sim 6$.			In addition,  $\widetilde R_2(\alpha,\gamma)+ \widetilde{\mathbb C}_3(\alpha,\gamma)
= \mathds G_1(H_T\gamma) +o_P(l_{NT}^{-1})
 $ for $\mathds G_1(\phi)$	  as defined in (\ref{e.11}),  by Lemmas \ref{l:relf}, \ref{l:rhat}, \ref{l.1}. So
 $$
 l_{NT}[\mathbb S_T(\alpha,\gamma)-\mathbb S_T(\alpha,\gamma_0)]
 = l_{NT}[\mathds G_1(H_T\gamma)- \widehat{\mathbb C}_1(\alpha_0,\gamma)]  +o_P(1).
 $$
Next, $\mathbb S_T(\widehat\alpha_h,\widehat\gamma_h)\leq \mathbb S_T(\alpha_0,\gamma_0)$ under  $h(\gamma_0)=h(\widehat\gamma_h)=0$. Thus the same proof as the rate of convergence for $(\widehat\alpha, \widehat\gamma)$ also carries over to prove that $|\widehat\gamma_h-\gamma_0|_2=O_P(r_{NT}^{-1})$ and $|\widehat\alpha_h-\alpha_0|_2=O_P(T^{-1/2}).$
Now let  $\alpha=\alpha
	 		_{0}+a\cdot T^{-1/2}$,
			$\gamma=\gamma _{0}+g\cdot r_{NT}^{-1} ,$
\begin{eqnarray*}
	 		\mathbb{K}_{T}\left( a,g\right) &=&l_{NT}\left( \mathbb{S}_{T}\left( \alpha
	 		_{0}+a\cdot T^{-1/2},\gamma _{0}+g\cdot r_{NT}^{-1}\right) -\mathbb{S}%
	 		_{T}\left( \alpha _{0},\gamma _{0}\right) \right) \cr
&=& l_{NT}\sum_{d=1}^3\widetilde R_d(\alpha,\gamma)
- l_{NT}\sum_{d=1}^2\widetilde{\mathbb C}_d(\alpha,\gamma) +l_{NT}\sum_{d=3}^4\widetilde{\mathbb C}_d(\alpha,\gamma) \cr
&=&l_{NT}[\mathds G_1(H_T\gamma)- \widehat{\mathbb C}_1(\alpha_0,\gamma)] +l_{NT} [\widehat R_1(\alpha,\gamma_0)-\widehat{\mathbb C}_2(\alpha,\gamma)] +o_P(1)\cr
&=&\mathbb K_{4T}(g)+\mathbb K_{1T}(a)+o_P(1),
	 	\end{eqnarray*}%
where $\mathbb K_{1T}(a):=l_{NT} [\widehat R_1(\alpha,\gamma_0)-\widehat{\mathbb C}_2(\alpha,\gamma)] $, which  does not depend on $\gamma$, and
$$
\mathbb K_{4T}(g):=l_{NT}[\mathds G_1(H_T (\gamma _{0}+g\cdot r_{NT}^{-1}))- \widehat{\mathbb C}_1(\alpha_0, \gamma _{0}+g\cdot r_{NT}^{-1})] .
$$
 Define
	 	\begin{eqnarray*}
	 		(\widehat a_h, \widehat g_h)&=& \arg\min_{a,  h(\gamma_0+g_hr_{NT}^{-1})=0} \mathbb{K}_{T}\left( a,g_h\right) ,\quad (\widehat a, \widehat g)= \arg\min_{a,  g } \mathbb{K}_{T}\left( a,g\right) \cr
	 		\widehat g_h&=& r_{NT}^{-1}\left(\widehat \gamma_h -\gamma _{0}\right),\qquad \widehat g= r_{NT}^{-1}\left(\widehat \gamma -\gamma _{0}\right)
	 		\cr
	 		\widetilde g_h&=&\arg\min_{h(\gamma_0+g_hr_T^{-1})=0}  \mathbb K_{4T}(g) ,\qquad
			\widetilde g=\arg\min_{g}  \mathbb K_{4T}(g).
	 	\end{eqnarray*}%
Then the same proof as in Section \ref{scg.1.1} shows that
$$
 \mathbb K_{4T}(\widehat g_h)= \mathbb K_{4T}(\widetilde g_h)+o_P(1),\quad  \mathbb K_{4T}(\widehat g)= \mathbb K_{4T}(\widetilde g)+o_P(1).
$$ Thus
\begin{eqnarray*}
A_1&=&  l_{NT}[\mathds G_1(H_T\widehat \gamma_h)- \widehat{\mathbb C}_1( \alpha_0,\widehat \gamma_h)]  +o_P(1)\cr
&=& \mathbb K_{4T}(\widehat g_h)  +o_P(1)
=\arg\min_{h(\gamma_0+g_hr_T^{-1})=0}  \mathbb K_{4T}(g_h)+o_P(1)\cr
&=&\arg\min_{g_h'\nabla h=0}  \mathbb K_{4T}(g_h)+o_P(1)\cr
A_3&=&  l_{NT}[\mathds G_1(H_T\widehat \gamma)- \widehat{\mathbb C}_1(  \alpha_0,\widehat \gamma)]  +o_P(1)\cr
&=& \arg\min_g\mathbb K_{4T}(g) +o_P(1).
\end{eqnarray*}
Also $A_2=o_P(1)$  following  a  similar proof as in Lemma \ref{l5.1}.
Sections \ref{sec:EP} \ref{sec:Bias} show that  $\mathbb K_{4T}(\cdot)\Rightarrow \mathbb Q(\omega, \cdot)$, where $l_{NT}\mathds G_1(H_T (\gamma _{0}+g\cdot r_{NT}^{-1}))$ is the bias part and $ l_{NT}\widehat{\mathbb C}_1(\alpha_0, \gamma _{0}+g\cdot r_{NT}^{-1})$ is the empirical process part.
 Hence by the continuous mapping theorem,
$$
	 l_{NT} \cdot LR\to^d \sigma_{\varepsilon}^{-2}\min_{ g_h'\nabla h =0} \mathbb Q(\omega, g_h)   - \sigma_{\varepsilon}^{-2}\min_{ g} \mathbb Q(\omega, g).
	 	$$

\end{proof}
			\subsubsection{Proof of the  distribution of $LR_k^*$}

		\begin{proof}
	 	The proof below holds for both  cases of bootstrap estimated factors:

	 	\textbf{re-estimated factors using PCA}
		Recall that $f_t^*=H^{*'-1}_TF_t^*$. Then we have
			\begin{equation}\label{eqg.5}
		f_t^* 
		=\widehat f_t^* +H_T^{*'-1}r_t^*+m_t,\quad
		\widehat f_t^*:=\widehat f_t +\frac{1}{\sqrt{N}}h_t^*
		\end{equation}
		where $m_t=\widetilde f_t-\widehat f_t$, whose probability bound is given in Section \ref{sec:E1}. Note that the bootstrap probability bound for $r_t^*$ is similar to that of $m_t$. Also,
		$$
		h_{t}^*=(h_{1t}^*, 0),\quad h_{1t}^*\sim \mathcal N(0,\widehat\Sigma_h)
		$$

	\textbf{perturbed bootstrap factors}

	 In the case of directly using perturbed factors for bootstrap: $ f_t^*=\widetilde f_t+ N^{-1/2} h_t^*$,
	 we can still write
	 		\begin{equation}\label{eqg.5.perturb}
	 f_t^* 
	 =\widehat f_t^* +H_T^{*'-1}r_t^*+m_t,\quad
	 \widehat f_t^*:=\widehat f_t +\frac{1}{\sqrt{N}}h_t^*
	 \end{equation}
	 where $r_t^*=0$ and
	 	$$
	 h_{t}^*=(h_{1t}^*, 0),\quad h_{1t}^*
	 = \widehat\Sigma_h^{1/2} \mathcal W_t^*
	 \sim \mathcal N(0,\widehat\Sigma_h).
	 $$

  \textbf{Step 1. Expansion of $l_{NT}(\mathbb S_T^*(\alpha,\gamma)-\mathbb S^*_T(\alpha,\widehat\gamma))$.}

				 We have
			$
		l_{NT} \mathbb{ {S}}_{T}^{\ast }(\widehat\alpha^*,\widehat\gamma^*)LR_k^* =	l_{NT} [\mathbb S_T^*(\widehat\alpha_h^*, \widehat\gamma_h^*)-\mathbb S_T^*(\widehat\alpha^*,\widehat\gamma^*)] =A_1^*
		+A_2^*	-A^*_3,
			$
			where
			\begin{eqnarray*}
				A^*_1&=&	l_{NT} [\mathbb S_T^*(\widehat\alpha_h^*, \widehat\gamma_h^*)-\mathbb S_T^*(\widehat\alpha_h^*,\widehat\gamma)],
				A^*_2 =	l_{NT} [\mathbb S_T^*(\widehat\alpha_h^*, \widehat\gamma )-\mathbb S_T^*(\widehat\alpha^*,\widehat\gamma)],  A^*_3 =	l_{NT} [\mathbb S_T^*(\widehat\alpha^*, \widehat\gamma^*)-\mathbb S_T^*(\widehat{\alpha}^*,\widehat\gamma)].
			\end{eqnarray*}

Define
		\begin{eqnarray*}
		\widetilde{R}^*_{1}\left( \alpha ,\gamma \right) &=&\frac{1}{T}\sum_{t=1}^{T}\left({Z}^*_{t}\left( \gamma \right) ^{\prime }\left(\alpha -\widehat\alpha \right) \right) ^{2}\cr
 \widetilde{R}_{2}^*\left(  \gamma \right) &=&\frac{1}{T}\sum_{t=1}^{T}\left( x_{t}^{\prime}\delta_0  \right) ^{2}      |   1\{ {f}_{t}^{*\prime }\gamma >0\} -1\{ f_t^{*'}\widehat\gamma>0\}   |,  \cr
			 \widetilde{R}_{3}^*\left( \alpha ,\gamma \right) &=&\frac{2}{T}\sum_{t=1}^{T}x_{t}^{\prime }\widehat\delta \left( 1\left\{f_t^{*'}\gamma>0\right\} -1\left\{ f_t^{*'}\widehat\gamma>0\right\} \right) {Z}^*_{t}\left( \gamma \right) ^{\prime }\left(\alpha -\widehat\alpha \right), \cr
				\widetilde R_4^*(\alpha,\gamma)&=&\frac{1}{T}\sum_t2  Z_t^*(\widehat\gamma)'(\alpha-\widehat\alpha)
			x_t'(\delta-\widehat\delta) (1\{f_t^{*'}\gamma>0\}-1\{f_t^{*'}\widehat\gamma>0\}) \cr
\widetilde R_5^*(\alpha,\gamma)&=& \frac{1}{T}\sum_t
			(x_t'(\delta-\delta_0) )^2|1\{f_t^{*'}\gamma>0\}-1\{f_t^{*'}\widehat\gamma>0\}| \cr
\widetilde R_6^*(\alpha,\gamma)&=& \frac{2}{T}\sum_t
			x_t'(\delta-\delta_0) x_t'\delta |1\{f_t^{*'}\gamma>0\}-1\{f_t^{*'}\widehat\gamma>0\}| \cr
			\mathbb{\widetilde{C}}^*_{1}\left( \alpha ,\gamma \right) &=&
			\frac{2}{T}\sum_{t=1}^{T}\eta_t\widehat\varepsilon _{t}x_{t}^{\prime }\delta \left( 1\{  f_t^{*'}\gamma>0\} -1\{  f_t^{*'}\widehat\gamma>0\} \right), \cr
			\mathbb{\widetilde{C}}^*_{2}\left( \alpha \right) &=&\frac{2}{T}\sum_{t=1}^{T}\eta_t\widehat\varepsilon _{t} {Z}^*_{t}\left( \widehat \gamma \right) ^{\prime}\left( \alpha -\widehat \alpha \right), \cr
			\mathbb{\widetilde{C}}^*_{3}\left( \alpha ,\gamma \right) &=&
			\frac{2}{T}\sum_{t=1}^{T}x_{t}^{\prime }\widehat\delta x_{t}^{\prime }\delta\left( 1\{f_t^{*'}\widehat\gamma>0\} -1\{\widetilde f_t'\widehat\gamma>0\}\right) \left( 1\{f_t^{*'}\gamma>0\} -1\{f_t^{*'}\widehat\gamma>0\} \right), \cr
			\mathbb{\widetilde{C}}^*_{4}\left( \alpha \right) &=&\frac{2}{T}\sum_{t=1}^{T}x_{t}^{\prime }\widehat\delta  \left( 1\{f_t^{*'}\widehat\gamma>0\} -1\{\widetilde f_t'\widehat\gamma>0\}\right) {Z}^*_{t}\left(\widehat \gamma  \right) ^{\prime}\left( \alpha -\widehat\alpha \right),\cr
			\mathbb{\widehat{C}}^*_{1}\left(   \gamma \right) &=&
			\frac{2}{T}\sum_{t=1}^{T}\eta_t\widehat\varepsilon _{t}x_{t}^{\prime }\widehat\delta \left( 1\{ \widehat f_t^{*'}\gamma>0\} -1\{  \widehat f_t^{*'}\widehat\gamma>0\} \right), \cr
			\widehat{R}_{2}^*\left(  \gamma ,\widehat\gamma\right) &=&\frac{1}{T}\sum_{t=1}^{T}\left( x_{t}^{\prime}\delta_0  \right) ^{2}      |   1\{ \widehat{f}_{t}^{*\prime }\gamma >0\} -1\{ \widehat f_t^{*'}\widehat\gamma>0\}   |,  \cr
			\mathbb{\widehat{C}}^*_{3}\left(  \gamma,\widehat\gamma \right) &=&
			\frac{2}{T}\sum_{t=1}^{T}(x_{t}^{\prime }\delta_0   )^2\left( 1\{ \widehat  f_t^{*'}\widehat\gamma>0\} -1\{\widehat f_t'\widehat\gamma>0\}\right) \left( 1\{\widehat f_t^{*'}\gamma>0\} -1\{\widehat f_t^{*'}\widehat\gamma>0\} \right).
		\end{eqnarray*}

Uniformly in $|\alpha-\widehat\alpha|_2=O_P(T^{-1/2})$ and $|\gamma-\widehat\gamma|_2=O_P(r_{NT}^{-1})$,  we   have $l_{NT} |\widetilde R^*_d(\alpha,\gamma)|=o_{P^*}(1)$ for $d=3\sim 6$,
$l_{NT}|\mathbb{\widehat{C}}^*_{3}\left(  \gamma ,\widehat\gamma\right) -\mathbb{\widetilde{C}}^*_{3}\left( \alpha ,\gamma \right) | + l_{NT}|\widehat{R}_{2}^*\left(  \gamma,\widehat\gamma \right) -\widetilde{R}_{2}^*\left(  \gamma \right) |=o_{P^*}(1)$,  and
   $l_{NT}|\widehat{\mathbb C}_1^*(  \gamma)-  \mathbb{\widetilde{C}}^*_{1}\left( \alpha ,\gamma \right) |=o_{P^*}(1)$.   These convergences are straightforward to verify as in the original sampling space.  We verify $l_{NT}|\widehat{\mathbb C}_1^*(  \gamma)-  \mathbb{\widetilde{C}}^*_{1}\left( \alpha ,\gamma \right) |=o_{P^*}(1)$ at the end of the proof (step 5) for illustration.

   Next, write
   $$
  \breve  g_{t,H,\Sigma}:= \breve g_t +N^{-1/2} H^{'-1}\Sigma^{1/2}\mathcal W_t^*,
   $$
   where $\mathcal W_t^*$ is standard normal.  Also write
    \begin{eqnarray*}
\mathds G_{H,\Sigma}(\phi_2, \phi_1)&:=&	 2\mathbb E(x_{t}^{\prime }\delta_0   )^2\left( 1\{ \breve g_{t, H,\Sigma}'\phi_1>0\} -1\{\breve g_t'\phi_1>0\}\right) \left( 1\{\breve g_{t, H,\Sigma}'\phi_2>0\} -1\{\breve g_{t, H,\Sigma}'\phi_1>0\} \right)\cr
&&+ \mathbb E\left( x_{t}^{\prime}\delta_0  \right) ^{2}      |   1\{\breve g_{t, H,\Sigma}'\phi_2>0\} -1\{\breve g_{t, H,\Sigma}'\phi_1>0\} |.
    \end{eqnarray*}
  where $\mathbb E$ is with respect to the joint distribution of the sampling distribution and  $\mathcal W_t^*$.
   Then $\widehat f_t^*= H_T'\breve g_{t,H_T,\widehat\Sigma}$, and
   $
   \mathds G_{H_T,\widehat\Sigma}(H_T\gamma, H_T\widehat\gamma)=\mathbb E(\mathbb{\widehat{C}}^*_{3}\left(  \gamma_1  , \gamma_2\right)
   +\widehat{R}_{2}^*\left(  \gamma_1 ,\gamma_2\right)).
   $
Then for $\phi=H_{T}\gamma$ and $\widehat\phi=H_T\widehat\gamma$,
	\begin{eqnarray}\label{ef.5}
l_{NT}(\mathbb S_T^*(\alpha,\gamma)-\mathbb S^*_T(\alpha,\widehat\gamma))
&=& l_{NT}(  \widetilde{\mathbb C}^*_3(\alpha,\gamma) - \widetilde{\mathbb C}^*_1(\alpha,\gamma) +\sum_{d=3}^6\widetilde R_d^*(\alpha,\gamma)+\widetilde R_2^*(\gamma))\cr
		&=&l_{NT}(\mathbb{\widehat{C}}^*_{3}\left(  \gamma ,\widehat\gamma\right)
+\widehat{R}_{2}^*\left(  \gamma ,\widehat\gamma\right) )- l_{NT}\mathbb{\widehat{C}}^*_{1}\left( \gamma \right) +o_{P^*}(1)\cr
&=&l_{NT}\mathds G_{H_T,\widehat\Sigma}(\phi,\widehat\phi) - l_{NT}\mathbb{\widehat{C}}^*_{1}\left(  \gamma \right) +o_{P^*}(1).
\end{eqnarray}

 \textbf{Step 2.  Probability limit of $l_{NT}\mathds G_{H_T,\widehat\Sigma}(\phi,\widehat\phi) $.}

Fix  $\phi_1$ in a neighborhood of $\phi_0$. We first obtain a similar expansion as in (\ref{e0.5add}).
 \begin{eqnarray*}
A_{1t}^*(\phi_2, \phi_1)&=&1\left\{ \breve   g_{t,H,\Sigma}'\phi_2\leq 0< \breve g_{t,H,\Sigma}'\phi_1\right\}
1\left\{ \breve g_{t}^{\prime }\phi _{1}>0\right\} \cr
A_{2t}^*(\phi_2, \phi_1)&=& 1\left\{ \breve  g_{t,H,\Sigma}'\phi_1\leq 0<%
 \breve  g_{t,H,\Sigma}'\phi_2 \right\} 1\left\{\breve  g_{t}^{\prime }\phi _{1}\leq
0\right\} \cr
A_{3t}^*(\phi_2, \phi_1)&=&1\left\{ \breve  g_{t,H,\Sigma}'\phi_2\leq 0< \breve g_{t,H,\Sigma}'\phi_1\right\}
1\left\{ \breve g_{t}^{\prime }\phi _{1}\leq 0 \right\}\cr
A_{4t}^*(\phi_2, \phi_1)&=&1\left\{\breve
g_{t,H,\Sigma}'\phi_1\leq 0<\breve   g_{t,H,\Sigma}'\phi_2 \right\}
1\left\{ \breve g_{t}^{\prime }\phi_{1} > 0\right\}.
\end{eqnarray*}

 Therefore,     $
\mathds G_{H,\Sigma}(\phi_2,\phi_1)= \mathbb E\left( x_{t}^{\prime }\delta _{0}\right) ^{2}\left( A_{1t}^*\left( \phi_2, \phi_1
\right) +A_{2t}^*\left( \phi_2, \phi_1 \right) -A_{3t}^*\left(\phi_2, \phi_1 \right)
-A_{4t}^*\left( \phi_2, \phi_1 \right) \right).
 $
 Let us calculate $A_{1t}^*$ first. For notational simplicity, write
 $$
 h_{t,H,\Sigma}^{*'}=\mathcal W_t^{*'}\Sigma^{1/2}H^{-1},\quad u_{Nt}:=\breve g_t'\phi_1.
 $$
Then
 \begin{eqnarray*}
 	\mathbb  E\left( x_{t}^{\prime }\delta _{0}\right) ^{2}A_{1t}^*&=&\mathbb  E\left( x_{t}^{\prime }\delta _{0}\right) ^{2}1\left\{ -h_{t,H,\Sigma}^{*'}\phi_1<\sqrt{N}u_{Nt}\leq -\sqrt{N} \breve g_t ^{\prime }\left( \phi_2-\phi_1\right) - h_{t,H,\Sigma}^{*'}\phi_1\right\} 1\left\{
 	h_{t,H,\Sigma}^{*'}\phi_1\leq 0\right\}    \cr
 	&&+\mathbb  E\left( x_{t}^{\prime }\delta _{0}\right) ^{2}1\left\{ 0<\sqrt{N}u_{Nt}\leq -\sqrt{N} \breve  g_t^{\prime
 	}\left( \phi_2-\phi_1\right) -  h_{t,H,\Sigma}^{*'}\phi_1\right\} 1\left\{ h_{t,H,\Sigma}^{*'}\phi_1>0\right\}+ A_{11},
 \end{eqnarray*}
 where the same proof as of Lemma \ref{l:generic} implies
 $
 	A_{11}\leq   \frac{CL}{NT^{2\varphi}} ,
$
given the assumption that $p_{\breve g_t'\phi|h_t^*}(\cdot)$ is bounded.
 Let $p_{u_{Nt}|\bigstar}(\cdot):=p_{u_{Nt}|h_{t,H,\Sigma}^{*'}\phi_1,   f_{2t},x_{t} }(\cdot)$ denote the conditional density of $u_{Nt}$.
 Change variable  $a=\sqrt{N}u$, we have,
 \begin{align}
 & \mathbb E\left( x_{t}^{\prime }\delta _{0}\right) ^{2}A_1- A_{11}\cr
 &=\frac{1}{\sqrt{N}} \mathbb E\left( x_{t}^{\prime }\delta _{0}\right) ^{2}\int 1\left\{ -h_{t,H,\Sigma}^{*'}\phi_1<a\leq -\sqrt{N}   \breve g_t ^{\prime }\left( \phi_2-\phi_1\right) - h_{t,H,\Sigma}^{*'}\phi_1\right\} 1\left\{
 h_{t,H,\Sigma}^{*'}\phi_1\leq 0\right\}  p_{u_{Nt}|\bigstar}(\frac{a}{\sqrt{N}})da  \cr
 &+\frac{1}{\sqrt{N}} \mathbb E\left( x_{t}^{\prime }\delta _{0}\right) ^{2}\int 1\left\{ 0<a\leq -\sqrt{N}  \breve g_t^{\prime
 }\left( \phi_2-\phi_1\right) -  h_{t,H,\Sigma}^{*'}\phi_1\right\} 1\left\{ h_{t,H,\Sigma}^{*'}\phi_1>0\right\}p_{u_{Nt}|\bigstar}(\frac{a}{\sqrt{N}})da\cr
 &=
 -  \mathbb E\left( x_{t}^{\prime }\delta _{0}\right) ^{2}p_{u_{Nt}|\bigstar}(0)    \breve g_t ^{\prime }\left( \phi_2-\phi_1\right)1\{   \breve g_t ^{\prime }\left( \phi_2-\phi_1\right)\leq 0\} 1\left\{
 h_{t,H,\Sigma}^{*'}\phi_1\leq 0\right\}     \cr
 &-  \mathbb E\left( x_{t}^{\prime }\delta _{0}\right) ^{2}p_{u_{Nt}|\bigstar}(0)  \left(   \breve g_t^{\prime
 }\left( \phi_2-\phi_1\right) + \frac{h_{t,H,\Sigma}^{*'}\phi_1}{\sqrt{N}}\right)1\left\{ \breve   g_t^{\prime
 }\left( \phi_2-\phi_1\right) + \frac{h_{t,H,\Sigma}^{*'}\phi_1}{\sqrt{N}}<0\right\}1\left\{ h_{t,H,\Sigma}^{*'}\phi_1>0\right\}
 \cr
 &+ B_1,
 \end{align}
 where the same proof as of Lemma \ref{l:generic} implies, due to
     $p_{u_{Nt}|\bigstar}(.)$ is Lipschitz,
 \begin{eqnarray*}
 	|B_1|
  \leq \frac{C'}{N}T^{-2\varphi}.
 \end{eqnarray*}
So the same proof as of Lemma \ref{l:generic} carries over to $A_{1t}^*...A_{4t}^*$, showing that a similar expansion as in (\ref{e0.5add}) holds: for $\Xi(a,b)$ is as defined in (\ref{e0.7add}), for $\breve g_t^{\prime }(\phi_2-\phi_1)=a$, $\frac{h_t^{*\prime }\phi_1}{\sqrt{N}}=b$, and $\phi_2= \phi_1+Hgr_{NT}^{-1}$, 
  \begin{eqnarray}\label{eqf.6}
 && l_{NT}  \mathds G_{H,\Sigma}(\phi_2,\phi_1)=\mathbb  E[(x_t'\delta_0)^2p_{u_{Nt}|\bigstar}(0) \Xi(a,b)] + o(l_{NT}^{-1} )\cr
 &=&  l_{NT}\mathbb{E}_{|u_{Nt}=0}(x_t^{\prime }\delta_0)^2p_{u_{Nt}}(0) %
 \left[\left| \breve g_t^{\prime }\left(\phi _2-\phi _{1}\right) + \frac{h_t^{*\prime
 	}\phi_1}{\sqrt{N}} \right | - \left |\frac{h_t^{*\prime }\phi_1 }{\sqrt{N}}
 \right | \right] +o(1).
    \end{eqnarray}

 When $\omega\in(0,\infty]$, $l_{NT}  \mathds G_{H,\Sigma}(\phi_2,\phi_1) = \mathbb{\breve{C}}_{N,H,\Sigma, \phi_1}(Hg)+o(1) $, where
   \begin{eqnarray}
\mathbb{\breve{C}}_{N,H,\Sigma, \phi_1}(\mathfrak{g}) &:=&M_{\omega}%
 \mathbb{E}_{|u_{Nt}=0}(x_t^{\prime }d_0)^2p_{u_{Nt}}(0) \left(\left| \breve g_t^{\prime }%
 \mathfrak{g} + \zeta_{\omega} ^{-1} h_t^{*\prime }\phi_1\right | - \left |
 \zeta_{\omega}^{-1} h_t^{*\prime }\phi_1 \right | \right)\cr
 \mathbb{\breve{C}}_{H,\Sigma, \phi_1}(\mathfrak{g}) &:=&M_{\omega}%
 \mathbb{E}_{|g_{t}'\phi_1=0}(x_t^{\prime }d_0)^2p_{g_{t}'\phi_1}(0) \left(\left|   g_t^{\prime }%
 \mathfrak{g} + \zeta_{\omega} ^{-1} h_t^{*\prime }\phi_1\right | - \left |
 \zeta_{\omega}^{-1} h_t^{*\prime }\phi_1 \right | \right).
   \end{eqnarray}
Note that $\mathbb{\breve{C}}_{N,H,\Sigma, \phi_1}(\mathfrak{g})=M_{\omega}%
 \mathbb{E}(x_t^{\prime }d_0)^2p_{u_{Nt}|x_t, g_t}(0) \left(\left| \breve g_t^{\prime }%
 \mathfrak{g} + \zeta_{\omega} ^{-1} h_t^{*\prime }\phi_1\right | - \left |
 \zeta_{\omega}^{-1} h_t^{*\prime }\phi_1 \right | \right)$, so  by the assumption
 $
\sup_{x_t, h_t, g_t,\phi_1}| p_{\breve g_t'\phi_1|x_t,  f_{2t}}(0)-p_{g_t'\phi_1|x_t, f_{2t}}(0)|=o(1),
$  uniformly in $\phi_1$, $$l_{NT}  \mathds G_{H,\Sigma}(\phi_2,\phi_1) = \mathbb{\breve{C}}_{H,\Sigma, \phi_1}(Hg)+o(1).  $$

Let   $\phi=  \widehat\phi+H_Tgr_{NT}^{-1}$. Note that $\widehat \Sigma\to^PH'\Sigma H .$
By the assumption that  $\mathbb{\breve{C}}_{H,\Sigma, \phi_1}(\mathfrak{g})$ is continuous  in $(H,\Sigma, \phi_1, \mathfrak{g})$,
\begin{eqnarray*}
&& l_{NT}\mathds G_{H_T,\widehat\Sigma}(\phi,\widehat\phi)
= \mathbb{\breve{C}}_{H_T,\widehat\Sigma , \widehat\phi}(H_Tg)+o_P(1)
= \mathbb{\breve{C}}_{H, H' \Sigma H,  \phi_0}(Hg)+o_P(1)
\cr
&=&M_{\omega}%
 \mathbb{E}_{|u_t=0}(x_t^{\prime }d_0)^2p_{u_t}(0) \left(\left|   g_t^{\prime }%
 H g+ \zeta_{\omega} ^{-1}  \mathcal W_t^{*'} (H'\Sigma H)^{1/2}H^{-1}\phi_0\right | - \left |
 \zeta_{\omega}^{-1}  \mathcal W_t^{*'}(H'\Sigma H)^{1/2}H^{-1}\phi_0 \right | \right)\cr
 &=&M_{\omega}%
 \mathbb{E}_{|u_t=0}(x_t^{\prime }d_0)^2p_{u_t}(0) \left(\left|   g_t^{\prime }%
  H g+ \zeta_{\omega} ^{-1}   \mathcal Z_t \right | - \left |
 \zeta_{\omega}^{-1}  \mathcal Z_t \right | \right)\cr
 &=&A(\omega, g),
\end{eqnarray*}
 where we note that  $ \mathcal W_t^{*'}(H'\Sigma H)^{1/2}H^{-1}\phi_0\sim\mathcal N(0,\sigma_h^2)$, with
 $\sigma_h^2= \phi_0' \Sigma  \phi_0= \lim \var(h_t'\phi_0)=\sigma_{h,x_t,g_t}^2$ in the homoskedastic case.  So $\mathcal W_t^{*'}(H'\Sigma H)^{1/2}H^{-1}\phi_0=^d\mathcal Z_t$.

  When $\omega\in(0,\infty]$,  we now work with  (\ref{eqf.6}), where  we treat   terms
  $a_2,  a_5$ in the definition of    $\Xi(a,b)$   as we did for  (\ref{e0.3})  (\ref{e0.4}) . Here
  \begin{eqnarray}
a_2&=&-       \breve  g_t ^{\prime }\left( \phi_2-\phi_1\right)1\{   \breve    g_t ^{\prime }\left( \phi_2-\phi_1\right)\leq 0\} 1\left\{
h_{t,H,\Sigma}^{*'}\phi_1\leq 0\right\}    \cr
a_5&=&
    \breve   g_t'\left(\phi_2-\phi_1\right)1\left\{      \breve  g_t'\left(\phi_2-\phi_1\right) >0\right\}  1\{h_{t,H,\Sigma}^{*'}\phi_1>0\} \cr
    a_2'&=&-       \breve  g_t ^{\prime }\left( \phi_2-\phi_1\right)1\{   \breve    g_t ^{\prime }\left( \phi_2-\phi_1\right)\leq 0\} 1\left\{
h_{t,H,\Sigma}^{*'}\phi_1> 0\right\}    \cr
a_5'&=&
    \breve   g_t'\left(\phi_2-\phi_1\right)1\left\{      \breve  g_t'\left(\phi_2-\phi_1\right) >0\right\}  1\{h_{t,H,\Sigma}^{*'}\phi_1\leq0\} .
  \end{eqnarray}
   We note that
  $h_{t,H,\Sigma}^{*'}\phi_1$ is symmetric around zero, due to the Gaussianity of $h_t^*$.  Hence
  $$
  \mathbb P(h_{t,H,\Sigma}^{*'}\phi_1\leq0| x_t, \breve g_t)= \mathbb P(h_{t,H,\Sigma}^{*'}\phi_1>0| x_t, \breve g_t)  =1/2.$$
  So
  $\mathbb E(x_t'\delta_0)^2p_{u_{NT} | \bigstar}(0)a_d= \mathbb E(x_t'\delta_0)^2p_{u_{NT} | \bigstar}(0)a_d'$ for $d=2,5$,  and we reach an expansion similar to (\ref{e0.10add}): for $\phi_2= \phi_1+Hgr_{NT}^{-1}$,
  \begin{eqnarray*}
&& l_{NT}  \mathds G_{H,\Sigma}(\phi_2,\phi_1) =o_P(1)  \cr
&&
-2  l_{NT}  \mathbb E(x_t'\delta_0)^2p_{u_{Nt} | \bigstar}(0)  \left(   \breve g_t^{\prime
}\left( \phi_2-\phi_1\right) + \frac{h_{t,H,\Sigma}^{*'}\phi_1}{\sqrt{N}}\right)1\left\{   \breve g_t^{\prime
}\left( \phi_2-\phi_1\right) + \frac{h_{t,H,\Sigma}^{*'}\phi_1}{\sqrt{N}}<0\right\}1\left\{ h_{t,H,\Sigma}^{*'}\phi_1>0\right\}\cr
&&+2 l_{NT}  \mathbb E(x_t'\delta_0)^2  p_{u_{Nt} | \bigstar}(0) \left(    \breve g_t'\left(
\phi_2  -\phi _{1}\right) +\frac{ h_{t,H,\Sigma}^{*'}\phi_1}{\sqrt{N}}\right)1\left\{ \breve    g_t'\left(
\phi_2 -\phi _{1}\right) + \frac{h_{t,H,\Sigma}^{*'}\phi_1}{\sqrt{N}}> 0\right\} 1\left\{h_{t,H,\Sigma}^{*'}\phi_1
\leq 0\right\}\cr
&=&\mathbb{\breve{C}}_{N,H,\Sigma, \phi_1,2}( Hg) +o(1)
  \end{eqnarray*}
where we used a similar  change-variable as in  Step II.1  in Section \ref{sec:Bias}:
\begin{eqnarray*}
	&&\mathbb{\breve{C}}_{N,H,\Sigma, \phi_1,2}(\mathfrak{g})\cr
	&:=&  - \widetilde M_{NT}2
	p_{u_{Nt}}(0)\mathbb{E}[(x_t^{\prime }d_0)^2 F_{1}(\breve g_t, x_t,  \mathfrak{g }) |u_{Nt}=0]+
	\widetilde M_{NT}2 p_{u_{Nt}}(0) \mathbb{E}[(x_t^{\prime }d_0)^2 F_{2} (\breve g_t,
	x_t,  \mathfrak{g }) |u_{Nt}=0]
\cr
 && F_{1} (\breve g_t, x_t,  \mathfrak{g } ) :=\int
	\left( \breve g_t^{\prime }\mathfrak{g} + y\right)1\left\{ \breve g_t^{\prime } \mathfrak{g%
	} + y<0\right\}1\left\{ y>0\right\} p_{h_t^{*\prime }\phi_1 }(\zeta_{NT}y)d y \cr && F_{2} (\breve g_t, x_t,  \mathfrak{g }) := \int \left(
	\breve g_t^{\prime }\mathfrak{g} +y\right)1\left\{ \breve g_t^{\prime }\mathfrak{g} + y>
	0\right\} 1\left\{y \leq 0\right\} p_{h_t^{*\prime }\phi_1 }(\zeta_{NT}y)d y.
\end{eqnarray*}
Note $\zeta_{NT}\to0$,  $\widetilde M_{NT}\to1$,
$ |p_{h_t^{*\prime }\phi_1 }(\zeta_{NT}y)- p_{h_t^{*\prime }\phi_1 }(0)|\leq C\zeta_{NT}y$ (Gaussian densities with bounded variance),  so
\begin{eqnarray*}
&&\mathbb E_{|u_{Nt}=0}(x_t'd_0)^2\int| \left( \breve g_t^{\prime }\mathfrak{g} + y\right)1\left\{ \breve g_t^{\prime } \mathfrak{g%
	} + y<0\right\}1\left\{ y>0\right\} |p_{h_t^{*\prime }\phi_1 }(\zeta_{NT}y)-p_{h_t^{*\prime }\phi_1 }(0)|d y  \cr
&\leq&  C\zeta_{NT}\mathbb E_{|u_{Nt}=0}(x_t'd_0)^2	\int |\left( \breve g_t^{\prime }\mathfrak{g} + y\right)|1\left\{ \breve g_t^{\prime } \mathfrak{g%
	} + y<0\right\}1\left\{ y>0\right\}  yd y \cr
	&\leq&   C\zeta_{NT}\mathbb E_{|u_{Nt}=0}(x_t'd_0)^2(\breve g_t^{\prime }\mathfrak{g})^31\{\breve g_t^{\prime }\mathfrak{g}<0\}=o(1).
\end{eqnarray*}
 In addition, by the assumption that  $|p_{u_{Nt}, h_t^{*\prime }\phi_1|x_t, f_{2t}}(0,0)-p_{g_{t}'\phi_1, h_t^{*\prime }\phi_1|x_t, f_{2t}}(0,0)|=o(1)$,
\begin{eqnarray*}
 \mathbb{\breve{C}}_{N,H,\Sigma, \phi_1,2}(\mathfrak{g})
 &=&(\mathbb{E}(x_{t}^{\prime
}d_{0})^{2}(g_{t}^{\prime }\mathfrak{g})^{2}|u_{Nt}=0, h_t^{*\prime }\phi_1=0)p_{u_{Nt}, h_t^{*\prime }\phi_1}(0,0)+o(1)\cr
&=& \mathbb{\breve{C}}_{H,\Sigma, \phi_1,2}(\mathfrak{g})+o(1)\cr
 \mathbb{\breve{C}}_{H,\Sigma, \phi_1,2}(\mathfrak{g})&:=&(\mathbb{E}(x_{t}^{\prime
}d_{0})^{2}(g_{t}^{\prime }\mathfrak{g})^{2}|g_{t}'\phi_1=0, h_t^{*\prime }\phi_1=0)p_{g_{t}'\phi_1, h_t^{*\prime }\phi_1}(0,0)+o(1).
\end{eqnarray*}
Let   $\phi=  \widehat\phi+H_Tgr_{NT}^{-1}$.
By the assumption that  $  \mathbb{\breve{C}}_{H,\Sigma, \phi_1,2}(\mathfrak{g})$ is continuous  in $(H,\Sigma, \phi_1, \mathfrak{g})$,
 \begin{eqnarray*}
&& l_{NT}\mathds G_{H_T,\widehat\Sigma}(\phi,\widehat\phi)
= \mathbb{\breve{C}}_{H_T,\widehat \Sigma, \widehat \phi,2}(H_Tg)+o_P(1)=  \mathbb{\breve{C}}_{H, H' \Sigma H,   \phi_0,2}(Hg)+o_P(1)\cr
&=&(\mathbb{E}(x_{t}^{\prime
}d_{0})^{2}(g_{t}^{\prime } Hg)^{2}|g_{t}'\phi_0=0,  \mathcal W_t^{*'}(H'\Sigma H)^{1/2}H^{-1}\phi_0=0)p_{g_{t}'\phi_0,  \mathcal W_t^{*'}(H'\Sigma H)^{1/2}H^{-1}\phi_0}(0,0)+o_P(1)\cr
&=&(\mathbb{E}(x_{t}^{\prime
}d_{0})^{2}(g_{t}^{\prime } Hg)^{2}| u_t=0,  \mathcal Z_t=0)p_{ u_t, \mathcal Z_t}(0,0)+o_P(1)\cr
&=&A(0,g)+o_P(1).
\end{eqnarray*}
  Together
  $$
  l_{NT}\mathds G_{H_T,\widehat\Sigma}(\widehat\phi+H_Tgr_{NT}^{-1},\widehat\phi)= A(\omega, g)+o_P(1).
  $$

  \textbf{Step 3.  Empirical process part.}

 Lemma \ref{bootstrapempirical} shows that in the  estimated factor case,
	 	  $   l_{NT}\mathbb{\widehat{C}}^*_{1}\left(  \widehat\gamma
 +r_{NT}^{-1}g\right) \Rightarrow^{*} 2W(g)$, where $\widehat f_t^*=\widehat f_t+ N^{-1/2} \mathcal Z_t^*$, and
$ \mathbb{\widehat{C}}^*_{1}\left(   \gamma \right) =
			\frac{2}{T}\sum_{t=1}^{T}\eta_t\widehat\varepsilon _{t}x_{t}^{\prime }\widehat\delta \left( 1\{ \widehat f_t^{*'}\gamma>0\} -1\{  \widehat f_t^{*'}\widehat\gamma>0\} \right). $

   \textbf{Step 4.   Finish the proof.}

 Together, we have shown that
 \begin{eqnarray}\label{eqf.9}
l_{NT}(\mathbb S_T^*(\widehat\alpha+ aT^{-1/2}, \widehat\gamma+r_{NT}^{-1}g)-\mathbb S^*_T( \widehat\alpha+ aT^{-1/2},\widehat\gamma))
&=& \mathbb K_{4T}^*(g) +o_{P^*}(1),
\end{eqnarray}
 and $\mathbb K_{4T}^*(.)\Rightarrow ^* \mathbb Q(\omega, .)$, where
$$
\mathbb K_{4T}^*(g):=l_{NT}[\mathds G_{H_T,\widehat\Sigma}( H_T(\widehat\gamma+r_{NT}^{-1}g), H_T\widehat\gamma) - \mathbb{\widehat{C}}^*_{1}\left( \delta,  \widehat\gamma+r_{NT}^{-1}g \right)].
$$ In addition, let $\mathbb K_1^*(a):=l_{NT} [\widetilde R_1^*( \widehat \alpha +a\cdot T^{-1/2},\widehat\gamma)-\widetilde{\mathbb C}^*_2( \widehat \alpha +a\cdot T^{-1/2})] $ and
 \begin{eqnarray*}
\mathbb{K}^*_{T}\left( a,g\right) &:=&l_{NT}\left( \mathbb{S}^*_{T}\left(\widehat \alpha +a\cdot T^{-1/2},\widehat \gamma +g\cdot r_{NT}^{-1}\right) -\mathbb{S}^*
	 		_{T}\left( \widehat \alpha,\widehat\gamma  \right) \right) \cr
&=& l_{NT}\sum_{d=1}^3\widetilde R^*_d(\alpha,\gamma)
- l_{NT}\sum_{d=1}^2\widetilde{\mathbb C}^*_d(\alpha,\gamma) +l_{NT}\sum_{d=3}^4\widetilde{\mathbb C}^*_d(\alpha,\gamma) \cr
&=&  \mathbb K_{4T}^*(g)+\mathbb K_1^*(a)+o_P(1).
	 	\end{eqnarray*}%
	 By Lemma \ref{bootstrapd},
	 	 $	|\widehat\alpha^*-\widehat\alpha|_2=O_{P^*}(T^{-1/2})$, and $	|\widehat\gamma_h^*-\widehat\gamma|_2=O_P(r_{NT}^{-1}).$
 Define
	 	 \begin{eqnarray*}
	 	 	\widehat a^*&=&\sqrt{T}(\widehat\alpha_h^*-\widehat\alpha) ,\quad  \widehat g_h^*=r_{NT}(\widehat\gamma_h^*-\widehat\gamma)
	 	 	\cr
	 	 	\widetilde g_h^*&:=&\arg\min_{h(\widehat\gamma+gr_{NT}^{-1})=h(\widehat\gamma)}\mathbb K_{4T}^*( g)  .
	 	 \end{eqnarray*}
Then 	 because $h(\widehat\gamma+ \widetilde g_h^*r_{NT}^{-1})= h(\widehat\gamma)$,
$$
 \mathbb K_{4T}^*(\widehat g_h^*)+\mathbb K_1^*(\widehat a^* )+o_P(1)=\mathbb{K}^*_{T}\left( \widehat a^*,\widehat g_h^*\right)
\leq \mathbb{K}^*_{T}\left( \widehat a^*,\widetilde g_h^*\right) +o_{P^*}(1)
= \mathbb K_{4T}^*(\widetilde g_h^*)+\mathbb K_1^*(\widehat a^*)+o_{P^*}(1).
$$
where the  inequality is due to Lemma \ref{bootstrapd} that  $  \mathbb S_T^*(\widehat\alpha_h^*,\widehat\gamma_h^*)
	 	 \leq \min_{\alpha, h(\gamma)=h(\widehat\gamma)}\mathbb S_T^*(\alpha,\gamma)+ o_{P^*}(l_{NT}^{-1})$. This implies
	$ \mathbb K_{4T}^*(\widehat g_h^*)= \mathbb K_{4T}^*(\widetilde g_h^*)+o_{P^*}(1).$
Therefore, 	 by (\ref{eqf.9}),
	\begin{eqnarray*}
				A^*_1&=&	l_{NT} [\mathbb S_T^*(\widehat\alpha_h^*, \widehat\gamma_h^*)-\mathbb S_T^*(\widehat\alpha_h^*,\widehat\gamma)]
= \mathbb K_{4T}^*(\widehat g_h^*) +o_{P^*}(1),
\cr
&=&\min_{h(\widehat\gamma+gr_{NT}^{-1})=h(\widehat\gamma)}\mathbb K_{4T}^*( g) 	 +o_{P^*}(1) =\min_{g_h'\nabla h=0 }\mathbb K_{4T}^*( g) 	 +o_{P^*}(1)\cr
&\to^{d^*}&	\min_{g_h'\nabla h=0 }\mathbb Q(\omega,  g_h) .
\end{eqnarray*}

	 As for $A_2^*$,  it follows from
$\widehat\alpha_h^*-\widehat\alpha^*=o_{P^*}(T^{-1/2})$ that  $A_2^*=o_{P^*}(1).$

Next, by Lemma \ref{bootstrapd},
	 	 $	|\widehat\gamma^*-\widehat\gamma|_2=O_P(r_{NT}^{-1})$
and   $  \mathbb S_T^*(\widehat\alpha^*,\widehat\gamma^*)
	 	 \leq \min_{\alpha, \gamma}\mathbb S_T^*(\alpha,\gamma)+ o_{P^*}(l_{NT}^{-1})$, we have $A^*_3\to^{d^*} 	\min_{g}\mathbb Q(\omega,  g) .$
		 Hence
 $$
		l_{NT} \mathbb{ {S}}_{T}^{\ast }(\widehat\alpha^*,\widehat\gamma^*)LR^*  \to^{d^*}
		\min_{g_h'\nabla h=0 }\mathbb Q(\omega,  g_h)-
				\min_{g}\mathbb Q(\omega,  g). $$ This finishes the proof since $ \mathbb{ {S}}_{T}^{\ast }(\widehat\alpha^*,\widehat\gamma^*)\to^{P^*} \sigma^2$.

   \textbf{Step 5. verify   $l_{NT}|\widehat{\mathbb C}_1^*(  \gamma)-  \mathbb{\widetilde{C}}^*_{1}\left( \alpha ,\gamma \right) |=o_{P^*}(1)$. }

Note that   we can bound
$
|\widehat\epsilon_t| \leq |\epsilon_t| +C |x_t|_2
$
with high probability. Thus for $w_t:=2|\eta_t \varepsilon_t| |x_{t}|_2+2|\eta_t | |x_t|_2^2 $,
uniformly for $|\gamma-\gamma_0|_2<C r_{NT}^{-1}$ and $|\alpha-\alpha_0|_2< CT^{-1/2}$,
 \begin{eqnarray*}
b&:=& \frac{1}{T}\sum_{t=1}^{T}\eta_t\widehat\varepsilon _{t}x_{t}^{\prime }\delta [1\{  f_t^{*'}\gamma>0\}
-\widehat f_t^{*'}\gamma>0\}
\leq \frac{1}{T}\sum_{t=1}^{T}|\eta_t\widehat\varepsilon _{t}x_{t}^{\prime }\delta| 1\{   0<|\widehat f_t^{*'}\gamma|< C |\widetilde f_t-\widehat f_t|_2\} \cr
&\leq &\frac{1}{T}\sum_{t=1}^{T}|\eta_t\widehat\varepsilon _{t}x_{t}^{\prime }\delta| 1\{   0<\inf_{\gamma}|\widehat f_t^{*'}\gamma|< C\Delta_f \} +   \mathbb P( |\widetilde f_t-\widehat f_t|_2> \Delta_f)^{1/2}\cr
&\leq &o(l_{NT}^{-1})+O_P(T^{-\varphi})   \mathbb P(   0<\inf_{\gamma}|\widehat f_t^{*'}\gamma|< C\Delta_f  ) =o_P(l_{NT}^{-1})
 \end{eqnarray*}
 given that  $\inf_{\gamma}|\widehat f_t^{*'}\gamma|$ has a density bounded and continuous  at zero.
 Next,  write
\begin{eqnarray*}
 a_T:= \frac{2}{T}\sum_{t=1}^{T}\eta_t\widehat\varepsilon _{t}x_{t}^{\prime }  \left( 1\{ \widehat f_t^{*'}\gamma>0\} -1\{  \widehat f_t^{*'}\widehat\gamma>0\} \right).
  \end{eqnarray*}
 Then $\mathbb E^*a_T=0$, where $\mathbb E^*$ is the conditional expectation with respect to the distribution of $(\eta_t, \mathcal W_t^*)$, and note that $\eta_t, \mathcal W_t^*$ are independent. Now we apply Lemma \ref{Lem:rate_gm}  to the bootstrap distribution, to reach
 $
 a_T=O_{P^*}( T^{-\varphi}  ) \left[|\gamma-\widehat \gamma|_2
 +\frac{1}{T^{1-2\varphi}} \right].
 $

 Thus
 $
 |\widehat{\mathbb C}_1^*(  \gamma)-  \mathbb{\widetilde{C}}^*_{1}\left( \alpha ,\gamma \right) |\leq b+
| a_T|_2|\delta-\widehat\delta|_2 =o_{P^*}(l_{NT}^{-1}).
 $
		\end{proof}

Recall 	
	$$
r_{NT}^{-1}=\max\left( \frac{1}{\left( NT^{1-2\varphi }\right) ^{1/3}} , \frac{1}{T^{1-2\varphi}}\right).
$$	
 
	 \begin{lem} \label{bootstrapd} In the estimated  factor case, the k-step bootstrap estimators $(\widehat\alpha^*, \widehat\gamma^*, \widehat\gamma_h^*)$ satisfy:
	 	\begin{eqnarray*} 
	 		\mathbb S_T^*(\widehat \alpha^*,\widehat \gamma^*)& \leq&   \min_{\alpha, \gamma}\mathbb S_T^*(\alpha,\gamma) + o_{P^*}(l_{NT}^{-1}).\cr
	 		\mathbb S_T^*(\widehat\alpha_h^*,\widehat\gamma_h^*)
	 		&\leq& \min_{\alpha, h(\gamma)=h(\widehat\gamma)}\mathbb S_T^*(\alpha,\gamma)+ o_{P^*}(l_{NT}^{-1}), \quad h(\widehat\gamma_h^*)=h(\widehat\gamma)
	 		\cr
	 	 	|\widehat\alpha^*-\widehat\alpha|_2&=&O_{P^*}(T^{-1/2}),\quad |\widehat\alpha_h^*-\widehat\alpha|_2=O_{P^*}(T^{-1/2}),\quad |\widehat\alpha^*-\widehat\alpha^*_h|_2=o_{P^*}(T^{-1/2})\cr
	 		|\widehat\gamma_h^*-\widehat\gamma|_2&=&O_P( r_{NT}^{-1}) ,\quad 
	 		|\widehat\gamma^*-\widehat\gamma|_2=O_P(r_{NT}^{-1}). 
	 	\end{eqnarray*}
	 	
	 \end{lem}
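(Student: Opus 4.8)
The plan is to reproduce, essentially line for line, the five‑step architecture of the proof of Lemma~\ref{bootstrap.high_level} (the known‑factor case), with two systematic substitutions: the rate $T^{-(1-2\varphi)}$ is replaced everywhere by $r_{NT}^{-1}$, and the remainder order $o_{P^*}(T^{-1})$ by $o_{P^*}(l_{NT}^{-1})$; moreover the simple profiled expansion of $\mathbb{S}_T^*$ used there is replaced by the estimated‑factor decomposition introduced in the estimated‑factor proof of Theorem~\ref{t6.1}, namely the terms $\widetilde R_d^*(\alpha,\gamma)$ ($d=1,\dots,6$), $\widehat R_2^*(\gamma,\widehat\gamma)$, $\widehat{\mathbb C}_1^*(\gamma)$, $\widehat{\mathbb C}_3^*(\gamma,\widehat\gamma)$, together with $l_{NT}\mathds G_{H_T,\widehat\Sigma}(H_T\gamma,H_T\widehat\gamma)$ as the leading deterministic piece. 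The inputs are: the expansion $f_t^*=\widehat f_t^*+H_T^{*'-1}r_t^*+m_t$ with $\widehat f_t^*=\widehat f_t+N^{-1/2}h_t^*$ from \eqref{eqg.5}/\eqref{eqg.5.perturb}, the probability bound for $m_t=\widetilde f_t-\widehat f_t$ from Proposition~\ref{prop:f} and the analogous bootstrap bound for $r_t^*$, the consistency $|\widehat\var(e_t)-\var(e_t)|_2=o_P(1)$ from \citep{POET}, and $H_T^*\to^{P^*}H$, $\widehat\Sigma\to^P H'\Sigma H$. A helpful simplification is that $h_t^*$ is (conditionally) exactly Gaussian, so the Berry--Esseen‑type conditions such as Assumption~\ref{as5}\ref{as5:itm1} and the various conditional density conditions needed in Lemmas~\ref{l:relf}--\ref{l:generic} hold trivially in the bootstrap world; consequently the bootstrap analogues of Propositions~\ref{p:pre} and \ref{rate:est} and of Lemmas~\ref{l:relf}, \ref{l:rhat}, \ref{l.1}, \ref{l.2}, \ref{l:generic}, \ref{A-rates:lemma-used} hold by transcribing their proofs into the bootstrap sampling space.

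I would then carry out the steps in order, parallel to Lemma~\ref{bootstrap.high_level}. \textbf{Step 0:} the fully joint bootstrap minimizers $(\alpha_g^*,\gamma_g^*)=\argmin\mathbb S_T^*$ and the constrained minimizers $(\alpha_{g,h}^*,\gamma_{g,h}^*)=\argmin_{h(\gamma)=h(\widehat\gamma)}\mathbb S_T^*$ obey $|\alpha_g^*-\widehat\alpha|_2=O_{P^*}(T^{-1/2})$, $|\gamma_g^*-\widehat\gamma|_2=O_{P^*}(r_{NT}^{-1})$ and the same for the constrained pair; this is exactly the bootstrap version of Propositions~\ref{p:pre}--\ref{rate:est}, with $\widehat\gamma$ (resp.\ the constraint $h(\gamma)=h(\widehat\gamma)$) playing the role of the ``truth'' and using $\mathbb S_T^*(\alpha_{g,h}^*,\gamma_{g,h}^*)\le\mathbb S_T^*(\widehat\alpha,\widehat\gamma)$. \textbf{Step 1:} for any $\gamma$ with $|\gamma-\widehat\gamma|_2=O_{P^*}(r_{NT}^{-1})$, the closed form $\alpha^*(\gamma)$ admits the linear representation $\alpha^*(\gamma)-\widehat\alpha=[\frac1T\sum_t Z_t^*(\widehat\gamma)Z_t^*(\widehat\gamma)']^{-1}\frac1T\sum_t Z_t^*(\widehat\gamma)\eta_t\widehat\varepsilon_t+o_{P^*}(T^{-1/2})$ uniformly over such $\gamma$, by a maximal inequality (Lemma~\ref{Lem:modul1}) and the bootstrap bounds on $m_t,r_t^*$; hence $|\alpha^*(\gamma)-\widehat\alpha|_2=O_{P^*}(T^{-1/2})$. \textbf{Step 2:} since $\alpha_g^*=\alpha^*(\gamma_g^*)$, $\alpha_{g,h}^*=\alpha^*(\gamma_{g,h}^*)$ and by Step 0 $\gamma_g^*,\gamma_{g,h}^*$ lie in the $O_{P^*}(r_{NT}^{-1})$‑neighborhood, the same representation gives $\alpha^*(\gamma)-\alpha_g^*=o_{P^*}(T^{-1/2})$ and $\alpha^*(\gamma)-\alpha_{g,h}^*=o_{P^*}(T^{-1/2})$.

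\textbf{Step 3:} if $|\alpha-\alpha_g^*|_2=o_{P^*}(T^{-1/2})$ then $\mathbb S_T^*(\alpha,\gamma^*(\alpha))\le\mathbb S_T^*(\alpha,\gamma_g^*)=\mathbb S_T^*(\alpha_g^*,\gamma_g^*)+[\mathbb S_T^*(\alpha,\gamma_g^*)-\mathbb S_T^*(\alpha_g^*,\gamma_g^*)]$, and the bracketed difference is $o_{P^*}(l_{NT}^{-1})$ by expanding it through the chain of uniform bounds on $\widetilde R_d^*$, $\widehat{\mathbb C}_j^*$ used in Step 1 of the estimated‑factor proof of Theorem~\ref{t6.1} (bounding the $\widehat{\mathbb C}_1^*$ pieces via the bootstrap analogue of Lemma~\ref{A-rates:lemma-used}); the constrained version is identical. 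This gives $\mathbb S_T^*(\widehat\alpha^*,\widehat\gamma^*)\le\min_{\alpha,\gamma}\mathbb S_T^*+o_{P^*}(l_{NT}^{-1})$ and its constrained analogue (with $h(\widehat\gamma_h^*)=h(\widehat\gamma)$ automatic). \textbf{Step 4:} given $|\alpha-\widehat\alpha|_2=O_{P^*}(T^{-1/2})$, take $\gamma=\gamma^*(\alpha)$ (resp.\ $\gamma_h^*(\alpha)$); from $\mathbb S_T^*(\alpha,\gamma)-\mathbb S_T^*(\alpha,\widehat\gamma)\le0$ and the decomposition, the leading nonnegative deterministic part is $l_{NT}\mathds G_{H_T,\widehat\Sigma}(H_T\gamma,H_T\widehat\gamma)$, which by the bootstrap analogues of Lemmas~\ref{l.2}--\ref{l:generic} (established as in Step 2 of the estimated‑factor proof of Theorem~\ref{t6.1}) is $\ge cT^{-2\varphi}|\gamma-\widehat\gamma|_2-Cl_{NT}^{-1}$ when $T^{1-2\varphi}=O(\sqrt N)$ and $\ge cT^{-2\varphi}\sqrt N|\gamma-\widehat\gamma|_2^2-Cl_{NT}^{-1}$ when $\sqrt N=O(T^{1-2\varphi})$, while the stochastic parts ($\widehat{\mathbb C}_1^*$ and the centered $\widehat R_2^*+\widehat{\mathbb C}_3^*$) are $O_{P^*}(l_{NT}^{-1})$ plus lower‑order‑in‑$|\gamma-\widehat\gamma|_2$ terms by the bootstrap analogue of Lemma~\ref{l.1}; solving the resulting inequality yields $|\gamma^*(\alpha)-\widehat\gamma|_2=O_{P^*}(r_{NT}^{-1})$, and similarly for $\gamma_h^*(\alpha)$. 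Finally, an induction on $k$ — starting from $\widehat\gamma^{*,0}=\widehat\gamma_h^{*,0}=\widehat\gamma$, applying Steps 1--2 to get $|\widehat\alpha^{*,l}-\widehat\alpha|_2=O_{P^*}(T^{-1/2})$, then Step 3 for the objective inequalities, then Step 4 for $|\widehat\gamma^{*,l}-\widehat\gamma|_2=O_{P^*}(r_{NT}^{-1})$, which re‑enables Step 1 at level $l+1$ — delivers all the claims for $\widehat\alpha^*=\widehat\alpha^{*,k}$, $\widehat\gamma^*=\widehat\gamma^{*,k}$, $\widehat\gamma_h^*=\widehat\gamma_h^{*,k}$, for every fixed $k\ge1$.

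The main obstacle will be Step 4: obtaining the sharp (kink, resp.\ quadratic) lower bound for $l_{NT}\mathds G_{H_T,\widehat\Sigma}(\cdot,\widehat\gamma)$ \emph{uniformly} over a shrinking $r_{NT}^{-1}$‑neighborhood of $\widehat\gamma$, simultaneously in both phases $\sqrt N=o(T^{1-2\varphi})$ and $\sqrt N=O(T^{1-2\varphi})$. This is the bootstrap counterpart of Lemmas~\ref{l.2} and \ref{l:generic}, and it requires carrying the perturbation $N^{-1/2}h_t^*$ through the non‑smooth expansion of the centered criterion while controlling the extra remainders $r_t^*$ and $m_t$, and using the (conditional) Gaussianity of $h_t^*$ to supply the exact symmetry around zero that replaces the approximate symmetry (Assumption~\ref{as5}) in the original argument. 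Once those bootstrap analogues are in place — their proofs being transcriptions of the originals with $\widehat\var(e_t)$ replacing $\var(e_t)$ and $\widehat\Sigma$, $H_T^*$ converging to their limits — the remainder of the argument is bookkeeping.
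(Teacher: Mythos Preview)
Your proposal is correct and follows essentially the same architecture as the paper's proof: the paper organizes the argument into Steps 0--3 (merging your Steps 1--2 into one), but the content---the bootstrap versions of Propositions~\ref{p:pre}--\ref{rate:est} for the global minimizers, the oracle expansion \eqref{e.4a} for $\alpha^*(\gamma)$, the objective comparison via a telescoping through $\widehat\gamma$, and the two-phase rate argument for $\gamma^*(\alpha)$ using the bootstrap analogues of Lemmas~\ref{l.1}, \ref{l.2}, \ref{l:generic}---is the same, and you have correctly singled out the Step-4 lower bound for $\mathds G_{H_T,\widehat\Sigma}$ as the main technical point.
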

	
	\begin{proof}
	 	Define
\begin{eqnarray*} 
	(\alpha^*_g, \gamma_g^*)&=&\arg\min \mathbb S_T^*(\alpha,\gamma),\quad 
	(\alpha^*_{g,h}, \gamma_{g,h}^*)=\arg\min_{\alpha, h(\gamma)=h(\widehat\gamma)} \mathbb S_T^*(\alpha,\gamma),\cr
	\alpha^*(\gamma)&=& \arg\min_\alpha \mathbb S_T^*(\alpha,\gamma),  \cr
	\gamma^*(\alpha)&=& \arg\min_\gamma \mathbb S_T^*(\alpha,\gamma),\quad 	\gamma_h^*(\alpha)= \arg\min_{\gamma: h(\gamma)=h(\widehat\gamma)} \mathbb S_T^*(\alpha,\gamma).
\end{eqnarray*} 

Also recall the following definitions
\begin{eqnarray*} 
	 \widetilde Z_t(\gamma) &=&(x_t', x_t' 1\{  \widetilde f_t'\gamma>0 \})' \cr
	  \widehat Z_t(\gamma) &=&(x_t', x_t' 1\{  \widehat f_t'\gamma>0 \})' \cr
	 Z_t^*(\gamma) &=&(x_t', x_t' 1\{  f_t^{*'}\gamma>0 \})' \cr
	  \widehat Z_t^*(\gamma) &=&(x_t', x_t' 1\{    \widehat f_t^{*'}\gamma>0 \})',
\end{eqnarray*} 
where 
\begin{eqnarray*} 
\widetilde f_t &=& \text{estimated factors in the original data world}, \cr
  f_t^*&=& \text{estimated factors in the bootstrap  data world}, \cr
   \widehat   f_t &=& H_T f_t+H_T \frac{1}{\sqrt{N}}h_t, \cr
 \widehat   f_t^*&=&\widehat   f_t + \frac{1}{\sqrt{N}}h_t^*.
\end{eqnarray*}

Our proof is divided into the following steps.

\textbf{step 0:  } $ |\gamma_{g,h}^*-\widehat\gamma|_2=O_{P^*}(r_{NT}^{-1})$, $|\gamma_g^*-\widehat\gamma|_2=O_{P^*}( r_{NT}^{-1})$ and $|\alpha^*_g-\widehat\alpha|_2=O_{P^*}(T^{-1/2}).$ 

Step 0 is regarding the statistical convergence of the global minimiums in the bootstrap sample. So the proof is the same as that for $|\widehat\gamma-\gamma_0|_2$ and $|\widehat\alpha-\alpha_0|_2.$

\textbf{step 1: } if $|\gamma-\widehat\gamma|_2=O_{P^*}(r_{NT}^{-1})$, then $	|\alpha^*(\gamma)-\widehat\alpha|_2=O_{P^*}(T^{-1/2}) $.   In addition,   $	|\alpha^*(\gamma)-\alpha^*_g|_2=o_{P^*}(T^{-1/2}) $, and $	|\alpha^*(\gamma)-\alpha_{g,h}^*|_2=o_{P^*}(T^{-1/2}) $.

In the proof of step 1 we shall show that $ {\alpha}^*\left( \gamma
 \right) $ is an oracle estimator in the  sense:
 \begin{eqnarray}\label{e.4a}
 {\alpha}^*\left( \gamma \right) -\widehat \alpha &=& 
 [\frac{1}{T}%
\sum_{t=1}^{T}   \widehat  Z^*_{t}\left( \widehat  \gamma  \right)        \widehat  Z^*_{t}\left( \widehat \gamma  \right) ^{\prime}]^{-1}
 \frac{1}{T}\sum_{t=1}^{T}    \widehat    Z^*_{t}\left( \widehat \gamma  \right)
\eta_t\widehat\varepsilon _{t}+   o_{P^*}(T^{-1/2}). 
\end{eqnarray}

Write $\mathcal A^*(\gamma)=\frac{1}{T}%
\sum_{t=1}^{T}  Z^*_{t}\left( \gamma \right)  Z^*_{t}\left( \gamma \right) ^{\prime
}=\frac{1}{T}%
\sum_{t=1}^{T} \widehat  Z^*_{t}\left( \gamma \right)  \widehat   Z^*_{t}\left( \gamma \right) ^{\prime
}+o_{P^*}(T^{-1/2})$.  Note that $ y_t^* = \widetilde Z_t(\widehat\gamma)'\widehat\alpha +\eta_t\widehat\varepsilon_t$, 
\begin{eqnarray} 
 {\alpha}^*  \left( \gamma \right) -\widehat \alpha  &=&
 \mathcal A^*(\gamma) ^{-1} \frac{1}{T}\sum_{t=1}^{T}  Z^* _{t}(\gamma)\ y_t^*-\widehat\alpha \cr
 &=& \mathcal A^*(\gamma)^{-1}(a_1+a_2),\quad \text{ where }\cr
a_1&=&  \frac{1}{T}\sum_{t=1}^{T}  Z^*_{t}\left( \gamma \right)\eta_t\widehat
\varepsilon _{t}\cr
a_2&=& \frac{1}{T}\sum_{t=1}^{T}  Z^*_{t}\left( \gamma \right)
x_{t}^{\prime }\widehat \delta  \left( 1\left(\widetilde f_t'\widehat \gamma>0 \right) -1 \left(f^{*'}_t\gamma>0\right)\right)  
 \end{eqnarray}
By the same argument in the  proof of lemma \ref{l7.7}, 
$$
a_1=\frac{1}{T}\sum_{t=1}^{T}\widehat Z_{t}^*\left( \gamma  \right)\eta_t\widehat
\varepsilon _{t} +o_{P^*}(T^{-1/2})
=\frac{1}{T}\sum_{t=1}^{T}\widehat Z_{t}^*\left( \widehat \gamma  \right)\eta_t\widehat
\varepsilon _{t}+o_{P^*}(T^{-1/2}).
$$

On the other hand,  by lemma \ref{Lem:rate_gm},  uniformly in $\gamma$, since $T=O(N)$,
\begin{eqnarray}
a_2  &=& \frac{1}{T}\sum_{t=1}^{T} \widehat  Z _{t}\left( \gamma \right)
x_{t}^{\prime }\widehat \delta  \left( 1\left(\widehat f_t'\widehat \gamma>0 \right) -1 \left(\widehat f'_t\gamma>0\right)\right)+o_{P^*}(T^{-1/2}) +o_{P^*}(T^{-\varphi} N^{-1/2})  \cr
&=&
\mathbb E\widehat Z_{t}\left( \gamma \right)
x_{t}^{\prime }\left( 1\left(\widehat f_t'\widehat \gamma>0 \right) -1\left(\widehat f_t'\gamma >0 \right)\right)\widehat \delta 
+  \eta T^{-2\varphi}|\gamma-\widehat \gamma |_2+o_{P^*}(T^{-1/2})\cr
&\leq& O(T^{-\varphi}  )  |\gamma-\widehat \gamma|_2+o_{P^*}(T^{-1/2}) =o_{P^*}(T^{-1/2}). 
 \end{eqnarray}
 This gives rise to (\ref{e.4a}).
 
 Now $\alpha_{g,h}^* =\alpha^*(\gamma_g^*)$ $\alpha_{g}^*=\alpha^*(\gamma_{g,h}^*)$
 while  $ |\gamma_{g,h}^*-\widehat\gamma|_2=O_{P^*}(r_{NT}^{-1})$, $|\gamma_g^*-\widehat\gamma|_2=O_{P^*}( r_{NT}^{-1})$ by step 0. Hence both $\alpha_{g,h}^* $ and $\alpha_{g}^* $ satisfy the expansion (\ref{e.4a}), whose leading term does not depend on the general choice of $\gamma$.

\textbf{step 2:  } if  $	|\alpha -\alpha^*_g|_2=o_{P^*}(T^{-1/2}) $, and $	|\alpha -\alpha_{g,h}^*|_2=o_{P^*}(T^{-1/2}) $, then\\ $\mathbb S_T^*(  \alpha,  \gamma^*(\alpha) )\leq    \min_{\alpha, \gamma}\mathbb S_T^*(\alpha,\gamma) + o_{P^*}(l_{NT}^{-1})$, and $\mathbb S_T^*(  \alpha,  \gamma_h^*(\alpha) )\leq     \min_{\alpha, h(\gamma)=h(\widehat\gamma)}\mathbb S_T^*(\alpha,\gamma)+ o_{P^*}(l_{NT}^{-1})$.

Note that 
$
\mathbb S_T^*(  \alpha,  \gamma^*(\alpha) )\leq    \min_{\alpha, \gamma}\mathbb S_T^*(\alpha,\gamma) +\mathbb S_T^*(  \alpha,  \gamma^*_g ) -\mathbb S_T^*(  \alpha^*_g,  \gamma^*_g ).
$ So we need to bound $\mathbb S_T^*(  \alpha,  \gamma^*_g ) -\mathbb S_T^*(  \alpha^*_g,  \gamma^*_g ).$ From (\ref{ef.5}),
	\begin{eqnarray*} 
&&	\mathbb S_T^*(  \alpha,  \gamma^*_g ) -\mathbb S_T^*(  \alpha^*_g,  \gamma^*_g )\cr
&=&
(\mathbb S_T^*(\alpha,\gamma_g^*)-\mathbb S^*_T(\alpha,\widehat\gamma))
-(\mathbb S_T^*(\alpha_g^*,\gamma_g^*)-\mathbb S^*_T(\alpha_g^*,\widehat\gamma))
+(\mathbb S_T^*(\alpha,\widehat\gamma)-\mathbb S^*_T(\alpha_g^*,\widehat\gamma))
 \cr
&=& \mathbb S_T^*(\alpha,\widehat\gamma)-\mathbb S^*_T(\alpha_g^*,\widehat\gamma)
 +o_{P^*}(l_{NT}^{-1}).
\end{eqnarray*}

Now given that $	|\alpha -\alpha^*_g|_2=o_{P^*}(T^{-1/2}) $,	\begin{eqnarray*} 
&&	\mathbb S_T^*(\alpha,\widehat\gamma)-\mathbb S^*_T(\alpha_g^*,\widehat\gamma)=\frac{1}{T}\sum_{t=1}^{T}\left({Z}^*_{t}\left( \widehat\gamma \right) ^{\prime }\left(\alpha - \alpha_g^* \right) \right) ^{2}+(\widehat\alpha-\alpha_g^*)'\frac{2}{T}\sum_{t=1}^{T}\widetilde Z_{t}(\widehat\gamma)Z_t^*(\widehat\gamma)(\alpha_g^*-\alpha)\cr
&&+\frac{2}{T}\sum_{t=1}^{T}\eta_t\widehat\varepsilon _{t} {Z}^*_{t}\left( \widehat \gamma \right) ^{\prime}\left( \alpha_g^* -  \alpha \right)+\frac{2}{T}\sum_{t=1}^{T}x_{t}^{\prime }\delta_g^*  \left( 1\{f_t^{*'}\widehat\gamma>0\} -1\{\widetilde f_t'\widehat\gamma>0\}\right) {Z}^*_{t}\left(\widehat \gamma  \right) ^{\prime}\left( \alpha -\alpha_g^* \right)\cr
&=&o_{P^*}(l_{NT}^{-1}),
\end{eqnarray*}	
The same result applies when $\alpha_g^*$ is replaced with $\alpha_{g,h}^*.$

\textbf{step 3:} if $\alpha=\widehat \alpha  +O_{P^*}\left( T^{-1/2 }  \right) ,$ then    $	 |\gamma^*(\alpha)- \widehat\gamma|_2=O_{P^*}( r_{NT}^{-1})$ and $	 |\gamma_h^*(\alpha)- \widehat\gamma|_2=O_{P^*}( r_{NT}^{-1})$.

 Note that for any $\gamma, \alpha$, 
   \begin{eqnarray}
 {\mathbb{S}}^*_{T}\left( {\alpha},\gamma \right) -{\mathbb{S}}^*
_{T}\left( {\alpha}, \widehat \gamma \right)  
&=&\widetilde{  R}^*_{1}(\alpha,\gamma)-\widetilde{  R}^*_{1}(\alpha,\widehat \gamma )
+\widetilde{  R}^*_{2}( \gamma)+\widetilde{  R}_{3}^*(\alpha,\gamma)
- \widetilde{\mathbb C}^*_1(\alpha,\gamma) \cr
&&+\widetilde{\mathbb C}^*_3(\alpha,\gamma). \label{e.3}
\end{eqnarray}%

We divide the proof of step 3 into the following sub-steps.

\subsection*{ step 3-i: preliminary rate of convergence }

We now study each term on the right of (\ref{e.3}) using similar arguments of lemma \ref{l:relf} and \ref{l:rhat}.
Uniformly in $\gamma$,

  \begin{eqnarray*}
|\widetilde{  R}^*_{1}(\alpha,\gamma)-   \widetilde{  R}^*_{1}(\alpha,\widehat \gamma ) | 
&=& |\alpha -\widehat \alpha  |_{2}^2 O_{P^*}(\Delta _{f}+T^{-6}  +T^{-1+\varphi}) 
	+ C |\gamma-\widehat\gamma|_2 |\alpha-\widehat\alpha |_2^2 + T^{-2\varphi }O_{P^*}(\Delta _{f}+T^{-6}). 
	\cr
	\widetilde{  R}_{3}(\alpha,\gamma) &\leq &   [|\alpha -\widehat \alpha
	 |_{2}^2+T^{-2\varphi }]O_{P^*}(\Delta _{f}+T^{-6})  + \left( O_{P^*}\left( T^{-1}\right)
	+CT^{-\varphi }\left\vert \gamma-\widehat\gamma\right\vert _{2}\right) \left\vert
	\alpha -\widehat \alpha  \right\vert _{2}.\cr
	 \widetilde{\mathbb C}^*_1(\alpha,\gamma) &=&\widehat{\mathbb C}^*_{1}\left(  \gamma 
\right)+ (T^{-\varphi }+|\alpha -\widehat\alpha|_{2}) O_{P^*}(\Delta _{f}+T^{-6}) \cr
	&&+\left(  O_{P^*}\left( T^{-1}\right) +\eta T^{-2\varphi }\left\vert \gamma-\widehat\gamma\right\vert \right) T^{\varphi }\left\vert \alpha-\widehat\alpha\right\vert _{2}.\cr
	 \widetilde{  R}_{2}^*( \gamma)  +\widetilde{\mathbb C}_3^*(\alpha,\gamma)
	&=&  \mathds G_{H_T,\widehat\Sigma}(\phi,\widehat\phi)  +(T^{-\varphi }+|\alpha -\widehat\alpha|_{2}) O_{P^*}(\Delta _{f}+T^{-6})+T^{-\varphi }\left\vert \alpha-\widehat\alpha\right\vert _{2} O_{P^*}\left( N^{-1/2}\right)
\end{eqnarray*}
where $ \mathds G_{H_T,\widehat\Sigma}(\phi,\widehat\phi) $ is as defined in (\ref{ef.5}).
 
 Putting together, $\alpha=\widehat \alpha  +O_{P^*}\left( T^{-1/2 }  \right) ,$ for $\gamma=\gamma^*(\alpha)$,  $
 {\mathbb{S}}^*_{T}\left( {\alpha},\gamma \right) - {\mathbb{S}}^*
_{T}\left( {\alpha},\widehat\gamma  \right) \leq 0
$ implies 
\begin{equation}\label{e.4}
\mathds G_{H_T,\widehat\Sigma}(\phi,\widehat\phi) \leq \widehat{\mathbb C}^*_{1}\left(  \gamma 
\right)+  o_{P^*}(T^{-2\varphi})N^{-1/2} +o_{P^*}(T^{-1})
    +T^{-\varphi }  O_{P^*}(\Delta _{f}+T^{-6})
    +o_{P^*}(T^{-2\varphi})\left\vert \gamma-\widehat\gamma\right\vert _{2} . 
\end{equation}
By a similar argument as Lemmas \ref{l.1}, \ref{l.2}, 
 $$   \widehat{\mathbb C}^*_{1}\left(  \gamma 
\right)\leq  b_{NT}, \quad  \mathds G_{H_T,\widehat\Sigma}(\phi,\widehat\phi)
 \geq CT^{-2\varphi} |\phi-\widehat\phi|_2  -  \frac{C}{\sqrt{N}T^{2\varphi}}, 
$$
where for an arbitrarily small $\eta>0$, 
$
b_{NT}=O_{P^*}(T^{-1}) +\eta T^{-2\varphi }\left\vert \gamma-\widehat\gamma
\right\vert _{2}.
$
Then 
 \begin{eqnarray*}
&& CT^{-2\varphi} |\phi-\widehat\phi|_2   
\cr
    &\leq & o_{P^*}(T^{-2\varphi})N^{-1/2} +O_{P^*}(T^{-1})
    +T^{-\varphi } O_{P^*}(\Delta _{f}+T^{-6})
    + \eta T^{-2\varphi }\left\vert \phi -\phi
_{0}\right\vert _{2} + \frac{C}{\sqrt{N}T^{2\varphi}}.
	 \end{eqnarray*}%
	 Since $\eta>0$ is arbitrarily small, we have 
 \begin{eqnarray*}
&&  |\gamma^*(\alpha)- \widehat\gamma|_2=O_{P^*}(  |\phi-\widehat\phi|_2   )
\leq O_{P^*}(N^{-1/2} +T^{-(1-2\varphi)}) . 
	 \end{eqnarray*}%

\subsection*{ step 3-ii: sharp rate of convergence }

The preliminary rate is sharp when $\sqrt{N}\gg T^{1-2\varphi}$. 
 Now suppose $\sqrt{N}=O(T^{1-2\varphi})$. 
By proofs similar to lemmas \ref{l.1},   \ref{l:generic},  for $\phi=H_T\widehat{\gamma}(\alpha)$,
	  $$  \widehat{\mathbb C}^*_{1}\left(  \gamma 
\right)\leq  a_{NT},
\quad    \mathds G_{H_T,\widehat\Sigma}(\phi,\widehat\phi)
 \geq CT^{-2\varphi} \sqrt{N} |\phi -\widehat\phi|_2^2 -  O_{P}(\frac{1}{T^{2\varphi}N^{5/6}}),
$$ where for an arbitrarily small $\eta>0$, 
$
a_{NT}=T^{-2\varphi}
O_P\left( \frac{\sqrt{N}}{\left( NT^{1-2\varphi }\right) ^{2/3}}\right)
+T^{-2\varphi} \eta \left\vert \phi -\phi
_{0}\right\vert _{2}^{2}\sqrt{N}.
$
Substituting these bounds to (\ref{e.4}) yields 
$$
  |\gamma^*(\alpha)- \widehat\gamma|_2=O_{P^*}(  |\phi-\widehat\phi|_2   )
\leq 
O_{P^*}\left( \frac{1}{\left( NT^{1-2\varphi }\right) ^{1/3}}\right). 
$$
Combining with the   rates proved in claim 3,  we obtain the desired result.

	\end{proof}

	\begin{lem}\label{bootstrapempirical}

	(i)  In the known factor case, $ \mathbb{K}_{3T}^*\left( g\right) \Rightarrow^{*} 2W(g)$, where
 $$  \mathbb{K}_{3T}^*\left( g\right)
 := -2\sum_{t=1}^{T}\eta_t\widehat\varepsilon
	 	 _{t}x_{t}^{\prime }\widehat\delta \left( 1_{t}\left(\widehat \gamma +g\cdot
	 	 r_{T}^{-1}\right) -1_{t}(\widehat\gamma)\right).$$
(ii) In the  estimated factor case,      $   \sqrt{r_{NT}T^{1+2\varphi}}\mathbb{\widehat{C}}^*_{1}\left(  \widehat\gamma
 +r_{NT}^{-1}g\right) \Rightarrow^{*} 2W(g)$, where
 $
  \mathbb{\widehat{C}}^*_{1}\left(   \gamma \right) =
			\frac{2}{T}\sum_{t=1}^{T}\eta_t\widehat\varepsilon _{t}x_{t}^{\prime }\widehat\delta \left( 1\{ \widehat f_t^{*'}\gamma>0\} -1\{  \widehat f_t^{*'}\widehat\gamma>0\} \right)
		$, $\widehat f_t^*=\widehat f_t+ N^{-1/2} \mathcal Z_t^*$,
		and $\mathcal Z_t^*$ is iid $\mathcal N(0,\widehat\Sigma_h)$.
	\end{lem}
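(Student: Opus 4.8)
\textbf{Proof strategy for Lemma \ref{bootstrapempirical}.}
The plan is to establish the weak convergence of each bootstrap empirical process via a finite-dimensional central limit theorem for martingale differences (conditional on the data) together with a stochastic equicontinuity argument, exactly paralleling the treatment of $\mathbb{K}_{3T}(g)$ in the proof of Theorem \ref{asdist-alpha-gamma} (Section \ref{sec:asymptotic-distn:known-f}) and the treatment of $l_{NT}\widehat{\mathbb{C}}_1$ in Section \ref{sec:EP}, but now working under the conditional (bootstrap) probability measure $\mathbb{P}^*$ given $\{y_t,x_t,\widetilde f_t,\ldots\}$. For part (i), I would first write $\mathbb{K}_{3T}^*(g) = -2\sum_t \eta_t\widehat\varepsilon_t x_t'\widehat\delta(1_t(\widehat\gamma + g r_T^{-1}) - 1_t(\widehat\gamma))$ and note that, conditionally on the data, $\{\eta_t\}$ is i.i.d.\ mean-zero variance-one, so the summands form a conditional martingale difference array. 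The fidi convergence follows from the conditional Lindeberg MDS CLT: I would verify $\max_t |v_t^*| = o_{P^*}(\sqrt{T})$ for $v_t^* = \sqrt{r_T}\,\eta_t\widehat\varepsilon_t x_t'\widehat\delta(1_t(\widehat\gamma+gr_T^{-1})-1_t(\widehat\gamma))$ using $T^{-1}\sum_t (v_t^*)^4 \leq C T^{-1} r_T^2 \cdot \frac1T\sum_t(\widehat\varepsilon_t x_t'\widehat\delta)^4|1_t(\widehat\gamma+gr_T^{-1})-1_t(\widehat\gamma)| = o_{P^*}(1)$ (the indicator difference has shrinking probability, bounded using Lemma \ref{A-rates:lemma-used}), and $\frac1T\sum_t((v_t^*)^2 - \mathbb{E}^*(v_t^*)^2) = o_{P^*}(1)$. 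Then I would compute $\lim \mathbb{E}^*(v_t^*)^2$ and, more generally, the conditional covariance kernel $\mathrm{cov}^*(\mathbb{K}_{3T}^*(s),\mathbb{K}_{3T}^*(g))$, showing it converges to $2H(s,g)$ of \eqref{H-Gaussian-process-covariance-kernel}; this uses the consistency $\widehat\gamma \to \gamma_0$, $\widehat\delta \to 0$ rescaled, $\widehat\varepsilon_t^2 \approx \varepsilon_t^2$, and the same change-of-variables/DCT argument as in Step 2 of the proof of Theorem \ref{asdist-alpha-gamma} to extract the conditional density $p_{u_t|f_{2t}}(0)$ and the factor $\mathbb{E}[(\varepsilon_t x_t'd_0)^2|u_t=0]$. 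Stochastic equicontinuity under $\mathbb{P}^*$ follows from a maximal inequality for the VC-type class $\{x_t(1\{f_t'\gamma>0\}-1\{f_t'\widehat\gamma>0\})\}$ applied with the bootstrap weights $\eta_t$, just as Lemma \ref{Lem:modul1} was applied in the original-sample proof.

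For part (ii), the argument is the same but one extra layer is present: the bootstrap "estimated factors" $\widehat f_t^* = \widehat f_t + N^{-1/2}\mathcal{Z}_t^*$ replace $\widehat f_t$, so the indicator is $1\{\widehat f_t^{*\prime}\gamma>0\} = 1\{\breve g_t^{*\prime}\phi>0\}$ for $\breve g_t^* = g_t + h_t/\sqrt N + N^{-1/2}\widehat\Sigma_h^{1/2}\mathcal{Z}_t^*/\ldots$, and the rescaling is $\sqrt{r_{NT}T^{1+2\varphi}}$ rather than $\sqrt{r_TT^{1+2\varphi}}$. I would first remove the randomness of $H_T$ from the process by the same reparametrization used in Section \ref{sec:EP} (writing everything in terms of $\breve g_t^*$ and $\mathfrak g = H_T g$, invoking the extended CMT of Lemma \ref{CMT-extension} since $H_T \to^P H$), then carry out the conditional MDS CLT in $(\eta_t, \mathcal{Z}_t^*)$ and the conditional stochastic equicontinuity exactly as in part (i). The key point is that the limit of $r_{NT}\mathbb{E}^*[\eta_t^2 |1\{\breve g_t^{*\prime}(\phi_0 + \mathfrak{s}/r_{NT})>0\} - 1\{\breve g_t^{*\prime}(\phi_0+\mathfrak{g}/r_{NT})>0\}|]$ must reproduce $L(Hs,Hg)$ — this is precisely Lemma \ref{Lem-DCTinP} transported to the bootstrap world, where Assumption \ref{a9.1}(iv) guarantees the requisite bounded/continuous joint density of $\inf_\gamma|\widehat f_t^{*\prime}\gamma|$ and Assumption \ref{a9.1}(iii) supplies $|\widehat\var(e_t) - \var(e_t)|_2 = o_P(1)$, so that the bootstrap perturbation $N^{-1/2}\widehat\Sigma_h^{1/2}\mathcal{Z}_t^*$ has the same asymptotic effect on the indicator as the true PCA noise $N^{-1/2}h_t$.

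The main obstacle I anticipate is handling the remainder term $r_t^*$ in the expansion $f_t^* = \widehat f_t^* + H_T^{*\prime-1}r_t^* + m_t$ (from \eqref{eqg.5}) and the difference $m_t = \widetilde f_t - \widehat f_t$ uniformly over $\gamma$ in a shrinking neighborhood of $\widehat\gamma$, so that replacing the re-estimated bootstrap factors $f_t^*$ by $\widehat f_t^*$ in the empirical process is asymptotically negligible at the rate $\sqrt{r_{NT}T^{1+2\varphi}}$. This requires a bootstrap analog of Lemma \ref{l:relf} and the sharp probability bound of Proposition \ref{prop:f}: one must show $\sqrt{r_{NT}T^{1+2\varphi}}\cdot|\widehat{\mathbb{C}}_1^*(\gamma) - \mathbb{C}_1^*(\gamma \text{ with } f_t^*)| = o_{P^*}(1)$, which boils down to bounding $\frac1T\sum_t|\eta_t\widehat\varepsilon_t x_t'\widehat\delta|\,1\{0<\inf_\gamma|\widehat f_t^{*\prime}\gamma|<C\Delta_f\}$ by $O_{P^*}(T^{-\varphi}\Delta_f)$, exactly the computation already carried out in Step 5 of the proof of Theorem \ref{t6.1} (estimated factor case). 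I would lean heavily on that step and on the continuity-at-zero of the density of $\inf_\gamma|\widehat f_t^{*\prime}\gamma|$; once that negligibility is in place, the rest is a routine replay of the non-bootstrap arguments with $\mathbb{E}$ replaced by $\mathbb{E}^*$.
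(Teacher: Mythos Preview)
Your approach is essentially the same as the paper's: stochastic equicontinuity via the maximal inequality (Lemma \ref{Lem:modul1}) applied with the i.i.d.\ bootstrap weights $\eta_t$, and finite-dimensional convergence via the conditional MDS CLT, replaying the non-bootstrap arguments of Section \ref{sec:asymptotic-distn:known-f} and Section \ref{sec:EP}/Lemma \ref{Lem-DCTinP}. The paper's proof is quite terse and your outline actually fills in more of the verification details than the paper does.

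One point of confusion worth flagging: your ``main obstacle'' --- controlling the remainder $r_t^*$ and $m_t$ in the expansion $f_t^* = \widehat f_t^* + H_T^{*\prime -1}r_t^* + m_t$ --- is not part of this lemma. The statement of part (ii) is already formulated in terms of $\widehat f_t^* = \widehat f_t + N^{-1/2}\mathcal Z_t^*$, not the re-estimated PCA factors $f_t^*$. The reduction from $f_t^*$ to $\widehat f_t^*$ (i.e.\ showing $l_{NT}|\widehat{\mathbb C}_1^*(\gamma) - \widetilde{\mathbb C}_1^*(\alpha,\gamma)| = o_{P^*}(1)$) is carried out separately in Step 5 of the proof of Theorem \ref{t6.1} in the estimated-factor case, which you correctly cite, but it is not needed here. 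So the lemma itself is cleaner than you anticipate: you only need to handle the process with $\widehat f_t^*$, and the perturbation $N^{-1/2}\mathcal Z_t^*$ washes out of the covariance limit exactly as $h_t/\sqrt N$ did in Lemma \ref{Lem-DCTinP}.
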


 \begin{proof}
 	(i) We first show the stochastic equicontinuity of $ \mathbb{K}_{3T}^*\left( g\right)  $, for which it is sufficient to show that of $ \sum_{t=1}^{T}\eta_t \varepsilon
 	_{t}x_{t}^{\prime }\delta_0 \left( 1_{t}\left(\gamma_T +g\cdot
 	r_{T}^{-1}\right) -1_{t}(\gamma_T)\right)  $ for any $ \gamma_{T} \to \gamma_{0} $
 	since $ \widehat\delta - \delta_0 = O_P(T^{-1/2}) $, $ \widehat\gamma $ is consistent, and
 	$ \widehat\varepsilon_{t} = \varepsilon_{t} + remainder_t $, where the remainder terms are treated as before.
 	However, we can apply the maximal inequality in Lemma \ref{A:Lem} here since $ \eta_{t} $ is a centered iid sequence independent of the other variables.
 	Next, to derive the finite dimensional convergence we can apply the conditional CLT e.g. Hall and Heyde (1980) for the MDS.
 	The conditions are checked similarly as in Section \ref{sec:asymptotic-distn:known-f}.

 	(ii) The argument for the stochastic equicontinuity is similar to the case (i). Also, the derivation in Section \ref{sec:EP} and the proof of Lemma \ref{Lem-DCTinP} in particular reveals that the finite dimensional limits are not affected by the change of $ \widehat f_t $ by  $\widehat f_t^*=\widehat f_t+ N^{-1/2} \mathcal Z_t^*$.

 \end{proof}

\subsection{Proof of Theorem \ref{Thm:Linearity Test}}

\begin{proof}[Proof of Theorem \ref{Thm:Linearity Test}]%
We begin with the known factor case.
For each $\gamma $, our $Q_{T}\left(
\gamma \right) $ corresponds to a modified version of the  Wald statistic $T_{n}\left( \gamma
\right) $  used in \citet{Hansen:1996}. Specifically, let $\widehat\alpha(\gamma)=\arg\min_\alpha\mathbb S_T(\alpha,\gamma)$ and $R=(0_{d_x}, I_{d_x})$.
Then it can be proved that
$$
\min_{\alpha :\delta =0}\mathbb{S}_{T}\left( \alpha ,\gamma
\right) -\min_{\alpha ,\gamma }\mathbb{S}_{T}\left( \alpha ,\gamma \right)
=\widehat\alpha(\gamma)'R'[R(\sum_tZ_t(\gamma)Z_t(\gamma)')^{-1}R']^{-1}R\widehat\alpha(\gamma).
$$
We then replace the term $\widehat{%
		V}_{n}\left( \gamma \right) $ in \citet{Hansen:1996} with
\begin{equation}\label{redefine-V_n}
\widehat{V}_{n}\left( \gamma \right) =\frac{1}{T}\sum_{t=1}^{T}x_{t}x_{t}^{
\prime }1\left\{ f_{t}^{\prime }\gamma >0\right\} \mathbb{{S}}
_{T}\left( \widehat{\alpha},\widehat{\gamma}\right) .
\end{equation}
We now verify regularity conditions imposed by \citet{Hansen:1996}. His Assumption 1  concerns the mixing and
moment conditions that are satisfied by our Assumption \ref{A-mixing} (with $v=r=2$ in the notation used in \citet{Hansen:1996}).
His Assumption 2 is a sufficient condition to ensure the tightness of the
empirical process $T^{-1/2}\sum_{t=1}^{T}x_{t}1\left\{ f_{t}^{\prime }\gamma
>0\right\} \varepsilon _{t},$ which is guaranteed by our maximal inequality
Lemma \ref{Lem:modul1}. Finally, his Assumption 3 follows from the ULLN.
Then, the theorem is proved with the replaced $\widehat{%
		V}_{n}\left( \gamma \right) $ in \eqref{redefine-V_n}.

Turning to the estimated factor case, we need to establish the asymptotic equivalence between
the known and unknown factors. For this purpose,  it suffices to show that
\begin{align}
\sup_{\gamma }\left\vert \frac{1}{T}\sum_{t=1}^{T}x_{t}x_{t}^{\prime }\left(
1\left\{ \widetilde{f}_{t}^{\prime }\gamma >0\right\} -1\left\{ f_{t}^{\prime
}\gamma >0\right\} \right) \right\vert &= o_P\left( 1\right), \label{approx0001} \\
\sup_{\gamma }\left\vert \frac{1}{T}\sum_{t=1}^{T}x_{t}x_{t}^{\prime }\left(
1\left\{ \widetilde{f}_{t}^{\prime }\gamma >0\right\} -1\left\{ f_{t}^{\prime
}\gamma >0\right\} \right) \varepsilon _{t}^{2}\right\vert &= o_P\left(
1\right), \label{approx0002} \\
\sup_{\gamma }\left\vert \frac{1}{\sqrt{T}}\sum_{t=1}^{T}x_{t}\left(
1\left\{ \widetilde{f}_{t}^{\prime }\gamma >0\right\} -1\left\{ f_{t}^{\prime
}\gamma >0\right\} \right) \varepsilon _{t}\right\vert &= o_P\left(
1\right). \label{approx0003}
\end{align}
Recall that $\widehat{f}_{t}$ is defined as $\widehat{f}_{t}=H_T' (g_{t}+h_{t}/\sqrt{N})$.
The last condition \eqref{approx0003}  follows directly if we show that
\begin{equation}\label{approx111}
\sup_{\gamma }\left\vert \frac{1}{\sqrt{T}}\sum_{t=1}^{T}x_{t}\left(
1\left\{ \widetilde{f}_{t}^{\prime }\gamma >0\right\} -1\left\{ \widehat{f}
_{t}^{\prime }\gamma >0\right\} \right) \varepsilon _{t}\right\vert
=o_P\left( 1\right)
\end{equation}
and
\begin{equation}\label{approx222}
\sup_{\gamma }\left\vert \frac{1}{\sqrt{T}}\sum_{t=1}^{T}x_{t}\left(
1\left\{ \widehat{f}_{t}^{\prime }\gamma >0\right\} -1\left\{ f_{t}^{\prime
}\gamma >0\right\} \right) \varepsilon _{t}\right\vert = o_P\left(
1\right).
\end{equation}
By Lemma \ref{l:relf}, \eqref{approx111} follows. To show \eqref{approx222}, note that
in view of the maximal inequality in Lemma \ref{Lem:modul1}, Theorem
16.1 of \citet{billingsley1968} and the extended CMT in Lemma \ref{CMT-extension},
the empirical process
\begin{equation*}
\frac{1}{\sqrt{T}}\sum_{t=1}^{T}x_{t}1\left\{ \widehat{f}_{t}^{\prime }\gamma
>0\right\} \varepsilon _{t}
\end{equation*}
is stochastically equicontinuous.
This implies \eqref{approx222}.
The other two conditions \eqref{approx0001} and \eqref{approx0002} can be shown similarly and thus omitted.
In case of the estimated factors, $ f_t = H_T 'g_t $ and $ \gamma = H_T^{-1}\phi $ and the supremum is understood as taken over $ \phi $ after cancelling out $ H_T $ in $ f_t ' \gamma $.
Finally, the CMT yields the desired result.
\end{proof}%

\section{Technical Lemmas}\label{sec:tech:lem:appendix}

This section proves technical lemmas, which are repeatedly used to prove main theorems. Their proofs are given in the subsequent subsection. They are proven under the following assumption.
\begin{assum} \label{A:Lem}
Assume that $\left\{ z_{t},q_{t}\right\} _{t=1}^{T}$ be a
sequence of strictly stationary, ergodic, and $\rho $-mixing array with $%
\sum_{m=1}^{\infty }\rho _{m}^{1/2}<\infty $, $ \mathbb E\left\vert z_{t}\right\vert_2
^{4}<\infty $, and, for all $\gamma $ in a neighborhood of $ \gamma_0 $, $ \mathbb E%
\left( \left\vert z_{t}\right\vert ^{4}|q_{t}=\gamma \right) <C<\infty $
and $ q_t'\gamma $ has a density that is continuous and bounded by some $ C<\infty $.
\end{assum}
Similar to the previous notation, we define $1_{t}\left( \gamma \right) \equiv 1\left\{ q_{t}^{\prime }\gamma >0\right\} $
while  $1_{t}\left(
\gamma ,\bar{\gamma}\right) \equiv 1\left\{ q_{t}^{\prime }\gamma \leq
0<q_{t}^{\prime }\bar{\gamma}\right\} $, which should not cause much confusion. Furthermore, we let the last element of $ q_t $ equal to $ -1 $.

\begin{lem}\label{Lem:modul1}
Let Assumption \ref{A:Lem} hold.
	Then, there exists $T_{0}<\infty $ such that for any $\vec{\gamma}\ $in a
	neighbourhood of $\gamma _{0}\ $,  $ K>0 $ and for all $T>T_{0}$ and $\epsilon \geq
	T^{-1}$,
	\begin{equation*}
	\mathbb{P }\left\{ \sup_{\left\vert \gamma -\vec{\gamma}\right\vert_2
		<\epsilon }
	\left\vert \frac{1}{T^{1/2}}\sum_{t=1}^{T}\left( z_{t}1_t \left(\vec{\gamma},\gamma \right)
	-\mathbb E z_{t} 1_t\left(\vec{\gamma},\gamma\right) \right) \right\vert >K\right\} \leq \frac{C}{K^{4}}\epsilon ^{2}.
	\end{equation*}%

\end{lem}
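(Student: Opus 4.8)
\textbf{Proof plan for Lemma~\ref{Lem:modul1}.}
The plan is to establish this maximal inequality by a standard chaining/bracketing argument for $\rho$-mixing arrays, exploiting the fact that the class of functions
$\{z_t 1\{q_t'\vec\gamma \le 0 < q_t'\gamma\} : |\gamma - \vec\gamma|_2 < \epsilon\}$
is a Vapnik--Chervonenkis (VC) type class with a small envelope, and that the envelope shrinks with $\epsilon$ because of the bounded-density assumption on $q_t'\gamma$. First I would reduce to the scalar case: write $z_t 1_t(\vec\gamma,\gamma) = z_t 1\{q_t'\vec\gamma \le 0\}1\{q_t'\gamma > 0\}$ and note that, over $\gamma$ in an $\epsilon$-ball of $\vec\gamma$, the indicator differences are supported on the thin slab $\{|q_t'\vec\gamma| \le C\epsilon |q_t|_2\}$ for some constant $C$. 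Hence the natural envelope is $F_t := |z_t|_2\, 1\{|q_t'\vec\gamma| \le C\epsilon|q_t|_2\}$, and by the law of iterated expectations together with the boundedness of the conditional density of $q_t'\vec\gamma$ (Assumption~\ref{A:Lem}) one gets $\mathbb E F_t^2 \le C'\epsilon$ and $\mathbb E F_t^4 \le C''\epsilon$.

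The key steps, in order, are: (i) center and truncate — replace $z_t$ by $z_t 1\{|z_t|_2 \le M_T\}$ for a slowly growing truncation level $M_T$, controlling the tail by the fourth-moment bound so that the truncation error is $o(T^{-1/2})$ uniformly; (ii) bound the fourth moment of the supremum of the partial sum of the truncated, centered process by a Doob/Móricz-type maximal inequality for $\rho$-mixing sequences (using $\sum_m \rho_m^{1/2} < \infty$), which yields a bound of the form $\mathbb E\sup |T^{-1/2}\sum_t(\cdot)|^4 \le C(\mathbb E F_t^2)^2 + \text{lower order} \le C\epsilon^2$ once $\epsilon \ge T^{-1}$ is used to absorb the truncation and discretization remainders; (iii) pass from the fourth-moment bound to the probability bound by Markov's inequality, giving $\mathbb P(\sup > K) \le C K^{-4}\epsilon^2$. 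The discretization over the $\epsilon$-ball is handled by the VC property of half-space indicators: the covering number of the index set at resolution $\delta$ is polynomial in $1/\delta$, so the chaining sum converges and contributes only constants.

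The main obstacle is step (ii): obtaining the \emph{sharp} fourth-power maximal inequality with the correct dependence $\epsilon^2$ (not merely $\epsilon$) for a dependent array. The naive bound from summing variances would give $\epsilon$, which is too weak; one needs a genuine second-order (Rosenthal-type) moment inequality for $\rho$-mixing partial sums that produces the $(\mathbb E F_t^2)^2$ term, and one must check that the cross terms and the mixing-coefficient series are summable under $\sum_m \rho_m^{1/2} < \infty$. This is where I would invoke the Rosenthal inequality for $\rho$-mixing sequences (e.g.\ as in the references cited for the ULLN, \citep{davidson1994}) applied termwise along a finite net of $\gamma$'s, then combine with the VC covering bound. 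Once this inequality is in place, the remaining pieces — truncation, centering, discretization, Markov — are routine, and the constant $C$ can be tracked to be independent of $\vec\gamma$, $K$, $T$, and $\epsilon$ as required. The condition $\epsilon \ge T^{-1}$ enters precisely to ensure the discretization grid (of cardinality $\asymp (\epsilon T)^{\dim}$ or so) and the truncation remainder are dominated by the main term $\epsilon^2$.
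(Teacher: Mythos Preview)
Your overall strategy---fourth-moment inequality for $\rho$-mixing arrays plus Markov---is on target, but the route you take to control the supremum over $\gamma$ differs substantially from the paper's, and your sketch leaves the hardest step underspecified.

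The paper does not chain over a VC class or truncate $z_t$. Instead it combines two ingredients: (a) Peligrad's (1982, Lemma~3.6) moment inequality, which gives $\mathbb E|T^{-1/2}\sum_t (z_t 1_t(\gamma_1,\gamma_2) - \mathbb E[\cdot])|^4 \le C\bigl(T^{-1}\mathbb E|z_t|^4 1_t(\gamma_1,\gamma_2) + (\mathbb E|z_t|^2 1_t(\gamma_1,\gamma_2))^2\bigr)$ directly from $\sum_m \rho_m^{1/2}<\infty$, with no truncation; and (b) the multiparameter fluctuation inequality of Bickel--Wichura (1971, Theorem~1), which converts a bound on fourth moments of \emph{block increments} $J_T(B)$ over hyperrectangles $B$ into a tail bound on $\sup_\gamma |J_T(\gamma)|$. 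The trick is to parametrize so that varying only the last coordinate of $\gamma$ (the intercept) gives $1_t(\gamma_1,\gamma_2)=1\{c_2 < q_{1t}'\psi \le c_1\}$, whose probability is $O(|c_1-c_2|)$ by the bounded-density assumption; then $\mathbb E|J_T(B)|^4 \le C|c_1-c_2|^2 \le C\mu(B)^{2/d}$, which is exactly Bickel--Wichura's condition $\mathcal C(4,2/d)$. The condition $\epsilon\ge T^{-1}$ enters only to ensure the first term in Peligrad's bound, $T^{-1}\cdot O(\epsilon)$, is dominated by the second, $O(\epsilon^2)$.

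In your plan, the step that worries me is passing from the pointwise bound $\mathbb E|J_T(\gamma)|^4 \le C\epsilon^2$ to the same bound for $\mathbb E\sup_\gamma|J_T(\gamma)|^4$. You invoke a ``Doob/M\'oricz-type maximal inequality'', but those control maxima over the \emph{time} index, not over the parameter $\gamma$; they are not the right tool here. You then defer the $\gamma$-supremum to VC covering, saying the chaining sum ``contributes only constants''. That is true for the usual sub-Gaussian second-moment chaining, but you need a fourth-moment version, and a naive union bound over a polynomial-size net would multiply your $\epsilon^2/K^4$ by the net cardinality. A chaining argument that preserves the sharp $\epsilon^2$ without extra logarithmic or polynomial factors is possible but requires more care than your sketch indicates. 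The Bickel--Wichura route sidesteps this entirely: it is tailor-made to turn block-increment moment bounds into sup tail bounds with no covering overhead, which is why the paper's argument is short.
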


An obvious implication of this lemma is that when $\epsilon =a_{T}^{-1}$ for
some sequence $a_{T}=O\left( T\right) $ the process in the display is $%
O_P\left( a_{T}^{-1/2}\right) $.
It also leads to the following uniform bounds for empirical processes of mixing arrays.

\begin{lem}\label{Lem:rate_gm}
	Let Assumption \ref{A:Lem} hold.
	For any $\eta >0$ and some $C,c>0,$
	\begin{align*}
	\sup_{cT^{-1+2\varphi }\leq \left\vert \gamma -\gamma_{0}\right\vert _{2}<C}
	&\bigg[ \left\vert \frac{1}{T^{1+\varphi }}\sum_{t=1}^{T}\left( z_{t}\left(
	1_{t}\left( \gamma \right) -1_{t}\left( \gamma _{0}\right) \right)
	- \mathbb Ez_{t}\left( 1_{t}\left( \gamma \right) -1_{t}\left( \gamma _{0}\right)
	\right) \right) \right\vert \\
	& -\eta T^{-2\varphi }\left\vert \gamma -\gamma
	_{0}\right\vert _{2}\bigg]
	\leq O_P\left( \frac{1}{T}\right).
	\end{align*}
\end{lem}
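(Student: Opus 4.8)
\textbf{Proof plan for Lemma \ref{Lem:rate_gm}.}

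The plan is to reduce the claimed uniform-over-an-annulus bound to a \emph{dyadic chaining} argument built on the modulus-of-continuity estimate of Lemma \ref{Lem:modul1}. First I would partition the shell $\{cT^{-1+2\varphi}\leq |\gamma-\gamma_0|_2<C\}$ into dyadic rings $R_j:=\{2^{j}cT^{-1+2\varphi}\leq |\gamma-\gamma_0|_2<2^{j+1}cT^{-1+2\varphi}\}$ for $j=0,1,\dots,J$ with $J=O(\log T)$ (so that $2^{J}cT^{-1+2\varphi}\asymp C$). Write $G_T(\gamma):=\frac1T\sum_{t=1}^T\big(z_t(1_t(\gamma)-1_t(\gamma_0))-\mathbb{E}z_t(1_t(\gamma)-1_t(\gamma_0))\big)$. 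On each ring $R_j$, set $\rho_j:=2^{j}cT^{-1+2\varphi}$ and note $T^{-2\varphi}|\gamma-\gamma_0|_2\asymp T^{-2\varphi}\rho_j$ there; so it suffices to bound $\sup_{\gamma\in R_j}|T^{1/2}G_T(\gamma)|$ uniformly in $j$ by a constant multiple of $\eta\,T^{1/2-2\varphi}\rho_j + O_P(T^{-1/2})$.

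The key step is to cover $R_j$ by finitely many balls of radius $\epsilon_j$ centered at points $\vec\gamma_{j,k}$, with $\epsilon_j$ chosen so that the union bound over centers against the chaining increments is affordable. Because $1_t(\gamma)-1_t(\gamma_0)=1_t(\gamma_0,\gamma)-1_t(\gamma,\gamma_0)$ is a difference of the "wedge" indicators appearing in Lemma \ref{Lem:modul1}, within a ball of radius $\epsilon_j$ around $\vec\gamma_{j,k}$ the oscillation $T^{1/2}\big(G_T(\gamma)-G_T(\vec\gamma_{j,k})\big)$ is governed by $\sup_{|\gamma-\vec\gamma_{j,k}|_2<\epsilon_j}|T^{-1/2}\sum_t(z_t 1_t(\vec\gamma_{j,k},\gamma)-\mathbb{E}z_t 1_t(\vec\gamma_{j,k},\gamma))|$ plus a deterministic piece $\mathbb{E}|z_t|\,|1_t(\vec\gamma_{j,k},\gamma)|\lesssim \epsilon_j$ (controlled by the bounded-density assumption on $q_t'\gamma$). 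Lemma \ref{Lem:modul1} gives $\mathbb{P}(\sup_{|\gamma-\vec\gamma_{j,k}|_2<\epsilon_j}|\cdots|>K)\le C\epsilon_j^2/K^4$. I would then take $\epsilon_j$ comparable to $\rho_j$ up to a small constant and $K$ comparable to $\eta T^{1/2-2\varphi}\rho_j$ (for the ring contribution) while handling the single base point $\gamma_0$ and the coarsest increments by a final application of Lemma \ref{Lem:modul1} with $\epsilon=cT^{-1+2\varphi}$, which is exactly the $O_P(T^{-1/2})$ reservoir. Summing the failure probabilities $\sum_{j}\sum_{k}C\epsilon_j^2/K_j^4$: the number of centers on $R_j$ is $O((\rho_j/\epsilon_j)^{d_f-1})=O(1)$ by the choice $\epsilon_j\asymp\rho_j$, and $\epsilon_j^2/(\eta T^{1/2-2\varphi}\rho_j)^4\asymp \eta^{-4}\rho_j^{-2}T^{-2+8\varphi}$; since $\rho_j\ge cT^{-1+2\varphi}$ this is $\le \eta^{-4}c^{-2}T^{-8\varphi+8\varphi}\cdot T^{-2+2}$... more precisely it is $\eta^{-4}c^{-2}$ times a geometrically decaying factor $2^{-2j}$, so the sum over $j$ is $O(\eta^{-4})$, which can be made arbitrarily small by first fixing $\eta$ small and then sending a free multiplicative constant inside $K_j$ to infinity — giving the $O_P$ statement. (One packages the arbitrarily-small-probability-in-$\eta$ into the usual "for every $\varepsilon>0$ there is $M$" definition of $O_P$, exactly as in the remark following Lemma \ref{Lem:modul1}.)

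The deterministic ("bias") term needs separate care but is routine: $\sup_{\gamma\in R_j}\mathbb{E}|z_t|\,|1_t(\gamma)-1_t(\gamma_0)|\le C\,\mathbb{E}\big(\mathbb{E}(|z_t|\mid q_t)\,\mathbb{P}(\,|q_t'\gamma_0|<|q_t'(\gamma-\gamma_0)|\mid q_t)\big)\le C'\rho_j=C'|\gamma-\gamma_0|_2$ using the conditional moment bound $\mathbb{E}(|z_t|^4\mid q_t'\gamma=\cdot)<C$ and the bounded-density hypothesis; multiplied by $T^{1/2}$ and compared against the scale $T^{1/2-\varphi}$ in the statement this is $O(T^{\varphi})|\gamma-\gamma_0|_2$, which after dividing by $T^\varphi$ (the normalization $T^{-1-\varphi}\sum$) is absorbed into the $\eta T^{-2\varphi}|\gamma-\gamma_0|_2$ slack since $T^{-\varphi}\to 0$; thus for $T$ large the deterministic part alone is $\le \tfrac\eta2 T^{-2\varphi}|\gamma-\gamma_0|_2$. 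The main obstacle I anticipate is bookkeeping the interplay between the three scales — the radius $\epsilon_j$ of the covering balls, the threshold $K_j\asymp\eta T^{1/2-2\varphi}\rho_j$, and the lower cutoff $cT^{-1+2\varphi}$ — so that (i) the per-ring failure probabilities are summable and collectively $o(1)$ as the free constant in $K_j$ grows, and (ii) the residual coarsest-scale fluctuation is genuinely $O_P(T^{-1/2})$ rather than something larger; this is where one must verify $\epsilon_j\ge T^{-1}$ so Lemma \ref{Lem:modul1} actually applies (true since $\rho_j\ge cT^{-1+2\varphi}\gg T^{-1}$ for $\varphi>0$), and that the number of dyadic levels $J=O(\log T)$ does not spoil the bound — it does not, because the probabilities decay geometrically in $j$, dominating the logarithmic count.
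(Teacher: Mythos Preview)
Your approach is the paper's: peel the annulus into shells and apply Lemma~\ref{Lem:modul1} on each, then sum the tail probabilities. The paper uses arithmetic shells $A_{T,\ell}=\{(\ell-1)T^{-1+2\varphi}\le|\gamma-\gamma_0|_2<\ell T^{-1+2\varphi}\}$ rather than dyadic ones and skips the ball-covering language altogether (since you take $\epsilon_j\asymp\rho_j$, each ring is a single ball and no genuine chaining occurs), but the structure is identical.

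Two things to fix. First, a normalization slip: you defined $G_T$ with normalization $1/T$, but the statement uses $1/T^{1+\varphi}$, so the correct empirical-process threshold is $K_j\asymp\eta\, T^{1/2-\varphi}\rho_j+m^2 T^{-1/2+\varphi}$, not $\eta\, T^{1/2-2\varphi}\rho_j+O_P(T^{-1/2})$. With your stated exponent the bound $\epsilon_j^2/K_j^4$ carries an extra factor $T^{4\varphi}$ and the sum over $j$ diverges; with the right exponent it is $2^{-2j}c^{-2}\eta^{-4}$, which your closing arithmetic silently (and inconsistently with your own display) arrives at. The paper's parametrization makes this transparent: take $K=T^{-1/2+\varphi}(\eta(\ell-1)+m^2)$ and $\epsilon=\ell T^{-1+2\varphi}$, so $\epsilon^2/K^4=\ell^2/(\eta(\ell-1)+m^2)^4$, summable in $\ell$ and vanishing as $m\to\infty$; this is precisely how the free constant enters, which you leave vague (``sending a free multiplicative constant inside $K_j$ to infinity'' is not a substitute for writing $m$ into $K_j$ from the start). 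Second, the ``bias'' paragraph is a red herring: the process in the statement is already centered, so there is no deterministic remainder to absorb.
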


\begin{lem}\label{Lem:rate_gm_1}
Let Assumption \ref{A:Lem} hold.
	For any $\eta >0$ and some $C, c>0,$
\begin{eqnarray*}
&&\sup_{cT^{-1+2\varphi }\leq \left\vert \gamma -\gamma _{0}\right\vert _{2}<C}
\left[ \left\vert \frac{1}{\sqrt{N}T^{1-\varphi }}\sum_{t=1}^{T}\left(
z_{t}\left( 1_{t}\left( \gamma \right) -1_{t}\left( \gamma _{0}\right)
\right) -\mathbb{E} z_{t}\left( 1_{t}\left( \gamma \right) -1_{t}\left( \gamma
_{0}\right) \right) \right) \right\vert -\eta \left\vert \gamma -\gamma
_{0}\right\vert _{2}^{2}\right] \\
&\leq &O_P\left( \frac{1}{\left( NT^{1-2\varphi }\right) ^{2/3}}\right) .
\end{eqnarray*}
\end{lem}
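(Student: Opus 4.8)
\textbf{Proof proposal for Lemma~\ref{Lem:rate_gm_1}.}
The plan is to prove the bound by a peeling (shell) argument over $\{cT^{-1+2\varphi}\le|\gamma-\gamma_0|_2<C\}$ that is structurally identical to the proof of Lemma~\ref{Lem:rate_gm}; only the scaling, and hence the choice of thresholds on each shell, changes. Write $r:=|\gamma-\gamma_0|_2$ and $\mathbb{D}_T(\gamma):=\frac{1}{\sqrt T}\sum_{t=1}^T\big(z_t(1_t(\gamma)-1_t(\gamma_0))-\mathbb{E}\,z_t(1_t(\gamma)-1_t(\gamma_0))\big)$; since $\frac{1}{\sqrt N\,T^{1-\varphi}}=\frac{1}{\sqrt N\,T^{1/2-\varphi}}\cdot\frac{1}{\sqrt T}$, the quantity to be controlled is $\sup_{r}\big[\,|\mathbb{D}_T(\gamma)|/(\sqrt N\,T^{1/2-\varphi})-\eta r^2\,\big]$. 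Because $1_t(\gamma)-1_t(\gamma_0)=1_t(\gamma_0,\gamma)-1_t(\gamma,\gamma_0)$, the process $\mathbb{D}_T$ splits into two pieces, each of the form handled by Lemma~\ref{Lem:modul1} (the opposite straddle direction $1_t(\gamma,\gamma_0)$ is covered by the symmetric version of that lemma, whose proof is identical, the index class being differences of indicators of half-spaces). Using in addition that $\mathbb{P}(q_t'\gamma$ and $q_t'\gamma_0$ straddle $0)\le C r$ by the boundedness of the conditional density of $q_t'\gamma$, Lemma~\ref{Lem:modul1} gives, for every $K>0$ and every $r\ge T^{-1}$, $\mathbb{P}\big(\sup_{|\gamma-\gamma_0|_2\le r}|\mathbb{D}_T(\gamma)|>K\big)\le C r^2/K^4$.

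Next, partition the region into dyadic shells $S_j:=\{2^{j-1}\delta_T<r\le r_j\}$, $\delta_T:=cT^{-1+2\varphi}$, $r_j:=2^j\delta_T$, $j=1,\dots,J_T$ with $2^{J_T}\delta_T\asymp C$. The critical radius is $\rho_T\asymp(\eta^{-2}NT^{1-2\varphi})^{-1/3}$: an elementary exponent computation shows $r^{1/2}/(\sqrt N\,T^{1/2-\varphi})\le\eta r^2$ whenever $r\ge\rho_T$ (the quadratic drift absorbs the fluctuation), while $r^{1/2}/(\sqrt N\,T^{1/2-\varphi})\asymp(NT^{1-2\varphi})^{-2/3}$ at $r=\rho_T$ (so the claimed error dominates below $\rho_T$); note also $\delta_T\lesssim\rho_T$ precisely in the regime $\sqrt N=O(T^{1-2\varphi})$ in which the lemma is applied, so there both types of shells genuinely occur. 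On each shell set $K_j:=\big(\eta r_j^2+M'(NT^{1-2\varphi})^{-2/3}\big)\sqrt N\,T^{1/2-\varphi}$, so that the event $\{\sup_{S_j}|\mathbb{D}_T|/(\sqrt N T^{1/2-\varphi})>\eta r_j^2+M'(NT^{1-2\varphi})^{-2/3}\}$ is contained in $\{\sup_{|\gamma-\gamma_0|_2\le r_j}|\mathbb{D}_T(\gamma)|>K_j\}$, whose probability is $\le Cr_j^2/K_j^4$. Summing over $j$ and splitting the sum at $M\rho_T$ for a large constant $M$: for $r_j>M\rho_T$ use $K_j\ge\eta r_j^2(NT^{1-2\varphi})^{1/2}$, which yields a geometric sum of order $C/M^6$; for $r_j\le M\rho_T$ use $K_j\ge M'(NT^{1-2\varphi})^{-1/6}$, which yields a geometric sum of order $CM^2\eta^{-4/3}/M'^4$. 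Choosing first $M$ and then $M'$ large enough makes the total probability as small as desired, so with probability tending to one $\sup_{r}\big[|\mathbb{D}_T(\gamma)|/(\sqrt N\,T^{1/2-\varphi})-\eta r^2\big]\le M'(NT^{1-2\varphi})^{-2/3}=O_P\big((NT^{1-2\varphi})^{-2/3}\big)$, which is the assertion.

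The only genuinely delicate part is the bookkeeping in the second paragraph: pinning down $\rho_T\asymp(NT^{1-2\varphi})^{-1/3}$ and verifying that, with the two-constant choice of thresholds $(M,M')$ made in that order, both the ``large-shell'' and ``small-shell'' sums of $Cr_j^2/K_j^4$ are $O_P(1)$ and can be driven below any prescribed level. This is where the cube-root exponent $1/3$ — and hence the squared error $2/3$, and the cube-root rate of $\widehat\gamma$ in Theorem~\ref{thm:rate} — emerges, from the balance between the $L^4$-type fluctuation bound $r^2/K^4$ supplied by Lemma~\ref{Lem:modul1} and the quadratic drift $\eta r^2$; it is the single place where the specific scaling $1/(\sqrt N\,T^{1-\varphi})$ is used. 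Two routine matters require no separate treatment: $z_t$ is only assumed to satisfy $\mathbb{E}|z_t|_2^4<\infty$, but Lemma~\ref{Lem:modul1} already delivers the required maximal bound under exactly that moment condition, so no extra truncation of $z_t$ is needed; and outside the regime $\sqrt N=O(T^{1-2\varphi})$ one has $\rho_T\lesssim\delta_T$, so all shells are effectively ``large'', the quadratic drift dominates throughout, and the same argument applies a fortiori.
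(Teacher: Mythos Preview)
Your proof is correct and follows the same peeling-plus-Lemma~\ref{Lem:modul1} strategy as the paper. The only substantive difference is the shelling scheme: you peel dyadically in $r=|\gamma-\gamma_0|_2$ and split the sum at a critical radius $\rho_T\asymp(NT^{1-2\varphi})^{-1/3}$, handling ``small'' and ``large'' shells with two separate constants $(M,M')$; the paper instead peels linearly in $r^2$, taking shells $A_{T,\ell}=\{\ell-1\le \widetilde n^{\,2/3}r^2<\ell\}$ with $\widetilde n=NT^{1-2\varphi}$. Because the drift is quadratic in $r$, the paper's choice makes the shell index $\ell$ line up directly with both the drift and the radius $\epsilon=\sqrt{\ell}\,\widetilde n^{-1/3}$ fed into Lemma~\ref{Lem:modul1}, so the union bound collapses to a single summable series $\sum_{\ell\ge2}\ell/(\eta(\ell-1)+m^2)^4$ without any case split. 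Your dyadic approach buys some intuition about where the cube-root rate comes from (the explicit $\rho_T$), at the cost of the extra two-constant bookkeeping. One minor slip: on shell $S_j$ the lower radius is $r_j/2$, so the drift lower bound should be $\eta r_j^2/4$ rather than $\eta r_j^2$; this is harmless and absorbed into constants.
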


We derive an extended continuous mapping theorem (CMT) in Lemma \ref{CMT-extension}, in the sense that we consider a transformation by a continuous stochastic process.    This lemma extends  Theorem 1.11.1 of \cite{VW} to allowing stochastic drifting functions $\mathbb G_n$ (while \cite{VW} requires $\mathbb G_n$ be deterministic).

\begin{lem}\label{CMT-extension}
Suppose that as $n \rightarrow\infty$,
\begin{equation*}
\mathbb{G}_{n}\left( x\right) \Rightarrow \mathbb{G}\left( x\right)
\end{equation*}%
over any compact set in $\mathbb{R}^{m},$ where $\mathbb{G}\left( \cdot
\right) $ is a Gaussian process with continuous sample paths. Let $f_{n}$ be
a sequence of random functions from $\mathbb{R}^{k}$ onto $\mathbb{R}^{m}$
and assume that%
\begin{equation*}
f_{n}\left( z\right) \overset{P}{\longrightarrow }   f\left( z\right) ,
\end{equation*}%
uniformly, where $f$ is a deterministic function, and that for any $\eta >0$
there exists $C_{\eta }<\infty $ such that%
\begin{equation*}
\mathbb P\left\{ \left\vert f_{n}\left( z\right) -f_{n}\left( z^{\prime }\right)
\right\vert_2  >C_{\eta }\left\vert z-z^{\prime }\right\vert_2 \; \text{ for all }%
z,z^{\prime }\right\} <\eta ,
\end{equation*}%
for all $n$. Then,
\begin{equation*}
\mathbb{G}_{n}\left( f_{n}\left( z\right) \right) \Rightarrow \mathbb{G}%
\left( f\left( z\right) \right)
\end{equation*}%
over any compact set.
\end{lem}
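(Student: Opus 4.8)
The plan is to reduce the statement to the classical extended continuous mapping theorem (\citealp{VW}, Theorem~1.11.1), in which the perturbations are deterministic, by (i) localizing to a fixed compact set on which $\mathbb{G}_n$ converges, (ii) upgrading $\mathbb{G}_n \Rightarrow \mathbb{G}$ to a \emph{joint} weak convergence of the pair $(\mathbb{G}_n , f_n)$, and (iii) passing to an almost-sure representation on a common probability space. The role of the uniform-in-$z$ convergence $f_n(z)\overset{P}{\longrightarrow} f(z)$ and of the high-probability Lipschitz bound is precisely to make steps (i)--(ii) legitimate within a standard weak-convergence framework.

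First I would fix an arbitrary compact $\mathcal{Z}\subset\mathbb{R}^k$ over which the conclusion is asserted. From the pointwise limit together with the Lipschitz bound, $f$ is continuous on $\mathcal{Z}$, so $f(\mathcal{Z})$ is compact; choose a compact $\mathcal{X}\subset\mathbb{R}^m$ with $f(\mathcal{Z})\subset\operatorname{int}(\mathcal{X})$. Since $\sup_{z\in\mathcal{Z}}|f_n(z)-f(z)|\overset{P}{\longrightarrow}0$, with probability tending to one $f_n(\mathcal{Z})\subset\mathcal{X}$, so that $\mathbb{G}_n(f_n(\cdot))$ is (asymptotically) a functional of $\mathbb{G}_n$ restricted to $\mathcal{X}$, where $\mathbb{G}_n\Rightarrow\mathbb{G}$ by hypothesis. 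Moreover, on the events $E_n^{\eta}=\{|f_n(z)-f_n(z')|\le C_\eta|z-z'| \text{ for all } z,z'\}$, which have probability exceeding $1-\eta$, the restriction $f_n|_{\mathcal{Z}}$ is $C_\eta$-Lipschitz and (being close to the bounded $f$) uniformly bounded, hence lies in a norm-compact subset of $C(\mathcal{Z},\mathbb{R}^m)$ by Arzel\`a--Ascoli; thus $\{f_n|_{\mathcal{Z}}\}$ is asymptotically tight in $C(\mathcal{Z},\mathbb{R}^m)$ and, its only possible limit being the constant $f$, one gets $f_n|_{\mathcal{Z}}\Rightarrow f$ and in fact $\|f_n-f\|_{\mathcal{Z}}\overset{P}{\longrightarrow}0$. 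Because the limit of $f_n$ is a constant, a Slutsky-type argument yields the joint convergence $(\mathbb{G}_n|_{\mathcal{X}},\, f_n|_{\mathcal{Z}}) \Rightarrow (\mathbb{G}|_{\mathcal{X}},\, f)$ in $C(\mathcal{X},\mathbb{R})\times C(\mathcal{Z},\mathbb{R}^m)$.

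Next I would invoke the almost-sure representation theorem (\citealp{VW}, Theorem~1.10.4) to obtain, on a common probability space, copies $(\widetilde{\mathbb{G}}_n,\widetilde f_n)$ and $(\widetilde{\mathbb{G}},f)$ of these laws with $\|\widetilde{\mathbb{G}}_n-\widetilde{\mathbb{G}}\|_{\mathcal{X}}\to 0$ and $\|\widetilde f_n-f\|_{\mathcal{Z}}\to 0$ almost surely. For $n$ large, $\widetilde f_n(\mathcal{Z})\subset\mathcal{X}$ a.s., and
\[
\sup_{z\in\mathcal{Z}}\bigl|\widetilde{\mathbb{G}}_n(\widetilde f_n(z))-\widetilde{\mathbb{G}}(f(z))\bigr| \;\le\; \|\widetilde{\mathbb{G}}_n-\widetilde{\mathbb{G}}\|_{\mathcal{X}} \;+\; \sup_{z\in\mathcal{Z}}\bigl|\widetilde{\mathbb{G}}(\widetilde f_n(z))-\widetilde{\mathbb{G}}(f(z))\bigr|.
\]
The first term vanishes; the second vanishes because $\widetilde{\mathbb{G}}$ is uniformly continuous on the compact $\mathcal{X}$ and $\widetilde f_n\to f$ uniformly on $\mathcal{Z}$. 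Hence $\widetilde{\mathbb{G}}_n(\widetilde f_n(\cdot))\to\widetilde{\mathbb{G}}(f(\cdot))$ a.s. in $C(\mathcal{Z},\mathbb{R})$, thus in distribution; since composition/evaluation is measurable, $\widetilde{\mathbb{G}}_n(\widetilde f_n(\cdot))\overset{d}{=}\mathbb{G}_n(f_n(\cdot))$ and $\widetilde{\mathbb{G}}(f(\cdot))\overset{d}{=}\mathbb{G}(f(\cdot))$, which gives $\mathbb{G}_n(f_n(\cdot))\Rightarrow\mathbb{G}(f(\cdot))$ over $\mathcal{Z}$, as desired.

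I expect the main obstacle to be the bookkeeping required to handle the two independent sources of randomness simultaneously: verifying that $f_n$ and $\mathbb{G}_n$ converge \emph{jointly} (which hinges on $f$ being deterministic), that the random argument $f_n(\mathcal{Z})$ eventually lies in a \emph{fixed} compact $\mathcal{X}$ on which $\mathbb{G}_n$ is known to converge (which uses the uniform-in-$z$ convergence), and that the high-probability Lipschitz bound promotes $\{f_n\}$ from an $\ell^\infty(\mathcal{Z})$-valued sequence to an asymptotically tight $C(\mathcal{Z})$-valued one so that the representation theorem applies. An equivalent route, which avoids the representation theorem, would establish finite-dimensional convergence of $z\mapsto\mathbb{G}_n(f_n(z))$ from the ordinary CMT plus Slutsky (evaluating $\mathbb{G}_n$, which converges in $C(\mathcal{X})$, at $f_n(z)\overset{P}{\longrightarrow}f(z)$, a point of continuity of the sample paths of $\mathbb{G}$) and asymptotic equicontinuity by bounding, on $E_n^{\eta}$, $\sup_{|z-z'|<\delta}|\mathbb{G}_n(f_n(z))-\mathbb{G}_n(f_n(z'))|$ by the modulus of continuity of $\mathbb{G}_n$ over $\mathcal{X}$ at scale $C_\eta\delta$, which is asymptotically negligible as $\delta\downarrow0$ because the limit $\mathbb{G}$ has uniformly continuous paths on $\mathcal{X}$.
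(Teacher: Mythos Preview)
Your proposal is correct. The paper's proof, however, follows precisely your \emph{alternative} route rather than your main one: it verifies weak convergence of $z\mapsto\mathbb{G}_n(f_n(z))$ by (i) establishing stochastic equicontinuity---intersecting with the Lipschitz event $E_n^\eta$ and a boundedness event $\{\sup_z|f_n(z)|\le C\}$ to reduce $\sup_{|z-z'|<\delta}|\mathbb{G}_n(f_n(z))-\mathbb{G}_n(f_n(z'))|$ to the modulus of continuity of $\mathbb{G}_n$ at scale $\delta/C$---and (ii) obtaining finite-dimensional convergence from $\mathbb{G}_n(f_n(z))-\mathbb{G}_n(f(z))\overset{p}{\to}0$ (again via equicontinuity of $\mathbb{G}_n$) together with $\mathbb{G}_n(f(z))\Rightarrow\mathbb{G}(f(z))$.

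Your primary route via joint convergence $(\mathbb{G}_n,f_n)\Rightarrow(\mathbb{G},f)$ and almost-sure representation is a genuinely different, somewhat higher-level argument. It trades the hands-on equicontinuity/fidi verification for a single invocation of Skorokhod plus a clean sup-norm decomposition, at the cost of the bookkeeping you flag (tightness of $f_n$ in $C(\mathcal{Z},\mathbb{R}^m)$ via Arzel\`a--Ascoli, and localization of the argument to a fixed compact $\mathcal{X}$). The paper's approach is shorter and more elementary, requiring only the definition of weak convergence in $\ell^\infty$ (equicontinuity $+$ fidi) and avoiding the representation theorem entirely; your approach is perhaps more transparent about \emph{why} the result holds---namely that composition $(g,h)\mapsto g\circ h$ is continuous on the relevant compact sets---and would generalize more readily if, say, $f$ were random but independent of $\mathbb{G}$.
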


\subsection{Proofs of Lemmas}

\begin{proof}[Proof of Lemma \ref{Lem:modul1}]
	In this proof, $c,C$ and so on denote generic constants. Let the dimension
	of $q_{t}$ be denoted by $d_f = d+1 $ and partition $\gamma =\left( \psi ^{\prime},c\right) ^{\prime }$ and $ q_t =\left( q_{1t} ^{\prime },-1\right) ^{\prime }$ Also let%
	\begin{equation*}
	J_{T}\left( \gamma \right) =\frac{1}{T^{1/2}}\sum_{t=1}^{T}\left(
	z_{t}1_{t}\left( \vec{\gamma},\gamma \right) -  \mathbb  Ez_{t}1_{t}\left( \vec{\gamma}%
	,\gamma \right) \right) .
	\end{equation*}%
	First, note that Lemma 3.6 of \citet{Peligrad:1982} implies that
	there is a universal constant $C$, depending only on the
	$\rho_m$'s, such that
	for any $\gamma _{1}$ and $\gamma _{2},$%
	\begin{align}\label{eq:Peligrad3.6}
	\begin{split}
	& \mathbb E\left\vert \frac{1}{T^{1/2}}\sum_{t=1}^{T}\left( z_{t}1_{t}\left( \gamma
	_{1},\gamma _{2}\right) - \mathbb Ez_{t}1_{t}\left( \gamma _{1},\gamma _{2}\right)
	\right) \right\vert ^{4} \\
	&\leq C\left( T^{-1} \mathbb E\left\vert z_{t}\right\vert
	^{4}~1_{t}\left( \gamma _{1},\gamma _{2}\right) +\left(  \mathbb E\left\vert
	z_{t}\right\vert ^{2}~1_{t}\left( \gamma _{1},\gamma _{2}\right) \right)
	^{2}\right) \text{.}
	\end{split}
	\end{align}
	Consider $\gamma _{1}=\left( \psi ^{\prime },c_{1}\right) ^{\prime }$ and $%
	\gamma _{2}=\left( \psi ^{\prime },c_{2}\right) ^{\prime },$ which are
	identical other than the last elements. Then,
	\begin{equation*}
	1_{t}\left( \gamma _{1},\gamma _{2}\right) =1\left\{ c_{2}<q_{1t}^{\prime
	}\psi \leq c_{1}\right\}
	\end{equation*}%
	and thus there is a universal constant $C$ such that
	\begin{eqnarray*}
		 \mathbb E\left\vert z_{t}\right\vert ^{k}~1_{t}\left( \gamma _{1},\gamma _{2}\right)
		&=& \mathbb E\left[  \mathbb E\left( \left\vert z_{t}\right\vert ^{k} \Big| q_{t}\right) 1_{t}\left(
		\gamma _{1},\gamma _{2}\right) \right]  \\
		&\leq &C \mathbb E1_{t}\left( \gamma _{1},\gamma _{2}\right) \leq C^{\prime
		}\left\vert c_{1}-c_{2}\right\vert
	\end{eqnarray*}%
	for $k=2,4$, as the densities of $q_{t}'\gamma $ are bounded uniformly. Thus, for
	any $c_{1},c_{2}$ such that $\left\vert c_{1}-c_{2}\right\vert \geq T^{-1},$%
	\begin{equation}
	\sup_{\psi }\mathbb{E}\left\vert \frac{1}{T^{1/2}}\sum_{t=1}^{T}\left(
	z_{t}1_{t}\left( \gamma _{1},\gamma _{2}\right) - \mathbb Ez_{t}1_{t}\left( \gamma
	_{1},\gamma _{2}\right) \right) \right\vert ^{4}\leq C\left\vert
	c_{1}-c_{2}\right\vert ^{2}.  \label{eq:rho-mom4}
	\end{equation}
	Here, recall that  $\psi$ is the common element between $\gamma_1$ and $\gamma_2$.

	Next, by \citet{bickel1971}, their equation (1), that
	\begin{equation*}
	\sup_{\gamma }\left\vert J_{T}\left( \gamma \right) \right\vert \leq d\cdot
	M^{\prime \prime }+\left\vert J_{T}\left( \widetilde{\gamma}\right) \right\vert ,
	\end{equation*}%
	where $\widetilde{\gamma}$ is the elementwise increament of $\vec{\gamma} $ by $\epsilon $ and the supremum is taken
	over a hyper cube $\left\{ \gamma :0\leq \gamma _{j}-\vec{\gamma}_{j}\leq
	\epsilon ,j=1,...,d\right\} $ and an upper bound for $M^{\prime \prime }$ is
	given by their Theorem 1. The precise definition of $M^{\prime \prime }$ is
	referred to Bickel and Wichura. It is sufficient to show that each of $%
	M^{\prime \prime }$ and $\left\vert J_{T}\left( \widetilde{\gamma}\right)
	\right\vert $ satisfies the conclusion of the lemma since $\left\vert
	a\right\vert +\left\vert b\right\vert >2c$ implies that $\left\vert
	a\right\vert >c$ or $\left\vert b\right\vert >c$.

	To apply their Theorem 1, we need to consider the increment of the process $%
	J_{T}$ around a block\footnote{%
		It is sufficient to consider blocks with side length at least $n^{-1}$ for
		the same reason as the remarks in the last paragraph in p. 1665.} $B=(\gamma
	_{1},\gamma _{2}]=(\gamma _{12,}\gamma _{22}]\times \cdots \times
	(c_{1},c_{2}]$ with each side of length greater than equal to $T^{-1}$, that
	is, consider%
	\begin{eqnarray*}
		J_{T}\left( B\right)  &=&\sum_{k_{1}=0,1}\cdots \sum_{k_{d+1}=0,1}\left(
		-1\right) ^{d-k_{1}-\cdots -k_{d+1}}J_{T}\left( \gamma _{11}+k_{1}\left(
		\gamma _{21}-\gamma _{11}\right) ,...,c_{1}+k_{d+1}\left( c_{2}-c_{1}\right)
		\right)  \\
		&=&\sum_{k_{1}=0,1}\cdots \sum_{k_{d}=0,1}\left( -1\right)
		^{d-k_{1}-\cdots -k_{d}} \\
		&&\times \left( J_{T}\left( \gamma _{11}+k_{1}\left( \gamma _{21}-\gamma
		_{11}\right) ,...,c_{1}\right) -J_{T}\left(\gamma _{11}+k_{1}\left(
		\gamma _{21}-\gamma _{11}\right) ,...,c_{2}\right) \right) .
	\end{eqnarray*}%
	Then, it follows from the $c_{r}$-inequality and (\ref{eq:rho-mom4}) that
	for some $C,C^{\prime },C^{\prime \prime }<\infty $
	\begin{eqnarray*}
		&& \mathbb E\left\vert J_{T}\left( B\right) \right\vert ^{4} \\
		&\leq &C\sum_{k1=0,1}\cdots \sum_{k_{d}=0,1}\mathbb{E}\left\vert J_{T}\left(
		\gamma _{11}+k_{1}\left( \gamma _{21}-\gamma _{11}\right)
		,...,c_{1}\right) -J_{T}\left( \gamma _{11}+k_{1}\left( \gamma
		_{21}-\gamma _{11}\right) ,...,c_{2}\right) \right\vert ^{4} \\
		&\leq &C^{\prime }\sup_{\psi }\mathbb{E}\left\vert \frac{1}{T^{1/2}}%
		\sum_{t=1}^{T}\left( z_{t}1_{t}\left( \gamma _{1},\gamma _{2}\right)
		-\mathbb{E} z_{t}1_{t}\left( \gamma _{1},\gamma _{2}\right) \right) \right\vert ^{4},\
		\text{for }\gamma _{j}=\left( \psi ^{\prime },c_{j}\right) ,j=1,2 \\
		&\leq &C^{\prime \prime }\left\vert c_{1}-c_{2}\right\vert ^{2}.
	\end{eqnarray*}%
	Now, without loss of generality we can assume that $\mu \left( B\right) \geq
	C^{\prime \prime \prime }\left\vert c_{1}-c_{2}\right\vert ^{d},$ where $\mu
	$ denotes the Lebesque measure in $\mathbb{R}^{d}$, since we can derive the
	same bound by choosing the smallest side length of $B$ as $c_{2}-c_{1}$.
	This implies by the Cauchy-Schwarz inequality that their $\mathcal{C}\left(
	\beta ,\gamma \right) $ condition holds with $\beta =4$ and $\gamma =2/d$,
	and thus, by their Theorem 1, we conclude
	\begin{equation*}
	\mathbb P \left\{ M^{\prime \prime }>K\right\} \leq \frac{C}{K^{4}}\mu \left(
	T\right) ^{2/d}\leq \frac{C}{K^{4}}\epsilon ^{2},
	\end{equation*}%
	for some $C<\infty $.

	Furthermore, the Markov inequality, the moment bound in (\ref{eq:Peligrad3.6}), the boundedness of
	the density of $ q_{t}'\gamma $ imply that
	\begin{equation*}
	\mathbb P \left\{ \left\vert J_{T}\left( \widetilde{\gamma}\right) \right\vert
	>K\right\} \leq \frac{C}{K^{4}}\epsilon ^{2},
	\end{equation*}%
	for some $C<\infty $. This completes the proof.
\end{proof}

\begin{proof}[Proof of Lemma \ref{Lem:rate_gm}]
	Define $A_{T,j}=\{\theta :(j-1)T^{-1+2\varphi }\leq \left\vert \gamma
	-\gamma _{0}\right\vert _{2}<jT^{-1+2\varphi }\}$ and
	\begin{equation*}
	R_{T}^{2}=T\sup_{T^{-1+2\varphi} < \left\vert  \gamma -\gamma_{0}\right\vert \leq C}\left[
	\left\vert \mathbb{D}_{T}\left( \gamma \right) \right\vert -\eta
	T^{-2\varphi }\left\vert  \gamma -\gamma_{0}\right\vert _{2}\right] \text{,}
	\end{equation*}%
	where $\mathbb{D}_{T}\left( \gamma \right) =\frac{1}{T^{1+\varphi }}%
	\sum_{t=1}^{T}\left( z_{t}\left( 1_{t}\left( \gamma \right) -1_{t}\left(
	\gamma _{0}\right) \right) - \mathbb Ez_{t}\left( 1_{t}\left( \gamma \right)
	-1_{t}\left( \gamma _{0}\right) \right) \right) $. Then, for any $m>0$,
	\begin{align*}
	& \mathbb P \left\{ R_{T}>m\right\}  \\
	& =\mathbb P \left\{ T\left\vert \mathbb{D}_{T}\left( \gamma \right) \right\vert
	>\eta | \gamma -\gamma_{0}|T^{1-2\varphi }+m^{2}\ {{\text{for some }}}\gamma
	\right\}  \\
	& \leq \sum_{\ell =2}^{\infty }\mathbb P \left\{ T\left\vert \mathbb{D}_{T}\left(
	\gamma \right) \right\vert >\eta (\ell -1)+m^{2}\ {{\text{for some }}}\gamma
	\in A_{T\ell }\right\}  \\
	& \leq C^{\prime }\sum_{\ell =2}^{\infty }\frac{\ell ^{2}}{\left( \eta (\ell
		-1)+m^{2}\right) ^{4}},
	\end{align*}%
	\newline
	where the last equality is due to Lemma \ref{Lem:modul1} with $%
	K=T^{-1/2+\varphi }\left( \eta (\ell -1)+m^{2}\right) $ and $\epsilon
	=\ell T^{-1+2\varphi }$. The last term is finite for any $\eta >0\ $and can be
	made arbitrarily small by choosing sufficiently large $m$, which completes
	the proof.
\end{proof}

\begin{proof}[Proof of Lemma \ref{Lem:rate_gm_1}]
	Define $A_{T,j}=\{\gamma :(j-1)\leq \widetilde{n}^{2/3}\left\vert \gamma -\gamma
	_{0}\right\vert _{2}^{2}<j\}$ with $\widetilde{n}=NT^{1-2\varphi }$ and
	\begin{equation*}
	R_{T}^{2}=\widetilde{n}^{2/3}\sup_{T^{-1+2\varphi} < \left\vert \gamma -\gamma _{0}\right\vert \leq C}
	\left[ \left\vert \mathbb{D}_{T}\left( \gamma \right) \right\vert -\eta
	\left\vert \gamma -\gamma _{0}\right\vert _{2}^{2}\right] \text{,}
	\end{equation*}%
	where $\mathbb{D}_{T}\left( \gamma \right) =\frac{1}{\sqrt{N}T^{1-\varphi }}%
	\sum_{t=1}^{T}\left( z_{t}\left( 1_{t}\left( \gamma \right) -1_{t}\left(
	\gamma _{0}\right) \right) -\mathbb{E}z_{t}\left( 1_{t}\left( \gamma \right)
	-1_{t}\left( \gamma _{0}\right) \right) \right) $. Then, for any $
	\varepsilon >0$, we can find $m$ such that%
	\begin{align*}
	& \mathbb P \left\{ R_{T}>m\right\}  =\mathbb P \left\{ \widetilde{n}^{2/3}\left\vert \mathbb{D}_{T}\left( \gamma \right)
	\right\vert >\eta \widetilde{n}^{2/3}|\gamma -\gamma _{0}|^{2}+m^{2}\ {{\text{for some }}%
	}\gamma \right\} \\
	& \leq \sum_{\ell =2}^{\infty }\mathbb P \left\{ \widetilde{n}^{2/3}\left\vert \mathbb{D}%
	_{T}\left( \gamma \right) \right\vert >\eta (\ell -1)+m^{2}\ {{\text{for
				some }}}\gamma \in A_{T\ell }\right\} \\
	& \leq C^{\prime }\sum_{\ell =2}^{\infty }\frac{%
		\widetilde{n}^{2/3}}{\left( \eta (\ell -1)+m^{2}\right) ^{4}}\frac{\ell }{\widetilde{n}^{2/3}}  \leq \varepsilon
	\end{align*}%
	where the first and second inequalities follow from the union bound and Lemma \ref{Lem:modul1} with $%
	K=\widetilde{n}^{-1/6}\left( \eta (\ell -1)+m^{2}\right) $ and $\epsilon =\sqrt{\frac{%
			\ell }{\widetilde{n}^{2/3}}}$, respectively, and the third by choosing sufficiently large $m$. This
	completes the proof.
\end{proof}

\begin{proof}[Proof of Lemma \ref{CMT-extension}]
	First, we show the stochastic equicontinuity of $\mathbb{G}_{n}\left(
	f_{n}\left( z\right) \right) .$ For any positive $\varepsilon \ $and $\eta ,$
	there exist $\delta >0$ and $N$ such that for all $n>N,$
	\begin{align*}
	&\mathbb{P}\left\{ \sup_{\left\vert z-z^{\prime }\right\vert_2 <\delta }\left\vert
	\mathbb{G}_{n}\left( f_{n}\left( z\right) \right) -\mathbb{G}_{n}\left(
	f_{n}\left( z^{\prime }\right) \right) \right\vert_2 >\eta \right\}  \\
	&\leq \mathbb{P}\bigg\{ \sup_{\left\vert z-z^{\prime }\right\vert_2 <\delta
	}\left\vert \mathbb{G}_{n}\left( f_{n}\left( z\right) \right) -\mathbb{G}%
	_{n}\left( f_{n}\left( z^{\prime }\right) \right) \right\vert_2 >\eta \ \text{%
		and }\left\vert f_{n}\left( z\right) -f_{n}\left( z^{\prime }\right)
	\right\vert_2 \leq C\left\vert z-z^{\prime }\right\vert_2 \ \\
	&\;\;\;\;\;\;\;\;\;\; \text{and }%
	\sup_{z}\left\vert f_{n}\left( z\right) \right\vert_2 \leq C \bigg\}  \\
	&\;\;\;\;+\mathbb{P}\left\{ \left\vert f_{n}\left( z\right) -f_{n}\left( z^{\prime }\right)
	\right\vert_2 >C\left\vert z-z^{\prime }\right\vert_2 \right\} +\mathbb{P}\left\{
	\sup_{z}\left\vert f_{n}\left( z\right) \right\vert_2 >C\right\}  \\
	&\leq \mathbb{P}\left\{ \sup_{\left\vert x-x^{\prime }\right\vert_2 <\delta
		/C}\left\vert \mathbb{G}_{n}\left( x\right) -\mathbb{G}_{n}\left( x^{\prime
	}\right) \right\vert_2 >\eta \right\} +\frac{\varepsilon }{2} \\
	&\leq \varepsilon ,
	\end{align*}%
	where the second inequality is due to the set inclusion and the given
	condition on $f_{n}$ with boundedness of $z$ and the last one follows from
	the stochastic equicontinuity of $\mathbb{G}_{n}$.

	Second, for the fidi note that
	\begin{equation*}
	\mathbb{G}_{n}\left( f_{n}\left( z\right) \right) -\mathbb{G}_{n}\left(
	f\left( z\right) \right) \overset{p}{\longrightarrow }0
	\end{equation*}%
	due to the stochastic equicontinuity of $\mathbb{G}_{n}$ as $f_{n}\left(
	z\right) \overset{p}{\longrightarrow }f\left( z\right) $. Therefore, for any
	finite collection $\left( z_{1},...,z_{p}\right) ^{\prime }$, $\left(
	\mathbb{G}_{n}\left( f_{n}\left( z_{1}\right) \right) ,...,\mathbb{G}%
	_{n}\left( f_{n}\left( z_{p}\right) \right) \right) ^{\prime }=\left(
	\mathbb{G}_{n}\left( f\left( z_{1}\right) \right) ,...,\mathbb{G}_{n}\left(
	f\left( z_{p}\right) \right) \right) ^{\prime }+o_P\left( 1\right) \overset%
	{d}{\longrightarrow }\left( \mathbb{G}\left( f\left( z_{1}\right) \right)
	,...,\mathbb{G}\left( f\left( z_{p}\right) \right) \right) ^{\prime }$ due
	to the weak convergence of $\mathbb{G}_{n}$.
\end{proof}

\section{Additional Tables of Simulation Results}\label{sec:add:sim:appendix}

In this part of the appendix, we collect the additional tables of the simulation results that are omitted from the main text.

\begin{table}[htbp]
\centering
\begin{threeparttable}
\caption{Simulation Results: Baseline Model $(T=N=200)$}\label{tb1:base}
\begin{tabular}{rrrr}
\\
  \hline
 & Mean Bias & RMSE & Coverage \\
  \hline
  \multicolumn{4}{l}{Scenario (i): \underline{Oracle}}\\
  $\bt_1$ & -0.0025 & 0.0427 & 0.948 \\
  $\bt_2$ & 0.0015 & 0.0383 & 0.947 \\
  $\dt_1$ & 0.0012 & 0.0749 & 0.962 \\
  $\dt_2$ & -0.0039 & 0.0678 & 0.959 \\
\\
  \multicolumn{4}{l}{Scenario (ii): \underline{Observed Factors/No Selection on $g_t$}}\\
  $\bt_1$ & -0.0033 & 0.0430 & 0.943 \\
  $\bt_2$ & 0.0013 & 0.0385 & 0.942 \\
  $\dt_1$ & 0.0042 & 0.0759 & 0.956 \\
  $\dt_2$ & -0.0027 & 0.0684 & 0.954 \\
  $\phi_2$ & 0.0002 & 0.0655 &  \\
  $\phi_4$ & -0.0011 & 0.0495 &  \\
  \multicolumn{3}{l}{Ave.\ Cor.\ Regime Prediction:}& 0.9929 (0.0074) \\
\\
  \multicolumn{4}{l}{Scenario (iii): \underline{Observed Factors/Selection on $g_t$}}\\
  $\bt_1$ &  -0.0034 & 0.0431 & 0.943 \\
  $\bt_2$ &  0.0013 & 0.0385 & 0.940 \\
  $\dt_1$ &  0.0045 & 0.0759 & 0.959 \\
  $\dt_2$ &  -0.0027 & 0.0685 & 0.954 \\
  $\phi_2$ & -0.0053 & 0.0646 &  \\
  $\phi_3$ & 0.0010 & 0.0110 &  \\
  $\phi_4$ & -0.0023 & 0.0526 &  \\
  \multicolumn{3}{l}{Ave.\ Cor.\ Regime Prediction:}& 0.9925 (0.0080) \\
  \multicolumn{3}{l}{Correct Factor Selection:}& 0.985  \\
\\
  \multicolumn{4}{l}{Scenario (iv): \underline{Unobserved Factors}}\\
  $\bt_1$ &  -0.0002 & 0.0435 & 0.945 \\
  $\bt_2$ &  0.0032 & 0.0391 & 0.940 \\
  $\dt_1$ &  -0.0062 & 0.0795 & 0.952 \\
  $\dt_2$ &  -0.0085 & 0.0702 & 0.957 \\
  $\gm_2$ &  -0.0003 & 0.5098 &  \\
  $\gm_3$ &  -0.0061 & 0.4977 &  \\
  $\gm_4$ &  -0.0061 & 0.3784 &  \\
  \multicolumn{3}{l}{Ave.\ Cor.\ Regime Prediction:}& 0.9799 (0.0122) \\
   \hline
\end{tabular}
\begin{tablenotes}
\item \footnotesize \emph{Note: }The average correct regime prediction (Ave.\ Cor.\ Regime Prediction) measures the average proportion such that the predicted regime of $1\{g_t'\widehat{\phi}>0\}$ (or $1\{f_t'\widehat{\gm}>0\}$ in (iv)) is equal to the true regime of $1\{g_t'\phi_0 > 0\}$ (or $1\{f_t'\gm_0>0\}$ in (iv)):
$
\widehat{E}\left(\frac{1}{T}\sum_{t=1}^T 1\left\{1\{g_t'\hat{\phi}>0\} = 1\{g_t'\phi_0>0\} \right\}\right),
$
where the expectation $\widehat{E}$ is taken over simulation draws.
\end{tablenotes}
\end{threeparttable}
\end{table}

\begin{table}[htbp]
\centering
\begin{threeparttable}
\caption{Unobserved Factors with Different $N$ Sizes}\label{tb2:diff_N}
\begin{tabular}{lrr}
  \hline
 & Mean Bias & RMSE  \\
  \hline
  \underline{$N=100$} \\
   $\bt_1$ & 0.0097 & 0.0473  \\
   $\bt_2$ & 0.0077 & 0.0407  \\
   $\dt_1$ & -0.0397 & 0.1015 \\
   $\dt_2$ & -0.0376 & 0.0939 \\
   $\gm_2 / \gm_1$ & 0.0016 & 0.0802   \\
  \multicolumn{2}{l}{Ave.\ Cor.\ Regime Prediction:}& 0.9741 (0.0133) \\
  \\
  \underline{$N=200$} \\
  $\bt_1$ &  0.0067 & 0.0462  \\
  $\bt_2$ &  0.0050 & 0.0386  \\
  $\dt_1$ &  -0.0252 & 0.0966 \\
  $\dt_2$ &  -0.0241 & 0.0850 \\
  $\gm_2 / \gm_1$ & -0.0014 & 0.0629   \\
  \multicolumn{2}{l}{Ave.\ Cor.\ Regime Prediction:}& 0.9821 (0.0107) \\
  \\
  \underline{$N=400$} \\
  $\bt_1$ & 0.0038 & 0.0460  \\
  $\bt_2$ & 0.0028 & 0.0379  \\
  $\dt_1$ & -0.0129 & 0.0880 \\
  $\dt_2$ & -0.0142 & 0.0795 \\
  $\gm_2 / \gm_1$ & -0.0010 & 0.0500 \\
  \multicolumn{2}{l}{Ave.\ Cor.\ Regime Prediction:}& 0.9870 (0.0087) \\
\\
  \underline{$N=1600$} \\
  $\bt_1$ &  0.0010 & 0.0443 \\
  $\bt_2$ & 0.0006  & 0.0373  \\
  $\dt_1$ & -0.0029 & 0.0851 \\
  $\dt_2$ & -0.0056 & 0.0759 \\
  $\gm_2 / \gm_1$ & 0.0011 & 0.0392  \\
  \multicolumn{2}{l}{Ave.\ Cor.\ Regime Prediction:}& 0.9934 (0.0062) \\
   \hline
\end{tabular}
\begin{tablenotes}
\item \footnotesize \emph{Note: }See the note under Table \ref{tb1:base} for the definition of {Ave.\ Cor.\ Regime Prediction}.
\end{tablenotes}

\end{threeparttable}
\end{table}

\begin{table}[htbp]
\centering
\begin{threeparttable}
\caption{Computation Time for Different Sample Sizes (unit=second)} \label{tb:comtime.T}
\begin{tabular}{llrrrr}
  \hline
         & Algorithm & T=200 & T=300 & T=400 & T=500 \\
  \hline
  \noalign{\vskip 2mm}
{Min}    & MIQP & 1.87 & 2.85 & 3.97 & 5.23 \\
  \noalign{\vskip 2mm}
{Median} & MIQP & 1.99 & 3.04 & 4.39 & 5.66 \\
  \noalign{\vskip 2mm}
  {Mean} & MIQP & 1.99 & 3.09 & 4.34 & 5.66 \\
  \noalign{\vskip 2mm}
{Max}    & MIQP & 2.21 & 3.69 & 4.73 & 6.07 \\
   \hline
\end{tabular}

\end{threeparttable}
\end{table}

\begin{table}[htbp]
\centering
\begin{threeparttable}
\caption{Computation Time for Different Sizes of $x_t$ (unit=second) }\label{tb:comtime.X}
\begin{tabular}{llrrrr}
  \hline
  & Algorithm      & $d_x=1$ & $d_x=2$ & $d_x=3$ & $d_x=4$ \\
  \hline
  \noalign{\vskip 2mm}
{Min}    & MIQP & 1.87 & 2.16 & 2.39 & 2.46 \\
  \noalign{\vskip 2mm}
{Median} & MIQP & 1.99 & 2.31 & 2.52 & 2.76 \\
  \noalign{\vskip 2mm}
  {Mean} & MIQP & 1.99 & 2.30 & 2.51 & 2.76 \\
  \noalign{\vskip 2mm}
{Max}    & MIQP & 2.21 & 2.54 & 2.85 & 3.06 \\
   \hline
\end{tabular}
\end{threeparttable}
\end{table}

\begin{table}[htp]
\centering
\begin{threeparttable}
\caption{Computation Time for Different Sizes of $g_t$ (unit=second) }\label{tb:comtime.G}
\begin{tabular}{llrrrr}
  \hline
    & Algorithm      & $d_g=2$ & $d_g=3$ & $d_g=4$ & $d_g=5$ \\
  \hline
  \noalign{\vskip 2mm}
{Min}
                        & MIQP & 1.87 & 2.04 & 4.79 & 78.78 \\
  \noalign{\vskip 2mm}
{Median}
                        & MIQP & 1.99 & 2.17 & 6.42 & 410.35 \\
  \noalign{\vskip 2mm}
{Mean}
                        & MIQP & 1.99 & 2.18 & 6.56 & 445.15 \\
  \noalign{\vskip 2mm}
{Max}
                        & MIQP & 2.21 & 2.38 & 9.68 & 1389.86 \\
   \hline
\end{tabular}
\end{threeparttable}
\end{table}

\begin{table}[thp]
\centering
\caption{Computation Time of Large Models: $T=500$}
\begin{tabular}{llrrr}
\hline
$T=500$   & Algorithm & $d_g=6$ & $d_g=8$ & $d_g=10$\\
\hline
{\underline{$d_x=6$}}  &        &         &         &          \\
\multirow{2}{*}{Min Time}    & BCD   & 611.34  & 610.69  & 610.42  \\
            & MIQP  & 1806.03 & 1806.08 & 1806.15 \\
\multirow{2}{*}{Median Time} & BCD   & 612.35  & 611.61  & 611.76  \\
            & MIQP  & 1806.21 & 1806.18 & 1806.22 \\
\multirow{2}{*}{Max Time}    & BCD   & 614.57  & 612.63  & 612.34  \\
            & MIQP  & 1806.77 & 1807.28 & 1806.48 \\
\multirow{2}{*}{Median Obj.} & BCD   & 0.23    & 0.24    & 0.24    \\
            & MIQP  & 0.25    & 0.3     & 0.29    \\
\multirow{2}{*}{Convergence} & BCD   & 1       & 1       & 1       \\
            & MIQP  & 0       & 0       & 0       \\
            &        &         &         &         \\
{\underline{$d_x=8$}}  &        &         &         &          \\
\multirow{2}{*}{Min Time}    & BCD   & 611.70   & 611.75  & 611.97  \\
            & MIQP  & 1806.58 & 1806.59 & 1806.7  \\
\multirow{2}{*}{Median Time} & BCD   & 613.49  & 612.03  & 640.21  \\
            & MIQP  & 1806.74 & 1806.74 & 1806.78 \\
\multirow{2}{*}{Max Time}    & BCD   & 614.59  & 612.69  & 1151.22 \\
            & MIQP  & 1807.32 & 1807.38 & 1807.51 \\
\multirow{2}{*}{Median Obj.} & BCD   & 0.24    & 0.24    & 0.59    \\
            & MIQP  & 0.27    & 0.37    & 1.24    \\
\multirow{2}{*}{Convergence} & BCD   & 1       & 1       & 1       \\
            & MIQP  & 0       & 0       & 0       \\
\\
{\underline{$d_x=10$}}  &        &         &         &          \\
\multirow{2}{*}{Min Time}    & BCD   & 613.51  & 612.74  & 612.84  \\
            & MIQP  & 1807.77 & 1807.79 & 1807.77 \\
\multirow{2}{*}{Median Time} & BCD   & 614.69  & 613.14  & 614.23  \\
            & MIQP  & 1807.84 & 1807.92 & 1807.84 \\
\multirow{2}{*}{Max Time}    & BCD   & 616.05  & 613.66  & 1693.54 \\
            & MIQP  & 1808.37 & 1808.32 & 1808.04 \\
\multirow{2}{*}{Median Obj.} & BCD   & 0.24    & 0.23    & 0.27    \\
            & MIQP  & 0.27    & 0.69    & 1.42    \\
\multirow{2}{*}{Convergence} & BCD   & 1       & 1       & 1       \\
            & MIQP  & 0       & 0       & 0       \\
          \hline
\end{tabular}
\end{table}

\begin{table}[thp]
\centering
\caption{Computation Time of Large Models: $T=1000$} \label{tb:com-large-1000}
\begin{tabular}{llrrr}
\hline
$T=1,000$   & Algorithm & $d_g=6$ & $d_g=8$ & $d_g=10$\\
\hline
{\underline{$d_x=6$}}        &        &         &         &          \\
{Min Time}    & BCD   &  628.65  &  625.61  &  627.10  \\
{Median Time} & BCD   &  635.58  &  645.17  &  634.25  \\
{Max Time}    & BCD   & 1344.56  & 1344.56  & 1163.16  \\
{Median Obj.} & BCD   &    0.25  &    0.26  &    0.24  \\
{Convergence} & BCD   &    1.00  &    1.00  &    1.00  \\
\hline
\end{tabular}
\end{table}

\clearpage

{\singlespacing

\bibliographystyle{economet}
\bibliography{LLSS_bib}

@string{JRSSB  = {Journal of the Royal Statistical Society, Series B}}

@article{seo2018local,
	title={Local M-estimation with discontinuous criterion for dependent and limited observations},
	author={Seo, Myung Hwan and Otsu, Taisuke},
	journal={The Annals of Statistics},
	volume={46},
	number={1},
	pages={344--369},
	year={2018},
	publisher={Institute of Mathematical Statistics}
}

@article{bai2008forecasting,
  title={Forecasting economic time series using targeted predictors},
  author={Bai, Jushan and Ng, Serena},
  journal={J. Econom},
  volume={146},
  number={2},
  pages={304--317},
  year={2008},
  publisher={Elsevier}
}

@article{gonccalves2018bootstrapping,
  title={Bootstrapping factor models with cross sectional dependence},
  author={Gon{\c{c}}alves, S{\'\i}lvia and Perron, Benoit},
  journal={J. Econom},
  volume={},
  number={},
  pages={},
  year={2019},
  note={forthcoming},
}

@article{gonccalves2014bootstrapping,
  title={Bootstrapping factor-augmented regression models},
  author={Gon{\c{c}}alves, S{\'\i}lvia and Perron, Benoit},
  journal={J. Econom},
  volume={182},
  number={1},
  pages={156--173},
  year={2014},
  publisher={Elsevier}
}

@article{mckeague2010fractals,
  title={Fractals with point impact in functional linear regression},
  author={McKeague, Ian W and Sen, Bodhisattva},
  journal={Annals of statistics},
  volume={38},
  number={4},
  pages={2559--2586},
  year={2010},
  publisher={NIH Public Access}
}

@article{horvath1997effect,
  title={The effect of long-range dependence on change-point estimators},
  author={Horv{\'a}th, Lajos and Kokoszka, Piotr},
  journal={Journal of Statistical Planning and Inference},
  volume={64},
  number={1},
  pages={57--81},
  year={1997},
  publisher={Elsevier}
}

@article{ling_1999, title={On the probabilistic properties of a double threshold ARMA conditional heteroskedastic model}, volume={36}, DOI={10.1239/jap/1032374627}, number={3}, journal={Journal of Applied Probability}, publisher={Cambridge University Press}, author={Ling, Shiqing}, year={1999}, pages={688--705}}

@article{bai1994least,
  title={Least squares estimation of a shift in linear processes},
  author={Bai, Jushan},
  journal={Journal of Time Series Analysis},
  volume={15},
  number={5},
  pages={453--472},
  year={1994},
  publisher={Wiley Online Library}
}

@article{hawkins1986simple,
  title={A simple least squares method for estimating a change in mean},
  author={Hawkins, DL and Gallant, AR and Fuller, W},
  journal={Communications in Statistics-Simulation and Computation},
  volume={15},
  number={3},
  pages={523--530},
  year={1986},
  publisher={Taylor \& Francis}
}

@article{andrews2002higher,
  title={Higher-Order Improvements of a Computationally Attractive k-Step Bootstrap for Extremum Estimators},
  author={Andrews, Donald WK},
  journal={Econometrica},
  volume={70},
  number={1},
  pages={119--162},
  year={2002},
  publisher={Wiley Online Library}
}

@article{arcones1994central,
	title={Central limit theorems for empirical andu-processes of stationary mixing sequences},
	author={Arcones, Miguel Angel and Yu, Bin},
	journal={Journal of Theoretical Probability},
	volume={7},
	number={1},
	pages={47--71},
	year={1994},
	publisher={Springer}
}

@article{bai03,
	author	= {Bai, J.},
	title		= {Inferential theory for factor models of large dimensions},
	journal	= {Econometrica},
	volume	= {71},
	pages 	= {135-171},
	year		= {2003}
}

@article {BN06,
author	= {Bai, J. and Ng, S.},
title = {Confidence Intervals for Diffusion Index Forecasts and Inference for Factor-Augmented Regressions},
journal = {Econometrica},
volume = {74},
number = {4},
pages = {1133--1150},
keywords = {Panel data, common factors, generated regressors, cross-section dependence, robust covariance matrix},
year = {2006},
}

@article {BN08,
author	= {Bai, J. and Ng, S.},
title = {Extremum Estimation when the Predictors are Estimated from Large Panels},
journal = {Annals of Economics and Fiance},
volume = {9},
number = {2},
pages = {201--222},
year = {2008},
}

@article{chan1993consistency,
  title={Consistency and limiting distribution of the least squares estimator of a threshold autoregressive model},
  author={Chan, Kung-Sik},
  journal={Annals of Statistics},
volume = "21",
number = "1",
  pages={520--533},
  year={1993},
  publisher={JSTOR}
}

@article{POET,
	author	= {Fan, J. and Liao, Y. and Mincheva, M.},
	title		= {Large covariance estimation by thresholding principal orthogonal complements (with discussion)},
	journal	= JRSSB,
	volume	= {75},
	pages 	= {603-680},
	year		= {2013}
}

@article{hansen2000sample,
  title={Sample splitting and threshold estimation},
author = "Bruce E. Hansen",
  journal={Econometrica},
  volume={68},
  number={3},
  pages={575--603},
  year={2000},
  publisher={Wiley Online Library}
}

@article {Seijo:Sen:11a,
author = {Emilio Seijo and Bodhisattva Sen},
title = {Change-point in stochastic design regression and the bootstrap},
JOURNAL = {Annals of Statistics},
    VOLUME = {39},
      YEAR = {2011},
    NUMBER = {3},
     PAGES = {1580--1607},
}

@book{tong1990non,
  title={Non-linear time series: a dynamical system approach},
  author={Tong, Howell},
  year={1990},
  publisher={Oxford University Press}
}

@book{VW,
	author	= {van der Vaart, A. and Wellner, J.},
	title		= {Weak convergence and empirical processes},
	publisher	= {Springer, New York},
	year		= {1996},
}

@article{lee2012testing,
  title={Testing for threshold effects in regression models},
  author={Lee, Sokbae and Seo, Myung Hwan and Shin, Youngki},
  journal={Journal of the American Statistical Association},
  year={2011},
volume = {106},
number = {493},
pages = {220--231},
}

@article{Hansen:1996,
 author = {Bruce E. Hansen},
  journal = {Econometrica},
 number = {2},
 pages = {413--430},
 title = {Inference When a Nuisance Parameter Is Not Identified Under the Null Hypothesis},
 volume = {64},
 year = {1996}
}

@article{Peligrad:1982,
 author = {Magda Peligrad},
  journal = {Annals of Probability},
 volume = {10},
 number = {4},
 pages = {968--981},
 title = {Invariance principle for mixing sequences of random variables},
  year = {1982}
}

@article{bickel1971,
author = "Bickel, P. J. and Wichura, M. J.",
journal = "Annals of Mathematical Statistics",
month = "10",
number = "5",
pages = "1656--1670",
title = "Convergence Criteria for Multiparameter Stochastic Processes and Some Applications",
volume = "42",
year = "1971"
}

@book{Hall-Heyde,
	author	= {Hall, P. and C. C. Heyde},
	title		= {Martingale Limit Theory and Its Application},
	publisher	= {Academic Press, Boston, },
	year		= {1980}
}

@article{Seo-Linton,
title = "A smoothed least squares estimator for threshold regression models",
journal = "J. Econom",
volume = "141",
number = "2",
pages = "704--735",
year = "2007",
author = "Myung Hwan Seo and Oliver Linton",
}

@book{davidson1994,
  title={Stochastic limit theory: An introduction for econometricians},
  author={Davidson, James},
  year={1994},
  publisher={Oxford University Press, Oxford}
}

@article{kim1990,
author = "Kim, Jeankyung and Pollard, David",
journal = "Annals of Statistics",
month = "03",
number = "1",
pages = "191--219",
title = "Cube Root Asymptotics",
volume = "18",
year = "1990"
}

@book{billingsley1968,
  title={Convergence of probability measures},
  author={Billingsley, P.},
  series={Wiley Series in probability and Mathematical Statistics: Tracts on probability and statistics},
  year={1968},
  publisher={Wiley}
}

@article{bertsimas2016,
author = "Bertsimas, Dimitris and King, Angela and Mazumder, Rahul",
journal = "Annals of Statistics",
month = "04",
number = "2",
pages = "813--852",
title = "Best subset selection via a modern optimization lens",
volume = "44",
year = "2016"
}

@article{Potter:95,
author = {Potter, Simon M.},
title = {A nonlinear approach to {US} {GNP}},
journal = {J. Appl. Econom.},
volume = {10},
number = {2},
pages = {109--125},
year = {1995},
}

@article{Hansen:97,
title = "Inference in {TAR} Models",
journal = "Studies in Nonlinear Dynamics and Econometrics",
volume = "2",
number = "1",
pages = "1--14",
year = "1997",
author = "Bruce E. Hansen",
}

@article{Ludvigson:Ng:09,
author = {Ludvigson, Sydney C. and Ng, Serena},
title = {Macro Factors in Bond Risk Premia},
journal = {Review of Financial Studies},
volume = {22},
number = {12},
pages = {5027--5067},
year = {2009},
}

@article{cheng2015forecasting,
  title={Forecasting with factor-augmented regression: A frequentist model averaging approach},
  author={Cheng, Xu and Hansen, Bruce E},
  journal={J. Econom},
  volume={186},
  number={2},
  pages={280--293},
  year={2015},
  publisher={Elsevier}
}

@article{bai2009boosting,
  title={Boosting diffusion indices},
  author={Bai, J. and Ng, S.},
  journal={J. Appl. Econom.},
  volume={24},
  number={4},
  pages={607--629},
  year={2009},
  publisher={Wiley Online Library}
}

@Article{MPR-2011,
		author="Merlev{\`e}de, Florence
		and Peligrad, Magda
		and Rio, Emmanuel",
		title="A Bernstein type inequality and moderate deviations for weakly dependent sequences",
		journal="Probability Theory and Related Fields",
		year="2011",
		volume="151",
		number="3",
		pages="435--474",
}

@book{hall1992bootstrap,
title={The Bootstrap and {E}dgeworth Expansion},
author={Hall, Peter},
year={1992},
publisher={Springer}
}

@article{Bai:Perron:2003,
author = {Bai, Jushan and Perron, Pierre},
title = {Computation and analysis of multiple structural change models},
journal = {J. Appl. Econom.},
volume = {18},
number = {1},
pages = {1--22},
year = {2003},
}

@article{Qu:Tkachenko:17,
author = {Qu, Zhongjun and Tkachenko, Denis},
title = {Global Identification in DSGE Models Allowing for Indeterminacy},
journal = {Review of Economic Studies},
volume = {84},
number = {3},
pages = {1306--1345},
year = {2017},
}
}

\end{document}